\newcommand{\itemcolor}[1]{
  \renewcommand{\makelabel}[1]{\color{#1}\hfil ##1}}
\DeclareFontFamily{U}{mathb}{\hyphenchar\font45}
\DeclareFontShape{U}{mathb}{m}{n}{
      <5> <6> <7> <8> <9> <10> gen * mathb
      <10.95> mathb10 <12> <14.4> <17.28> <20.74> <24.88> mathb12
      }{}
\DeclareSymbolFont{mathb}{U}{mathb}{m}{n}
\DeclareMathSymbol{\curvearrowright}{3}{mathb}{'361}
\DeclareMathAlphabet{\mathcalligra}{T1}{calligra}{m}{n}
\DeclareFontShape{T1}{calligra}{m}{n}{<->s*[2.2]callig15}{}
\def\dlceil{\left\lceil\kern-4.75pt\left\lceil}
\def\drceil{\right\rceil\kern-4.75pt\right\rceil}
\patchcmd{\appendices}{\quad}{. }{}{}
\global\long\def\ud{\mathrm{d}}
\global\long\def\bRnn{\mathbb{R}_{\geq 0}}
\global\long\def\bZ{\mathbb{Z}}
\global\long\def\bN{\mathbb{N}}
\global\long\def\bZpos{\mathbb{Z}_{>0}}
\global\long\def\bZnn{\mathbb{Z}_{\geq 0}}
\global\long\def\bC{\mathbb{C}}
\global\long\def\ii{\mathfrak{i}}
\newcommand{\sH}{\mathcal{H}}
\newcommand{\re}{\mathfrak{Re}}
\global\long\def\pder#1{\frac{\partial}{\partial#1}}
\global\long\def\pder#1{\frac{\partial}{\partial#1}}
\global\long\def\pdder#1{\frac{\partial^{2}}{\partial#1^{2}}}
\newcommand{\rad}{\textnormal{rad}\,}
\newcommand{\Span}{\textnormal{span}\,}
\newcommand{\sleg}[1]{b_{#1}}
\newcommand{\conf}{\varphi}
\newcommand{\End}{\textnormal{End}\,}
\newcommand{\Hom}{\textnormal{Hom}\,}
\newcommand{\LS}{\mathsf{L}}
\newcommand{\preLS}{\mathsf{pre}\,\text{-}\,\mathsf{L}}
\newcommand{\Gen}{U}
\newcommand{\Lgen}{L}
\newcommand{\Rgen}{R}
\newcommand{\Quo}{\mathsf{Q}}
\newcommand{\LP}{\mathsf{LP}}
\newcommand{\preLP}{\mathsf{pre}\,\text{-}\,\mathsf{LP}}
\newcommand{\LD}{\mathsf{LD}}
\newcommand{\preLD}{\mathsf{pre}\,\text{-}\,\mathsf{LD}}
\newcommand{\ThetaNet}{\Theta}
\newcommand{\smin}{s_{\textnormal{min}}}
\newcommand{\smax}{s_{\textnormal{max}}}
\newcommand{\pmin}{\mathfrak{p}}
\newcommand{\ppmin}{\bar{\pmin}}
\newcommand{\TL}{\mathsf{TL}}
\newcommand{\preTL}{\mathsf{pre}\,\text{-}\mathsf{TL}}
\newcommand{\WJ}{\mathsf{JW}}
\newcommand{\PS}{\mathsf{P}}
\newcommand{\Dom}{\mathsf{Non}}
\newcommand{\Tot}{\mathsf{Tot}}
\newcommand{\BarAction}{\left\bracevert\phantom{A}\hspace{-9pt}\right.}
\newcommand{\im}{\textnormal{im}\,}
\newcommand{\cheque}{{\scaleobj{0.85}{\vee}}}
\newcommand{\SpecialPattern}{\mathsf{SP}}
\newcommand{\SpecialDiagram}{\mathsf{SD}}
\newcommand{\Gram}{\mathscr{G}}
\newcommand{\np}{d}
\newcommand{\OneVec}[1]{\vec{#1}}
\newcommand{\Summed}{n}
\newcommand{\ExtraDefect}{v}
\newcommand{\sIndex}{s}
\newcommand{\DefectSet}{\mathsf{E}}
\newcommand{\multii}{\varsigma}
\newcommand{\multiii}{\varpi}
\newcommand\und[1]{{\underset{\vspace{1pt} \scaleobj{1.4}{\check{}}}{#1}}}
\newcommand{\fds}{{\underset{\vspace{1pt} \scaleobj{1.4}{\check{}}}{\multii}}}
\newcommand{\lds}{{\hat{\multii}}}
\newcommand{\idem}{E}
\newcommand\munderbar[1]{\underaccent{\bar}{#1}}
\newcommand{\super}[1]{^{\scaleobj{0.85}{(#1)}}}
\newcommand{\sub}[1]{_{\scaleobj{0.85}{(#1)}}}
\newcommand{\superscr}[1]{^{\scaleobj{0.85}{#1}}}
\newcommand{\BiForm}[2]{(#1 \BarAction #2)}
\newcommand{\Dim}{D}
\newcommand{\WJProj}{P}
\newcommand{\WJproj}{\WJProj}
\newcommand{\WJProjHat}{\hat{\WJProj}}
\newcommand{\WJEmb}{I}
\newcommand{\WalkMultii}{\varrho}
\newcommand{\one}{\mathbf{1} \hspace*{-.25em} \textnormal{l}}
\newsavebox\CBox
\newcommand\hcancel[2][0.5pt]{%
  \ifmmode\sbox\CBox{$#2$}\else\sbox\CBox{#2}\fi%
  \makebox[0pt][l]{\usebox\CBox}%
  \rule[0.5\ht\CBox-#1/2]{\wd\CBox}{#1}}
\newcommand{\Mod}[1]{\ (\mathrm{mod}\ #1)}
\newcommand{\DPle}{\prec}
\definecolor{amber}{rgb}{1.0, 0.75, 0.0}
\newcommand{\blue}{\textcolor{blue}}
\newcommand{\red}{\textcolor{red}}
\newcommand{\be}{\begin{equation}}
\newcommand{\ee}{\end{equation}}
\newcommand{\bea}{\begin{eqnarray}}
\newcommand{\eea}{\end{eqnarray}}
\newcommand{\bean}{\begin{eqnarray*}}
\newcommand{\eean}{\end{eqnarray*}}
\theoremstyle{plain}
\newtheorem*{theorem*}{Theorem}
\newtheorem{theorem}{Theorem}
\numberwithin{theorem}{section} 
\newtheorem{prop}[theorem]{Proposition}
\numberwithin{prop}{section} 
\newtheorem{cor}[theorem]{Corollary}
\numberwithin{cor}{section} 
\newtheorem{lem}[theorem]{Lemma}
\numberwithin{lem}{section} 
\newtheorem{conj}[theorem]{Conjecture}
\numberwithin{conj}{section} 
\newtheorem{quest}[theorem]{Question}
\numberwithin{quest}{section} 
\newtheorem{claim}[theorem]{Claim}
\numberwithin{InductAssump}{section} 
\numberwithin{comment}{section}
\theoremstyle{definition}
\newtheorem{remark}[theorem]{Remark}
\numberwithin{remark}{section} 
\numberwithin{recipe}{section} 
\newtheorem{defn}[theorem]{Definition}
\numberwithin{defn}{section} 
\newtheorem{example}[theorem]{Example} 
\numberwithin{example}{section} 
\newcounter{parentnumber}
\def\clap#1{\hbox to 0pt{\hss#1\hss}}
\def\mathclap{\mathpalette\mathclapinternal}
\def\mathclapinternal#1#2{%
\clap{$\mathsurround=0pt#1{#2}$}}
\DeclareRobustCommand{\cev}[1]{%
  \mathpalette\do@cev{#1}%
}
\newcommand{\do@cev}[2]{%
  \fix@cev{#1}{+}%
  \reflectbox{$\m@th#1\vec{\reflectbox{$\fix@cev{#1}{-}\m@th#1#2\fix@cev{#1}{+}$}}$}%
  \fix@cev{#1}{-}%
}
\newcommand{\fix@cev}[2]{%
  \ifx#1\displaystyle
    \mkern#23mu
  \else
    \ifx#1\textstyle
      \mkern#23mu
    \else
      \ifx#1\scriptstyle
        \mkern#22mu
      \else
        \mkern#22mu
      \fi
    \fi
  \fi
}
\numberwithin{equation}{section}
\renewcommand{\thesection}{\arabic{section}} 
\begin{document}
\title{Standard modules, radicals, and the valenced Temperley-Lieb algebra \vspace*{.5cm}}


\author{\bf Steven M. Flores}
\affiliation{\blue{\tt \small steven.miguel.flores@gmail.com} \\ 
Department of Mathematics and Systems Analysis, \\ 
P.O. Box 11100, FI-00076, Aalto University, Finland}

\author{\bf Eveliina Peltola}
\affiliation{\blue{\tt \small eveliina.peltola@unige.ch} \\ 
Section de Math\'{e}matiques, Universit\'{e} de Gen\`{e}ve, \\
2--4 rue du Li\`{e}vre,  C.P. 64, 1211 Gen\`{e}ve, Switzerland
\vspace*{.5cm}}


\begin{abstract}
\begingroup
\setlength{\parindent}{1.5em}
\setlength{\parskip}{.5em}
This article concerns a 
generalization of the Temperley-Lieb algebra, important in applications to conformal field theory. 
We call this algebra the valenced Temperley-Lieb algebra.
We prove salient facts concerning this algebra and its representation theory, which are 
both of independent interest and used in our subsequent work~\cite{fp3, fp2, fp1}, 
where we uniquely and explicitly characterize the monodromy invariant correlation functions of certain conformal field theories. 
%
\endgroup
\end{abstract}

\maketitle

\vspace*{-1.5cm}
\renewcommand{\tocname}{}
{\hypersetup{linkcolor=black}
\tableofcontents
}

\begingroup
\setlength{\parindent}{1.5em}
\setlength{\parskip}{.5em}

\section{Introduction} \label{Intro}

The Temperley-Lieb algebra is ubiquitous in the mathematics and physics literature.
Named after its discovers H.~Temperley and E.~Lieb, it initially found its role as an algebra related to transfer matrices 
in integrable statistical mechanics models~\cite{tl, pen, pm, bax}.
Later, V.~Jones independently discovered this algebra as a tool for constructing invariants of knots and links~\cite{vj, vj2}. 
This new application established the Temperley-Lieb algebra as a key ingredient 
in the theory of quantum groups~\cite{mj2, lk2, cp, ck, gras, krt} and topological quantum computation~\cite{vt, ckl}.

One of the most important aspects of the Temperley-Lieb algebra, especially in 
applications to physics, is its representation theory, now already well-understood.
The pioneering works include the book~\cite{pm} of P.~Martin and the articles~\cite{hw, gwe} of F.~Goodman and H.~Wenzl,
of combinatorial nature, the more algebraic work of B.~Westbury~\cite{bw},
as well as the rather general framework of cellular algebras developed by J.~Graham and G.~Lehrer in~\cite{gl, gl2}. 
As a very concrete approach, the recent survey~\cite{rsa} by D.~Ridout and Y.~Saint-Aubin is 
perhaps the most comprehensive and accessible treatment of this topic.

The purpose of the present article is to consider a natural generalization of the Temperley-Lieb algebra, 
which we call the ``valenced Temperley Lieb algebra," and to concretely understand its representation theory.
This algebra is motivated by applications to conformal field theory (CFT). 
It is crucial in our subsequent work~\cite{fp3, fp2, fp1}, where we uniquely and 
explicitly characterize monodromy invariant correlation functions of certain CFTs.

In this article, we classify the simple modules of the valenced Temperley Lieb algebra, and give numerous criteria for its semisimplicity.
Using graphical calculus \`a la Kauffman and Lins~\cite{kl}, 
we find the dimensions of and explicit bases for the radicals of the standard modules.
We also find explicit formulas for determinants of Gram matrices on the standard modules by diagonalization.
As a special case, our results imply the corresponding facts for the ordinary Temperley-Lieb algebra, 
and some of our results for the latter are also new.


We organize the introduction as follows. First, in section~\ref{TLSecIntro} we 
collect important results about the Temperley-Lieb algebra and its representation theory. 
In section~\ref{MainResultSec}, we introduce the valenced Temperley Lieb algebra, 
and list the main results regarding this algebra and its representation theory,
presenting them in parallel to the known results about the Temperley-Lieb algebra stated in section~\ref{TLSecIntro}.
Then, in section~\ref{MotivationSec} we briefly discuss our motivation from conformal field theory.
We conclude with the outline and some literary remarks.

\subsection{Background: Temperley-Lieb algebra} \label{TLSecIntro}

To begin, we review definitions and basic properties of the Temperley-Lieb algebra and its standard modules. 
First, for each $n \in \bZnn$, we define an \emph{$n$-link diagram} to be any planar geometric object comprising two vertical lines, 
$n$ distinct marked points (``nodes'') on each line,
and $n$ simple, nonintersecting, planar curves (``links'') between the lines, joining the nodes pairwise.  
The links are determined up to homotopy. Examples of link diagrams are
\begin{align}\label{LinkExs} 
\vcenter{\hbox{\includegraphics[scale=0.275]{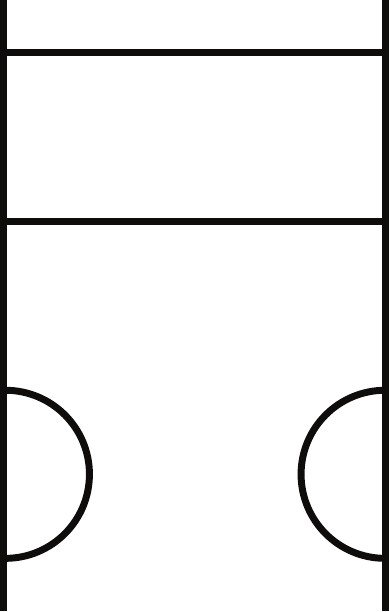}}} \qquad \qquad \text{and} \qquad \qquad
\vcenter{\hbox{\includegraphics[scale=0.275]{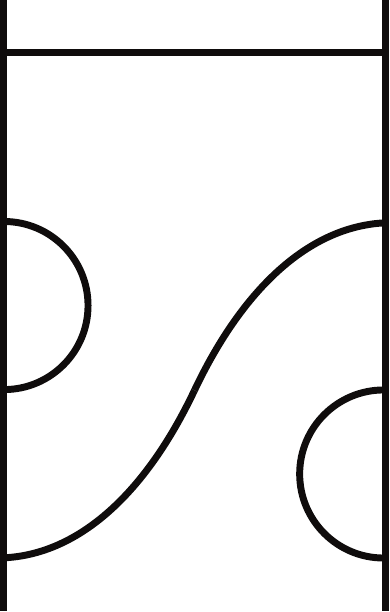}  .}}
\end{align} 
We consider the complex vector space $\TL_n$ of all \emph{tangles}, that is, formal linear combinations of $n$-link diagrams.
We can concatenate two link diagrams in this vector space in a natural manner, as exemplified below:
\begin{alignat}{2} 
\label{TLmult1} 
& \vcenter{\hbox{\includegraphics[scale=0.275]{e-TLalgebra1.pdf}}} \quad 
\vcenter{\hbox{\includegraphics[scale=0.275]{e-TLalgebra2.pdf}}} \quad := \quad 
\vcenter{\hbox{\includegraphics[scale=0.275]{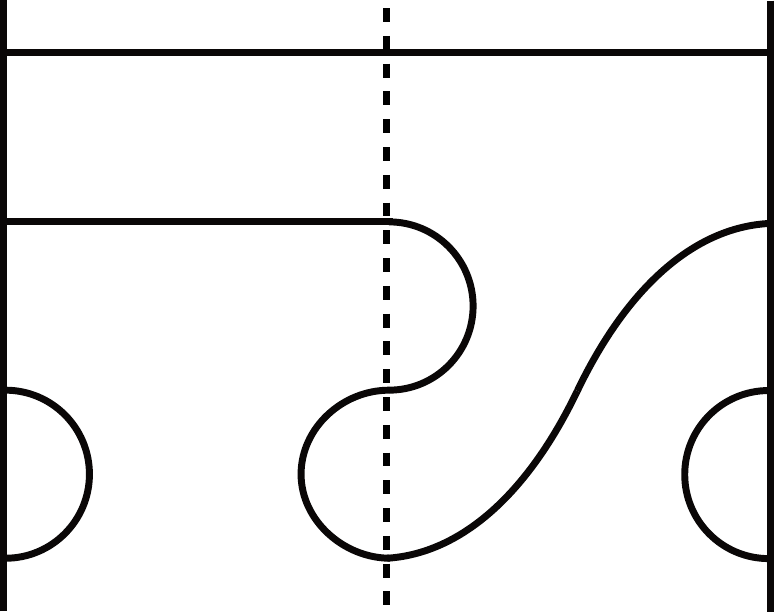}}} \quad = \quad 
\vcenter{\hbox{\includegraphics[scale=0.275]{e-TLalgebra1.pdf} .}}
\end{alignat}
Concatenation on link diagrams
forms a number $k \geq 0$ of internal loops. We remove the loops and multiply the resulting tangle by $\nu^k$,
where $\nu$ is a complex number, called the \emph{loop fugacity}. For instance,
\begin{align}
\label{TLmult4}
& \vcenter{\hbox{\includegraphics[scale=0.275]{e-TLalgebra2.pdf}}} \quad 
\vcenter{\hbox{\includegraphics[scale=0.275]{e-TLalgebra1.pdf}}} \quad := \quad 
\vcenter{\hbox{\includegraphics[scale=0.275]{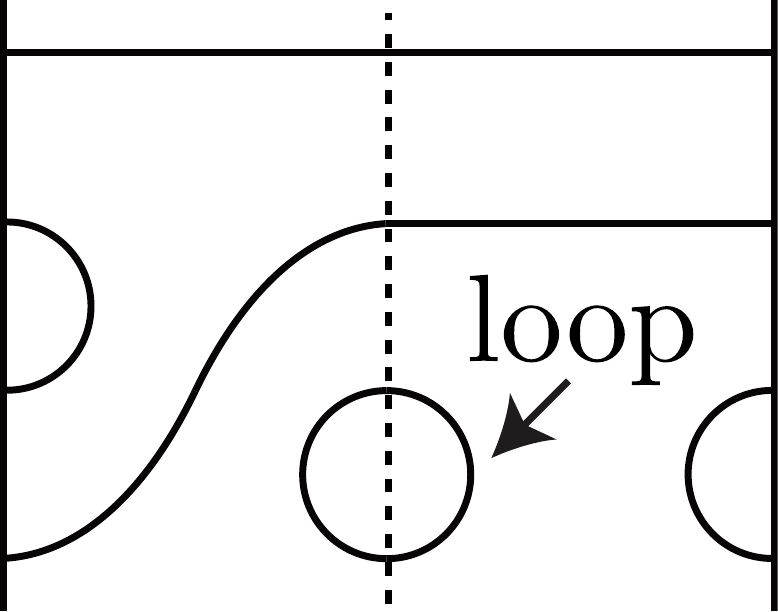}}} \quad = \quad \nu \,\, \times \,\, 
\vcenter{\hbox{\includegraphics[scale=0.275]{e-TLalgebra2.pdf}  .}}
\end{align}
For fixed $\nu \in \bC$, this concatenation recipe endows the vector space $\TL_n$ 
with the structure of an associative, unital algebra, the \emph{Temperley-Lieb algebra} $\TL_n(\nu)$.
Its unit is the link diagram (independent of $\nu$)
\begin{align}\label{Units} 
\mathbf{1}_{\TL_n} \quad = \quad \vcenter{\hbox{\includegraphics[scale=0.275]{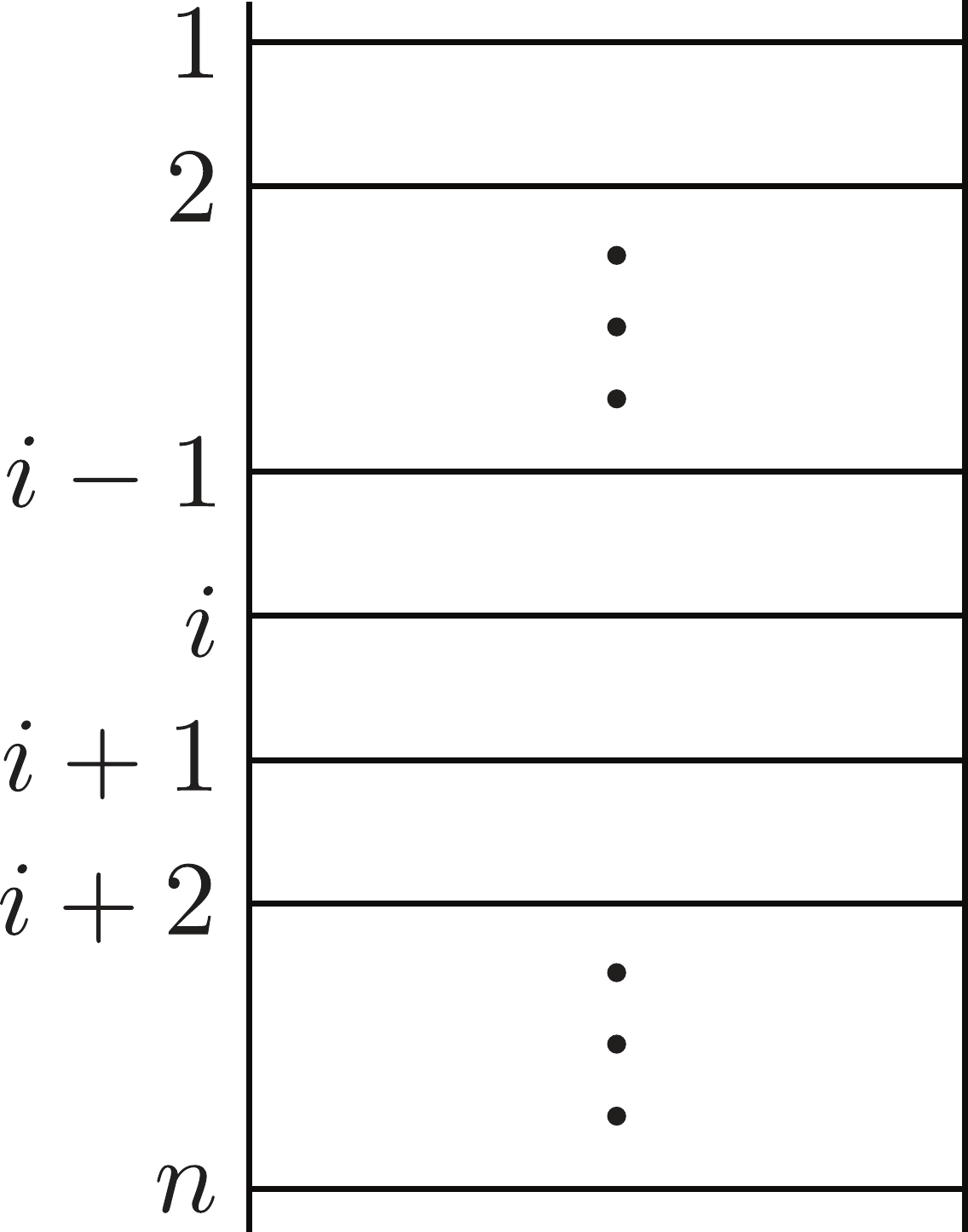}  .}}
\hphantom{\mathbf{1}_{\TL_n} \quad = \quad}
\end{align}

In the representation theory of $\TL_n(\nu)$, ``link patterns'' are important.  To form a link pattern,
we take any $n$-link diagram having $s$ crossing links, where $s$ is necessarily some number in the set
\begin{align}\label{DefectSet} 
\DefectSet_n := \{ n \; \text{mod} \; 2, \; (n \; \text{mod} \; 2) + 2, \; \ldots, \; n \}, 
\end{align} 
divide it vertically in half, discard the right half, and rotate the left half by $\pi/2$ radians:
\begin{align} 
\vcenter{\hbox{\includegraphics[scale=0.275]{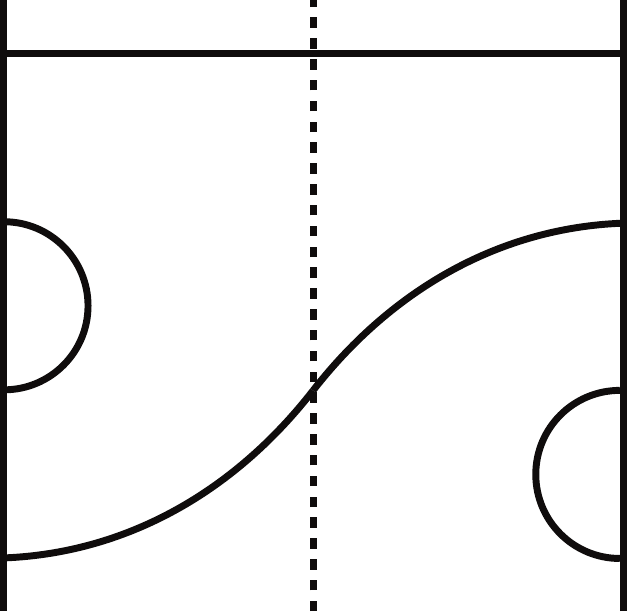}}} \qquad \qquad & \longmapsto \qquad \qquad
\vcenter{\hbox{\includegraphics[scale=0.275]{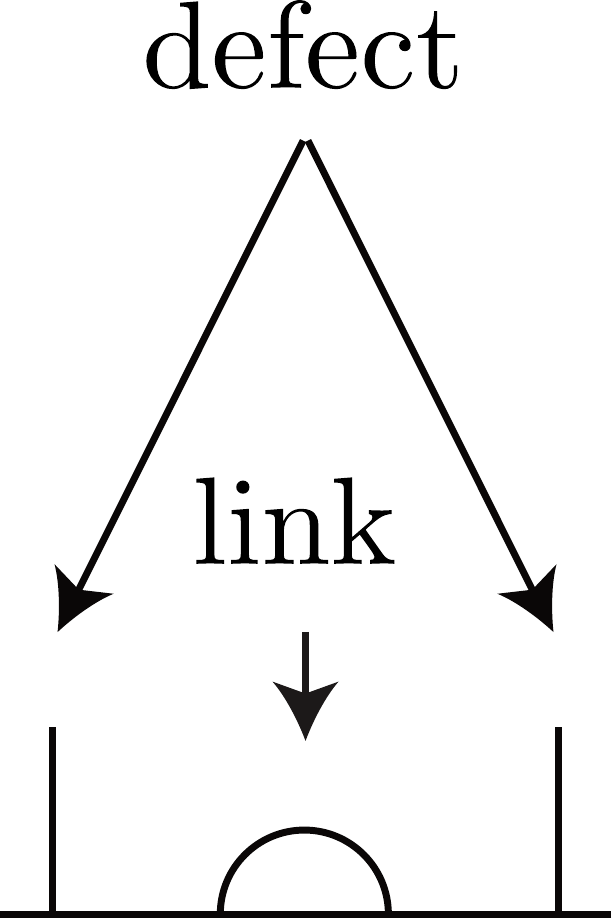}}} \\[1em]
\nonumber
\text{link diagram} \qquad \qquad & \longmapsto \qquad \qquad \text{link pattern}.
\end{align}
We call the remaining left half an \emph{$(n,s)$-link pattern}, and we call each of the broken links in it a \emph{defect}.  
We also call a formal linear combination of $(n,s)$-link patterns with complex coefficients an \emph{$(n,s)$-link state}.

We can concatenate an $n$-link diagram to an $(n,s)$-link pattern (rotated back $-\pi/2$ radians) 
from the left to form a new link pattern.  
Again, we remove any $k$ loops formed by the concatenation and multiply the result by $\nu^k$:
\begin{align}
\label{loopex}
& \vcenter{\hbox{\includegraphics[scale=0.275]{e-TLalgebra1.pdf}}} \quad 
\vcenter{\hbox{\includegraphics[scale=0.275]{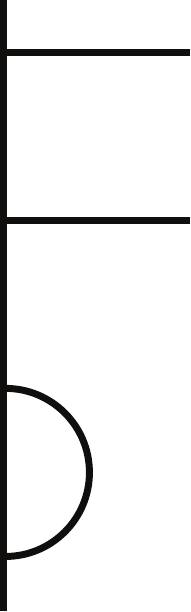}}} \quad
= \quad \vcenter{\hbox{\includegraphics[scale=0.275]{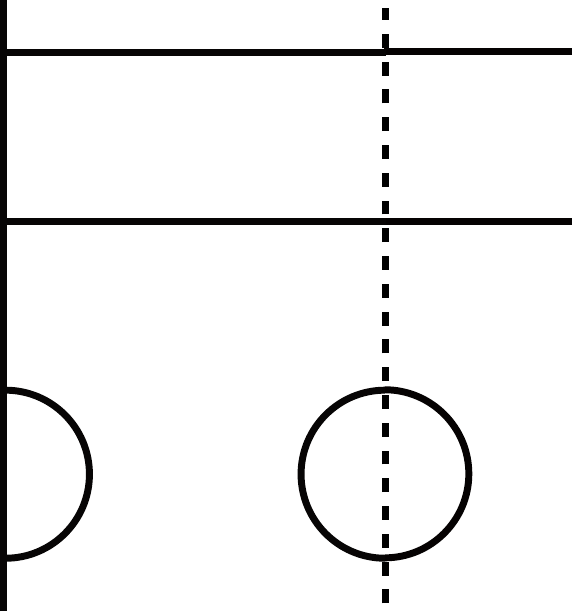}}}  \quad
 = \quad \nu \,\, \times \,\, 
\vcenter{\hbox{\includegraphics[scale=0.275]{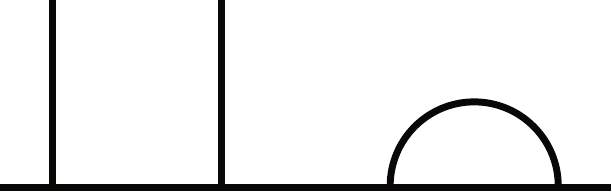} .}}
\end{align}
In order to preserve the number $s$ of defects, we regard all diagrams containing ``turn-back paths'' as zero:
\begin{align}
\label{turnbackex} 
& \vcenter{\hbox{\includegraphics[scale=0.275]{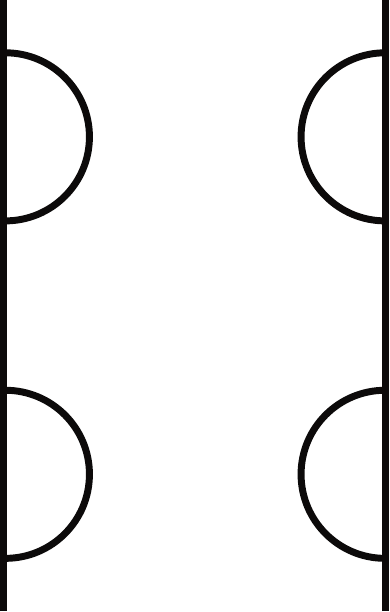}}} \quad 
\vcenter{\hbox{\includegraphics[scale=0.275]{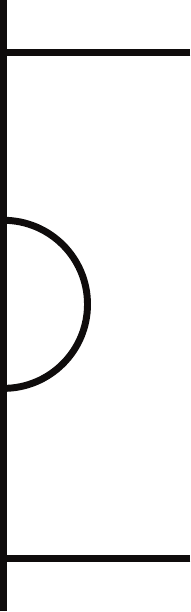}}} \quad
= \quad \vcenter{\hbox{\includegraphics[scale=0.275]{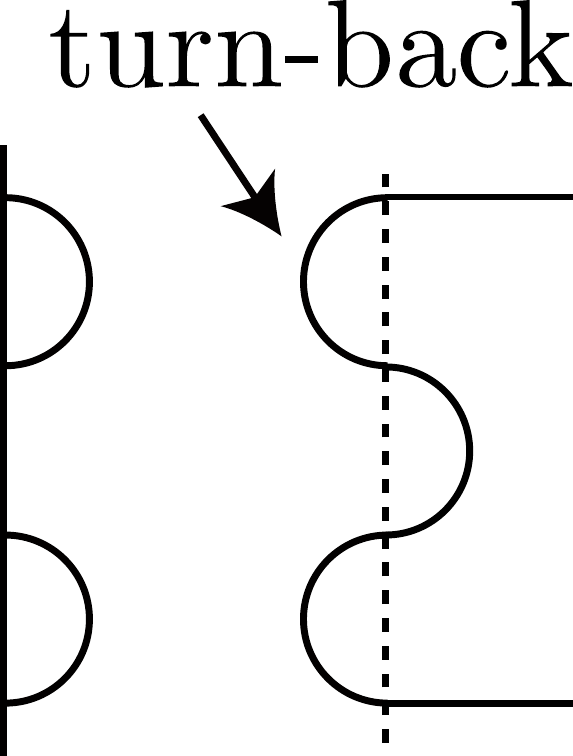}}} \quad
= \quad 0 \,\, \times \,\, 
\vcenter{\hbox{\includegraphics[scale=0.275]{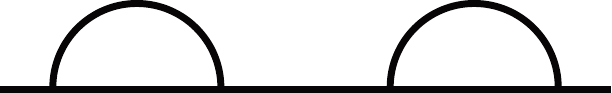}}} 
\quad = \quad 0 .
\end{align}
We thus define an action of $\TL_n(\nu)$ on the complex vector space of $(n,s)$-link states. 
We call this $\TL_n(\nu)$-module a \emph{standard module} and denote it by $\smash{\LS_n\super{s}}$.  
We also define the \emph{link state module} to be the direct sum module
\begin{align} 
\LS_n :=  \bigoplus_{s \, \in \, \DefectSet_n} \LS_n\super{s}.
\end{align}

A certain bilinear form, and in particular its radical, is key to understanding the representation theory of $\TL_n(\nu)$. 
In section~\ref{BilinFormSec}, we define this bilinear form on $\LS_n$ via pairwise concatenation of link patterns, as exemplified below:
\begin{align} \label{PicBiForm}
\bigg( \; \raisebox{1pt}{\includegraphics[scale=0.275]{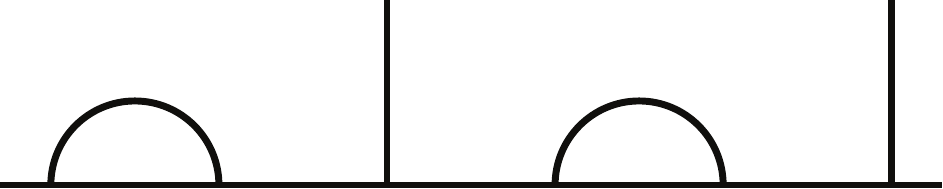}} \;\;  \text{\scalebox{1}[2]{$\BarAction$}} \;\;  
\raisebox{1pt}{\includegraphics[scale=0.275]{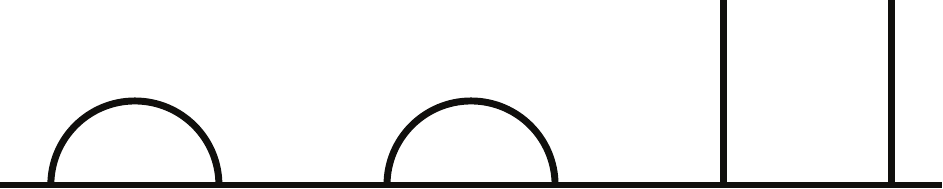}} \; \bigg)
\quad = \quad \bigg( \; \vcenter{\hbox{\includegraphics[scale=0.275]{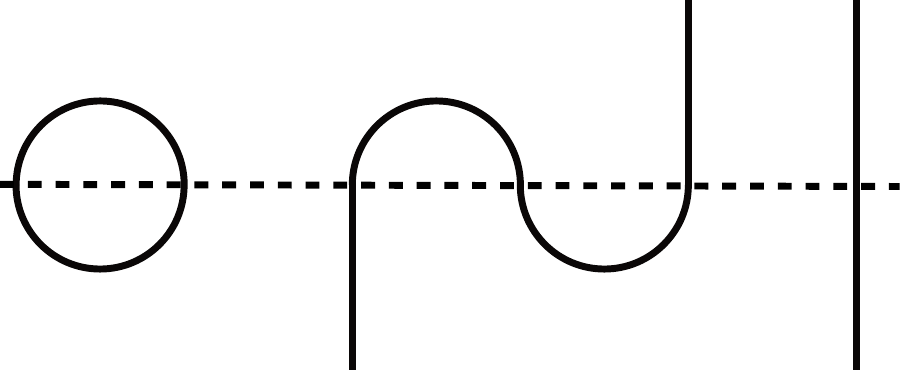}}} \; \bigg).
\end{align}
As before, we replace each internal loop by a multiplicative factor of $\nu$ 
and each turn-back path by a multiplicative factor of zero, 
and now, we also replace each ``through-path'' by a multiplicative factor of one, thus arriving with a complex number: 
\begin{align} 
\bigg( \; \vcenter{\hbox{\includegraphics[scale=0.275]{e-Connectivities10.pdf}}} \; \bigg) \quad = \quad \nu \times 1 \times 1 \quad = \quad \nu , \\
\bigg( \; \vcenter{\hbox{\includegraphics[scale=0.275]{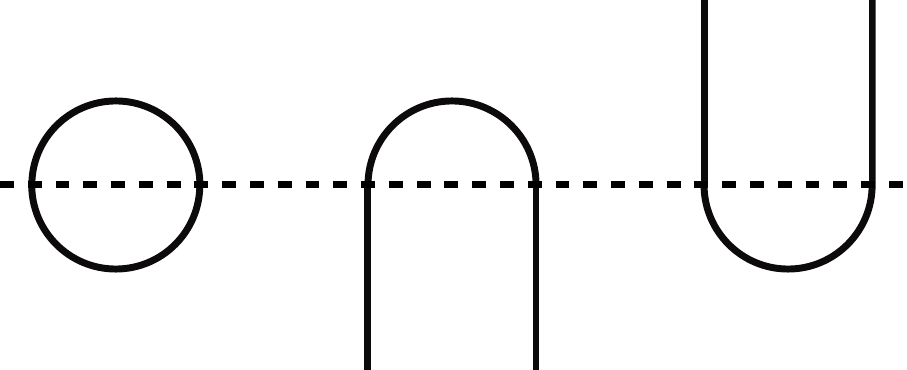}}} \; \bigg) \quad = \quad \nu \times 0 \times 0 \quad = \quad 0 .
\end{align}
We also define the \emph{radical} of $\LS_n$ with respect to the bilinear form to be the vector space
\begin{align} 
\rad\LS_n := \; & \big\{\alpha \in \LS_n \, \big| \, \text{$\BiForm{\alpha}{\beta} = 0$, for all $\beta \in \LS_n$} \big\}.
\end{align}
Properties of the bilinear form ensure that
the radical is a $\TL_n(\nu)$-submodule of $\LS_n$ (see section~\ref{StdModulesSect}). 
It equals a direct sum of the radicals of 
the standard modules $\smash{\LS_n\super{s}}$, which themselves are $\TL_n(\nu)$-submodules of $\smash{\LS_n\super{s}}$.
Hence, we have
\begin{align} 
\rad \LS_n = \bigoplus_{s \, \in \, \DefectSet_n} \rad \LS_n\super{s}, \qquad \text{where} \quad
\rad\smash{\LS_n\super{s}} 
:= \; & \big\{\alpha \in\smash{\LS_n\super{s}} \, \big| \, \text{$\BiForm{\alpha}{\beta} = 0$, for all $\beta \in \smash{\LS_n\super{s}}$} \big\}.
\end{align}
For each standard module, we denote the corresponding quotient module as 
\begin{align} 
\Quo_n\super{s} := \LS_n\super{s} / \rad \LS_n\super{s} . 
\end{align} 
In fact, the nontrivial quotients $\smash{\Quo_n\super{s}}$ form the complete set of non-isomorphic simple $\TL_n(\nu)$-modules~\cite{gl2,rsa}.

\bigskip


Next, we summarize salient properties of the Temperley-Lieb algebra and its representation theory (\ref{result1}--\ref{result9}).
Most of them are well-known, but also follow as special cases of the results of the present article.
First, we introduce notation.
\begin{itemize}[leftmargin=*] 
\item We parameterize the loop fugacity parameter $\nu \in \bC$ by a nonzero complex number $q \in \bC^\times = \bC \setminus \{0\}$ as follows:
\begin{align}\label{fugacity} 
\nu = -q - q^{-1}. 
\end{align} 
\item For each $q \in \bC^\times$, we define
\begin{align} \label{MinPower} 
\pmin(q) : = 
 \begin{cases} \infty, & \text{$q$ is not a root of unity}, \\
p, & \text{$q=e^{\pi ip'/p}$ for coprime $p,p' \in \bZpos$}, 
 \end{cases} \qquad \qquad
\ppmin(q) := 
\begin{cases} \infty, & q \in \{\pm1\}, \\ \pmin(q), & q \not\in \{\pm1\} . 
\end{cases} 
\end{align}
\item For each $k \in \bZnn$, we define $\Delta_k$ to be the following integer:
\begin{align} \label{DeltaDefn} 
\Delta_k = \Delta_k(q) := 
\begin{cases} -1, & \text{$k = 0$ and $\pmin(q) = \infty$}, \\  k \pmin(q) - 1, & \text{otherwise}. 
\end{cases} 
\end{align} 
\item For each $s \in \bZnn$, we define $k_s \in \bZnn$ and $R_s \in \{0,1,\ldots,\pmin(q) - 1\}$ to be the unique integers such that
\begin{align} 
\label{skDefn} s = \Delta_{k_s} + R_s . 
\end{align}

\item We define the ``generic parameter'' set
\begin{align} 
\Dom_n\super{s} 
\hphantom{:}= \Big\{ q \in \bC^\times \,\big|\, 
\text{either $R_s = 0$, or $\frac{n - s}{2} \in \{ 0, 1, \ldots, \pmin(q) - 1 - R_s\}$}  \Big\},
\end{align}
whose complement within $\bC$ has Lebesgue measure zero. We denote
\begin{align}
\Dom_n := \bigcap_{s \, \in \, \DefectSet_n} \Dom_n\super{s} 
= \big\{ q \in \bC^\times \,\big|\, \textnormal{either $n < \ppmin(q),$ or if $n$ is odd, $q = \pm \ii$} \big\}.
\end{align}
It follows from~\eqref{fugacity} that $q = \pm \ii$ is and only if $\nu = 0$, and (c.f. lemma~\ref{qtonulem} in section~\ref{rofSect31})
\begin{align}
n < \ppmin(q) \qquad \Longleftrightarrow \qquad
\nu^2 \neq 4\cos^2\left(\frac{\pi p'}{p}\right) \quad \parbox{5cm}{\textnormal{for any $p',p \in \bZpos$ coprime \\ and satisfying $0 < p' < p \leq n.$}}
\end{align} 
Finally, we denote
\begin{align} \label{Tot} 
\Tot_n\super{s} := 
\begin{cases} \emptyset, & s \neq 0, \\ \{\pm \ii \}, & s = 0. 
\end{cases}
\end{align}
\end{itemize}

\begin{enumerate}[leftmargin=*, label = $\TL$\arabic*$_{n}$., ref = $\TL$\arabic*$_{n}$]
\itemcolor{red}
\item \label{result1} \textnormal{\cite[above theorem~\red{2.4}]{rsa}:} 
We have $\dim \smash{\LS_n\super{s}} = \Dim_n\super{s}$, 
where $\smash{\{\Dim_n\super{s}\}_{s \in \DefectSet_n}}$ is the unique solution to the recursion
\begin{align} 
\Dim_n\super{s} \hspace{.1cm}
 = \hspace{.5cm} \sum_{\mathclap{r \, \in \, \DefectSet_{n-1} \, \cap \, \{ s \pm 1\} }}  
\quad \Dim_{n-1}\super{r} = \begin{cases} \Dim_{n-1}\super{1}, & s = 0, \\[5pt] 
\Dim_{n-1}\super{s-1} + \Dim_{n-1}\super{s+1}, & s \in \{1,2,\ldots, n-1\}, \\[5pt] 
\Dim_{n-1}\super{n-1}, & s = n, 
\end{cases} 
\qquad \quad \text{and} \quad \qquad \Dim_1\super{1} = 1 .
\end{align}

\item \label{result2} \textnormal{\cite[above theorem~\red{2.4}]{rsa}:} 
With $C_n := \frac{1}{n+1} \binom{2n}{n} = \smash{D_{2n}\super{0}}$ denoting the $n$:th Catalan number, we have
\begin{align} \label{Dim34}
\dim \TL_n(\nu) = C_n = \sum_{s \, \in \, \DefectSet_n} \big(\dim \LS_n\super{s}\big)^2. 
\end{align}

\item \label{result4} \textnormal{\cite[theorem~\red{2.4}]{rsa}:} 
Let $\mathsf{A}_n(\nu)$ be the associative, unital algebra with  
generators $\{ \Gen_i \}_{i=1}^{n-1}$
and relations
\begin{alignat}{2}
\label{WordRelations1} 
\Gen_i \Gen_{i \pm 1} \Gen_i &= \Gen_i, \qquad  &&\text{if $1 \leq i\pm1 \leq n-1$}, \\ 
\label{WordRelations2} 
\Gen_i^2 &= \nu \Gen_i, \qquad && \\
\label{WordRelations3} 
\Gen_i \Gen_j &= \Gen_j \Gen_i, \qquad  &&\text{if $|i-j| > 1$},
\end{alignat} 
for all $i,j \in \{ 1, 2, \ldots, n - 1 \}$.  
There exists a unique isomorphism $f_n \colon \mathsf{A}_n(\nu) \longrightarrow \TL_n(\nu)$ of algebras such that
\begin{align}\label{ExtMe} 
f_n(\Gen_i) \quad = \quad \vcenter{\hbox{\includegraphics[scale=0.275]{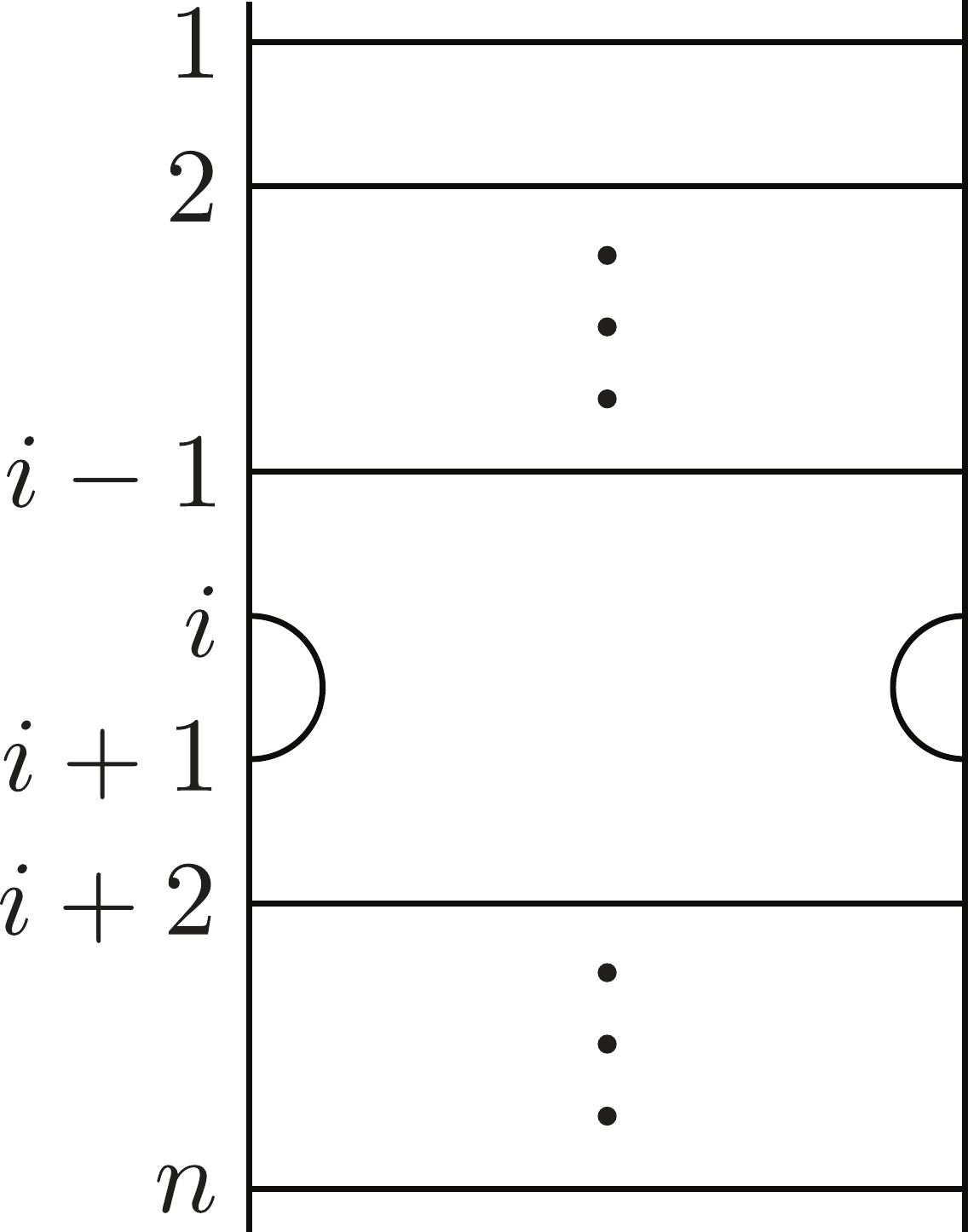}  ,}} 
\hphantom{f_n(\Gen_i) \quad = \quad}
\end{align}
for all $i \in \{1,2, \ldots, n-1\}$.  
Abusing notation, we let $\Gen_i$ also denote the diagram on the right side of~\eqref{ExtMe}.

\item \label{result5} \textnormal{\cite[proposition~\red{3.3}]{rsa}:} 
If $\rad \smash{\LS_n\super{s}} \neq \smash{\LS_n\super{s}}$, then the following hold:
\begin{enumerate}[leftmargin=*, label = \arabic*., ref = \arabic*]
\itemcolor{red}
\item 
The quotient module $\smash{\Quo_n\super{s}}$ is simple, and $\smash{\rad \LS_n\super{s}}$ is the unique maximal proper submodule of~$\smash{\LS_n\super{s}}$.

\smallskip

\item 
The standard module $\smash{\LS_n\super{s}}$ is indecomposable.
\end{enumerate} 

\item \label{result6} \textnormal{\cite[corollary~\red{3.7}]{rsa}:} 
If $\rad \smash{\LS_n\super{s}} \neq \smash{\LS_n\super{s}}$ and 
$\rad \smash{\LS_n\super{r}} \neq \smash{\LS_n\super{r}}$, then we have
\begin{align}
\LS_n\super{s} \cong \LS_n\super{r} \quad \Longleftrightarrow \quad s = r 
\qquad \qquad \textnormal{and} \qquad \qquad 
\Quo_n\super{s} \cong 
\Quo_n\super{r} \quad \Longleftrightarrow \quad s = r . 
\end{align} 

\item \label{result10} \textnormal{[Corollary~\ref{PreFaithfulCor} of the present article]:} 
The link state representation of $\TL_n(\nu)$ on $\LS_n$ induced by the action 
\begin{align} 
(T,\alpha) \quad \longmapsto \quad T\alpha ,
\end{align}
for all tangles $T \in \TL_n(\nu)$ and link states $\alpha \in \LS_n$ is faithful if and only if $\rad \LS_n = \{0\}$.  

\item \label{resultGramN} \textnormal{\cite[proposition~\red{4.5} and theorem~\red{4.7}]{rsa}:}
The Gram determinant $\det \smash{\Gram_n\super{s}}$ of the bilinear form $\BiForm{\cdot}{\cdot}$
on $\smash{\LS_n\super{s}}$ has an explicit formula, given in~\eqref{RidoutDet}.
In particular, if $n < \ppmin(q)$, then $\det \smash{\Gram_n\super{s}} \neq 0$, for all $s \in \DefectSet_n$.

\item \label{resultRadN} \textnormal{[Proposition~\ref{BigTailLem} of the present article]:}
The collection
$\smash{\big\{ \hcancel{\,\alpha} \,\big|\, \alpha \in \smash{\LP_n\super{s}}, \, \textnormal{tail}(\alpha) \in \mathsf{R}_n\super{s} \big\}}$,
where $\smash{\LP_n\super{s}}$ is the set of $(n,s)$-link patterns, 
$\textnormal{tail}(\alpha)$ is defined via~(\ref{TailDef},~\ref{Jindex2}), 
and $\mathsf{R}_n\super{s}$ is defined beneath~\eqref{Alltails},
is a basis for $\smash{\rad \LS_n\super{s}}$.

\item \label{result7} 
\textnormal{\cite[proposition~\red{5.1}]{rsa}:} 
We have $\dim \rad \smash{\LS_n\super{s}} = \smash{\hcancel{\Dim}_n\super{s}}$, 
where $\smash{\{\hcancel{\Dim}_n\super{s}\}_{s \in \DefectSet_n}}$ is the unique solution to the recursion
\begin{align} 
\hspace*{-3mm}
\hcancel{\Dim}_n\super{s} = \begin{cases} 0, & R_s = 0, \\ 
\hcancel{\Dim}_{n-1}\super{s-1} + \Dim_{n-1}\super{s+1}, & R_s = \pmin(q) - 1, \\ 
\hcancel{\Dim}_{n-1}\super{s-1} + \hcancel{\Dim}_{n-1}\super{s+1}, & R_s \in \{1, 2, \ldots, \pmin(q) -2\},
\end{cases}  
\quad \quad \text{and} \quad \qquad \hcancel{\Dim}_1\super{1} = 0 ,
\end{align}
involving the numbers from item~\ref{result1}, with the convention that $\hcancel{\Dim}_{n-1}\super{-1} = 0$.
In particular, we have
\begin{align}\label{radCond} 
\rad \smash{\LS_n\super{s}} = \{0\} \qquad \Longleftrightarrow \qquad q \in \Dom_n\super{s}, 
\end{align}
so $\rad \LS_n$ is trivial if and only if $q \in \Dom_n$. Also, we have
\begin{align}\label{radCond2} 
\rad \smash{\LS_n\super{s}} = \LS_n\super{s} \qquad \Longleftrightarrow \qquad q \in \Tot_n\super{s}.
\end{align}

\item \label{result8} 
\textnormal{\cite[theorem~\red{8.1}]{rsa}:} 
\begin{enumerate}[leftmargin=*, label = \arabic*., ref = \arabic*]
\itemcolor{red}
\item 
If $\nu = 0$ and $n \in 2\bZpos$, then the collection $\smash{\big\{ \Quo_n\super{s} \,\big| \, s \in \DefectSet_n, \, s \neq 0 \big\}}$ 
is the complete set of non-isomorphic simple $\TL_n(\nu)$-modules.

\smallskip

\item 
If $\nu \neq 0$ or $n \not\in 2\bZpos$, then the collection $\smash{\big\{ \Quo_n\super{s} \,\big|\, s \in \DefectSet_n \big\}}$ 
is the complete set of non-isomorphic simple $\TL_n(\nu)$-modules.
\end{enumerate}

\item \label{result9} 
\textnormal{[Consequences of \cite[corollary~\red{4.6}, proposition~\red{5.1}, theorem~\red{8.1}]{rsa}, and theorem~\ref{BigSSTHM} of the present article]:} 
\hspace*{1mm} 
The following statements are equivalent:
\begin{enumerate}[leftmargin=*, label = \arabic*., ref = \arabic*]
\itemcolor{red}
\item 
The Temperley-Lieb algebra $\TL_n(\nu)$ is semisimple, i.e., its Jacobson radical $\rad \TL_n(\nu)$ is trivial.

\smallskip

\item 
We have $\rad \LS_n = \{0\}$.

\smallskip

\item 
The link state representation induced by the action of $\TL_n(\nu)$ on $\LS_n$ is faithful.

\smallskip

\item 
The link state representation induces an isomorphism of algebras from $\TL_n (\nu)$ to 
\raisebox{1pt}{$\smash{\underset{s \, \in \, \DefectSet_n}{\bigoplus}}$}$\smash{\End \LS_n\super{s}}$.

\smallskip

\item 
The collection $\smash{\big\{ \LS_n\super{s} \,\big| \, s \in \DefectSet_n \big\}}$ 
is the complete set of non-isomorphic simple $\TL_n(\nu)$-modules.

\smallskip

\item 
We have $q \in \Dom_n$.
\end{enumerate}
\end{enumerate}

Items~\ref{result1}--\ref{result4} have been well-known since the seminal works of V.~Jones~\cite{vj} and L.~Kauffman~\cite{lk2}.
Items~\ref{result5}--\ref{result6} and~\ref{result8}--\ref{result9}, as well as other salient results  
on the representation theory of $\TL_n(\nu)$ have been proven using combinatorial methods by P.~Martin~\cite{pm}, 
F.~Goodman and H.~Wenzl~\cite{gwe}, and B.~Westbury~\cite{bw}, 
and using the formalism of cellular algebras by J.~Graham and G.~Lehrer~\cite{gl, gl2}.
The article~\cite{rsa} gives a pedestrian survey of these results, including proofs, which we refer to above.
Some of the properties in item~\ref{result9} are not explicitly stated in the literature.
They also follow from theorem~\ref{BigSSTHM} of the present article.

We have not found item~\ref{result10} explicitly stated in the literature. 
Formulas for the Gram determinant in~\ref{resultGramN} appear in~\cite{gl2, rsa}, and
recursion relations similar to item~\ref{result7} appear in~\cite{jm79, bw, gl2, rsa}.
The explicit basis for the radical stated in item~\ref{resultRadN} does not seem to appear in the literature.
Our proof in section~\ref{rofSect1} for item~\ref{resultRadN}
makes use of diagram calculus inspired by Temperley-Lieb recoupling theory~\cite{kl}.

\subsection{Main results: valenced Temperley-Lieb algebra} \label{MainResultSec}

The purpose of this article is to obtain results analogous to items~\ref{result1}--\ref{result9} 
for a natural generalization of the Temperley-Lieb algebra, the ``valenced Temperley Lieb algebra."
We define this algebra via ``valenced tangles.''

Throughout, we let $\multii = (\sIndex_1, \sIndex_2, \ldots, \sIndex_\np)$ be a multiindex of nonnegative entries and 
we denote the sum of its entries by
$\Summed := \sIndex_1 + \sIndex_2 + \dotsm + \sIndex_\np$. We also fix a nonzero complex number $q \in \bC^\times$.
We call a collection of $s$ parallel links within a tangle a ``cable of size $s$," and we illustrate it as one link with label ``$s$" next to it:
\begin{align} 
s
\begin{cases}
\quad \vcenter{\hbox{\includegraphics[scale=0.275]{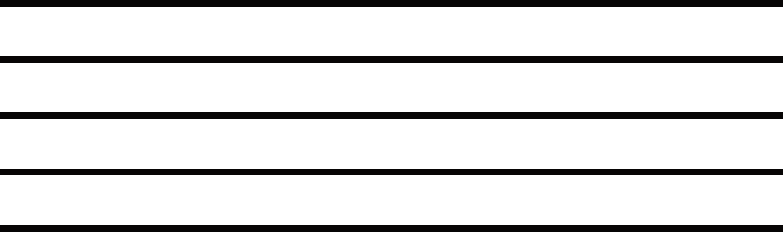}}}
\end{cases}
\; = \quad \raisebox{1pt}{\includegraphics[scale=0.275]{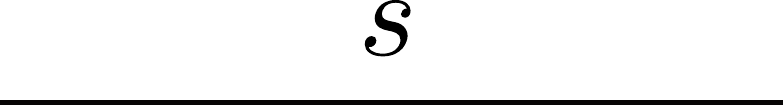}  .} 
\end{align} 
The terminus of such a cable comprises $s$ adjacent nodes, each hosting exactly one endpoint of a link within the cable. 
From now on, we allow the possibility that multiple links terminate at a common node. We illustrate this as 
\begin{align}\label{boxval} 
\begin{rcases}
\vcenter{\hbox{\includegraphics[scale=0.275]{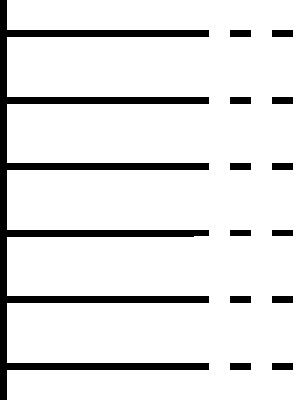}}} \quad
\end{rcases} s
\quad 
\qquad \Longrightarrow \qquad 
\begin{rcases}
\vcenter{\hbox{\includegraphics[scale=0.275]{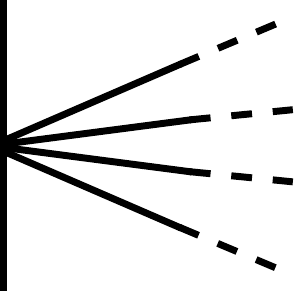}}} \quad
\end{rcases} \; s
\quad 
\qquad \Longrightarrow \qquad 
\quad \vcenter{\hbox{\includegraphics[scale=0.275]{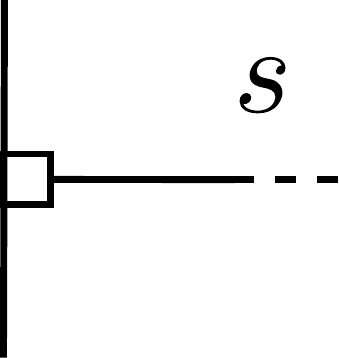}  .}}
\end{align} 
We call the number $s$ of links anchored to a node the ``valence" of that node. 
We call every diagram of the form 
\begin{align} 
\vcenter{\hbox{\includegraphics[scale=0.275]{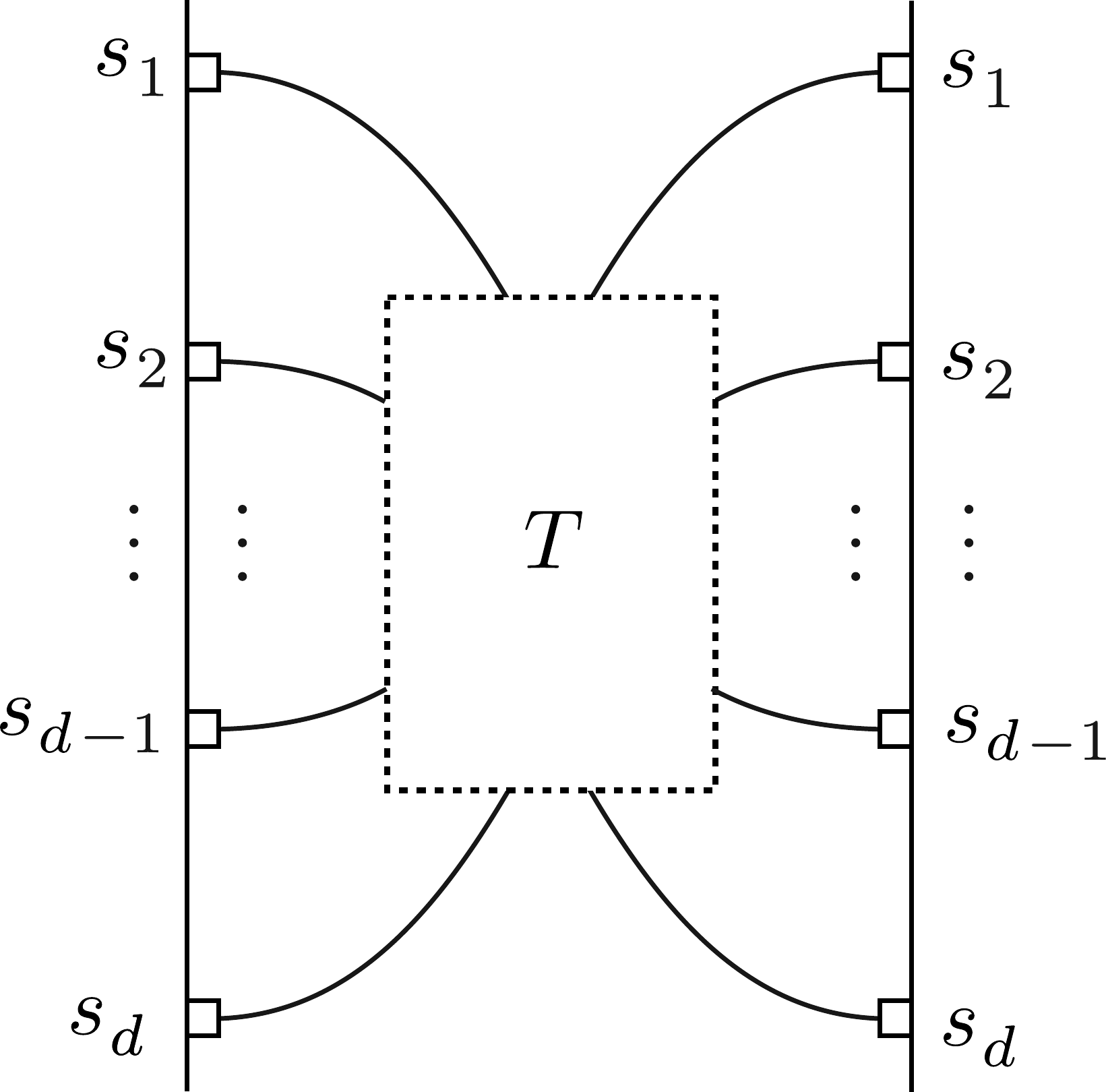}  ,}}
\end{align} 
with $T \in \TL_\Summed(\nu)$, a ``$\multii$-valenced tangle." 
Examples of $(1,2)$-valenced tangles are
\begin{align} \label{AnExnotall0}
\vcenter{\hbox{\includegraphics[scale=0.275]{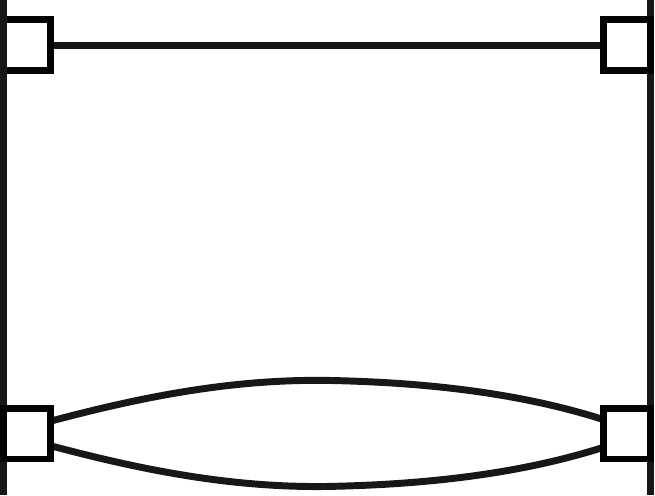}  ,}} \qquad \qquad
\vcenter{\hbox{\includegraphics[scale=0.275]{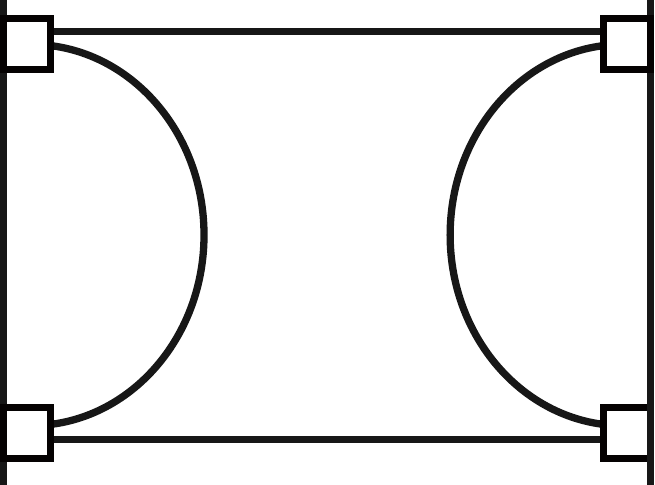}  ,}} \qquad \qquad \text{and} \qquad \qquad
\vcenter{\hbox{\includegraphics[scale=0.275]{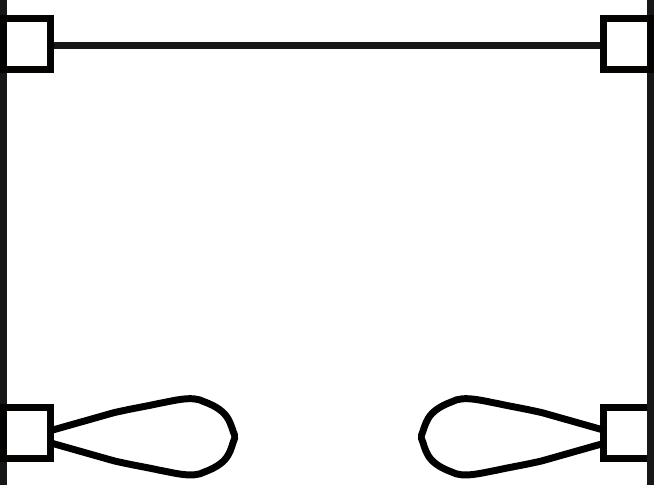}  .}} 
\end{align} 
We restrict our attention to $\multii$-valenced tangles lacking ``loop links," i.e., links with both endpoints at the same node,
by regarding all $\multii$-valenced tangles containing loop links as zero. The third tangle in~\eqref{AnExnotall0} provides an example:
\begin{align} \label{AnEx0}
\vcenter{\hbox{\includegraphics[scale=0.275]{e-WJ_example5_valenced.pdf}}} \quad = \quad 0 .
\end{align} 
We denote by $\TL_\multii$ the space of all $\multii$-valenced tangles modulo those containing loop links
(defined formally in section~\ref{WJLinkDiagSect}).
Next, we fix a fugacity $\nu \in \bC$ parameterized as in~\eqref{fugacity}, and assume that $\max \multii < \ppmin(q)$.
Then we endow the space of $\multii$-valenced tangles with a structure of an associative algebra,
whose multiplication is defined by concatenation of diagrams, as detailed in section~\ref{ValencedCompositionSec}, 
and whose unit element is the valenced tangle
(independent of $\nu$)
\begin{align}\label{ValencedCompProjIntro} 
\mathbf{1}_{\TL_\multii} \quad = \quad \vcenter{\hbox{\includegraphics[scale=0.275]{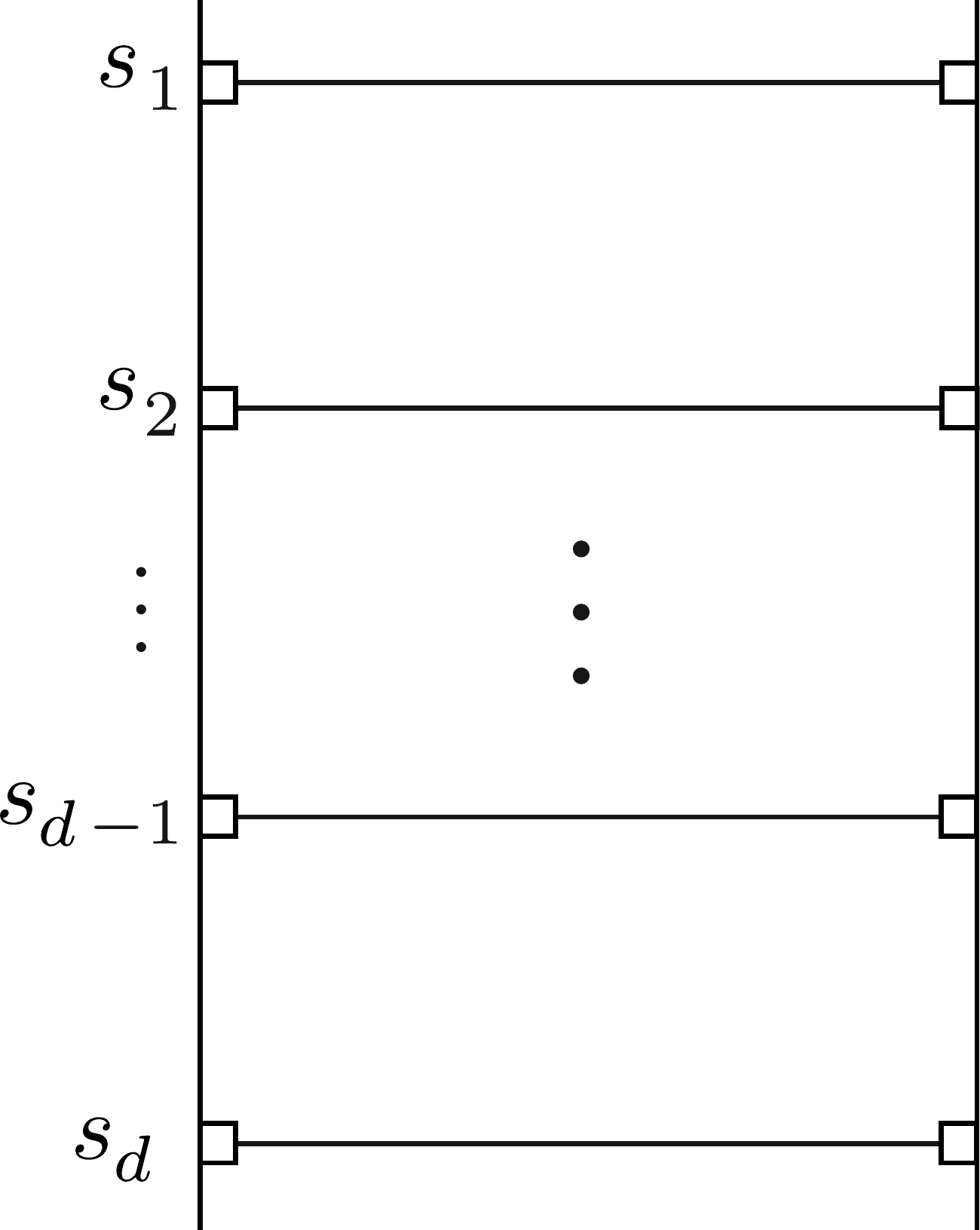} .}}
\hphantom{\mathbf{1}_{\TL_\multii} \quad = \quad}
\end{align} 
We denote the algebra thus obtained by $\TL_\multii(\nu)$ and call it the ``valenced Temperley-Lieb algebra.''
We prove in appendix~\ref{AppWJ} that this algebra 
is isomorphic to a subalgebra $\WJ_\multii(\nu) \subset \TL_\Summed(\nu)$ of the ordinary Temperley-Lieb algebra,
that we call the ``Jones-Wenzl algebra.''  We study the latter algebra in the companion article~\cite{fp0}.

The main purpose of this article is to understand the representation theory of the valenced Temperley-Lieb algebra.
In particular, we consider  ``valenced standard modules" $\smash{\LS_\multii\super{s}}$, which are $\TL_\multii(\nu)$-modules defined analogously 
to the Temperley-Lieb algebra standard modules that appeared in section~\ref{TLSecIntro}.
Using notation~\eqref{boxval}, elements in the valenced standard modules are ``$(\multii,s)$-valenced link states" of the form
\begin{align}\label{JWLinkState2} 
\vcenter{\hbox{\includegraphics[scale=0.275]{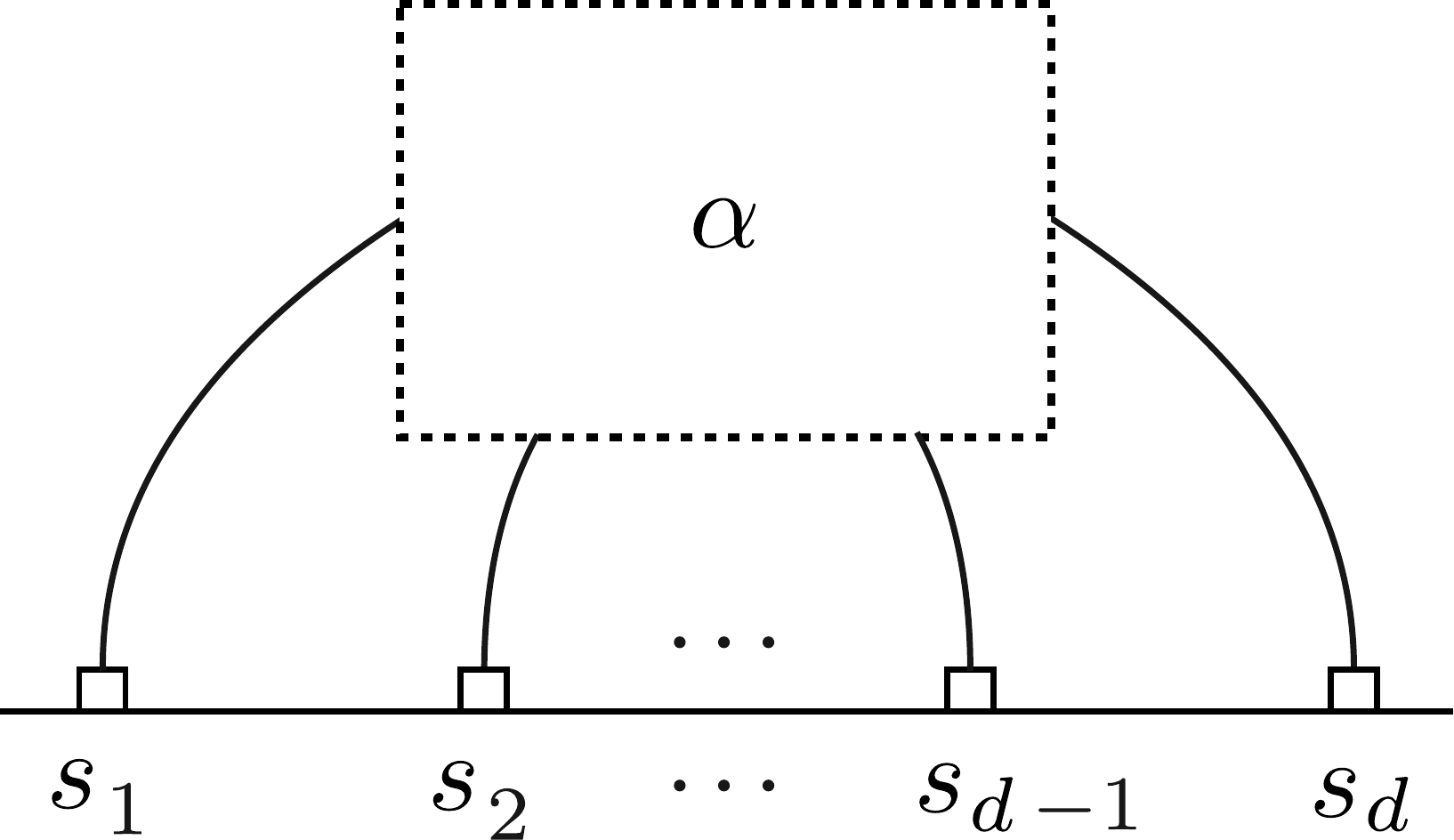} ,}}
\end{align} 
where $\alpha \in \smash{\LS_\Summed\super{s}}$. These objects are defined in detail in section~\ref{DiagramAlgebraSect}.
Examples of $((3,2,2),3)$-valenced link states are
\begin{align}
\vcenter{\hbox{\includegraphics[scale=0.275]{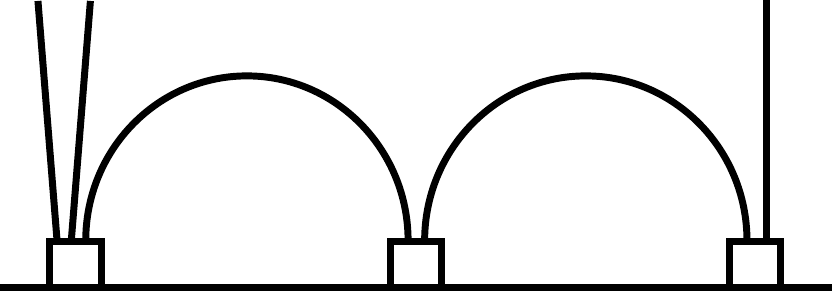}}}
\qquad \qquad \text{and} \qquad \qquad
\vcenter{\hbox{\includegraphics[scale=0.275]{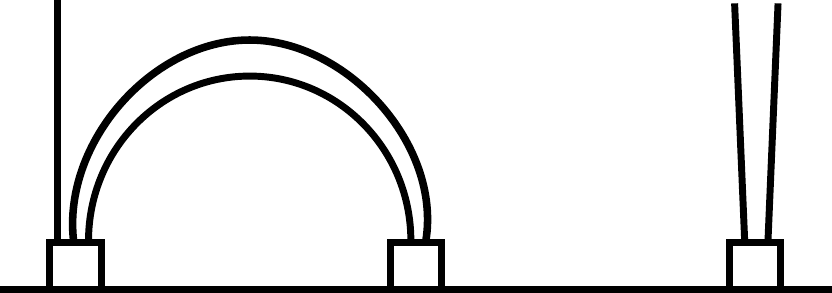}  .}}
\end{align} 
Again, we restrict our attention to $(\multii,s)$-valenced link states lacking loop links, and we let $\smash{\LS_\multii\super{s}}$ denote the space of all such objects. 
We also define the ``valenced link state module" to be the direct sum 
\begin{align} \label{LinkStateModule}
\LS_\multii :=  \bigoplus_{s \, \in \, \DefectSet_\multii} \LS_\multii\super{s} ,
\end{align}
where $\DefectSet_\multii$ denotes the set of all integers $s \geq 0$ such that the module $\smash{\LS_\multii\super{s}}$ is nontrivial. 
(See equation~\eqref{DefSet2} and lemmas~\ref{SpecialDefLem}--\ref{SminLem} in section~\ref{DiagramAlgebraSect} 
for a complete determination of the set $\DefectSet_\multii$.)
When $\max \multii < \ppmin(q)$, the space $\smash{\LS_\multii\super{s}}$ (and hence $\LS_\multii$)
has the structure of a $\TL_\multii(\nu)$-module,
where the action is defined by diagram concatenation in section~\ref{ValencedCompositionSec}.

The $\TL_\multii(\nu)$-module $\LS_\multii$ has a natural bilinear form $\BiForm{\cdot}{\cdot}$, 
which we define diagrammatically in section~\ref{BilinFormSec}. 
We denote the radical of this bilinear form by
\begin{align} 
\rad \LS_\multii := \; & \big\{\alpha \in \LS_\multii \, \big| \, \text{$\BiForm{\alpha}{\beta} = 0$, for all $\beta \in \LS_\multii$} \big\}.
\end{align}
It follows from lemma~\ref{EasyLem2} in section~\ref{StdModulesSect}
that the radical is a $\TL_\multii(\nu)$-submodule of $\LS_\multii$. 
Because the standard modules $\smash{\LS_\multii\super{s}}$ for different $s$ 
are orthogonal, $\rad \LS_\multii$
equals a direct sum of the radicals of the standard modules $\smash{\LS_\multii\super{s}}$,
\begin{align} 
\rad \LS_\multii = \bigoplus_{s \, \in \, \DefectSet_\multii} \rad \LS_\multii\super{s}, \qquad \text{where} \quad
\rad\smash{\LS_\multii\super{s}} := \; & \big\{\alpha \in\smash{\LS_\multii\super{s}} \, \big| \, \text{$\BiForm{\alpha}{\beta} = 0$, 
for all $\beta \in \smash{\LS_\multii\super{s}}$} \big\} ,
\end{align}
and, for each $s \in \DefectSet_\multii$,
$\rad \smash{\LS_\multii\super{s}}$ is a $\TL_\multii(\nu)$-submodule of $\smash{\LS_\multii\super{s}}$.
We denote the corresponding quotient module by 
\begin{align}\label{QuoMod} 
\Quo_\multii\super{s} := \LS_\multii\super{s} / \rad \LS_\multii\super{s}. 
\end{align}

The goal of this article is to find results for the valenced Temperley-Lieb algebra that generalize the analogous results about the Temperley-Lieb algebra 
stated as items~\ref{result1}--\ref{result9} above. The following is a list of our findings:

\begin{enumerate}[leftmargin=*, label = $\TL$\arabic*$_{\multii}$., ref = $\TL$\arabic*$_{\multii}$]
\itemcolor{red}
\item \label{result11} \textnormal{[Lemma~\ref{LSDimLem2}]:} 
We have $\dim \smash{\LS_\multii\super{s}} = \Dim_\multii\super{s}$, 
where $\smash{\{\Dim_\multii\super{s}\}_{s \in \DefectSet_\multii}}$ is the unique solution to the recursion
\begin{align} 
\label{PreRecursion2} 
\Dim_\multii\super{s} \hspace{.1cm}
 = \hspace{.5cm} \sum_{\mathclap{r \, \in \, \DefectSet_{\lds} \, \cap \, \DefectSet\sub{s,t} }}  
\quad \Dim_{\lds}\super{r} 
\quad \qquad \text{and} \quad \qquad \Dim \sub{s}\super{s} = 1 ,
\end{align}
where $\multii = (\sIndex_1, \sIndex_2, \ldots, \sIndex_\np)$, and we denote
$\lds := (\sIndex_1, \sIndex_2, \ldots, \sIndex_{\np-1})$ and $t := \sIndex_\np$.

\item \label{result22} \textnormal{[Special case of corollary~\ref{WJDimLem1}]:} 
We have
\begin{align} \label{PreDim}
\dim \TL_\multii(\nu) = \sum_{s \, \in \, \DefectSet_\multii} \big(\dim \LS_\multii\super{s}\big)^2.
\end{align}

\item \label{result33} \textnormal{[Proposition~\ref{GeneratorPropTwo}]:}
Suppose $\Summed < \ppmin(q)$.
Then the unit~\eqref{ValencedCompProjIntro} 
together with all $\multii$-valenced tangles of the form
\begin{align}
\vcenter{\hbox{\includegraphics[scale=0.275]{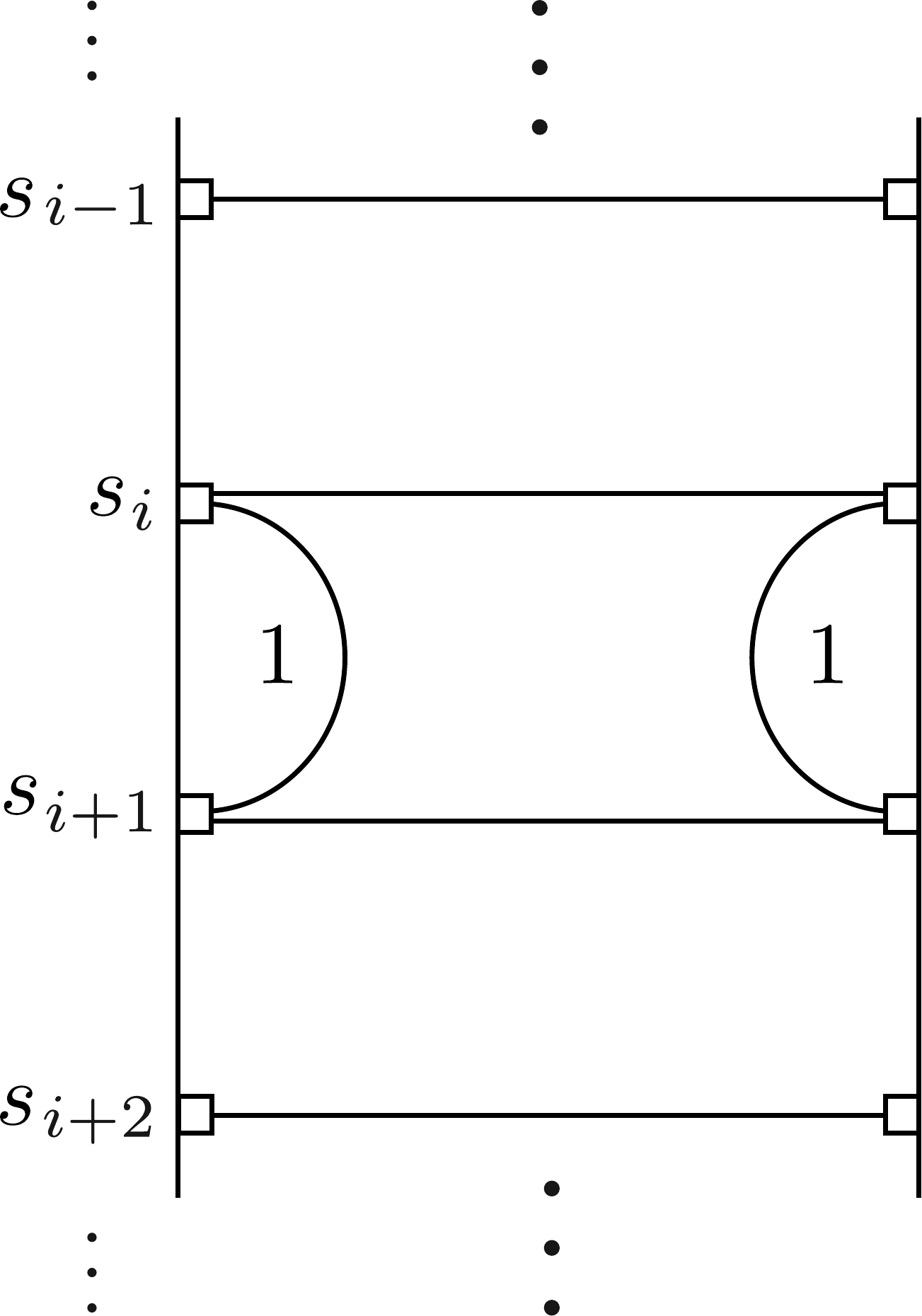} ,}}
\end{align} 
with $i \in \{1,2,\ldots,\np-1\}$ forms a minimal generating set for the valenced Temperley-Lieb algebra $\TL_\multii(\nu)$.

\item \label{result55} \textnormal{[Proposition~\ref{GenLem2}]:} 
Suppose $\max \multii < \ppmin(q)$. If $\rad \smash{\LS_\multii\super{s}} \neq \smash{\LS_\multii\super{s}}$,
then the following hold:
\begin{enumerate}[leftmargin=*, label = \arabic*., ref = \arabic*]
\itemcolor{red}
\item 
The quotient module $\smash{\Quo_\multii\super{s}}$ is simple, and $\smash{\rad \LS_\multii\super{s}}$ 
is the unique maximal proper submodule of~$\smash{\LS_\multii\super{s}}$.

\smallskip

\item 
The standard module $\smash{\LS_\multii\super{s}}$ is indecomposable.
\end{enumerate}

\item \label{result66} \textnormal{[Corollary~\ref{nonisoCor2}]:} 
Suppose $\max \multii < \ppmin(q)$. If $\rad \smash{\LS_\multii\super{s}} \neq \smash{\LS_\multii\super{s}}$ 
and $\rad \smash{\LS_\multii\super{r}} \neq \smash{\LS_\multii\super{r}}$, then we have
\begin{align}
\LS_\multii\super{s} \cong \LS_\multii\super{r} \quad \Longleftrightarrow \quad s = r 
\qquad \qquad \textnormal{and} \qquad \qquad 
\Quo_\multii\super{s} \cong 
\Quo_\multii\super{r} \quad \Longleftrightarrow \quad s = r . 
\end{align} 

\item \label{result1010} 
\textnormal{[Corollary~\ref{PreFaithfulCor}]:} 
Suppose $\max \multii < \ppmin(q)$. The link state representation of $\TL_\multii (\nu)$ on $\LS_\multii$ induced by the action 
\begin{align}
(T,\alpha) \quad \longmapsto \quad T\alpha ,
\end{align}
for all valenced tangles $T \in \TL_\multii(\nu)$ and valenced link states $\alpha \in \LS_\multii$ is faithful
if and only if $\rad \LS_\multii = \{0\}$. 

\item \label{resultGram} \textnormal{[Propositions~\ref{GramDetLem} and~\ref{VanishDetLem2}]:}
Suppose $\max \multii < \ppmin(q)$. 
The Gram determinant $\det \smash{\Gram_\multii\super{s}}$ of the bilinear form $\BiForm{\cdot}{\cdot}$
on $\smash{\LS_\multii\super{s}}$ has an explicit formula, given in~\eqref{DetFormula}.
In particular, if $\Summed < \ppmin(q)$, then $\det \smash{\Gram_\multii\super{s}} \neq 0$, for all $s \in \DefectSet_\multii$.

\item \label{resultRad} \textnormal{[Theorem~\ref{BigTailLem2}]:}
Suppose $\max \multii < \ppmin(q)$. The collection
$\smash{\big\{ \hcancel{\,\alpha} \,\big|\, \alpha \in \smash{\LP_\multii\super{s}}, \, \textnormal{tail}(\alpha) \in \mathsf{R}_\multii\super{s} \big\}}$,
where $\smash{\LP_\multii\super{s}}$ is the set of $(\multii,s)$-valenced link patterns, 
$\textnormal{tail}(\alpha)$ is defined via~(\ref{TailDef},~\ref{Jindex2}), 
and $\mathsf{R}_\multii\super{s}$ is defined in~\eqref{R12},
is a basis for $\smash{\rad \LS_\multii\super{s}}$.

\item \label{result77} \textnormal{[Corollaries~\ref{DnsLemAndRadDimCor2},~\ref{RadDimCor3},~\ref{GridCor}, and proposition~\ref{WholeRadicalImpliesSsmallLem}]:} 
Suppose $\max \multii < \ppmin(q)$. We have $\dim \rad \smash{\LS_\multii\super{s}} = \smash{\hcancel{\Dim}_\multii\super{s}}$, 
where $\smash{\{\hcancel{\Dim}_\multii\super{s}\}_{s \in \DefectSet_\multii}}$ is the unique solution to the recursion
\begin{align} 
\label{PreRadRecurs2}
\hcancel{\Dim}_\multii\super{s} = 
\sum_{r \, \in \, \DefectSet_{\lds} \, \cap \, \DefectSet\sub{s,t}} 
\Big(\one{ \Big\{\Delta_{k_s} < \, \frac{r+s-t}{2} \Big\} } & \one{ \Big\{ \frac{r+s+t}{2} \, < \, \Delta_{k_s+1} \Big\} } \hcancel{\Dim}_{\lds}\super{r} \\
\nonumber
 + & \one{ \Big\{ \Delta_{k_s+1} \, \leq \, \frac{r+s+t}{2} \Big\}} \Dim_{\lds}\super{r} \Big), 
 \qquad \quad \text{and} \qquad \quad \hcancel{\Dim}\sub{s}\super{s} = 0 ,
\end{align}
involving the numbers from item~\ref{result11},
with $\Delta_k$ defined in~\eqref{DeltaDefn} and $\lds$ and $t$ defined in item~\ref{result11}. 

\noindent
With the set $\smash{\Dom_\multii\super{s}}$ 
of full Lebesgue measure defined via (\ref{Domns2},~\ref{Dommultii}) in section~\ref{rofSect31}, 
we have
\begin{align}\label{PreIfonlyIf2} 
\rad \LS_\multii\super{s} = \{0\} \qquad \Longleftrightarrow \qquad \rad \LS_n\super{s} = \{0\} \qquad \Longleftrightarrow \qquad 
q \in \smash{\Dom_\multii\super{s}} ,
\end{align} 
and this in turn implies that $\rad \LS_\multii$ is trivial if and only if 
$q \in \Dom_\multii :=$\raisebox{2pt}{$\smash{\underset{s \, \in \, \DefectSet_\multii}{\bigcap}}$} $\smash{\Dom_\multii\super{s}}$. 

\noindent
Also, with the set $\smash{\Tot_\multii\super{s}}$ 
of zero Lebesgue measure defined via~\eqref{TotDefn} in section~\ref{rofSect32}, 
we have
\begin{align}\label{radCond3} 
\rad \smash{\LS_\multii\super{s}} = \LS_\multii\super{s} \qquad \Longleftrightarrow \qquad q \in \Tot_\multii\super{s}.
\end{align}

\item  \label{result88} \textnormal{[Proposition~\ref{SimpleModuleProp}]:} 
Suppose $\max \multii < \ppmin(q)$. The collection 
$\smash{\big\{ \Quo_\multii\super{s} \,\big| \, s \in \DefectSet_{\multii}, \dim \Quo_\multii\super{s} > 0 \big\}}$ 
is the complete set of  \\ \hspace*{1mm} non-isomorphic simple $\TL_\multii(\nu)$-modules.

\item \label{result99} \textnormal{[Theorem~\ref{BigSSTHM}]:} 
Suppose $\max \multii < \ppmin(q)$. The following statements are equivalent:
\begin{enumerate}[leftmargin=*, label = \arabic*., ref = \arabic*]
\itemcolor{red}
\item
The valenced Temperley-Lieb algebra $\TL_\multii(\nu)$ is semisimple, i.e., its Jacobson radical $\rad\TL_\multii(\nu)$ is trivial.

\smallskip

\item
We have $\rad \LS_\multii = \{0\}$.

\smallskip

\item
The link state representation induced by the action of $\TL_\multii (\nu)$ on $\LS_\multii$ is faithful.

\smallskip

\item
The link state representation induces an isomorphism of algebras from $\TL_\multii (\nu)$ to 
\raisebox{1pt}{$\smash{\underset{s \, \in \, \DefectSet_\multii}{\bigoplus}}$}$\smash{\End \LS_\multii\super{s}}$.

\smallskip

\item
The collection $\smash{\big\{ \LS_\multii\super{s} \,\big| \, s \in \DefectSet_\multii \big\}}$ 
is the complete set of non-isomorphic simple $\TL_\multii(\nu)$-modules.

\smallskip

\item
We have $q \in \Dom_\multii$.
\end{enumerate}
\end{enumerate}

\subsection{Motivation: correlation functions of conformal field theory} \label{MotivationSec}

For us, the main reason to introduce the valenced Templerley-Lieb algebra is its value in applications to conformal field theory (CFT).
In this section, we briefly explain the particular application treated in forthcoming work~\cite{fp1}.
Mathematically, many aspects of CFT are still poorly understood and are presently the subject of active research. However, the problem that we are interested in is well-posed 
and our solution to it is rigorous.
We invite the reader to consult the physics literature for background on CFT, for example~\cite{fms, mh, sr}.

In a CFT, the fundamental objects are conformal fields and their correlation functions. There are different ways to rigorously define such fields in the mathematics literature, 
for example as random distributions~\cite{km} or as formal Laurent series~\cite{sch}. Regardless of these different approaches, the correlation functions make perfect sense 
as functions of several variables. The Temperley-Lieb algebra and its valenced generalization naturally arise when considering the monodromy of 
certain correlation functions~\cite{df1, ffk, mr, gs, fw, fuc, gras}.
Throughout, the central charge of the CFT in question relates to the fugacity parameter $\nu$ via a parameter $\kappa > 0$ as 
\begin{align}
\nu = -2\cos \left( \frac{4\pi}{\kappa} \right) \qquad \Longleftrightarrow \qquad c = \frac{(6-\kappa)(3\kappa-8)}{2\kappa} ,
\end{align} 
and we assume that $\kappa \in (0,8)$ 
is irrational (i.e., $q = \exp(4\pi \ii/\kappa)$ is not a root of unity). 
Under this assumption, all of the equivalent properties in item~\ref{result99} hold.

First, let us consider a special case of the problem we are interested in.
We denote by $\psi_1$ the spinless primary conformal field whose (holomorphic and antiholomorphic) conformal weight equals the quantity
\begin{align}
\sleg{1} := \frac{6-\kappa}{2\kappa} ,
\end{align} 
that is, the Kac weight $h_{1,2}$ or $h_{2,1}$ indexed by the second entry in the first row or column of the Kac table.
We consider the following $n$-point CFT correlation function, with $n \in 2\bZpos$:
\begin{align}\label{corrfunc} 
F_n( \boldsymbol{z}, \bar{\boldsymbol{z}}) =  \langle \psi_1(z_1, \bar{z}_1) \psi_1(z_2, \bar{z}_2) \dotsm \psi_1(z_n, \bar{z}_n) \rangle ,
\end{align} 
where we treat $z_i$ and $\bar{z}_i$ as independent variables, rather than complex conjugates.
The domain of this function is the set of all points $(\boldsymbol{z}, \bar{\boldsymbol{z}}) \in \bC^{2n}$ with $\boldsymbol{z}= (z_1, z_2, \ldots, z_n) \in \bC^n$ 
and $\bar{\boldsymbol{z}} = (\bar{z}_1, \bar{z}_2, \ldots, \bar{z}_n) \in \bC^n$, where no two coordinates of $\boldsymbol{z}$ (resp.~$\bar{\boldsymbol{z}}$) are equal. 
Correlation functions of type~\eqref{corrfunc} should satisfy the following key properties: 
\begin{enumerate}
\itemcolor{red}
\item \label{mon1} The following two decoupled systems of partial differential equations of BPZ type~\cite{bpz}:
\begin{align}
\label{SLEBSAsys1}
\left[ \frac{\kappa}{4}\pdder{z_{i}} 
+ \sum_{j \neq i}\left(\frac{1}{z_{j}-z_{i}}\pder{z_{j}}-\frac{\sleg{1}}{(z_{j}-z_{i})^{2}}\right) \right]
F_n(\boldsymbol{z}, \bar{\boldsymbol{z}}) = 0,  \qquad \text{for all $i\in\{1,2\ldots,n\}$}, \\
\label{SLEBSAsys2}
\left[ \frac{\kappa}{4}\pdder{\bar{z}_{i}} 
+ \sum_{j\neq i}\left(\frac{1}{\bar{z}_{j}-\bar{z}_{i}}\pder{\bar{z}_{j}}-\frac{\sleg{1}}{(\bar{z}_{j}-\bar{z}_{i})^{2}}\right) \right]
F_n(\boldsymbol{z}, \bar{\boldsymbol{z}}) = 0,  \qquad \text{for all $i\in\{1,2\ldots,n\}$}.
\end{align} 

\item \label{mon2} Invariance under all monodromy transformations (defined below) and coordinate permutations.

\item \label{mon4} 
Covariance under all conformal transformations $\conf \colon \bC \longrightarrow \bC$, that is, 
\begin{align}
F_n \big( \conf(\boldsymbol{z}), \conf(\bar{\boldsymbol{z}}) \big)
= \Bigg( \prod_{i=1}^n \partial \conf(z_{i})^{-\sleg{1}} \bar{\partial} \conf(\bar{z}_{i})^{-\sleg{1}} \Bigg) F_n (\boldsymbol{z}, \bar{\boldsymbol{z}}) .
\end{align}

\item \label{mon3} 
There exist numbers $C, p \in \bRnn$ such that the magnitude of correlation function~\eqref{corrfunc} is globally bounded:
\begin{align} 
\big| F_n( \boldsymbol{z}, \bar{\boldsymbol{z}}) \big| \, \leq \,
C\prod_{i<j}^n \max \Big\{ \big|(z_i-z_j)(\bar{z}_i - \bar{z}_j)\big|^p, \big|(z_i-z_j)(\bar{z}_i - \bar{z}_j)\big|^{-p} \Big\} . 
\end{align} 
\end{enumerate}

In item~\ref{mon2} above, we use the convention that the monodromy transformation that winds $z_i$ counterclockwise around $z_j$ 
also simultaneously winds $\bar{z}_i$ clockwise around $\bar{z}_j$, as we illustrate below in~\eqref{half-twist}.

The following question motivates our work, and we answer it in our forthcoming article~\cite{fp1}: 

\begin{quest} \label{BigQuestion} 
What is the dimension of the space of all functions with properties~\ref{mon1}-\ref{mon3}? Can we find an explicit basis?
\end{quest}

Using representation theory and recent results from the series of articles~\cite{sfk1,sfk2,sfk3,sfk4,kp2,kp}, 
in~\cite{fp1} we prove that this space is one-dimensional and we obtain an explicit formula for a function that spans it. 
To find this function, we use $(n,0)$-link states. First, for each $(n,0)$-link pattern, we define
\begin{align}\label{CGform} 
\sH_n[\alpha](\boldsymbol{z}) := \bigg(\prod_{1 \leq i < j \leq n} (z_i - z_j)^{2/\kappa} \bigg) \int_{\Gamma_\alpha} \bigg( \prod_{i = 1}^{\vphantom{n/2} n} \prod_{j = 1}^{n/2} (w_i - z_j)^{-4/\kappa} \bigg) \bigg( \prod_{1 \leq i < j \leq n/2} (w_i - w_j)^{8/\kappa} \bigg) \ud \boldsymbol{w}, 
\end{align} 
where 
the integration surface $\Gamma_\alpha$ is the product of simple contours which, after identifying the $i$:th node of $\alpha$ from the left with the coordinate $z_i$ for each $i \in \{1,2,\dots,n\}$, follow the paths of the links in $\alpha$: 
\begin{align}
\alpha \quad = \quad \vcenter{\hbox{\includegraphics[scale=0.275]{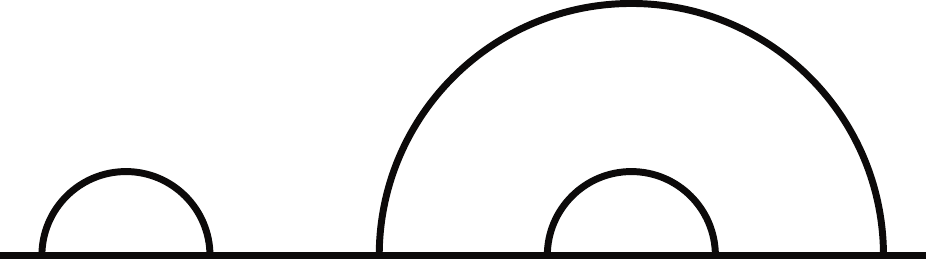}}}
\qquad \qquad \Longrightarrow \qquad \qquad
\Gamma_\alpha \quad = \quad \vcenter{\hbox{\includegraphics[scale=0.275]{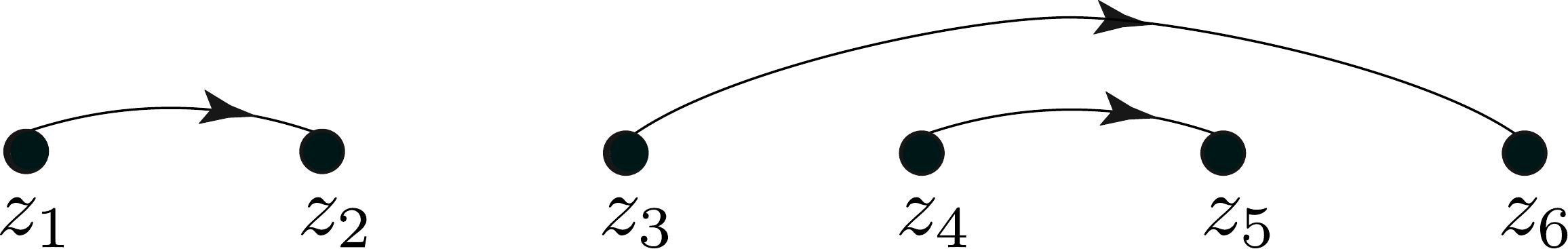} .}}
\end{align} 
We define the function $\sH_n[\alpha]$ by linearly extending~\eqref{CGform} from all $(n,0)$-link patterns 
to all link states $\alpha \in \smash{\LS_n\super{0}}$.
All such functions satisfy PDE system~\eqref{SLEBSAsys1} and the properties stated in items~\ref{mon4} and~\ref{mon3}
with $\bar{\boldsymbol{z}}$ dropped~\cite{dub, kp2}.

Before continuing, we clarify some technical details that arise with the definition of $\sH_n[\alpha]$:

\begin{itemize}[wide, labelwidth=!, labelindent=0pt] 
\item 
So far, our choice of the integration surface $\Gamma_\alpha$ makes sense only when $\re(z_1) < \re(z_2) < \cdots < \re(z_n)$.
Nevertheless, we can use functions of the form~\eqref{CGform} to construct a correlation function of type~\eqref{corrfunc}
that is single-valued when analytically continued from this starting region into its full domain along any path.

\item We have not explicitly specified a branch choice
for the factors in the integrand of~\eqref{CGform}. However, 
such a choice affects the function $\sH_n[\alpha]$ by a single multiplicative factor, which will end up being irrelevant in our application.

\item If $\kappa \in (0,4)$, then the improper integrals in the formula~\eqref{CGform} for $\sH_n[\alpha]$ diverge. 
However, we can renormalize these divergent quantities by replacing their integration contours with Pochhammer contours, 
without affecting our results; see~\cite[appendix~\red{A}]{fp0} and~\cite[section~\red{II}]{sfk3}.
\end{itemize}

Choosing a basis $\mathsf{B} \subset \smash{\LS_n\super{0}}$ for the $\TL_n(\nu)$-standard module,
for each element $\alpha \in \mathsf{B}$ we let $\alpha^\cheque$ denote the dual of $\alpha$ 
with respect to the bilinear form on $\smash{\LS_n\super{0}}$,  
determined by the rule
\begin{align} \label{LSndual}
\BiForm{\alpha^\cheque}{\beta} = \delta_{\alpha,\beta} , \quad \text{for all $\alpha, \beta \in \mathsf{B}$.} 
\end{align} 
By our assumption that $\kappa$ is irrational and item~\ref{result7}, the radical of $\smash{\LS_n\super{0}}$ is trivial, so
the dual basis is well-defined.

\begin{claim} \label{corrformulaClaim} 
The following sum spans the one-dimensional space of functions that satisfy the above properties~\ref{mon1}--\ref{mon3}:
\begin{align}\label{corrformula} 
\sum_{\alpha \, \in \,\mathsf{B}} \sH_n[\alpha](\boldsymbol{z}) \; \sH_n[\alpha^\cheque](\bar{\boldsymbol{z}}) . 
\end{align} 
\end{claim}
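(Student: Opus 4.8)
The plan is to verify that the function defined by the sum~\eqref{corrformula} satisfies each of properties~\ref{mon1}--\ref{mon3} in turn, and then invoke the one-dimensionality result from~\cite{fp1} to conclude that it spans the solution space (once one checks it is nonzero). The four steps are essentially independent and of very different difficulty.

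\textbf{PDE system~\ref{mon1} and covariance~\ref{mon4}.} These are the easy steps. Each building block $\sH_n[\alpha](\boldsymbol{z})$ satisfies the holomorphic BPZ system~\eqref{SLEBSAsys1} and transforms covariantly under conformal maps, by the cited Coulomb-gas computations of~\cite{dub, kp2}; the antiholomorphic blocks $\sH_n[\alpha^\cheque](\bar{\boldsymbol{z}})$ satisfy the conjugate statements with $\boldsymbol{z}$ replaced by $\bar{\boldsymbol{z}}$. Since~\eqref{SLEBSAsys1} involves only the $z_i$'s, applying its differential operator to the product $\sH_n[\alpha](\boldsymbol{z})\,\sH_n[\alpha^\cheque](\bar{\boldsymbol{z}})$ passes through the antiholomorphic factor, and similarly for~\eqref{SLEBSAsys2}; summing over $\alpha \in \mathsf B$ preserves this. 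Conformal covariance~\ref{mon4} follows because the weight factors $\prod_i \partial\conf(z_i)^{-\sleg{1}}\bar\partial\conf(\bar z_i)^{-\sleg{1}}$ factor as a holomorphic piece times an antiholomorphic piece, each absorbed by the corresponding factor in each summand.

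\textbf{The growth bound~\ref{mon3}} follows by combining the single-variable bounds satisfied by each $\sH_n[\alpha]$ (again from~\cite{kp2, dub}, with $\bar{\boldsymbol{z}}$ dropped) multiplicatively, then summing the finitely many terms, after taking care that the overall branch ambiguity noted in the excerpt is a bounded multiplicative constant and that the Pochhammer-contour renormalization of the $\kappa \in (0,4)$ case does not alter the estimate. This is routine once the building-block bounds are in hand.

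\textbf{Monodromy and permutation invariance~\ref{mon2} is the heart of the matter and the main obstacle.} The point is that each $\sH_n[\alpha]$ is \emph{not} single-valued; rather, analytic continuation along a loop acts on the vector $(\sH_n[\alpha])_{\alpha \in \mathsf B}$ by a matrix, and the key structural fact is that this matrix is, up to scalar, the matrix by which the corresponding element of $\TL_n(\nu)$ (or its braid-group lift) acts on the standard module $\smash{\LS_n\super{0}}$ in the basis $\mathsf B$ --- this is precisely why the Temperley-Lieb algebra enters, and is where the results collected as items~\ref{result1}--\ref{result9} get used. Concretely, I would: (i) identify the monodromy/permutation generators (a half-twist exchanging $z_i, z_{i+1}$ as in~\eqref{half-twist}, acting by a braid generator, and likewise antiholomorphically by the inverse braid) with operators $\Rmatrix$ and $\Rmatrix^{-1}$ acting on $\smash{\LS_n\super{0}}$; (ii) observe that under such a generator the holomorphic vector transforms as $\sH_n[\alpha] \mapsto \sum_\beta (\Rmatrix)_{\beta\alpha}\,\sH_n[\beta]$ while the antiholomorphic dual vector transforms by the inverse-transpose, since~\eqref{LSndual} makes $\{\alpha^\cheque\}$ the dual basis and the convention in item~\ref{mon2} pairs a counterclockwise holomorphic winding with a clockwise antiholomorphic one; (iii) conclude that $\sum_\alpha \sH_n[\alpha](\boldsymbol{z})\,\sH_n[\alpha^\cheque](\bar{\boldsymbol{z}})$ is the trace-like pairing $\sum_\alpha \sH_n[\alpha]\otimes \sH_n[\alpha^\cheque]$ of a vector with its dual, hence invariant under $\Rmatrix \otimes (\Rmatrix^{-1})^{\mathsf T}$; the braid relations and triviality of the radical (item~\ref{result7}, valid since $\kappa$ is irrational) guarantee this is a consistent well-defined group action, so invariance under one generator extends to the full monodromy-plus-permutation group. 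The delicate points are pinning down the precise scalar prefactors produced by the prefactor $\prod_{i<j}(z_i-z_j)^{2/\kappa}$ and the contour-deformation Coulomb-gas bookkeeping, and checking that the holomorphic and antiholomorphic scalar ambiguities cancel in the product --- this cancellation is exactly the mechanism that turns the separately multivalued blocks into a single-valued correlation function. Finally, nonvanishing of~\eqref{corrformula} follows because $\mathsf B$ is a basis and the $\sH_n[\alpha]$ are linearly independent functions (else a nonzero vector in $\smash{\LS_n\super{0}}$ would pair to zero with everything, contradicting triviality of the radical), so the one-dimensionality theorem of~\cite{fp1} identifies~\eqref{corrformula} as a spanning element.
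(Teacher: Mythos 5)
First, a contextual remark: the paper does not actually prove claim~\ref{corrformulaClaim} in this article; the proof is deferred to~\cite{fp1}, and section~\ref{MotivationSec} only records the strategy. Measured against that recorded strategy, your handling of properties~\ref{mon1},~\ref{mon4},~\ref{mon3} and of the invariance mechanism (braid generator acting on $\alpha$, inverse braid on $\alpha^\cheque$, invariance of the canonical element $\sum_{\alpha} \alpha \otimes \alpha^\cheque$) follows the same outline as the paper's sketch, and your step (iii) is consistent with it since the braid tangle is self-adjoint for the invariant bilinear form, so the inverse braid is indeed its inverse-adjoint.

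There is, however, a genuine gap at the decisive step. The claim asserts that the space of functions with properties~\ref{mon1}--\ref{mon3} is one-dimensional and spanned by~\eqref{corrformula}; your argument only verifies that~\eqref{corrformula} is a (nonzero) element of that space and then ``invokes the one-dimensionality result from~\cite{fp1}'' to conclude spanning --- but that result is precisely the claim under proof, so the conclusion is circular. The intended argument supplies the missing upper bound in two steps: (a) the solution-space results of~\cite{sfk1,sfk2,sfk3,sfk4,kp2,kp} identify any function satisfying the PDEs, covariance, and the growth bound with an element of $\smash{\LS_n\super{0}} \otimes \smash{\LS_n\super{0}}$ via an expansion in the products $\sH_n[\alpha](\boldsymbol{z})\,\sH_n[\beta](\bar{\boldsymbol{z}})$; and (b) monodromy and permutation invariance then force this element into the subspace of $\TL_n(\nu)$-invariant vectors of $\smash{\LS_n\super{0}} \otimes \smash{\LS_n\super{0}}$, which is shown to be one-dimensional, spanned by $\sum_{\alpha} \alpha \otimes \alpha^\cheque$, by an application of Schur's lemma whose crucial input is simplicity of $\smash{\LS_n\super{0}}$ for irrational $\kappa$ (item~\ref{result9}). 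Your proposal never establishes (a) and never uses Schur's lemma or simplicity --- the very point the paper flags as the main detail; your appeal to ``triviality of the radical guarantees a consistent group action'' addresses well-definedness of the dual basis, not uniqueness of the invariant vector. A secondary slip: linear independence of the functions $\sH_n[\alpha]$ is an analytic fact from the cited works, not a consequence of nondegeneracy of the link-state bilinear form, so your nonvanishing argument also needs that external input.
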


In~\cite{fp1}, we prove claim~\ref{corrformulaClaim} using the following ideas.
First, we use the key fact, already known in the physics literature~\cite{ffk, mr, gs, fw},
that the response of the function $\sH_n[\alpha](\boldsymbol{z}) \sH_n[\alpha^\cheque](\bar{\boldsymbol{z}})$ 
to 
analytically continuing its $i$:th and $(i+1)$:st holomorphic and antiholomorphic coordinates simultaneously 
along the half-twists

\vspace*{-4mm}
\begin{align}\label{half-twist} 
\vcenter{\hbox{\includegraphics[scale=0.275]{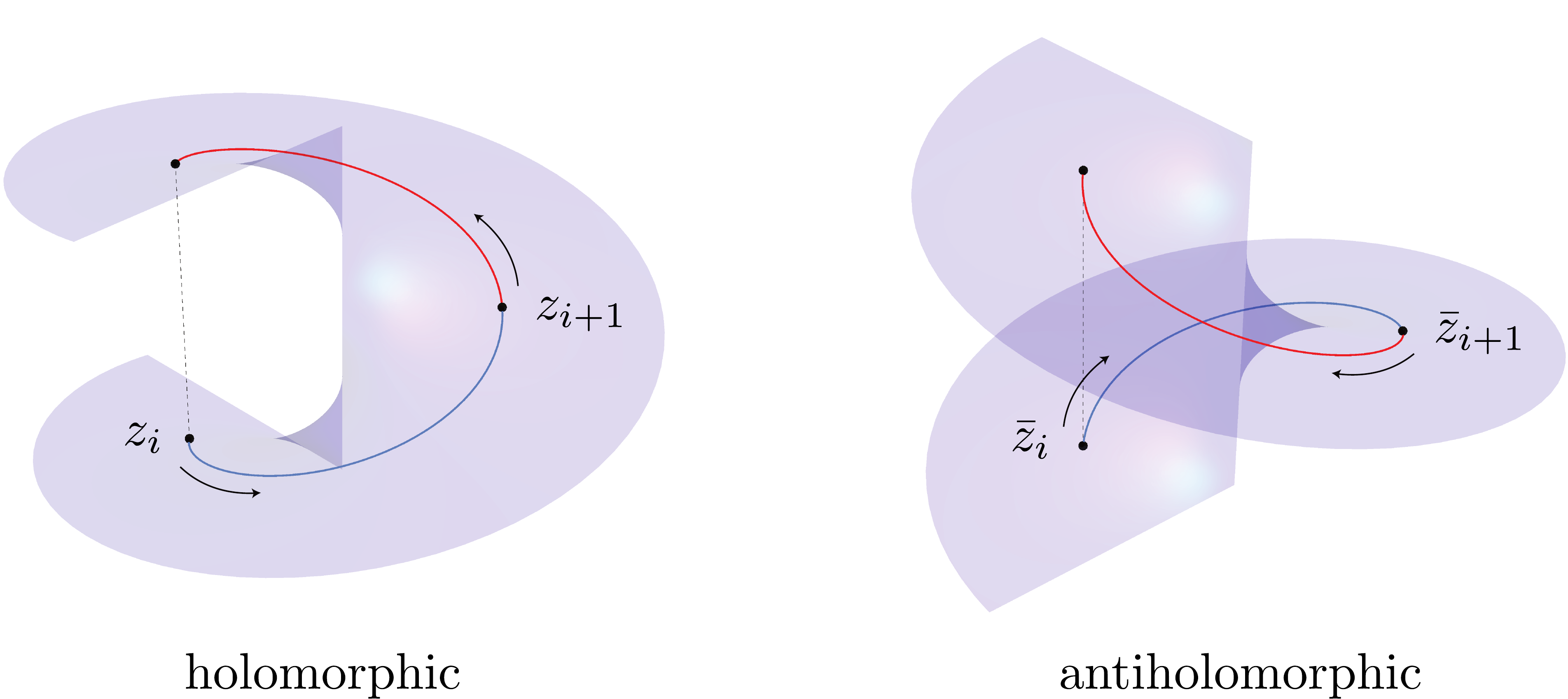}}} 
\end{align} 
matches the response of the corresponding tensor product link state $\alpha \otimes \alpha^\cheque$ to the following $\TL_n(\nu)$-actions: 
an action on $\alpha$, corresponding to the holomorphic 
coordinates of $\boldsymbol{z}$, by the ``braid generator" tangle~\cite{vj3}
\begin{align}\label{braid-gen} 
\vcenter{\hbox{\includegraphics[scale=0.275]{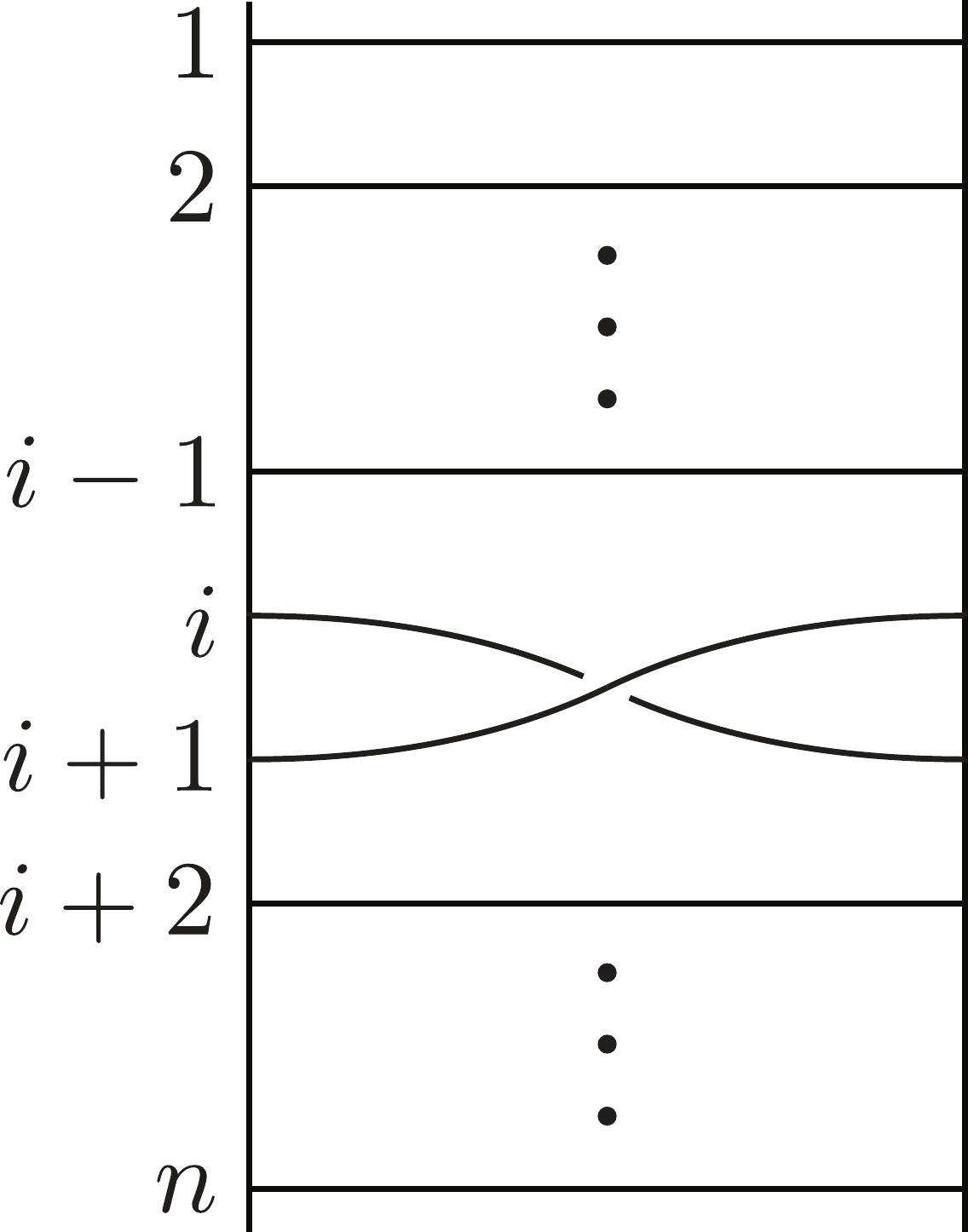}}} 
\quad := \quad q^{1/2} \times \; \vcenter{\hbox{\includegraphics[scale=0.275]{e-TLalgebra5.pdf}}} 
\quad + \quad q^{-1/2} \times \;
\vcenter{\hbox{\includegraphics[scale=0.275]{e-TLalgebra6.pdf} ,}}
\end{align} 
and a simultaneous action on $\alpha^\cheque$, corresponding to the antiholomorphic 
coordinates of $\boldsymbol{\bar{z}}$, by the ``inverse braid" 
\begin{align}\label{braid-gen2} 
\vcenter{\hbox{\includegraphics[scale=0.275]{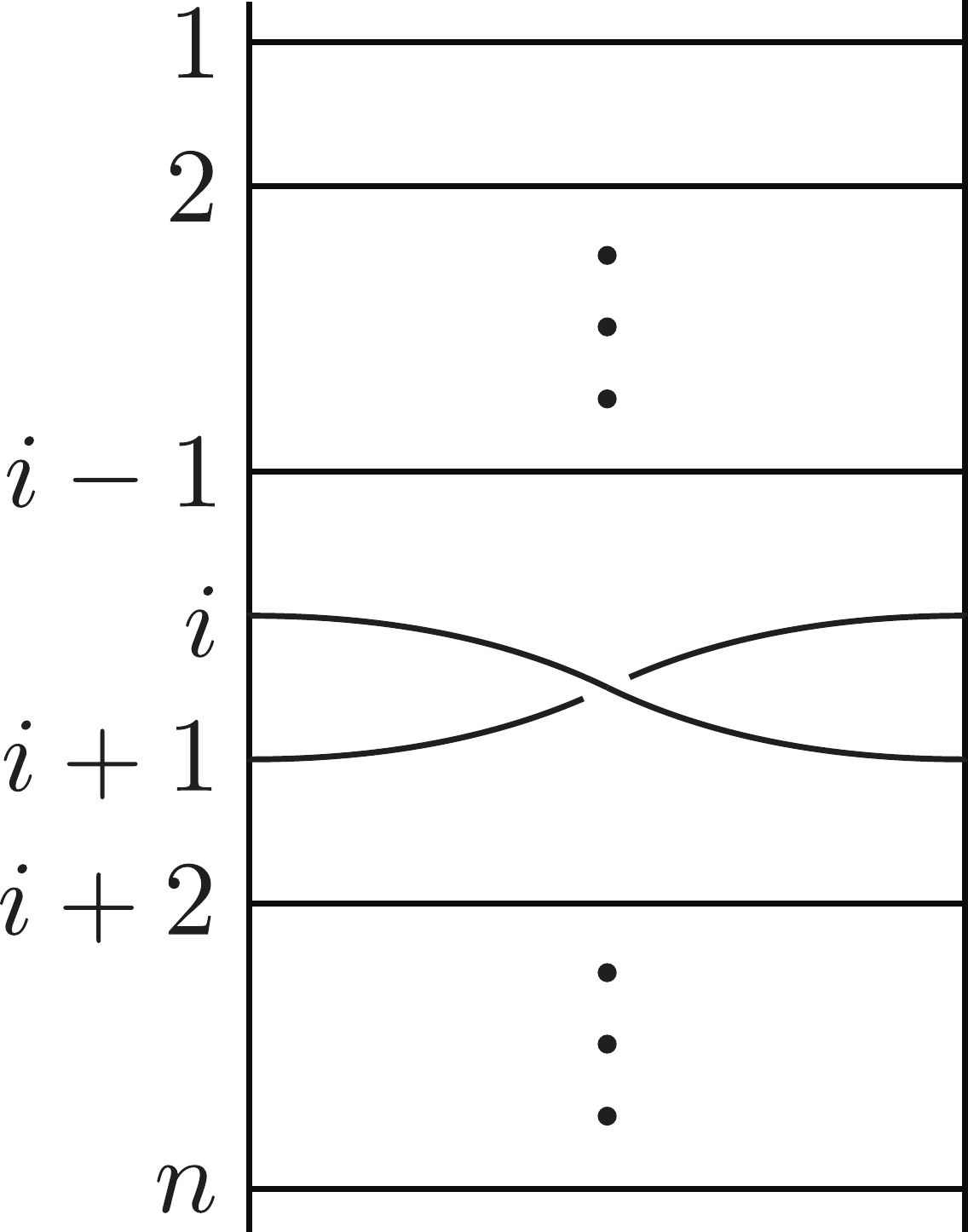}}} 
\quad := \quad q^{-1/2} \times \; \vcenter{\hbox{\includegraphics[scale=0.275]{e-TLalgebra5.pdf}}} 
\quad + \quad q^{1/2} \times \;
\vcenter{\hbox{\includegraphics[scale=0.275]{e-TLalgebra6.pdf} .}} 
\end{align}

The next crucial observation is that tangles~\eqref{braid-gen} or~\eqref{braid-gen2} generate all of $\TL_n(\nu)$ (by item~\ref{result4}).
Thus, invariance of function~\eqref{corrformula} under monodromy transformations and coordinate 
permutations is morally equivalent to invariance of its 
link state counterpart
\begin{align} \label{corralphas}
\sum_{\alpha \, \in \, \mathsf{B}} \alpha \otimes \alpha^\cheque 
\end{align} 
under the corresponding $\TL_n(\nu)$-action on it. 
We explain this in more detail in forthcoming work~\cite{fp1}.

After realizing all monodromy transformations and coordinate permutations of the function 
$\sH_n[\alpha](\boldsymbol{z}) \sH_n[\alpha](\bar{\boldsymbol{z}})$ 
as $\TL_n(\nu)$-actions on the corresponding tensor product link state $\alpha \otimes \alpha^\cheque$, 
we prove in~\cite{fp1} that there exists a unique one-dimensional subspace of link states in 
$\smash{\LS_n\super{0}} \otimes \smash{\LS_n\super{0}}$ that is invariant under the $\TL_n(\nu)$-action, 
spanned by~\eqref{corralphas}. 
This fact provides a key ingredient for concluding that the space of functions satisfying properties~\ref{mon1}--\ref{mon3} above 
is also one-dimensional, spanned by~\eqref{corrformula}. 
Its proof is an almost routine application of Schur's lemma, and the main detail that allows us to use this lemma is the fact that the standard 
module $\smash{\LS_n\super{0}}$ is simple, a well-known fact by item~\ref{result9}.

Next, using ideas from the above discussion, we explain how to construct monodromy 
and coordinate-permutation invariant multi-point CFT correlation functions comprising a more general class of Kac operators. 
We denote by $\psi_s$ the spinless primary conformal field whose (holomorphic and antiholomorphic) conformal weight equals 
\begin{align}
\sleg{s} = \frac{s(2s+4-\kappa)}{2\kappa} ,
\end{align} 
namely the Kac weight $h_{1,s+1}$ or $h_{s+1,1}$ indexed by the $(s+1)$:th entry in the first row or column of the Kac table. Then, 
we consider the following $\np$-point CFT correlation function with respect to 
the multiindex $\multii = (\sIndex_1, \sIndex_2, \ldots, \sIndex_\np)$:
\begin{align}\label{corrfunc2} 
F_\multii ( \boldsymbol{z}, \bar{\boldsymbol{z}}) 
= \langle \psi_{\sIndex_1}(z_1, \bar{z}_1) \psi_{\sIndex_2}(z_2, \bar{z}_2) \dotsm \psi_{\sIndex_\np}(z_\np, \bar{z}_\np) \rangle . 
\end{align} 
This function should satisfy properties similar to~\ref{mon1}--\ref{mon3} above, except that in item~\ref{mon1}, we replace 
PDE system~(\ref{SLEBSAsys1},~\ref{SLEBSAsys2}) by a more complicated collection of BPZ partial differential equations~\cite{bpz, bsa},
and in item~\ref{mon4}, we replace the conformal weight $\sleg{1}$ with the more general conformal weights $\sleg{\sIndex_i}$.

In~\cite{fp3, fp2, fp1}, we study question~\ref{BigQuestion} in this more general setting. Again, we find that 
the space of functions having the desired properties is one-dimensional. 
As before, a key ingredient to the proof of this is 
the fact that the $\TL_\multii(\nu)$-module $\smash{\LS_\multii\super{0}}$ is simple, by item~\ref{result99}.
To construct a nonzero function in this space, we define
\begin{align}\label{CGform2} 
\sH_\multii[\alpha](\boldsymbol{z}) := \bigg(\prod_{i < j}^\np (z_i - z_j)^{2\sIndex_i \sIndex_j/\kappa} \bigg) \int_{\Gamma_\alpha} \bigg( \prod_{i = 1}^{\vphantom{\Summed/2}\np} \prod_{j = 1}^{\Summed/2} (w_i - z_j)^{-4\sIndex_i/\kappa} \bigg) 
\bigg( \prod_{1 \leq i < j \leq \Summed/2} (w_i - w_j)^{8/\kappa} \bigg) \, \ud \boldsymbol{w} ,
\end{align} 
for each nonzero $(\multii,0)$-valenced link pattern $\alpha$, where $\Gamma_\alpha$ is a certain $\alpha$-dependent integration contour.
Then choosing an arbitrary basis $\mathsf{B}$ for $\smash{\LS_\multii\super{0}}$, with dual basis as in~\eqref{LSndual},
we prove that the function 
\begin{align}\label{corrformula2} 
\sum_{\alpha \, \in \,\mathsf{B}} \sH_\multii[\alpha](\boldsymbol{z}) \; \sH_\multii[\alpha^\cheque](\bar{\boldsymbol{z}}) 
\end{align}
spans the sought one-dimensional solution space.
To prove this claim, we realize the monodromy of function~\eqref{corrformula2} as an appropriate 
$\TL_\multii(\nu)$-action on the tensor product valenced link state $\alpha \otimes \alpha^\cheque$,
with explicit generating set for the valenced Temperley-Lieb algebra $\TL_\multii(\nu)$ given in item~\ref{result33} above. 
For example,
the ``braid" tangles
\begin{align}
\vcenter{\hbox{\includegraphics[scale=0.275]{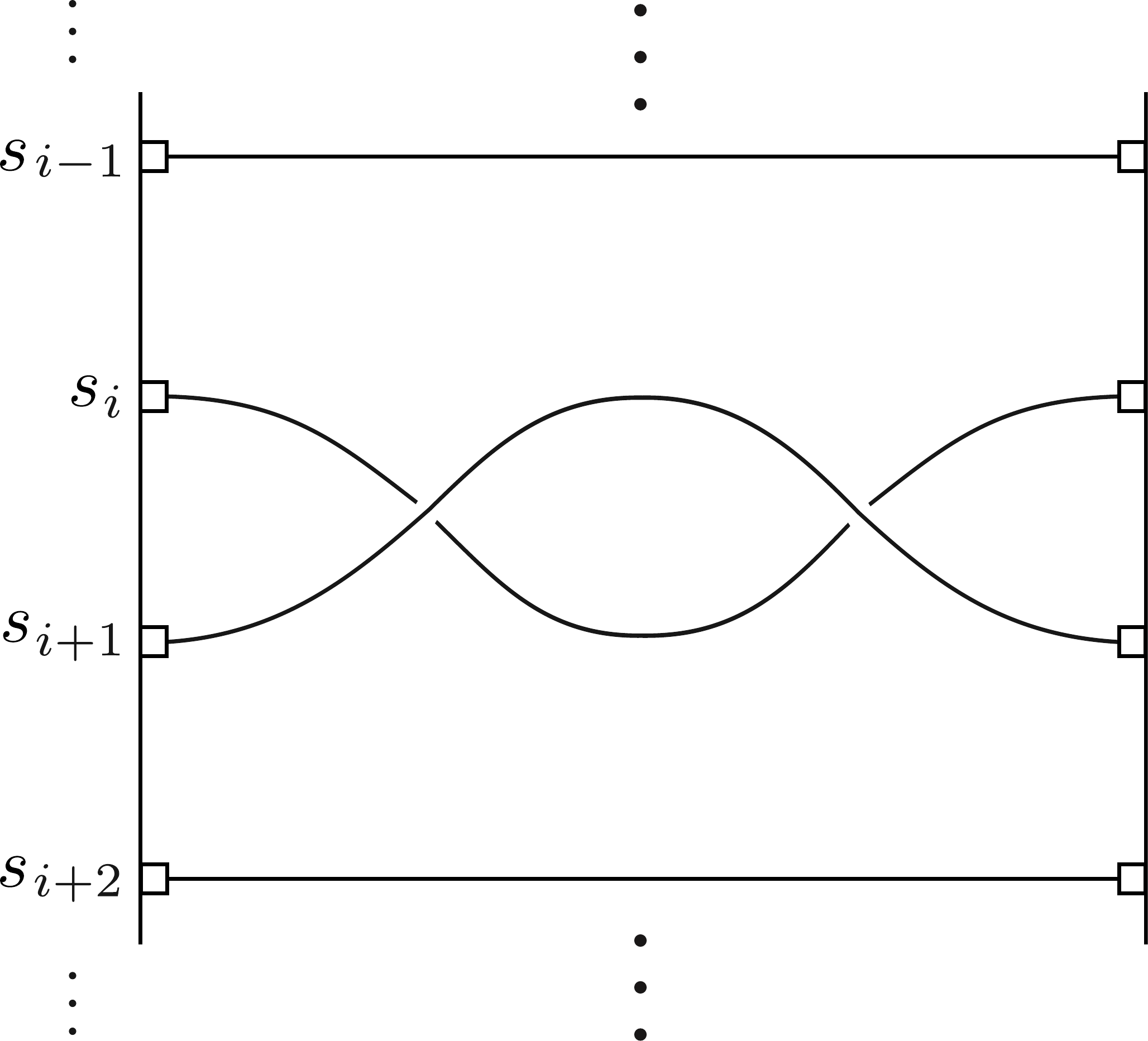}}} 
\quad = \quad \sum_{k=0}^{\min(\sIndex_i, \sIndex_{i+1})} c_k \times \;
\vcenter{\hbox{\includegraphics[scale=0.275]{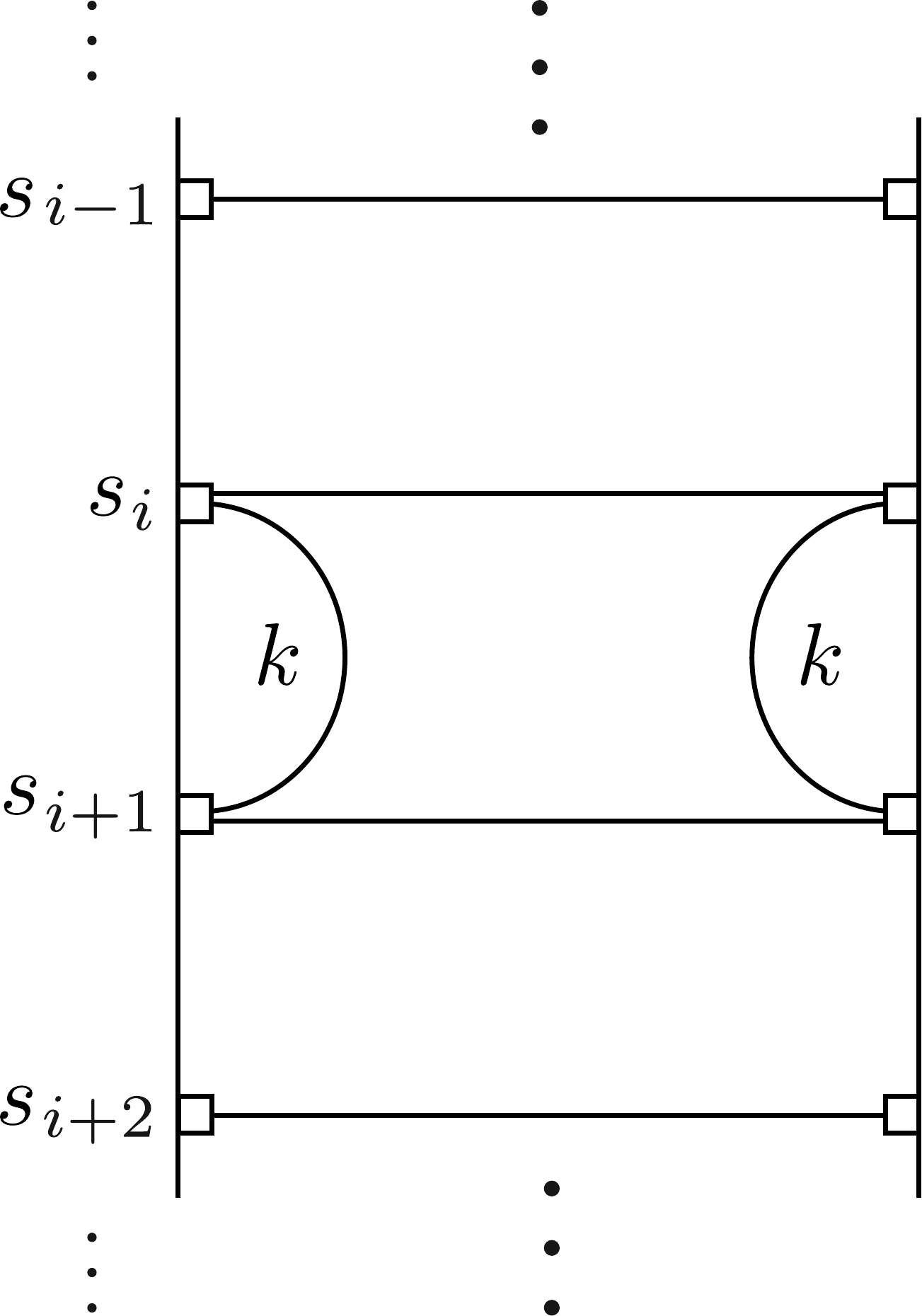} ,}} 
\end{align} 
with $i \in \{1,2,\ldots,\np-1\}$ and certain coefficients  $c_k \in \bC$, give rise to the monodromy when 
the holomorphic coordinates $\boldsymbol{z}_i$ and $\boldsymbol{z}_{i+1}$ wind around each other one full turn.

In summary, 
the determination of a unique monodromy and coordinate-permutation invariant 
correlation function~\eqref{corrfunc2} motivates the work and key results that we present in this article.
Furthermore, in~\cite{fp3}, we discuss a ``quantum Schur-Weyl duality''~\cite{mj2, ppm, mma}
between the valenced Temperley-Lieb algebra $\TL_\multii(\nu)$ and 
the Hopf algebra $U_q(\mathfrak{sl}_2)$. This duality endows certain tensor product representations 
of the latter algebra with a bimodule structure, where the two algebras $U_q(\mathfrak{sl}_2)$ and $\TL_\multii(\nu)$ have commuting actions.
From the above discussion, we know that the action of $\TL_\multii(\nu)$ is closely related to the monodromy of correlation function~\eqref{corrfunc2}.
In~\cite{fp2}, we discuss solution spaces of the PDE systems of BPZ type.
Finally, in~\cite{fp1} we use the results of~\cite{fp0, fp3, fp2} together with 
the results of the present article to prove claim of type~\ref{corrformulaClaim} for any multiindex $\multii$.

\subsection{Organization of this article} \label{OrganizationSec}

In section~\ref{DiagramAlgebraSect}, we define the 
valenced Temperley-Lieb algebra $\TL_\multii(\nu)$,
the principal concern of this article. We also introduce notation and present some key results. 
In detail, in section~\ref{WJLinkDiagSect} we define valenced link diagrams, tangles, link patterns, and link states, 
and the diagram spaces $\smash{\TL_\multii^\multiii}$ and $\smash{\LS_\multii}$. 
In the next section~\ref{CombSect}, we collect simple but useful observations of combinatorial nature.
In section~\ref{ValencedCompositionSec}, 
we define compositions of valenced diagrams. 
Section~\ref{ValTLdefSec} concerns the valenced Temperley-Lieb algebra $\TL_\multii(\nu)$.
In proposition~\ref{GeneratorPropTwo}, whose proof follows from~\cite[theorem~\red{1.1}]{fp0}, 
we present two minimal generating sets for this algebra.

In sections~\ref{StdModulesSect} and~\ref{FinalResultSect}, we focus on the representation theory of 
the valenced Temperley-Lieb algebra $\TL_\multii(\nu)$. 
Much of this depends on the symmetric, invariant bilinear form of the link state module $\smash{\LS_\multii}$. 
Detailed understanding of this bilinear form, and in particular its radical, is a major undertaking of this article. 
We define the bilinear form in section~\ref{BilinFormSec} via network evaluations. 
In section~\ref{LinkStateModSect}, we present fundamental properties of the standard 
modules $\smash{\LS_\multii\super{s}}$: e.g., proposition~\ref{GenLem2} says that 
the radical of the bilinear form on $\smash{\LS_\multii\super{s}}$ is its maximal submodule and 
the corresponding quotient module is simple (if not trivial).
We show later in section~\ref{TLSimpleModSect} that these are in fact all of the non-isomorphic simple $\TL_\multii (\nu)$-modules.
In section~\ref{FaithfulSect3}, we prove that the representation of $\TL_\multii (\nu)$ on the link state module $\LS_\multii$ 
is faithful if and only if the radical of the latter is trivial. 
We relate this representation to the Jacobson radical of $\TL_\multii(\nu)$ in section~\ref{SemiSect}.


In section~\ref{GramMatrixSect}, we study the Gram determinant 
of the bilinear form on the standard modules $\smash{\LS_\multii\super{s}}$.
Fundamental properties of the radical of $\smash{\LS_\multii\super{s}}$, 
such as its dimension, and fundamental properties of the Gram matrix, such as its nullity, 
are interdependent; understanding the latter gives useful information 
about the former. In section~\ref{ConformalBlocksSect}, we define ``trivalent link states" 
(which correspond to conformal blocks in CFT, via the ``spin-chain Coulomb gas map"~\cite{fp2}), 
and in section~\ref{ConformalBlocksProp} we disseminate their key properties. 
One of these properties, stated in proposition~\ref{IndOrthBasisLem}, is that 
the trivalent link states form an orthogonal basis if, for example, $q$ in~\eqref{fugacity} is not a root of unity. 
Then, in proposition~\ref{GramDetLem} in section~\ref{DetSect} we make use of the orthogonality property to 
give an explicit formula for the Gram determinant.
Finally, in section~\ref{RecursSect} we present recursions and useful formulas for the Gram determinants.

From the explicit formulas for the Gram determinant, 
it follows that the radical of the link state module $\LS_\multii$ is trivial if, 
for example, $q$ in~\eqref{fugacity} is not a root of unity. 
Thus, in section~\ref{RadicalSect}, we assume that $q$ is a root of unity, and
we determine a basis for and the dimension of the radical of each standard module 
$\smash{\LS_\multii\super{s}}$ (proposition~\ref{BigTailLem} and corollary~\ref{DnsLemAndRadDimCor},
and proposition~\ref{BigTailLem2} and corollary~\ref{DnsLemAndRadDimCor2}). 
Using these results, we determine in section~\ref{rofSect31} all values of $q$ for which the radical 
of $\smash{\LS_\multii\super{s}}$ is trivial (corollary~\ref{GridCor}). 
Finally, in section~\ref{rofSect32} we study cases where the radical of $\smash{\LS_\multii\super{s}}$ equals the entire space, a phenomenon outlawed as a precondition 
to most results presented in section~\ref{StdModulesSect}.

In the final section~\ref{FinalResultSect}, we present general 
results on the representation theory of the valenced Temperley-Lieb algebra $\TL_\multii(\nu)$.
The first section~\ref{RecapSec} is devoted to a remainder of basic concepts from the representation theory of algebras.
We determine the complete set of non-isomorphic simple $\TL_\multii(\nu)$-modules 
in proposition~\ref{SimpleModuleProp} in section~\ref{TLSimpleModSect}.
In theorem~\ref{BigSSTHM} in section~\ref{SemiSect},
we present several equivalent conditions for the semisimplicity of $\TL_\multii(\nu)$.
One of them is that the standard modules $\smash{\LS_\multii\super{s}}$ 
constitute the complete set of non-isomorphic simple $\TL_\multii(\nu)$-modules. 
In proposition~\ref{Jacobsonprop}, we identify the Jacobson radical of $\TL_\multii(\nu)$ as the kernel of 
the representation of $\TL_\multii(\nu)$ on the quotient of its link state module $\LS_\multii$
modulo its radical $\rad \LS_\multii$.

In the appendices, we give background and proofs for some technical results needed in this article. 
In appendix~\ref{TLRecouplingSect}, we present results from Temperley-Lieb recoupling theory~\cite{kl}. 
Then in appendix~\ref{AppWJ}, we discuss the relation of the valenced Temperley-Lieb algebra $\TL_\multii(\nu)$
to a subalgebra of the ordinary Temperley-Lieb algebra $\TL_n(\nu)$, the Jones-Wenzl algebra $\WJ_\multii(\nu)$.
We study this algebra in terms of generators and relations in our companion article~\cite{fp0}.
In appendix~\ref{RadicalAppendix}, we address a technical detail for section~\ref{GramMatrixSect}, 
concerning the definition of the trivalent link states at roots of unity.
In the last appendix~\ref{CategorySect}, we briefly discuss a categorical framework for diagram algebras.

\begin{center}
\bf Relation to previous work
\end{center}

In~\cite{gl, gl2}, J.~Graham and G.~Lehrer develop and use a general theory of cellular algebras,
a powerful category-theoretic approach for obtaining strong results about the representation theory 
of certain types of algebras, including the Temperley-Lieb algebra $\TL_n(\nu)$. 
In fact, the valenced Temperley-Lieb algebra $\TL_\multii(\nu)$ is also cellular, and we discuss this point of view in~\cite{fp0}.
However, the abstract theory of~\cite{gl} alone does not give full explicit information of the structure of the representations,
such as explicit bases for the radicals of the standard modules, 
nor does it seem to help in constructing the principal indecomposable modules. 

In the article~\cite{rsa} of D.~Ridout and Y.~Saint-Aubin, similar results are obtained with more concrete but related techniques,
following and motivated by the works of V.~Jones~\cite{vj}, L.~Kauffman~\cite{lk2}, G.~James and G.~Murphy~\cite{jm79}, 
P.~Martin~\cite{pm}, F.~Goodman and H.~Wenzl~\cite{gwe}, and B.~Westbury~\cite{bw}.
In particular, the authors identify explicitly all simple and principal indecomposable modules of the Temperley-Lieb algebra.
Certain specific central elements in $\TL_n(\nu)$ play a special role in the analysis of the non-semisimple case.
Unfortunately, finding such central elements for the valenced Temperley-Lieb algebras seems in general to be a difficult task.

In our work, we follow a concrete approach inspired by~\cite{rsa} and predecessors.
Combining ideas in this spirit  with explicit graphical calculus (termed ``Temperley-Lieb recoupling theory'') 
\`a la Kauffman and Lins~\cite{kl}, 
we provide elementary tools to understand the representation theory of the valenced Temperley-Lieb algebra $\TL_\multii(\nu)$. 
Furthermore, we establish explicit structural results about the representations and their radicals, such as specific bases for them.
It does not seem easy to obtain such results without our graphical approach, which makes, e.g., diagonalization of Gram matrices and 
explicit constructions of basis elements in the radicals tractable. 
Furthermore, in~\cite{fp0}, we find generators and relations for $\TL_\multii(\nu)$, and in~\cite{fp3},
we relate the graphical calculus of $\TL_\multii(\nu)$ to that for the Hopf algebra $U_q(\mathfrak{sl}_2)$~\cite{fk},
hence explicating a quantum Schur-Weyl duality (already well-known for the ordinary Temperley-Lieb algebra 
and the fundamental representations of $U_q(\mathfrak{sl}_2)$~\cite{mj2, ppm, mma}).

In recent work~\cite{ilz}, using cellular methods, K.~Iohara, G.~Lehrer, and R.~Zhang consider 
a semisimple quotient of the Temperley-Lieb algebra $\TL_n(\nu)$
when $q$ in~\eqref{MinPower} is a root of unity, 
and relate it to a fusion category of certain quantum $\mathfrak{sl}_2$-tilting modules.
They also prove a form of a quantum Schur-Weyl duality. 

We also mention another abstract but powerful approach for analyzing representation theory, known as 
``categorification.'' (See e.g.~\cite{bfk, fks, bck, fss, ss} and references therein.)
The rough idea is to associate to an algebra a category with certain properties and to use this abstract framework to study the representation theory
of this algebra. From this approach, very general results follow in an elegant way that avoids explicit calculations.

\begin{center}
\bf Acknowledgements
\end{center}

We are very grateful to K.~Kyt\"ol\"a for numerous discussions and encouragement,
and cordially thank J.~Bellet\^ete, A.~Langlois-R\'emillard, D.~Ridout, and Y.~Saint-Aubin 
for careful comments on the first version of the manuscript
that helped to improve the presentation.
We also thank P.~Di~Francesco, C.~Hongler, J.~Jacobsen, V.~Jones, R.~Kashaev, R.~Kedem, V.~Pasquier, D.~Radnell, 
and H.~Saleur for helpful discussions.
E.P. is supported by the ERC AG COMPASP, the NCCR SwissMAP, and the Swiss NSF,
and she also acknowledges the earlier support from the Vilho, Yrj\"o and Kalle V\"ais\"al\"a Foundation.
During this work, S.F. was supported by the
Academy of Finland grant number 288318, 
``Algebraic structures and random geometry of stochastic lattice models.''
Finally, E.P. wishes to acknowledge the kind hospitality of the Institute Mittag-Leffler 
and the Mathematisches Forschungsinstitut Oberwolfach 
during the preparation of the first version of this article.

\section{Diagram algebras} \label{DiagramAlgebraSect}

In this section, we present fundamental definitions and results concerning the valenced Temperley-Lieb
diagram algebra. In section~\ref{WJLinkDiagSect}, 
we define valenced link diagrams, tangles, link patterns, and link states.
In section~\ref{CombSect}, we collect key results about them, of a combinatorial nature, for use throughout this article. 
In sections~\ref{ValencedCompositionSec} and~\ref{ValTLdefSec}, we define bilinear maps that 
determine concatenation of valenced tangles with valenced tangles and valenced link states.
In particular, these determine a diagrammatic multiplication for the valenced Temperley-Lieb algebra $\TL_\multii(\nu)$,
as well as an action of it on its standard modules.


\subsection{Valenced tangles and link states} \label{WJLinkDiagSect}

First, we formally define valenced link diagrams, tangles, link patterns, and link states.  We denote
\begin{align} \label{PositiveIndexSetDef}
\bZpos^\# := \bZpos \cup \bZpos^2 \cup \bZpos^3 \cup \dotsm , 
\end{align}
and we let $\multii$ and $\multiii$ denote two multiindices with $\np_\multii$ and $\np_\multiii$ nonnegative integer entries respectively, 
\begin{align} \label{MultiindexNotation}
\multii = (\sIndex_1, \sIndex_2,\ldots, \sIndex_{\np_\multii}) \in \{ (0) \} \cup \bZpos^\#, \qquad \qquad \multiii = (p_1, p_2, \ldots, p_{\np_\multiii}) \in \{ (0) \} \cup \bZpos^\# , 
\end{align}
and such that $\Summed_\multii + \Summed_\multiii = 0 \Mod 2$, where $\Summed_\multii$ and $\Summed_\multiii$ denote the respective sums 
\begin{align}\label{ndefn} 
\Summed_\multii := \sIndex_1 + \sIndex_2 + \cdots + \sIndex_{\np_\multii}, \qquad \qquad \Summed_\multiii := p_1 + p_2 + \cdots + p_{\np_\multiii}. 
\end{align}

We define a $(\multii, \multiii)$-\emph{valenced link diagram} to be any collection of the following planar geometric objects:

\begin{enumerate}[leftmargin=*] 
\itemcolor{red}
\item two vertical lines, 
\item $\np_\multii$ distinct marked points, called \emph{nodes}, on the left line and $\np_\multiii$ nodes on the right line, and
\item $\frac{1}{2}(\Summed_\multii + \Summed_\multiii)$ 
planar curves, called \emph{links}, that may intersect themselves or each other only at their endpoints and that are 
arranged in such a way that $\sIndex_i$ (resp.~$p_j$) endpoints reside at the $i$:th left (resp.~$j$:th right) node.
Each link is determined only up to a homotopy that preserves its endpoints.
\end{enumerate}
We call the $i$:th entry $\sIndex_i$ of $\multii$ (resp.~$j$:th entry $p_j$ of $\multiii$) the \emph{valence} of the $i$:th left (resp.~$j$:th right) node.  
As in~\eqref{boxval}, we illustrate a node with valence $s$ as a small box that sits over the node itself, with a cable of size $s$ exiting it:
\begin{align}
\begin{rcases}
\vcenter{\hbox{\includegraphics[scale=0.275]{e-valenced_node2.pdf}}} \quad
\end{rcases} s
\quad 
\qquad \Longrightarrow \qquad 
\begin{rcases}
\vcenter{\hbox{\includegraphics[scale=0.275]{e-valenced_node1.pdf}}} \quad
\end{rcases} \; s
\quad 
\qquad \Longrightarrow \qquad 
\quad \vcenter{\hbox{\includegraphics[scale=0.275]{e-valenced_box.pdf}  .}}
\end{align} 
We sort the links of each valenced link diagram into two types, called \emph{crossing links} 
and \emph{turn-back links},
and we define a \emph{loop link} to be a link with both endpoints at the same node:
\begin{align} 
\vcenter{\hbox{\includegraphics[scale=0.275]{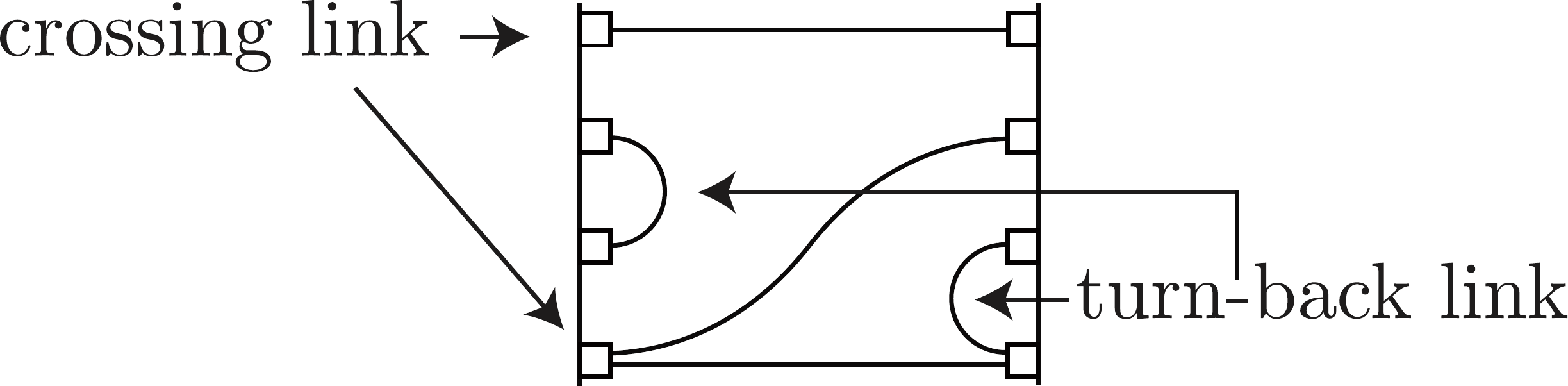}, }}
\qquad\qquad\qquad\qquad
\vcenter{\hbox{\includegraphics[scale=0.275]{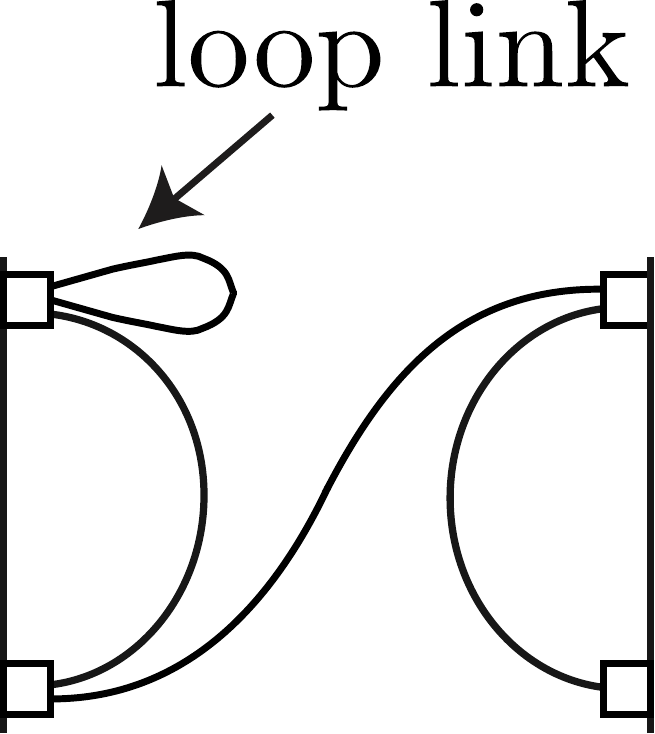} .}}
\end{align}
For any two multiindices $\smash{\multii,\multiii \in \{ (0) \} \cup \bZpos^\#}$, we set 
\begin{align} 
\label{ValLinkDiagramDef0}
\preLD_\multii^\multiii := & \; \text{the collection of all $(\multii,\multiii)$-valenced link diagrams}, \\
\label{ValGenTLDef0}
\preTL_\multii^\multiii := & \; \Span \preLD_\multii^\multiii ,
\end{align}
and we call an element of $\smash{\preTL_\multii^\multiii}$ a $(\multii,\multiii)$\emph{-valenced tangle}.  
Hence, $\smash{\preLD_\multii^\multiii}$ is a basis for $\smash{\preTL_\multii^\multiii}$.

Next, we define a $(\multii, s)$-\emph{valenced link pattern} to be any planar geometric object formed by
\begin{enumerate}[leftmargin=*]
\itemcolor{red}
\item dividing a $(\multii,\multiii)$-valenced link diagram with exactly $s$ crossing links in half, 
\item discarding the right half, and 
\item rotating the left half by $\pi/2$ radians.
\end{enumerate} 
The division breaks the $s$ crossing link in half, resulting in a valenced link pattern with $s$ defects and a number of links. 
For example, we have
\begin{align} 
\vcenter{\hbox{\includegraphics[scale=0.275]{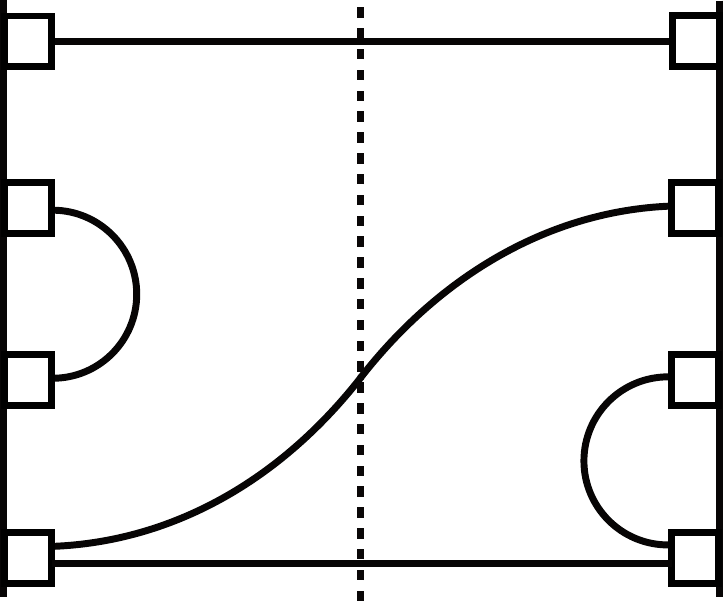}}} \qquad \qquad & \longmapsto \qquad \qquad
\vcenter{\hbox{\includegraphics[scale=0.275]{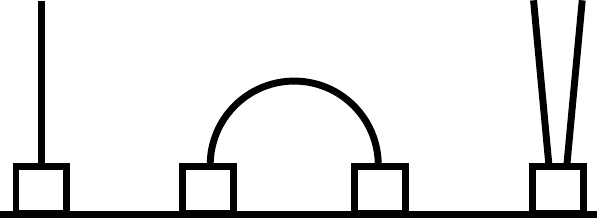}}} \\[1em]
\nonumber
\text{$(\multii, \multiii)$-valenced link diagram} \qquad \qquad & \longmapsto \qquad \qquad \text{$(\multii, s)$-valenced link pattern}.
\end{align}
For any multiindex $\multii \in \{ (0) \} \cup \smash{\bZpos^\#}$ and $s \in \bZnn$ 
such that there exists a $(\multii,s)$-valenced link pattern, 
we set
\begin{align} 
\label{ValLinkPattDef0}
\preLP_\multii\super{s} := & \; \text{the collection of all $(\multii,s)$-valenced link patterns}, \\
\label{ValLinkStateDef0}
\preLS_\multii\super{s} := & \; \Span \preLP_\multii\super{s} ,
\end{align}
and we call an element of $\smash{\preLS_\multii\super{s}}$ a \emph{$(\multii,s)$-valenced link state}.

Rather than working with these vector spaces, we exclude tangles and link states that contain loop links.  
To formalize this, we define the following equivalence relations on the latter spaces $\smash{\preTL_\multii^\multiii}$ and $\smash{\preLS_\multii\super{s}}$:
\begin{align}
&T \sim S \qquad \Longleftrightarrow \qquad T = S + K , \quad 
&&\left\{\parbox{6.8cm}{\textnormal{where $K \in \smash{\preTL_\multii^\multiii}$ is a linear combination} \\ 
\textnormal{of link diagrams, each having a loop link}}\right\}, \\
&\alpha \sim \beta \qquad \Longleftrightarrow \qquad \alpha = \beta + \gamma , \quad 
&&\left\{\parbox{6.5cm}{\textnormal{where $\gamma \in \smash{\preLS_\multii\super{s}}$ is a linear combination} \\ 
\textnormal{of link patterns, each having a loop link}}\right\} ,
\end{align}
and we set
\begin{align}
\LD_\multii^\multiii := & \; \preLD_\multii^\multiii / \sim ,
 \qquad \qquad \TL_\multii^\multiii := \preTL_\multii^\multiii / \sim \; = \; \Span \LD_\multii^\multiii  , \\
\LP_\multii\super{s} :=  & \; \preLP_\multii\super{s} / \sim ,
 \qquad \qquad \;\, \LS_\multii\super{s} := \preLS_\multii\super{s} / \sim \;\; = \; \Span \LP_\multii\super{s} .
\end{align}
We identify the $(\multii, \multiii)$-valenced tangles 
and the $(\multii, s)$-valenced link states with their respective equivalence classes: 
\begin{align} 
\label{EquivClassTangle}
\vcenter{\hbox{\includegraphics[scale=0.275]{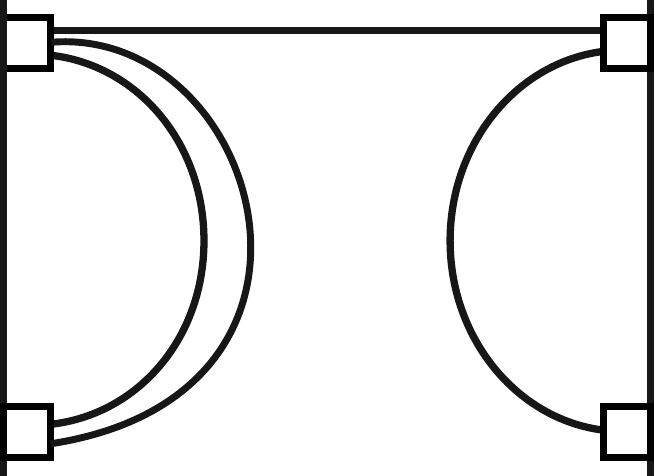}}}
\qquad \qquad & \longleftrightarrow \qquad \qquad
\left[ \quad
\vcenter{\hbox{\includegraphics[scale=0.275]{e-Tangle_equivalence_class.pdf}}} \quad + \quad
\vcenter{\hbox{\includegraphics[scale=0.275]{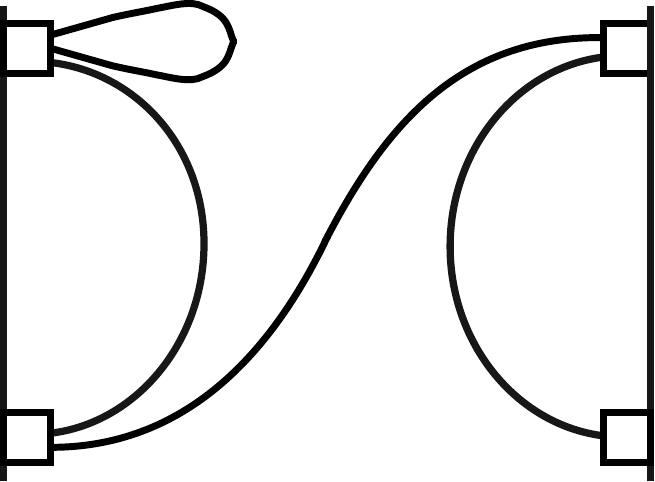}}} 
\quad \right]_\sim \\
\label{EquivClassLinkPattern}
\vcenter{\hbox{\includegraphics[scale=0.275]{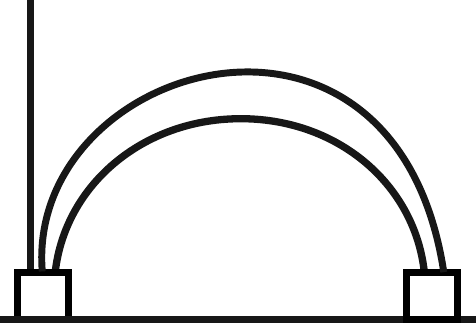}}} 
\qquad \qquad  &\longleftrightarrow \qquad \qquad
\left[ \quad
\vcenter{\hbox{\includegraphics[scale=0.275]{e-LinkState_equivalence_class.pdf}}} \quad + \quad
\vcenter{\hbox{\includegraphics[scale=0.275]{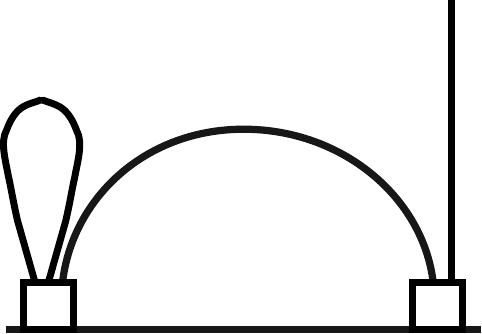}}} 
\quad \right]_\sim 
\end{align}
In particular, we identify all of the $(\multii, \multiii)$-valenced tangles and $(\multii, s)$-valenced link states which contain loop links with the zero tangle
and the zero link state, respectively:
\begin{align} \label{EquivClassTangleIsZero}
\left[ \quad \vcenter{\hbox{\includegraphics[scale=0.275]{e-Tangle_equivalence_class_looplink.pdf}}} 
\quad \right]_\sim 
\quad  \longleftrightarrow \quad 0 
\qquad \qquad \text{and} \qquad \qquad
\left[ \quad \vcenter{\hbox{\includegraphics[scale=0.275]{e-LinkState_equivalence_class_looplink.pdf}}} 
\quad \right]_\sim 
\quad \longleftrightarrow \quad 0 .
\end{align}

It is sometimes useful to distinguish tangles with given number of crossing links:
\begin{align} 
\label{LDs} \LD_\multii^{\multiii; \scaleobj{0.85}{(s)}} &:= \{ \text{all valenced link diagrams in 
$\smash{\LD_\multii^\multiii}$ with exactly $s$ crossing links} \}, \\
\label{TLs} \TL_\multii^{\multiii; \scaleobj{0.85}{(s)}} &:= \Span \LD_\multii^{\multiii; \scaleobj{0.85}{(s)}} . 
\end{align}
With these definitions, we have the $s$-grading
\begin{align}\label{WJDirSum} 
\TL_\multii^{\multiii} = \bigoplus_{s \, \in \, \DefectSet_\multii^{\multiii}} 
\smash{\TL_\multii^{\multiii; \scaleobj{0.85}{(s)}}} ,
\end{align}
where $\smash{\DefectSet_\multii^{\multiii}}$ denotes the set of all integers $s \geq 0$ such that 
$\smash{\TL_\multii^{\multiii; \scaleobj{0.85}{(s)}}}$ is not empty.
Also, with $\DefectSet_\multii$ denoting the set of all integers $s \geq 0$ such that 
the space $\smash{\LS_\multii\super{s}}$ is nontrivial, we define
\begin{align}
\label{LSDirSum2} 
\LP_\multii := \bigcup_{s \, \in \, \DefectSet_\multii} \LP_\multii\super{s} 
\qquad \text{and} \qquad 
\LS_\multii :=  \bigoplus_{s \, \in \, \DefectSet_\multii} \LS_\multii\super{s} .
\end{align}

For the respective cases that all nodes 
have valence one or there are no nodes at all, we denote 
\begin{align} 
\OneVec{n} := (\underbrace{1,1,\ldots,1}_{\text{$n$ times}}) , \quad \text{for all $n \in \bZpos$}
\qquad \qquad \text{and} \qquad \qquad \OneVec{0} := (0) , \quad \text{for $n = 0$}.
\end{align}
In both cases, we omit the arrow over the multiindex $\OneVec{n}$ whenever it appears as a superscript or subscript. 
We also notice that if $\multii =  \OneVec{n}$ and $\multiii = \OneVec{m}$ for some $n, m \in \bZnn$, then the links join the nodes pairwise, so 
\begin{align} 
\LD_n^m = \preLD_n^m, \qquad  \TL_n^m = \preTL_n^m, \qquad \LP_n\super{s} = \preLP_n\super{s}, 
\qquad \text{and} \qquad \LS_n\super{s} = \preLS_n\super{s} .
\end{align}
We call elements of $\smash{\LD_n^m}$ and $\smash{\TL_n^m}$  \emph{$(n, m)$-link diagrams} and \emph{$(n,m)$-tangles}, respectively. 
Elements of $\smash{\LP_n\super{s}}$ and $\smash{\LS_n\super{s}}$ are 
called ``$(n,s)$-link patterns" and ``$(n,s)$-link states," 
and that $(n,n)$-link diagrams and $(n,n)$-tangles are called ``$n$-link diagrams" and ``$n$-tangles,'' respectively.

\subsection{Basic combinatorial properties} \label{CombSect}

With valenced link diagrams and link patterns defined, we next study their combinatorial properties
(excluding all link diagrams and link patterns containing loop links). 
First, for all multiindices $\multii \in \smash{\{\OneVec{0}\}} \cup \smash{\bZpos^\#}$, we define
\begin{align}\label{DefSetDefn2} 
\DefectSet_\multii : = \big\{ s \in \bZnn \,\big|\, \text{there exists a $(\multii, s)$-valenced link pattern $\alpha \in \LP_\multii\super{s}$}\big\} . 
\end{align}
It is also useful to extend the definition of $\DefectSet_\multii$ to include all multiindices $\multii$ with some zero entries: denoting 
\begin{align} \label{NonNegIndexSetDef}
\bZnn^\# := \bZnn \cup \bZnn^2 \cup \bZnn^3 \cup \dotsm , 
\end{align}
we recursively define $\DefectSet_\multii$ for any multiindex $\multii \in \smash{\bZnn^\#}$ to be 
the set $\DefectSet_{\vartheta}$, where $\vartheta$ is any multiindex obtained by dropping a zero entry from $\multii$. 
In the special case of $\multii = \OneVec{n}$, 
definition~\eqref{DefSetDefn2} for $\DefectSet_n$ agrees with~\eqref{DefectSet} given in section~\ref{TLSecIntro}.

By breaking links into pairs of defects, it becomes evident that there are integers 
$\smin(\multii), \smax(\multii) \geq 0$ such that
\begin{align}\label{DefSet2} 
\DefectSet_\multii = \{ \smin(\multii), \smin(\multii) + 2, \; \ldots, \; \smax(\multii) \} . 
\end{align}
In particular, to determine this set, it suffices to find its extreme values.  
We establish this in lemmas~\ref{SpecialDefLem} and~\ref{SminLem}.

\begin{lem} \label{SpecialDefLem} 
Let $r,s,t \in \bZnn$.  We have 
\begin{align}\label{SpecialDefSet} 
\DefectSet\sub{s} = \{s\} \qquad \textnormal{and} \qquad \DefectSet\sub{r,t} = \{ |r-t| , |r-t| + 2, \ldots, r+t\} . 
\end{align}
Furthermore, we have the symmetry relations
\begin{align} \label{SameDefSet} 
s \in \DefectSet\sub{r,t} \qquad \Longleftrightarrow \qquad r \in \DefectSet\sub{t,s} \qquad \Longleftrightarrow \qquad t \in \DefectSet\sub{s,r} . 
\end{align}
\end{lem}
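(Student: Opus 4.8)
The plan is to examine directly, for a single node and for a pair of nodes, which link patterns can occur once loop links are forbidden, and then to read off the symmetry from the resulting inequalities. \emph{One node.} A $((s),s')$-valenced link pattern has a single node carrying $s$ link-endpoints; each incident link either has its other endpoint at the same node --- a loop link, excluded by definition --- or it has a defect as its other endpoint. Hence all $s$ endpoints at the node belong to crossing links, so $s' = s$, while conversely the ``broom'' pattern with $s$ defects and no turn-back links realizes $s' = s$. Thus $\DefectSet\sub{s} = \{s\}$. Via the recursive convention for multiindices containing zero entries, this also disposes of the degenerate cases $r = 0$ or $t = 0$ of the two-node formula.

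\emph{Two nodes.} Fix a $((r,t),s)$-valenced link pattern and let $k$ be the number of turn-back links joining its two nodes. Since loop links are excluded, the remaining $r-k$ endpoints at the first node and $t-k$ endpoints at the second node each belong to a crossing link, whence $s = (r-k)+(t-k) = r+t-2k$ and necessarily $0 \le k \le \min(r,t)$. Conversely, for every such $k$ one constructs an explicit planar pattern: place the $k$ turn-back arcs innermost (nested) between the two nodes, and send the surviving endpoints straight off to distinct defects; since the nodes lie on one line and the defects may be ordered freely along it, no crossings arise. Therefore $\DefectSet\sub{r,t} = \{\, r+t-2k : 0 \le k \le \min(r,t) \,\} = \{\, |r-t|, |r-t|+2, \ldots, r+t \,\}$, which in particular identifies $\smin$ and $\smax$ in~\eqref{DefSet2} for these multiindices.

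\emph{Symmetry.} By the two-node formula, $s \in \DefectSet\sub{r,t}$ if and only if $|r-t| \le s \le r+t$ and $r+s+t$ is even. The double inequality $|r-t| \le s \le r+t$ is equivalent to the three simultaneous triangle inequalities $r \le s+t$, $s \le r+t$, $t \le r+s$, and this system, together with the parity constraint $r+s+t \in 2\bZ$, is manifestly unchanged under every permutation of $(r,s,t)$. The equivalences $s \in \DefectSet\sub{r,t} \Longleftrightarrow r \in \DefectSet\sub{t,s} \Longleftrightarrow t \in \DefectSet\sub{s,r}$ follow immediately.

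I expect the only point requiring genuine care is the existence half of the two-node case --- exhibiting, for each admissible $k$, an actual planar link pattern rather than just a consistent endpoint count --- but the nested-arcs construction above settles this uniformly, and the remainder, including the book-keeping for vanishing valences, is routine.
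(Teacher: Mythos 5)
Your proof is correct, and it follows exactly the route the paper has in mind: the paper's own proof simply declares \eqref{SpecialDefSet} ``a simple combinatorial exercise'' and derives \eqref{SameDefSet} from it, which is precisely what you carry out in detail (counting turn-back links, the nested-arcs realization, and the triangle-inequality-plus-parity reformulation). No gaps; your write-up just supplies the details the paper omits.
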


\begin{proof} 
The proof of~\eqref{SpecialDefSet} is a simple combinatorial exercise. 
Then~\eqref{SameDefSet} immediately follows from~\eqref{SpecialDefSet}.
\end{proof}

In the next lemma, we give a recursion formula for the set $\DefectSet_\multii$. 
To state the recursion, we use the following notation, frequently appearing throughout this article:
\begin{align}\label{hats} 
\multii = (\sIndex_1, \sIndex_2, \ldots, \sIndex_{\np_\multii}) 
\qquad  \Longrightarrow  \qquad 
\lds := (\sIndex_1, \sIndex_2, \ldots, \sIndex_{\np_\multii-1}) \quad \textnormal{and} \quad t := \sIndex_{\np_\multii} .
\end{align}

\begin{lem} \label{SetRecursLem} 
Let $\multii \in \smash{\bZnn^\#}$.
With notation~\eqref{hats}, we have the recursion
\begin{align} \label{SetRecurs} 
\DefectSet_\multii = \bigcup_{r \, \in \, \DefectSet_{\lds}} \DefectSet\sub{r,t}. 
\end{align}
\end{lem}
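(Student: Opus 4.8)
The plan is to prove the recursion~\eqref{SetRecurs} by a double inclusion, reading the last node of the multiindex $\multii$ off any valenced link pattern. Fix $\multii = (\sIndex_1,\ldots,\sIndex_{\np_\multii})$ with $\lds = (\sIndex_1,\ldots,\sIndex_{\np_\multii-1})$ and $t = \sIndex_{\np_\multii}$ as in~\eqref{hats}. The key geometric idea is that in any $(\multii,s)$-valenced link pattern, we can separate off the last node (of valence $t$) with a vertical cut, so that the pattern decomposes into: a $(\lds,r)$-valenced link pattern on the nodes $1,\ldots,\np_\multii-1$ together with the ``new'' defects coming out of the cut, glued against a link pattern that connects those $r$ new defect-endpoints and the $t$ endpoints at the last node into $s$ overall defects. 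The number $r$ of links crossing the cut is a nonnegative integer, and of course $r \in \DefectSet_{\lds}$ (since what lies to the left of the cut, with its $r$ loose ends, is a valenced link pattern on $\lds$). The right-hand piece is a $(\,(r,t),s\,)$-valenced link pattern, so $s \in \DefectSet\sub{r,t}$. This shows $\DefectSet_\multii \subseteq \bigcup_{r \in \DefectSet_{\lds}} \DefectSet\sub{r,t}$.

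For the reverse inclusion, suppose $r \in \DefectSet_{\lds}$ and $s \in \DefectSet\sub{r,t}$. Then there is a $(\lds,r)$-valenced link pattern $\beta$, and, by Lemma~\ref{SpecialDefLem} (the $\DefectSet\sub{r,t}$ case), a $(\,(r,t),s\,)$-valenced link pattern realizing $s$; concatenating the latter onto the $r$ defects of $\beta$ and onto a fresh node of valence $t$ produces a $(\multii,s)$-valenced link pattern, provided no loop links are created. Here one must be slightly careful: gluing could in principle force two endpoints of a single link to land at the same node. But because the gluing introduces the last node only as the valence-$t$ node of the $\DefectSet\sub{r,t}$-pattern — and that pattern by construction has no loop links — and because $\beta$ itself is loop-link-free, the concatenation is loop-link-free as well. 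Hence a $(\multii,s)$-valenced link pattern exists, i.e. $s \in \DefectSet_\multii$.

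A cleaner alternative, which I would actually use to keep the argument short, is to reduce everything to the $\OneVec{n}$ case. Unpacking a valence-$s$ node into $s$ valence-one nodes (as in~\eqref{boxval}) identifies $\DefectSet_\multii$ with the set of $s$ for which there is an $(\Summed_\multii, s)$-link pattern on $\Summed_\multii = \sIndex_1 + \cdots + \sIndex_{\np_\multii}$ ordinary nodes whose links, when a link joins two nodes sitting under the same box, would be forbidden — but since loop links only arise from endpoints at a common node and the decomposition above never pairs two new cut-endpoints to each other in a way creating a loop, the combinatorics reduces to the standard fusion rule for $\bigotimes$ of $\mathfrak{sl}_2$ representations. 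Concretely, $\DefectSet\sub{r,t}$ is exactly the Clebsch--Gordan range $\{|r-t|, |r-t|+2,\ldots, r+t\}$, and iterating gives~\eqref{SetRecurs} by associativity of the fusion of the last factor. In either approach the one genuine subtlety — and the step I expect to take the most care — is verifying that the gluing in the ``$\supseteq$'' direction genuinely avoids loop links, since that constraint is what distinguishes $\LP_\multii\super{s}$ from $\preLP_\multii\super{s}$; everything else is the routine bookkeeping that Lemma~\ref{SpecialDefLem} already packages.
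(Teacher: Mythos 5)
Your main argument—cutting off the last valence-$t$ node to decompose any $(\multii,s)$-valenced link pattern into a $(\lds,r)$-pattern together with an $((r,t),s)$-pattern, and conversely inserting/gluing a $(\lds,r)$-pattern to get the reverse inclusion—is exactly the paper's proof of this lemma, and the loop-link subtlety you flag in the gluing direction is automatic: every element of $\LP\sub{r,t}\super{s}$ is by definition loop-link-free, so all of its links join the two distinct nodes and the concatenation cannot create a loop link. The quantum-group ``fusion rule'' alternative you sketch is not developed enough to stand on its own, but your primary argument is correct and coincides with the paper's.
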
 

\begin{proof}
By dropping all of the zero entries from $\multii$,
we may assume that $\multii \in \smash{\bZpos^\#}$.
Then, for each $s \in \DefectSet_\multii$, we may write any valenced link pattern $\alpha \in \smash{\LP_\multii\super{s}}$ in the form
\begin{align}\label{BoxLinkPatt0} 
\vcenter{\hbox{\includegraphics[scale=0.275]{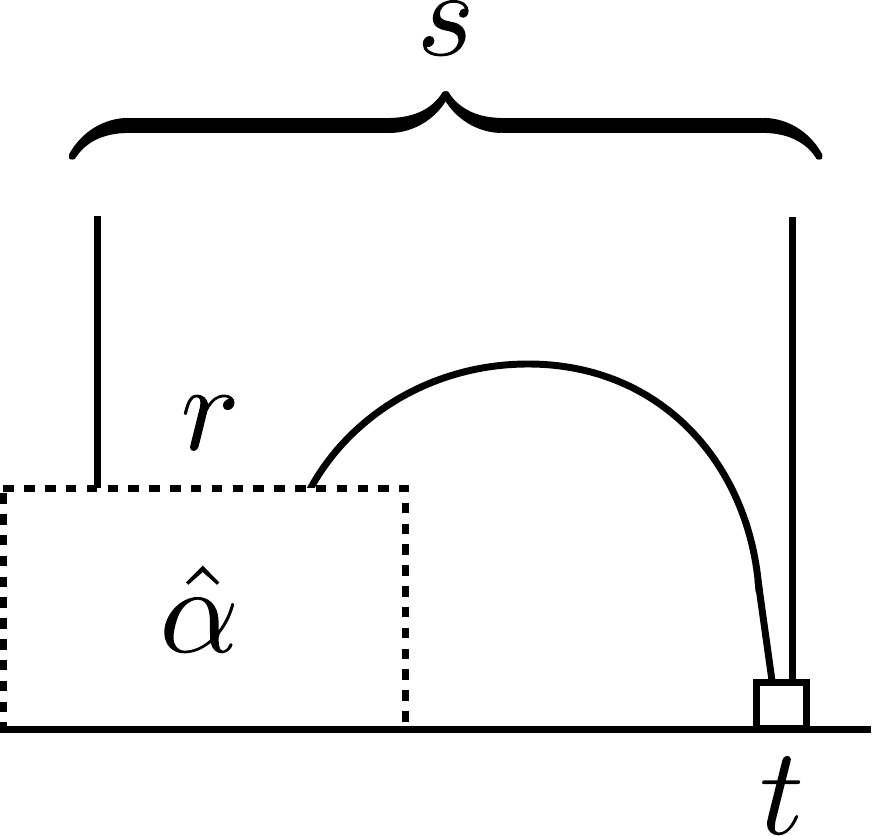} ,}}
\end{align}
for a unique valenced link pattern $\hat{\alpha} \in \smash{\LP_{\lds}}$ that depends on $\alpha$. 
With $r$ defects leaving $\hat{\alpha}$, we must have $r \in \DefectSet_{\lds}$.  
Furthermore, after removing $\hat{\alpha}$ from this valenced link pattern, we obtain the simpler $((r,t),s)$-valenced link pattern
\begin{align} 
\vcenter{\hbox{\includegraphics[scale=0.275]{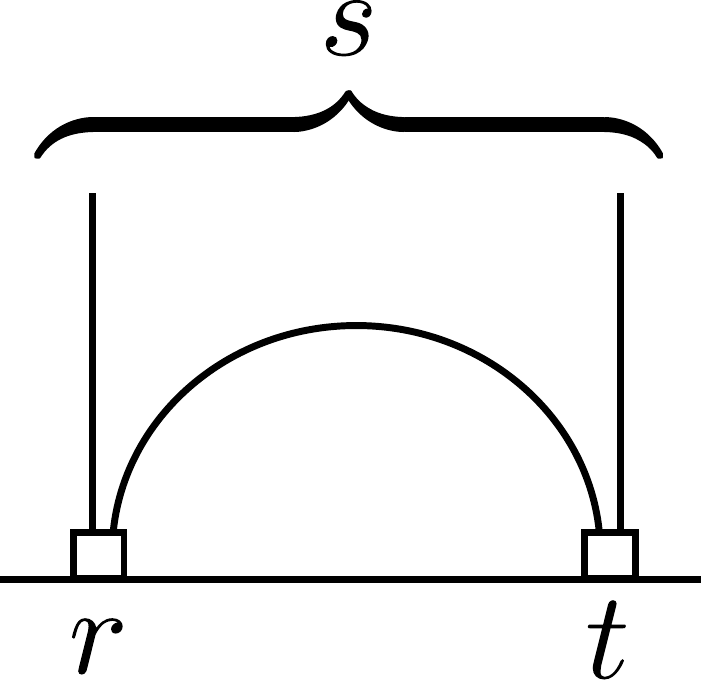} ,}}
\end{align}
whose existence implies that $s \in \DefectSet\sub{r,t}$.  Altogether, we thus have
\begin{align} 
\Big\{
s \in \DefectSet_\multii \quad \Longrightarrow \quad \text{$s \in \DefectSet\sub{r,t}$ for some $r \in \DefectSet_\lds$,} 
\quad \Longrightarrow \quad 
s \in \bigcup_{r \, \in \, \DefectSet_{\lds}} \DefectSet\sub{r,t} 
\Big\}
\qquad \Longrightarrow \qquad \DefectSet_\multii \subset \bigcup_{r \, \in \, \DefectSet_{\lds}} \DefectSet\sub{r,t}. 
\end{align}

On the other hand, let $s \in \DefectSet\sub{r,t}$ for some $r \in \DefectSet_\lds$. 
Then, insertion of a valenced link pattern $\hat{\alpha} \in \smash{\LP_{\lds}\super{r}}$
into~\eqref{BoxLinkPatt0} determines a unique valenced link pattern $\alpha \in \smash{\LP_\multii\super{s}}$. 
This shows that $s \in \DefectSet_\multii$, so we have
\begin{align} 
\Big\{
s \in \bigcup_{r \, \in \, \DefectSet_{\lds}} \DefectSet\sub{r,t} \quad \Longrightarrow \quad s \in \DefectSet_\multii 
\Big\}
\qquad \Longrightarrow \qquad 
\bigcup_{r \, \in \, \DefectSet_{\lds}} \DefectSet\sub{r,t} \subset \DefectSet_\multii . 
\end{align}
This finishes the proof.
\end{proof}

Now we are ready to determine 
$\DefectSet_\multii$. 
We split $\multii$ into two parts thus:
\begin{align}\label{cutmultii} 
\multii = (\sIndex_1, \sIndex_2, \ldots, \sIndex_{\np_\multii}) 
\qquad \qquad \Longrightarrow \qquad \qquad 
\lds_i := (\sIndex_1, \sIndex_2, \ldots, \sIndex_i) \qquad \text{and} 
\qquad \smash{\fds}_i := (\sIndex_i, \sIndex_{i+1}, \ldots, \sIndex_{\np_\multii} ). 
\end{align}

\begin{lem} \label{SminLem}
Let $\multii \in \smash{\bZnn^\#}$.
With notation~\eqref{cutmultii}, the following hold:
\begin{enumerate}
\itemcolor{red}
\item \label{itsmin1} We have 
\begin{align} \label{smaxeq} 
\smax(\multii) = \Summed_\multii \overset{\eqref{ndefn}}{:=} \sIndex_1 + \sIndex_2 + \dotsm + \sIndex_{\np_\multii}.
\end{align}
\item \label{itsmin2} For each $i \in \{1,2,\ldots, \np_\multii - 1\}$, we have the recursion
\begin{align} \label{RecuFormula} 
\smin(\lds_1) = \sIndex_1, \qquad \qquad 
\smin(\lds_{i+1}) = 
\begin{cases} \hphantom{(} \smin(\lds_i) - \sIndex_{i+1}, & \sIndex_{i+1} \leq \smin(\lds_i), \\
(\smin(\lds_i) - \sIndex_{i+1})  \; \textnormal{mod} \; 2, & \smin(\lds_i) < \sIndex_{i+1} < \smax(\lds_i), \\
\hphantom{(} \sIndex_{i+1} - \smax(\lds_i), & \smax(\lds_i) \leq \sIndex_{i+1}. 
\end{cases}
\end{align}
In particular, with 
$\smin(\multii) = \smin(\lds_{\np_\multii})$, this recursion formula with $i = \np_\multii - 1$ determines $\smin(\multii)$. 
\end{enumerate}
\end{lem}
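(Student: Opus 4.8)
The plan is to prove item~\ref{itsmin1} by a direct endpoint count, and item~\ref{itsmin2} by induction on $i$, feeding in the recursion of Lemma~\ref{SetRecursLem} together with the explicit description of $\DefectSet\sub{r,t}$ from Lemma~\ref{SpecialDefLem}.

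\emph{Item~\ref{itsmin1}.} In any $(\multii,s)$-valenced link pattern, the $i$:th node hosts exactly $\sIndex_i$ link endpoints, so the pattern has $\Summed_\multii$ endpoints altogether; each turn-back link uses two of these endpoints and each defect uses one, whence $\Summed_\multii = 2\ell + s$ with $\ell$ the number of links. In particular $s \leq \Summed_\multii$, and equality is realized by the (loop-link-free) pattern in which every endpoint is a defect, so $\smax(\multii) = \Summed_\multii$. The same count shows $s \equiv \Summed_\multii \Mod{2}$ for every $s \in \DefectSet_\multii$, a parity fact I will reuse below.

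\emph{Item~\ref{itsmin2}.} The base case follows from~\eqref{SpecialDefSet}, which gives $\DefectSet\sub{\sIndex_1} = \{\sIndex_1\}$ and hence $\smin(\lds_1) = \sIndex_1$. For the inductive step, apply Lemma~\ref{SetRecursLem} to the multiindex $\lds_{i+1}$ (for which notation~\eqref{hats} reads $\lds = \lds_i$ and $t = \sIndex_{i+1}$), followed by Lemma~\ref{SpecialDefLem}:
\begin{align*}
\DefectSet_{\lds_{i+1}} \;=\; \bigcup_{r \,\in\, \DefectSet_{\lds_i}} \DefectSet\sub{r,\sIndex_{i+1}}
\;=\; \bigcup_{r \,\in\, \DefectSet_{\lds_i}} \{\, |r-\sIndex_{i+1}|,\ |r-\sIndex_{i+1}|+2,\ \ldots,\ r+\sIndex_{i+1} \,\} .
\end{align*}
Taking the minimum of both sides, $\smin(\lds_{i+1}) = \min_{r \in \DefectSet_{\lds_i}} |r - \sIndex_{i+1}|$, where by~\eqref{DefSet2} and the induction hypothesis $r$ ranges over the arithmetic progression $\DefectSet_{\lds_i} = \{ \smin(\lds_i), \smin(\lds_i)+2, \ldots, \smax(\lds_i) \}$, with top term $\smax(\lds_i) = \Summed_{\lds_i}$ by item~\ref{itsmin1}. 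Minimizing $|r - \sIndex_{i+1}|$ over this progression produces exactly the three cases of~\eqref{RecuFormula}: if $\sIndex_{i+1} \leq \smin(\lds_i)$, the closest term is $\smin(\lds_i)$; if $\sIndex_{i+1} \geq \smax(\lds_i)$, it is $\smax(\lds_i)$; and if $\smin(\lds_i) < \sIndex_{i+1} < \smax(\lds_i)$, then $\sIndex_{i+1}$ either is a term of the progression or lies in one of its length-$2$ gaps, so the closest term lies at distance $0$ or $1$ — and the parity of the progression (item~\ref{itsmin1}) pins this distance to $(\smin(\lds_i) - \sIndex_{i+1}) \Mod{2}$. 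Since $\lds_{\np_\multii} = \multii$, taking $i = \np_\multii - 1$ computes $\smin(\multii)$.

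I expect the only delicate point to be the middle case of~\eqref{RecuFormula}: one must check that $\DefectSet_{\lds_i}$ genuinely contains a term within distance $1$ of $\sIndex_{i+1}$, so that the claimed minimum is both attained and not exceeded. This uses that the progression has common difference $2$, that its endpoints strictly bracket $\sIndex_{i+1}$, and that all its terms share the parity of $\Summed_{\lds_i}$ as established in item~\ref{itsmin1}. Everything else is routine bookkeeping.
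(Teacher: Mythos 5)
Your proof is correct and follows essentially the same route as the paper: item~\ref{itsmin1} via the all-defects pattern (with the endpoint count making the upper bound explicit), and item~\ref{itsmin2} by combining lemma~\ref{SetRecursLem} with~\eqref{SpecialDefSet} to reduce to minimizing $|r-\sIndex_{i+1}|$ over the arithmetic progression~\eqref{DefSet2}. The paper simply compresses your three-case (and parity) analysis into the remark that the minimization "simplifies to the right side of~\eqref{RecuFormula}," so your write-up just spells out what the paper leaves implicit.
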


\begin{proof} 
Item~\ref{itsmin1} immediately follows from considering the $(\multii,\Summed_\multii)$-valenced link state with only defects and no links. 
For item~\ref{itsmin2}, we observe that by lemma~\ref{SetRecursLem}, we have
\begin{align}
\smin( \lds_{i+1} ) 
\overset{\eqref{SetRecurs}}{=} \min_{ r \, \in \, \DefectSet_{\lds_i}} \min \DefectSet\sub{r, \sIndex_{i+1}} 
\overset{\eqref{SpecialDefSet}}{=} \min_{ r \, \in \, \DefectSet_{\lds_i}} | r - \sIndex_{i+1} |. 
\end{align}
Because $\DefectSet_{ \lds_i}$ has the form~\eqref{DefSet2}, this result simplifies to the right side of~\eqref{RecuFormula}.
\end{proof}

We note that using the form~\eqref{DefSet2} for $\DefectSet_\multii$, it is straightforward to show that lemma~\ref{SminLem} implies that
\begin{align} \label{subset} 
\DefectSet_\multii \subset \DefectSet_{\Summed_\multii}. 
\end{align}

We conclude our investigation of $\smin(\multii)$ with the following lemma, 
which we also need in section~\ref{rofSect32}.

\begin{lem} \label{DefectLem} 
If $\alpha \in \smash{\LP_\multii\super{s}}$ with $s = \smin(\multii)$, then 
\begin{enumerate}
\itemcolor{red}
\item \label{DefIt1} all $\smin(\multii)$ defects of $\alpha$ attach to a common node of $\alpha$, and
\item \label{DefIt2} if all defects of $\alpha$ attach to its $i$:th node, then $\smin(\multii) < \sIndex_i$ if $\np_\multii > 1$, and $\smin(\multii) = \sIndex_1$ if $\np_\multii = 1$.
\end{enumerate}
In particular, items~\ref{DefIt1} and~\ref{DefIt2} together imply that
\begin{align} \label{sminineq} 
\smin(\multii) \quad 
\begin{cases} 
< \max \multii, & \np_\multii > 1, \\ = \max \multii, & \np_\multii = 1. 
\end{cases} 
\end{align}
\end{lem}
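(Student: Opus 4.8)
The plan is to argue by induction on $\np_\multii$, using the recursion for $\smin$ from Lemma~\ref{SminLem} together with the decomposition~\eqref{BoxLinkPatt0} of a valenced link pattern into a ``head'' $\hat\alpha \in \smash{\LP_{\lds}\super{r}}$ and a simpler $((r,t),s)$-valenced link pattern. The base case $\np_\multii = 1$ is immediate: the only $(\OneVec{}\sub{s},s)$-valenced link pattern is a single node of valence $s$ with all $s$ links being defects, so $\smin(\OneVec{}\sub{s}) = s = \sIndex_1$ and item~\ref{DefIt1} is trivially true (there is only one node). This also disposes of the ``$\np_\multii=1$'' clauses in item~\ref{DefIt2} and in~\eqref{sminineq}.

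For the inductive step, write $\multii = (\lds, t)$ with $t = \sIndex_{\np_\multii}$ and suppose $\alpha \in \smash{\LP_\multii\super{s}}$ with $s = \smin(\multii)$. Decompose $\alpha$ as in~\eqref{BoxLinkPatt0}, with head $\hat\alpha \in \smash{\LP_{\lds}\super{r}}$ for some $r \in \DefectSet_{\lds}$ and a residual $((r,t),s)$-valenced link pattern. By Lemma~\ref{SminLem}\ref{itsmin2} (in the form $\smin(\multii) = \min_{r \in \DefectSet_{\lds}} |r - t|$), the minimality of $s$ forces $r$ to be the element of $\DefectSet_{\lds}$ closest to $t$, and $s = |r - t|$; in particular the residual $((r,t),s)$-pattern has \emph{no turn-back links} — all $\min(r,t)$ available links go straight through between the two nodes, and the remaining $|r-t|$ defects all emerge from whichever of the two nodes has the larger valence. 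There are three regimes from~\eqref{RecuFormula}. If $t \leq \smin(\lds)$, then $r = \smin(\lds)$ realizes the minimum and $s = \smin(\lds) - t$; the $s$ defects of $\alpha$ are exactly the $s$ unmatched strands on the $r$-node side, which (by induction applied to $\hat\alpha \in \smash{\LP_{\lds}\super{r}}$ with $r = \smin(\lds)$) all attach to a single node of $\hat\alpha$, hence to a single node of $\alpha$. If instead $t \geq \smax(\lds) = \Summed_\lds$, then $r = \smax(\lds)$ and $s = t - \smax(\lds)$; now all $s$ defects emerge from the $t$-node, the last node of $\multii$, and they all attach there. The middle regime $\smin(\lds) < t < \smax(\lds)$ gives $s \in \{0,1\}$, and one checks directly that a link pattern with $\leq 1$ defect trivially has all its defects at one node; moreover one can choose $r \in \{t-1, t, t+1\} \cap \DefectSet_{\lds}$ so that the residual pattern again has all defects at the $t$-node. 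This proves item~\ref{DefIt1}.

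For item~\ref{DefIt2}, suppose all defects of $\alpha$ attach to its $i$:th node. Since $s = \smin(\multii)$ links-worth of strands (namely $s$ defects) and possibly further turn-back strands all meet at that node, the valence $\sIndex_i$ is at least $s$; we must rule out equality when $\np_\multii > 1$. If $\sIndex_i = s = \smin(\multii)$, then the $i$:th node is entirely consumed by defects, so it is connected to no other node by a link. But then $\alpha$ decomposes as a disjoint juxtaposition of a $(\sIndex_i,\sIndex_i)$-pattern (pure defects) with a valenced link pattern on the remaining nodes $\multii \setminus \{i\}$; reading off the defect count of the latter and using~\eqref{DefSet2} together with Lemma~\ref{SminLem} shows $\smin(\multii) = \sIndex_i + \smin(\multii \setminus \{i\}) > \sIndex_i$ when there is at least one other node (since $\smin$ of a nonempty multiindex is $\geq$ the parity of its sum, and in fact one checks the juxtaposition cannot achieve the true minimum unless the other part is empty), a contradiction. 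Hence $\smin(\multii) < \sIndex_i \leq \max\multii$ when $\np_\multii > 1$. Combining with item~\ref{DefIt1} gives~\eqref{sminineq}.

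The main obstacle I anticipate is the bookkeeping in the middle regime $\smin(\lds) < t < \smax(\lds)$ and, more seriously, making the ``$\sIndex_i = s$ forces a disjoint splitting'' argument for item~\ref{DefIt2} fully rigorous: one needs to know that when all defects sit at node $i$ and exhaust its valence, no through-link can touch node $i$, and then that removing the isolated node $i$ strictly decreases neither side in a way that contradicts minimality. This is where I would be most careful to invoke~\eqref{DefSet2}, Lemma~\ref{SetRecursLem}, and the recursion~\eqref{RecuFormula} rather than hand-waving from pictures.
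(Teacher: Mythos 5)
Your treatment of item~\ref{DefIt1} is correct but follows a genuinely different route from the paper: you induct on $\np_\multii$ through the decomposition~\eqref{BoxLinkPatt0} and the recursion~\eqref{RecuFormula}, using that minimality forces $s = |r-t|$, so the residual two-node pattern has no turn-backs and its defects sit at one node (and the middle regime is trivial since there $s \le 1$). That works, and the subset-of-$\hat\alpha$'s-defects step is legitimate. The paper instead kills item~\ref{DefIt1} in one stroke, with no induction: if defects sat at two different nodes, two adjacent such defects could be joined by a planar link, producing a pattern with $\smin(\multii) - 2$ defects, contradicting the definition of $\smin(\multii)$. Your version buys nothing extra but is not wrong.

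Item~\ref{DefIt2} is where there is a genuine gap, exactly at the spot you flagged. From ``node $i$ is exhausted by defects, hence isolated'' you assert $\smin(\multii) = \sIndex_i + \smin(\multii \setminus \{i\}) > \sIndex_i$. That identity is false in general (for $\multii = (2,2)$ the left side is $0$ and the right side is $4$), and in your situation counting cannot produce a contradiction anyway: since all defects of $\alpha$ sit at node $i$, the restriction of $\alpha$ to the remaining nodes is a \emph{zero-defect} pattern, so $0 \in \DefectSet_{\multii \setminus \{i\}}$ and $\smin(\multii \setminus \{i\}) = 0$; pure arithmetic then only returns $\smin(\multii) = \sIndex_i$, which is your hypothesis, not a contradiction. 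The parenthetical ``the juxtaposition cannot achieve the true minimum unless the other part is empty'' is precisely the content of item~\ref{DefIt2}, asserted rather than proved. What is needed --- and what the paper does for $\np_\multii \geq 3$ --- is a geometric move, not a numerical one: starting from the configuration in which node $i$ is all defects and the other nodes carry a zero-defect pattern, re-route one link of that pattern (exchange a defect end at node $i$ with one end of a link joining two other nodes) to obtain a planar pattern with the \emph{same} number $\smin(\multii)$ of defects but attached to two distinct nodes, contradicting item~\ref{DefIt1}; the cases $\np_\multii \in \{1,2\}$ are checked directly. Any repair of your argument must manufacture such a competing pattern. (Be aware also that both this re-routing and any fix of yours need $\smin(\multii) \geq 2$ to bite: when $\smin(\multii) = \sIndex_i = 1$, e.g.\ $\multii = \OneVec{n}$ with $n$ odd, no pattern with a single defect can have defects at two nodes, so this boundary case deserves separate attention when the lemma is invoked.)
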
 

\begin{proof}  
To prove item~\ref{DefIt1}, we assume the contrary.  
Then, replacing two adjacent defects attached to different boxes by a link creates a valenced link pattern with fewer than $\smin(\multii)$ links, 
contradicting the definition of $\smin(\multii)$:
\begin{align}
\vcenter{\hbox{\includegraphics[scale=0.275]{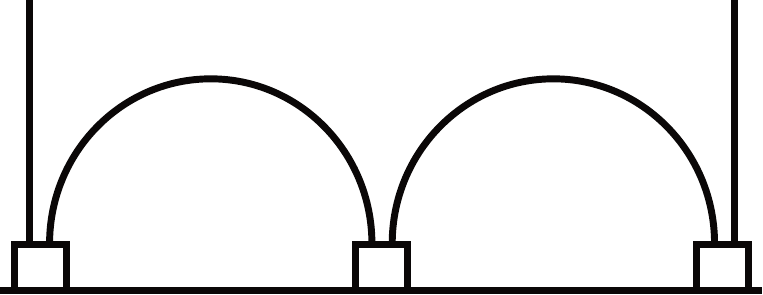}}} 
\qquad \longmapsto \qquad 
\vcenter{\hbox{\includegraphics[scale=0.275]{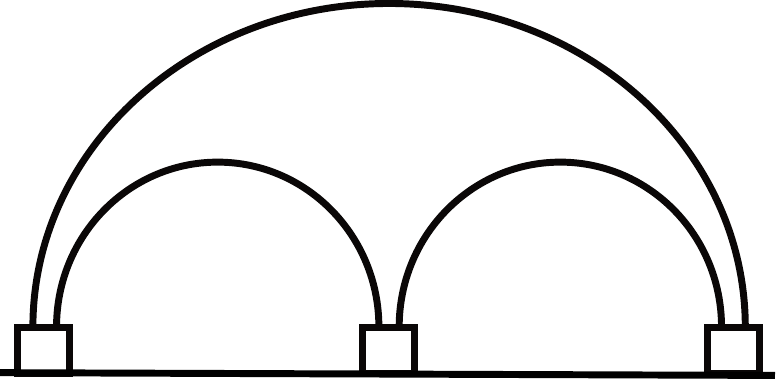} .}}
\end{align}
This proves item~\ref{DefIt1} by contradiction.

Item~\ref{DefIt2} is straightforward to prove for $\np_\multii \in \{1,2\}$.  
For $\np_\multii \geq 3$, we prove item~\ref{DefIt2} by contradiction. Thus, we
assume the contrary: $\smin(\multii) = \sIndex_i$.  With no defects attached to the other boxes, 
we make the replacement (here, $i=1$)
\begin{align}
\vcenter{\hbox{\includegraphics[scale=0.275]{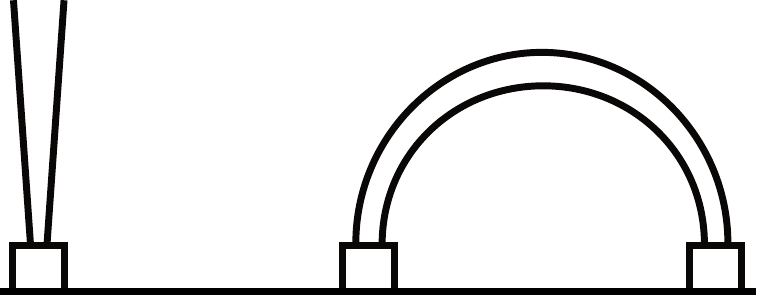}}} \qquad \longmapsto \qquad \vcenter{\hbox{\includegraphics[scale=0.275]{e-smin2.pdf} ,}} 
\end{align}
creating a valenced link pattern with $\smin(\multii)$ defects not all attached to a common box. This contradicts item~\ref{DefIt1}.
\end{proof}

Next, in lemmas~\ref{WJSandwichLem} and~\ref{LinkDiagPattLem2}, we obtain two isomorphisms relating the vector space $\TL_\multii^\multiii$
with spaces of link states. Using them, in corollary~\ref{WJDimLem1} we determine the dimension of $\TL_\multii^\multiii$.
Finally, in lemma~\ref{LSDimLem2}, we find a recursion formula for 
the total number of $(\multii,s)$-valenced link patterns.  To state these results, we first introduce notation.

For a $\multii$-valenced link state $\alpha$,
we let $\tilde{\alpha}$ denote the valenced link state obtained by reflecting $\alpha$ about a vertical axis:
\begin{align}\label{Flip} 
\alpha \quad = \quad \vcenter{\hbox{\includegraphics[scale=0.275]{e-LinkPattern3_valenced.pdf}}}
\qquad \qquad \Longrightarrow \qquad \qquad
\tilde{\alpha} \quad = \quad \vcenter{\hbox{\includegraphics[scale=0.275]{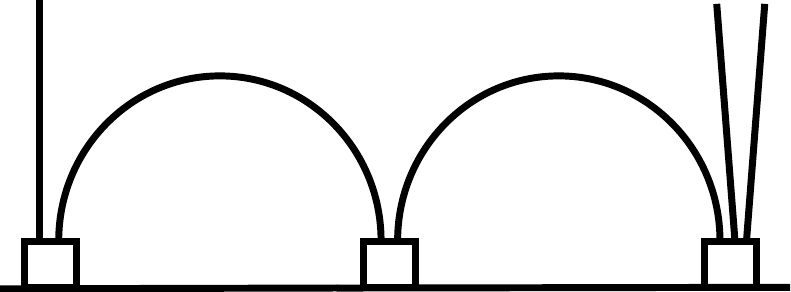} .}} 
\end{align}
Thus, we have $\tilde{\alpha} \in \LS_{\tilde{\multii}}$, where the multiindex $\tilde{\multii}$ is given by
inverting the order of the indices in $\multii$,
\begin{align}
\multii = (\sIndex_1, \sIndex_2, \ldots, \sIndex_{\np_\multii-1}, \sIndex_{\np_\multii}) 
\qquad \Longrightarrow \qquad
\tilde{\multii} := (\sIndex_{\np_\multii}, \sIndex_{\np_\multii-1}, \ldots, \sIndex_2, \sIndex_1). 
\end{align}
Now, we note that
\begin{align} \label{EqualDefSets} 
\dim \LS_{\tilde{\multii}}\super{s} = \dim \LS_\multii\super{s} 
 \qquad \text{and} \qquad \DefectSet_{\tilde{\multii}} = \DefectSet_\multii. 
\end{align}

In section~\ref{WJLinkDiagSect}, we construct valenced link patterns from valenced link diagrams.  Conversely, 
we may construct any $(\multii, \multiii)$-valenced link diagram with $s$ crossing links from 
a $(\multii,s)$-valenced link pattern $\alpha$ and a $(\multiii,s)$-valenced link pattern $\beta$ in the following way:

\begin{enumerate}[leftmargin=*]
\itemcolor{red}
\item \label{FormIt1} we flip $\beta \mapsto \smash{\tilde{\beta}}$, 

\item \label{FormIt2} we position $\alpha$ to the left of $\smash{\tilde{\beta}}$ in the plane, 

\item \label{FormIt3} we rotate $\alpha$ and $\smash{\tilde{\beta}}$ by $-\pi/2$ and $\pi/2$ radians respectively, and 

\item \label{FormIt4} we join the $s$ defects of $\alpha$ and $\smash{\tilde{\beta}}$ together pairwise top-to-bottom.
\end{enumerate}
We denote the $(\multii, \multiii)$-valenced link diagram thus obtained by $\BarAction \alpha \quad \beta \BarAction$.
For example, 
\begin{align} 
\overbrace{\vcenter{\hbox{\includegraphics[scale=0.275]{e-LinkPattern5_valenced.pdf}}}}^{\alpha \, \in \,\LP\super{3}\sub{1,1,1,2}} \qquad  \text{and} \qquad
\overbrace{\vcenter{\hbox{\includegraphics[scale=0.275]{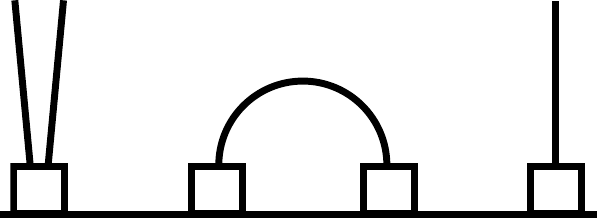}}}}^{\beta \, \in \,\LP\super{3}\sub{2,1,1,1}} 
\qquad \qquad \longmapsto \qquad \qquad
\vcenter{\hbox{\includegraphics[scale=0.275]{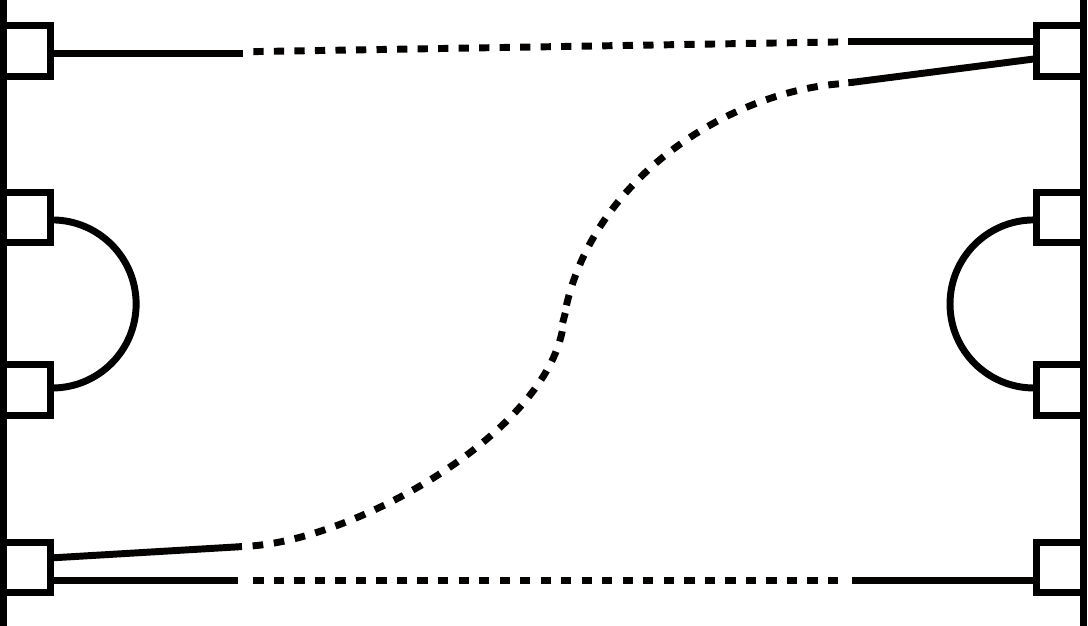}}} \quad 
= \quad \overbrace{\;\;\vcenter{\hbox{\includegraphics[scale=0.275]{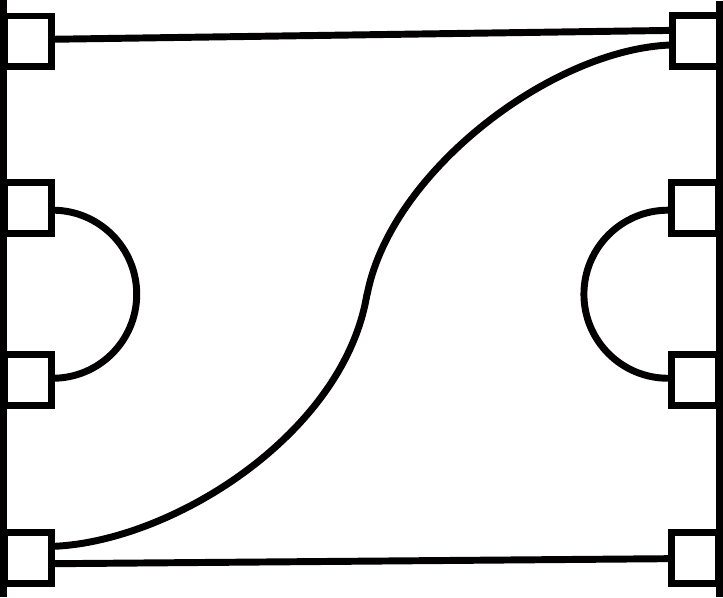} .}}}^{\BarAction \alpha \quad  \beta \BarAction \, \in \, \LD\sub{1,1,1,2}\super{2,1,1,1}}
\end{align}

%

\begin{lem} \label{WJSandwichLem}  
The map $\BarAction \, \cdot \quad \cdot \, \BarAction \colon 
\smash{\underset{s \, \in \, \DefectSet_\multii^\multiii}{\bigoplus}
\LS_\multii\super{s}} \otimes \smash{\LS_\multiii\super{s}} \longrightarrow \TL_\multii^\multiii$ 
defined by linear extension of 
\begin{align} \label{WJSandwichMap}
\alpha \otimes \beta \qquad \longmapsto \qquad \BarAction \alpha \quad \beta \BarAction 
\quad := \quad \vcenter{\hbox{\includegraphics[scale=0.275]{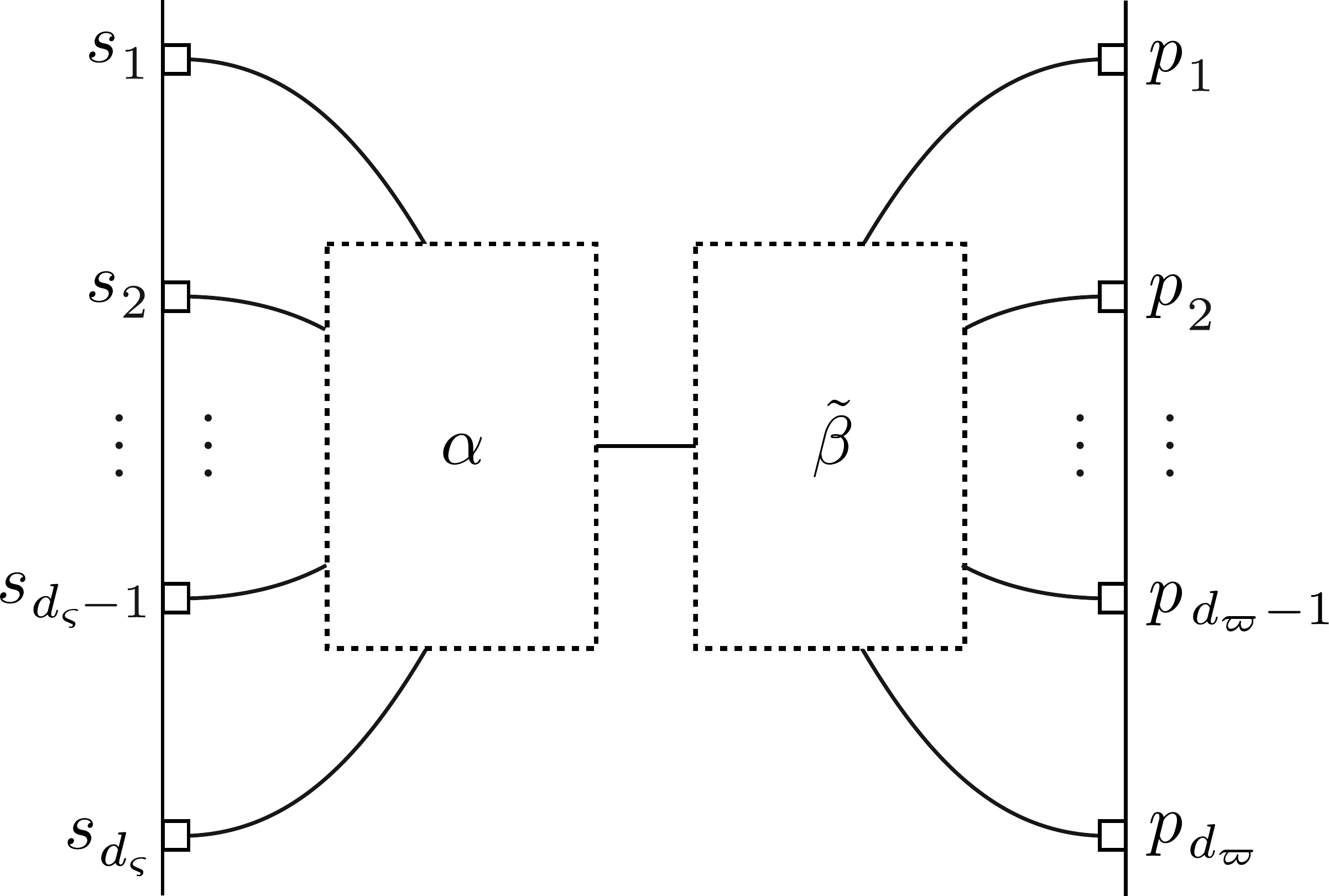} ,}}
\end{align}
for all valenced link patterns $\alpha \in \smash{\LP_\multii\super{s}}$ and $\beta \in \smash{\LP_\multiii\super{s}}$, is an isomorphism of vector spaces.
\end{lem}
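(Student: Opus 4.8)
The plan is to exploit the $s$-grading $\TL_\multii^\multiii = \bigoplus_{s \, \in \, \DefectSet_\multii^\multiii} \TL_\multii^{\multiii; \scaleobj{0.85}{(s)}}$ from~\eqref{WJDirSum} and reduce the claim to its graded pieces. First I would check that the gluing map is graded: if $\alpha \in \LP_\multii\super{s}$ and $\beta \in \LP_\multiii\super{s}$, then in $\BarAction \alpha \quad \beta \BarAction$ every turn-back link of $\alpha$ or of $\tilde\beta$ remains a turn-back link, every loop link remains a loop link, and the $s$ pairs of defects joined in step~\ref{FormIt4} of the construction become exactly $s$ crossing links; hence $\BarAction \alpha \quad \beta \BarAction$ is loop-link free and lies in $\TL_\multii^{\multiii; \scaleobj{0.85}{(s)}}$. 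Therefore the map restricts, for each $s \in \DefectSet_\multii^\multiii$, to a linear map $\LS_\multii\super{s} \otimes \LS_\multiii\super{s} \longrightarrow \TL_\multii^{\multiii; \scaleobj{0.85}{(s)}}$. Since the direct sum decompositions of domain and codomain run over the same index set, the full map is an isomorphism precisely when each of these restrictions is, so it suffices to treat a single value of $s$.

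Next I would compare bases. For fixed $s \in \DefectSet_\multii^\multiii$, the tensor products $\alpha \otimes \beta$ with $\alpha \in \LP_\multii\super{s}$ and $\beta \in \LP_\multiii\super{s}$ form a basis of $\LS_\multii\super{s} \otimes \LS_\multiii\super{s}$, while $\LD_\multii^{\multiii; \scaleobj{0.85}{(s)}}$ is a basis of $\TL_\multii^{\multiii; \scaleobj{0.85}{(s)}}$ by~\eqref{TLs}, and by the previous paragraph $(\alpha, \beta) \mapsto \BarAction \alpha \quad \beta \BarAction$ sends the former set into the latter. A linear map carrying a basis bijectively onto a basis is an isomorphism, so the lemma reduces to showing that $(\alpha, \beta) \mapsto \BarAction \alpha \quad \beta \BarAction$ is a bijection $\LP_\multii\super{s} \times \LP_\multiii\super{s} \longrightarrow \LD_\multii^{\multiii; \scaleobj{0.85}{(s)}}$.

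To see this I would write down the inverse explicitly: it is the ``cut in half'' construction of section~\ref{WJLinkDiagSect}. Given $D \in \LD_\multii^{\multiii; \scaleobj{0.85}{(s)}}$, choose a drawing of $D$ in which the vertical bisecting line meets $D$ transversally in exactly $s$ points, one on each crossing link and none on a turn-back link; cutting along this line, rotating the left half yields a $(\multii,s)$-valenced link pattern $\alpha$, and rotating the right half and reflecting it yields a $(\multiii,s)$-valenced link pattern $\beta$, which is precisely steps~\ref{FormIt1}--\ref{FormIt4} run backwards. One checks that regluing $\alpha$ and $\beta$ reproduces $D$ up to homotopy and that cutting $\BarAction \alpha \quad \beta \BarAction$ returns $(\alpha,\beta)$; both verifications are immediate once the two constructions are set up, so the assignments are mutually inverse. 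The one point that genuinely requires care --- and the step I expect to be essentially the only real obstacle --- is well-definedness of the cutting map on homotopy classes: one must check that every valenced link diagram admits such a standard-position drawing and that any two of them produce homotopic halves. This is where planarity is used: the connectivity of a valenced link diagram is a non-crossing matching of the linearly ordered set of endpoints (the $i$:th left node contributing $\sIndex_i$ consecutive endpoints, and similarly on the right), and such a matching decomposes canonically into its purely-left arcs, its purely-right arcs, and its $s$ left-to-right arcs --- which is exactly the pair $(\alpha, \beta)$.
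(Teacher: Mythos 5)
Your proof is correct and takes essentially the same route as the paper: the paper's argument is simply that the map carries the basis of tensor products $\alpha\otimes\beta$ of link patterns bijectively onto the link-diagram basis $\LD_\multii^\multiii$, with the cut-in-half/glue correspondence of section~\ref{WJLinkDiagSect} supplying the bijection. Your version just spells out the $s$-graded reduction and the well-definedness of the inverse cutting map, details the paper leaves implicit.
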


\begin{proof} 
This map sends the collection 
$\smash{ \big\{ \alpha \otimes \beta \, \big| \,  s \in \DefectSet_\multii^\multiii, \; \alpha \in \LP_\multii\super{s}, \; \beta \in \LP_\multiii\super{s} \big\} }$, 
which is a basis for its domain, to 
\begin{align}
\LD_\multii^\multiii 
= \big\{ \BarAction \alpha \quad \beta \BarAction \, \big| \, s \in \DefectSet_\multii^\multiii, \; \alpha \in \LP_\multii\super{s}, \; \beta \in \LP_\multiii\super{s} \big\}, 
\end{align}
which is a basis for its codomain.  The claim follows. 
\end{proof}

With notation~\eqref{MultiindexNotation}, we let $\oplus$ denote the operation that concatenates two multiindices $\multii, \multiii \in \{ \vec{0} \} \cup \bZpos^\#$,
\begin{align} \label{concatenateMultiindices}
\left\{ \begin{array}{rl} 
\multii \oplus \vec{0} & := \multii, \\ 
\vec{0} \oplus \multiii & := \multiii, \\ 
\multii \oplus \multiii & :=  (\sIndex_1, \sIndex_2, \ldots, \sIndex_{\np_\multii}, p_1, p_2, \ldots, p_{\np_\multiii}) .
\end{array} \right.
\end{align}

Next, we present another useful isomorphism, 
from $\smash{\TL_\multii^\multiii}$ to $\smash{\LS_{\multii \oplus \smash{\tilde{\multiii}}}\super{0}}$,
by sending each valenced tangle $\BarAction \alpha \quad \beta \BarAction \in \smash{\TL_\multii^\multiii}$ 
to a valenced link state in 
$\smash{\LS_{\multii \oplus \smash{\tilde{\multiii}}}\super{0}}$ formed by ``unfolding it,'' 
\begin{align}
\vcenter{\hbox{\includegraphics[scale=0.275]{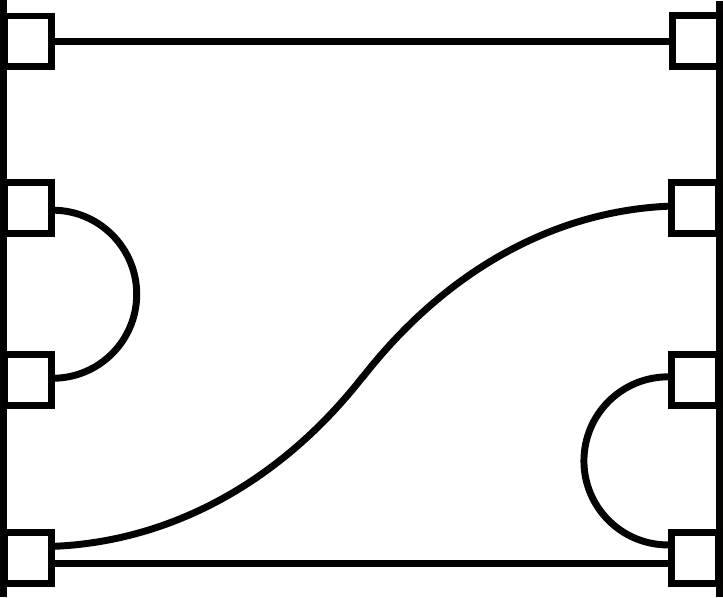}}}
\qquad \qquad \longmapsto \qquad \qquad
\vcenter{\hbox{\includegraphics[scale=0.275]{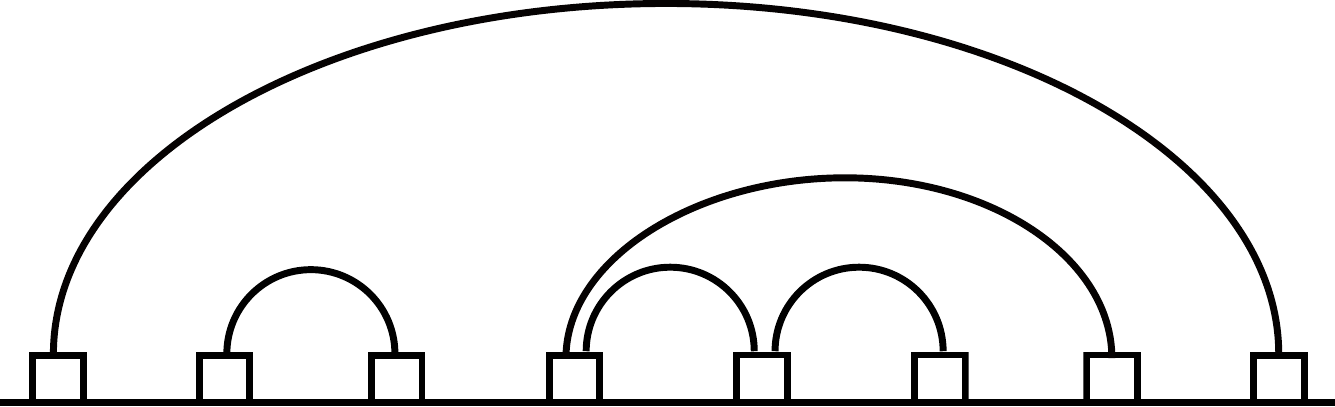} ,}}
\end{align}
that is, joining all of the $s \in \DefectSet_\multii^\multiii$
defects of $\alpha \otimes \tilde{\beta} \in \smash{\LS_\multii\super{s}} \otimes \smash{\LS_{\smash{\tilde{\multiii}}}\super{s}}$ together.

\begin{lem} \label{LinkDiagPattLem2} 
The following map is an isomorphism of vector spaces from $\smash{\TL_\multii^\multiii}$ to $\smash{\LS_{\multii \oplus \smash{\tilde{\multiii}}}\super{0}}$:
\begin{align}\label{LinkDiagPatt2} 
\BarAction \alpha \quad \beta \BarAction 
\qquad \longmapsto \qquad 
\vcenter{\hbox{\includegraphics[scale=0.275]{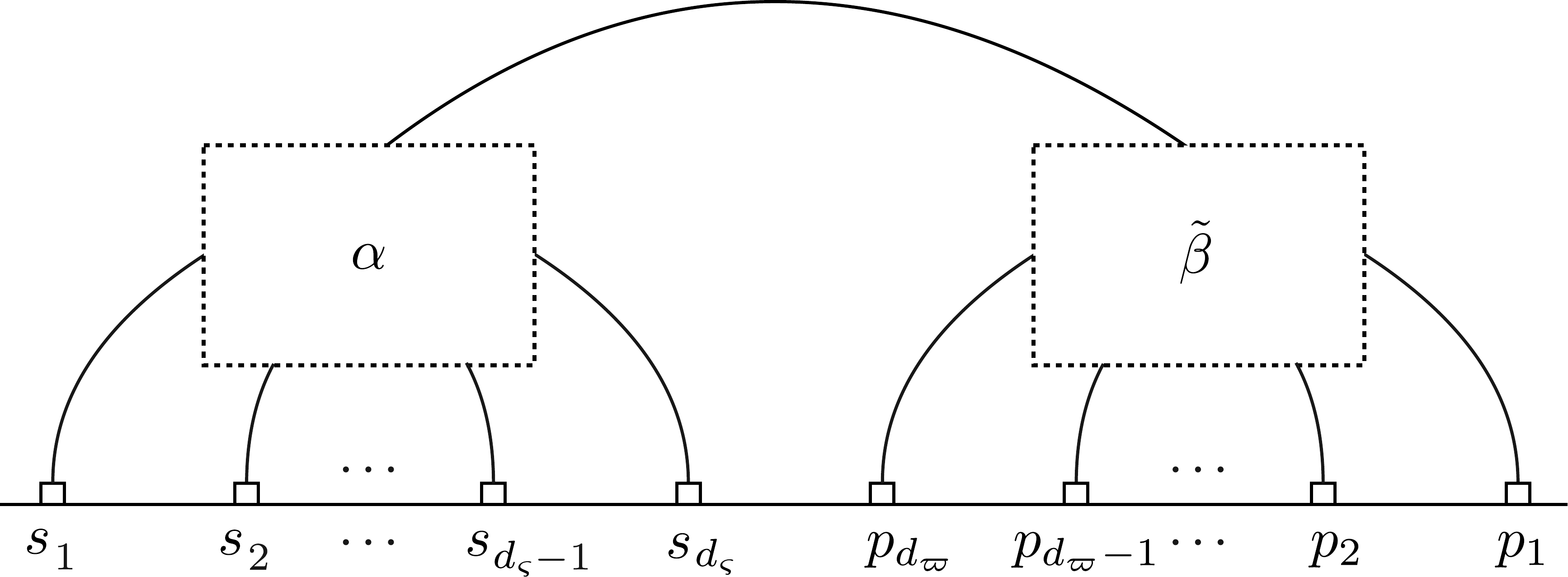} .}}
\end{align}
\end{lem}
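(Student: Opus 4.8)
The plan is to argue exactly as in the proof of lemma~\ref{WJSandwichLem}: I will show that the map~\eqref{LinkDiagPatt2}, which I call $\Phi$, carries a basis of its domain $\smash{\TL_\multii^\multiii}$ bijectively onto a basis of its codomain $\smash{\LS_{\multii \oplus \smash{\tilde{\multiii}}}\super{0}}$, whence $\Phi$ is an isomorphism of vector spaces.

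First I would make precise the ``unfolding'' operation that $\Phi$ performs on a single $(\multii,\multiii)$-valenced link diagram $D = \BarAction \alpha \quad \beta \BarAction$: bend the right boundary line of $D$ downward about its lower endpoint, so that — read from bottom to top — its $\np_\multiii$ nodes become the last $\np_\multiii$ nodes of a single boundary line whose node valences, listed in order, are the entries of $\multii \oplus \smash{\tilde{\multiii}}$; every link of $D$ carries over verbatim, with both endpoints at the same nodes as before. Since every link of a link diagram has both endpoints at nodes, the resulting planar object has no broken links, hence is a $(\multii \oplus \smash{\tilde{\multiii}}, 0)$-valenced link pattern. The assignment $D \mapsto \Phi(D)$ admits an obvious two-sided inverse on diagrams: given a $(\multii \oplus \smash{\tilde{\multiii}},0)$-valenced link pattern, cut its boundary line between the nodes numbered $\np_\multii$ and $\np_\multii + 1$, rotate the two resulting arcs back to vertical lines, and reflect the second one about a vertical axis. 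Thus $D \mapsto \Phi(D)$ is a bijection from $\smash{\preLD_\multii^\multiii}$ onto $\smash{\preLP_{\multii \oplus \smash{\tilde{\multiii}}}\super{0}}$; in particular $0 \in \DefectSet_{\multii \oplus \smash{\tilde{\multiii}}}$ exactly when $\smash{\DefectSet_\multii^{\multiii}}$ is nonempty, so in the remaining degenerate case both sides of the claimed isomorphism are the zero space.

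Next I would check compatibility with the loop-link equivalence relations that cut $\smash{\TL_\multii^\multiii}$ and $\smash{\LS_{\multii \oplus \smash{\tilde{\multiii}}}\super{0}}$ out of $\smash{\preTL_\multii^\multiii}$ and $\smash{\preLS_{\multii \oplus \smash{\tilde{\multiii}}}\super{0}}$. Because unfolding neither merges nor splits nodes, a link of $D$ has both its endpoints at one node if and only if the corresponding link of $\Phi(D)$ does. Hence the bijection of the previous paragraph restricts to a bijection between the loop-link-free link diagrams and the loop-link-free link patterns, i.e.\ between the basis $\LD_\multii^\multiii$ of $\smash{\TL_\multii^\multiii}$ and the basis $\LP_{\multii \oplus \smash{\tilde{\multiii}}}\super{0}$ of $\smash{\LS_{\multii \oplus \smash{\tilde{\multiii}}}\super{0}}$. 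On these basis elements the restricted bijection is precisely the rule $\BarAction \alpha \quad \beta \BarAction \mapsto (\text{unfolded link pattern})$ appearing in~\eqref{LinkDiagPatt2}, so its linear extension is $\Phi$; a linear map sending a basis bijectively onto a basis is an isomorphism, which finishes the proof.

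The only point demanding care — and it is the same subtlety underlying lemma~\ref{WJSandwichLem} — is the topological bookkeeping: that unfolding is well defined on homotopy classes of links rel endpoints and that folding genuinely inverts it. Granting this, the argument is a routine matching of bases, so I anticipate no real obstacle.
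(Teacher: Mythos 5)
Your proposal is correct and follows essentially the same route as the paper: the paper's proof simply observes that the unfolding rule~\eqref{LinkDiagPatt2} is a bijection from the basis $\LD_\multii^\multiii$ of $\TL_\multii^\multiii$ onto the basis $\LP_{\multii \oplus \smash{\tilde{\multiii}}}\super{0}$ of $\LS_{\multii \oplus \smash{\tilde{\multiii}}}\super{0}$, and concludes by linear extension. Your extra bookkeeping (the explicit cut-and-refold inverse and the compatibility with the loop-link equivalence) just spells out details the paper leaves implicit.
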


\begin{proof} 
The map~\eqref{LinkDiagPatt2} from valenced link diagrams in $\LD_\multii^\multiii$, a basis for $\TL_\multii^\multiii$, to valenced link patterns in 
$\smash{\LP_{\multii \oplus \smash{\tilde{\multiii}}}\super{0}}$, a basis for $\smash{\LS_{\multii \oplus \smash{\tilde{\multiii}}}\super{0}}$, is a bijection. The claim follows.
\end{proof}

Among other uses in section~\ref{StdModulesSect}, we may use valenced link patterns to find the dimension of the vector space $\smash{\TL_\multii^\multiii}$.  

\begin{cor} \label{WJDimLem1} 
We have
\begin{align} \label{Dim56}
\smash{\dim \LS_{\multii \oplus \tilde{\multiii}}\super{0}} 
= \dim \smash{\TL_\multii^\multiii} = \sum_{s \, \in \, \DefectSet_\multii^\multiii} \big(\dim \LS_\multii\super{s}\big) \big(\dim \LS_\multiii\super{s} \big).
\end{align}
\end{cor}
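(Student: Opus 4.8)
The plan is to obtain both equalities for free from the two vector-space isomorphisms already established, using only that isomorphic spaces have the same dimension, that $\dim$ is additive over finite direct sums, and that $\dim(V \otimes W) = (\dim V)(\dim W)$. First I would apply Lemma~\ref{LinkDiagPattLem2}, which provides a linear isomorphism from $\smash{\TL_\multii^{\multiii}}$ onto $\smash{\LS_{\multii \oplus \tilde{\multiii}}\super{0}}$ (the ``unfolding'' map); taking dimensions gives immediately the left-hand equality $\smash{\dim \LS_{\multii \oplus \tilde{\multiii}}\super{0}} = \dim \smash{\TL_\multii^{\multiii}}$.

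Next I would apply Lemma~\ref{WJSandwichLem}, which gives a linear isomorphism
\[
\bigoplus_{s \, \in \, \DefectSet_\multii^{\multiii}} \LS_\multii\super{s} \otimes \LS_\multiii\super{s} \;\cong\; \TL_\multii^{\multiii} .
\]
Computing the dimension of the left-hand side termwise yields $\dim \smash{\TL_\multii^{\multiii}} = \sum_{s \in \DefectSet_\multii^{\multiii}} \big(\dim \LS_\multii\super{s}\big)\big(\dim \LS_\multiii\super{s}\big)$, which is the right-hand equality. Chaining the two displays proves the corollary.

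I do not expect any real obstacle: the content is carried entirely by Lemmas~\ref{WJSandwichLem} and~\ref{LinkDiagPattLem2}, and the one point needing care is bookkeeping of the summation index. By definition $\smash{\DefectSet_\multii^{\multiii}}$ is the set of $s$ admitting a $(\multii,\multiii)$-valenced link diagram with $s$ crossing links; splitting such a diagram in half as in section~\ref{WJLinkDiagSect} identifies $\smash{\DefectSet_\multii^{\multiii}}$ with $\DefectSet_\multii \cap \DefectSet_\multiii$, so each summand above is a tensor product of nonzero spaces. If desired, the index set may be enlarged to any finite superset of $\smash{\DefectSet_\multii^{\multiii}}$ without changing the sum, since the additional terms vanish.
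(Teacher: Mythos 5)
Your argument is correct and matches the paper's proof: the first equality follows from the unfolding isomorphism of lemma~\ref{LinkDiagPattLem2}, and the second from the isomorphism of lemma~\ref{WJSandwichLem}, each by taking dimensions. The additional bookkeeping about the index set $\DefectSet_\multii^\multiii$ is fine but not needed beyond what the two lemmas already provide.
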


\begin{proof} 
The first equality follows from lemma~\ref{LinkDiagPattLem2} with~\eqref{EqualDefSets}, and the second equality follows from lemma~\ref{WJSandwichLem}.
\end{proof}

In particular, to determine the dimension of $\smash{\TL_\multii^\multiii}$, 
it is sufficient to determine the dimensions of the vector spaces $\smash{\LS_\multii\super{s}}$.  
For this, using notation~\eqref{hats}, we define the numbers $\smash{\{\Dim_\multii\super{s}\}_{s \in \DefectSet_\multii}}$
as the unique solution to the recursion
\begin{align} 
\label{Recursion2} 
\Dim_\multii\super{s} \hspace{.1cm}
 = \hspace{.5cm} \sum_{\mathclap{r \, \in \, \DefectSet_{\lds} \, \cap \, \DefectSet\sub{s,t} }}  
\quad \Dim_{\lds}\super{r}, 
\qquad \text{and} \qquad \Dim \sub{s}\super{s} = 1.
\end{align}
In lemma~\ref{LSDimLem2}, we show that $\smash{\Dim_\multii\super{s}}$ equals the dimension of the vector space $\smash{\LS_\multii\super{s}}$.

In the special case that $\multii = \OneVec{n}$ for some integer $n \geq 0$, we denote these numbers by 
$\smash{\Dim_n\super{s}} := \smash{\Dim_{\OneVec{n}}\super{s}}$.
Then for all integers $s \in \DefectSet_n$ (so $n$ and $s$ have the same parity), recursion~\eqref{Recursion2} reduces to
\begin{align} 
\label{Recursion} 
\Dim_n\super{s} \hspace{.1cm}
 = \hspace{.5cm} \sum_{\mathclap{r \, \in \, \DefectSet_{n-1} \, \cap \, \DefectSet\sub{s,1} }}  
\quad \Dim_{n-1}\super{r} = 
\begin{cases} 
\Dim_{n-1}\super{1}, & s = 0, \\[5pt] 
\Dim_{n-1}\super{s-1} + \Dim_{n-1}\super{s+1}, 
& s \in \{1,2,\ldots, n-1\}, \\[5pt] 
\Dim_{n-1}\super{n-1}, & s = n , 
\end{cases} \qquad \qquad
\text{and} \qquad \qquad \Dim_1\super{1} = 1.
\end{align}
We also note that $\smash{\Dim_n\super{n}} = 1$ for all $n \in \bZnn$.
The following well-known formula gives the unique solution to~\eqref{Recursion}:
\begin{align}\label{Dns} 
\Dim_n\super{s} = \frac{2 (s + 1)}{n + s + 2} \binom{n}{\frac{n + s}{2}} . 
\end{align}
The $n$:th Catalan number $C_n$ arises in a special instance of these numbers:
\begin{align}\label{Catalan} 
C_n := \frac{1}{n + 1} \binom{2n}{n} \overset{\eqref{Dns}}{=} \Dim_{2n}\super{0}. 
\end{align}

\begin{lem} \label{LSDimLem2}
We have 
\begin{align}\label{LSDim2} 
\dim \smash{\LS_\multii\super{s}} = \# \LP_\multii\super{s} = \smash{\Dim_\multii\super{s}} . 
\end{align}
\end{lem}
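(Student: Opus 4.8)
The plan is to establish the two equalities in~\eqref{LSDim2} in turn, the first being essentially definitional and the second carrying the real content. For $\dim \LS_\multii\super{s} = \#\LP_\multii\super{s}$: by the construction in section~\ref{WJLinkDiagSect}, the set $\LP_\multii\super{s}$ of $(\multii,s)$-valenced link patterns (i.e.\ those without loop links) is by definition a basis of the quotient space $\LS_\multii\super{s} = \preLS_\multii\super{s}/\!\sim$, so the two sides agree.

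For $\#\LP_\multii\super{s} = \Dim_\multii\super{s}$, since $\{\Dim_\multii\super{s}\}$ is by definition the \emph{unique} solution of the recursion~\eqref{Recursion2}, it suffices to show that the integers $\#\LP_\multii\super{s}$ satisfy the same recursion and the same initial condition. As in the proof of lemma~\ref{SetRecursLem}, dropping zero entries lets me assume $\multii \in \bZpos^\#$. When $\multii = (s)$, the only $(\multii,s)$-valenced link pattern is the one consisting of $s$ defects and no links (any link would be a loop link, hence excluded), so $\#\LP\sub{s}\super{s} = 1 = \Dim\sub{s}\super{s}$. For the recursive step, with $\lds$ and $t$ as in~\eqref{hats}, I would reuse verbatim the decomposition from the proof of lemma~\ref{SetRecursLem}: each $\alpha \in \LP_\multii\super{s}$ is obtained as in~\eqref{BoxLinkPatt0} from a uniquely determined $\hat\alpha \in \LP_{\lds}\super{r}$, where $r \in \DefectSet_{\lds}$ is the number of defects leaving $\hat\alpha$, together with the residual $((r,t),s)$-valenced link pattern, and conversely each admissible $\hat\alpha$ arises from a unique such $\alpha$.

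The only additional input is the claim that \emph{for each $r,t$ and each $s \in \DefectSet\sub{r,t}$ there is exactly one $((r,t),s)$-valenced link pattern}: such a pattern necessarily has $j := (r+t-s)/2$ links joining its two nodes and the remaining $r-j$ and $t-j$ endpoints as defects, and because links may meet only at their endpoints this nested configuration is forced up to homotopy. Combining this with the decomposition above and the symmetry $s \in \DefectSet\sub{r,t} \Leftrightarrow r \in \DefectSet\sub{s,t}$ from~\eqref{SameDefSet} yields
\[
\#\LP_\multii\super{s} \;=\; \sum_{\substack{r \, \in \, \DefectSet_{\lds} \\ s \, \in \, \DefectSet\sub{r,t}}} \#\LP_{\lds}\super{r} \;=\; \sum_{r \, \in \, \DefectSet_{\lds} \, \cap \, \DefectSet\sub{s,t}} \#\LP_{\lds}\super{r} ,
\]
which is exactly the recursion~\eqref{Recursion2} for the integers $\#\LP_\multii\super{s}$. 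Hence $\#\LP_\multii\super{s} = \Dim_\multii\super{s}$ by uniqueness of the solution, and~\eqref{LSDim2} follows. I expect the only non-routine step to be the uniqueness of the $((r,t),s)$-valenced link pattern — a short planarity argument — while everything else is bookkeeping layered on lemma~\ref{SetRecursLem} and its proof.
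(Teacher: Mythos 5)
Your proof is correct and follows essentially the same route as the paper's: both establish the first equality by definition and the second by showing that $\#\LP_\multii\super{s}$ satisfies recursion~\eqref{Recursion2} with initial condition $\#\LP\sub{s}\super{s}=1$, via the decomposition~\eqref{BoxLinkPatt0} from the proof of lemma~\ref{SetRecursLem} and uniqueness of the solution. The only difference is that you spell out the uniqueness of the $((r,t),s)$-valenced link pattern, which the paper leaves implicit.
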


\begin{proof} 
With $\dim \smash{\LS_\multii\super{s}} = \# \smash{\LP_\multii\super{s}}$ by definition, 
we only need to prove the second equality.  For this, we show that
$\# \smash{\LP_\multii\super{s}}$ satisfies recursion~\eqref{Recursion2}. 
As in the proof of lemma~\ref{SetRecursLem}, we may write any valenced link pattern $\alpha \in \smash{\LP_\multii\super{s}}$ in 
the form~\eqref{BoxLinkPatt0}, for a unique valenced link pattern $\hat{\alpha} \in \smash{\LP_{\lds}}$ that depends on $\alpha$, with
\begin{align} \label{rReq} 
r \in \DefectSet_{\lds} \cap \DefectSet\sub{s,t}. 
\end{align}

On the other hand, insertion of a valenced link pattern $\hat{\alpha} \in \smash{\LP_{\lds}\super{r}}$ 
with $r$ as in~\eqref{rReq}  into~\eqref{BoxLinkPatt0} 
determines a unique valenced link pattern $\alpha \in \smash{\LP_\multii\super{s}}$. This establishes a bijection 
\begin{align} 
\LP_\multii\super{s} \qquad \longleftrightarrow \qquad \bigcup_{r \, \in \, \DefectSet_{\lds} \, \cap \, \DefectSet\sub{s,t}} \LP_{\lds}\super{r}. 
\end{align}
Therefore, the cardinalities $\#\LP_\multii\super{s}$ satisfy recursion~\eqref{Recursion2}, including its initial condition: 
$\# \smash{\LP\sub{s}\super{s}} = 1$.
Because the solution to this recursion problem is unique, we must have 
$\smash{\#\LP_\multii\super{s}} = \smash{\Dim_\multii\super{s}}$.
\end{proof}

Corollary~\ref{WJDimLem1} and lemma~\ref{LSDimLem2} together imply the following identity: 
\begin{align}\label{CuriousID} 
\sum_{s \, \in \, \DefectSet_\multii^\multiii} \Dim_\multii\super{s} \Dim_\multiii\super{s} \overset{\eqref{Dim56}}{=} \Dim_{\multii \oplus \smash{\tilde{\multiii}}}\super{0}. 
\end{align}
If $\multii =  \OneVec{n}$ and $\multiii = \OneVec{m}$ 
then this reduces to a more explicit identity including the numbers~\eqref{Dns},
\begin{align}\label{DnsId} 
\sum_{s \, \in \, \DefectSet_n^m} \Dim_n\super{s} \Dim_m\super{s} \overset{\eqref{CuriousID}}{=} \Dim_{n + m}\super{0} \overset{\eqref{Catalan}}{=} C_{(n+m)/2}.
\end{align}


\subsection{Composition of valenced tangles and valenced link states} \label{ValencedCompositionSec}

Next, we explain how the diagrammatic multiplication of the Temperley-Lieb algebra, discussed in section~\ref{TLSecIntro},
generalizes for valenced tangles, and how the latter act naturally on valenced link states.


The Jones-Wenzl projectors are important tools when defining composition of valenced tangles.
To define the former, we recall from section~\ref{TLSecIntro} the generators of the Temperley-Lieb algebra $\TL_n(\nu)$:
\begin{align}
\hphantom{\vcenter{\hbox{\includegraphics[scale=0.275]{e-TLalgebra5.pdf}}}}
\Gen_i \quad & \overset{\eqref{ExtMe}}{:=}  
&& \vcenter{\hbox{\includegraphics[scale=0.275]{e-TLalgebra6.pdf}}} \quad \in \TL_n(\nu) .
\quad \text{for all $i \in \{1,2, \ldots, n-1\}$},
\end{align} 
The multiplication in $\TL_n(\nu)$ is determined by diagram concatenation, 
replacing each loop by a multiplicative factor of $\nu$, the fugacity parameter
\begin{align} \label{numbertwo}
\nu = -q - q^{-1} = - [2] ,
\end{align} 
where $[2]$ is an example of a \emph{quantum integer}, defined for any $k \in \bZ$ as
\begin{align}\label{Qinteger} 
[k] = [k]_q := \frac{q^{k} - q^{-k}}{q - q^{-1}}. 
\end{align}

Recalling also definition~\eqref{MinPower} of $\ppmin(q)$, 
we define the \emph{Jones-Wenzl projector of size} $s \in \{0, 1, \ldots, \ppmin(q) - 1 \}$ 
to be the unique nonzero tangle $\WJProj\sub{s} \in \TL_s(\nu)$ 
satisfying the two properties~\cite{vj, hw, kl}
\begin{enumerate}[leftmargin=*, label = P\arabic*., ref = P\arabic*]
\itemcolor{red}
\item \label{wj1} $\smash{\WJproj\sub{s}^2} = \smash{\WJproj\sub{s}}$, and
\item \label{wj2}
$\smash{\Gen_i} \smash{\WJproj\sub{s}} = \smash{\WJproj\sub{s}} \smash{\Gen_i} = 0$ 
for all $i \in \{1, 2, \ldots, s - 1\}$.
\end{enumerate}
For example, we have
\begin{align}\label{WJ2} 
\WJProj\sub{0} = \text{the empty tangle}, \qquad \qquad
\WJProj\sub{1} = \mathbf{1}_{\TL_1(\nu)}, \qquad \qquad \text{and} \qquad \qquad
\WJProj\sub{2} = \mathbf{1}_{\TL_2(\nu)} - \nu^{-1} \Gen_1 .
\end{align}

We use the following diagrammatic representation for the Jones-Wenzl projectors.
Within a tangle, we call a collection of $s$ parallel links a \emph{cable of size $s$}, and we illustrate it as one link with label ``$s$" next to it:
\begin{align} \label{cable}
s
\begin{cases}
\quad \vcenter{\hbox{\includegraphics[scale=0.275]{e-Sstrands2.pdf}}}
\end{cases}
\; = \quad \raisebox{1pt}{\includegraphics[scale=0.275]{e-Sstrands1.pdf}  .} 
\end{align} 
We represent the tangle $\WJProj\sub{s}$ as an empty \emph{projector box} with a cable of size $s$ passing into either side:
\begin{align}\label{ProjBoxDiag}
s
\begin{cases}
\quad \vcenter{\hbox{\includegraphics[scale=0.275]{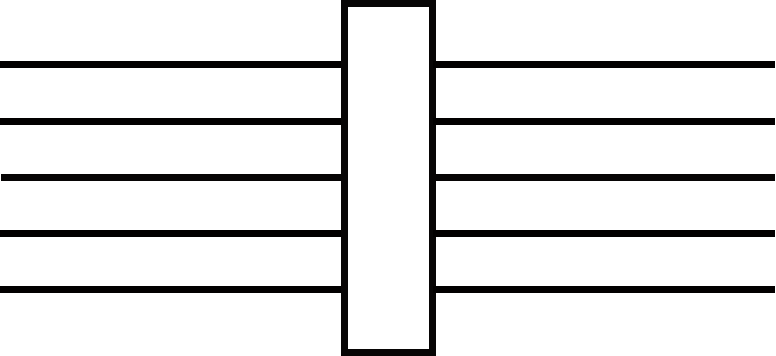}}}
\end{cases}
\; = \quad 
\vcenter{\hbox{\includegraphics[scale=0.275]{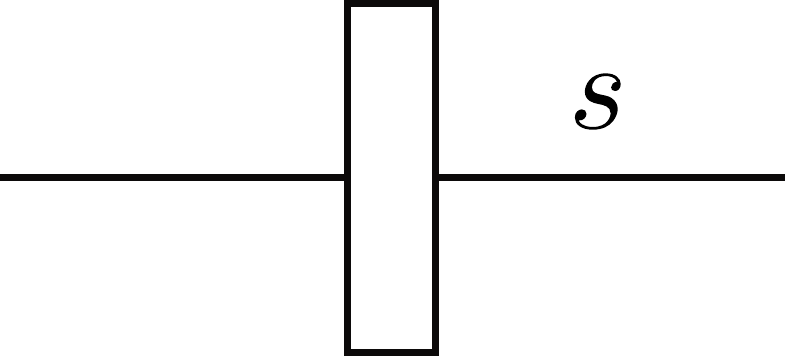}  .}}
\end{align} 
For example, the first, 
second, 
and third 
Jones-Wenzl projectors are
\begin{alignat}{2} 
\label{ExampleProj1} 
\WJProj\sub{1} \quad = \quad
\vcenter{\hbox{\includegraphics[scale=0.275]{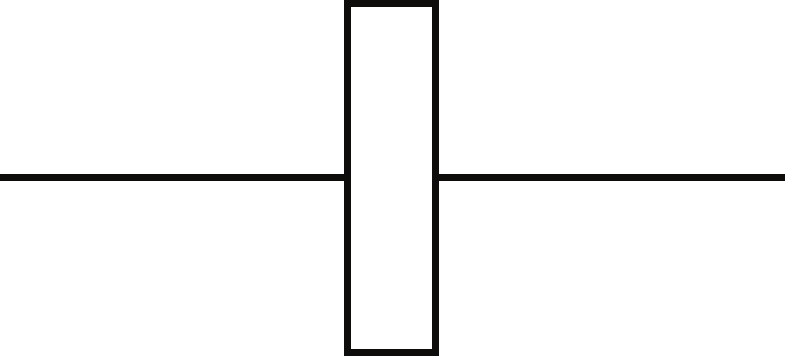}}} \quad & = \quad 
\vcenter{\hbox{\includegraphics[scale=0.275]{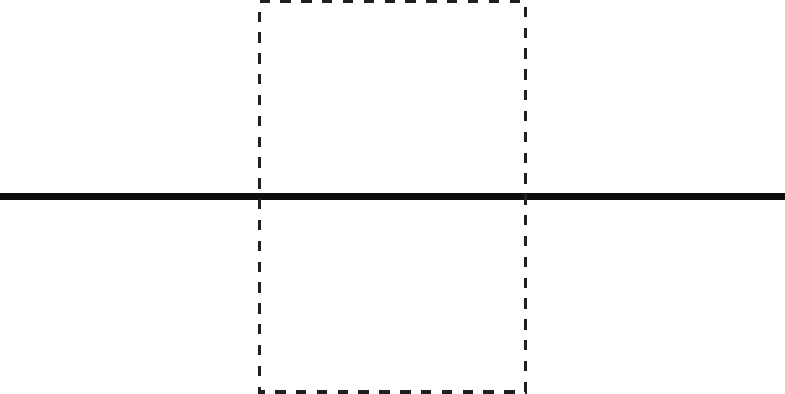} ,}} && \\[1em]
\label{ExampleProj2} 
\WJProj\sub{2} \quad = \quad
\vcenter{\hbox{\includegraphics[scale=0.275]{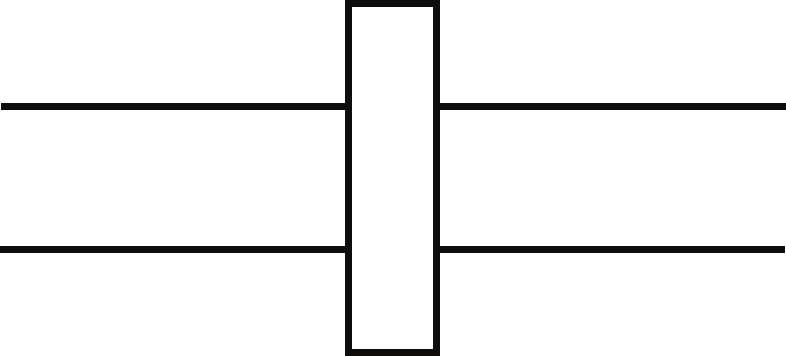}}} \quad & = \quad 
\vcenter{\hbox{\includegraphics[scale=0.275]{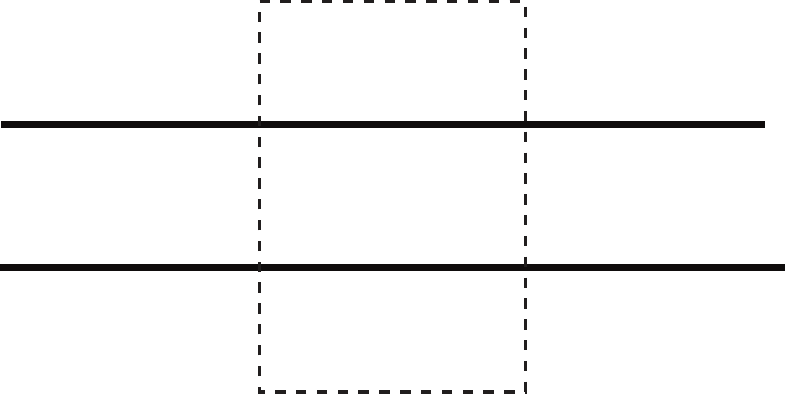}}} && \quad + \quad \frac{1}{[2]} \,\, \times \,\,
\vcenter{\hbox{\includegraphics[scale=0.275]{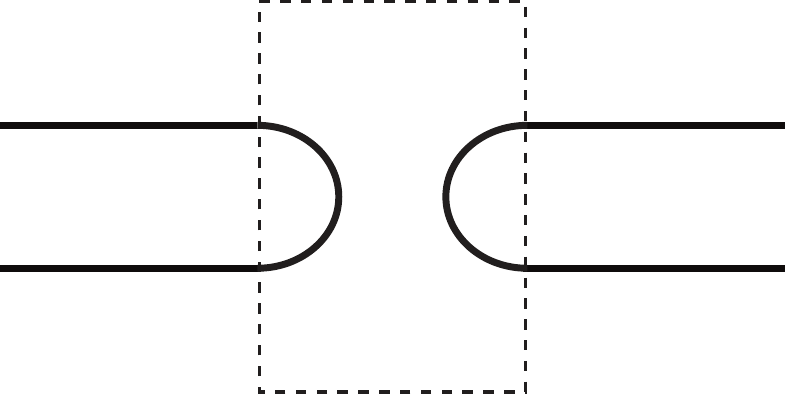} ,}} \\[1em] 
\nonumber
\WJProj\sub{3} \quad = \quad
\vcenter{\hbox{\includegraphics[scale=0.275]{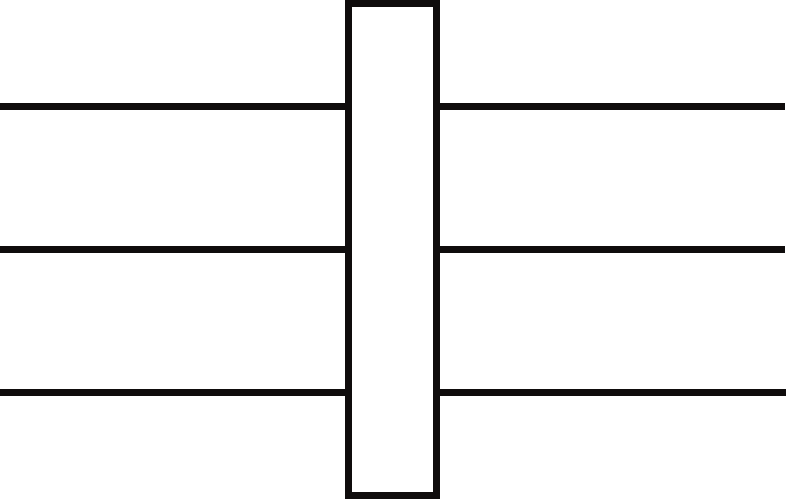}}} \quad & = \quad 
\vcenter{\hbox{\includegraphics[scale=0.275]{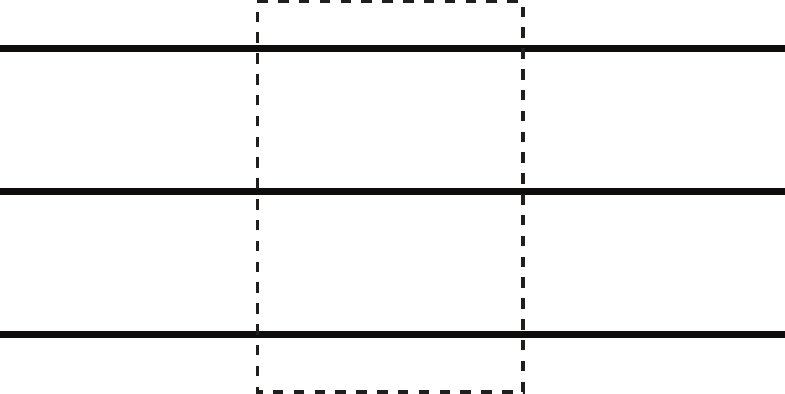}}} && \quad 
+ \quad \frac{[2]}{[3]}  \,\, \times \,\, \left( \; \vcenter{\hbox{\includegraphics[scale=0.275]{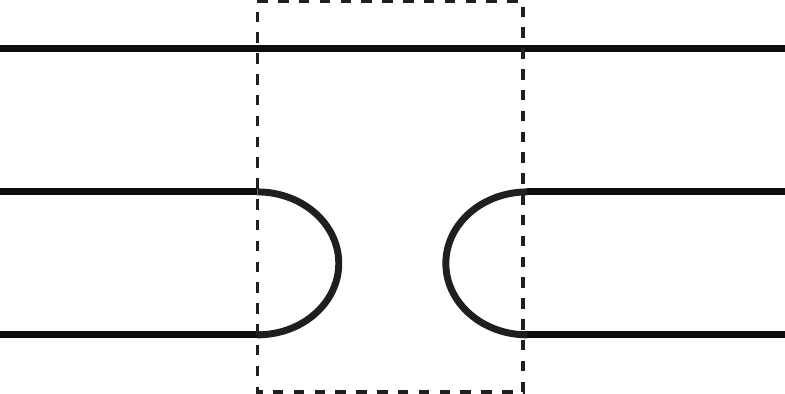}}} 
\quad + \quad \vcenter{\hbox{\includegraphics[scale=0.275]{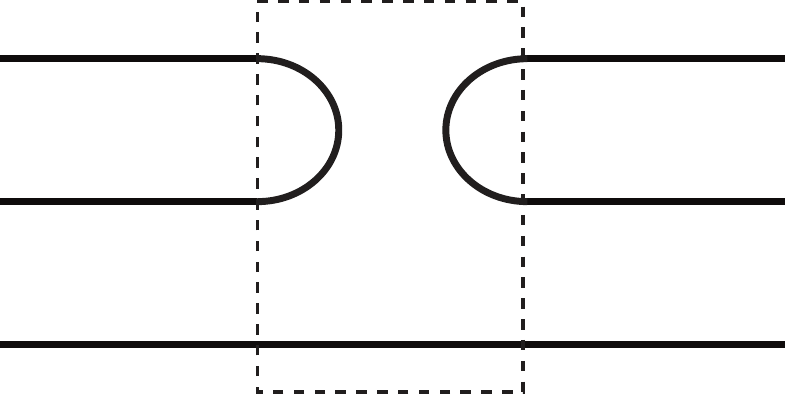}}} \; \right) \\
& && \quad + \quad \frac{1}{[3]}  \,\, \times \,\,
\left( \; \vcenter{\hbox{\includegraphics[scale=0.275]{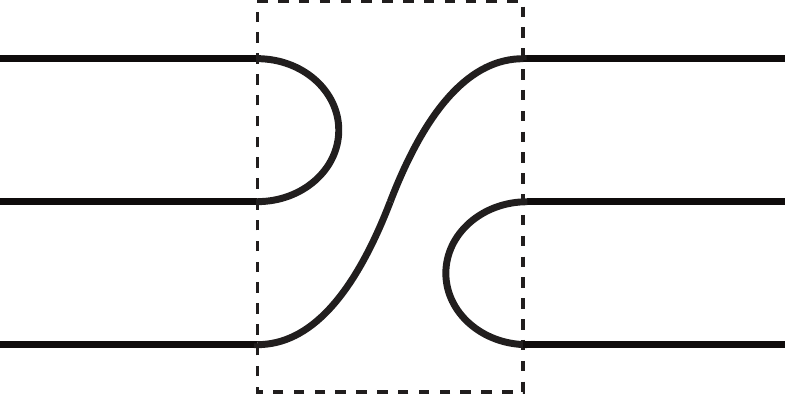}}} \quad + \quad 
\vcenter{\hbox{\includegraphics[scale=0.275]{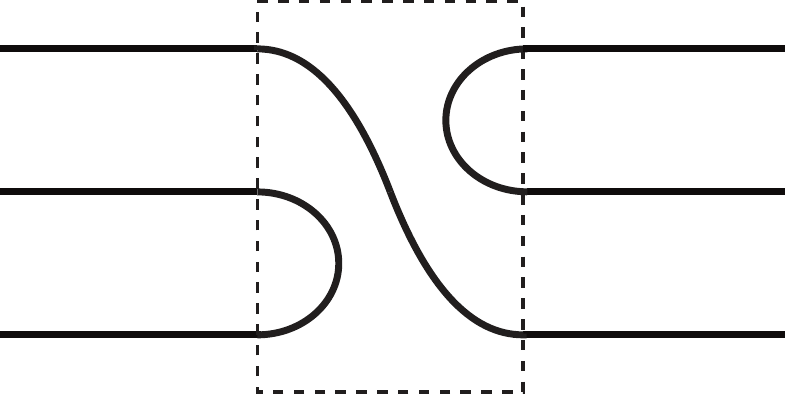}}} \; \right) .
\label{ExampleProj3} 
\end{alignat}
If $s \leq n$, then we may embed a projector box of size $s$ into a tangle in $\TL_n(\nu)$.  For example, the diagram
\begin{align}\label{CanonEmbed}
\vcenter{\hbox{\includegraphics[scale=0.275]{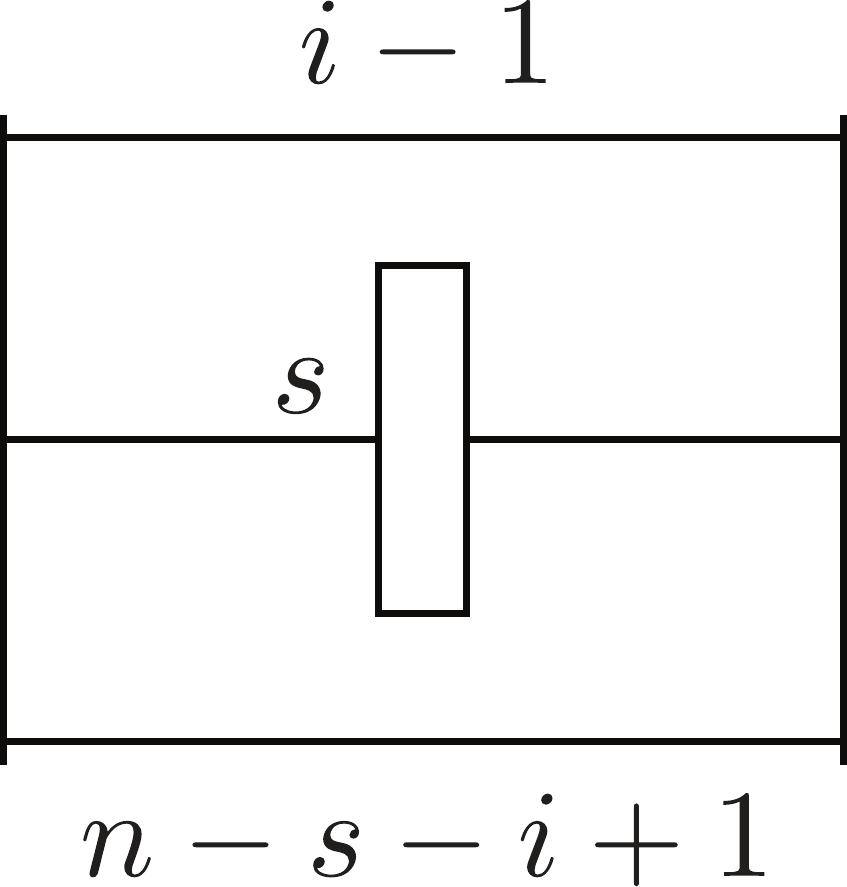}}} 
\end{align} 
represents the tangle in $\TL_n(\nu)$ obtained by replacing the box of size $s$ with the tangle $\WJProj\sub{s}$ within the larger diagram.

Abusing notation, we let the symbol $\WJProj\sub{s} \in \TL_n(\nu)$ also denote the tangle~\eqref{CanonEmbed} 
in $\TL_n(\nu)$ with $i = 1$.  Then the various projectors 
$\WJProj\sub{1}, \WJProj\sub{2}, \ldots,\WJProj\sub{n} \in \TL_n(\nu)$
satisfy the recursion relations~\cite{vj, hw, kl}
\begin{align}\label{wjrecursion} 
\smash{\WJproj\sub{1}} = \mathbf{1}_{\TL_n(\nu)} , \qquad 
\smash{\WJproj\sub{s+1}} = \smash{\WJproj\sub{s}} + \left( \frac{[s]}{[s+1]} \right) 
\smash{\WJproj\sub{s}} \smash{\Gen_s} \smash{\WJproj\sub{s}} ,
\end{align}
for all $s \in \{1, 2, \ldots, n - 1\}$. In terms of diagrams recursion relation~\eqref{wjrecursion} reads
\begin{align}\label{RecursionDiagram}
\vcenter{\hbox{\includegraphics[scale=0.275]{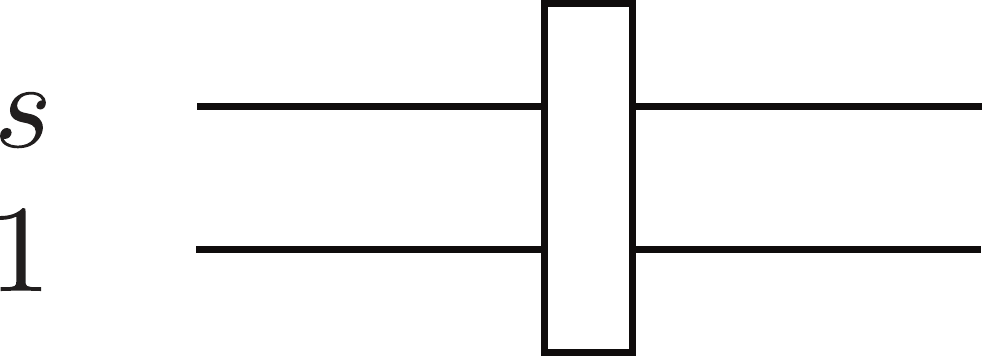}}} \quad = \quad 
\vcenter{\hbox{\includegraphics[scale=0.275]{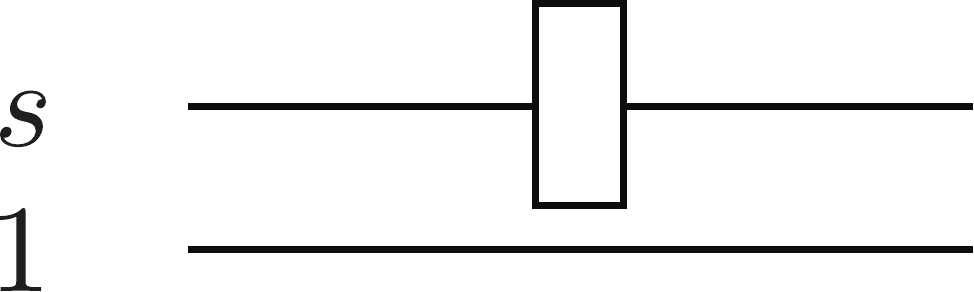}}} \quad + \quad 
\frac{[s]}{[s+1]} \,\, \times \,\, \vcenter{\hbox{\includegraphics[scale=0.275]{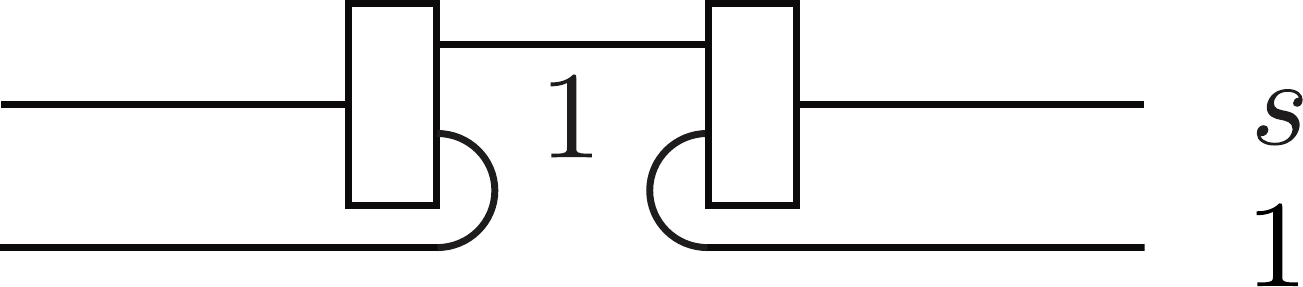} .}} 
\end{align}

Next, we state a few more facts concerning  the Jones-Wenzl projectors, for use throughout this article.  
First, we let $T^\dagger$ denote the tangle obtained by reflecting $T$ about a vertical axis.  For example,
\begin{align}\label{DaggerRefl} 
T \quad = \quad \vcenter{\hbox{\includegraphics[scale=0.275]{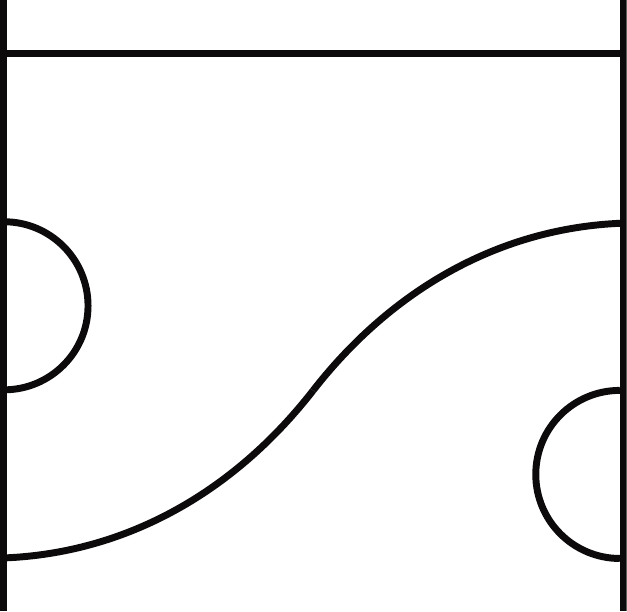}}} 
\qquad \qquad \Longrightarrow \qquad \qquad
T^\dagger \quad = \quad \vcenter{\hbox{\includegraphics[scale=0.275]{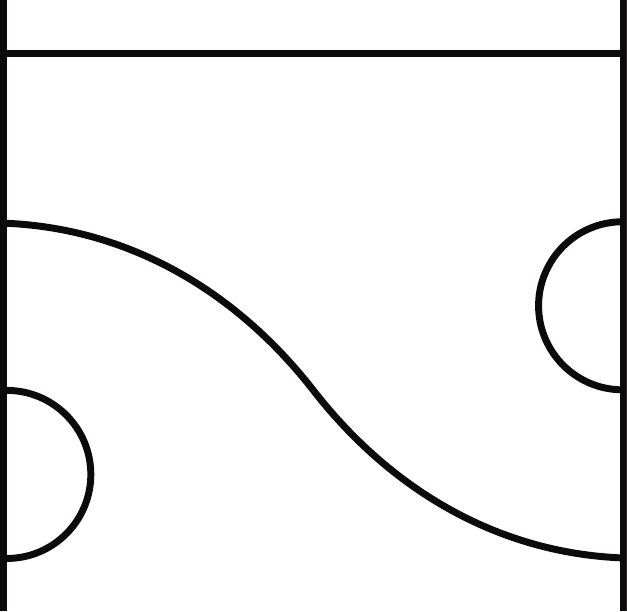} .}}
\end{align}
Rule~\eqref{wjrecursion} inductively gives the reflection symmetry 
\begin{align} \label{ReflectionSymmetryOfP}
\smash{\WJProj\sub{s}^\dagger} = \WJProj\sub{s}. 
\end{align}
With the graphical representation, properties~\ref{wj1} and~\ref{wj2} respectively translate to the diagram identities
\begin{align} 
\label{ProjectorID0} 
\tag{\ref{wj1}} 
\vcenter{\hbox{\includegraphics[scale=0.275]{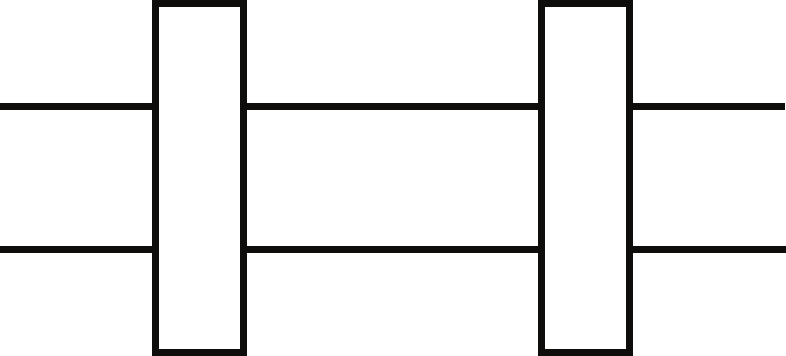}}} 
\quad & = \quad \vcenter{\hbox{\includegraphics[scale=0.275]{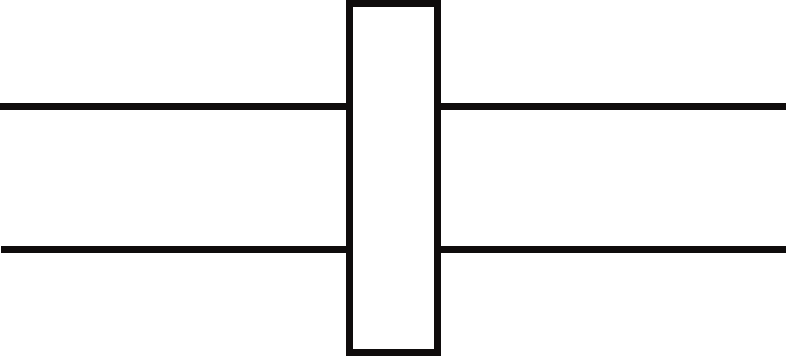}}} \\[1em]
\label{ProjectorID2} 
\tag{\ref{wj2}} 
\vcenter{\hbox{\includegraphics[scale=0.275]{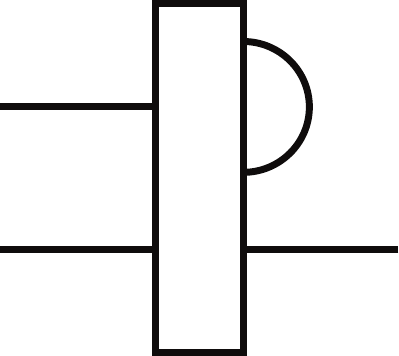}}} 
\quad = \quad \vcenter{\hbox{\includegraphics[scale=0.275]{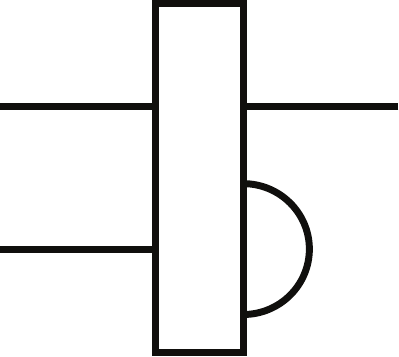}}} 
\quad & = \quad \vcenter{\hbox{\includegraphics[scale=0.275]{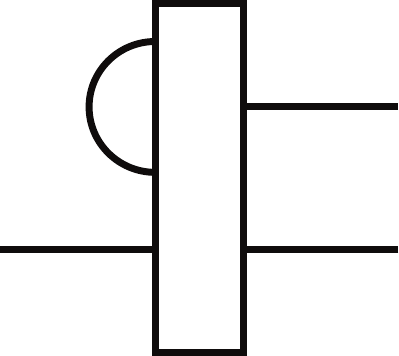}}} 
\quad = \quad \vcenter{\hbox{\includegraphics[scale=0.275]{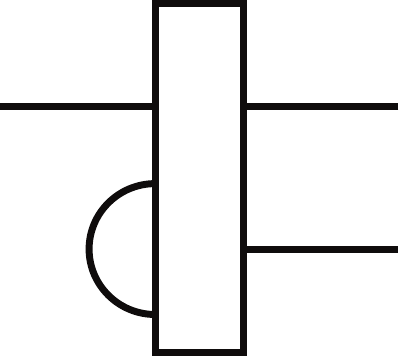}}}
\quad = \quad 0.
\end{align}
In fact, property~\ref{wj1} can be strengthened to say that $\WJProj\sub{s}\WJProj\sub{t} = \WJProj\sub{t}\WJProj\sub{s} = \WJProj\sub{s}$ whenever $t \leq s$~\cite{kl}:
\begin{align} \label{ProjectorID1}
\tag{\ref{wj1}$\red{'}$} 
\vcenter{\hbox{\includegraphics[scale=0.275]{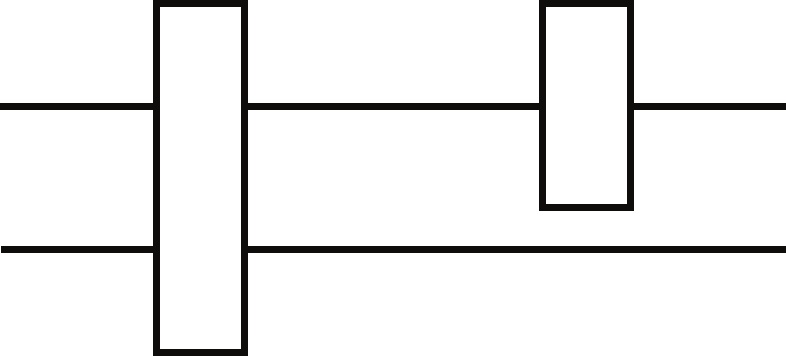}}} 
\quad = \quad \vcenter{\hbox{\includegraphics[scale=0.275]{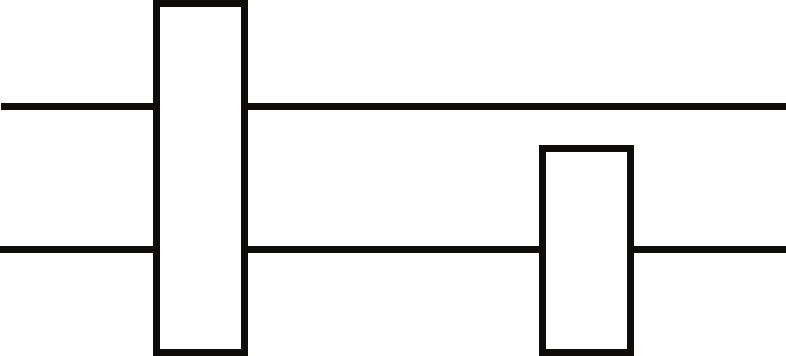}}} 
\quad = \quad \vcenter{\hbox{\includegraphics[scale=0.275]{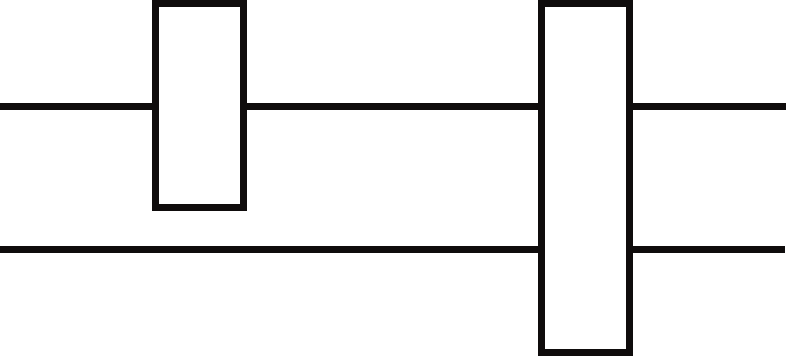}}} 
\quad = \quad \vcenter{\hbox{\includegraphics[scale=0.275]{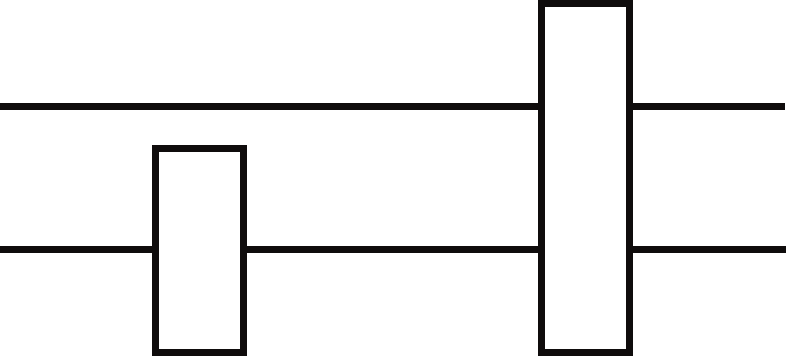}}} 
\quad = \quad \vcenter{\hbox{\includegraphics[scale=0.275]{e-ProjectorBox14.pdf} .}}
\end{align}

Finally, as a tangle in $\TL_s$, the Jones-Wenzl projector $\WJProj\sub{s}$
 equals a linear combination of the link diagrams in $\LD_s$.  
Property~\ref{wj1} implies that the coefficient of the unit $\mathbf{1}_{\TL_s}$ in this linear combination equals one.  Hence, we have
\begin{align}\label{ProjDecomp} 
\WJproj\sub{s} = \mathbf{1}_{\TL_s(\nu)} + \sum_{\substack{T \, \in \, \LD_s, \\  T \, \neq \, \mathbf{1}_{\TL_s(\nu)}}} (\text{coef}_T) \,T,
\end{align}
for some coefficients $\text{coef}_T \in \bC$ (whose values depend on $q \in \bC^\times$). 
In fact, S.~Morrison derived an explicit formula for these coefficients in~\cite{sm}. 
In~\cite[appendix~\red{A}]{fp0}, we give a new, alternative derivation of his formula.

Now we use the Jones-Wenzl projectors 
to generalize concatenation rules (\ref{TLmult1}--\ref{TLmult4}) for $n$-tangles
to concatenation rules for $(\multii, \multiii)$-valenced tangles.
For any complex number $\nu \in \bC$ parameterized by $q \in \bC^\times$ as in~\eqref{fugacity},
and for any ``intermediate'' multiindex 
$\varepsilon = (e_1, e_2, \ldots, e_{\np_\varepsilon}) \in \smash{\{ \OneVec{0} \} \cup \bZpos^\#}$ 
satisfying the condition
\begin{align}\label{SomeConds} 
\max \varepsilon < \ppmin(q) ,
\end{align}
we define a bilinear map 
$\mu_\nu \colon \smash{\TL_\multii^\varepsilon} \times \smash{\TL_\varepsilon^\multiii} \longrightarrow \smash{\TL_\multii^\multiii}$, 
by bilinear extension of the following recipe:
\begin{enumerate}[leftmargin=*, label = $\mu$\arabic*., ref = $\mu$\arabic*] 
\itemcolor{red}
\item \label{WJit1} we concatenate the $(\multii,\varepsilon)$-valenced link diagram $T$ to the $(\varepsilon,\multiii)$-valenced link diagram $U$ from the left, 

\item \label{WJit2} we replace the node of size $e_i$ with a Jones-Wenzl projector box of size $e_i$ for each $i \in \{1,2,\ldots,\np_\varepsilon\}$, and

\item \label{WJit3} we decompose each projector box and, in each resulting term, 
we replace each loop with a multiplicative factor of $\nu$,
and we set each diagram containing a loop link to zero, 
arriving with the $(\multii,\multiii)$-valenced tangle
\begin{align} 
TU := \mu_\nu(T,U) . 
\end{align}
\end{enumerate}
We write the vector space $\smash{\TL_\multii^\multiii}$ as $\TL_\multii^\multiii(\nu)$ to emphasize the chosen value of $\nu$.  
Pictorially, steps~\ref{WJit1} and~\ref{WJit2} are
\begin{align}
& \vcenter{\hbox{\includegraphics[scale=0.275]{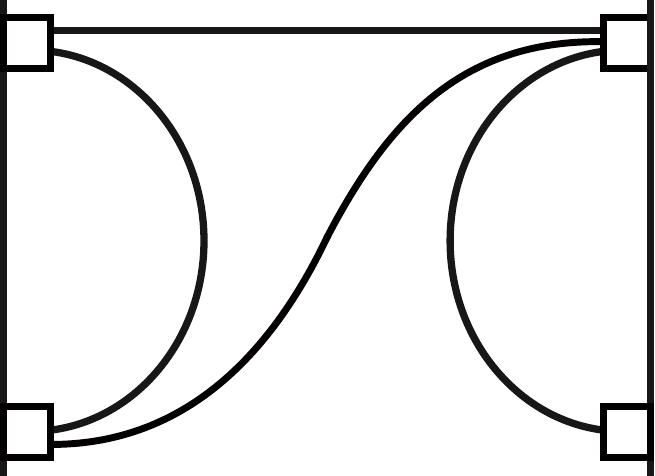}}} \quad 
\vcenter{\hbox{\includegraphics[scale=0.275]{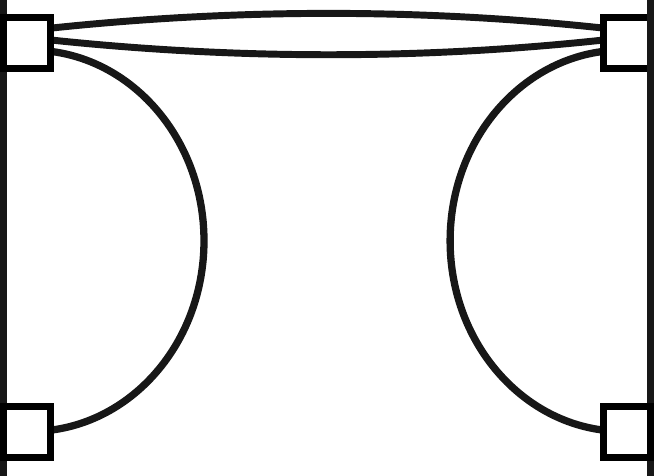}}} \quad := \quad 
\vcenter{\hbox{\includegraphics[scale=0.275]{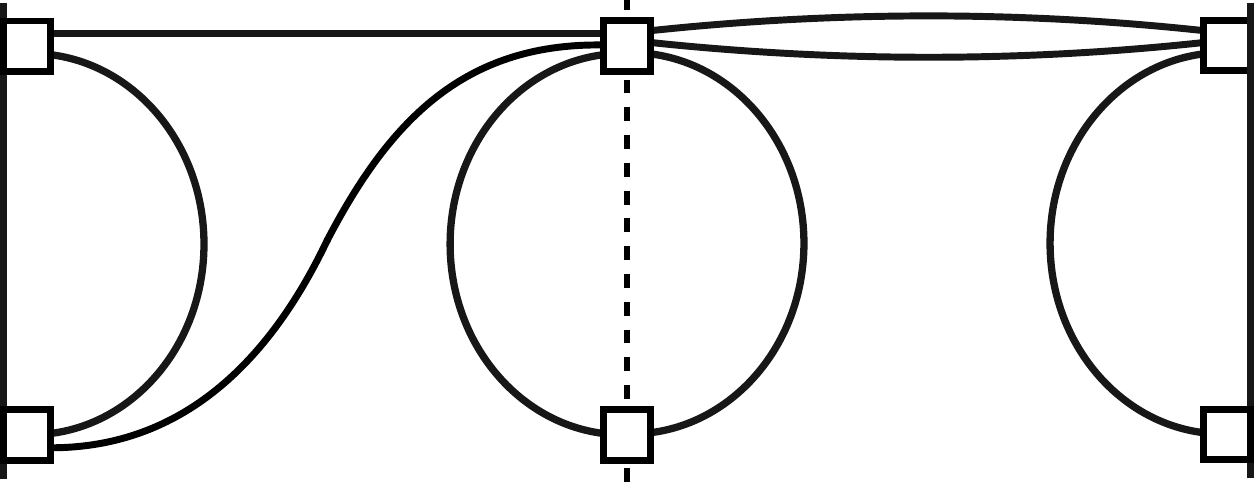}}} \quad = \quad 
\vcenter{\hbox{\includegraphics[scale=0.275]{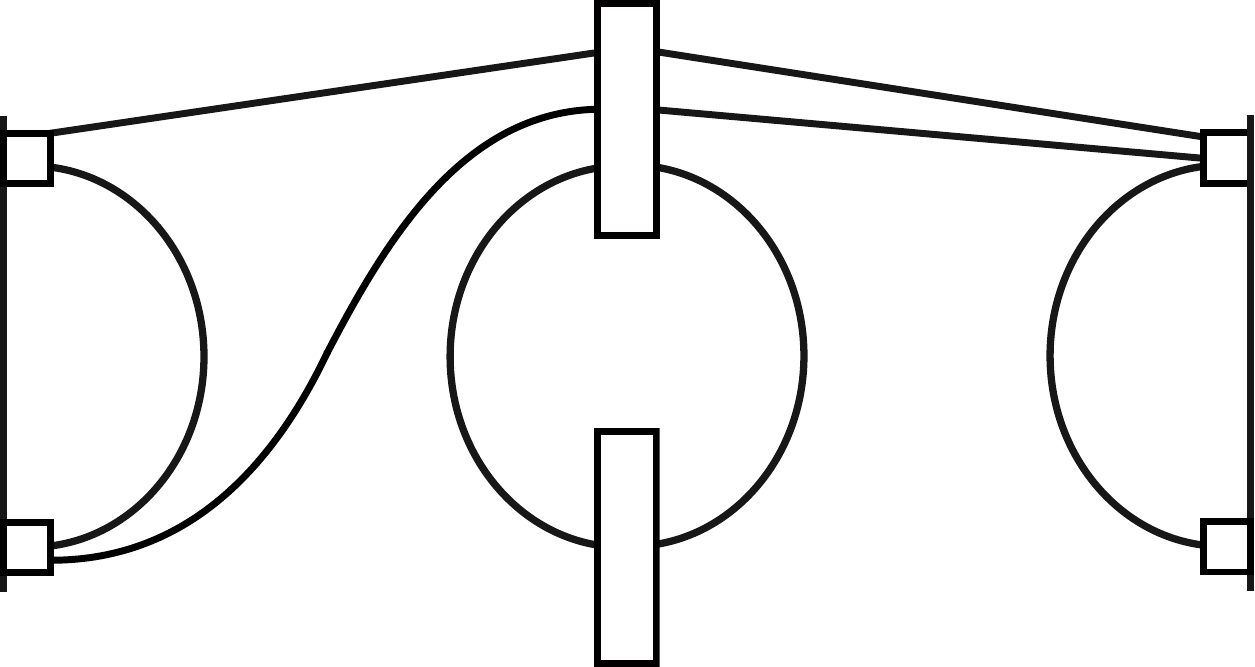} .}}
\end{align}
Followed by step~\ref{WJit3}, the above concatenation evaluates to the following:
\begin{align}
& \quad \vcenter{\hbox{\includegraphics[scale=0.275]{e-WJ_example6_valenced.pdf}}} \quad 
\vcenter{\hbox{\includegraphics[scale=0.275]{e-WJ_example7_valenced.pdf}}} \quad := \quad 
\vcenter{\hbox{\includegraphics[scale=0.275]{e-WJ_example6_and_7_concatenation_halfvalenced.pdf}}} \quad
\overset{\eqref{ProjectorID1}}{=}  \quad 
\vcenter{\hbox{\includegraphics[scale=0.275]{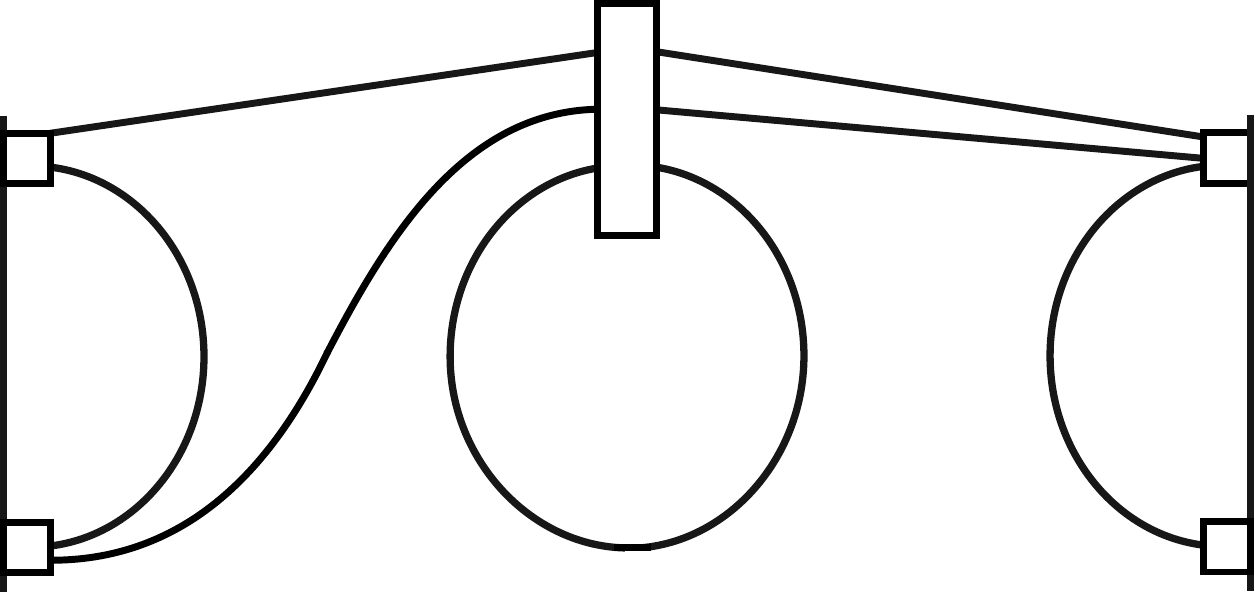}}} \\[1em]
\overset{\eqref{ExampleProj3}}{=} & \quad \vcenter{\hbox{\includegraphics[scale=0.275]{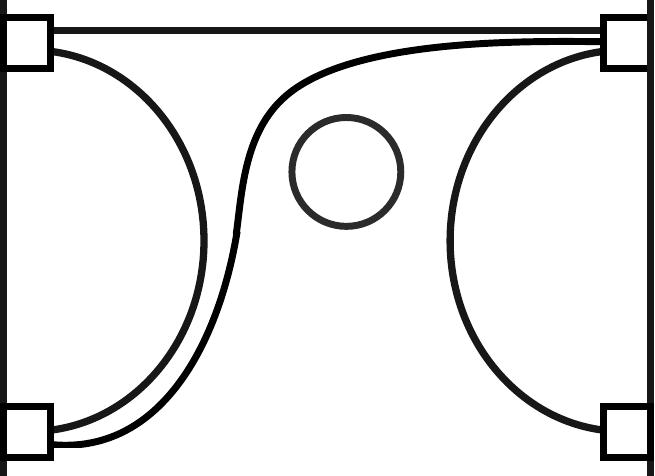}}} 
\quad + \quad \frac{[2]}{[3]}  \,\, \times \,\, \vcenter{\hbox{\includegraphics[scale=0.275]{e-WJ_example6_valenced.pdf}}}
\quad + \quad \frac{[2]}{[3]} \,\, \times \,\, \vcenter{\hbox{\includegraphics[scale=0.275]{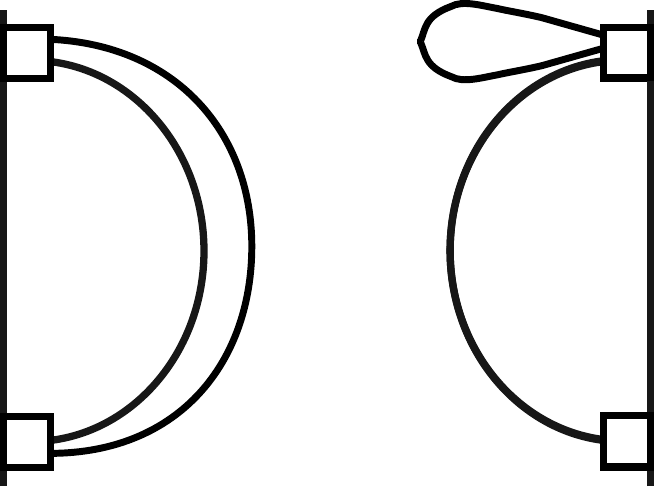}}}
\quad + \quad \frac{2}{[3]} \,\, \times \,\, \vcenter{\hbox{\includegraphics[scale=0.275]{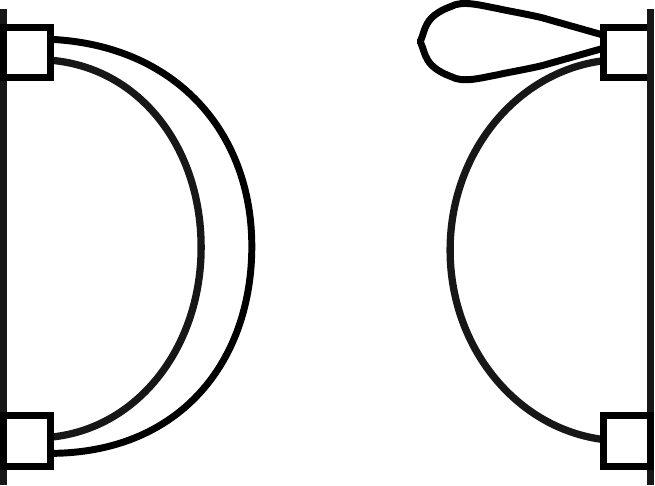}}} \\[1em]
\overset{\eqref{numbertwo}}{=} & \quad 
\frac{[2] (1 - [3])}{[3]} \,\, \times \,\, \vcenter{\hbox{\includegraphics[scale=0.275]{e-WJ_example6_valenced.pdf}}} 
\quad = \quad - \frac{[4]}{[3]} \,\, \times \,\, \vcenter{\hbox{\includegraphics[scale=0.275]{e-WJ_example6_valenced.pdf} ,}}
\end{align}
where we also used the identity $[4] = [2] ([3] - 1)$ (see, e.g.,~item~\ref{QintIDItem} of lemma~\ref{QintIDLemAndSumFormulaLem2} in appendix~\ref{TLRecouplingSect}).

Similarly, we generalize concatenation rules (\ref{loopex}--\ref{turnbackex}) for $n$-tangles
and $(n,s)$-link states to concatenation rules for $(\multii, \multiii)$-valenced tangles 
and $(\multii, s)$-valenced link states. We define a bilinear map 
$\smash{\lambda_\nu\super{s} \colon \TL_\multii^\multiii(\nu) \times \LS_\multiii\super{s} \longrightarrow \LS_\multii\super{s}}$
by bilinear extension of the following recipe: 
\begin{enumerate}[leftmargin=*, label = $\lambda$\arabic*., ref = $\lambda$\arabic*] 
\itemcolor{red}
\item \label{WJLSit1} 
we concatenate the $(\multii,\multiii)$-valenced link diagram $T$ to the $(\multiii,s)$-valenced link pattern $\alpha$
(rotated by $-\pi/2$ radians) from the left, 

\item \label{WJLSit2} 
we replace the node of size $p_j$ with a Jones-Wenzl projector box of size $p_j$ for each $j \in \{1,2,\ldots,\np_\multiii\}$, and

\item \label{WJLSit3} 
we decompose each projector box and, in each resulting term, 
we replace each loop with a multiplicative factor of $\nu$, 
we replace each turn-back path with a multiplicative factor of zero,
and we set each diagram containing a loop link to zero, 
arriving with the $(\multii,s)$-valenced link state
\begin{align} 
T\alpha := \lambda_\nu\super{s}(T,\alpha) . 
\end{align}
We note that the number $s$ of defects is preserved in this concatenation, 
and if $s \notin \DefectSet_\multii$, then we have $T\alpha = 0$.
\end{enumerate}
Pictorially, steps~\ref{WJLSit1} and~\ref{WJLSit2} are
\begin{align}
& \vcenter{\hbox{\includegraphics[scale=0.275]{e-WJ_example1_valenced.pdf}}} \quad 
\vcenter{\hbox{\includegraphics[scale=0.275]{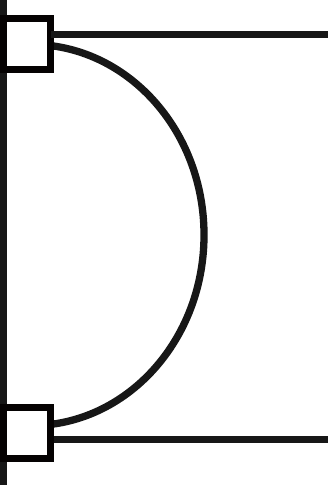}}} \quad := \quad 
\vcenter{\hbox{\includegraphics[scale=0.275]{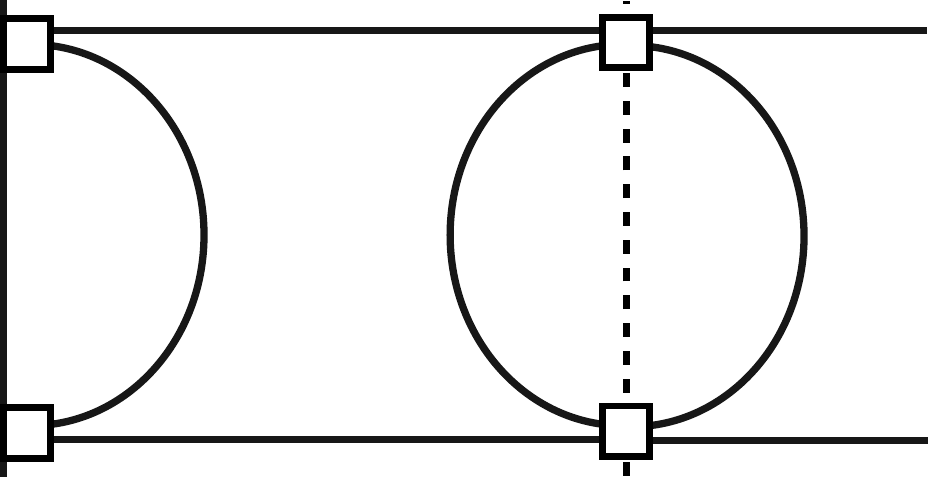}}} \quad = \quad 
\vcenter{\hbox{\includegraphics[scale=0.275]{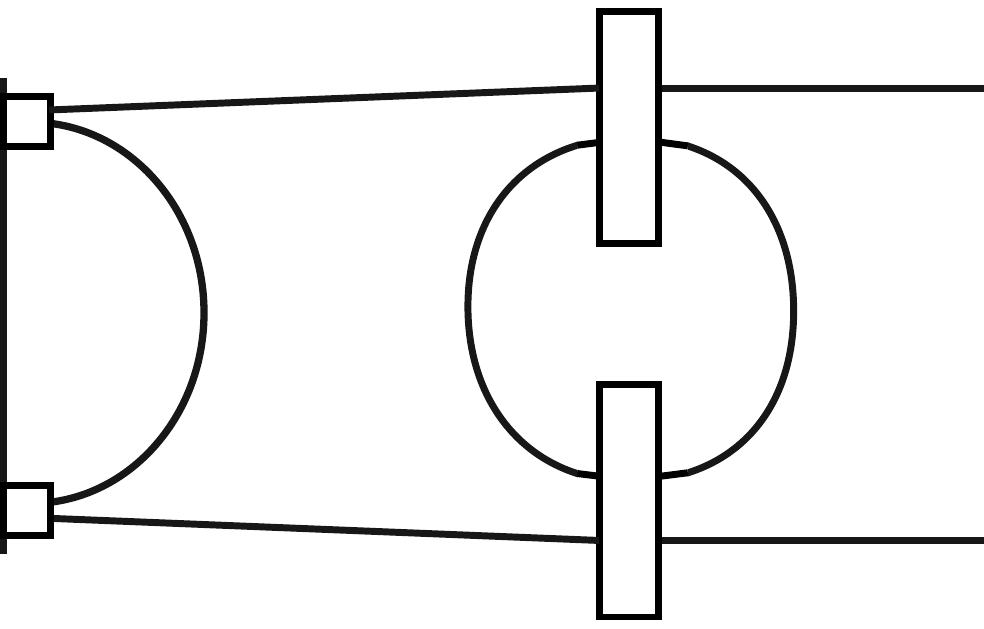} .}}
\end{align}
Followed by step~\ref{WJLSit3}, the above concatenation evaluates to the following:
\begin{align}
& \quad \vcenter{\hbox{\includegraphics[scale=0.275]{e-WJ_example1_valenced.pdf}}} \quad 
\vcenter{\hbox{\includegraphics[scale=0.275]{e-LinkPattern1_valenced_rotated.pdf}}} \quad := \quad 
\vcenter{\hbox{\includegraphics[scale=0.275]{e-WJ_example1_action_halfvalenced.pdf}}}  \\ 
\overset{\eqref{ExampleProj2}}{=} & \quad 
\raisebox{-12pt}{\includegraphics[scale=0.275]{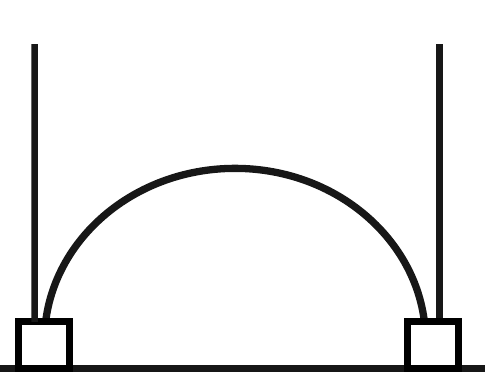}}
\quad + \quad \frac{2}{[2]} \,\, \times \,\, \vcenter{\hbox{\includegraphics[scale=0.275]{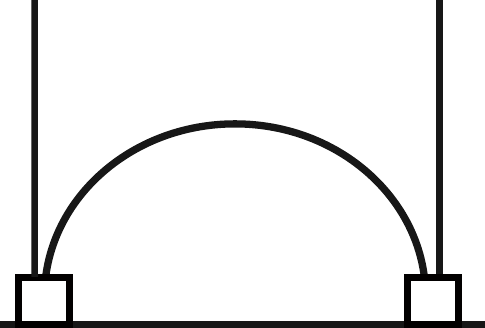}}} 
\quad + \quad \frac{1}{[2]^2} \,\, \times \,\, 
\raisebox{-12pt}{\includegraphics[scale=0.275]{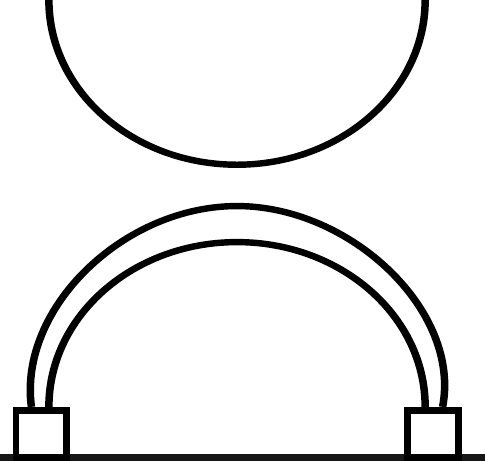}} \quad
\overset{\eqref{numbertwo}}{=} \quad \left( [2] + \frac{2}{[2]} \right) \,\, \times \,\,
\vcenter{\hbox{\includegraphics[scale=0.275]{e-LinkPattern1_valenced.pdf} .}} 
\end{align}

\bigskip

Valenced tangles arise as morphisms of a category $\TL(\nu)$, whose object class and morphism class are respectively 
\begin{align} 
\label{cateob}
\text{Ob} \, \TL(\nu) &= \big\{\multii \in \{ \vec{0} \} \cup \bZpos^\# \, \big| \, \max \multii < \ppmin(q) \big\} , \\
\label{catehom}
\Hom  \TL(\nu) &= \big\{ \TL_\multii^\multiii(\nu) \, \big| \, \text{$\multii, \multiii \in \text{Ob} \, \TL(\nu)$ with $\Summed_\multii + \Summed_\multiii = 0 \Mod 2$} \big\} .
\end{align}
The source and target associated with the tangle $T \in \smash{\TL_\multii^\multiii}(\nu)$ are the objects $\multiii$ and $\multii$ respectively, 
and the identity morphism associated with the object $\multii$ is the unit~\eqref{ValencedCompProjIntro} of 
the valenced Temperley-Lieb algebra $\TL_\multii(\nu)$.
The composition of two morphisms $T, U \in \Hom \TL(\nu)$ is given by $\mu_\nu(T,U)$, 
defined in recipe~(\ref{WJit1}--\ref{WJit3}), and thus depends on the fugacity parameter $\nu \in \bC$.
Finally, $\TL(\nu)$ is in fact a monoidal category, with identity object $\smash{\OneVec{0}}$ and tensor product 
\begin{align} 
\multii \otimes \multiii := \multii \oplus \multiii \qquad \qquad \text{and} \qquad \qquad
T \otimes U := \quad \vcenter{\hbox{\includegraphics[scale=0.275]{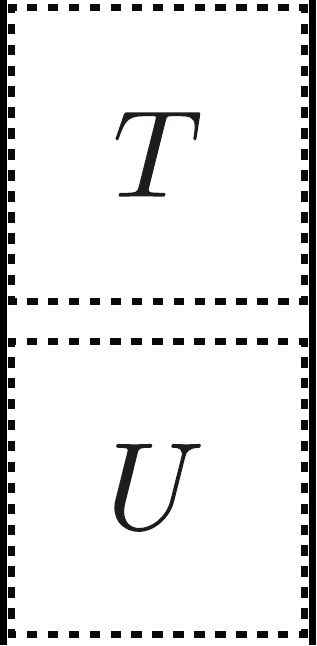} ,}}
\end{align} 
for all objects $\multii, \multiii \in \text{Ob} \, \TL(\nu)$ and morphisms $T, U \in \Hom  \TL(\nu)$,
where $\multii \oplus \multiii$ is the concatenation~\eqref{concatenateMultiindices} of multiindices. 
Analogously, it is also sometimes useful to identify the tensor product $\alpha \otimes \beta$ of 
two valenced link states $\alpha \in \smash{\LS_\multii\super{s}}$ and 
$\beta \in \smash{\LS_\multiii\super{t}}$ with the link state $\gamma \in \smash{\LS_{\multii \oplus \multiii}\super{s + t}}$ 
obtained by concatenating $\alpha$ to the left of $\beta$:
\begin{align} 
\alpha \otimes \beta 
\quad := \quad \vcenter{\hbox{\includegraphics[scale=0.275]{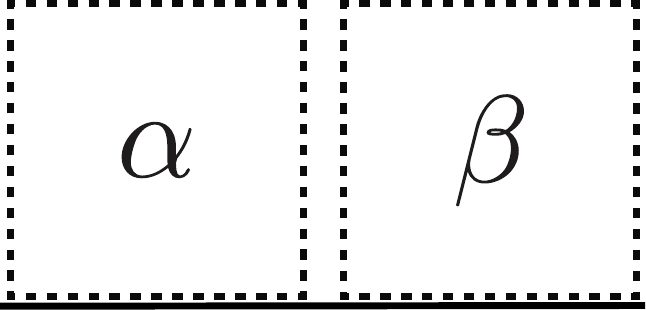} .}}
\end{align}

In appendix~\ref{CategorySect}, we consider a 
subcategory of $\TL(\nu)$, which we call the ``Temperley-Lieb category"~\cite{vt, ck, gl2}. 
We give a minimal set of ``generators" and ``relations" for morphisms of this category.

\subsection{Valenced Temperley-Lieb algebra} \label{ValTLdefSec}

%

In the special case that $\multiii = \multii$, the vector space 
$\smash{\TL_\multii^\multii}(\nu) =: \TL_\multii(\nu)$ is the \emph{valenced Temperley-Lieb algebra}, already 
appearing in section~\ref{MainResultSec}. A generic valenced tangle in $\TL_\multii(\nu)$ is of the form
\begin{align}\label{TLform} 
\vcenter{\hbox{\includegraphics[scale=0.275]{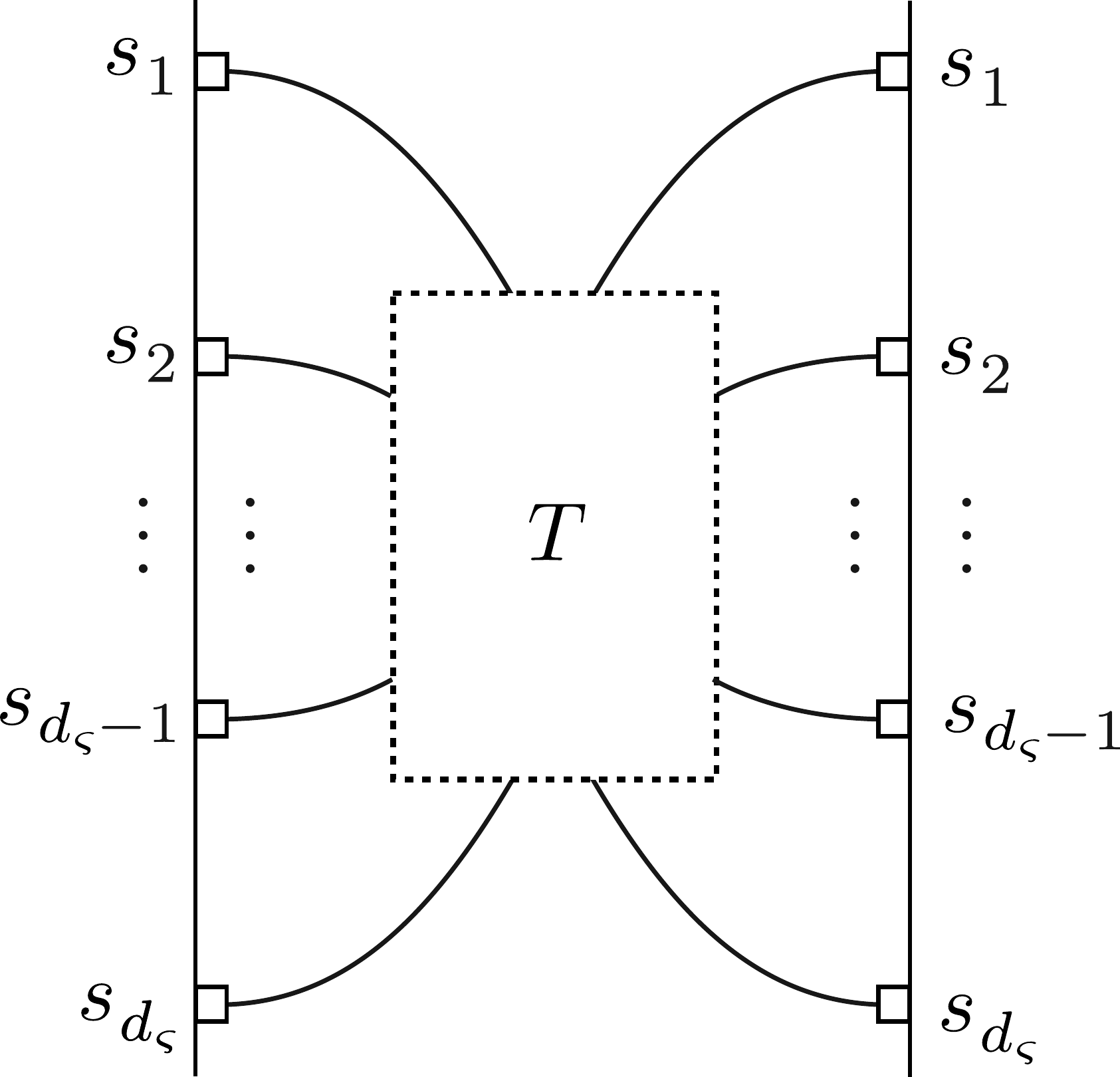} ,}}
\end{align}
for some ordinary tangle $T \in \TL_{\Summed_\multii}(\nu)$.
We call elements of $\TL_\multii(\nu)$ \emph{$\multii$-valenced tangles}. 
If $T$ is an $\Summed_\multii$-link diagram such that~\eqref{TLform} does not vanish (i.e., does not contain loop links),
then we call~\eqref{TLform} a \emph{$\multii$-valenced link diagram.}

$\TL_\multii(\nu)$ is an associative algebra, with multiplication 
given by the bilinear map $\mu_\nu$ defined in recipe~\ref{WJit1}--\ref{WJit3}, and 
\begin{align} \label{ValencedCompProj}
\mathbf{1}_{\TL_\multii} \quad = \quad 
\vcenter{\hbox{\includegraphics[scale=0.275]{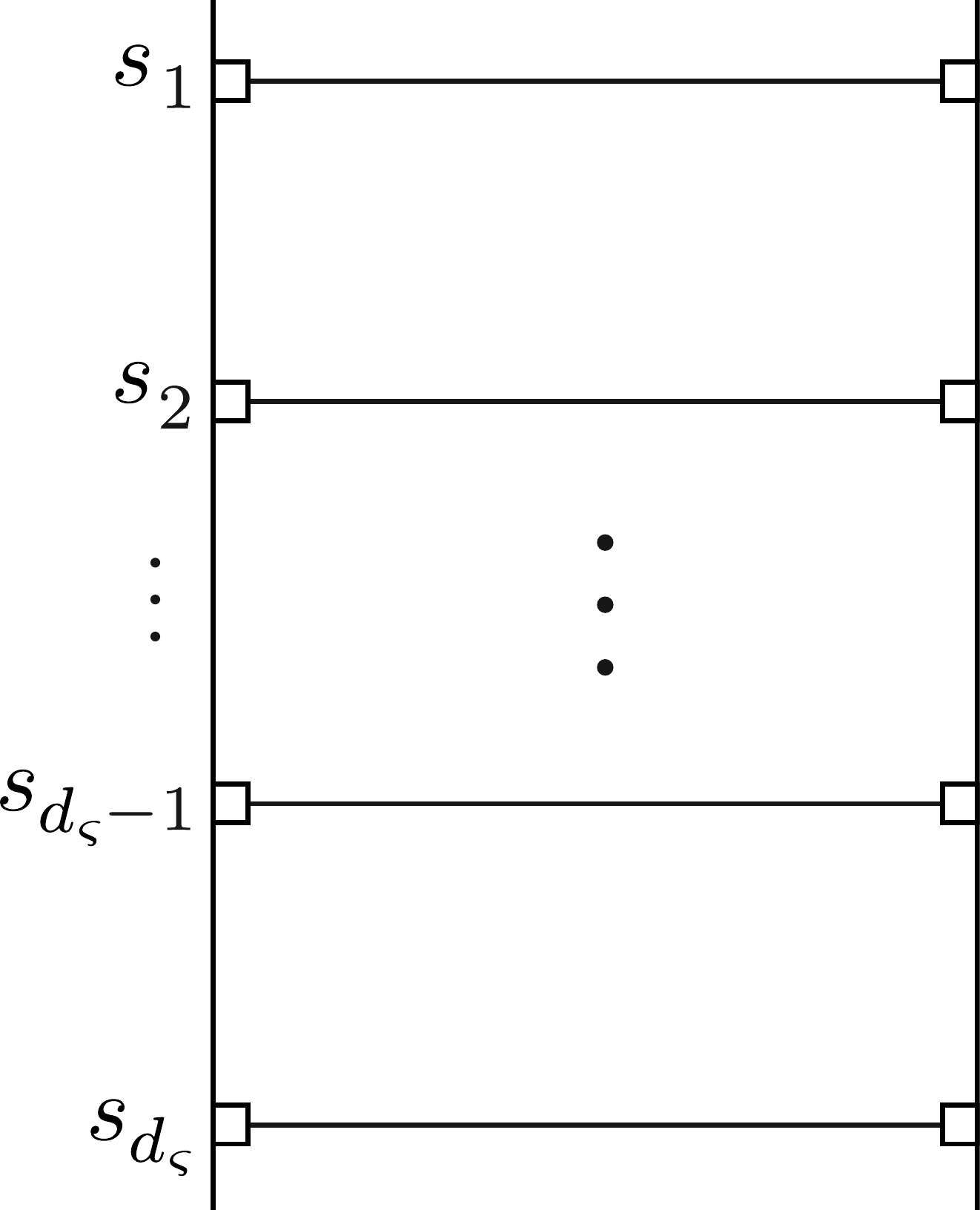}}}
\hphantom{\mathbf{1}_{\TL_\multii} \quad = \quad}
\end{align} 
is the unit of $\TL_\multii(\nu)$. From~\eqref{SomeConds}, we see that 
the multiplication map of $\TL_\multii(\nu)$ is defined only if we have
\begin{align}  \label{maxllppmin}
\max \multii < \ppmin(q),
\end{align}
because if condition~\eqref{maxllppmin} is violated, then the Jones-Wenzl projector boxes cease to be well-defined.
However, with explicit generators for $\TL_\multii(\nu)$, e.g., from proposition~\ref{GeneratorPropTwo} below,
and a complete set of relations for these generators, 
the algebra structure on $\TL_\multii(\nu)$ could be defined more generally
(see also~\cite{fp0}).

\begin{remark}
\textnormal{
One can check that the valenced Temperley-Lieb algebra $\TL_\multii(\nu)$ is a ``cellular algebra''~\cite{gl}.
This follows from~\cite[proposition~\red{2.4}]{fp0} and the isomorphism of item~\ref{MapIt1} 
of corollary~\ref{AnIsoCor} from appendix~\ref{AppWJ}.
}
\end{remark}

As a complex vector space, $\TL_\multii(\nu)$ has the basis
\begin{align} 
\LD_\multii := \LD_\multii^\multii . 
\end{align}
By setting $\multii = \multiii$ in corollary~\ref{WJDimLem1},  
we obtain the following expressions for the dimension of this algebra:
\begin{align} \label{Dim78} 
\Dim_{\multii \oplus \smash{\tilde{\multii}}}\super{0} \underset{\eqref{LSDim2}}{\overset{\eqref{Dim56}}{=}}  \dim \TL_\multii(\nu) 
\overset{\eqref{Dim56}}{=} \sum_{s \, \in \, \DefectSet_\multii} \big(\dim \LS_\multii\super{s}\big)^2.
\end{align}

%

\bigskip

To end this section, 
we present two minimal generating sets for the valenced Temperley-Lieb algebra, 
assuming that $\Summed_\multii < \ppmin(q)$. 
In the second generating set~\eqref{MasterDiagramsTL}, we use the definition of a ``closed three-vertex,'' 
given by~\eqref{3vertex1} in section~\ref{ConformalBlocksSect}.
This result is crucial in our work~\cite{fp3} for obtaining 
a generalization of the quantum Schur-Weyl duality, a result essential to determining unique 
monodromy-invariant CFT correlation functions in~\cite{fp1}.

\begin{restatable}{prop}{GeneratorPropTwo} \label{GeneratorPropTwo} 
Suppose $\Summed_\multii < \ppmin(q)$.
Then the following hold:
\begin{enumerate}
\itemcolor{red}

\item \label{GeneratorPropItem1}
The unit $\mathbf{1}_{\TL_\multii}$~\eqref{ValencedCompProj} together with the 
$\multii$-valenced link diagrams
\begin{align}\label{ValGenerators} 
\vcenter{\hbox{\includegraphics[scale=0.275]{e-Generators0_valenced.pdf} ,}} 
\end{align} 
with $i \in \{1,2,\ldots,\np_\multii-1\}$, 
forms a minimal generating set for the valenced Temperley-Lieb algebra $\TL_\multii(\nu)$.

\item \label{GeneratorPropItem2}
Alternatively $($using notation~\eqref{3vertex1}$)$, 
the collection of all $\multii$-valenced tangles of the form
\begin{align} \label{MasterDiagramsTL} 
\vcenter{\hbox{\includegraphics[scale=0.275]{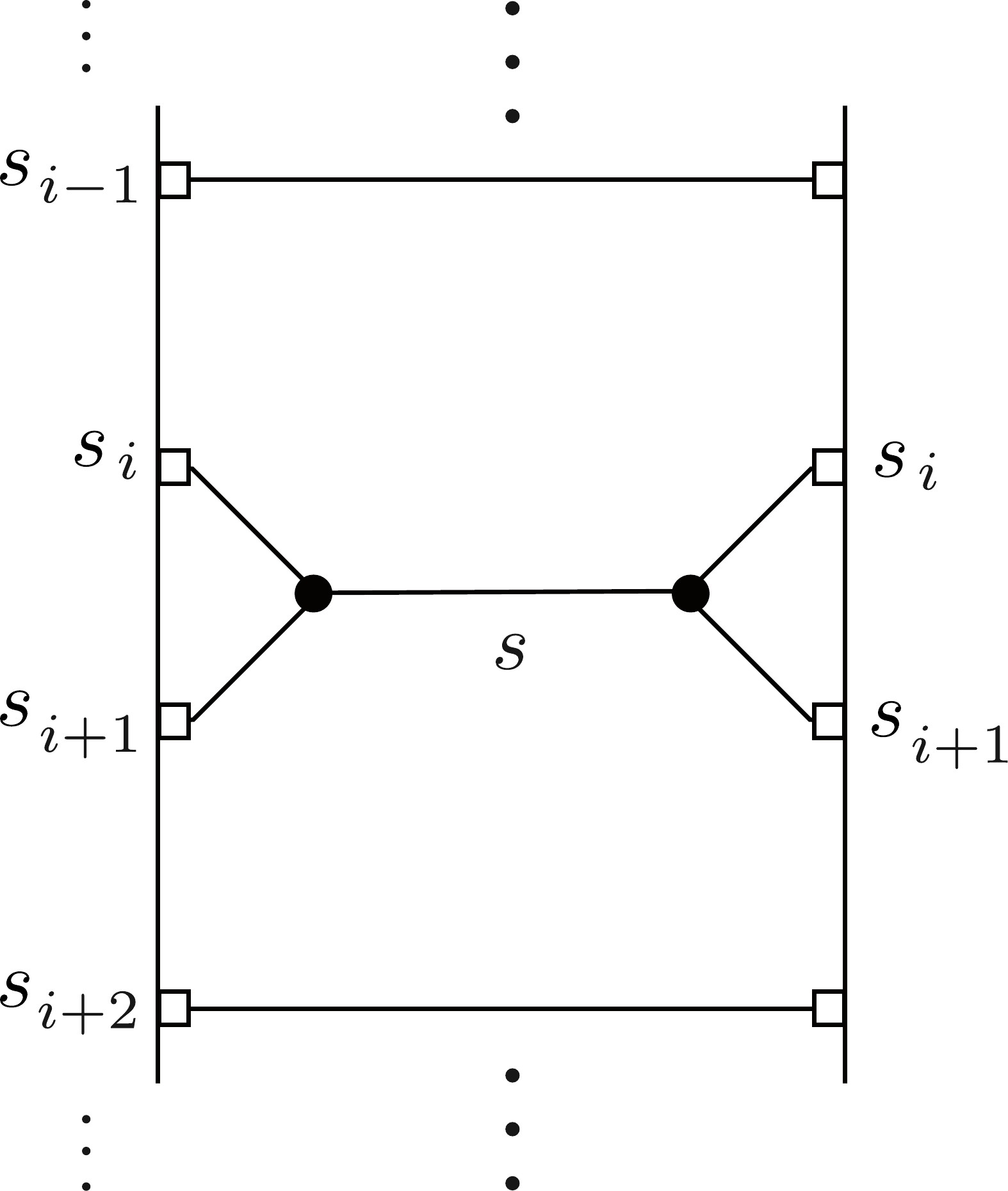} ,}}
\end{align}
with $s \in \DefectSet\sub{\sIndex_i,\sIndex_{i+1}}$ and $i \in \{ 1, 2, \ldots, \np_\multii - 1 \}$, 
forms a minimal generating set for $\TL_\multii(\nu)$.
\end{enumerate}
%
\end{restatable}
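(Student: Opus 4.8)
The plan is to reduce everything to the already-understood ordinary Temperley-Lieb algebra $\TL_{\Summed_\multii}(\nu)$ via the Jones-Wenzl algebra $\WJ_\multii(\nu)$, and then to do the small amount of extra graphical bookkeeping needed to identify the valenced generators. Concretely, by the isomorphism $\TL_\multii(\nu) \cong \WJ_\multii(\nu)$ established in appendix~\ref{AppWJ} (and used in the remark above), it suffices to produce the two generating sets inside $\WJ_\multii(\nu) \subset \TL_{\Summed_\multii}(\nu)$. The hypothesis $\Summed_\multii < \ppmin(q)$ guarantees that every Jones-Wenzl projector of size $\leq \Summed_\multii$ is well-defined and nonzero, so all the recoupling moves of~\cite{kl} collected in appendix~\ref{TLRecouplingSect} are available without divisions by zero.

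For item~\ref{GeneratorPropItem1}: first I would invoke \cite[theorem~\red{1.1}]{fp0} (cited explicitly in the organization section as the source of this proposition), which gives generators and relations for $\WJ_\multii(\nu)$; the generators there correspond diagrammatically to the valenced diagrams~\eqref{ValGenerators}, namely the tangle that caps off the $i$:th and $(i{+}1)$:st nodes by a single turn-back link sandwiched between projector boxes (the valenced analogue of the Temperley-Lieb generator $\Gen_i$). The content to check is (a) that every such diagram lies in $\TL_\multii(\nu)$, which is immediate since it is of the form~\eqref{TLform}, and (b) that these together with $\mathbf 1_{\TL_\multii}$ span the algebra. For (b) I would argue as in~\ref{result4}: any $\multii$-valenced link diagram, being $\WJProj_{\sIndex_1}\otimes\dots\otimes\WJProj_{\sIndex_\np}$ composed with an ordinary link diagram $T\in\TL_{\Summed_\multii}(\nu)$ and then again with the projectors, can be written as a word in the ordinary $\Gen_j$'s sandwiched by projectors; pushing the bounding projectors inward using~\eqref{ProjectorID1} and~\eqref{ProjectorID2} collapses each such word, term by term, into a product of the valenced generators~\eqref{ValGenerators} (any $\Gen_j$ that straddles two cables becomes a valenced generator, any $\Gen_j$ internal to a single cable is killed or absorbed by the adjacent projector via~\ref{wj2}). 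Minimality follows because deleting any one generator~\eqref{ValGenerators} leaves a proper subalgebra: the deleted generator does not lie in the span of the others, seen e.g. by evaluating the bilinear form or by a rank/grading count using \eqref{WJDirSum} — the diagram with a single turn-back at position $i$ has fewer crossing links than $\mathbf 1_{\TL_\multii}$ and cannot be produced without using some generator that touches nodes $i,i{+}1$.

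For item~\ref{GeneratorPropItem2}: the diagrams~\eqref{MasterDiagramsTL} are built from the closed three-vertices of~\eqref{3vertex1}, parametrized by $s\in\DefectSet\sub{\sIndex_i,\sIndex_{i+1}}$; by lemma~\ref{SpecialDefLem} this set is exactly $\{|\sIndex_i-\sIndex_{i+1}|,\dots,\sIndex_i+\sIndex_{i+1}\}$. The strategy is to show the two generating sets mutually generate each other. Expanding a three-vertex of internal index $s$ in terms of Jones-Wenzl projectors and applying the recoupling/fusion identities of appendix~\ref{TLRecouplingSect} expresses~\eqref{MasterDiagramsTL} as a linear combination of products of the diagrams~\eqref{ValGenerators} (the $s=\sIndex_i+\sIndex_{i+1}$ term being, up to the unit, essentially $\mathbf 1_{\TL_\multii}$, and the extreme $s=|\sIndex_i-\sIndex_{i+1}|$ term being the maximal-turn-back diagram which is a multiple of~\eqref{ValGenerators}); conversely, each generator~\eqref{ValGenerators} appears, by triangular change of basis in the index $s$, in the span of the~\eqref{MasterDiagramsTL} for fixed $i$. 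Minimality of the second set follows from a grading argument: a three-vertex diagram with internal index $s$ contributes a link diagram whose number of crossing links is governed by $s$, so no one of them is redundant.

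\textbf{Main obstacle.} The delicate step is the bookkeeping in item~\ref{GeneratorPropItem1}(b): showing that an arbitrary word in the ordinary generators, once enclosed in projector boxes, genuinely reduces to a product of the valenced generators rather than merely to a linear combination involving ``partial'' turn-backs inside a cable. This requires a careful induction on the number of ordinary generators in the word, at each step using~\ref{wj2} to annihilate or absorb generators lying strictly inside a cable and using the recoupling moves to re-express a straddling generator in valenced form — essentially the same diagrammatic manipulations underlying \cite[theorem~\red{1.1}]{fp0}. If one is willing to quote that theorem wholesale, the obstacle largely dissolves and what remains is the routine translation between the generators-and-relations picture of $\WJ_\multii(\nu)$ and the diagrammatic generators~\eqref{ValGenerators}, plus the triangular-change-of-basis argument relating~\eqref{ValGenerators} and~\eqref{MasterDiagramsTL}.
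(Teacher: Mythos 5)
Your proposal follows essentially the same route as the paper: the paper's entire proof is to combine \cite[theorem~\red{1.1}]{fp0} with the isomorphism $\TL_\multii(\nu) \cong \WJ_\multii(\nu)$ from item~\ref{MapIt1} of corollary~\ref{AnIsoCor} in appendix~\ref{AppWJ}, which is exactly the reduction you make at the outset and fall back on in your final paragraph. The additional diagrammatic sketches of generation and minimality are attempts to reprove the content of the cited theorem and are not part of the paper's argument (and, as you note yourself, they are where all the real work would lie), but since you ultimately quote that theorem wholesale, your proof matches the paper's.
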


\begin{proof}
The claim follows by combining~\cite[theorem~\red{1.1}]{fp0} with item~\ref{MapIt1} of corollary~\ref{AnIsoCor} 
from appendix~\ref{AppWJ}.
\end{proof}

The unit~\eqref{ValencedCompProj} of $\TL_\multii(\nu)$
is obtained from generators~\eqref{MasterDiagramsTL} via the 
relation~\cite[equation~(\red{3.1})]{fp0},~\cite{kl}
\begin{align} \label{AllProjesSumToOneTL}
\vcenter{\hbox{\includegraphics[scale=0.275]{e-CompositeProjector_valenced.pdf}}} 
\quad = \quad 
\sum_{s \, \in \, \DefectSet\sub{\sIndex_i,\sIndex_{i+1}}} \frac{(-1)^s [s+1]}{\ThetaNet(\sIndex_i,\sIndex_{i+1},s)} \,\, \times \,\,
\vcenter{\hbox{\includegraphics[scale=0.275]{e-Generators_3Vertex_valenced.pdf} ,}}
\end{align}
for any $i \in \{ 1, 2, \ldots, \np_\multii - 1 \}$.
In~\cite{fp3}, we prove that each diagram~\eqref{MasterDiagramsTL} 
equals a nonzero multiple of a submodule projector 
in a tensor product representation of the Hopf algebra $U_q(\mathfrak{sl}_2)$ 
(with multiplicative constant as in the above sum).
Relation~\eqref{AllProjesSumToOneTL} says that summing over all of these projectors gives the identity operator.
Another way to view this relation is a decomposition of the unit~\eqref{ValencedCompProj} of $\TL_\multii(\nu)$
into a sum of orthogonal (but not central) idempotents: indeed by identity~\eqref{LoopErasure1} from appendix~\ref{TLRecouplingSect}, we have
\begin{align} \label{orthogonalidemTL}
\hspace*{-5mm}
\vcenter{\hbox{\includegraphics[scale=0.275]{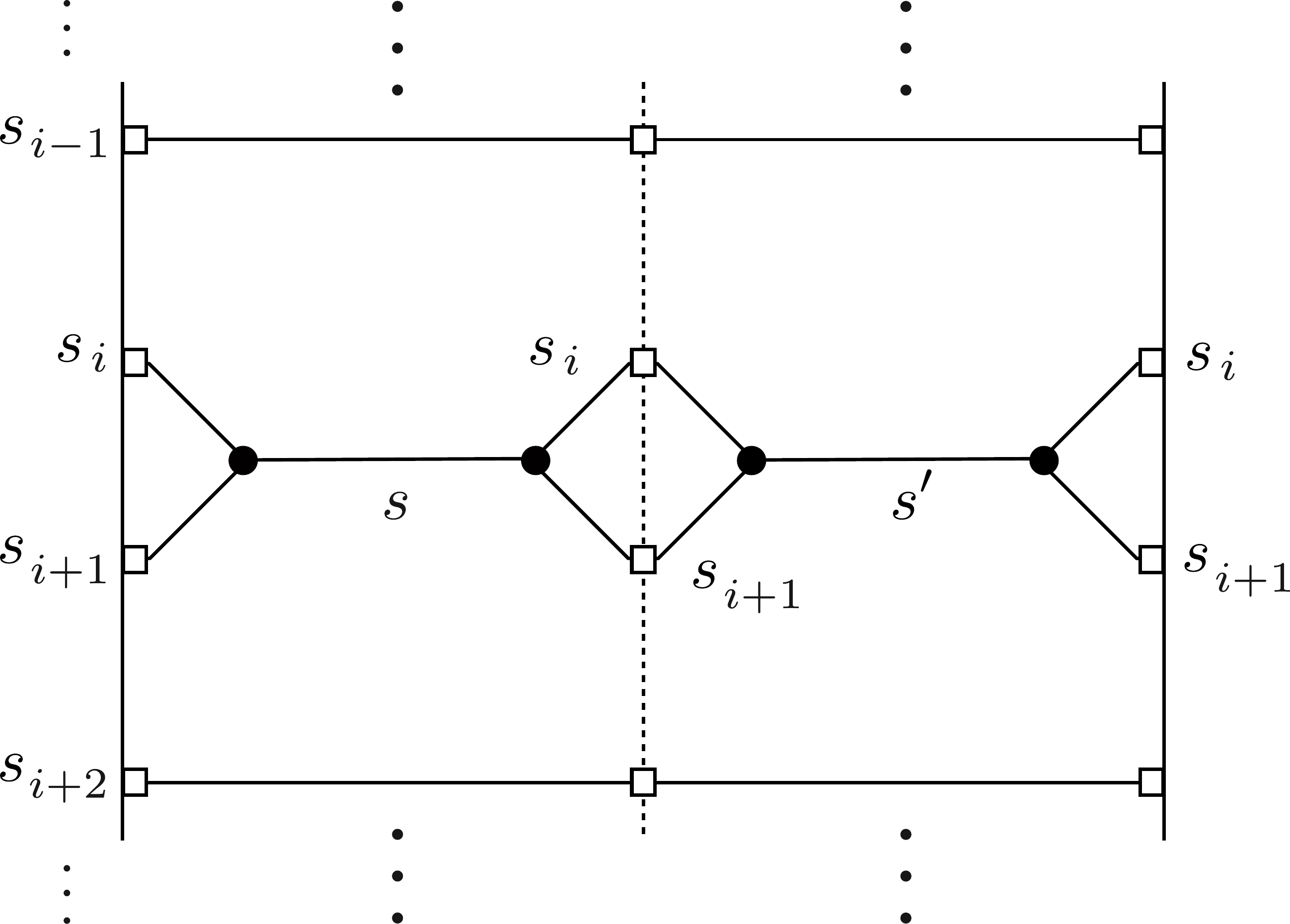}}}
\overset{\eqref{LoopErasure1}}{=} 
\; \delta_{s,s'} \frac{\ThetaNet(\sIndex_i,\sIndex_{i+1},s)}{(-1)^s [s+1]} \,\, \times \,\,
\vcenter{\hbox{\includegraphics[scale=0.275]{e-Generators_3Vertex_valenced.pdf} .}} 
\end{align}

\begin{conj} \label{GeneratorConj} 
Item~\ref{GeneratorPropItem1} in proposition~\ref{GeneratorPropTwo} holds whenever $\max \multii < \ppmin(q)$. 
\end{conj}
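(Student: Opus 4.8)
Since the one-valence case $\multii = \OneVec{n}$ reduces to the classical fact that $\TL_n(\nu)$ is generated, for every $\nu$, by $e_1,\dots,e_{n-1}$ (cf.\ item~\ref{result4}), I may assume $\max\multii \geq 2$; then the hypothesis forces $\ppmin(q) > 2$, so $q \neq \pm\ii$ and hence $\nu \neq 0$, a fact used below. The plan is to argue directly with the diagrammatic description of $\TL_\multii(\nu)$ from section~\ref{ValTLdefSec}, which remains well defined under the weaker hypothesis $\max\multii < \ppmin(q)$ because every projector box occurring in the unit~\eqref{ValencedCompProj} and in the generating diagrams~\eqref{ValGenerators} has size at most $\max\multii$, and to prove by induction on the number of nodes $\np_\multii$ that every $\multii$-valenced link diagram lies in the unital subalgebra $\mathcal{G} \subseteq \TL_\multii(\nu)$ generated by the diagrams~\eqref{ValGenerators}. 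The base case $\np_\multii = 1$ is immediate: a $\multii$-valenced link diagram on a single node can contain only through-strands (any other link would be a loop link), so $\TL_\multii(\nu) = \bC\,\mathbf{1}_{\TL_\multii}$. For the inductive step, write $\multii = \lds \oplus (t)$ as in~\eqref{hats}. By lemma~\ref{WJSandwichLem}, $\TL_\multii(\nu)$ is spanned by the diagrams $\BarAction\alpha\quad\beta\BarAction$ with $\alpha,\beta \in \smash{\LP_\multii\super{s}}$ and $s \in \DefectSet_\multii$; and since $\sIndex_{\np_\multii-1}, t < \ppmin(q)$, the theta-nets $\ThetaNet(\sIndex_{\np_\multii-1}, t, s')$ appearing in~\eqref{AllProjesSumToOneTL} and~\eqref{orthogonalidemTL} are nonzero, while for a fixed standard $(\multii,s)$-link pattern $\eta_s$ the network evaluation $\BiForm{\eta_s}{\eta_s}$ (a nonzero power of $\nu$ times values of small theta-nets) is nonzero. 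Hence, by the standard passage from cellular generators to diagrammatic normal forms, it suffices to show that each $\BarAction\alpha\quad\beta\BarAction$ arises from the fixed diagram $\BarAction\eta_s\quad\eta_s\BarAction$ by left- and right-multiplication with elements of $\mathcal{G}$.

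To this end I would peel off the last cable, following the recursive structure of lemmas~\ref{SetRecursLem} and~\ref{LSDimLem2}: write $\alpha$ in the form~\eqref{BoxLinkPatt0}, consisting of a unique $\lds$-valenced link pattern $\hat\alpha \in \smash{\LP_{\lds}\super{r}}$ (with $r \in \DefectSet_{\lds} \cap \DefectSet\sub{s,t}$) carrying all but the last cable, together with boundary data on the last cable, and similarly decompose $\eta_s$. The part of $\BarAction\alpha\quad\eta_s\BarAction$ coming from $\hat\alpha$ is produced, by the inductive hypothesis, using only the diagrams~\eqref{ValGenerators} with index $i \leq \np_\multii - 2$; these coincide, after adjoining the last cable, with the generating diagrams for $\lds$, and $\mathbf{1}_{\TL_{\lds}} \otimes \mathbf{1}_{\TL\sub{t}} = \mathbf{1}_{\TL_\multii}$. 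The remaining ``boundary moves'' act only on cables $\np_\multii - 1$ and $\np_\multii$; after resolving the two projector boxes $\WJProj\sub{\sIndex_{\np_\multii-1}}$ and $\WJProj\sub{t}$ via the recoupling identities of appendix~\ref{TLRecouplingSect}, which involve only cables of size at most $\max\multii < \ppmin(q)$ and hence only nonzero theta- and tetrahedral nets, each boundary move becomes a $\bC$-linear combination of the unit, the diagram~\eqref{ValGenerators} with index $i = \np_\multii - 1$, and diagrams~\eqref{ValGenerators} with $i \leq \np_\multii - 2$. Assembling the pieces places $\BarAction\alpha\quad\eta_s\BarAction$, and therefore every $\BarAction\alpha\quad\beta\BarAction$, in $\mathcal{G}$; minimality then follows as in proposition~\ref{GeneratorPropTwo}, since each diagram~\eqref{ValGenerators} acts on a suitable standard module $\smash{\LS_\multii\super{s}}$ by a transformation not lying in the subalgebra generated by the others.

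The hard part, and the reason the statement is only conjectured, is keeping the intermediate diagrams under control when $\Summed_\multii \geq \ppmin(q)$. In the proof of proposition~\ref{GeneratorPropTwo} via~\cite[theorem~\red{1.1}]{fp0} one may enclose every intermediate tangle inside the global Jones-Wenzl projector $\WJProj\sub{\Summed_\multii}$ and then apply Temperley-Lieb recoupling theory \`a la Kauffman and Lins~\cite{kl} without restriction; once $\Summed_\multii \geq \ppmin(q)$ this projector does not exist, and a product of $\multii$-valenced generators can route a cable of size $\geq \ppmin(q)$ through a region where no single recoupling identity applies. Turning the ``peel one cable, recouple only the last pair'' strategy above into a terminating rewriting procedure that never creates such a cable — in particular, verifying that the boundary moves genuinely close up among the generators, and that the induction on $\np_\multii$ does not reintroduce large projectors through the $\hat\alpha$-part — is the substantive point, and amounts to extending the normal-form algorithm of~\cite{fp0} from the range $\Summed_\multii < \ppmin(q)$ to the full range $\max\multii < \ppmin(q)$. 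I expect the cellular structure of $\TL_\multii(\nu)$, which is available throughout $\max\multii < \ppmin(q)$ (cf.\ the remark following~\eqref{ValencedCompProj}), to provide the right bookkeeping: one should establish that $\mathcal{G}$ contains the cellular diagrams $\BarAction\alpha\quad\eta_s\BarAction$ for the standard right factors $\eta_s$ indexed as in lemma~\ref{LSDimLem2}, and then conclude by comparing dimensions through~\eqref{Dim78}.
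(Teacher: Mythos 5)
This statement is conjecture~\ref{GeneratorConj}: the paper itself offers no proof of it (proposition~\ref{GeneratorPropTwo} is established only under the stronger hypothesis $\Summed_\multii < \ppmin(q)$, by transporting~\cite[theorem~1.1]{fp0} through the isomorphism with the Jones-Wenzl algebra of corollary~\ref{AnIsoCor}), and your proposal does not close the gap either — indeed you concede as much in your final paragraph, so what you have is a strategy outline rather than a proof. The concrete failure point is exactly where you invoke the decomposition~\eqref{AllProjesSumToOneTL} and the idempotent relation~\eqref{orthogonalidemTL} for the last pair of cables: those identities involve closed three-vertices~\eqref{3vertex1} whose internal cable has size $s \in \DefectSet\sub{\sIndex_{\np_\multii-1},t}$, and this $s$ ranges up to $\sIndex_{\np_\multii-1}+t$, which under the hypothesis $\max \multii < \ppmin(q)$ may well be $\geq \ppmin(q)$. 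In that range the projector box of size $s$ required by~\eqref{3vertex1} does not exist, and the factors $\ThetaNet(\sIndex_{\np_\multii-1},t,s)$ can vanish: lemmas~\ref{ThetaLem} and~\ref{ThetaInFiniteAndNonzeroLem} control theta-nets only when the relevant sizes stay below $\ppmin(q)$. Your assertion that these theta-nets are nonzero ``since $\sIndex_{\np_\multii-1}, t < \ppmin(q)$'' overlooks the third argument, which is the whole difficulty. Likewise, the claim that $\BiForm{\eta_s}{\eta_s} \neq 0$ for a fixed link pattern $\eta_s$ is false in general at roots of unity: the radical of $\LS_\multii\super{s}$ can be nonzero and even equal to all of $\LS_\multii\super{s}$ (section~\ref{rofSect32}), so the ``standard passage from cellular generators to diagrammatic normal forms'' you appeal to presupposes a nondegeneracy that is exactly what fails outside the range $\Summed_\multii < \ppmin(q)$.

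For what it is worth, your peeling-and-recoupling induction on $\np_\multii$, with the cellular bookkeeping and the dimension count via~\eqref{Dim78}, is a reasonable plan and is in the same spirit as the normal-form arguments of~\cite{fp0}; but the step you defer — a terminating rewriting procedure for the ``boundary moves'' that provably never requires a projector or three-vertex of size $\geq \ppmin(q)$, or an argument that bypasses the degenerate Gram forms — is precisely the open content of the conjecture, not a technicality. As written, the proposal establishes nothing beyond the case $\Summed_\multii < \ppmin(q)$ already covered by proposition~\ref{GeneratorPropTwo}, and the conjecture remains unproven.
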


With a complete set of generators and relations for the valenced Temperley-Lieb algebra $\TL_\multii(\nu)$,
one could extend its definition to the range $\max \multii \geq \ppmin(q)$.
Such a more general definition might be useful in applications, e.g., for logarithmic conformal field theories 
and critical planar statistical mechanics models, 
where the assumption that $\max \multii < \ppmin(q)$ can be violated. 
We leave this generalization for future work, pointing out that a special case appears in~\cite{mrr}
and that in~\cite{fp0}, we find relations for the generators in other special cases.

\section{Standard modules} \label{StdModulesSect}

In this section, we begin to investigate the representation theory of the valenced Temperley-Lieb algebra $\TL_\multii(\nu)$.
Fixing terminology, we define a \emph{representation} of an associative algebra $\mathsf{A}$ to be a homomorphism 
$\rho \colon \mathsf{A} \longrightarrow \End \mathsf{V}$ of algebras mapping $\mathsf{A}$ in the space $\End \mathsf{V}$ of endomorphisms
of some vector space $\mathsf{V}$, and we call the pair $(\mathsf{V}, \rho)$ an $\mathsf{A}$-\emph{module}.
We shorten the notation by writing $a v := \rho(a)(v)$ for all $a \in \mathsf{A}$ and $v \in \mathsf{V}$.
Suppose $\mathsf{V}$ is not zero.
We say that $\mathsf{V}$ is \emph{simple}, and $\rho$ is \emph{irreducible}, if $\mathsf{V}$ contains 
no non-zero proper submodules. Also, we say that $\mathsf{V}$ is \emph{semisimple}, and $\rho$ is \emph{completely reducible}, 
if $\mathsf{V}$ can be decomposed into a direct sum of simple submodules. 
Finally, we say that $\mathsf{V}$ and $\rho$ are \emph{indecomposable} 
if $\mathsf{V}$ cannot be decomposed into a direct sum of two non-zero submodules. 

%

For each $s \in \DefectSet_\multii$, the bilinear map $\smash{\lambda_\nu\super{s} \colon \TL_\multii(\nu) \times \LS_\multii\super{s} \longrightarrow \LS_\multii\super{s}}$,
defined in recipe~\ref{WJLSit1}--\ref{WJLSit3} in section~\ref{ValencedCompositionSec},
endows the space $\smash{\LS_\multii\super{s}}$ of $(\multii,s)$-valenced link states
with a $\TL_\multii(\nu)$-module structure. Hence, we call $\smash{\LS_\multii\super{s}}$ a \emph{valenced standard module}, 
or simply a ``standard module."  Also, we call the direct sum $\LS_\multii$ from~\eqref{LSDirSum2}, now a $\TL_\multii(\nu)$-module too,
the \emph{valenced link state module} of $\TL_\multii(\nu)$, or simply the ``link state module."

The standard modules $\smash{\LS_\multii\super{s}}$ and their radicals play key roles in the representation theory of $\TL_\multii(\nu)$.
Following the approach of~\cite{jm79, bw,gl2,rsa}, we study 
them via a natural bilinear form
$\BiForm{\cdot}{\cdot} \colon \smash{\LS_\multii\super{s}} \times \smash{\LS_\multii\super{s}} \longrightarrow \bC$, 
that we define in section~\ref{BilinFormSec}.
In the spirit of~\cite{rsa}, we take a constructive approach not relying on the general theory 
of cellular algebras~\cite{gl2} of J.~Graham and G.~Lehrer.  
In the language of~\cite{gl2}, the standard modules are ``cell modules''
and the bilinear form can be seen to be associated to a cellular basis for $\TL_\multii(\nu)$,
see~\cite[section~\red{2}]{fp0}.

A key result in this section is proposition~\ref{GenLem2} in section~\ref{LinkStateModSect}, which
says that if the bilinear form $\BiForm{\cdot}{\cdot}$ is not identically zero on it, then
the standard module $\smash{\LS_\multii\super{s}}$ is indecomposable and the radical
\begin{align}\label{radLns}
\rad \LS_\multii\super{s} := \big\{ \alpha \in \LS_\multii\super{s} \,\big|\, \text{$\BiForm{\alpha}{\beta} = 0$ for all $\beta \in \LS_\multii\super{s}$} \big\} 
\end{align}
is the maximal proper submodule of $\smash{\LS_\multii\super{s}}$. In particular, 
$\smash{\LS_\multii\super{s}}$ is simple if and only if its radical is trivial, and 
otherwise, its quotient by its radical is simple.  
This is the first step to determining all of the simple modules of $\TL_\multii(\nu)$. 
Furthermore, 
corollary~\ref{nonisoCor2} shows that the nontrivial quotient modules are all non-isomorphic.

Another important result is corollary~\ref{PreFaithfulCor} in section~\ref{FaithfulSect3}, which says that 
the link state representation of the valenced Temperley-Lieb algebra $\TL_\multii(\nu)$ on the link state module $\LS_\multii$ 
is faithful if and only if the radical of $\LS_\multii$ is trivial.

Most of the results stated in this section depend on properties of the radical~\eqref{radLns}. 
To understand the scope of these results, we completely and explicitly determine these radicals in section~\ref{RadicalSect} (proposition~\ref{BigTailLem} 
and theorem~\ref{BigTailLem2}).

\subsection{Networks and the link state bilinear form} \label{BilinFormSec}

We first define the link state bilinear form for the special case when $\multii = \OneVec{n}$ for some $n \in \bZnn$.
For this purpose, we introduce the notion of a \emph{network}: a collection of nonintersecting, non-self-intersecting planar loops and paths within a rectangle. 
A path in a network can be a \emph{through-path}, which is a curve that respectively enters and exits the network at the bottom and top sides of the rectangle, or
a \emph{turn-back path}, which enters and exits the network at the same side of the rectangle, either top or bottom:
\begin{align}
\vcenter{\hbox{\includegraphics[scale=0.275]{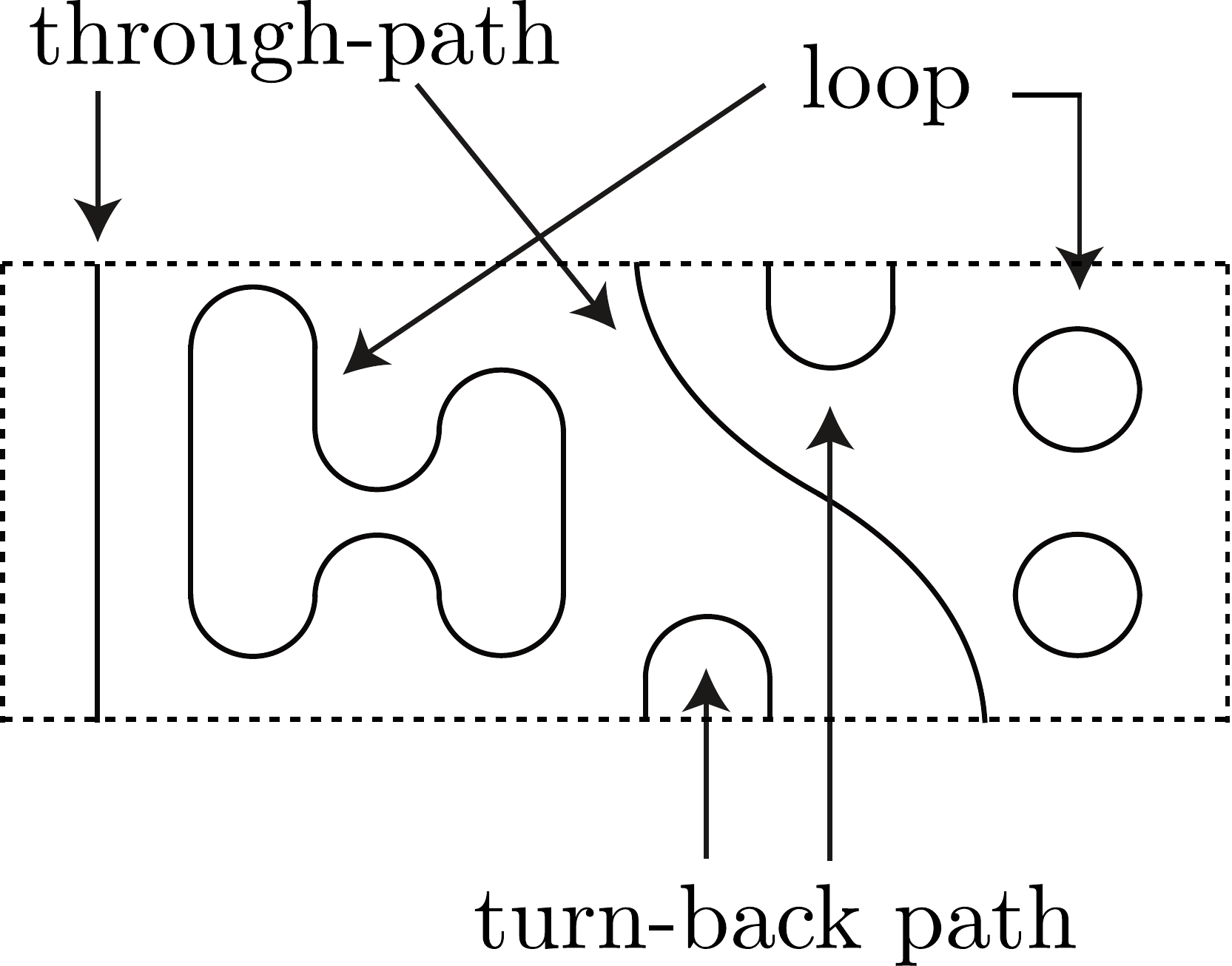} .}}
\end{align}

We assign all loops, through-paths, and turn-back paths in $T$ the following weights in $\bC$:
\begin{alignat}{7} 
\label{LoopWeight} 
& \text{loop weight (fugacity):} \qquad \qquad 
& \vcenter{\hbox{\includegraphics[scale=0.275]{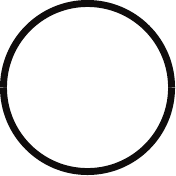}}}  \qquad & \text{and} \qquad  
& \raisebox{-18pt}{\includegraphics[scale=0.275]{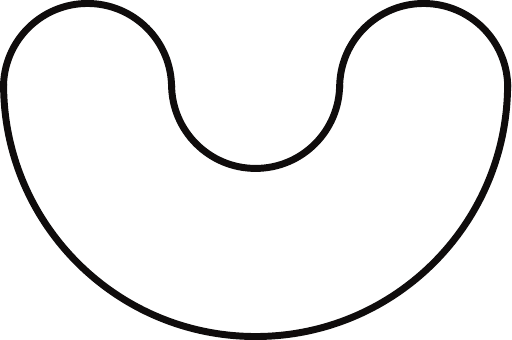}}  \qquad & \text{and} \qquad
& \raisebox{-11pt}{\includegraphics[scale=0.275]{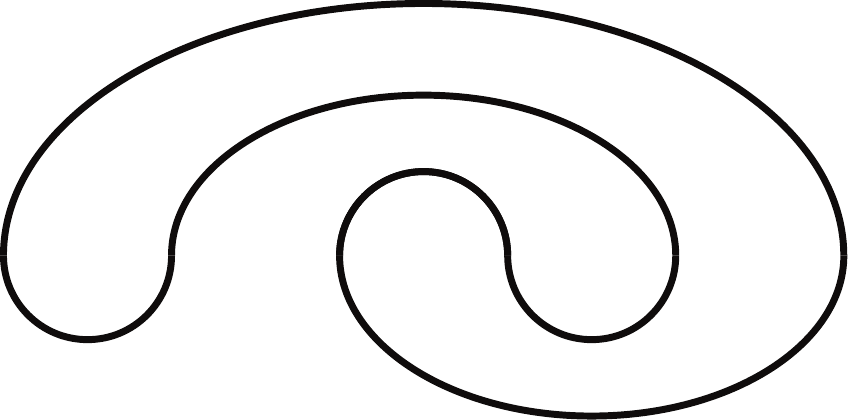}}  \qquad & \text{etc.}  
\quad = \quad \nu , \\[1em]
\label{ThroughPathWeight}
& \text{through-path weight:} \qquad \qquad
& \vcenter{\hbox{\hspace*{-4mm} \includegraphics[scale=0.275]{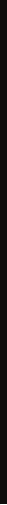}}}  \qquad & \text{and} \qquad
& \vcenter{\hbox{\includegraphics[scale=0.275]{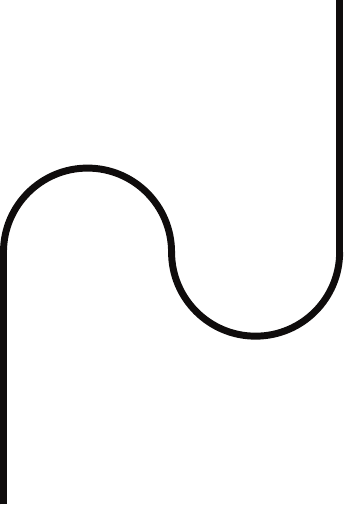} \hspace*{2mm}}}  \qquad & \text{and} \qquad
& \vcenter{\hbox{\includegraphics[scale=0.275]{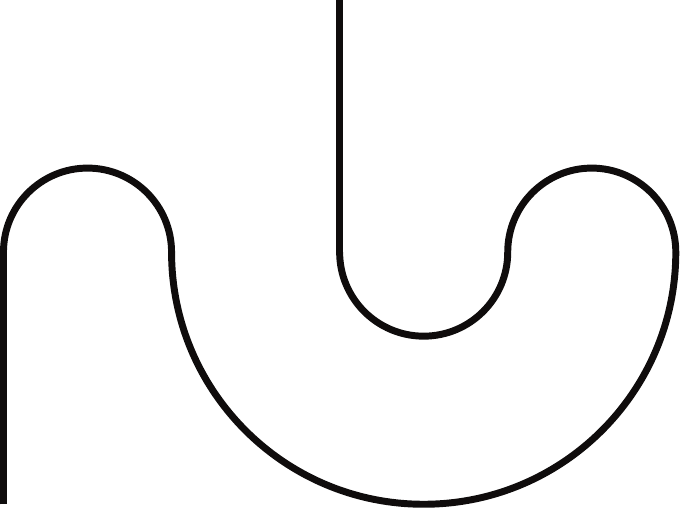}\hspace*{3mm}}}  \qquad & \text{etc.} 
\quad = \quad 1 , \\[1em]
\label{TurnBack0}
& \text{turn-back path weight:} \qquad \qquad
& \raisebox{-5pt}{\includegraphics[scale=0.275]{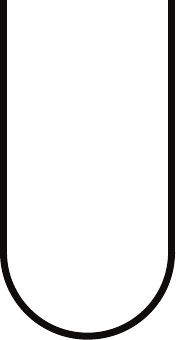}}  \qquad & \text{and} \qquad
& \raisebox{-5pt}{\includegraphics[scale=0.275]{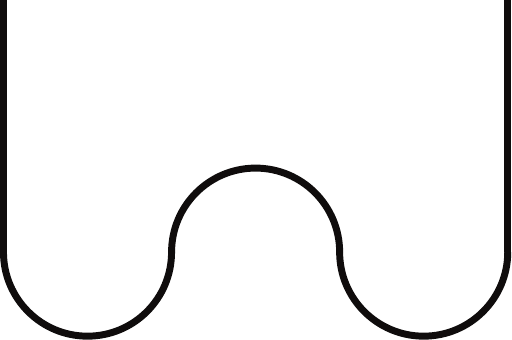}}  \qquad & \text{and} \qquad
& \raisebox{-19pt}{\includegraphics[scale=0.275]{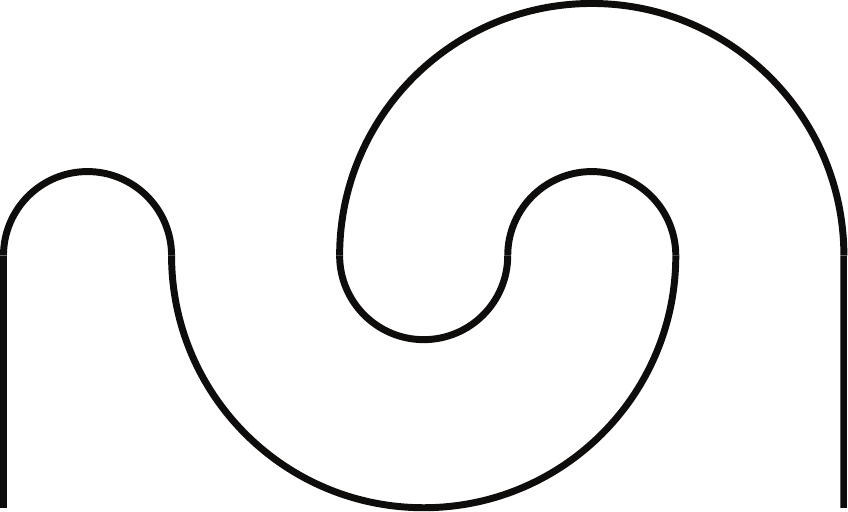}}  \qquad & \text{etc.} 
\quad = \quad 0 .
\end{alignat}
(We note that rules~(\ref{LoopWeight},~\ref{TurnBack0}) are also used in diagram concatenation, 
as discussed in section~\ref{ValencedCompositionSec}.)
With ``\# loops in $T$" equaling the number of loops in the network $T$, 
we define the \emph{evaluation of $T$} as the complex number
\begin{align} \label{evT} 
(\, T \,) &:= \prod \{ \text{the weights of all connected components in the network $T$} \} \\
\label{evT2} & = 
\begin{cases} 
\nu^{\textnormal{\# loops in $T$}} , 
& \textnormal{if the network $T$ has no turn-back path} , \\ 
0 , & \textnormal{if the network $T$ has a turn-back path.}
\end{cases}
\end{align}

Now, using the notion of a network and its evaluation, we define a bilinear form on the link state module $\LS_n$. 
For two link patterns 
$\alpha,\beta \in \LP_n$, we horizontally reflect $\alpha$ so it is upside down, we concatenate it to $\beta$ from below, and delete the overlapping 
horizontal lines of $\alpha$ and $\beta$.  The resulting diagram is a network $\alpha \BarAction \beta$.  For instance, we have
\begin{align} 
\label{LinkStateRule} 
& \alpha \; = \; \raisebox{1pt}{\includegraphics[scale=0.275]{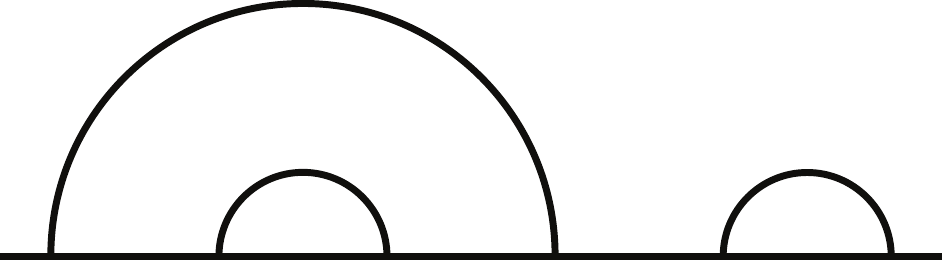}} \; , \qquad 
\beta \; = \; \raisebox{1pt}{\includegraphics[scale=0.275]{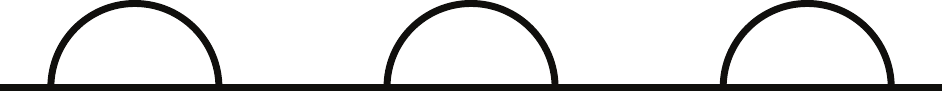}}
\qquad \qquad \Longrightarrow \qquad \qquad
\alpha \BarAction \beta \; = \;
\raisebox{-19pt}{\includegraphics[scale=0.275]{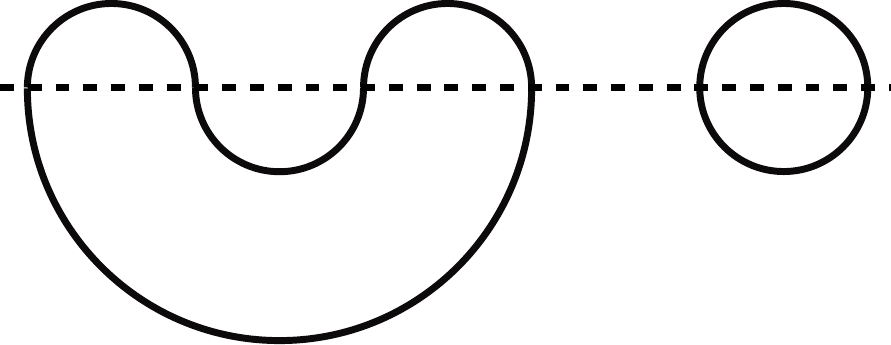}} \; , \\[.3em]
\label{LinkStateRuleDefect2}
& \alpha \; = \; \raisebox{1pt}{\includegraphics[scale=0.275]{e-Connectivities9.pdf}} \; , \qquad 
\beta \; = \; \raisebox{1pt}{\includegraphics[scale=0.275]{e-Connectivities6.pdf}}
\qquad \qquad \Longrightarrow \qquad \qquad
\alpha \BarAction \beta \; = \;
\vcenter{\hbox{\includegraphics[scale=0.275]{e-Connectivities10.pdf}}} \; , \\[1em]
\label{LinkStateRuleDefect1}
& \alpha \; = \; \raisebox{1pt}{\includegraphics[scale=0.275]{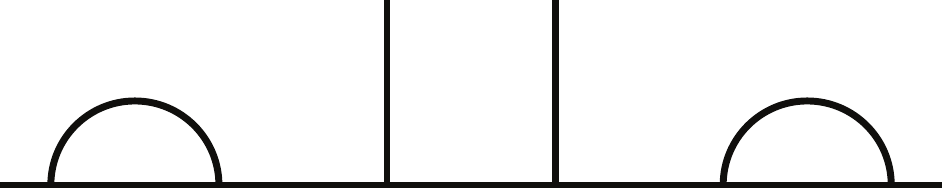}} \; , \qquad 
\beta \; = \; \raisebox{1pt}{\includegraphics[scale=0.275]{e-Connectivities6.pdf}} 
\qquad \qquad \Longrightarrow \qquad \qquad
\alpha \BarAction \beta \; = \;
\vcenter{\hbox{\includegraphics[scale=0.275]{e-Connectivities8.pdf}}} \; .
\end{align}
Then we define the \emph{link state bilinear form} 
$\BiForm{\cdot}{\cdot} \colon \LS_n \times \LS_n \longrightarrow \bC$
by bilinear extension of the rule 
\begin{align} \label{LSBiForm} 
(\alpha,\beta) \quad \longmapsto \quad \BiForm{\alpha}{\beta} ,
\end{align}
for each pair of link patterns $\alpha,\beta \in \LP_n$.
If $\alpha,\beta \in \LS_0$, then the product in~\eqref{evT} is empty, so we have $\BiForm{\alpha}{\beta} = 1$.  
For example, the bilinear forms $\BiForm{\alpha}{\beta}$ of the link patterns $\alpha$ and $\beta$ in 
(\ref{LinkStateRule},~\ref{LinkStateRuleDefect2},~\ref{LinkStateRuleDefect1}) respectively evaluate to
\begin{align}
\bigg( \; \raisebox{-18pt}{\includegraphics[scale=0.275]{e-Connectivities4.pdf}} \; \bigg) \;
= \; \nu^2 , \qquad \qquad
\bigg( \; \vcenter{\hbox{\includegraphics[scale=0.275]{e-Connectivities10.pdf}}}  \; \bigg) \;
= \; \nu , \qquad \qquad
\bigg( \; \vcenter{\hbox{\includegraphics[scale=0.275]{e-Connectivities8.pdf}}}  \; \bigg) \;
= \; 0 . 
\end{align}

In order to generalize the above definition to give a bilinear form on the valenced link state module $\LS_\multii$,
assuming $\max \multii < \ppmin(q)$, we first define the
\emph{Jones-Wenzl composite projector}
\begin{align}\label{WJCompProj} 
\WJProj_\multii \quad := \quad \vcenter{\hbox{\includegraphics[scale=0.275]{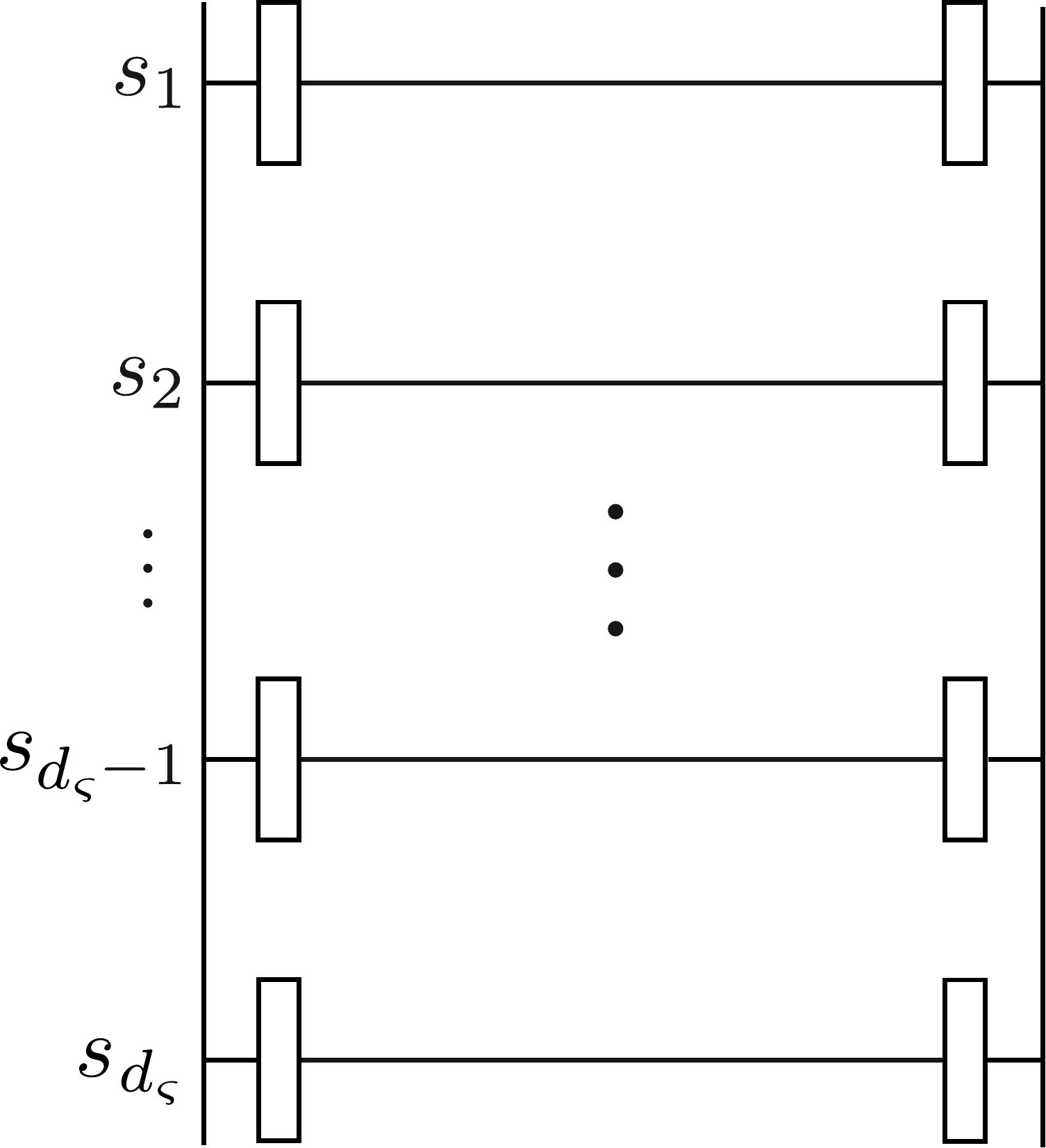} ,}} 
\hphantom{\WJProj_\multii \quad := \quad}
\end{align}
and the \emph{Jones-Wenzl composite embedder}
\begin{align}\label{WJCompEmb} 
\WJEmb_\multii \quad := \quad \vcenter{\hbox{\includegraphics[scale=0.275]{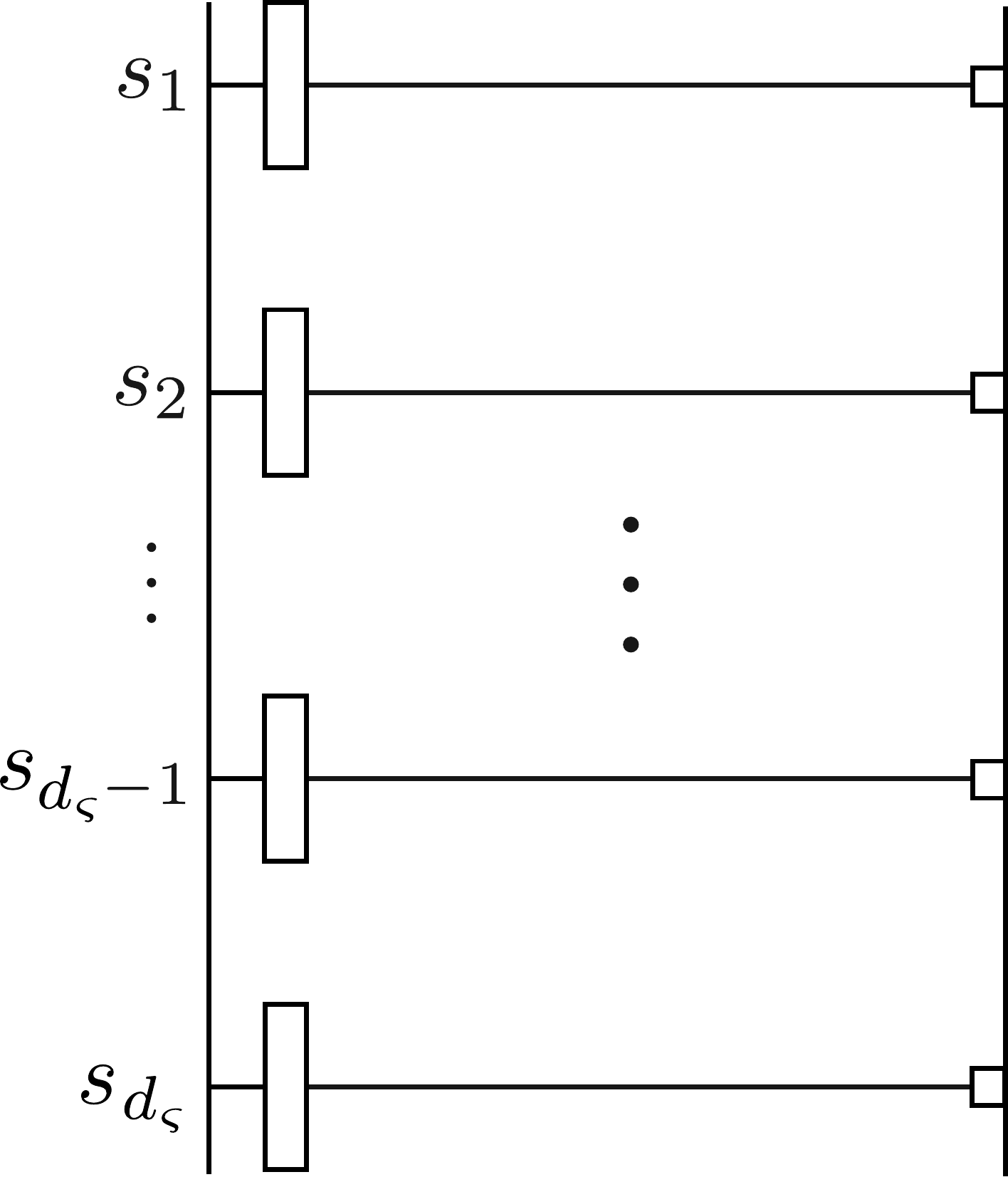} .}}
\hphantom{\WJEmb_\multii \quad := \quad}
\end{align}
In~\cite{fp3}, we relate $\WJProj_\multii$ and $\WJEmb_\multii$ respectively to a certain projection
and embedding on a type-one tensor product (``spin chain'') representation 
of the Hopf algebra $U_q(\mathfrak{sl}_2)$.
We denote the reflection $\WJEmb_\multii^\dagger$ of $\WJEmb_\multii$ about a vertical axis by
\begin{align} \label{WJProjHatEmb} 
\WJProjHat_\multii \quad := \quad \vcenter{\hbox{\includegraphics[scale=0.275]{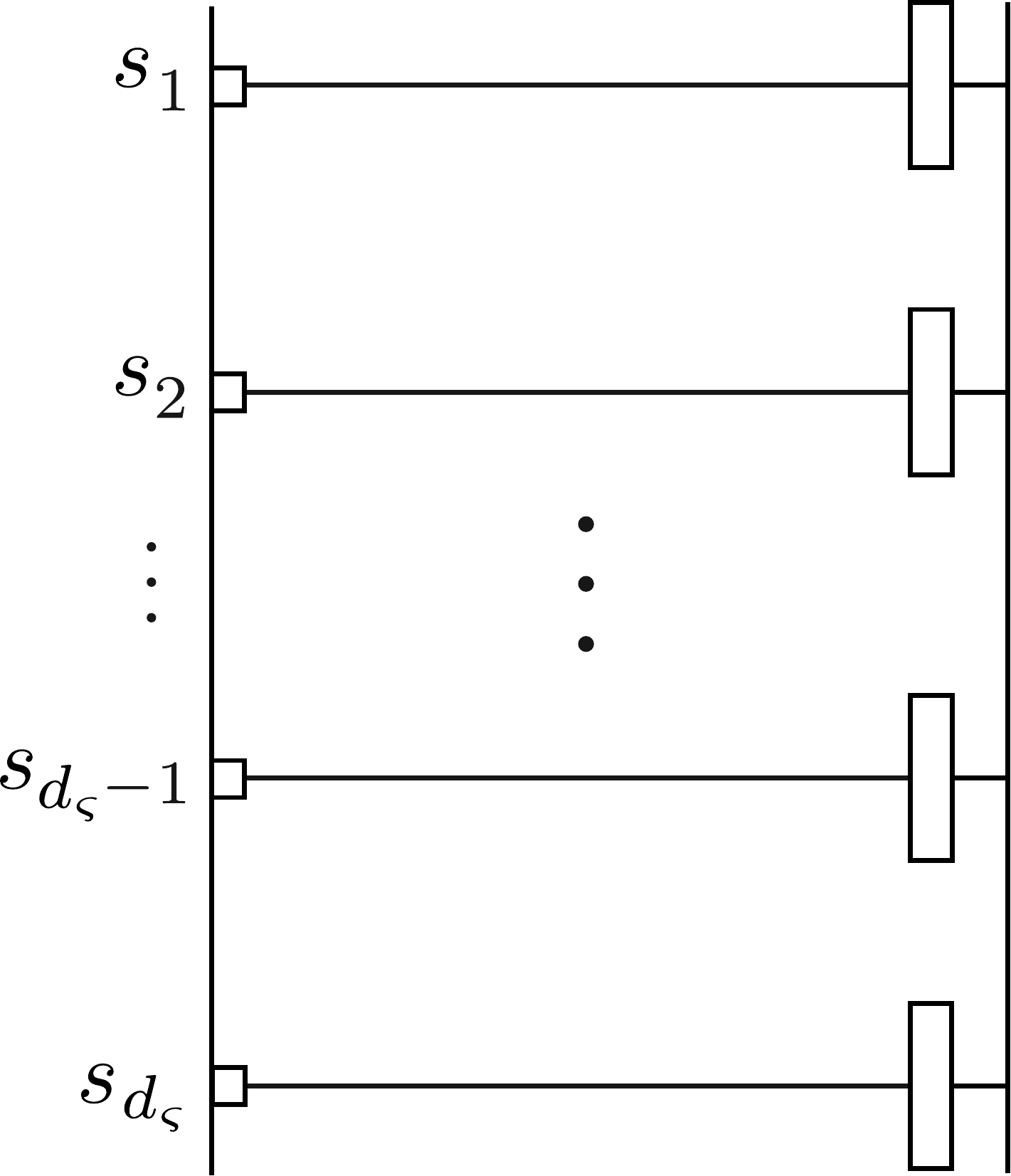} .}}
\hphantom{\WJProjHat_\multii \quad := \quad}
\end{align}
Also $\WJProjHat_\multii$ corresponds to a surjective homomorphism of 
$U_q(\mathfrak{sl}_2)$-type-one modules~\cite{fp3}.
We observe that 
\begin{align}\label{IdComp} 
\WJProjHat_\multii \WJEmb_\multii = \mathbf{1}_{\TL_\multii(\nu)} 
\qquad \qquad \text{and} \qquad \qquad
\WJEmb_\multii \WJProjHat_\multii = \WJProj_\multii ,
\end{align}
and that $\WJEmb_\multii$ defines a linear injection
$\WJEmb_\multii (\,\cdot\,) \colon \LS_\multii \longrightarrow \LS_{\Summed_\multii}$ by sending 
a valenced link state $\alpha \in \LS_\multii$ to the link state
\begin{align} \label{EmbeddingsDef1-1}
\alpha \mapsto \WJEmb_\multii\alpha .
\end{align}
Similarly, $\WJProjHat_\multii$ and $\WJProj_\multii$ define linear surjections 
$\WJProjHat_\multii (\,\cdot\,) \colon \LS_{\Summed_\multii} \longrightarrow \LS_\multii$
and $\WJProj_\multii (\,\cdot\,) \colon \LS_{\Summed_\multii} \longrightarrow \LS_{\Summed_\multii}$, 
sending $\beta \in \LS_\multii$ respectively~to
\begin{align} \label{ProjHatDef1-1} 
\beta \mapsto \WJProjHat_\multii\beta \qquad \qquad \text{and} \qquad \qquad \beta \mapsto \WJProj_\multii \beta .
\end{align}
In lemma~\ref{WJLSBasisLem} in appendix~\ref{AppWJ}, we collect salient properties of these maps.

Assuming that $\max \multii < \ppmin(q)$ and using $\WJEmb_\multii$,
we define the \emph{(valenced) link state bilinear form} 
$\BiForm{\cdot}{\cdot} \colon \LS_\multii \times \LS_\multii \longrightarrow \bC$,
\begin{align} \label{LSBiFormExt} 
\BiForm{\alpha}{\beta} := \BiForm{\WJEmb_\multii \alpha}{\WJEmb_\multii \beta}, 
\end{align}
for all valenced link states $\alpha,\beta \in \LS_\multii$, 
where the right side is the bilinear form on $\LS_{\Summed_\multii}$. 
For example, we have 
\begin{align} 
\bigg( \; 
\vcenter{\hbox{\includegraphics[scale=0.275]{e-LinkPattern4_valenced.pdf}}}  \;\;  \vcenter{\hbox{\text{\scalebox{1}[2.2]{$\BarAction$}}}} \;\;  
\vcenter{\hbox{\includegraphics[scale=0.275]{e-LinkPattern3_valenced.pdf}}} 
\; \bigg)
\quad = \quad \bigg( \; \vcenter{\hbox{\includegraphics[scale=0.275]{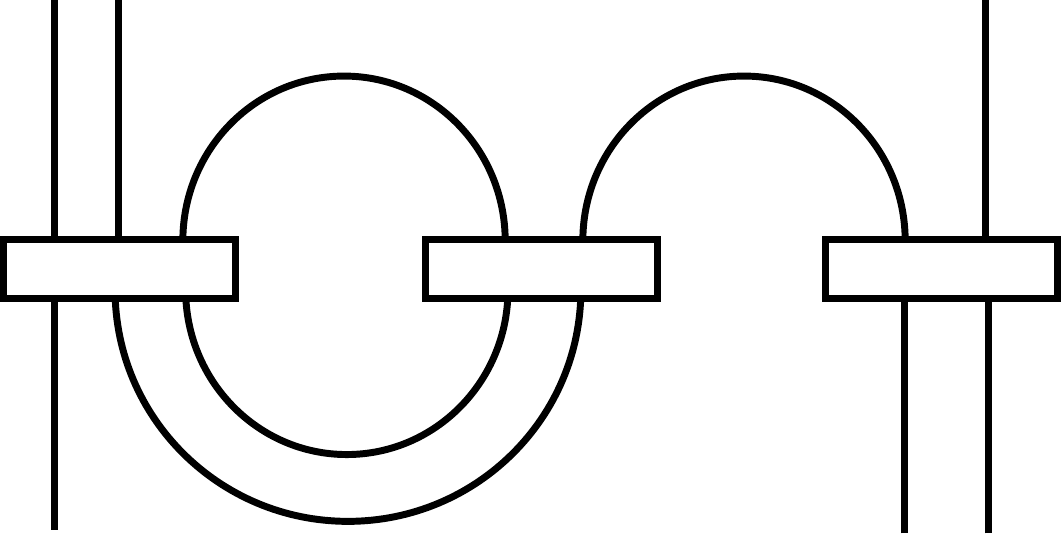}}} \; \bigg) .
\end{align}
As a consequence of rule~\eqref{TurnBack0}, the standard modules $\smash{\LS_\multii\super{s}}$ and $\smash{\LS_\multii\super{r}}$ 
are orthogonal if $s \neq r$, so we have
\begin{align} 
\rad \LS_\multii\super{s} := \; & \big\{\alpha \in \LS_\multii\super{s} \, | \, 
\text{$\BiForm{\alpha}{\beta} = 0$ for all $\beta \in \LS_\multii\super{s}$} \big\},\\
\label{RadDirSum}
\rad \LS_\multii := \; & \{\alpha \in \LS_\multii \, | \, 
\text{$\BiForm{\alpha}{\beta} = 0$ for all $\beta \in \LS_\multii$}\} =\bigoplus_{s \, \in \, \DefectSet_\multii} \rad \LS_\multii\super{s}.
\end{align}



In the next lemma, we give two basic properties of the bilinear form $\BiForm{\cdot}{\cdot}$. 

\begin{lem} \label{EasyLem2} 
Suppose $\max \multii < \ppmin(q)$. For all valenced link patterns $\alpha, \beta \in \LS_\multii$ and 
for all valenced tangles $T \in \TL_\multii(\nu)$, we have  
\begin{align}
\label{SymmProp} 
\textnormal{symmetry:} \qquad \quad
\BiForm{\alpha}{\beta} = \; & \BiForm{\beta}{\alpha}, \\
\textnormal{invariance:} \qquad \;
\label{InvarProp} 
\BiForm{\alpha}{T \beta} = \; & \BiForm{T^\dagger \alpha}{\beta}.
\end{align}
\end{lem}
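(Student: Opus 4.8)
The plan is to prove both identities by reducing them to the already-defined bilinear form on the ordinary link state module $\LS_{\Summed_\multii}$ via the embedder $\WJEmb_\multii$, and then to the combinatorial/diagrammatic definition of the network evaluation $(\,\cdot\,)$. For symmetry~\eqref{SymmProp}, unfolding definition~\eqref{LSBiFormExt} gives $\BiForm{\alpha}{\beta} = \BiForm{\WJEmb_\multii\alpha}{\WJEmb_\multii\beta}$, so it suffices to establish the symmetry of the bilinear form on $\LS_{\Summed_\multii}$. For that, I would argue on link patterns $a, b \in \LP_{\Summed_\multii}$: the network $a\BarAction b$ is obtained by flipping $a$ upside down and stacking it over $b$, while $b \BarAction a$ is the mirror image of this network reflected through a horizontal axis. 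Since a horizontal reflection of a network preserves the number of loops and sends through-paths to through-paths and turn-back paths to turn-back paths, the weights assigned by~(\ref{LoopWeight},~\ref{ThroughPathWeight},~\ref{TurnBack0}) are unchanged, so $(\,a\BarAction b\,) = (\,b\BarAction a\,)$ by~\eqref{evT}. Bilinear extension then gives~\eqref{SymmProp} on all of $\LS_{\Summed_\multii}$, hence on $\LS_\multii$.

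For invariance~\eqref{InvarProp}, I would again push everything through $\WJEmb_\multii$, but here there is a subtlety: the $\TL_\multii(\nu)$-action $T\beta$ is defined diagrammatically via recipe~(\ref{WJLSit1}--\ref{WJLSit3}), and I need to see how $\WJEmb_\multii$ intertwines it with the ordinary $\TL_{\Summed_\multii}(\nu)$-action. The key structural fact is that for $T \in \TL_\multii(\nu) = \TL_\multii^\multii(\nu)$, one has $\WJEmb_\multii(T\beta) = (\WJEmb_\multii T \WJProjHat_\multii)(\WJEmb_\multii \beta)$ as link states in $\LS_{\Summed_\multii}$ (using that $\WJProjHat_\multii\WJEmb_\multii = \mathbf{1}_{\TL_\multii(\nu)}$ from~\eqref{IdComp}, together with the way Jones-Wenzl projector boxes are inserted at the valenced nodes). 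Writing $\hat T := \WJEmb_\multii T \WJProjHat_\multii \in \TL_{\Summed_\multii}(\nu)$ (an ordinary tangle, since this is exactly the form~\eqref{TLform} of a valenced tangle), and noting that reflection about a vertical axis is an anti-automorphism of $\TL_{\Summed_\multii}(\nu)$ with $\WJEmb_\multii^\dagger = \WJProjHat_\multii$, $\WJProjHat_\multii{}^\dagger = \WJEmb_\multii$, one computes $\hat T^\dagger = (\WJEmb_\multii T \WJProjHat_\multii)^\dagger = \WJProjHat_\multii{}^\dagger T^\dagger \WJEmb_\multii^\dagger = \WJEmb_\multii T^\dagger \WJProjHat_\multii = \widehat{T^\dagger}$. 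Thus it reduces to the invariance $\BiForm{a}{\hat T b} = \BiForm{\hat T^\dagger a}{b}$ of the ordinary bilinear form on $\LS_{\Summed_\multii}$ under the ordinary $\TL_{\Summed_\multii}(\nu)$-action.

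That ordinary-case invariance I would prove diagrammatically, checking it first for $T$ a single link diagram (basis element) and then extending bilinearly. For a link diagram $D$ and link patterns $a,b$, the network $(D^\dagger a)\BarAction b$ and the network $a \BarAction (Db)$ are, as unlabelled diagrams, the same: one is obtained from the other by sliding the diagram $D$ across the horizontal gluing line, which does not change the isotopy class of the network, hence not its loop count nor its through-path/turn-back classification. (The factors of $\nu$ from loops created during concatenation, and the vanishing caused by turn-back paths, are accounted for identically on both sides because the full network is the same.) Therefore $(\,(D^\dagger a)\BarAction b\,) = (\,a\BarAction(Db)\,)$, which is precisely~\eqref{InvarProp} for basis elements; bilinearity finishes it.

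The main obstacle I anticipate is not the topological "sliding" arguments themselves, which are routine once set up, but rather pinning down precisely the intertwining relation $\WJEmb_\multii(T\beta) = (\WJEmb_\multii T \WJProjHat_\multii)(\WJEmb_\multii\beta)$ and the dagger-compatibility of $\WJEmb_\multii$ and $\WJProjHat_\multii$: these depend on the careful bookkeeping of where Jones-Wenzl projector boxes are inserted in recipes~(\ref{WJit1}--\ref{WJit3}) and~(\ref{WJLSit1}--\ref{WJLSit3}), on the idempotency and absorption properties~(\ref{wj1},~\ref{wj1}$'$,~\ref{wj2}) of the projectors, and on the identities~\eqref{IdComp}. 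In a fully written proof I would either invoke the properties of $\WJEmb_\multii$, $\WJProjHat_\multii$ collected in lemma~\ref{WJLSBasisLem} (referenced in the text) or prove this intertwining as a short preliminary lemma; with that in hand, everything else follows from the reflection-invariance of network evaluation.
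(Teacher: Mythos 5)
Your proposal is correct and follows essentially the same route as the paper: the paper's proof likewise reduces to the case $\multii = \OneVec{n}$ via definition~\eqref{LSBiFormExt} and then verifies both identities diagrammatically, observing that one and the same network represents $\BiForm{\alpha}{T\beta}$ and $\BiForm{T^\dagger\alpha}{\beta}$. The intertwining and dagger-compatibility facts you flag as the delicate point are exactly the content of lemma~\ref{WJLSBasisLem} and~\eqref{IdComp}, which the paper leaves implicit in its brief reduction, so your extra care there is consistent with, not divergent from, the paper's argument.
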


\begin{proof} 
In light of definition~\eqref{LSBiFormExt}, we may assume that $\multii = \OneVec{n}$ for some 
$n \in \bZpos$. In this case, 
identity~\eqref{SymmProp} is immediate from definition~\eqref{LSBiForm}, and 
identity~\eqref{InvarProp} also follows easily from the definitions: for example, for
\begin{align}
\alpha \quad = & \quad \vcenter{\hbox{\includegraphics[scale=0.275]{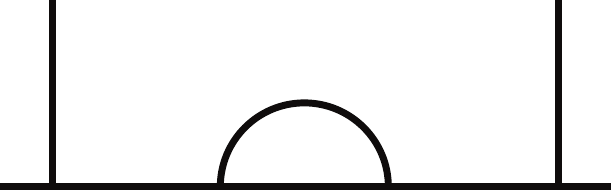} ,}} \qquad
\beta \quad = \quad \vcenter{\hbox{\includegraphics[scale=0.275]{e-LinkPattern4.pdf} ,}} \qquad \text{and} \qquad
T \quad = \quad \vcenter{\hbox{\includegraphics[scale=0.275]{e-TL_example3.pdf} ,}} 
\end{align}
the following network (rotated by $-\pi/2$ radians)
represents either quantity $\BiForm{\alpha}{T \beta}$ or $\BiForm{T^\dagger \beta}{\alpha}$:
\begin{align}
\vcenter{\hbox{\includegraphics[scale=0.275]{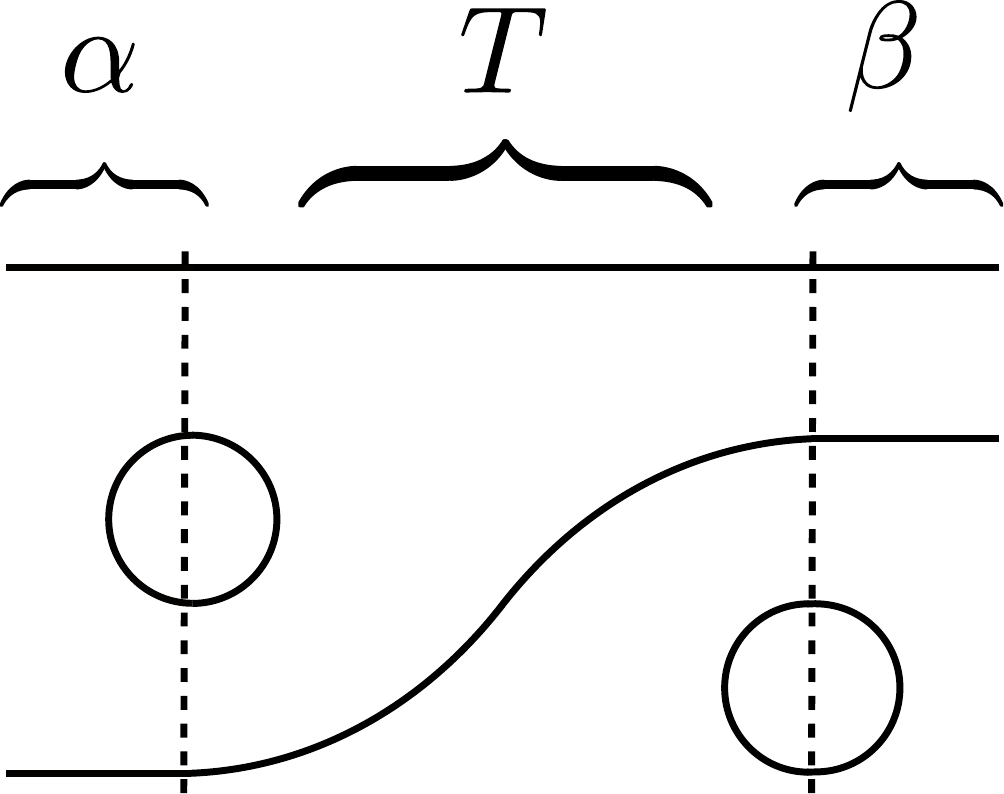} .}}
\end{align}
\end{proof}

We say that the bilinear form $\BiForm{\cdot}{\cdot}$ is \emph{symmetric} because of property~\eqref{SymmProp}
and \emph{invariant} because of property~\eqref{InvarProp}.
Invariance property~\eqref{InvarProp} guarantees that the radical~\eqref{RadDirSum} of the bilinear form is 
a $\TL_\multii(\nu)$-submodule of $\LS_\multii$, and that, for each $s \in \DefectSet_\multii$, 
the radical $\smash{\rad \LS_\multii\super{s}}$ is a $\TL_\multii(\nu)$-submodule of 
the standard module $\smash{\LS_\multii\super{s}}$.

\bigskip

In spite of its simplicity, identity~\eqref{RidoutId} in the next lemma is a powerful tool 
for determining representation-theoretic properties of the standard modules.
This lemma is a natural generalization of~\cite[lemma~\red{3.2}]{rsa}.

\begin{lem} \label{RidoutIdLem} 
Suppose $\max (\multii, \multiii) < \ppmin(q)$.  
For all valenced link states $\alpha \in \smash{\LS_\multii\super{s}}$ and $\beta, \gamma \in \smash{\LS_\multiii\super{s}}$, 
we have
\begin{align}\label{RidoutId} 
\BarAction \alpha \quad \beta \BarAction \gamma = \BiForm{\beta}{\gamma}{\alpha} . 
\end{align}
\end{lem}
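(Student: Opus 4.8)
The plan is to reduce to the case of unit-valenced link patterns and then to identify the connected components of a network. Recall that by definition~\eqref{LSBiFormExt} the valenced bilinear form is the ordinary one pulled back along the composite embedder, $\BiForm{\beta}{\gamma} = \BiForm{\WJEmb_\multiii\beta}{\WJEmb_\multiii\gamma}$, and that the maps $\WJEmb_\multii$, $\WJProjHat_\multii$ intertwine the diagrammatic operations (the properties of these maps recorded in Lemma~\ref{WJLSBasisLem} of appendix~\ref{AppWJ}). Concatenating a valenced tangle with a valenced link state stacks two copies of each Jones--Wenzl projector box at the intermediate nodes, which collapse to one by idempotency~\ref{wj1}; carrying this through one obtains
\[
\WJEmb_\multii\big(\BarAction\alpha\quad\beta\BarAction\,\gamma\big) = \BarAction\,\WJEmb_\multii\alpha\quad\WJEmb_\multiii\beta\,\BarAction\,\big(\WJEmb_\multiii\gamma\big),
\]
the right-hand side now being an ordinary $(\Summed_\multii,\Summed_\multiii)$-tangle acting on an ordinary link state. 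Since $\WJEmb_\multii$ is injective, it suffices to prove the identity when $\multii = \OneVec{n}$ and $\multiii = \OneVec{m}$. This reduction, with its projector bookkeeping, mirrors the one used in the proof of Lemma~\ref{EasyLem2}, and it is the step requiring the most care; alternatively, one can skip it and run the component count below directly in the valenced picture, at the cost of dragging the projector boxes through the network evaluation and invoking~\ref{wj1}--\ref{wj2} to simplify.

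In the unit-valenced case, $\BarAction\alpha\quad\beta\BarAction$ is, by the construction given just before Lemma~\ref{WJSandwichLem} (items~\ref{FormIt1}--\ref{FormIt4}), the $(n,m)$-tangle whose diagram consists of a quarter-turn rotation of $\alpha$ on the left, a quarter-turn rotation of the reflection $\smash{\tilde\beta}$ on the right, and $s$ horizontal strands joining the $s$ defects of $\alpha$ to the $s$ defects of $\smash{\tilde\beta}$. Applying the action $\lambda_\nu\super{s}$ to $\gamma$ (recipe~\ref{WJLSit1}--\ref{WJLSit3}) glues $\gamma$ onto the right boundary of this tangle; the sub-diagram cut out by the links of $\smash{\tilde\beta}$ together with the links of $\gamma$ is, after undoing the rotation~\eqref{Flip}, precisely the network $\beta\BarAction\gamma$ of~\eqref{LinkStateRule}--\eqref{LinkStateRuleDefect1}, with the sole modification that each of its through-paths is extended leftward, along one of the $s$ joining strands, into a defect of $\alpha$. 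Nothing to the left of the joining strands --- that is, the link pattern of $\alpha$ --- is touched.

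It then remains to split into two cases. If $\beta\BarAction\gamma$ contains a turn-back path, that same turn-back path occurs in the composite diagram, so by the turn-back rule of step~\ref{WJLSit3} (cf.~\eqref{turnbackex}) we get $\BarAction\alpha\quad\beta\BarAction\,\gamma = 0$; on the other hand $\BiForm{\beta}{\gamma} = 0$ by~\eqref{evT2}, so both sides vanish. Otherwise $\beta\BarAction\gamma$ has no turn-back path, hence is a disjoint union of $k \ge 0$ loops and $s$ through-paths, and $\BiForm{\beta}{\gamma} = \nu^{k}$; in the composite, erasing the $k$ loops contributes the scalar $\nu^{k}$, and once the links of $\smash{\tilde\beta}$ and $\gamma$ have all been consumed in forming the loops and the $s$ through-paths, the latter splice the $s$ defects of $\gamma$ onto the $s$ defects of $\alpha$ bijectively, so the link pattern remaining on the left boundary is exactly $\alpha$. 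Hence $\BarAction\alpha\quad\beta\BarAction\,\gamma = \nu^{k}\alpha = \BiForm{\beta}{\gamma}\,\alpha$, which proves the unit-valenced case, and with the first paragraph the general case follows. The only genuine obstacle is the diagrammatic bookkeeping of the reduction step; the component identification itself is routine.
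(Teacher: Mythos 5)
Your proof is correct and follows essentially the same route as the paper: reduce to the case $\multii=\OneVec{n}$, $\multiii=\OneVec{m}$ via the compatibility of $\BarAction\,\cdot\quad\cdot\,\BarAction$ and the action with the Jones--Wenzl embedders (the paper's identities~\eqref{EId} and~\eqref{HomoProp1}), then settle the unit-valenced case by the same turn-back/loop dichotomy. The only cosmetic difference is that you conclude by injectivity of $\WJEmb_\multii$, whereas the paper applies $\WJProjHat_\multii$ and uses $\WJProjHat_\multii\WJEmb_\multii=\mathbf{1}_{\TL_\multii(\nu)}$ from~\eqref{IdComp}, which is the same step in different words.
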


\begin{proof} 
First, we prove identity~\eqref{RidoutId} for the case when 
$\multii = \OneVec{n}$ and $\multiii = \OneVec{m}$, with $\alpha$, $\beta$, and $\gamma$ 
ordinary link states.  By linearity, we may assume that $\alpha$, $\beta$, and $\gamma$ 
are link patterns. We consider two scenarios:
\begin{enumerate}[leftmargin=*]
\itemcolor{red}
\item All defects of $\beta$ join with all defects of $\gamma$. 
In this case, we readily simplify $\BarAction \alpha \quad \beta \BarAction \gamma$ to $\alpha$ multiplied by the number of loops in the diagram for $\BiForm{\beta}{\gamma}$, which equals $\BiForm{\beta}{\gamma}\alpha$.  
Therefore, identity~\eqref{RidoutId} holds for this case. 

\item Some defects of $\beta$ do not join with defects of $\gamma$. 
In this case, a link of $\beta$ must join two defects of $\gamma$ together in the diagram 
for $\BiForm{\beta}{\gamma}$.  By rule~\eqref{TurnBack0}, 
identity~\eqref{RidoutId} holds also for this case. 
\end{enumerate}
By linearity, this proves~\eqref{RidoutId} for 
all link states $\alpha \in \smash{\LS_{n}\super{s}}$ and $\beta,\gamma \in \smash{\LS_{m}\super{s}}$.

Second, we prove identity~\eqref{RidoutId} for the general case.
Now, we have $\WJEmb_\multii \alpha \in \smash{\LS_{\Summed_\multii}\super{s}}$ and $\WJEmb_\multiii \beta, \WJEmb_\multiii \gamma \in \smash{\LS_{\Summed_\multiii}\super{s}}$, 
so the already proved identity~\eqref{RidoutId} for them gives 
\begin{align} \label{RidoutIdAgain}
\BarAction \WJEmb_\multii \alpha \quad \WJEmb_\multiii \beta \BarAction  \WJEmb_\multiii \gamma 
\overset{\eqref{RidoutId}}{=} \BiForm{\WJEmb_\multiii \beta}{\WJEmb_\multiii \gamma} \WJEmb_\multii \alpha .
\end{align}
On the other hand, by drawing a picture, it is straightforward to see that
\begin{align}\label{EId} 
\WJEmb_\multii \BarAction \alpha \quad \beta \BarAction \WJProjHat_\multiii = \BarAction \WJEmb_\multii \alpha \quad \WJEmb_\multiii \beta \BarAction.
\end{align}
Using these identities, we obtain asserted identity~\eqref{RidoutId} for all valenced link states 
$\alpha \in \smash{\LS_\multii\super{s}}$ and $\beta, \gamma \in \smash{\LS_\multiii\super{s}}$:
\begin{align} 
\BarAction \alpha \quad \beta \BarAction \gamma 
& \overset{\eqref{IdComp}}{=} \WJProjHat_\multii \WJEmb_\multii \BarAction \alpha \quad \beta \BarAction \gamma 
\overset{\eqref{IdComp}}{=} 
\WJProjHat_\multii (\WJEmb_\multii \BarAction \alpha \quad \beta \BarAction \WJProjHat_\multiii) (\WJEmb_\multiii \gamma ) \\
& \overset{\eqref{EId}}{=} \WJProjHat_\multii \BarAction \WJEmb_\multii \alpha \quad \WJEmb_\multiii \beta \BarAction  \WJEmb_\multiii \gamma  \\
& \overset{\eqref{RidoutIdAgain}}{=}
\BiForm{\WJEmb_\multiii \beta}{\WJEmb_\multiii \gamma} \WJProjHat_\multii \WJEmb_\multii \alpha \\
& \underset{\eqref{IdComp}}{\overset{\eqref{LSBiForm}}{=}} \BiForm{\beta}{\gamma} \alpha .
\end{align}
This concludes the proof.
\end{proof}

\subsection{Standard modules and their radicals} \label{LinkStateModSect}

We use the link state bilinear form $\BiForm{\cdot}{\cdot}$ 
to study the structure of the standard modules $\smash{\LS_\multii\super{s}}$, 
their radicals $\rad \smash{\LS_\multii\super{s}}$, and their quotients $\smash{\Quo_\multii\super{s}}$ by these radicals. 
We say that the bilinear form on $\smash{\LS_\multii\super{s}}$ 
is \emph{totally degenerate} if $\smash{\rad \LS_\multii\super{s}} = \smash{\LS_\multii\super{s}}$. 
Throughout this section, we assume that the bilinear forms are not totally degenerate.
This is usually true, but not always. In section~\ref{rofSect32}, we determine when this assumption is violated.

The following proposition is a key result in this section. 
Its proof is a straightforward adaptation of the proof of~\cite[proposition~\red{3.3}]{rsa},
which in turn follows closely~\cite[proposition~\red{3.2}]{gl2}.

\begin{prop} \label{GenLem2} 
Suppose $\max \multii < \ppmin(q)$. If $\rad \smash{\LS_\multii\super{s}} \neq \smash{\LS_\multii\super{s}}$,
then the following hold:
\begin{enumerate}
\itemcolor{red}
\item \label{simple} 
The quotient module $\smash{\Quo_\multii\super{s}}$ is simple, and $\smash{\rad \LS_\multii\super{s}}$ is 
the unique maximal proper submodule of $\smash{\LS_\multii\super{s}}$.
\item \label{indecomposable} 
The standard module $\smash{\LS_\multii\super{s}}$ is indecomposable.
\end{enumerate}
\end{prop}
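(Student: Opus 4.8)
The plan is to follow the standard argument for cellular/diagram algebras (as in \cite[proposition~\red{3.3}]{rsa} and \cite[proposition~\red{3.2}]{gl2}), with the invariance and symmetry of the bilinear form (lemma~\ref{EasyLem2}) and the ``factorization'' identity~\eqref{RidoutId} of lemma~\ref{RidoutIdLem} as the main engines. First I would record the key consequence of non-degeneracy: since $\rad \smash{\LS_\multii\super{s}} \neq \smash{\LS_\multii\super{s}}$, there exist valenced link states $\alpha, \beta \in \smash{\LS_\multii\super{s}}$ with $\BiForm{\alpha}{\beta} \neq 0$; rescaling, we may assume $\BiForm{\alpha}{\beta} = 1$. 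Applying lemma~\ref{RidoutIdLem} with $\multii = \multiii$, the valenced tangle $T := \BarAction \beta \quad \alpha \BarAction \in \TL_\multii(\nu)$ then acts on $\smash{\LS_\multii\super{s}}$ by $T\gamma = \BiForm{\alpha}{\gamma}\,\beta$ for all $\gamma$; in particular $T$ is a nonzero operator whose image is the line $\bC\beta$.

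For item~\ref{simple}, I would show $\smash{\rad \LS_\multii\super{s}}$ is the unique maximal proper submodule by proving that every proper submodule is contained in it. Suppose $N \subseteq \smash{\LS_\multii\super{s}}$ is a submodule not contained in $\smash{\rad \LS_\multii\super{s}}$; pick $\gamma \in N$ with $\gamma \notin \smash{\rad \LS_\multii\super{s}}$, so there is some $\delta \in \smash{\LS_\multii\super{s}}$ with $\BiForm{\delta}{\gamma} \neq 0$. Then the tangle $\BarAction \beta \quad \delta \BarAction$ applied to $\gamma$ gives, via~\eqref{RidoutId}, a nonzero multiple of $\beta$, so $\beta \in N$; but then for arbitrary $\epsilon \in \smash{\LS_\multii\super{s}}$, acting by $\BarAction \epsilon \quad \alpha \BarAction$ and using~\eqref{RidoutId} again yields $\BiForm{\alpha}{\beta}\,\epsilon = \epsilon \in N$, whence $N = \smash{\LS_\multii\super{s}}$, a contradiction. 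Hence every proper submodule lies in $\smash{\rad \LS_\multii\super{s}}$; since $\smash{\rad \LS_\multii\super{s}}$ is itself a proper submodule (by invariance, lemma~\ref{EasyLem2}, and properness being exactly the hypothesis), it is the unique maximal one. Simplicity of $\smash{\Quo_\multii\super{s}} = \smash{\LS_\multii\super{s}}/\rad \smash{\LS_\multii\super{s}}$ is then immediate: its submodules correspond to submodules of $\smash{\LS_\multii\super{s}}$ containing the radical, of which there are only two.

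For item~\ref{indecomposable}, suppose $\smash{\LS_\multii\super{s}} = M_1 \oplus M_2$ with both $M_i$ nonzero submodules. Since $\smash{\rad \LS_\multii\super{s}}$ is the unique maximal proper submodule, and $M_1, M_2$ are proper, both are contained in $\smash{\rad \LS_\multii\super{s}}$; but then $\smash{\LS_\multii\super{s}} = M_1 + M_2 \subseteq \smash{\rad \LS_\multii\super{s}}$, contradicting the hypothesis that the form is not totally degenerate. Hence $\smash{\LS_\multii\super{s}}$ is indecomposable. The main obstacle I anticipate is purely bookkeeping: making sure the factorization identity~\eqref{RidoutId} is applied in the right direction (it requires the ``outer'' indices to match, i.e.\ using tangles of the form $\BarAction\,\cdot\,\quad\,\cdot\,\BarAction \in \TL_\multii^\multii(\nu) = \TL_\multii(\nu)$), and confirming that all the link states involved genuinely lie in $\smash{\LS_\multii\super{s}}$ (the same defect number $s$), which is guaranteed since concatenation preserves $s$ and, by hypothesis, $s \in \DefectSet_\multii$ so nothing collapses to zero for trivial reasons.
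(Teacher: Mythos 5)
Your proposal is correct and follows essentially the same route as the paper's proof: both rest on lemma~\ref{RidoutIdLem} to show that any link state outside $\rad \smash{\LS_\multii\super{s}}$ generates the whole standard module, from which the unique-maximal-submodule and simplicity claims follow at once. Your item~\ref{indecomposable} is a slightly streamlined version of the paper's (you invoke the fact that every proper submodule lies in the radical, rather than decomposing a cyclic generator and running the paper's two-case analysis), but the underlying argument is the same.
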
 

\begin{proof} 
Using property~\eqref{InvarProp} of the bilinear form, it is straightforward to show that 
$\smash{\rad \LS_\multii\super{s}}$ is a $\TL_\multii(\nu)$-submodule of $\LS_\multii\super{s}$.
For a valenced link state $\alpha \in \smash{\LS_\multii\super{s}}$, we let $[\alpha] \in \smash{\Quo_\multii\super{s}}$
denote its equivalence class in the quotient module~\eqref{QuoMod}.
\begin{enumerate}[leftmargin=*]
\itemcolor{red}
\item Let $\gamma \in \smash{\LS_\multii\super{s}} \setminus \rad \smash{\LS_\multii\super{s}}$. 
Then, we may choose a valenced link state $\beta \in \smash{\LS_\multii\super{s}}$ such that $\BiForm{\beta}{\gamma} = 1$.
Now, identity~\eqref{RidoutId} of lemma~\ref{RidoutIdLem} shows that we can construct any $[\alpha] \in \smash{\Quo_\multii\super{s}}$
through multiplying $[\gamma]$ by a tangle in $\TL_\multii(\nu)$:
\begin{align}\label{NiceComp} 
\BarAction \alpha \quad \beta \BarAction[\gamma] 
= [\BarAction \alpha \quad \beta \BarAction\gamma \, ] 
\overset{\eqref{RidoutId}}{=} \BiForm{\beta}{\gamma}[\alpha] 
= [\alpha] .
\end{align}
Hence, $\smash{\Quo_\multii\super{s}}$ is cyclic with generator $[\gamma]$. 
Because $\gamma \in \smash{\LS_\multii\super{s}} \setminus \rad \smash{\LS_\multii\super{s}}$ can be chosen arbitrarily, 
any nonzero element of $\smash{\Quo_\multii\super{s}}$ generates $\smash{\Quo_\multii\super{s}}$.
This shows that 
$\smash{\Quo_\multii\super{s}}$ is simple,
and $\smash{\rad \LS_\multii\super{s}}$ is the unique maximal proper submodule of $\smash{\LS_\multii\super{s}}$.

\item Suppose $\smash{\LS_\multii\super{s}}$ can be decomposed as a direct sum of two $\TL_\multii(\nu)$-submodules $U$ and $V$,
i.e., $\smash{\LS_\multii\super{s}} = U \oplus V$.
To prove that $\smash{\LS_\multii\super{s}}$ is indecomposable, we need to show that one of the submodules, $U$ or $V$, is trivial. 
Now, the same argument that we used to prove item~\ref{simple} shows that $\smash{\LS_\multii\super{s}}$ is cyclic,
generated by any nonzero valenced link state $\gamma \notin \rad \smash{\LS_\multii\super{s}}$.
We choose such $\gamma$, write it as $\gamma = \gamma_U + \gamma_V$ with $\gamma_U \in U$ and $\gamma_V \in V$,
and consider two cases:
\begin{enumerate}
\itemcolor{red}
\item[(a):] Both $\gamma_U$ and $\gamma_V$ belong to the radical $\smash{\rad \LS_\multii\super{s}}$.
In this case, the $\TL_\multii(\nu)$-submodules of $\smash{\LS_\multii\super{s}}$ generated by $\gamma_U$ and $\gamma_V$ 
are also $\TL_\multii(\nu)$-submodules of $\smash{\rad \LS_\multii\super{s}} \subset \smash{\LS_\multii\super{s}}$, and so is their direct sum. We get a contradiction:
\begin{align}
\LS_\multii\super{s} = \TL_\multii(\nu) \, \gamma \subset \TL_\multii(\nu) \, \gamma_U \oplus \TL_\multii(\nu) \, \gamma_V 
\subset \rad \LS_\multii\super{s} \subset \LS_\multii\super{s} \qquad \Longrightarrow \qquad \rad \LS_\multii\super{s} = \LS_\multii\super{s} .
\end{align} 

\item[(b):] Either $\gamma_U$ or $\gamma_V$ does not belong to the radical $\smash{\rad \LS_\multii\super{s}}$. 
Without loss of generality, we assume that $\gamma_U \notin \smash{\rad \LS_\multii\super{s}}$. 
Then, $\gamma_U$ generates the whole module $\smash{\LS_\multii\super{s}}$.  
Thus, we have $U = \smash{\LS_\multii\super{s}}$ and $V = \{0\}$.
\end{enumerate}
\end{enumerate}
This concludes the proof.
\end{proof}

Next, we show in proposition~\ref{HomLem2} and corollary~\ref{nonisoCor2} that 
the $\TL_\multii(\nu)$-modules of proposition~\ref{GenLem2}
are non-isomorphic. These results are straightforward adaptations 
of~\cite[proposition~\red{3.6} and corollary~\red{3.7}]{rsa}, based on~\cite{gl2}.

\begin{prop} \label{HomLem2} 
Suppose $\max \multii < \ppmin(q)$. If $\rad \smash{\LS_\multii\super{s}} \neq \smash{\LS_\multii\super{s}}$ and 
$\mathsf{M}$ and $\mathsf{N}$ respectively are submodules of 
$\smash{\LS_\multii\super{s}}$ and $\smash{\LS_\multii\super{r}}$, with $s < r$,
then the only homomorphism sending
$\smash{\LS_\multii\super{s}} / \mathsf{M} \longrightarrow \smash{\LS_\multii\super{r}} / \mathsf{N}$ is the zero homomorphism.
\end{prop}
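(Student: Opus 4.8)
The plan is to mimic the standard Temperley-Lieb argument (as in \cite[proposition~\red{3.6}]{rsa}) using the cyclicity of $\smash{\LS_\multii\super{s}/\mathsf{M}}$ together with the defect-counting obstruction built into the bilinear form. First I would observe that, since $\rad \smash{\LS_\multii\super{s}} \neq \smash{\LS_\multii\super{s}}$ and $\mathsf{M}$ is a proper submodule (the case $\mathsf{M} = \smash{\LS_\multii\super{s}}$ being trivial, as then the quotient is zero), there is a valenced link state $\gamma \in \smash{\LS_\multii\super{s}} \setminus \mathsf{M}$, and the argument from the proof of proposition~\ref{GenLem2} shows that $[\gamma]$ generates $\smash{\LS_\multii\super{s}}/\mathsf{M}$: picking $\beta$ with $\BiForm{\beta}{\gamma} = 1$, identity~\eqref{RidoutId} gives $\BarAction \alpha \quad \beta \BarAction [\gamma] = [\alpha]$ for every $\alpha$. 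Hence any homomorphism $\phi \colon \smash{\LS_\multii\super{s}}/\mathsf{M} \longrightarrow \smash{\LS_\multii\super{r}}/\mathsf{N}$ is determined by $\phi([\gamma])$, and it suffices to show $\phi([\gamma]) = 0$.

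Next I would exploit the $s$-grading: the key point is that multiplication by a $\multii$-valenced tangle either preserves the number of defects or kills the link state (recipe~\ref{WJLSit1}--\ref{WJLSit3}, since turn-backs and defect-destroying configurations are sent to zero), and, crucially, no tangle in $\TL_\multii(\nu)$ can \emph{increase} the defect number. Write $\eta := \phi([\gamma]) \in \smash{\LS_\multii\super{r}}/\mathsf{N}$. Using that $[\gamma] = \BarAction \gamma \quad \beta \BarAction [\gamma]$ for a suitable tangle, and that $\phi$ is a module map, I get $\eta = \BarAction \gamma \quad \beta \BarAction \eta$. But $\BarAction \gamma \quad \beta \BarAction$ is a tangle in $\TL_\multii(\nu)$ that, by~\eqref{RidoutId} (applied now in $\smash{\LS_\multii\super{r}}$, noting $\gamma, \beta \in \smash{\LS_\multii\super{s}}$), acts on any $\delta \in \smash{\LS_\multii\super{r}}$ by $\BarAction \gamma \quad \beta \BarAction \delta = \BiForm{\beta}{\delta}\,\gamma$, and this pairing $\BiForm{\beta}{\delta}$ vanishes identically because $\beta$ has $s$ defects while $\delta$ has $r > s$ defects, forcing a turn-back in the concatenation (rule~\eqref{TurnBack0}). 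Therefore $\BarAction \gamma \quad \beta \BarAction$ annihilates all of $\smash{\LS_\multii\super{r}}$, hence also the quotient, so $\eta = \BarAction \gamma \quad \beta \BarAction \eta = 0$.

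Finally, since $\eta = \phi([\gamma]) = 0$ and $[\gamma]$ generates $\smash{\LS_\multii\super{s}}/\mathsf{M}$, we conclude $\phi \equiv 0$, which is the claim. The only genuine subtlety — and the step I would write out most carefully — is the bookkeeping of defects under the pairing $\BiForm{\beta}{\delta}$ when $\beta$ and $\delta$ live in modules with different defect numbers: one must check that the valenced concatenation $\beta \BarAction \delta$, after expanding the Jones-Wenzl composite embedders via~\eqref{LSBiFormExt}, necessarily contains a turn-back path. This follows because the embedders $\WJEmb_\multii$ preserve defect number, so $\WJEmb_\multii\beta$ and $\WJEmb_\multii\delta$ have $s$ and $r$ through-strands respectively in $\LS_{\Summed_\multii}$, and with $s \neq r$ the reflected concatenation cannot match all through-strands to through-strands, producing a turn-back; so the $\LS_{\Summed_\multii}$-bilinear form vanishes and hence so does $\BiForm{\beta}{\delta}$. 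I expect no other obstacle; everything else is a direct transcription of the cellular-algebra argument using lemma~\ref{RidoutIdLem}.
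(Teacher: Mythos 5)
Your proposal follows essentially the same route as the paper's proof: choose $\beta,\gamma \in \LS_\multii\super{s}$ with $\BiForm{\beta}{\gamma}=1$, push the identity $\BarAction \alpha \quad \beta \BarAction\, [\gamma] = [\alpha]$ through the homomorphism, and then kill the image using the turn-back rule~\eqref{TurnBack0}, since a tangle of the form $\BarAction \alpha \quad \beta \BarAction$ with $\alpha,\beta \in \LS_\multii\super{s}$ must join two defects of any representative $\delta \in \LS_\multii\super{r}$ when $r>s$. Your extra care about why $\BiForm{\beta}{\delta}=0$ across different defect numbers (orthogonality of the graded pieces after expanding the composite embedders) is exactly the paper's diagrammatic reasoning.

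One step as written would fail and should be repaired: you take $\gamma \in \LS_\multii\super{s} \setminus \mathsf{M}$ and then ``pick $\beta$ with $\BiForm{\beta}{\gamma}=1$,'' but such a $\beta$ exists only if $\gamma \notin \rad \LS_\multii\super{s}$, and nothing rules out $\gamma \in \rad \LS_\multii\super{s} \setminus \mathsf{M}$, since $\mathsf{M}$ is an arbitrary submodule with no containment relation to the radical. The hypothesis $\rad \LS_\multii\super{s} \neq \LS_\multii\super{s}$ is there precisely so you can instead choose $\gamma \notin \rad \LS_\multii\super{s}$; with that choice the rest of your argument goes through verbatim, and in fact the cyclicity framing is not even needed: as in the paper, writing $\theta([\gamma])=[\delta]$ with $\delta \in \LS_\multii\super{r}$ gives
\begin{align}
\theta\big([\alpha]\big) = \BarAction \alpha \quad \beta \BarAction\, \theta\big([\gamma]\big) = \big[\,\BarAction \alpha \quad \beta \BarAction\, \delta\,\big] = 0
\end{align}
directly for every $\alpha$, regardless of whether $\gamma$ happens to lie in $\mathsf{M}$.
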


\begin{proof}  
We let $[\alpha] \in \smash{\LS_\multii \super{s}} / \mathsf{M}$ 
(resp.~$[\alpha] \in \smash{\LS_\multii \super{r}} / \mathsf{N}$) denote the equivalence class 
of $\alpha \in \smash{\LS_\multii \super{s}}$ (resp.~$\alpha \in \smash{\LS_\multii \super{r}}$).  
Then for any $\beta, \gamma \in \smash{\LS_\multii \super{s}}$ chosen such that 
$\BiForm{\beta}{\gamma} = 1$, 
and for any homomorphism 
$\theta \colon \smash{\LS_\multii \super{s}} / \mathsf{M} \longrightarrow \smash{\LS_\multii \super{r}} / \mathsf{N}$,
we have
\begin{align}\label{AnotherOne} 
\BarAction\alpha \quad \beta \BarAction \theta \big( [\gamma] \big) 
= \theta \big( \BarAction \alpha \quad \beta \BarAction [\gamma] \big)
\overset{\eqref{NiceComp}}{=} \theta\big( [\alpha] \big) .
\end{align}
We let $\delta \in \smash{\LS_\multii \super{r}}$ be a valenced link state such that $\theta ( [\gamma] ) = [\delta]$.
Then, because 
$\alpha, \beta \in \smash{\LS_\multii \super{s}}$ and $\delta \in \smash{\LS_\multii \super{r}}$ with $s < r$, 
the tangle $\BarAction \alpha \quad \beta \BarAction$ necessarily joins two defects of $\delta$ 
together, so $\BarAction \alpha \quad \beta \BarAction \delta = 0$ by rule~\eqref{TurnBack0}. It follows that
\begin{align}
\theta\big( [\alpha] \big) \overset{\eqref{AnotherOne}}{=} 
\BarAction \alpha \quad \beta \BarAction \theta \big( [\gamma] \big) 
= \BarAction \alpha \quad \beta \BarAction[\delta] 
= [\BarAction \alpha \quad \beta \BarAction \delta] = 0 .
\end{align}
This implies that $\theta$ is the zero homomorphism.
\end{proof}

\begin{cor} \label{nonisoCor2} 
Suppose $\max \multii < \ppmin(q)$. If $\rad \smash{\LS_\multii\super{s}} \neq \smash{\LS_\multii\super{s}}$ and 
$\rad \smash{\LS_\multii\super{r}} \neq \smash{\LS_\multii\super{r}}$, then we have
\begin{align}\label{TwoStatements} 
\LS_\multii\super{s} \cong \LS_\multii\super{r} \quad \Longleftrightarrow \quad s = r 
\qquad \qquad \textnormal{and} \qquad \qquad 
\Quo_\multii\super{s} \cong 
\Quo_\multii\super{r} \quad \Longleftrightarrow \quad s = r . 
\end{align}
\end{cor}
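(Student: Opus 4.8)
The plan is to deduce Corollary~\ref{nonisoCor2} from Proposition~\ref{HomLem2} together with part~\ref{simple} of Proposition~\ref{GenLem2}, which says that each nontrivial $\smash{\Quo_\multii\super{s}}$ is simple and $\smash{\rad \LS_\multii\super{s}}$ is the unique maximal proper submodule of $\smash{\LS_\multii\super{s}}$. In both asserted equivalences the implication $\Leftarrow$ is trivial, so only $\Rightarrow$ needs an argument, and by symmetry (swapping the roles of $s$ and $r$) I may assume $s \leq r$ and rule out $s < r$.

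First I would treat the quotient modules. Suppose $s < r$ and $\smash{\Quo_\multii\super{s}} \cong \smash{\Quo_\multii\super{r}}$. Recall $\smash{\Quo_\multii\super{s}} = \smash{\LS_\multii\super{s}} / \smash{\rad \LS_\multii\super{s}}$ and $\smash{\Quo_\multii\super{r}} = \smash{\LS_\multii\super{r}} / \smash{\rad \LS_\multii\super{r}}$, so an isomorphism between them is in particular a nonzero homomorphism $\smash{\LS_\multii\super{s}} / \smash{\rad \LS_\multii\super{s}} \longrightarrow \smash{\LS_\multii\super{r}} / \smash{\rad \LS_\multii\super{r}}$. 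Taking $\mathsf{M} = \smash{\rad \LS_\multii\super{s}}$ and $\mathsf{N} = \smash{\rad \LS_\multii\super{r}}$ in Proposition~\ref{HomLem2}, this homomorphism must be zero, a contradiction since $\smash{\Quo_\multii\super{r}} \neq \{0\}$ (as $\smash{\rad \LS_\multii\super{r}} \neq \smash{\LS_\multii\super{r}}$). Hence $s = r$.

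Next I would treat the standard modules themselves. Suppose $s < r$ and $\phi \colon \smash{\LS_\multii\super{s}} \xrightarrow{\ \sim\ } \smash{\LS_\multii\super{r}}$ is an isomorphism. Composing $\phi$ with the canonical projection $\smash{\LS_\multii\super{r}} \twoheadrightarrow \smash{\Quo_\multii\super{r}}$ gives a surjective homomorphism $\smash{\LS_\multii\super{s}} \twoheadrightarrow \smash{\Quo_\multii\super{r}}$. Since $\smash{\Quo_\multii\super{r}}$ is simple and nonzero, its kernel is a maximal proper submodule of $\smash{\LS_\multii\super{s}}$; by Proposition~\ref{GenLem2}\ref{simple} the unique such submodule is $\smash{\rad \LS_\multii\super{s}}$, so this map factors through $\smash{\Quo_\multii\super{s}} = \smash{\LS_\multii\super{s}}/\smash{\rad \LS_\multii\super{s}}$, yielding a nonzero homomorphism $\smash{\LS_\multii\super{s}}/\smash{\rad \LS_\multii\super{s}} \longrightarrow \smash{\LS_\multii\super{r}}/\{0\}$ — wait, more carefully: it yields a nonzero homomorphism $\smash{\Quo_\multii\super{s}} \to \smash{\Quo_\multii\super{r}}$, i.e.\ a nonzero homomorphism $\smash{\LS_\multii\super{s}}/\smash{\rad \LS_\multii\super{s}} \to \smash{\LS_\multii\super{r}}/\smash{\rad \LS_\multii\super{r}}$, again contradicting Proposition~\ref{HomLem2} with $\mathsf{M} = \smash{\rad \LS_\multii\super{s}}$, $\mathsf{N} = \smash{\rad \LS_\multii\super{r}}$. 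Therefore $s = r$.

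The argument is essentially bookkeeping once Propositions~\ref{GenLem2} and~\ref{HomLem2} are in hand; the only mild subtlety — and the step I would be most careful about — is making sure that in the standard-module case one genuinely produces a \emph{nonzero} homomorphism between the \emph{quotients} (not between $\smash{\LS_\multii\super{s}}$ and $\smash{\LS_\multii\super{r}}$ directly), so that Proposition~\ref{HomLem2} applies with both $\mathsf{M}$ and $\mathsf{N}$ equal to the respective radicals. This is where one uses that $\smash{\rad \LS_\multii\super{s}}$ is the unique maximal proper submodule, hence is carried into $\smash{\rad \LS_\multii\super{r}}$ by any isomorphism $\smash{\LS_\multii\super{s}} \cong \smash{\LS_\multii\super{r}}$, so $\phi$ descends to an isomorphism $\smash{\Quo_\multii\super{s}} \cong \smash{\Quo_\multii\super{r}}$; this simultaneously reduces the first equivalence to the second and completes the proof.
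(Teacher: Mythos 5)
Your proof is correct and rests on the same key ingredient as the paper's, namely Proposition~\ref{HomLem2} applied with $\mathsf{M} = \rad \LS_\multii\super{s}$ and $\mathsf{N} = \rad \LS_\multii\super{r}$. The only difference is that for the first equivalence the paper simply applies Proposition~\ref{HomLem2} with $\mathsf{M} = \mathsf{N} = \{0\}$ to the isomorphism $\LS_\multii\super{s} \to \LS_\multii\super{r}$ itself, which is a bit more direct than your reduction to the quotient case via Proposition~\ref{GenLem2}.
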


\begin{proof}
Proposition~\ref{HomLem2} with $\mathsf{M} = \mathsf{N} = \{0\}$ 
(resp.~$\mathsf{M} = \smash{\rad \LS_\multii \super{s}}$, $\mathsf{N} = \smash{\rad \LS_\multii \super{r}}$)
implies the first (resp.~second) equivalence.
\end{proof}

In proposition~\ref{SimpleModuleProp} in section~\ref{TLSimpleModSect}, 
we prove that the nonzero quotients $\smash{\Quo_\multii\super{s}}$ of the standard modules
constitute the complete set of non-isomorphic simple $\TL_\multii(\nu)$-modules. 
For this, the results of the next sections~\ref{GramMatrixSect} and~\ref{RadicalSect} are not needed.
Sections~\ref{GramMatrixSect} and~\ref{RadicalSect} focus on the fine structure of the standard 
modules $\smash{\LS_\multii\super{s}}$ and their radicals.

It is also worthwhile to remark on the prospect of obtaining 
analogues of the above results
when the bilinear form on $\smash{\LS_\multii\super{s}}$ is totally degenerate.
In~\cite{rsa}, this is successfully done in the case that $\multii = \OneVec{n}$ for some $n \in \bZpos$. 
To establish this, the authors first show that the bilinear form on $\smash{\LS_\multii\super{s}}$
is totally degenerate if and only if $\nu = 0$ (i.e., $\ppmin(q) = 2$ by~\eqref{MinPower}) and $s = 0$. 
Then, with $\nu = 0 = s$, they use the renormalized bilinear form
\begin{align}
\BiForm{\cdot}{\cdot}' := \lim_{\nu \to 0} \nu^{-1} \BiForm{\cdot}{\cdot} 
\end{align}
to prove an analogue~\cite[proposition~\red{3.5}]{rsa} of proposition~\ref{GenLem2}.  
They also show that the radical of $\smash{\LS_n\super{0}}$ with respect to 
the new bilinear form is trivial~\cite[proposition~\red{4.9}]{rsa}, 
concluding that $\smash{\LS_n\super{0}}$ is a simple module when $\nu = 0$.
However, this simple module is isomorphic to another simple $\TL_n(\nu)$-module,
which is a nontrivial quotient of a standard module whose bilinear form is not totally 
degenerate~\cite[theorem~\red{7.2}]{rsa}. For example, it is straightforward to verify that
\begin{align}\label{example0} 
\rad \smash{\LS_2\super{0}} = \smash{\LS_2\super{0}} 
\qquad \Longleftrightarrow \qquad \text{$\ppmin(q) = 2$ \; (i.e., $\nu = 0$)} 
\qquad \Longleftrightarrow \qquad \LS_2\super{0} \cong \LS_2\super{2}. 
\end{align}

In general, we believe that if the radical of $\smash{\LS_\multii\super{s}}$
is totally degenerate, then $\smash{\LS_\multii\super{s}}$ is simple and 
isomorphic to some simple module $\smash{\Quo_{\multii}\super{r}}$ for which the index $r \in \DefectSet_\multii$ 
could be determined as in~\cite{rsa} by analyzing ``critical lines.''

%

%
%

\begin{conj}
Suppose $\max \multii < \ppmin(q)$.
The radical $\smash{\rad \LS_\multii\super{s}}$ is either the trivial module or a simple module.
\end{conj}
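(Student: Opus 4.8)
The conjecture claims that $\rad \LS_\multii\super{s}$ is either trivial or simple as a $\TL_\multii(\nu)$-module. The plan is to reduce to the ordinary Temperley-Lieb case and then invoke (or adapt) known structure results. The starting point is the Jones-Wenzl embedding $\WJEmb_\multii \colon \LS_\multii\super{s} \hookrightarrow \LS_{\Summed_\multii}\super{s}$ from~\eqref{EmbeddingsDef1-1} together with the surjection $\WJProjHat_\multii$ and the identities~\eqref{IdComp}. Since the valenced bilinear form is by definition~\eqref{LSBiFormExt} the restriction of the ordinary one along $\WJEmb_\multii$, one should first show that $\WJEmb_\multii$ maps $\rad \LS_\multii\super{s}$ into $\rad \LS_{\Summed_\multii}\super{s}$ (this is immediate) and that, conversely, $\WJProjHat_\multii$ maps $\rad \LS_{\Summed_\multii}\super{s}$ onto $\rad \LS_\multii\super{s}$; combined with $\WJProjHat_\multii \WJEmb_\multii = \mathbf{1}$ this realizes $\rad \LS_\multii\super{s}$ as a direct summand (as a module over the image of $\TL_\multii(\nu)$, i.e.\ the Jones-Wenzl algebra $\WJ_\multii(\nu) \subset \TL_{\Summed_\multii}(\nu)$) of the ordinary radical $\rad \LS_{\Summed_\multii}\super{s}$.

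Next I would bring in the structure of $\rad \LS_n\super{s}$ over the ordinary Temperley-Lieb algebra, which is well understood from~\cite{gl2, rsa}: when nontrivial and not the whole module, $\rad \LS_n\super{s}$ is itself isomorphic to (a nontrivial quotient of) a standard module $\LS_n\super{s'}$ with $s' > s$, hence has a unique maximal submodule with simple head; more precisely, $\rad \LS_n\super{s}$ is \emph{generated} by its ``top layer'' and, in the standard non-semisimple analysis, turns out to be simple in the relevant range. The key step is then to transport this: because $\WJ_\multii(\nu)$ acts on $\LS_{\Summed_\multii}\super{s}$ through its embedding in $\TL_{\Summed_\multii}(\nu)$, and because of item~\ref{result33}/proposition~\ref{GeneratorPropTwo} (the explicit generators of $\TL_\multii(\nu)$, which sit inside $\TL_{\Summed_\multii}(\nu)$), any $\TL_\multii(\nu)$-submodule of $\rad\LS_\multii\super{s}$ pushes forward under $\WJEmb_\multii$ to a $\WJ_\multii(\nu)$-submodule of $\rad\LS_{\Summed_\multii}\super{s}$; conversely intersecting a $\TL_{\Summed_\multii}(\nu)$-submodule with the image $\WJEmb_\multii(\LS_\multii\super{s}) = \WJProj_\multii \LS_{\Summed_\multii}\super{s}$ yields a $\TL_\multii(\nu)$-submodule. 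So the lattice of $\TL_\multii(\nu)$-submodules of $\rad\LS_\multii\super{s}$ embeds into the lattice of $\WJ_\multii(\nu)$-submodules of $\rad\LS_{\Summed_\multii}\super{s}$. If the latter radical is simple over $\TL_{\Summed_\multii}(\nu)$, I would argue it remains simple — or at worst zero — under the smaller algebra $\WJ_\multii(\nu)$ by using that $\WJProj_\multii$ is a ``full'' idempotent for the relevant simple module, i.e.\ $\WJProj_\multii$ does not annihilate $\rad\LS_{\Summed_\multii}\super{s}$ unless it annihilates all of it, which one checks via~\eqref{PreIfonlyIf2} linking triviality of $\rad\LS_\multii\super{s}$ and $\rad\LS_n\super{s}$.

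Concretely, the steps in order: (1) establish $\WJEmb_\multii(\rad\LS_\multii\super{s}) = \WJProj_\multii\,\rad\LS_{\Summed_\multii}\super{s}$ and that $\WJProjHat_\multii$ restricts to a surjection $\rad\LS_{\Summed_\multii}\super{s}\twoheadrightarrow\rad\LS_\multii\super{s}$, using~\eqref{IdComp} and the compatibility of the bilinear forms~\eqref{LSBiFormExt}; (2) recall/cite from~\cite{rsa, gl2} that the ordinary radical $\rad\LS_n\super{s}$, when nonzero and proper, is a simple $\TL_n(\nu)$-module (in the range $\max\multii<\ppmin(q)$, so $n = \Summed_\multii$ may be large, one must handle the general chain structure — here I would quote the socle/head description of standard modules); (3) use corner-algebra/idempotent-truncation: $\WJProj_\multii \rad\LS_{\Summed_\multii}\super{s}$ is a module over $\WJProj_\multii \TL_{\Summed_\multii}(\nu) \WJProj_\multii$, and truncation by an idempotent sends a simple module either to zero or to a simple module over the corner algebra; (4) identify the corner algebra with (the relevant block of) $\TL_\multii(\nu)$ via corollary~\ref{AnIsoCor} and proposition~\ref{GeneratorPropTwo}, concluding simplicity; (5) for the totally degenerate case $\rad\LS_\multii\super{s}=\LS_\multii\super{s}$ the statement follows from the companion conjecture/analysis in section~\ref{rofSect32} identifying this module with a simple one (as in~\eqref{example0}). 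The main obstacle I anticipate is step (2)–(3): for large $\Summed_\multii$ the ordinary radical $\rad\LS_n\super{s}$ need \emph{not} be simple — it can have a longer Loewy series when $s$ is deep in the non-semisimple regime — so the honest argument must show that the \emph{particular} idempotent truncation by $\WJProj_\multii$ collapses this series to length at most one. This is where the hypothesis $\max\multii<\ppmin(q)$ should be essential, and making that collapse rigorous (rather than merely plausible) is the crux; it is presumably why the authors state it as a conjecture rather than a theorem.
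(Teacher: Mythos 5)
First, a point of comparison: the paper offers no proof of this statement --- it is stated there as a conjecture --- so there is no ``paper argument'' for your proposal to match; it has to stand on its own. As written, it does not: you yourself flag that steps (2)--(3) are the crux and leave them unresolved, so what you have is a strategy sketch rather than a proof. The reduction in your step (1) is indeed already available in the paper (lemma~\ref{EmbProjLem} and corollary~\ref{EmbProjCor} give $\rad \LS_\multii\super{s} = \WJProjHat_\multii \rad \LS_{\Summed_\multii}\super{s}$ and $\WJEmb_\multii \rad\LS_\multii\super{s} = \WJProj_\multii\rad\LS_{\Summed_\multii}\super{s}$, and corollary~\ref{AnIsoCor} identifies the module structures over $\TL_\multii(\nu)\cong \WJProj_\multii\TL_{\Summed_\multii}(\nu)\WJProj_\multii$), and the corner-algebra fact in step (3) --- an idempotent truncation $e M$ of a simple $A$-module $M$ is either zero or a simple $eAe$-module --- is elementary. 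So the entire weight of the argument rests on the input you call step (2): that $\rad\LS_{n}\super{s}$ is always zero or simple over $\TL_n(\nu)$.

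That is precisely where the gap sits, and your handling of it is both incomplete and partly misstated. You worry that for large $\Summed_\multii$ the ordinary radical ``can have a longer Loewy series''; in fact for $\TL_n(\nu)$ the standard modules have Loewy length at most two and the radical, when nonzero, is irreducible (Goodman--Wenzl~\cite{gwe}, Westbury~\cite{bw}, and the structure results surveyed in~\cite{rsa}), so no ``collapse of a longer series'' by the idempotent is needed --- but this theorem is not among the facts the present paper proves or imports, and your proposal neither proves it nor cites it precisely, so the crucial step is simply missing. Your step (5) is also unsupported: you defer the totally degenerate case $\rad\LS_\multii\super{s}=\LS_\multii\super{s}$ to the discussion in section~\ref{rofSect32} and to~\eqref{example0}, but the paper only expresses a belief there, not a theorem; moreover the deferral is unnecessary, since total degeneracy of the valenced module does not force total degeneracy of $\LS_{\Summed_\multii}\super{s}$, whose radical remains zero or simple and truncates under $\WJProj_\multii$ to the whole valenced module, so the same corner argument covers that case. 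In short: the skeleton (embed, identify the radical as $\WJProj_\multii\rad\LS_{\Summed_\multii}\super{s}$, truncate a simple module by the idempotent) is sound and, once the ordinary-Temperley--Lieb structure theorem is supplied with a proof or a precise citation, would likely settle the conjecture; but as submitted the argument has a genuine hole at its center, acknowledged by you, and therefore does not prove the statement.
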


\subsection{Faithfulness of the link state representations} \label{FaithfulSect3}

In this section, we investigate when the link state representation of $\TL_\multii (\nu)$ on $\LS_\multii$ is faithful.
We say that the bilinear form on $\smash{\LS_\multii\super{s}}$ (resp.~$\LS_\multii$) 
is \emph{nondegenerate} if its radical is trivial, i.e., $\smash{\rad \LS_\multii\super{s}} = \{0\}$ (resp.~$\rad \LS_\multii = \{0\}$).
In corollary~\ref{PreFaithfulCor}, we prove that the representation of $\TL_\multii (\nu)$ on $\LS_\multii$
is faithful if and only if the bilinear form on $\LS_\multii$ is nondegenerate.
We obtain it as a corollary of proposition~\ref{PreFaithfulProp} below, in the special case that $\multiii = \multii$.

In the proof of proposition~\ref{PreFaithfulProp}, we use the Gram matrix $\smash{\Gram_\multii\super{s}}$ of 
the bilinear form~\eqref{LSBiFormExt} with respect to the basis $\smash{\LP_\multii\super{s}}$ of valenced link patterns. This matrix is given by 
\begin{align} \label{GramMatrixForSec3}
[ \Gram_\multii\super{s} ]_{\alpha, \beta} := \BiForm{\alpha}{\beta} , \quad 
\text{for all $\alpha, \beta \in \LP_\multii\super{s}$.}
\end{align}
In the proof, we only use the elementary fact that the matrix $\smash{\Gram_\multii\super{s}}$ is invertible if and only if $\rad \smash{\LS_\multii\super{s}} = \{0\}$. 
We study this matrix in greater detail in the next section~\ref{GramMatrixSect}.

\begin{prop} \label{PreFaithfulProp}
Suppose $\max \multiii < \ppmin(q)$. 
The following statements are equivalent:
\begin{enumerate}
\itemcolor{red}

\item \label{FF1}
We have $\rad \smash{\LS_\multiii\super{s}} \neq \{0\}$, for some integer $s \in \DefectSet_\multii^\multiii$.

\item \label{FF2}
There exists a nonzero 
valenced tangle $T \in \smash{\TL_\multii^\multiii (\nu)}$ such that 
\begin{align} \label{desired}
T\gamma=0 , \quad 
\textnormal{for all valenced link states $\gamma \in \smash{\LS_\multiii\super{t}}$ with $t \in \smash{\DefectSet_\multii^\multiii}$.}
\end{align}
\end{enumerate}
\end{prop}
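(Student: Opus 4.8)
The plan is to prove the two implications separately, exploiting the structural results already available: lemma~\ref{RidoutIdLem} (the identity $\BarAction \alpha \quad \beta \BarAction \gamma = \BiForm{\beta}{\gamma}\alpha$) and the isomorphism of lemma~\ref{WJSandwichLem}, which writes every valenced tangle in $\smash{\TL_\multii^\multiii(\nu)}$ as a linear combination of the diagrams $\BarAction \alpha \quad \beta \BarAction$ with $\alpha \in \smash{\LP_\multii\super{s}}$, $\beta \in \smash{\LP_\multiii\super{s}}$, $s \in \smash{\DefectSet_\multii^\multiii}$.

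For \ref{FF1}$\Rightarrow$\ref{FF2}: suppose $\rad \smash{\LS_\multiii\super{s}} \neq \{0\}$ for some $s \in \smash{\DefectSet_\multii^\multiii}$, and pick a nonzero $\beta \in \rad \smash{\LS_\multiii\super{s}}$. Since $s \in \smash{\DefectSet_\multii^\multiii}$, there is at least one $\alpha \in \smash{\LP_\multii\super{s}}$, and I set $T := \BarAction \alpha \quad \beta \BarAction$. This $T$ is nonzero because the $\BarAction \alpha' \quad \beta' \BarAction$ (over link patterns) form a basis of $\smash{\TL_\multii^\multiii}$ by lemma~\ref{WJSandwichLem}, and expanding $\beta$ in the link-pattern basis of $\smash{\LS_\multiii\super{s}}$ gives a nonzero combination of basis tangles. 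Now take any $\gamma \in \smash{\LS_\multiii\super{t}}$ with $t \in \smash{\DefectSet_\multii^\multiii}$. If $t \neq s$, then $\BarAction \alpha \quad \beta \BarAction \gamma = 0$ by the defect-counting argument already used in the proof of proposition~\ref{HomLem2} (a defect mismatch forces a turn-back path, killed by rule~\eqref{TurnBack0}). If $t = s$, then by lemma~\ref{RidoutIdLem}, $T\gamma = \BarAction \alpha \quad \beta \BarAction \gamma = \BiForm{\beta}{\gamma}\alpha = 0$ since $\beta \in \rad \smash{\LS_\multiii\super{s}}$. Hence $T$ annihilates all of $\LS_\multiii$, giving \ref{FF2}.

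For \ref{FF2}$\Rightarrow$\ref{FF1}: I argue by contraposition. Suppose $\rad \smash{\LS_\multiii\super{s}} = \{0\}$ for every $s \in \smash{\DefectSet_\multii^\multiii}$; equivalently, the Gram matrix $\smash{\Gram_\multiii\super{s}}$ is invertible for each such $s$. Let $T \in \smash{\TL_\multii^\multiii(\nu)}$ satisfy $T\gamma = 0$ for all $\gamma \in \smash{\LS_\multiii\super{t}}$, $t \in \smash{\DefectSet_\multii^\multiii}$; I must show $T = 0$. By lemma~\ref{WJSandwichLem}, write $T = \sum_{s} \sum_{\alpha, \beta} c_{s,\alpha,\beta}\, \BarAction \alpha \quad \beta \BarAction$. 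Fix $s_0 \in \smash{\DefectSet_\multii^\multiii}$ and a link pattern $\gamma \in \smash{\LP_\multiii\super{s_0}}$. Applying $T$ to $\gamma$ and using both facts above (defect mismatch kills the $s \neq s_0$ terms, lemma~\ref{RidoutIdLem} evaluates the $s = s_0$ terms) yields
\begin{align}\label{eq:faithfulkey}
0 = T\gamma = \sum_{\alpha \, \in \, \LP_\multii\super{s_0}} \Bigg( \sum_{\beta \, \in \, \LP_\multiii\super{s_0}} c_{s_0,\alpha,\beta} \, \BiForm{\beta}{\gamma} \Bigg) \alpha .
\end{align}
Since $\smash{\LP_\multii\super{s_0}}$ is a basis of $\smash{\LS_\multii\super{s_0}}$, every coefficient $\sum_{\beta} c_{s_0,\alpha,\beta}\BiForm{\beta}{\gamma}$ vanishes, for each $\alpha$ and each $\gamma \in \smash{\LP_\multiii\super{s_0}}$. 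For fixed $\alpha$, this says the row vector $(c_{s_0,\alpha,\beta})_\beta$ lies in the left kernel of $\smash{\Gram_\multiii\super{s_0}}$ (with entries indexed as in~\eqref{GramMatrixForSec3}); invertibility forces $c_{s_0,\alpha,\beta} = 0$ for all $\beta$. As $s_0$ and $\alpha$ were arbitrary, all coefficients vanish, so $T = 0$.

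The main obstacle — really the only subtle point — is making sure the diagrammatic bookkeeping in the two implications is airtight: that $\BarAction \alpha \quad \beta \BarAction \gamma$ genuinely vanishes whenever the defect numbers of $\beta$ (equivalently, the index $s$) and $\gamma$ (index $t$) disagree, and more importantly that this is exactly the hypothesis needed, i.e. that ranging $\gamma$ over all $t \in \smash{\DefectSet_\multii^\multiii}$ suffices (a tangle in $\smash{\TL_\multii^\multiii}$ only has components in $s$-degrees lying in $\smash{\DefectSet_\multii^\multiii}$, by~\eqref{WJDirSum}, so no other degrees need be tested). Both points follow from arguments already spelled out in the excerpt (the turn-back argument from proposition~\ref{HomLem2} and the $s$-grading~\eqref{WJDirSum}), so the proof should be short once these are invoked.
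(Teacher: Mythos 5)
There is a genuine gap, and it sits at the heart of both directions: you assert that $\BarAction \alpha \quad \beta \BarAction \gamma = 0$ whenever the grade $s$ of the diagram (its number of crossing links) differs from the defect number $t$ of $\gamma$. This is only true for $s < t$, which is the case treated in the proof of proposition~\ref{HomLem2} and in observation~\eqref{SmallerRAnnihilate}: there the link state has \emph{more} defects than the diagram has crossing links, so two defects of $\gamma$ get joined and rule~\eqref{TurnBack0} applies. For $s > t$ the action is generically nonzero: the simplest counterexample is $\alpha = \beta = {}$the all-defect pattern, for which $\BarAction \alpha \quad \beta \BarAction = \mathbf{1}$ acts as the identity on every $\smash{\LS_\multiii\super{t}}$; more generally the surplus crossing links just close up through the links of $\gamma$, producing a partial closure of $\beta$ that is not controlled by the degree-$s$ bilinear form even when $\beta \in \rad \smash{\LS_\multiii\super{s}}$. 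Consequently your candidate $T = \BarAction \alpha \quad \beta \BarAction$ in \ref{FF1}$\Rightarrow$\ref{FF2} annihilates $\smash{\LS_\multiii\super{t}}$ only for $t \geq s$, not for $t < s$, and this failure is exactly why the paper's proof is long: it chooses $s$ \emph{minimal} with $\rad \smash{\LS_\multiii\super{s}} \neq \{0\}$ and then recursively adds lower-grade correction terms $\sum_{r<s} c_{\alpha,\beta}\super{r} \BarAction \alpha \quad \beta \BarAction$, determined by solving linear systems $\smash{\Gram_\multiii\super{t}} v_\alpha = -a_\alpha$ whose solvability uses the invertibility of the Gram matrices in all degrees $t < s$ (guaranteed by minimality of $s$). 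None of this is present in your argument, and without it the construction simply does not satisfy~\eqref{desired}.

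The same false vanishing claim infects your \ref{FF2}$\Rightarrow$\ref{FF1} direction: in your display, when $T$ is applied to $\gamma \in \smash{\LP_\multiii\super{s_0}}$, the components of $T$ in grades $s > s_0$ do \emph{not} die, so the identity you write is wrong for a general $s_0$. This half is salvageable — take $s_0$ maximal among the grades appearing in $T$ (as the paper does), or argue by downward induction on $s_0$, so that only the correct vanishing ($s < s_0$) is needed — but as written both implications rest on the incorrect two-sided "defect mismatch" principle, and the first implication cannot be repaired without essentially reproducing the paper's recursive construction.
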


\begin{proof} 
We prove the equivalence as follows:
\begin{enumerate}[leftmargin=3.5em]
\item[\ref{FF2} $\Rightarrow$ \ref{FF1}:] 
Suppose that there exists a nonzero valenced tangle $T \in \smash{\TL_\multii^\multiii (\nu)}$ 
such that we have $T\gamma = 0$, 
for all valenced link states $\gamma \in \smash{\LS_\multiii\super{t}}$ with $t \in \smash{\DefectSet_\multii^\multiii}$.
We will show that for some integer $s \in \DefectSet_\multii^\multiii$, 
the bilinear form on $\smash{\LS_\multiii\super{s}}$ is nondegenerate. 
For this, we expand the valenced tangle $T$ according to the number $r$ of crossing links:
\begin{align} \label{TSumForm} 
T \overset{\eqref{WJDirSum}}{=} \sum_{r \, \in \, \DefectSet_\multii^\multiii} T\super{r} ,
\quad \text{where $T\super{r} \in \smash{\TL_\multii^{\multiii ;\scaleobj{0.85}{(r)}}(\nu)}$ as in~\eqref{TLs}}, 
\end{align} 
and we expand each valenced tangle $T\super{r}$ in the link diagram basis $\smash{\LD_\multii^{\multiii; \scaleobj{0.85}{(r)}}}$ 
of $\smash{\TL_\multii^{\multiii ;\scaleobj{0.85}{(r)}}(\nu)}$ given by~(\ref{LDs},~\ref{WJSandwichMap}):
\begin{align} \label{Ts} 
T\super{r} \overset{\eqref{WJSandwichMap}}{=} \sum_{\substack{\alpha  \, \in \, \LP_{\multii}\super{r} \\ \beta \, \in \, \LP_\multiii\super{r}}} 
c_{\alpha,\beta}\super{r} \BarAction \alpha \quad \beta \BarAction ,
\end{align} 
for some coefficients $\smash{c_{\alpha,\beta}\super{r}} \in \bC$. 
Now, we note that if $\gamma \in \smash{\LP_\multiii\super{t}}$ and $r < t$, 
then each term in $T\super{r} \gamma$ contains a turn-back link. 
This observation combined with identity~\eqref{TurnBack0} and linearity shows that
\begin{align} \label{SmallerRAnnihilate}
\gamma \in \smash{\LS_\multiii\super{t}} \qquad \overset{\eqref{TurnBack0}}{\Longrightarrow} \qquad
T\super{r} \gamma = 0, \text{ for all $r < t$.}
\end{align}
Also, because $T \neq 0$, we may choose $s \in \DefectSet_\multii^\multiii$ 
to be the largest number such that $\smash{c_{\alpha,\beta}\super{s}} \neq 0$ in~\eqref{Ts}, 
for some pair of valenced link patterns $\alpha \in \smash{\LP_\multii\super{s}}$ 
and $\beta \in \smash{\LP_\multiii\super{s}}$. 
Then, 
using lemma~\ref{RidoutIdLem}, we have
\begin{align} 
\label{TGammaIsZero0}
0 = T \gamma 
\overset{\eqref{TSumForm}}{=} \sum_{r \, \leq \, s} T\super{r} \gamma 
& \; \overset{\eqref{Ts}}{=} \sum_{\substack{\alpha  \, \in \, \LP_{\multii}\super{s} \\ \beta \, \in \, \LP_\multiii\super{s}}} 
c_{\alpha,\beta}\super{s} \BarAction \alpha \quad \beta \BarAction \gamma \\
\label{TGammaIsZero1}
& \; \overset{\eqref{RidoutId}}{=} \; \sum_{\alpha  \, \in \, \LP_{\multii}\super{s} } \BiForm{\Sigma_\alpha}{\gamma}\alpha, \qquad 
\text{where} \quad \Sigma_\alpha := \sum_{\beta \, \in \, \smash{\LP_\multiii\super{s}}} c_{\alpha,\beta}\super{s} \beta \; \in \; \LS_\multiii\super{s} ,
\end{align}
for all valenced link states $\gamma \in \smash{\LS_\multiii\super{s}}$.
By our choice of $s$, we have $\Sigma_\alpha \neq 0$, for some $\alpha \in \smash{\LP_\multii\super{s}}$.
Furthermore, with the set $\smash{\LP_\multii\super{s}}$ linearly independent, 
(\ref{TGammaIsZero0}--\ref{TGammaIsZero1}) implies that 
${\BiForm{\Sigma_\alpha}{\gamma} = 0}$, for all $\gamma \in \smash{\LS_\multiii\super{s}}$. 
Hence, $\Sigma_\alpha$ is a nonzero valenced tangle that 
lies in the radical of $\smash{\LS_\multiii\super{s}}$, so this radical is not trivial.
This proves that \ref{FF2} $\Rightarrow$ \ref{FF1}.

\item[\ref{FF1} $\Rightarrow$ \ref{FF2}:] 
Suppose $\smash{\rad \LS_\multiii\super{s}} \neq \{0\}$, for some $s \in \DefectSet_\multii^\multiii$. 
We will construct a nonzero valenced tangle 
$T \in \smash{\TL_\multii^\multiii(\nu)}$ with property~\eqref{desired}. 
For this, we let $s$ be the smallest number such that $\rad \smash{\LS_\multiii\super{s}} \neq \{0\}$, 
we choose arbitrary nonzero valenced link states $\delta \in \rad \smash{\LS_\multiii\super{s}}$ 
and $\epsilon \in \smash{\LS_\multii\super{s}}$, 
and we form the valenced tangle
\begin{align} \label{TConstruction}
T 
:= \BarAction \epsilon \quad \delta \BarAction \; + \;   
\sum_{ \vphantom{\LP_{\multii}\super{r}} r \, < \, s}  \; \sum_{\substack{\alpha  \, \in \, \LP_{\multii}\super{r} \\ \beta \, \in \, \LP_\multiii\super{r}}} 
c_{\alpha,\beta}\super{r} \BarAction \alpha \quad \beta \BarAction ,
\end{align} 
where the coefficients $\smash{c_{\alpha,\beta}\super{r}} \in \bC$ are to be determined later. 
We immediately note that $T \neq 0$ because $\delta, \epsilon \neq 0$.

Let $\gamma$ be an arbitrary valenced $(\multiii,t)$-link state with $t \in \smash{\DefectSet_\multii^\multiii}$.
If $t > s$, then we immediately have $T\gamma = 0$ by~\eqref{SmallerRAnnihilate}.
Also, if $t = s$, then with our choice of $\delta \in \rad \smash{\LS_\multii\super{s}}$, lemma~\ref{RidoutIdLem} gives
$T\gamma=0$ as well:
\begin{align} 
T \gamma \overset{\eqref{TConstruction}}{=}  \BarAction \epsilon \quad \delta \BarAction \gamma \; + \;   
\sum_{ \vphantom{\LP_{\multii}\super{r}} r \, < \, s}  \; \sum_{\substack{\alpha  \, \in \, \LP_{\multii}\super{r} \\ \beta \, \in \, \LP_\multiii\super{r}}} 
c_{\alpha,\beta}\super{r} \BarAction \alpha \quad \beta \BarAction \gamma 
\overset{\eqref{RidoutId}}{\underset{\eqref{SmallerRAnnihilate}}{=}} \BiForm{\delta}{\gamma} \epsilon = 0.
\end{align} 
Hence, we assume that $t < s$ from now on. 
We prove recursively that there exists a $t$-independent choice of coefficients for $T$ in~\eqref{TConstruction}
such that $T$ has property~\eqref{desired}.
For this, we fix $t < s$ and assume that we have already found the desired coefficients for $r > t$:
i.e., we assume that there exist constants $\smash{b_{\alpha,\beta}\super{r}}$ such that
\begin{align} 
\label{recassu} 
& c_{\alpha, \beta}\super{r} = b_{\alpha,\beta}\super{r}, 
&& \quad \text{for all $r \in \DefectSet_\multii^\multiii$  such that $t < r < s$, and for all
$\alpha \in \smash{\LP_\multii\super{r}}$, $ \beta \in \smash{\LP_\multiii\super{r}}$} \\
\label{recimpl}
\qquad \Longrightarrow \qquad 
& T\gamma = 0 , && \quad
\text{for all $\gamma \in \LS_\multiii\super{u}$ with $u \in \DefectSet_\multii^\multiii$ such that $u > t$}.
\end{align}
Our aim is to determine constants 
$\smash{\big\{ b_{\alpha,\beta}\super{t} \, \big| \, 
\alpha \in \LP_\multii\super{t} \, \beta \in \LP_\multiii\super{t} \big\}}$, 
such that the valenced tangle $T$~\eqref{TConstruction} whose coefficients for $r \geq t$ are chosen 
to be~\eqref{recassu} for $r > t$ and $\smash{c_{\alpha,\beta}\super{t} = b_{\alpha,\beta}\super{t}}$ 
for $r=t$ has the property
\begin{align} \label{desiredrecu}
T\gamma=0 , \quad 
\text{for all $\gamma \in \LS_\multiii\super{u}$ with $u \in \DefectSet_\multii^\multiii$ such that $u \geq t$.}
\end{align}
By linearity and~\eqref{recimpl}, it suffices to establish~\eqref{desiredrecu} for valenced link patterns
$\gamma \in \smash{\LP_\multiii\super{t}}$:
\begin{align} 
\label{LongSumExpression0}
0 = T \gamma \; \underset{\eqref{recassu}}{\overset{\eqref{TConstruction}}{=}} \; 
& \BarAction \epsilon \quad \delta \BarAction \gamma 
\; + \; \sum_{ \vphantom{\alpha  \, \in \, \LP_{\multii}\super{r}} t \, < \, r \, < \, s}  \; \sum_{\substack{\alpha  \, \in \, \LP_{\multii}\super{r} \\ \beta \, \in \, \LP_\multiii\super{r}}} 
b_{\alpha,\beta}\super{r} \BarAction \alpha \quad \beta \BarAction \gamma \\
\label{LongSumExpression1}
\; & +  \sum_{\substack{\alpha  \, \in \, \LP_{\multii}\super{t} \\ \beta \, \in \, \LP_\multiii\super{t}}} 
b_{\alpha,\beta}\super{t} \BarAction \alpha \quad \beta \BarAction \gamma
\; + \; \sum_{ \vphantom{\alpha  \, \in \, \LP_{\multii}\super{r}} r \, < \, t}  \; \sum_{\substack{\alpha  \, \in \, \LP_{\multii}\super{r} \\ \beta \, \in \, \LP_\multiii\super{r}}} 
c_{\alpha,\beta}\super{r} \BarAction \alpha \quad \beta \BarAction \gamma .
\end{align} 
Expanding the first line~\eqref{LongSumExpression0} in the valenced link pattern basis 
$\smash{\LP_\multii\super{t}}$ of the space $\smash{\LS_\multii\super{t}}$, we have
\begin{align} \label{LongSumExpression2}
\BarAction \epsilon \quad \delta \BarAction \gamma 
\; + \; \sum_{ \vphantom{\alpha  \, \in \, \LP_{\multii}\super{r}} t \, < \, r \, < \, s} \; 
\sum_{\substack{\alpha  \, \in \, \LP_{\multii}\super{r} \\ \beta \, \in \, \LP_\multiii\super{r}}} 
b_{\alpha,\beta}\super{r} \BarAction \alpha \quad \beta \BarAction \gamma 
\; = \sum_{\eta  \, \in \, \smash{\LP_{\multii}\super{t}}} a_{\eta, \gamma}\super{t} \eta ,
\end{align}
for some coefficients $\smash{a_{\eta, \gamma}\super{t}} \in \bC$ indexed by valenced link patterns 
$\eta \in \smash{ \LP_\multii\super{t}}$ and $\gamma \in \smash{ \LP_\multiii\super{t}}$.
We note that these coefficients are determined by our assumption~\eqref{recassu} together with 
our choice of $\epsilon$ and $\delta$ in the beginning.

Then, after inserting~\eqref{LongSumExpression2} into~\eqref{LongSumExpression0}, 
applying identity~\eqref{RidoutId} from lemma~\ref{RidoutIdLem} to the first sum in~\eqref{LongSumExpression1}, 
and observing that the second sum 
in~\eqref{LongSumExpression1} equals zero by~\eqref{SmallerRAnnihilate}, we obtain
\begin{align} \label{LongSumExpression3}
T \gamma \; \overset{\eqref{RidoutId}}{\underset{(\ref{SmallerRAnnihilate},~\ref{LongSumExpression2})}{=}}  
\sum_{\eta  \, \in \, \smash{\LP_{\multii}\super{t}}} a_{\eta, \gamma}\super{t} \eta 
\; + \sum_{\substack{\alpha  \, \in \, \LP_{\multii}\super{t} \\ \beta \, \in \, \LP_\multiii\super{t}}} b_{\alpha,\beta}\super{t} \BiForm{\beta}{\gamma}{\alpha} 
\; =  \sum_{\alpha  \, \in \, \LP_{\multii}\super{t}}
\Big( a_{\alpha, \gamma}\super{t}  \; + \sum_{\beta \, \in \, \LP_\multiii\super{t}} b_{\alpha,\beta}\super{t} \BiForm{\beta}{\gamma} \Big) \alpha .
\end{align}

Now, the requirement that~\eqref{LongSumExpression3} equals zero for all valenced link patterns 
$\gamma \in \smash{\LP_\multiii\super{t}}$  
determines a linear system of equations for each valenced link pattern $\alpha \in \smash{\LP_\multii\super{t}}$. 
Each system is of the form
\begin{align} \label{LinearSystemFaithful}
\Gram_\multiii\super{t} v_\alpha = - a_\alpha ,
\end{align}
where $\smash{\Gram_\multiii\super{t}}$ is the Gram matrix~\eqref{GramMatrixForSec3},
and $v_\alpha$ and $a_\alpha$  are respectively the vectors with components 
$\smash{b_{\alpha,\beta}\super{t}}$ and
$\smash{a_{\alpha, \beta}\super{t}}$ indexed by $\beta \in \smash{\LP_\multiii\super{t}}$. 
Now, because $t < s$ and $s$ is the smallest integer such that $\rad \smash{\LS_\multiii\super{s}} \neq \{0\}$, 
we have 
\begin{align}
\rad \smash{\LS_\multiii\super{t}} = \{0\} \qquad \Longrightarrow \qquad \det \smash{\Gram_\multiii\super{t}} \neq 0.
\end{align}
Therefore, for each $\alpha \in \smash{\LP_\multii\super{t}}$,
the corresponding linear system~\eqref{LinearSystemFaithful} has a unique solution
$v_\alpha = \smash{\big( b_{\alpha,\beta}\super{t} \big)_{\beta \in \LP_\multiii\super{t}}}$.
Assumption~(\ref{recassu}--\ref{recimpl}) now expands to the result
\begin{align} 
& c_{\alpha, \beta}\super{r} = b_{\alpha,\beta}\super{r}, 
&& \quad \text{for all $r \in \DefectSet_\multii^\multiii$  such that $t \leq r < s$, and for all
$\alpha \in \smash{\LP_\multii\super{r}}$, $ \beta \in \smash{\LP_\multiii\super{r}}$} \\
\qquad \Longrightarrow \qquad 
& T\gamma = 0 , && \quad
\text{for all $\gamma \in \LS_\multiii\super{u}$ with $u \in \DefectSet_\multii^\multiii$ such that $u \geq t$}.
\end{align}
Hence, $T$ with these coefficients indeed has property~\eqref{desiredrecu}.
This shows that we may recursively determine all coefficients for $T$ in~\eqref{TConstruction}
in such a way that $T$ has property~\eqref{desired}.
In other words, we can construct a nonzero valenced tangle 
$T \in \smash{\TL_\multii^\multiii(\nu)}$ with property~\eqref{desired}.
This proves that \ref{FF1} $\Rightarrow$ \ref{FF2}.
\end{enumerate}
\end{proof}

\begin{cor} \label{PreFaithfulCor}
Suppose $\max \multii < \ppmin(q)$. 
The link state representation induced by the action of $\TL_\multii (\nu)$ on $\LS_\multii$
is faithful if and only if $\rad \LS_\multii = \{0\}$.
\end{cor}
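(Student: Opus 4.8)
The plan is to deduce this corollary immediately from Proposition~\ref{PreFaithfulProp} by specializing $\multiii = \multii$, so the bulk of the work is just bookkeeping. First I would record the two routine identifications that make the specialization meaningful: (i) by definition $\smash{\TL_\multii^\multii(\nu)} = \TL_\multii(\nu)$; and (ii) $\DefectSet_\multii^\multii = \DefectSet_\multii$, which follows from Lemma~\ref{WJSandwichLem}, since a valenced link diagram in $\smash{\LD_\multii^\multii}$ with exactly $s$ crossing links exists precisely when $\smash{\LP_\multii\super{s}} \neq \emptyset$, i.e.\ when $s \in \DefectSet_\multii$. The hypothesis $\max \multii < \ppmin(q)$ of the corollary is exactly the hypothesis $\max \multiii < \ppmin(q)$ of Proposition~\ref{PreFaithfulProp} under this specialization (and is also what guarantees that $\LS_\multii$ carries the $\TL_\multii(\nu)$-module structure in the first place).

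Next I would translate the two equivalent statements of Proposition~\ref{PreFaithfulProp} (with $\multiii = \multii$) into the language of the corollary. Statement~\ref{FF1} becomes ``$\rad \smash{\LS_\multii\super{s}} \neq \{0\}$ for some $s \in \DefectSet_\multii$,'' which by the direct-sum decomposition~\eqref{RadDirSum} is equivalent to $\rad \LS_\multii \neq \{0\}$. Statement~\ref{FF2} becomes ``there exists a nonzero $T \in \TL_\multii(\nu)$ with $T\gamma = 0$ for all $\gamma \in \smash{\LS_\multii\super{t}}$ and all $t \in \DefectSet_\multii$''; since $\LS_\multii = \bigoplus_{s \in \DefectSet_\multii} \smash{\LS_\multii\super{s}}$, the condition $T\gamma = 0$ for all such $\gamma$ is the same as $T\alpha = 0$ for all $\alpha \in \LS_\multii$, i.e.\ $T$ lies in the kernel of the link state representation $\rho \colon \TL_\multii(\nu) \longrightarrow \End \LS_\multii$. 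Hence \ref{FF2} is equivalent to the statement that $\rho$ is \emph{not} faithful.

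Finally, combining these translations with the equivalence \ref{FF1} $\Leftrightarrow$ \ref{FF2} supplied by Proposition~\ref{PreFaithfulProp} yields: the link state representation of $\TL_\multii(\nu)$ on $\LS_\multii$ is not faithful if and only if $\rad \LS_\multii \neq \{0\}$. Taking the contrapositive gives the asserted equivalence. I do not anticipate any genuine obstacle here; all the substantive content (constructing an annihilating tangle from a radical vector, and extracting a radical vector from an annihilating tangle) was already carried out in the proof of Proposition~\ref{PreFaithfulProp}, and what remains is only the identification of the two index sets and of the kernel of $\rho$ with condition~\ref{FF2}.
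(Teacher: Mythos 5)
Your proposal is correct and follows exactly the paper's own route: the paper likewise obtains the corollary by specializing Proposition~\ref{PreFaithfulProp} to $\multiii = \multii$ and invoking the direct-sum decomposition~\eqref{RadDirSum} of $\rad \LS_\multii$. Your extra bookkeeping (identifying $\DefectSet_\multii^\multii$ with $\DefectSet_\multii$ and the kernel of the representation with condition~\ref{FF2}) is just a more explicit writing-out of the same argument.
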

\begin{proof}
This follows from decomposition~\eqref{RadDirSum}
and proposition~\ref{PreFaithfulProp} specialized to the case $\multiii = \multii$.
\end{proof}

We remark that corollary~\ref{PreFaithfulCor} does not hold if we replace $\LS_\multii$ with $\smash{\LS_\multii\super{s}}$ in it. 
Indeed, to paraphrase observation~\eqref{SmallerRAnnihilate}, if $T \in \smash{\TL_\multii^{\multiii ;\scaleobj{0.85}{(r)}}(\nu)}$ for some $r < s$, 
then we have $T\alpha = 0$ for all $\alpha \in \smash{\LS_\multii\super{s}}$.
\section{Trivalent link states and Gram matrix} \label{GramMatrixSect}

Proposition~\ref{GenLem2} shows that the quotient $\smash{\Quo_\multii\super{s}}$, 
if not trivial,
is a simple $\TL_\multii(\nu)$-module. It is natural to ask what is the dimension of the simple module $\smash{\Quo_\multii\super{s}}$
and when do we have the equality $\smash{\Quo_\multii\super{s}} = \smash{\LS_\multii\super{s}}$, or equivalently, $\rad \smash{\LS_\multii\super{s}} = \{0\}$.
The answer hinges on a complete understanding of the radical of $\smash{\LS_\multii\super{s}}$.

In section~\ref{RadicalSect}, we completely determine the radical of $\smash{\LS_\multii\super{s}}$,
for all multiindices $\multii \in \{\OneVec{0}\} \cup \smash{\bZpos^\#}$ 
and integers $s \in \DefectSet_\multii$.  That is, we determine bases for and the dimensions of all of these radicals.  
In the present section, we introduce tools that we use in section~\ref{RadicalSect} to complete these tasks.  
A key tool is the Gram matrix $\smash{\Gram_\multii\super{s}}$ of the bilinear form~\eqref{LSBiFormExt},
\begin{align}\label{GramMatrix2}
[ \Gram_\multii\super{s} ]_{\alpha, \beta} := \BiForm{\alpha}{\beta} , \quad 
\text{for all $\alpha, \beta \in \LP_\multii\super{s}$.}
\end{align}
The Gram matrix encodes valuable information about the radical of $\smash{\LS_\multii\super{s}}$.  Specifically, the dimension of $\rad \smash{\LS_\multii\super{s}}$ 
equals the nullity of $\smash{\Gram_\multii\super{s}}$, so $\rad \smash{\LS_\multii\super{s}}$ is trivial 
if and only if $\det \smash{\Gram_\multii\super{s}} \neq 0$.  Hence, we are interested in zeros of the determinant of $\smash{\Gram_\multii\super{s}}$.  
In proposition~\ref{GramDetLem}, we give an explicit formula for $\smash{\det \Gram_\multii\super{s}}$. 
We use this formula to prove the main result of this section, proposition~\ref{VanishDetLem2}, 
which says that $\smash{\det \Gram_\multii\super{s}} \neq 0$ 
(thus, $\rad \smash{\LS}_\multii\super{s}$ is trivial), for all $s \in \DefectSet_\multii$ if $\Summed_\multii < \ppmin(q)$.

However, this key result is not enough to determine all parameters $q \in \bC^\times$ 
such that the radical of $\smash{\LS_\multii\super{s}}$ trivial, let alone to find a basis for it. 
Because of its complexity, the formula appearing in proposition~\ref{GramDetLem} for 
$\smash{\det \Gram_\multii\super{s}}$ is difficult to use 
when $\Summed_\multii \geq \ppmin(q)$.  For example, if $\ppmin(q) \neq 2$, then proposition~\ref{VanishDetLem2} in fact holds 
as an if-and-only-if statement (corollary~\ref{RadicalCor4}), but one direction of it
does not follow easily by using only the formula for the determinant.

Thus, in section~\ref{RadicalSect} we turn to other methods for determining the radical of $\smash{\LS_\multii\super{s}}$: 
we use special types of valenced link states that we call ``trivalent link states."
They form an orthogonal basis for the standard module $\smash{\LS_\multii\super{s}}$ if $\Summed_\multii < \ppmin(q)$,
as we prove in proposition~\ref{IndOrthBasisLem}.
We define and study these link states in sections~\ref{ConformalBlocksSect} and~\ref{ConformalBlocksProp}.
First, we give a simple definition for the trivalent link states, assuming $\Summed_\multii < \ppmin(q)$.
Then, for cases where this inequality does not hold, we give an alternative, more complicated definition, 
which reduces to the simpler one if $\Summed_\multii < \ppmin(q)$.

In section~\ref{DetSect}, we use the basis of trivalent link states to diagonalize the Gram matrix $\smash{\Gram_\multii\super{s}}$, 
compute its determinant (proposition~\ref{GramDetLem}), and prove that this determinant does not vanish if $\Summed_\multii < \ppmin(q)$ (proposition~\ref{VanishDetLem2}). 
In section~\ref{RecursSect}, we derive various recursion formulas for the deteminant of $\smash{\Gram_\multii\super{s}}$, to be used
in section~\ref{RadicalSect} and in subsequent work~\cite{fp2}.

\subsection{Definition of the trivalent link states} \label{ConformalBlocksSect}

In this section, we introduce trivalent link states.  In our forthcoming article~\cite{fp3}, we identify 
them with conformal blocks of CFT via relations that we call ``the link-state spin-chain correspondence"~\cite{fp2}
and the ``spin-chain -- Coulomb gas correspondence"~\cite{kp2}.

\begin{figure}
\includegraphics[scale=0.275]{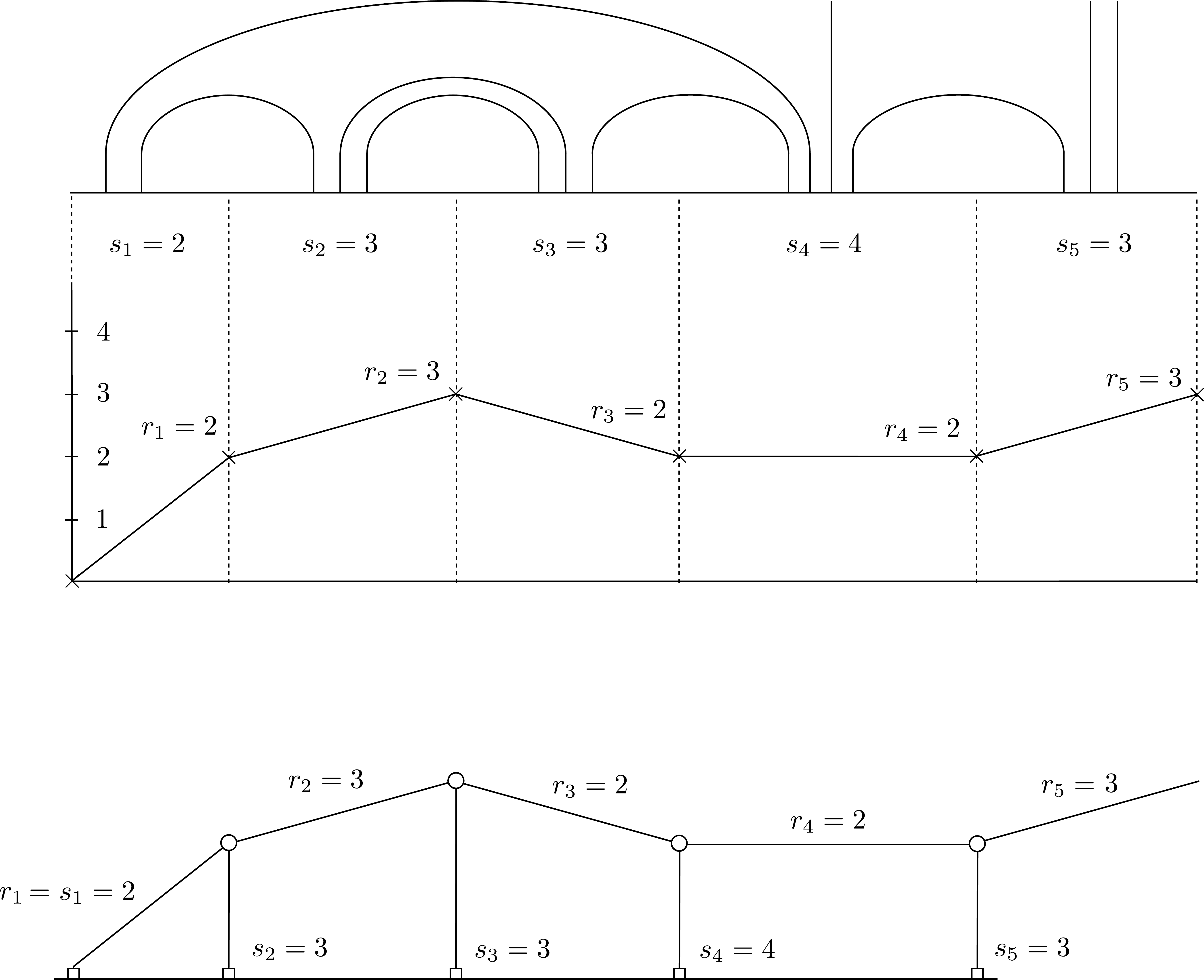} \vspace*{1cm}
\caption{\label{fig1}
A walk over $\multii = (2, 3, 3, 4, 3)$, representing a link pattern in $\mathrm{LP}_{(2, 3, 3, 4, 3)}$,
and the walk representation of the latter.}
\end{figure}

To begin, we define the trivalent link states under the assumption that $\Summed_\multii < \ppmin(q)$.  
In section~\ref{DetSect}, we use them to diagonalize the Gram matrix $\smash{\Gram_\multii\super{s}}$ and compute its determinant.
We present the definition first and explain later 
why the assumption that $\Summed_\multii < \ppmin(q)$ is needed.  Using the \emph{open three-vertex} notation~\cite{kl}
\begin{align}\label{3vertex2} 
\text{for $s \in \DefectSet\sub{r,t}$} , \qquad 
\vcenter{\hbox{\includegraphics[scale=0.275]{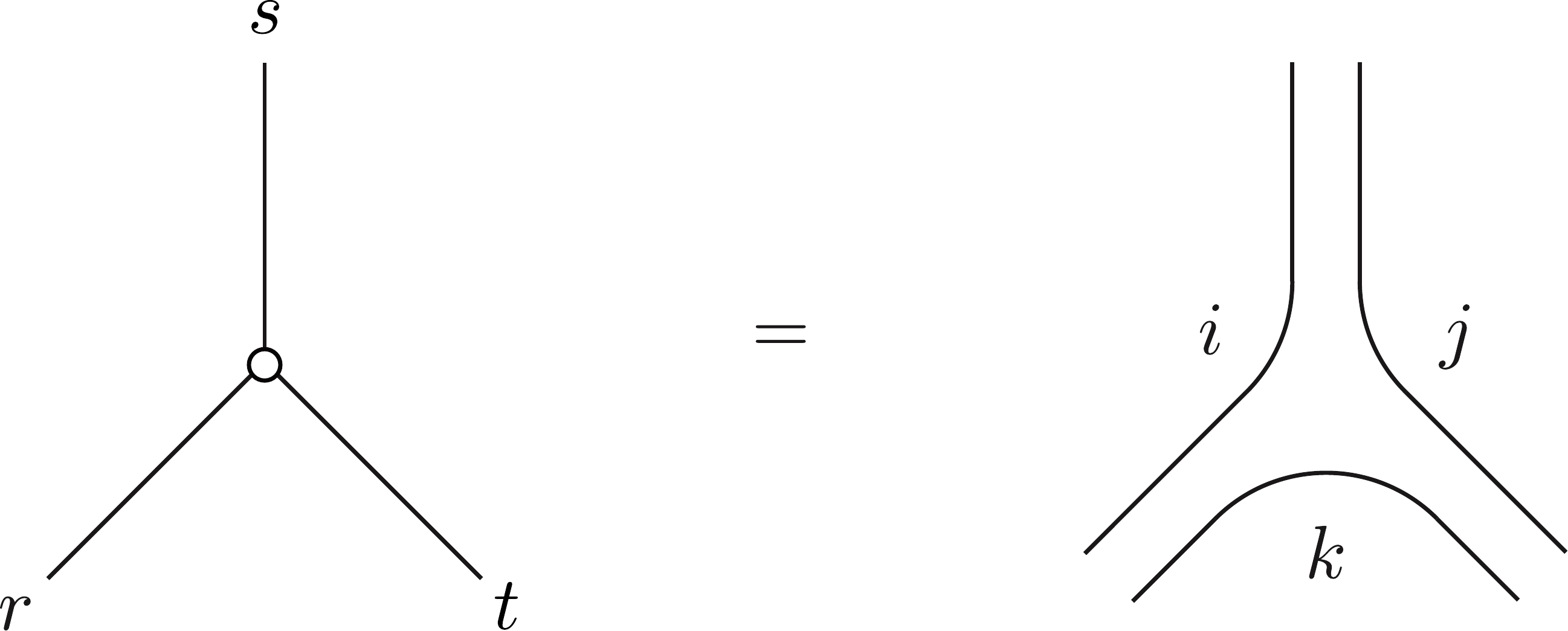} ,}} \qquad  \qquad \qquad
\begin{aligned} 
i & = \frac{r + s - t}{2} , \\[.7em] 
j & = \frac{s + t - r}{2} , \\[.7em] 
k & = \frac{t + r - s}{2} ,
\end{aligned}
\end{align}
we write each $\multii$-valenced link pattern $\alpha$ in the generic form of a trivalent graph with open vertices,
\begin{align}\label{Generic} 
\alpha \quad = \quad 
\vcenter{\hbox{\includegraphics[scale=0.275]{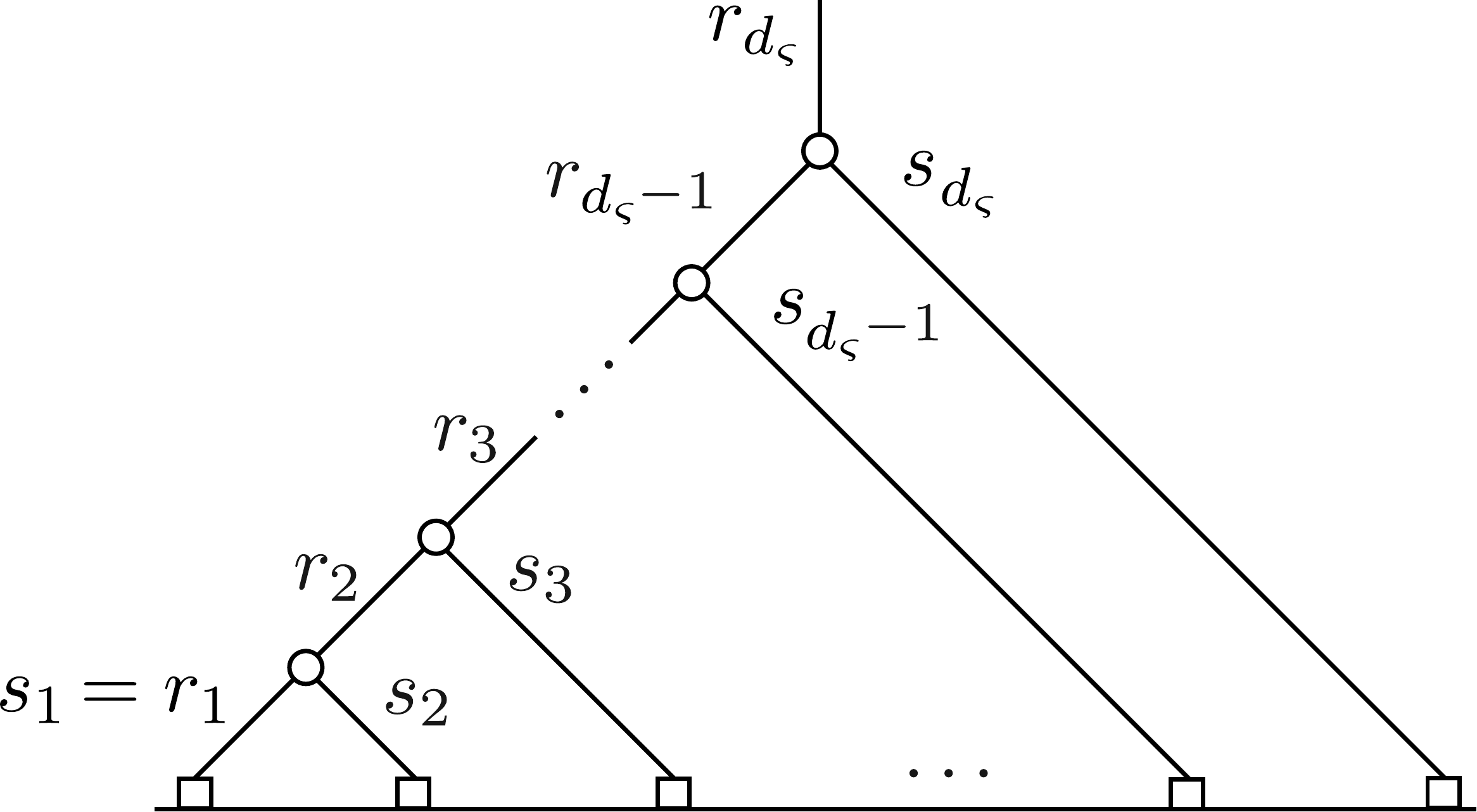} ,}}
\end{align}
which we call the \emph{walk representation} of $\alpha$, and we let 
\begin{align}\label{alphaWalk} 
\varrho_\alpha := (r_1, r_2, \ldots, r_{\np_\multii}) 
\end{align}
denote the multiindex of cable sizes in~\eqref{Generic}.  
The multiindex $\varrho_\alpha$ is a \emph{walk over $\multii = (\sIndex_1, \sIndex_2, \ldots, \sIndex_{\np_\multii})$}, 
that is, a multiindex $\varrho = (r_1, r_2, \ldots, r_{\np_\multii})$
whose entries, each called a \emph{height}, satisfy the following two conditions relative to $\multii$:
\begin{align}\label{WalkHeights} 
r_0 = 0 , \qquad r_{i+1} \in \DefectSet\sub{r_i,\sIndex_{i+1}} 
\overset{\eqref{SpecialDefSet}}{=} 
\{ |r_i - \sIndex_{i+1}|, |r_i - \sIndex_{i+1}| + 2, \ldots, r_i + \sIndex_{i+1} \}, \quad \text{for all $i \in \{1,2, \ldots, \np_{\multii}-1\}$} . 
\end{align}
As a notation convention, we do not explicitly show the zeroth height $r_0 = 0$ of the walk 
$\varrho = (r_1, r_2, \ldots, r_{\np_\multii})$ so the length of $\varrho$ matches that of $\multii$.  
However, it is convenient to implicitly include this entry for later use.  We observe that
\begin{align}
r_0 = 0 \quad \text{and} \quad r_1 \in \DefectSet\sub{r_0, \sIndex_1}
\qquad \overset{\eqref{SpecialDefSet}}{\Longrightarrow} \qquad \quad r_1 = \sIndex_1.
\end{align}

We visualize a walk by joining the points 
$(j, r_j)$ and $(j+1, r_{j+1})$ with a line segment, for each $j \in \{0,1,\ldots,\np_{\multii}-1\}$, as exemplified in figure~\ref{fig1}.
We refer to the $j$:th vertex as the $j$:th ``step" of the walk. 
For each walk $\varrho$ over $\multii$, there are two related walks over $\OneVec{n}_\multii$  that are useful to consider:
\begin{align}
\label{highest} 
\varrho\superscr{\, \uparrow} &:= \text{the unique highest walk over $\OneVec{n}_\multii$ that touches the walk $\varrho$ over $\multii$ at all steps of the latter}, \\
\label{lowest} 
\varrho\superscr{\, \downarrow} &:= \text{the unique lowest walk over $\OneVec{n}_\multii$ that touches the walk $\varrho$ over $\multii$ at all steps of the latter}. 
\end{align}
For example, figure~\ref{fig2} shows an illustration of the walks $\varrho\superscr{\, \uparrow}$ 
and $\varrho\superscr{\, \downarrow}$ relative to the walk $\varrho$ of figure~\ref{fig1}. 
For each walk $\varrho = (r_1, r_2, \ldots, r_{\np_\multii})$ over $\multii$, we also define the following quantities:
\begin{align}
\label{minmaxh} 
h_{\min, j}(\varrho) &:= 
\begin{cases} 
\dfrac{r_j + r_{j+1} - \sIndex_{j+1}}{2}, & j \in \{0, 1, \ldots, \np_\multii - 1\}, \\ r_{\np_\multii}, & j = \np_\multii, 
\end{cases} \\  
\label{minmaxh2} 
h_{\max, j}(\varrho) &:= 
\begin{cases} 
\dfrac{r_j + r_{j+1} + \sIndex_{j+1}}{2}, & j \in \{0, 1, \ldots, \np_\multii - 1\}, \\ r_{\np_\multii}, & j = \np_\multii ,
\end{cases} 
\end{align}
and for each $j \in \{0,1,\ldots,\np_\multii - 1\}$, we call $h_{\max, j}(\varrho)$ the \emph{apex} of the $(j+1)$:st step of $\varrho$, 
and we call the last apex
$h_{\max, \np_\multii}(\varrho) = r_{\np_\multii}$ the \emph{defect} of the walk $\varrho$.  
By definition, we have
\begin{align} \label{EasyCompare} 
0 \overset{\eqref{WalkHeights}}{\leq} h_{\min,j}(\varrho) \underset{\eqref{minmaxh2}}{\overset{\eqref{minmaxh}}{\leq} } h_{\max,j}(\varrho) ,
\end{align}
for all $j \in \{0,1,\ldots,\np_\multii\}$. Finally, for each $j \in \{1, 2, \ldots, \np_\multii\}$, using notation from~\eqref{cutmultii}, we define
\begin{align} 
\label{mudefn} 
\mu_j(\multii) : & = \max \{ \smin( \smash{\lds}_j ), \smin( \smash{\fds}_j ) \} , \\ 
\label{mumaxdefn} 
{\rm M}_j(\multii) : &= \min \{ \smax( \smash{\lds}_j ), \smax( \smash{\fds}_j ) \} 
\overset{\eqref{smaxeq}}{=} \min \{ \sIndex_1 + \sIndex_2 + \dotsm + \sIndex_j, \sIndex_j + \sIndex_{j+1} + \dotsm + \sIndex_{\np_\multii}\} .
\end{align}

\begin{figure}
\includegraphics[scale=0.275]{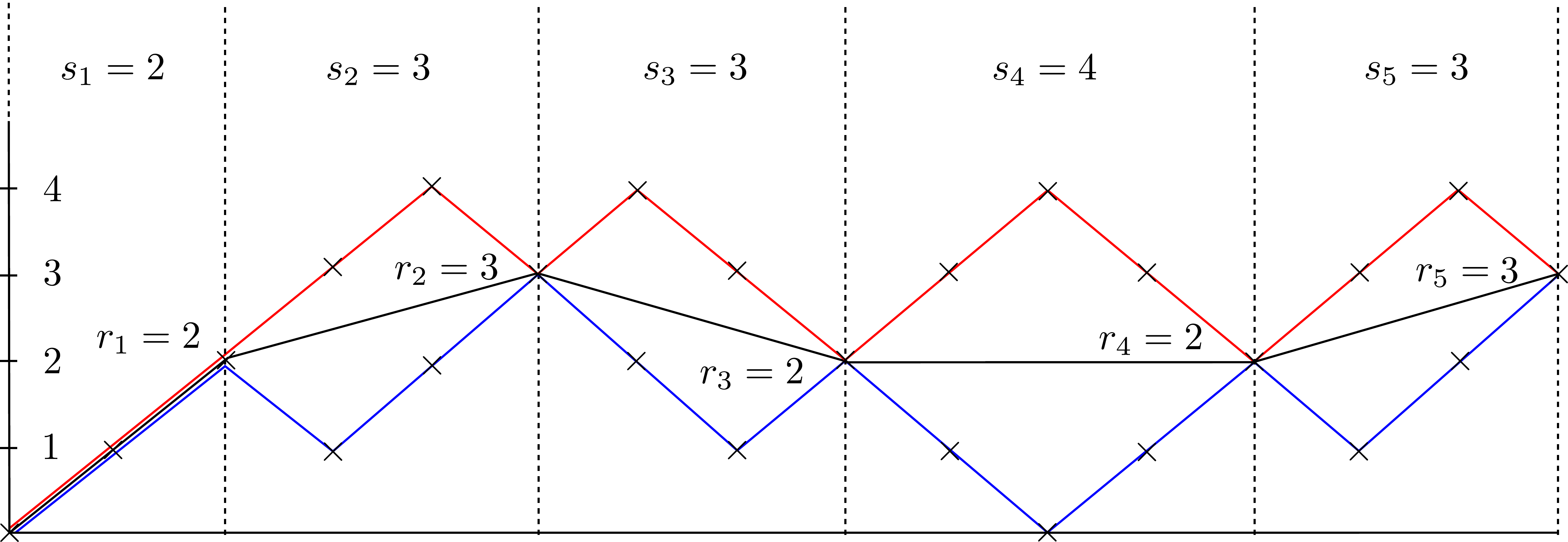}
\caption{\label{fig2}
The black walk is $\varrho$, the red walk is $\varrho^{\, \uparrow}$ and the blue walk is $\varrho^{\, \downarrow}$.}
\end{figure}

\begin{lem} \label{WalkMultiiLem} 
The following hold:
\begin{enumerate}
\itemcolor{red}
\item \label{wmlIt1} 
For each valenced link pattern $\alpha \in \LP_\multii$, the multiindex $\varrho_\alpha$, defined in~\eqref{alphaWalk}, is a walk over $\multii$.

\item \label{wmlIt2} 
The map $\alpha \mapsto \varrho_\alpha$ is a bijection from $\LP_\multii$ to the set of all walks over $\multii$.

\item \label{wmlIt3} 
For each valenced link pattern $\alpha \in \LP_\multii$, the defect of $\alpha$ equals the defect of $\varrho_\alpha$.

\item \label{wmlIt4} 
The number $\smash{\Dim_\multii\super{s}}$ and the set $\smash{\DefectSet_\multii}$, defined respectively 
in~\eqref{Recursion2} and~\eqref{DefSetDefn2}, satisfy
\begin{align} 
\label{CountLP} 
\Dim_\multii\super{s} = \; & \# \{ \textnormal{walks over $\multii$ with defect $s$} \}, \\
\label{AltDefectSet}
\DefectSet_\multii = \; & \{ s \in \bZnn \, | \, 
\textnormal{there exists a walk over $\multii$ with defect $s$} \} .
\end{align}

\item \label{wmlIt7} 
If there exists a walk $\varrho$ over $\multii$ with defect zero \textnormal{(}i.e., $0 \in \DefectSet_\multii$ by~\eqref{AltDefectSet}\textnormal{)}, 
then the set $\smash{\DefectSet_{\smash{\lds}_j} \cap \DefectSet_{\smash{\fds}_j}}$ is nonempty and equals
\begin{align}\label{pre-rj2} 
\DefectSet_{\smash{\lds}_j} \cap \DefectSet_{\smash{\fds}_j} = \{ \mu_j(\multii), \mu_j(\multii) + 2, \ldots, {\rm M}_j(\multii) \},
\end{align}
for each $j \in \{1,2,\ldots,\np_\multii\}$.
Furthermore, the $j$:th height of any walk $\varrho$ over $\multii$ with defect zero is an element of this set, and conversely, 
every element of this set equals the $j$:th height of some walk $\varrho$ over $\multii$ with defect zero.

\item \label{wmlIt6} 
For each $j \in \{1,2,\ldots,\np_\multii\}$,
at the $j$:th step, the minimum height over all walks $\varrho$ over $\multii$ that have defect zero equals
\begin{align}\label{minrj} 
\min_\varrho r_j = \mu_j(\multii) := \max \{ \smin( \smash{\lds}_j ), \smin( \smash{\fds}_j ) \}. 
\end{align}

\item \label{wmlIt5} 
For each walk $\varrho$ over $\multii$ and for each $j \in \{1,2,\ldots,\np_{\multii}-1\}$, 
\begin{enumerate}
\itemcolor{red}
\item[(a):]  \label{wmlIt5a} the maximal height of $\varrho\superscr{\, \uparrow}$ between the heights $r_j$ and $r_{j+1}$ of $\varrho$ equals $h_{\max, j}(\varrho)$, and

\item[(b):]  \label{wmlIt5b} the minimal height of $\varrho\superscr{\, \downarrow}$ between the heights $r_j$ and $r_{j+1}$ of $\varrho$ equals $h_{\min, j}(\varrho)$.
\end{enumerate}

\item \label{wmlIt9} 
For each $j \in \{0,1, \ldots, \np_\multii-1\}$, at the $(j+1)$:st step, the minimum apex over all walks $\varrho$ over $\multii$ with defect zero is
\begin{align}\label{minapex} 
\min_\varrho h_{\max, j}(\varrho) = \max \bigg\{ \frac{\mu_j(\multii) + \mu_{j+1}(\multii) + \sIndex_{j+1}}{2}, \, \sIndex_{j+1} \bigg\}. 
\end{align}
\end{enumerate}
\end{lem}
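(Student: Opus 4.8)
The plan is to establish first the bijection between valenced link patterns and walks (items~\ref{wmlIt1}--\ref{wmlIt2}), so that the counting statements (items~\ref{wmlIt3}--\ref{wmlIt4}) drop out immediately, and then to read off the combinatorial structure of walks --- especially those with defect zero --- for the remaining items, relying throughout on the explicit descriptions of $\DefectSet\sub{r,t}$, the symmetries~\eqref{SameDefSet}, and the form~\eqref{DefSet2} of $\DefectSet_\multii$ furnished by lemmas~\ref{SpecialDefLem}--\ref{SminLem}.

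\emph{Items~\ref{wmlIt1}--\ref{wmlIt4}.} By construction, the walk representation~\eqref{Generic} expresses $\alpha$ as an iterated open three-vertex graph whose $(i{+}1)$-st vertex carries incoming cables of sizes $r_i,\sIndex_{i+1}$ and outgoing cable of size $r_{i+1}$; by~\eqref{3vertex2} this vertex exists exactly when the three strand counts $(r_i{+}r_{i+1}{-}\sIndex_{i+1})/2$, $(r_{i+1}{+}\sIndex_{i+1}{-}r_i)/2$, $(\sIndex_{i+1}{+}r_i{-}r_{i+1})/2$ are nonnegative integers, i.e.\ exactly when $r_{i+1}\in\DefectSet\sub{r_i,\sIndex_{i+1}}$ via~\eqref{SpecialDefSet}; together with $r_0=0$, this is condition~\eqref{WalkHeights}, proving item~\ref{wmlIt1}. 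The assignment $\alpha\mapsto\varrho_\alpha$ is injective because $r_j$ is intrinsically the number of links of $\alpha$ crossing the vertical segment just to the right of its $j$-th node, and once all cable sizes are fixed the non-crossing condition forces which strands meet at each open three-vertex, so $\alpha$ is recovered from $\varrho_\alpha$ (I would phrase this as an induction on $\np_\multii$ matching the recursion of lemma~\ref{LSDimLem2}); it is surjective because every walk over $\multii$ assembles, via the open three-vertices with strand counts as in~\eqref{3vertex2}, into a valenced link pattern, which has no loop link since no strand at a node returns to that node. Item~\ref{wmlIt3} holds since the defect of $\alpha$ is the number $r_{\np_\multii}$ of strands exiting the walk representation on the right, which is the defect $h_{\max,\np_\multii}(\varrho_\alpha)=r_{\np_\multii}$ of $\varrho_\alpha$. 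Item~\ref{wmlIt4} then follows: $\Dim_\multii\super{s}=\#\LP_\multii\super{s}$ by lemma~\ref{LSDimLem2}, and $\LP_\multii\super{s}$ is in bijection with the walks over $\multii$ of defect $s$ by items~\ref{wmlIt2}--\ref{wmlIt3}, which is~\eqref{CountLP}; formula~\eqref{AltDefectSet} follows from~\eqref{DefSetDefn2} and the same bijection (the case of $\multii$ with zero entries reducing to the reduced multiindex via the recursive definition of $\DefectSet_\multii$).

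\emph{Items~\ref{wmlIt6}--\ref{wmlIt7}.} The central claim I would prove is that, when $0\in\DefectSet_\multii$, the set of heights attained at step $j$ by walks over $\multii$ with defect zero equals $\DefectSet_{\smash{\lds}_j}\cap\DefectSet_{\smash{\fds}_j}$, and that this set is nonempty. Indeed, for such a walk $\varrho$, the truncation $(r_1,\dots,r_j)$ is a walk over $\smash{\lds}_j$ ending at $r_j$, so $r_j\in\DefectSet_{\smash{\lds}_j}$; reversing the tail $(r_{\np_\multii},\dots,r_j)$ and applying the symmetry relations~\eqref{SameDefSet} step by step exhibits it as the reversal of a walk over $\smash{\fds}_j$ ending at $r_j$, so $r_j\in\DefectSet_{\smash{\fds}_j}$ by~\eqref{EqualDefSets}. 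Conversely, given $h\in\DefectSet_{\smash{\lds}_j}\cap\DefectSet_{\smash{\fds}_j}$, gluing a walk over $\smash{\lds}_j$ ending at $h$ to the reversal of a walk over $\smash{\fds}_j$ ending at $h$ produces a walk over $\multii$ of defect zero with $j$-th height $h$. Since a defect-zero walk exists by hypothesis, the intersection is nonempty, so the two progressions $\DefectSet_{\smash{\lds}_j}=\{\smin(\smash{\lds}_j),\dots,\smax(\smash{\lds}_j)\}$ and $\DefectSet_{\smash{\fds}_j}=\{\smin(\smash{\fds}_j),\dots,\smax(\smash{\fds}_j)\}$ (using~\eqref{DefSet2} and~\eqref{smaxeq}) share a parity, hence their intersection is $\{\mu_j(\multii),\mu_j(\multii)+2,\dots,{\rm M}_j(\multii)\}$ by~\eqref{mudefn}--\eqref{mumaxdefn}. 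This is~\eqref{pre-rj2} (item~\ref{wmlIt7}), and taking the minimum gives~\eqref{minrj} (item~\ref{wmlIt6}). For items~\ref{wmlIt5} and~\ref{wmlIt9}, I would argue as follows: in the $(j{+}1)$-st segment of $\varrho\superscr{\,\uparrow}$ (resp.\ $\varrho\superscr{\,\downarrow}$) there are $\sIndex_{j+1}$ unit steps joining heights $r_j$ and $r_{j+1}$; since the net displacement is $r_{j+1}-r_j$, the segment has $(\sIndex_{j+1}+r_{j+1}-r_j)/2$ up-steps and $(\sIndex_{j+1}+r_j-r_{j+1})/2$ down-steps, so doing all up-steps (resp.\ all down-steps) first --- the unique height-maximizing (resp.\ -minimizing) choice on the segment, which keeps all heights $\ge\min(r_j,r_{j+1})\ge0$ --- reaches $r_j+(\sIndex_{j+1}+r_{j+1}-r_j)/2=h_{\max,j}(\varrho)$ (resp.\ $r_j-(\sIndex_{j+1}+r_j-r_{j+1})/2=h_{\min,j}(\varrho)$) by~\eqref{minmaxh}--\eqref{minmaxh2}; this simultaneously verifies that $\varrho\superscr{\,\uparrow}$ and $\varrho\superscr{\,\downarrow}$ of~\eqref{highest}--\eqref{lowest} exist and are unique, giving item~\ref{wmlIt5}. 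For item~\ref{wmlIt9}, a defect-zero walk has apex $h_{\max,j}(\varrho)=(r_j+r_{j+1}+\sIndex_{j+1})/2\ge(\mu_j(\multii)+\mu_{j+1}(\multii)+\sIndex_{j+1})/2$ by item~\ref{wmlIt7}, and the walk condition $r_{j+1}\in\DefectSet\sub{r_j,\sIndex_{j+1}}$ forces $r_{j+1}\ge|r_j-\sIndex_{j+1}|$, whence $h_{\max,j}(\varrho)\ge(r_j+|r_j-\sIndex_{j+1}|+\sIndex_{j+1})/2\ge\sIndex_{j+1}$; so the apex is at least the right side of~\eqref{minapex}. For the reverse inequality I would split on whether $\mu_j(\multii)+\mu_{j+1}(\multii)\ge\sIndex_{j+1}$: if so, a defect-zero walk with $r_j=\mu_j(\multii)$ and $r_{j+1}=\mu_{j+1}(\multii)$ has apex $(\mu_j(\multii)+\mu_{j+1}(\multii)+\sIndex_{j+1})/2$; if not, a defect-zero walk with $r_j+r_{j+1}=\sIndex_{j+1}$, $r_j\ge\mu_j(\multii)$, $r_{j+1}\ge\mu_{j+1}(\multii)$ (the configuration with no through-strand at the $(j{+}1)$-st vertex) has apex $\sIndex_{j+1}$; in both cases the prescribed pair of heights is jointly realizable by the gluing construction above.

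The step I expect to be the main obstacle is precisely this joint attainability: in item~\ref{wmlIt9} (and tacitly in the converse half of item~\ref{wmlIt7}) one needs a single defect-zero walk realizing the minimizing heights at two \emph{consecutive} steps, so one must verify that $r_j=\mu_j(\multii)$ and $r_{j+1}=\mu_{j+1}(\multii)$ (or the turn-back pair $r_j+r_{j+1}=\sIndex_{j+1}$) are compatible with the walk condition $r_{j+1}\in\DefectSet\sub{r_j,\sIndex_{j+1}}$ and that the left piece over $\smash{\lds}_{j+1}$ and the reversed right piece over $\smash{\fds}_j$ can both be routed through these prescribed heights; this requires a careful bookkeeping with lemma~\ref{SminLem}. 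A secondary point needing care is the injectivity in item~\ref{wmlIt2}, where the ``non-crossing forces the pairing'' argument is cleanest phrased as an induction on $\np_\multii$ tied to the recursions of lemmas~\ref{SetRecursLem} and~\ref{LSDimLem2}.
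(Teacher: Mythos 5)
Your treatment of items~\ref{wmlIt1}--\ref{wmlIt4}, \ref{wmlIt7}, \ref{wmlIt6}, and \ref{wmlIt5} follows the paper's own route: the admissibility of the open three-vertex~\eqref{3vertex2} gives the walk conditions~\eqref{WalkHeights}, substituting a walk back into the representation~\eqref{Generic} inverts $\alpha \mapsto \varrho_\alpha$, and the split-at-step-$j$/reverse/glue argument together with~\eqref{EqualDefSets} and~\eqref{DefSet2} yields~\eqref{pre-rj2} and~\eqref{minrj} exactly as in the paper (which dismisses item~\ref{wmlIt5} as ``simple geometry''; your up/down-step count is a fine substitute). The one place where the proposal is not a complete proof is the upper bound in item~\ref{wmlIt9}, and you say so yourself: you invoke ``a defect-zero walk with $r_j=\mu_j(\multii)$ and $r_{j+1}=\mu_{j+1}(\multii)$'' (resp.\ one with $r_j+r_{j+1}=\sIndex_{j+1}$, $r_j\ge\mu_j(\multii)$, $r_{j+1}\ge\mu_{j+1}(\multii)$), which presupposes exactly the joint attainability you flag as the main obstacle; the gluing of item~\ref{wmlIt7} prescribes only one height, not two consecutive ones, so this assertion still needs an argument.

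The paper closes this gap without ever producing a single walk attaining both minima. In the case $\sIndex_{j+1}\le\mu_j(\multii)+\mu_{j+1}(\multii)$ it takes \emph{two} defect-zero walks $\varrho$, $\varrho'$ with $r_j=\mu_j(\multii)$ and $r_{j+1}'=\mu_{j+1}(\multii)$ (available by item~\ref{wmlIt7}) and concatenates the head $(r_1,\ldots,r_j)$ with the tail $(r_{j+1}',\ldots,r_{\np_\multii}')$; only the single connecting condition $r_{j+1}'\in\DefectSet\sub{r_j,\sIndex_{j+1}}$ must be verified. This follows because~\eqref{minrj} gives $r_j'\ge\mu_j(\multii)$ and $r_{j+1}\ge\mu_{j+1}(\multii)$, so the walk conditions of $\varrho'$ and $\varrho$ at step $j{+}1$ force $|\mu_j(\multii)-\mu_{j+1}(\multii)|\le\sIndex_{j+1}$; together with the case assumption this puts $\sIndex_{j+1}$ in $\DefectSet\sub{\mu_j(\multii),\mu_{j+1}(\multii)}$, and the symmetry~\eqref{SameDefSet} of lemma~\ref{SpecialDefLem} converts this into the needed membership. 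In the opposite case $\mu_j(\multii)+\mu_{j+1}(\multii)<\sIndex_{j+1}$, the paper shows the turn-back pair is realizable by checking that the two progressions $\DefectSet_{\lds_j}\cap\DefectSet_{\fds_j}$ and $\sIndex_{j+1}-(\DefectSet_{\lds_{j+1}}\cap\DefectSet_{\fds_{j+1}})$ intersect (an endpoint comparison via~\eqref{pre-rj2} and~\eqref{mumaxdefn}), producing walks $\varrho,\varrho'$ with $r_j=\sIndex_{j+1}-r_{j+1}'$, which are again concatenated across step $j{+}1$. So your lower bound and case split coincide with the paper's; what is missing is this head-plus-tail concatenation and its one-step compatibility check, which rests on items~\ref{wmlIt6}--\ref{wmlIt7} and lemma~\ref{SpecialDefLem} rather than on the bookkeeping with lemma~\ref{SminLem} that you anticipated.
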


\begin{proof} 
We prove items~\ref{wmlIt1}--\ref{wmlIt9} as follows:
\begin{enumerate}[leftmargin=*]
\itemcolor{red}

\item From the definition~\eqref{3vertex2} of the open three-vertex, it is clear that the entries of $\varrho_\alpha = (r_1, r_2, \ldots, r_{\np_\multii})$ satisfy 
the defining conditions~\eqref{WalkHeights} of a walk over $\multii$.  

\item By item~\ref{wmlIt1}, for each valenced link pattern $\alpha \in \LP_\multii$,
the multiindex $\varrho_\alpha$~\eqref{alphaWalk} is a walk over $\multii$. 
On the other hand, we may substitute any walk $\varrho = (r_1, r_2, \ldots, r_{\np_\multii})$ over $\multii$ into~\eqref{Generic} 
to obtain a valenced link pattern $\alpha \in \LP_\multii$.  

\item Each link in $\alpha$ contributes no net gain to the defect of $\varrho_\alpha$, but each defect of $\alpha$ contributes a gain of one to $\varrho_\alpha$.  

\item Lemma~\ref{LSDimLem2} implies that $\#\smash{\LP_\multii\super{s}} = \smash{\Dim_\multii\super{s}}$, and items~\ref{wmlIt2} and~\ref{wmlIt3} imply that $\#\smash{\LP_\multii\super{s}}$ equals the number of walks over $\multii$ with defect $s$.  
Now,~\eqref{CountLP} follows from these two facts, and~\eqref{AltDefectSet} follows from this with definition~\eqref{DefSetDefn2} of $\DefectSet_\multii$.

\item If a walk $\varrho$ over $\multii$ has defect zero, then we split it into two pieces: $(r_1, r_2, \ldots, r_j)$, which is a walk over $\smash{\lds}_j$,
and $(r_{\np_\multii-1}, r_{\np_\multii-2}, \ldots, r_j)$, which is a walk over $\smash{ \tilde{\fds}_j }$.
As $r_j$ is the defect of either walk, item~\ref{wmlIt4} implies that
\begin{align}\label{pre-rj} 
r_j \overset{\eqref{AltDefectSet}}{\in} \DefectSet_{\smash{\lds}_j} \cap \DefectSet_{\tilde{\smash{\fds}}_j} 
\overset{\eqref{EqualDefSets}}{=} \DefectSet_{\smash{\lds}_j} \cap \DefectSet_{\smash{\fds}_j}. 
\end{align}
This shows that the set $\smash{\DefectSet_{\smash{\lds}_j} \cap \DefectSet_{\smash{\fds}_j}}$ is nonempty and, by~\eqref{DefSet2}, it has the form of~\eqref{pre-rj2}.  
Furthermore, for each element $s$ of this set,~\eqref{AltDefectSet} from 
item~\ref{wmlIt4} and observation~\eqref{EqualDefSets} implies that there exists a walk over 
$\smash{\lds}_j$ with defect $s$ and another walk over $\smash{ \tilde{\fds}_j }$ also with defect $s$. After reflecting the latter about a vertical axis and joining it to the former 
from the right, we obtain a walk $\varrho$ over $\multii = \smash{\lds}_j \oplus \smash{\fds}_j$ with defect zero and with its $j$:th height $r_j$ equaling $s$.

\item After combining (\ref{pre-rj2},~\ref{pre-rj}) and taking the minimum over all walks $\varrho$ over $\multii$ with defect zero, we infer that
\begin{align}\label{pre-rj3} 
\mu_j(\multii) := \max \{ \smin( \smash{\lds}_j ), \smin( \smash{\fds}_j ) \} \underset{\eqref{pre-rj2}}{\overset{\eqref{pre-rj}}{\leq}} \min_\varrho r_j. 
\end{align}
Because the left side is an element of $\smash{\DefectSet_{\smash{\lds}_j} \cap \DefectSet_{\smash{\fds}_j}}$, it follows from item~\ref{wmlIt7} that there exists 
a walk $\varrho$ over $\multii$ with defect zero and with its $j$:th height $r_j$ equaling this left side. Hence,~\eqref{pre-rj3} is really an equality, which gives~\eqref{minrj}.

\item Item~\ref{wmlIt5} can be proven by simple geometry.  

\item 
To begin, we observe that by definition~\eqref{WalkHeights}, for any walk $\varrho = (r_1, r_2, \ldots, r_{\np_\multii})$ over $\multii$, we have
$r_{j+1} \in \DefectSet\sub{r_j,\sIndex_{j+1}}$, which by  lemma~\ref{SpecialDefLem} translates to 
\begin{align} \label{sindexInSet}
\sIndex_{j+1} \in \DefectSet\sub{r_j,r_{j+1}} \overset{\eqref{SpecialDefSet}}{=} \{ |r_j - r_{j+1}|,  |r_j - r_{j+1}| + 2, \ldots,  r_j + r_{j+1} \} .
\end{align}
Therefore, we have 
\begin{align} \label{lowerbound2}
\min_\varrho h_{\max, j}(\varrho) 
\overset{\eqref{minmaxh2}}{=} \; &  \min_\varrho \left( \frac{r_j + r_{j+1} + \sIndex_{j+1}}{2} \right)
\overset{\eqref{sindexInSet}}{\geq}  \sIndex_{j+1} .
\end{align}
On the other hand, by item~\ref{wmlIt6}, we also have 
\begin{align} \label{lowerbound1}
\min_\varrho \left( \frac{r_j + r_{j+1} + \sIndex_{j+1}}{2} \right)
\geq \frac{( \min_\varrho r_j ) + ( \min_\varrho r_{j+1} ) + \sIndex_{j+1}}{2}
\overset{\eqref{minrj}}{=}  \frac{\mu_j(\multii) + \mu_{j+1}(\multii) + \sIndex_{j+1}}{2} ,
\end{align}
so altogether,~(\ref{lowerbound2}--\ref{lowerbound1}) imply
\begin{align} \label{lowerbound}
\min_\varrho h_{\max, j}(\varrho) \geq \max \bigg\{ \frac{\mu_j(\multii) + \mu_{j+1}(\multii) + \sIndex_{j+1}}{2}, \, \sIndex_{j+1} \bigg\}. 
\end{align}
Hence, to prove~\eqref{minapex}, it remains to prove the reverse inequality of~\eqref{lowerbound}. For this, we consider two cases:

\begin{enumerate}
\itemcolor{red}

\item[(a):]  $\sIndex_{j+1} \leq \mu_j(\multii) + \mu_{j+1}(\multii)$: 
Item~\ref{wmlIt7} shows that there exist two walks $\varrho = (r_1, r_2, \ldots, r_{\np_\multii})$ and
$\varrho' = (r_1', r_2', \ldots, r_{\np_\multii}')$ over $\multii$, both with defect zero and such that we have 
\begin{align}\label{r_rprime}
r_j = \mu_j(\multii) \qquad \text{and} \qquad r_{j+1}' = \mu_{j+1}(\multii) .
\end{align}
Then,~\eqref{sindexInSet} shows that $\sIndex_{j+1} \in \DefectSet\sub{r_j',r_{j+1}'}$, so item~\ref{wmlIt6} gives
\begin{align}\label{sindexsmall}
\mu_j(\multii) - \mu_{j+1}(\multii) \underset{\eqref{r_rprime}}{\overset{\eqref{minrj}}{\leq}} r_j' - r_{j+1}' \leq \sIndex_{j+1} 
\qquad \text{and} \qquad
\mu_{j+1}(\multii) - \mu_j(\multii) \underset{\eqref{r_rprime}}{\overset{\eqref{minrj}}{\leq}} r_{j+1} - r_j \leq \sIndex_{j+1} .
\end{align}
On the other hand, we observe that concatenating pieces of the walks $\varrho$ and $\varrho'$ we obtain a new walk
$(r_1, r_2, \ldots, r_j, r_{j+1}', r_{j+2}', \ldots, r_{\np_\multii}')$ over $\multii$ with defect zero: indeed, lemma~\ref{SpecialDefLem} shows that
\begin{align}
& \qquad |\mu_j(\multii) - \mu_{j+1}(\multii)| \overset{\eqref{sindexsmall}}{\leq} 
\sIndex_{j+1} \;  \leq \;  \mu_j(\multii) + \mu_{j+1}(\multii) \\
\overset{\eqref{SpecialDefSet}}{\Longrightarrow} & \qquad 
\sIndex_{j+1} \in \DefectSet\sub{\mu_j(\multii),\mu_{j+1}(\multii)} \overset{\eqref{SpecialDefSet}}{=}
\{ |\mu_j(\multii) - \mu_{j+1}(\multii)|, \ldots, \mu_j(\multii) + \mu_{j+1}(\multii) \} \\
\overset{\eqref{SameDefSet}}{\Longleftrightarrow} & \qquad
\mu_{j+1}(\multii) \in \DefectSet\sub{\mu_j(\multii),\sIndex_{j+1}} \\
\overset{\eqref{r_rprime}}{\Longleftrightarrow} & \qquad
r_{j+1}' \in \DefectSet\sub{r_j,\sIndex_{j+1}}.
\end{align}
Therefore, we obtain 
\begin{align}\label{upperbound}
\min_\varrho h_{\max, j}(\varrho) 
\overset{\eqref{minmaxh2}}{\leq} \frac{r_j + r_{j+1}' + \sIndex_{j+1}}{2} 
\overset{\eqref{r_rprime}}{=}  \frac{\mu_j(\multii) + \mu_{j+1}(\multii) + \sIndex_{j+1}}{2} .
\end{align}
By our assumption $\sIndex_{j+1} \leq \mu_j(\multii) + \mu_{j+1}(\multii)$, the right side of~\eqref{upperbound} equals 
the right side of~\eqref{minapex}. Thus, combining~\eqref{lowerbound} and~\eqref{upperbound} gives asserted equality~\eqref{minapex} in this case.

\item[(b):]  $\mu_j(\multii) + \mu_{j+1}(\multii) < \sIndex_{j+1}$: We observe that the sets
\begin{align}
\mathsf{E} := \DefectSet_{\smash{\lds}_j} \cap \DefectSet_{\smash{\fds}_j} \overset{\eqref{pre-rj2}}{=} \; & 
\{ \mu_j(\multii), \mu_j(\multii) + 2, \ldots, {\rm M}_j(\multii) \} , \\
\mathsf{F} := \sIndex_{j+1} -  (\DefectSet_{\smash{\lds}_{j+1}} \cap \DefectSet_{\smash{\fds}_{j+1}}) \overset{\eqref{pre-rj2}}{=} \; & 
\{ \sIndex_{j+1} - {\rm M}_{j+1}(\multii) , \sIndex_{j+1} - {\rm M}_{j+1}(\multii) +2 , \ldots, \sIndex_{j+1} - \mu_{j+1}(\multii) \}
\end{align}
satisfy $\mathsf{E} \cap \mathsf{F} \neq \emptyset$,
because we have $0 \leq \mu_j(\multii) \leq \sIndex_{j+1} - \mu_{j+1}(\multii)$ by assumption, and 
${\rm M}_j(\multii) \geq \sIndex_{j+1} - {\rm M}_{j+1}(\multii)$ by definition~\eqref{mumaxdefn}. 
(In fact, we have ${\rm M}_{j+1}(\multii) \geq \sIndex_{j+1}$.)
Hence, item~\ref{wmlIt7} shows that there exist two walks $\varrho = (r_1, r_2, \ldots, r_{\np_\multii})$ and
$\varrho' = (r_1', r_2', \ldots, r_{\np_\multii}')$ over $\multii$, both with defect zero and such that we have 
\begin{align} \label{r_rprime2}
r_j =  \sIndex_{j+1} -  r_{j+1}' .
\end{align}
Furthermore, the walk $(r_1, r_2, \ldots, r_j, r_{j+1}', r_{j+2}', \ldots, r_{\np_\multii}')$ obtained 
by concatenating pieces of the walks $\varrho$ and $\varrho'$ 
is a walk over $\multii$ with defect zero.
Therefore, we have 
\begin{align}\label{upperbound2}
\min_\varrho h_{\max, j}(\varrho) 
\overset{\eqref{minmaxh2}}{\leq} \frac{r_j + r_{j+1}' + \sIndex_{j+1}}{2} 
\overset{\eqref{r_rprime2}}{=}  \sIndex_{j+1} .
\end{align}
By our assumption $\mu_j(\multii) + \mu_{j+1}(\multii) < \sIndex_{j+1}$, the right side of~\eqref{upperbound2} equals 
the right side of~\eqref{minapex}. Thus, combining~\eqref{lowerbound} and~\eqref{upperbound2} gives asserted equality~\eqref{minapex} in this case.
\end{enumerate}
\end{enumerate}
This concludes the proof.
\end{proof}

Items~\ref{wmlIt7} and~\ref{wmlIt6} of lemma~\ref{WalkMultiiLem} may seem to be useful only in the 
case that $0 \in \DefectSet_\multii$. However, by lemma~\ref{SminLem}, we have
\begin{align} 
\begin{cases} 
s \in \DefectSet_\vartheta \\ \multii = \vartheta \oplus (s) 
\end{cases} 
\qquad \overset{\eqref{RecuFormula}}{\Longrightarrow} \qquad 
0 = \smin(\multii) \overset{\eqref{DefSet2}}{\in} \DefectSet_\multii, 
\end{align}
which allows us to adapt items~\ref{wmlIt7} and~\ref{wmlIt6} to useful statements when $0 \not \in \DefectSet_\vartheta$. We leave the details to the reader.

Now we construct an orthogonal basis of $\LS_\multii$, assuming that $\Summed_\multii < \ppmin(q)$.  
A key element of our construction is the following \emph{closed three-vertex} notation~\cite{kl, mv, cfs}:
\begin{align}\label{3vertex1} 
\text{for $s \in \DefectSet\sub{r,t}$} , \qquad 
\vcenter{\hbox{\includegraphics[scale=0.275]{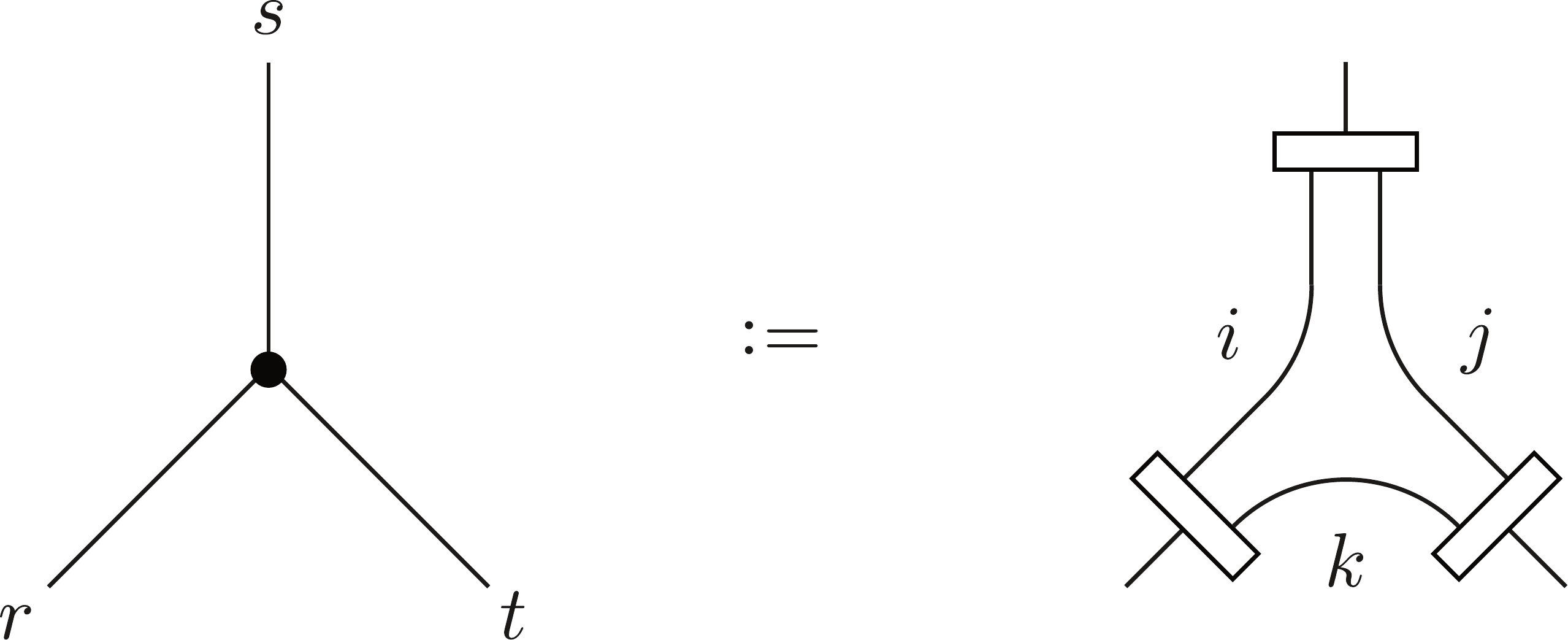} ,}} \qquad  \qquad \qquad
\begin{aligned} 
i & = \frac{r + s - t}{2} , \\[.7em] 
j & = \frac{s + t - r}{2} , \\[.7em] 
k & = \frac{t + r - s}{2}.
\end{aligned}
\end{align}
For each $\multii$-valenced link pattern $\alpha$, 
we define the \emph{trivalent link state} $\hcancel{\alpha}$ to be the following $\multii$-valenced link state:
\begin{align}\label{cbs} 
\hcancel{\alpha} \quad := \quad 
\vcenter{\hbox{\includegraphics[scale=0.275]{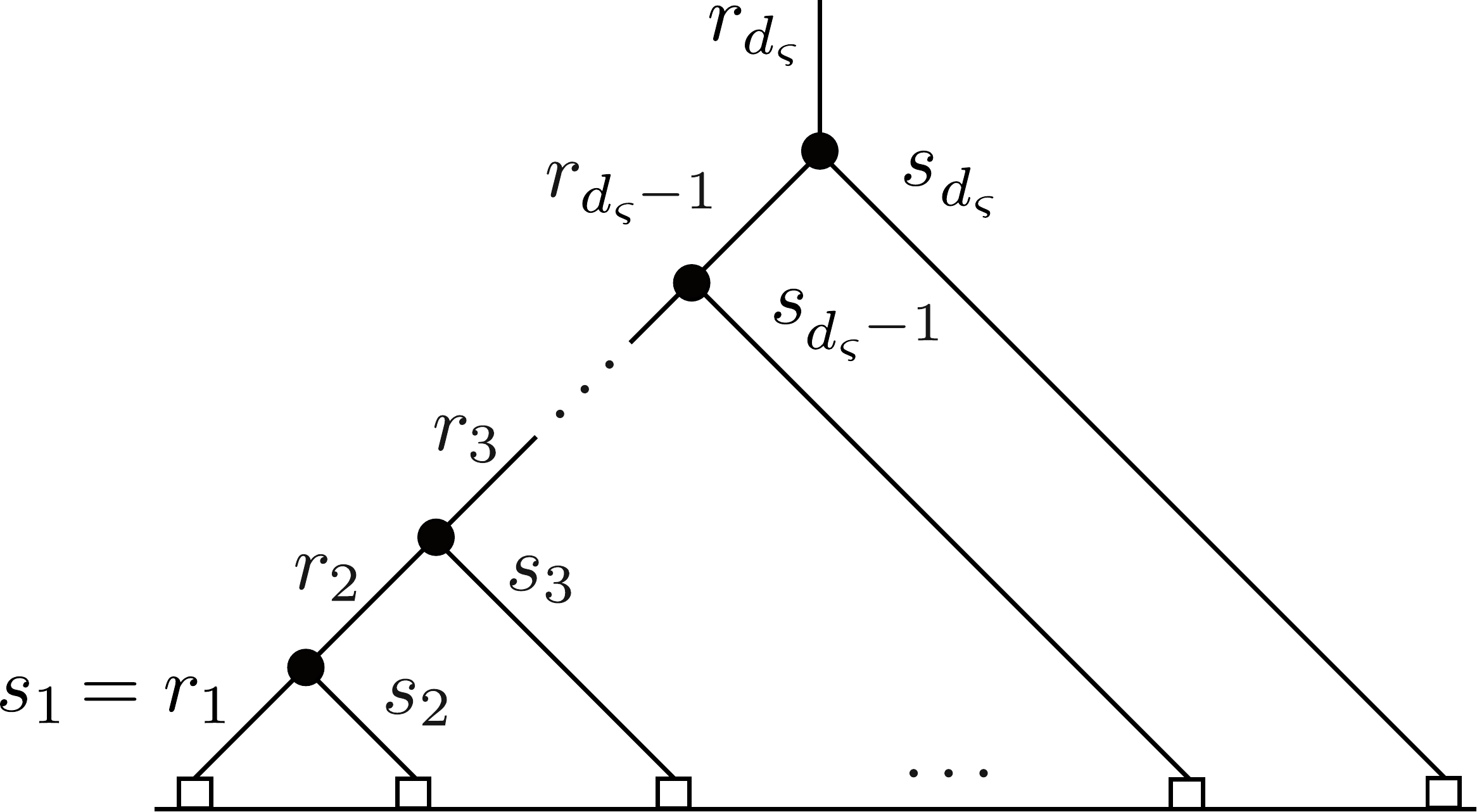} ,}} 
\end{align}
that is, to obtain $\hcancel{\alpha}$ from $\alpha$, we replace the $j$:th open vertex in the walk representation~\eqref{Generic} 
of the link pattern $\alpha$ with a closed vertex for each step $j \in \{1,2,\ldots,\np_{\multii}-1\}$ of the walk.
Replacing the $j$:th open vertex with a closed vertex inserts three projector boxes on the cables of respective sizes $r_j$, $\sIndex_{j+1}$, and $r_{j+1}$. 
Because we have
\begin{align}\label{rLessThanN} 
0 \leq r_j \underset{\eqref{WalkHeights}}{\overset{\eqref{SpecialDefSet}}{\leq}} r_{j-1} + \sIndex_j \underset{\eqref{WalkHeights}}{\overset{\eqref{SpecialDefSet}}{\leq}} r_{j-2} + \sIndex_{j-1} + \sIndex_j \underset{\eqref{WalkHeights}}{\overset{\eqref{SpecialDefSet}}{\leq}} \cdots \\
\cdots \underset{\eqref{WalkHeights}}{\overset{\eqref{SpecialDefSet}}{\leq}} \sIndex_1 + \sIndex_2 + \dotsm + \sIndex_j \overset{\eqref{ndefn}}{<} \Summed_\multii < \ppmin(q) ,
\end{align}
for all $j \in \{1,2,\ldots,\np_\multii-1\}$, these projector boxes do exist. 
According to rule~\eqref{TurnBack0}, when decomposing the projector boxes in the closed vertices,
we give weight zero to turn-back paths. Thus, it is evident that
\begin{align}\label{PreserveDef} 
\alpha \in \LP_\multii\super{s} \qquad \Longrightarrow \qquad \hcancel{\alpha} \in \LS_\multii\super{s}. 
\end{align}
Furthermore, we may freely omit the projector box across the $r_{\np_\multii} = s$ defects
(c.f. lemma~\ref{InsProjBoxLem} of appendix~\ref{TLRecouplingSect}).
Finally, by idempotent property~\eqref{ProjectorID0} of the Jones-Wenzl projector, the boxes of sizes $\sIndex_{j+1}$ are redundant.
In summary, the map $\alpha \mapsto \hcancel{\alpha}$ amounts to the insertion of projector boxes of sizes $r_2, r_3, \ldots, r_{\np_\multii-1}$
to the walk representation~\eqref{Generic} of $\alpha$.

\begin{example} \label{HcancelExample}
As a very simple example, we consider all $4$-link patterns:
\begin{align} \label{HcancelExampleLinPatt}
\includegraphics[scale=0.275]{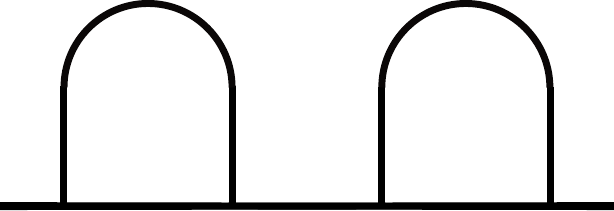},
\qquad
\raisebox{.2pt}{\includegraphics[scale=0.275]{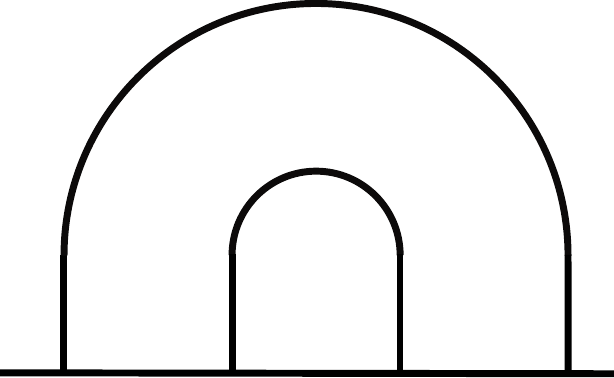},}
\qquad
\raisebox{.1pt}{\includegraphics[scale=0.275]{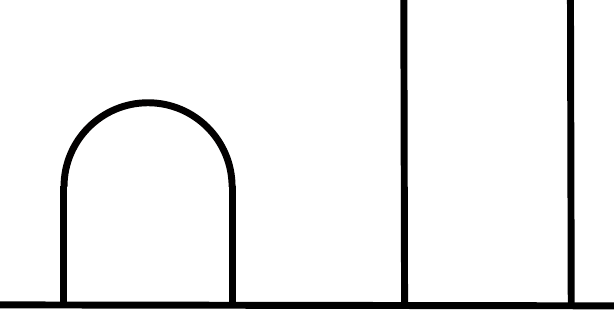},} 
\qquad
\raisebox{.1pt}{\includegraphics[scale=0.275]{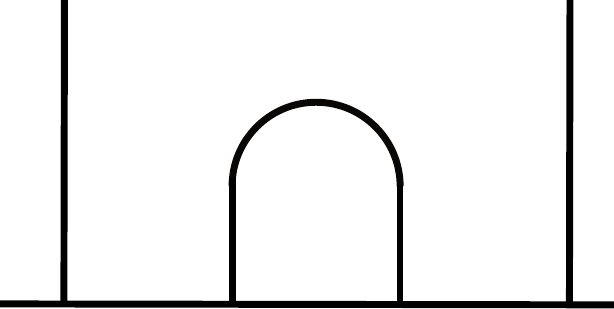},} 
\qquad
\raisebox{.1pt}{\includegraphics[scale=0.275]{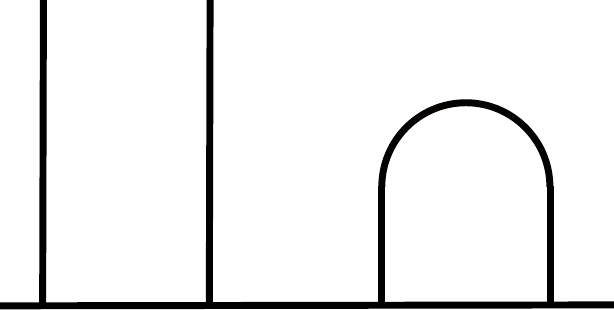},} 
\qquad
\raisebox{.1pt}{\includegraphics[scale=0.275]{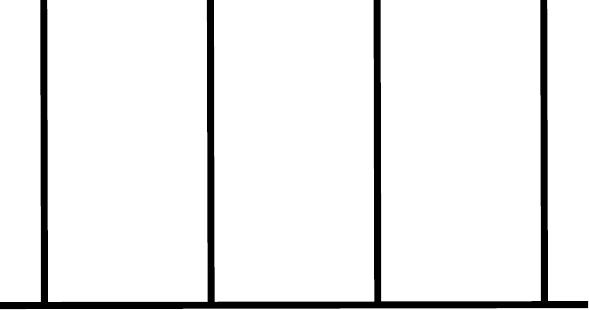} ,} 
\end{align}
whose walk representations are respectively 
$(1,0,1,0)$, $(1,2,1,0)$, $(1,0,1,2)$, $(1,2,1,2)$, $(1,2,3,2)$, and $(1,2,3,4)$.
The map $\alpha \mapsto \hcancel{\alpha}$ sends each of these link patterns respectively to the following link states:
\begin{align} \label{HcancelExampleLinPattHcan}
\includegraphics[scale=0.275]{e-pre_hcancel6.pdf},
\qquad
\raisebox{.2pt}{\includegraphics[scale=0.275]{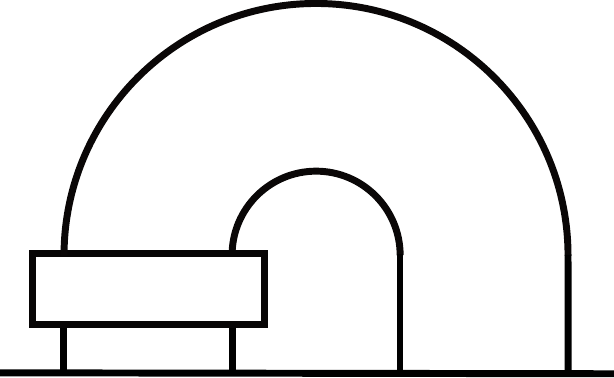},}
\qquad
\raisebox{.1pt}{\includegraphics[scale=0.275]{e-pre_hcancel2.pdf},}
\qquad
\raisebox{.1pt}{\includegraphics[scale=0.275]{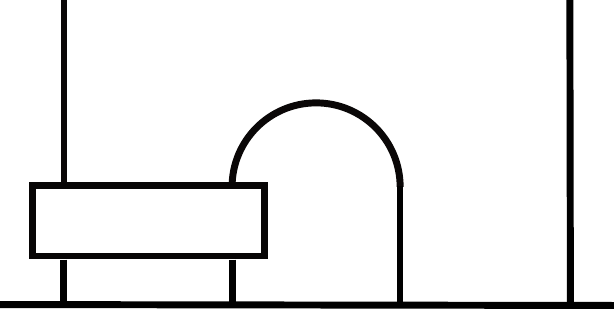},}
\qquad
\raisebox{.1pt}{\includegraphics[scale=0.275]{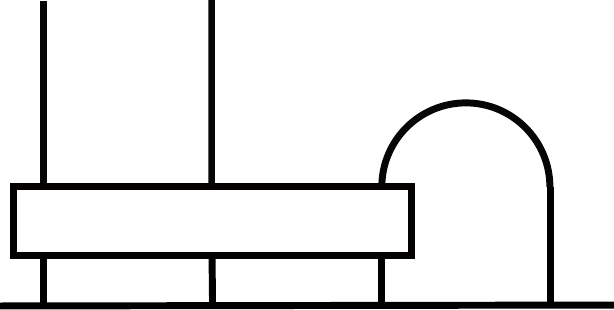},}
\qquad
\raisebox{.1pt}{\includegraphics[scale=0.275]{e-pre_hcancel4.pdf} .}
\end{align}
\end{example}

If $\Summed_\multii \geq \ppmin(q)$, then our definition of the trivalent link state $\hcancel{\,\alpha}$ 
may be invalid because it may use a projector box with size 
greater than $\ppmin(q)-1$, which does not exist.  
However, we can overcome this problem and give a useful definition for $\hcancel{\alpha}$ whenever $\max \multii < \ppmin(q)$.  
We use this definition to investigate the radical of $\smash{\LS_\multii\super{s}}$ in section~\ref{RadicalSect}.

To define trivalent link states $\hcancel{\,\alpha}$ that make sense under the weaker condition $\max \multii < \ppmin(q)$, we need some more terminology. 
First, we recall definitions (\ref{DeltaDefn},~\ref{skDefn}) of the symbols $\Delta_k$ and $R_s$ from section~\ref{Intro} and observe that
\begin{align} \label{skSet} 
s \overset{\eqref{DeltaDefn}}{\in} \{\Delta_{k_s}, \Delta_{k_s}+1, \ldots, \Delta_{k_s+1}-1 \}. 
\end{align}
Next, for each walk $\varrho = (r_1, r_2, \ldots, r_{\np_\multii})$ over $\multii$ with defect $s$, we let $J$ denote the special index 
\begin{align}\label{Jindex0} 
J = J_\varrho(q) := \sup \big\{ j \in \bZnn \; \, \big| \; \, 
\text{either} \quad h_{\min,j}(\varrho) \leq \Delta_{k_s}, \quad \text{or} \quad \Delta_{k_s+1} \leq h_{\max, j}(\varrho) \big\},
\end{align}
with the convention that $\sup \emptyset = -\infty$.  If $J \geq 0$, then 
we divide $\varrho$ into two pieces, called the \emph{head} and \emph{tail} of $\varrho$:
\begin{align}\label{TailDef} 
\varrho = (r_1, r_2, \ldots, r_{\np_\multii}) \qquad \Longrightarrow \qquad 
\left\{ 
\begin{aligned} 
\text{head}(\varrho) &:= (r_1, r_2, \ldots, r_{J-1}), \\ \text{tail}(\varrho) &:= (r_J, r_{J+1}, \ldots, r_{\np_\multii}), 
\end{aligned} 
\right. 
\qquad \text{where $J = J_\varrho(q)$.}
 \end{align}
We also define the head and tail of a link pattern $\alpha$ to be the head and tail of its corresponding walk $\varrho_\alpha$, and we write
\begin{align}\label{Jindex2} 
J = J_\alpha(q) := J_{\varrho_\alpha}(q), \qquad \text{head}(\alpha) := \text{head}(\varrho_\alpha), \qquad \text{and} \qquad \text{tail}(\alpha) := \text{tail}(\varrho_\alpha). 
\end{align}

\begin{figure}
\vspace*{1cm} \includegraphics[scale=0.275]{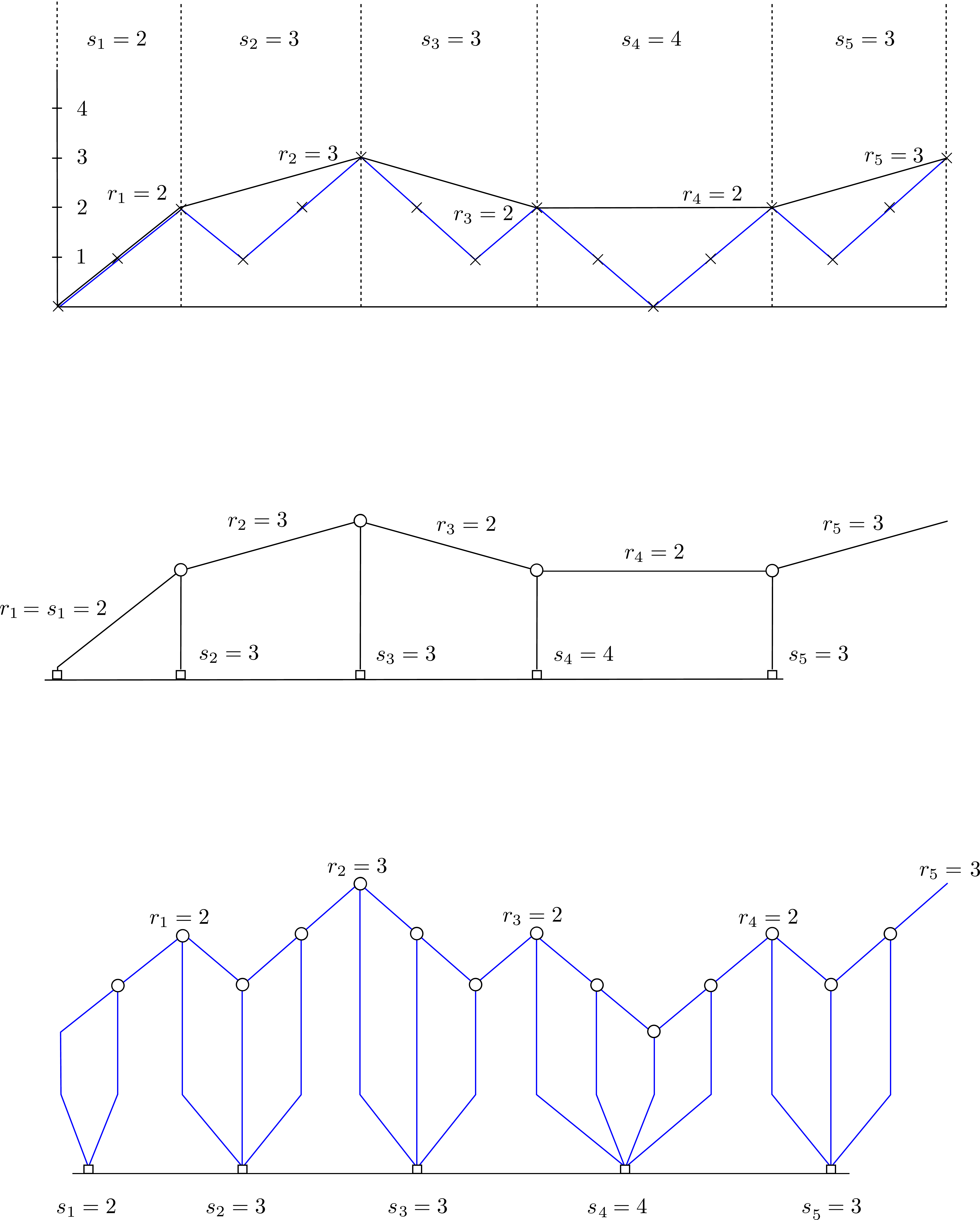} \vspace*{2cm}
\caption{\label{fig3}
Two walk representations of a link pattern $\alpha \in \mathrm{LP}_{(2, 3, 3, 4, 3)}$ and the associated walks:
the black walk is $\varrho_\alpha$ over $\multii = (2, 3, 3, 4, 3)$, associated to the walk representation of type~\eqref{Generic},
and the blue walk $\varrho^{\, \downarrow}$ over $\OneVec{\Summed_\multii} = \OneVec{15}$ gives an alternative walk representation
of $\alpha$ via replacements~\eqref{EquivPaths}.}
\end{figure}

Now we are ready to define the trivalent link states $\hcancel{\,\alpha}$, 
for all $\multii$-valenced link patterns $\alpha$ with $\max \multii < \ppmin(q)$.  

\begin{defn} \label{TrivalentLinkStateDef} 
\textnormal{
Suppose $\max \multii < \ppmin(q)$. 
For each $\multii$-valenced link pattern $\alpha$, we define the trivalent link state $\hcancel{\alpha}$ as follows.
First, we write the walk representation of 
$\alpha$ in a different way: for each $j \in \{1,2,\ldots,\np_{\multii}-1\}$, we replace the $j$:th open vertex of $\alpha$ with $\sIndex_{j+1}$ adjacent vertices as follows:
\begin{align}\label{EquivPaths} 
\vcenter{\hbox{\includegraphics[scale=0.275]{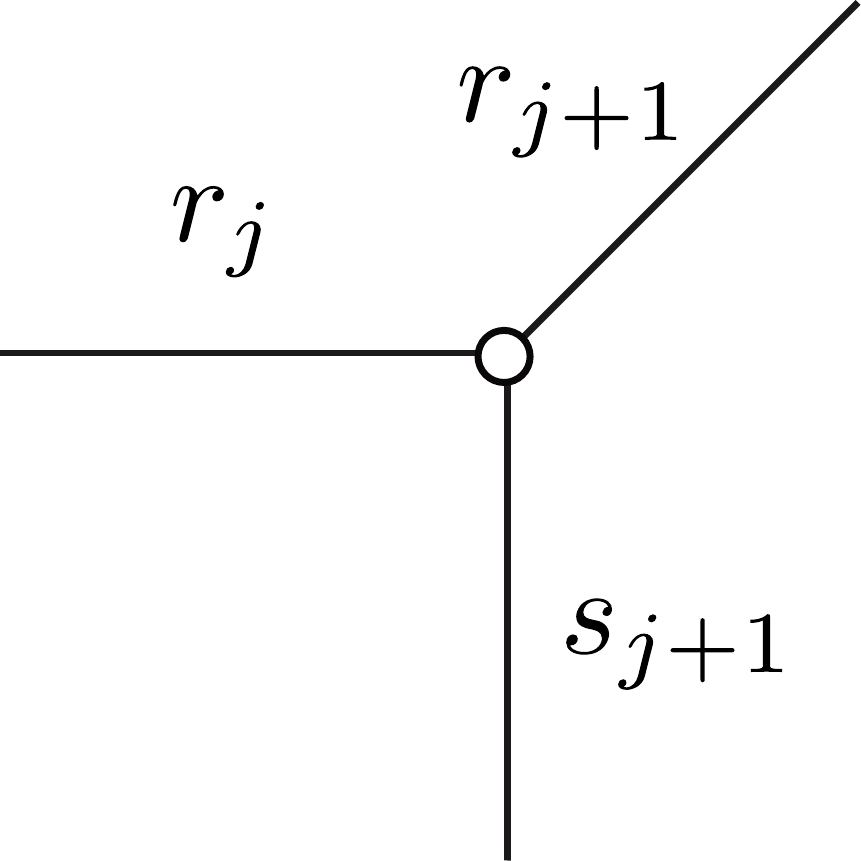}}} \qquad \qquad \longmapsto \qquad \qquad
\vcenter{\hbox{\includegraphics[scale=0.275]{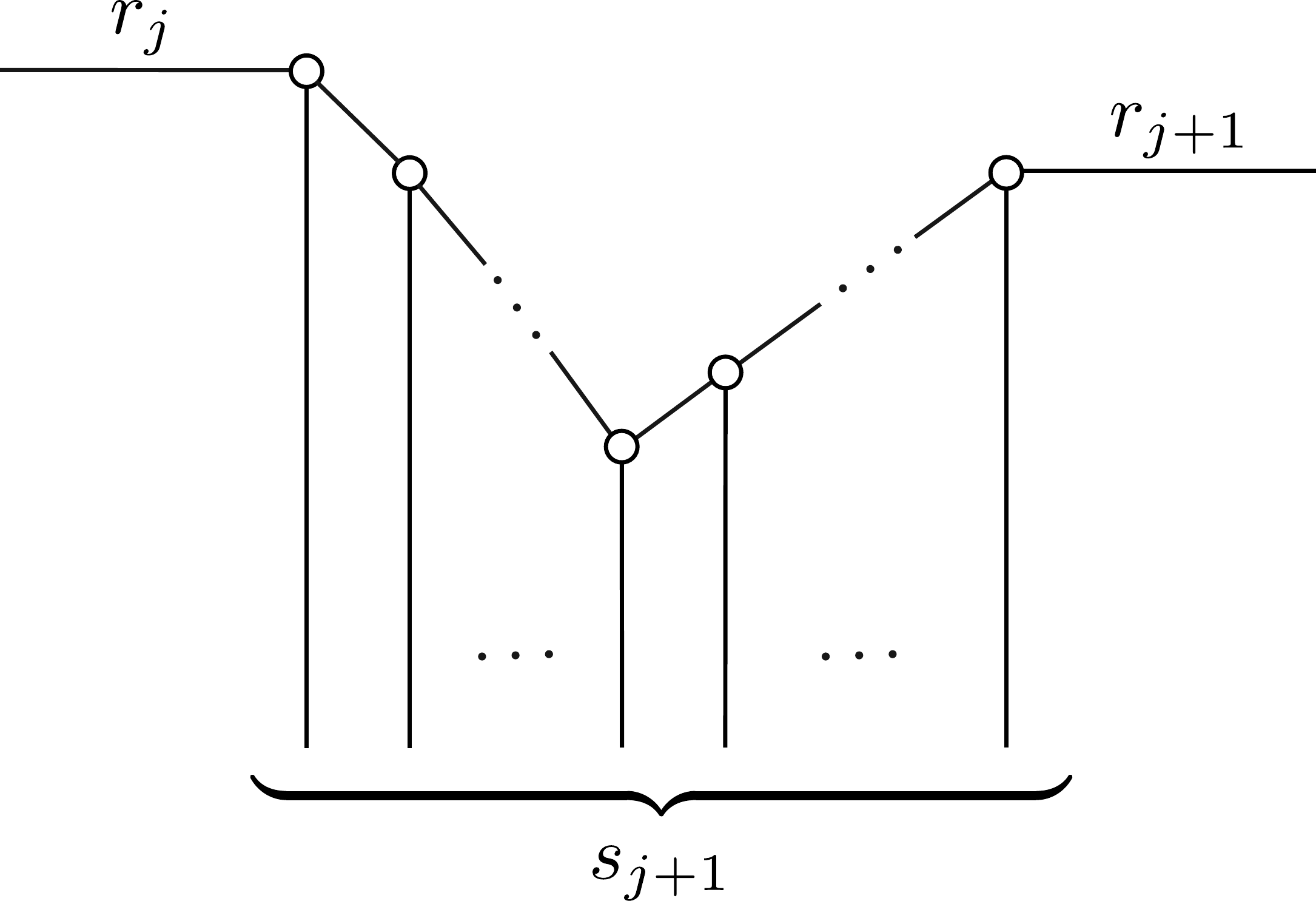} .}} 
\end{align}
As shown, each link from the cable of size $\sIndex_{j+1}$ enters its own open vertex.  
Making the replacement~\eqref{EquivPaths} at each open vertex in the original walk representation~\eqref{Generic} 
of $\alpha$ gives a new walk representation for $\alpha$.
Now, for each $\multii$-valenced link pattern $\alpha$, 
the lowest walk $\varrho\superscr{\, \downarrow}_\alpha$ is the path connecting all of the open vertices 
in this new walk representation of $\alpha$.  
Figure~\ref{fig3} shows an example of these two walk representations of a link pattern $\alpha$.
}

\textnormal{
Second, starting from the rightmost vertex and proceeding leftwards,
we replace each open vertex in the walk $\varrho\superscr{\, \downarrow}_\alpha$ of 
the new walk representation of $\alpha$ with a closed vertex.  
If $J = -\infty$, then we make this replacement at all vertices.  
Otherwise, the last vertex replacement occurs between the $J$:th and $(J+1)$:st steps of $\varrho_\alpha$ at the first time that
\begin{enumerate}
\itemcolor{red}
\item \label{StopIt1} we arrive at a step of the walk $\varrho\superscr{\, \downarrow}_\alpha$ whose height equals $\Delta_{k_s+1}$, or
\item \label{StopIt2} we arrive at the $J$:th step of the walk $\varrho_\alpha$, with height $r_J$ satisfying $\Delta_{k_s} < r_J < \Delta_{k_s+1}$, or
\item \label{StopIt3} we arrive at a step of the walk $\varrho\superscr{\, \downarrow}_\alpha$ whose height equals $\Delta_{k_s}$.
\end{enumerate}
After making the last vertex replacement, we arrive with $\hcancel{\alpha}$, which has thus been defined.
Figures~\ref{fig4-1},~\ref{fig4-2}, and~\ref{fig4-3} show examples of trivalent link states $\hcancel{\alpha}$ derived from each of these stopping conditions.
}
\end{defn}

If $J = 0$, then item~\ref{StopIt3} always gives the stopping condition. 
On the other hand, if $\Summed_\multii < \ppmin(q)$, then we have $J_\alpha(q) = -\infty$, 
for all $\multii$-valenced link patterns $\alpha$.
In this case, definition~\ref{TrivalentLinkStateDef} of $\hcancel{\alpha}$ reduces to the old definition~\eqref{cbs}, 
because definition~\eqref{3vertex1} of the closed three-vertex and property~\eqref{ProjectorID1} of the Jones-Wenzl projectors show that 
\begin{align}\label{EquivPathsClosed} 
\vcenter{\hbox{\includegraphics[scale=0.275]{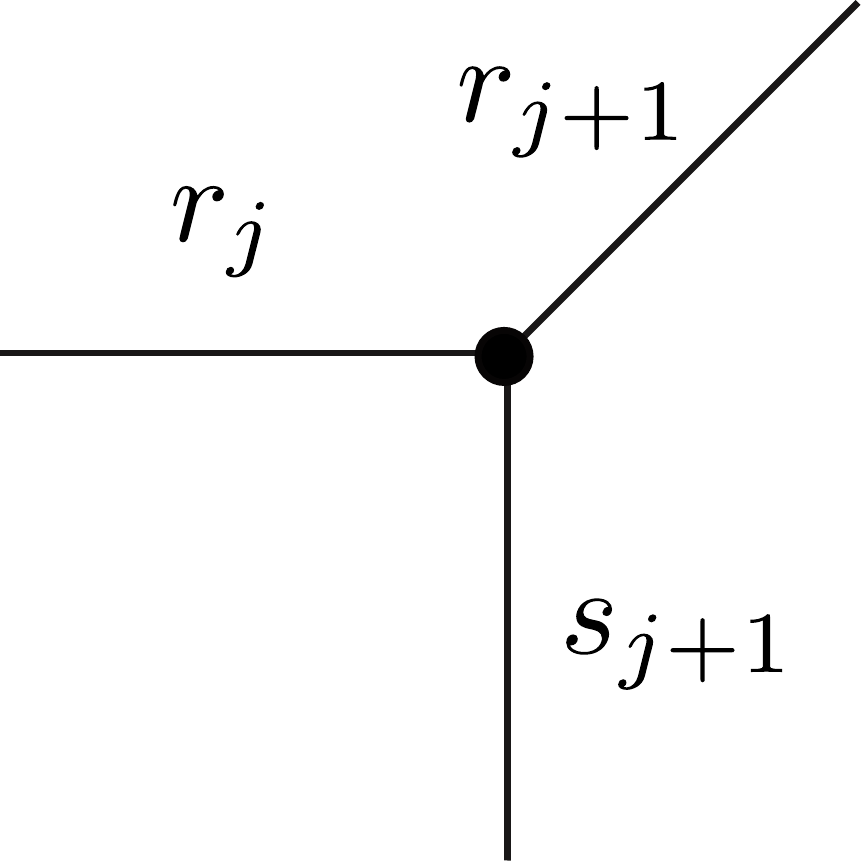}}} \qquad \qquad = \qquad \qquad
\vcenter{\hbox{\includegraphics[scale=0.275]{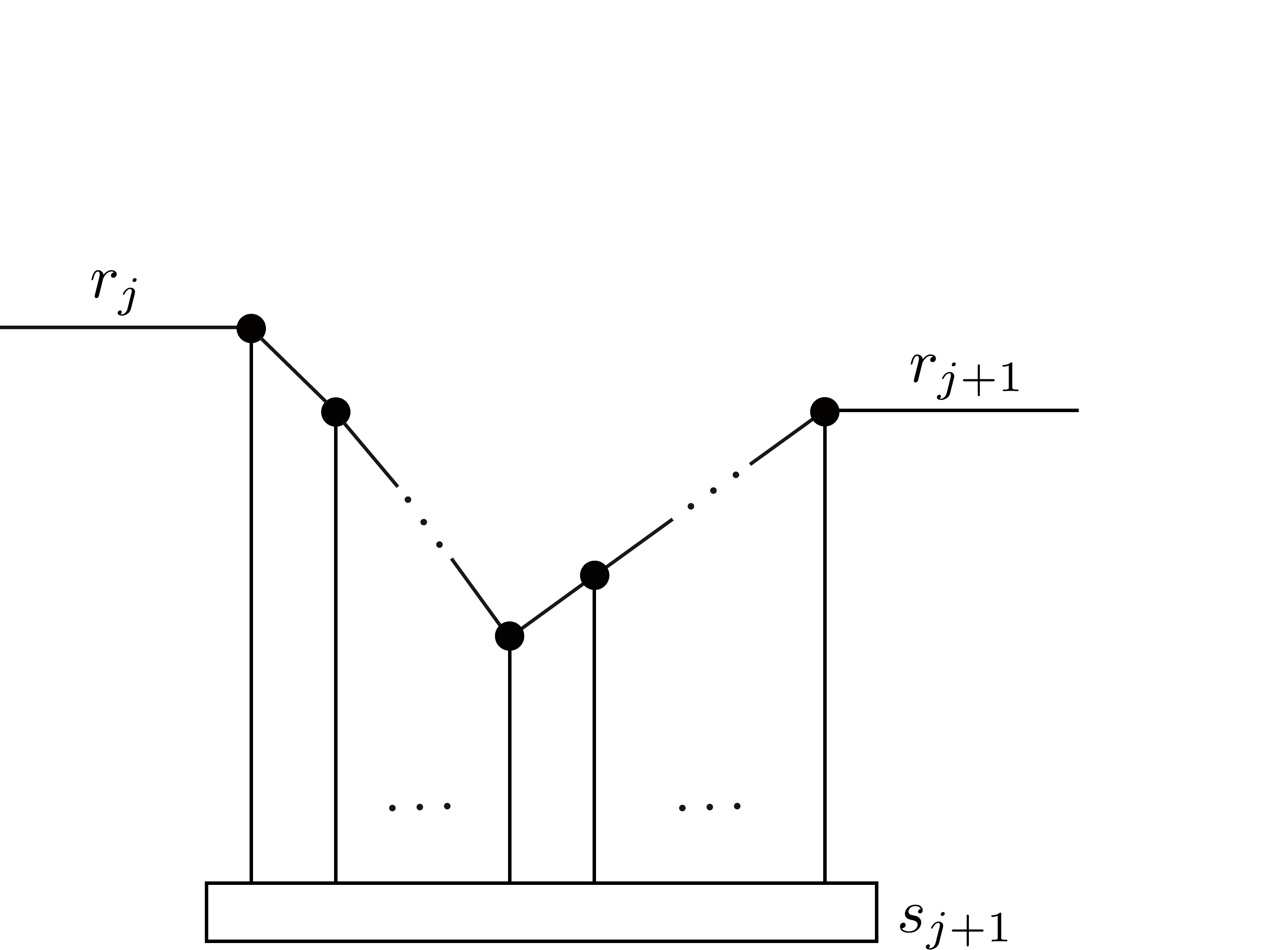} .}} 
\end{align}

\begin{figure}
\includegraphics[scale=0.275]{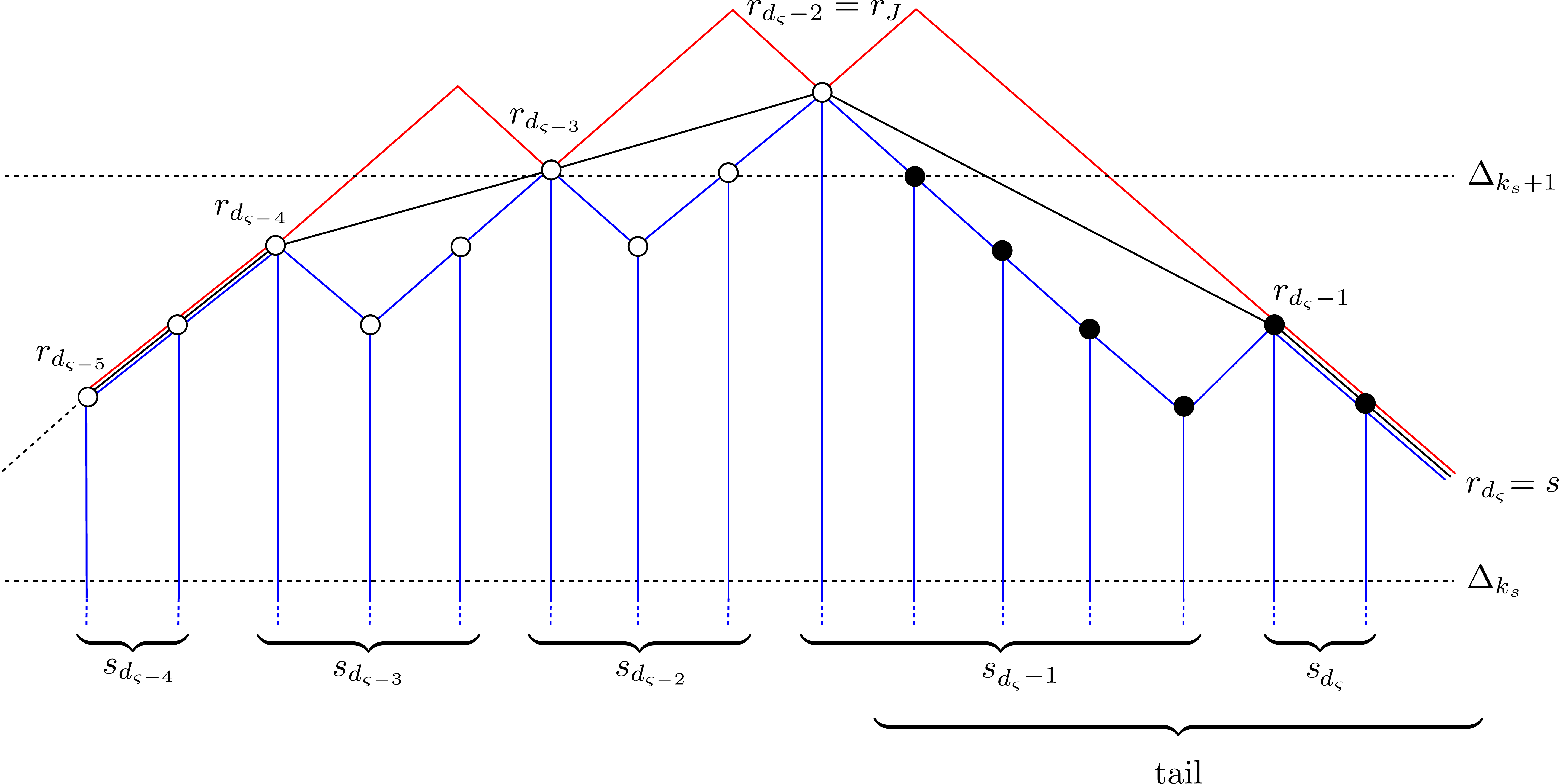} 
\caption{\label{fig4-1}
Tail of a trivalent link state $\hcancel{\alpha}$ associated to a $(\multii,s)$-valenced link pattern $\alpha$ when stopping condition~\ref{StopIt1} occurs.
The lowest walk $\varrho^{\, \downarrow}_\alpha$ and the associated walk representation is depicted in blue, the highest walk $\varrho^{\, \uparrow}_\alpha$ 
in red, and the walk $\varrho_\alpha$ in black.
}
\end{figure}

\begin{figure}
\includegraphics[scale=0.275]{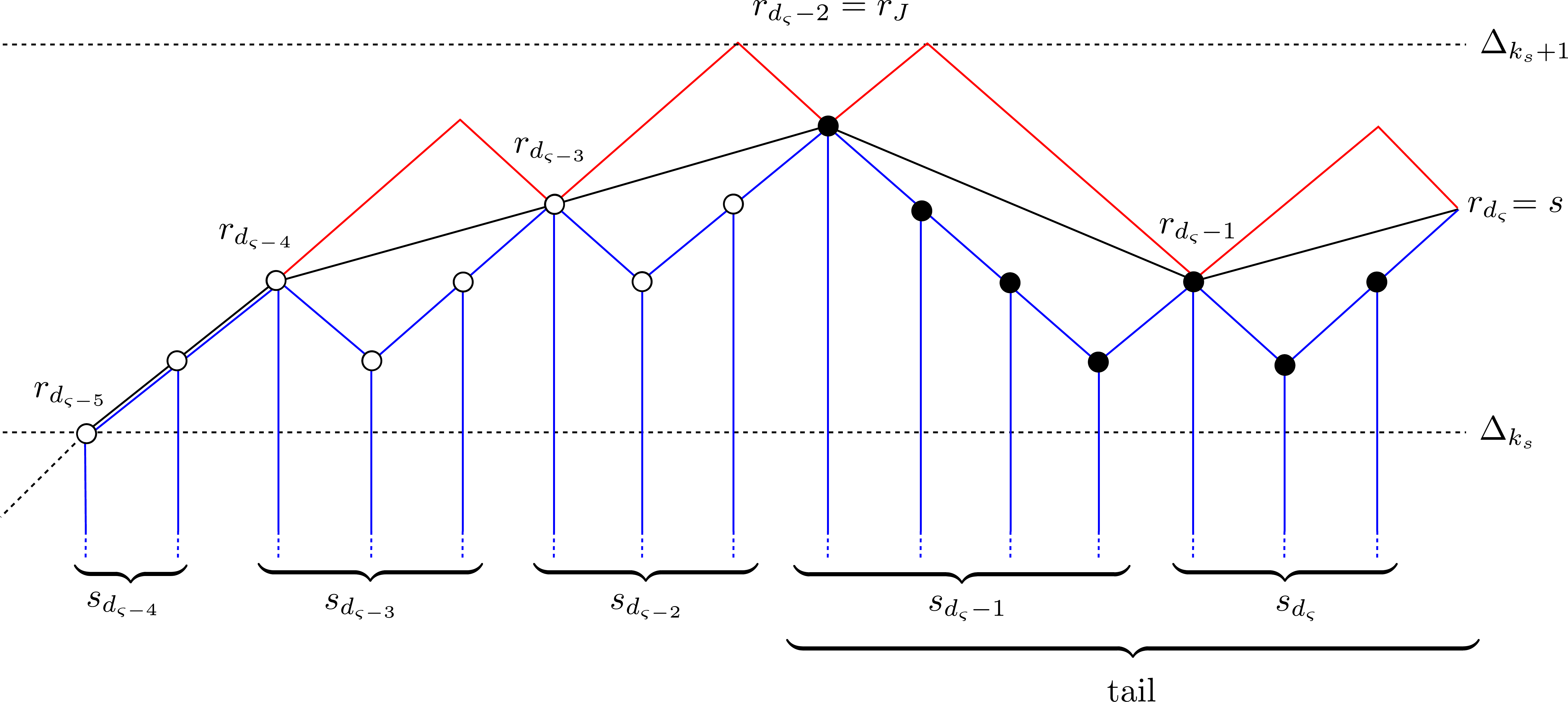} 
\caption{\label{fig4-2}
Tail of a trivalent link state $\hcancel{\alpha}$ associated to a $(\multii,s)$-valenced link pattern $\alpha$ when stopping condition~\ref{StopIt2} occurs.
The lowest walk $\varrho^{\, \downarrow}_\alpha$ and the associated walk representation is depicted in blue, the highest walk $\varrho^{\, \uparrow}_\alpha$ 
in red, and the walk $\varrho_\alpha$ in black.
}
\end{figure}

\begin{figure}
\includegraphics[scale=0.275]{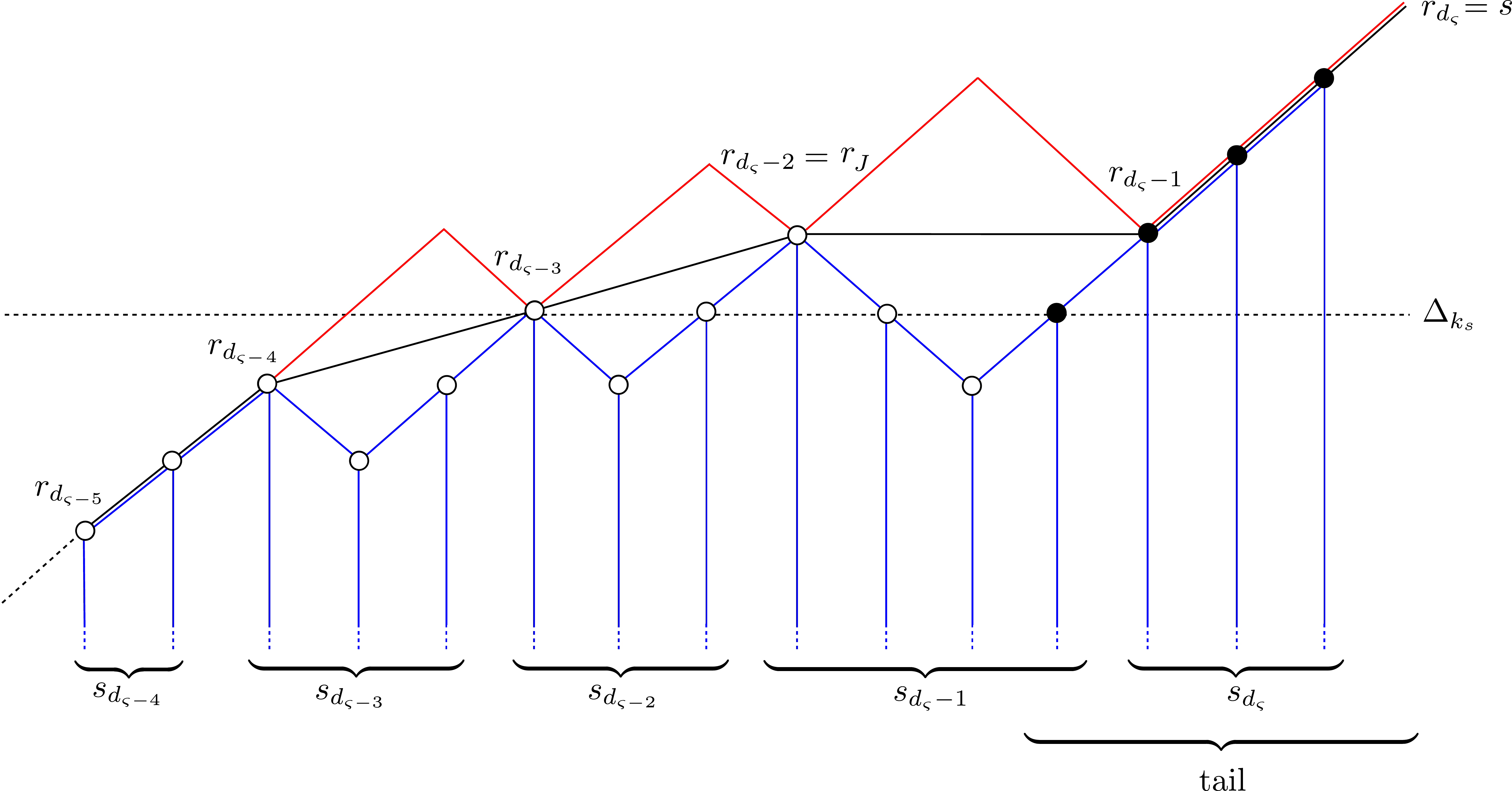} 
\caption{\label{fig4-3}
Tail of a trivalent link state $\hcancel{\alpha}$ associated to a $(\multii,s)$-valenced link pattern $\alpha$ when stopping condition~\ref{StopIt3} occurs.
The lowest walk $\varrho^{\, \downarrow}_\alpha$ and the associated walk representation is depicted in blue, the highest walk $\varrho^{\, \uparrow}_\alpha$ 
in red, and the walk $\varrho_\alpha$ in black.
}
\end{figure}

\begin{remark} \label{TrivLinkStateRem}
In definition~\ref{TrivalentLinkStateDef} of $\hcancel{\alpha}$, we do not require $\Summed_\multii < \ppmin(q)$, 
but only $\max \multii < \ppmin(q)$.
However, we cautiously note that if $\Summed_\multii \geq \ppmin(q)$, then the trivalent link state $\hcancel{\alpha}$ 
may contain projector boxes of size at least $\ppmin(q)$, which are undefined.  
Nevertheless, in this situation, we can define $\hcancel{\alpha}$ by analytic continuation.
We let $\hcancel{\alpha}_{q'}$ denote 
the trivalent link state $\hcancel{\alpha}$ with $q$ perturbed to some value $q'  \in \bC^\times$ with $\ppmin(q') = \infty$ (so projector boxes of all sizes exist) while holding 
$J$ fixed (so $J = J_\alpha(q) \neq J_\alpha(q')$).  Then we 
define $\hcancel{\alpha}$ to be the limit of $\hcancel{\alpha}_{q'}$ as $q' \to q$ along a sequence not containing roots of unity. 
We show that this limit exists in appendix~\ref{RadicalAppendix}, lemma~\ref{LimitLem}, justifying the validity of 
definition~\ref{TrivalentLinkStateDef}. 
\end{remark}

\subsection{Properties of the trivalent link states} \label{ConformalBlocksProp}

In this section, we determine salient properties of the trivalent link states.  
Most importantly, we show in item~\ref{IndOrthBasisLemIt3} of proposition~\ref{IndOrthBasisLem} that 
the set $\smash{\{ \hcancel{\alpha} \, | \, \alpha \in \smash{\LP_\multii\super{s}} \}}$ is a basis for $\smash{\LS_\multii\super{s}}$, 
and if $\Summed_\multii < \ppmin(q)$,  then this basis is orthogonal.
For this purpose, we first show 
in lemmas~\ref{ChangeOfBasisLem} and~\ref{InsProjAlphaLem} how operations of adding projector boxes to valenced link states
can be viewed as linear maps with upper-triangular matrix representations whose diagonal entries equal one (upper-unitriangular). 
As a simple example, in figure~\ref{table} we illustrate the map $\alpha \mapsto \hcancel{\alpha}$ associated to example~\ref{HcancelExample}.

\begin{figure}
\begin{displaymath}
\begin{tabular}{c c c c c c c}
\qquad & \qquad
\includegraphics[scale=0.275]{e-pre_hcancel6.pdf} \; & \;
\raisebox{.2pt}{\includegraphics[scale=0.275]{e-pre_hcancel5.pdf}} \; & \;
\raisebox{.1pt}{\includegraphics[scale=0.275]{e-pre_hcancel2.pdf}}  \; & \;
\raisebox{.1pt}{\includegraphics[scale=0.275]{e-pre_hcancel1.pdf}}  \; & \;
\raisebox{.1pt}{\includegraphics[scale=0.275]{e-pre_hcancel3.pdf}}  \; & \;
\raisebox{.1pt}{\includegraphics[scale=0.275]{e-pre_hcancel4.pdf}}  \\[1em]
\includegraphics[scale=0.275]{e-pre_hcancel6.pdf} & {\Large{1}} & {\Large{$\frac{1}{[2]}$}} & \large{0} & \large{0} & \large{0} & \large{0} \\[1em]
\raisebox{.2pt}{\includegraphics[scale=0.275]{e-pre_hcancel5.pdf}}  & \large{0} & {\Large{1}} & \large{0} & \large{0} & \large{0} & \large{0} \\[1em]
\raisebox{.1pt}{\includegraphics[scale=0.275]{e-pre_hcancel2.pdf}} & \large{0} & \large{0} & {\Large{1}} & {\Large{$\frac{1}{[2]}$}} & {\Large{$\frac{1}{[3]}$}} & \large{0} \\[1em]
\raisebox{.1pt}{\includegraphics[scale=0.275]{e-pre_hcancel1.pdf}} & \large{0} & \large{0} & \large{0} & {\Large{1}} & {\Large{$\frac{[2]}{[3]}$}} & \large{0} \\[1em]
\raisebox{.1pt}{\includegraphics[scale=0.275]{e-pre_hcancel3.pdf}} & \large{0} & \large{0} & \large{0} & \large{0} & {\Large{1}} & \large{0} \\[1em]
\raisebox{.1pt}{\includegraphics[scale=0.275]{e-pre_hcancel4.pdf}} & \large{0} & \large{0} & \large{0} & \large{0} & \large{0} & {\Large{1}} \\
\end{tabular}
\end{displaymath}
\caption{\label{table}
Matrix representation of the map $\alpha \mapsto \hcancel{\alpha}$ in example~\ref{HcancelExample}.
We can order the $4$-link patterns~\eqref{HcancelExampleLinPatt} is such a way that the matrix representation of the map
$\alpha \mapsto \hcancel{\alpha}$ is 
upper-unitriangular.
Furthermore, this matrix is block-diagonal, and its blocks correspond to the numbers $s \in \{0,2,4\}$
of defects of $\alpha \in \LP_4$.}
\end{figure}

\begin{lem} \label{ChangeOfBasisLem}
Suppose $\max \multii < \ppmin(q)$. Let $\mathsf{B}_\multii$ be a basis for $\LS_\multii$, 
all of whose elements $\alpha$ may be written in the form
\begin{align}\label{LPrep} 
\vcenter{\hbox{\includegraphics[scale=0.275]{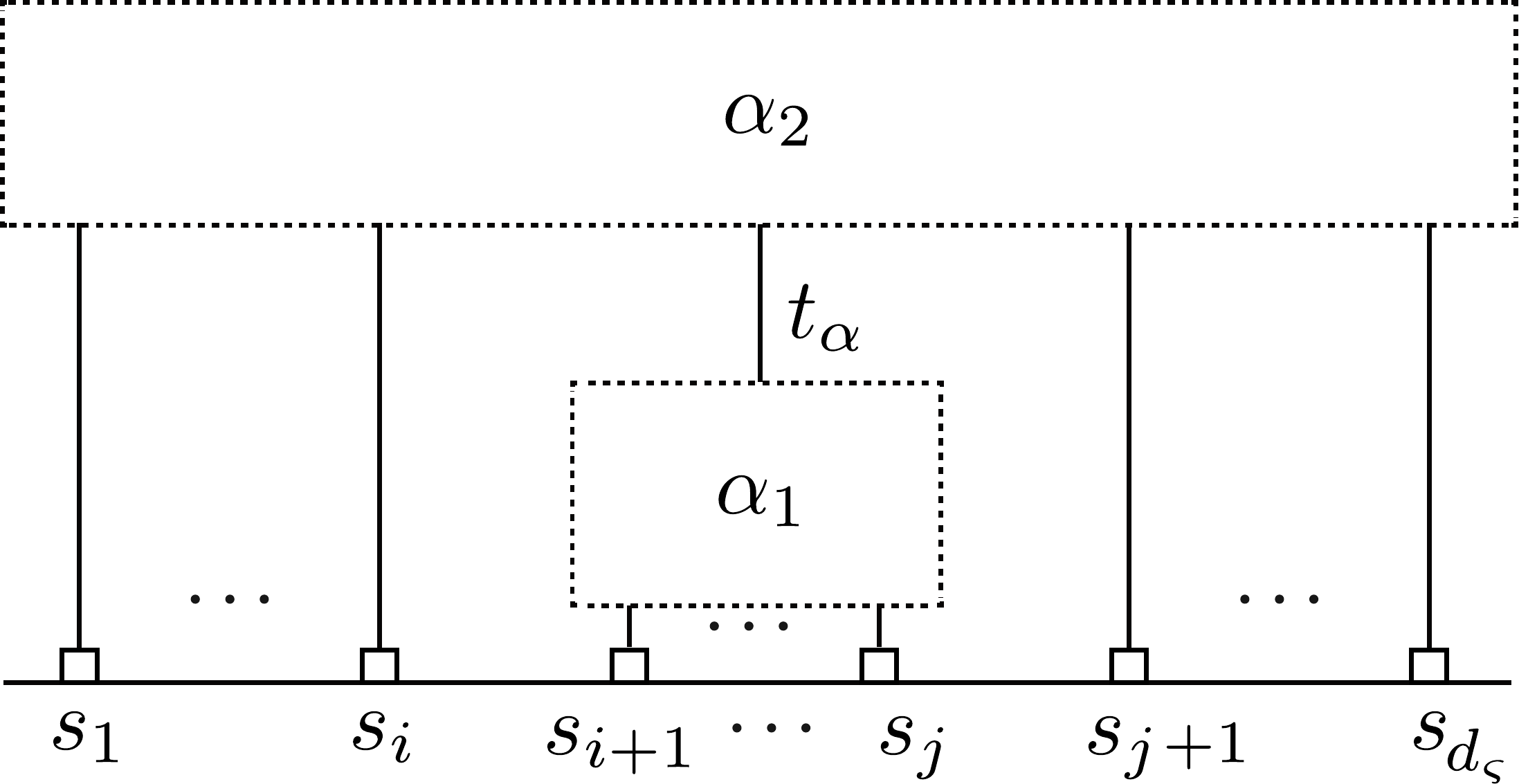} ,}} 
\end{align}
for some integers $i \in \{1,2,\ldots,\np_\multii-1\}$ and $j \in \{i+1, i+2, \ldots, \np_\multii\}$ common to all elements of $\mathsf{B}_\multii$, and 
\textnormal{(}with $\np = \np_\multii$\textnormal{)}
\begin{align} 
t_\alpha \in \DefectSet_{(\sIndex_{i+1}, \sIndex_{i+2}, \ldots, \sIndex_j)}, 
\qquad \alpha_1 \in \LS_{(\sIndex_{i+1}, \sIndex_{i+2}, \ldots, \sIndex_j)}^{(t_\alpha)}, 
\qquad \alpha_2 \in \LS_{(\sIndex_1, \sIndex_2, \ldots, \sIndex_i, t_\alpha, \sIndex_{j+1}, \sIndex_{j+2}, \ldots, \sIndex_{\np})} .
\end{align}
Also, let $T \colon \LS_\multii \longrightarrow \LS_\multii$ be the linear extension of the map sending each element $\alpha \in \mathsf{B}_\multii$, 
represented as~\eqref{LPrep}, to
\begin{align}\label{LPrepIns} 
\vcenter{\hbox{\includegraphics[scale=0.275]{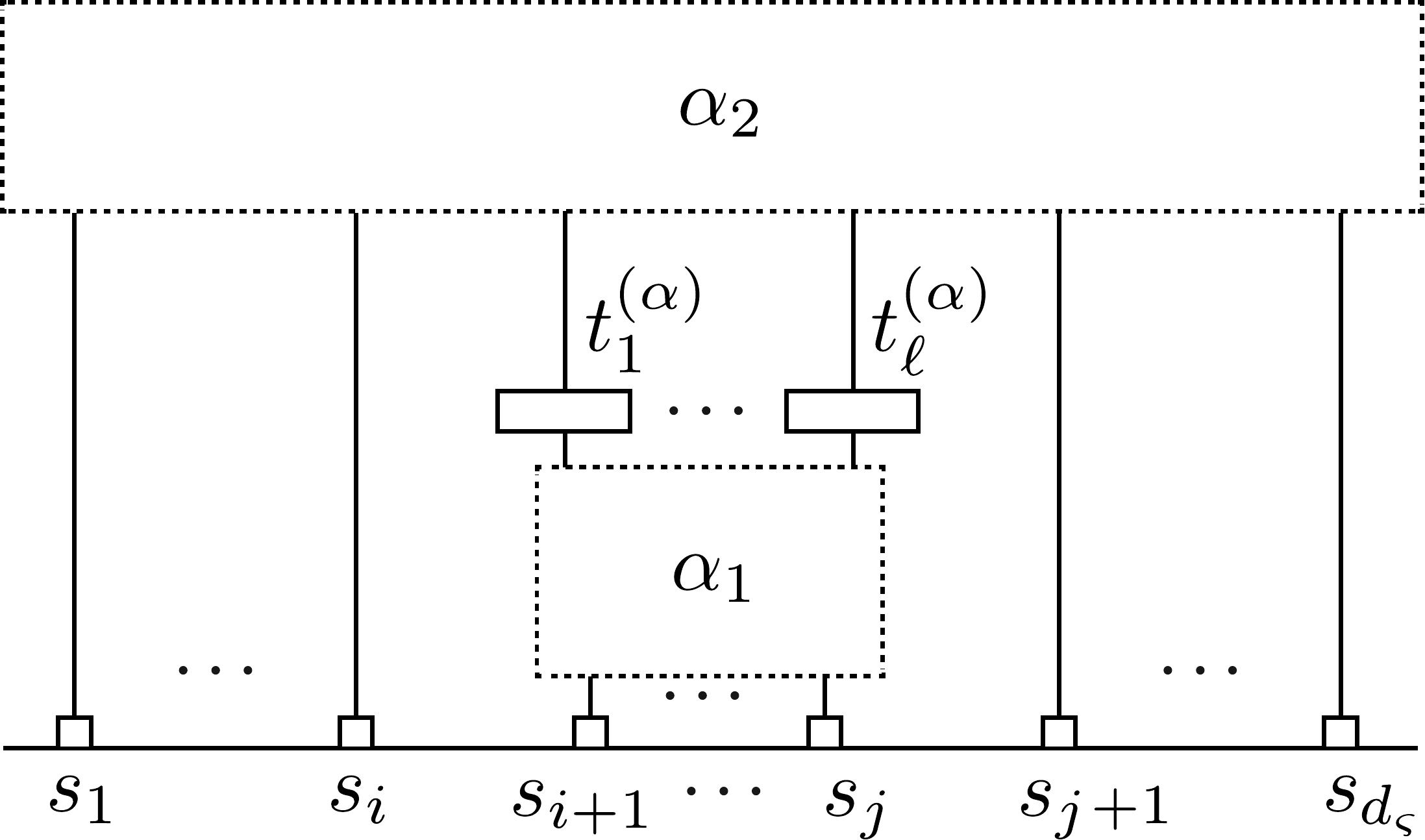} ,}} 
\end{align}
for some integers $\ell, \smash{t_1\super{\alpha}}$, $\smash{t_2\super{\alpha}, \ldots, t_\ell\super{\alpha}} \in \bZnn$ 
depending on $\alpha$ and such that $\smash{t_1\super{\alpha} + t_2\super{\alpha} + \dotsm + t_\ell\super{\alpha} = t_\alpha}$,
with $\ell = \ell_\alpha$ vanishing only if $t_\alpha = 0$. 
Then $T$ has an upper-unitriangular matrix representation.
\end{lem}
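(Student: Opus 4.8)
The plan is to exhibit an ordering of the finite basis $\mathsf{B}_\multii$ with respect to which the matrix of $T$ is upper triangular with $1$'s on the diagonal; the whole argument rests on analyzing the single operation ``insert a Jones--Wenzl projector box on a cable'' via its decomposition $\eqref{ProjDecomp}$, $\WJproj_t = \mathbf{1}_{\TL_t(\nu)} + \sum_{U \neq \mathbf{1}}(\text{coef}_U)\,U$. First I would attach to each $\alpha \in \mathsf{B}_\multii$, written in the prescribed form $\eqref{LPrep}$, the walk data of its link-pattern expansion: since every valenced link state expands uniquely in the basis $\LP_\multii$ and every link pattern carries a walk over $\multii$ by $\eqref{alphaWalk}$, the relevant statistic of $\alpha$ is the collection of heights appearing (most importantly the height $t_\alpha$ at the collapsed block $i+1,\dots,j$). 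I would then order $\mathsf{B}_\multii$ so that it refines the filtration of $\LS_\multii$ by the spans of link patterns whose walks are dominated (pointwise, or in the coarser statistic $\sum_k r_k$, broken to a total order) by a given walk; this order is well founded because each strict step decreases a nonnegative integer. With the order fixed, the assertion ``$T$ is upper-unitriangular'' becomes: $T\alpha - \alpha \in \Span\{\beta \in \mathsf{B}_\multii : \beta \prec \alpha\}$ for every $\alpha \in \mathsf{B}_\multii$.

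Next I would compute $T\alpha$ directly. If $t_\alpha = 0$ there are no inserted boxes and $T\alpha = \alpha$, so assume $t_\alpha > 0$. In $\eqref{LPrepIns}$ the map $T$ replaces the cable of size $t_\alpha$ in $\eqref{LPrep}$ by a refined configuration carrying $\ell$ Jones--Wenzl boxes of sizes $t_1^{(\alpha)},\dots,t_\ell^{(\alpha)}$ with $\sum_m t_m^{(\alpha)} = t_\alpha$. Expanding each box via $\eqref{ProjDecomp}$ and distributing, the unique summand in which every box contributes its unit tangle $\mathbf{1}_{\TL_{t_m}(\nu)}$ collapses back to the configuration $\eqref{LPrep}$ --- here one uses $\sum_m t_m^{(\alpha)} = t_\alpha$ together with property $\eqref{ProjectorID1}$ of the projectors (in the form $\eqref{EquivPathsClosed}$) to absorb the trivial boxes --- so it equals $\alpha$ itself, contributing coefficient exactly $1$. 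In every remaining summand at least one box contributes a non-unit diagram $U \in \LD_{t_m^{(\alpha)}}$, which necessarily contains a cup--cap (turn-back) link.

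Finally I would argue that each such turn-back summand, after resolving the loops and turn-back paths it creates via rules $\eqref{LoopWeight}$ and $\eqref{TurnBack0}$ and re-expressing the outcome in $\mathsf{B}_\multii$, lies in $\Span\{\beta : \beta \prec \alpha\}$: inserting a cup--cap on the $t_\alpha$-cable can only join strands that would otherwise pass through, which strictly lowers the height at the affected step of the walk (or annihilates the term), never raising any height; hence every link pattern in the expansion of the outcome has a walk $\prec \varrho_\alpha$, and by compatibility of the order on $\mathsf{B}_\multii$ with this filtration, so does every $\mathsf{B}_\multii$-element in its expansion. Combining the two paragraphs gives $T\alpha = \alpha + \sum_{\beta \prec \alpha} c_\beta\beta$, which is the claim. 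I expect the main obstacle to be exactly this last monotonicity statement made relative to the \emph{arbitrary} basis $\mathsf{B}_\multii$ rather than the link-pattern basis: one has to control both how the height statistic behaves under insertion of a cup--cap and how it behaves under the change of coordinates back into $\mathsf{B}_\multii$, and check it is non-increasing at each stage. The cleanest route is probably to establish the upper-unitriangularity first for the link-pattern basis --- where the walk bookkeeping of lemma~\ref{WalkMultiiLem} makes the height accounting transparent --- and then transfer it, after verifying that the order chosen on $\mathsf{B}_\multii$ is compatible with the walk order on $\LP_\multii$. One should also record the minor point that when $\Summed_\multii \ge \ppmin(q)$ some boxes in $\eqref{LPrepIns}$ are defined only through the analytic continuation of Remark~\ref{TrivLinkStateRem}, under which $\eqref{ProjDecomp}$ and the manipulations above persist by continuity.
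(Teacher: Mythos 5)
Your engine is the same as the paper's: expand each inserted Jones--Wenzl box by \eqref{ProjDecomp}, observe that the summand in which every box contributes its unit reassembles (via \eqref{ProjectorID1}) to $\alpha$ itself with coefficient exactly $1$, and that every other summand carries a turn-back link inside the cable, by \eqref{wjrecursion}. Where you differ is the choice of order: the paper orders $\mathsf{B}_\multii$ by the single integer $t_\alpha$ read off from the given presentation \eqref{LPrep} (its relation \eqref{StrictPO}), whereas you order by walk-height data of the link-pattern expansion of $\alpha$ and must then carry statements proved for $\LP_\multii$ back to the arbitrary basis $\mathsf{B}_\multii$.

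That transfer is a genuine gap, and it is exactly the point you flag as the ``main obstacle'' and leave as an expectation. The hypothesis on $\mathsf{B}_\multii$ gives no control over how its elements expand over link patterns, so there is no reason any order you place on $\mathsf{B}_\multii$ is compatible with walk domination on $\LP_\multii$; the step ``so does every $\mathsf{B}_\multii$-element in its expansion'' is unjustified for a general basis of the stated form. The repair is to coarsen to the only statistic the hypothesis supplies. The number $t$ of link endpoints at the nodes $i+1,\dots,j$ of a $\multii$-valenced link pattern whose strands leave that block (as links to other nodes or as defects) is well defined, so $\LS_\multii$ is the direct sum over $t$ of the spans $\widetilde{W}_t$ of link patterns with that value; every element of form \eqref{LPrep} lies in the single summand $\widetilde{W}_{t_\alpha}$ (loop links at the $t_\alpha$-valent node vanish), hence the basis elements with $t_\beta = t$ span $\widetilde{W}_t$. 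Each non-unit summand of \eqref{LPrepIns} has a turn-back inside the cable, so it lies in the sum of the $\widetilde{W}_t$ with $t \leq t_\alpha - 2$ and expands only over $\beta$ with $t_\beta < t_\alpha$. Upper-unitriangularity then holds for any total order refining $t_\beta < t_\alpha$, with no walk bookkeeping, no appeal to lemma~\ref{WalkMultiiLem}, and no transfer from $\LP_\multii$ --- which is precisely the paper's argument; your finer order, and the attendant monotonicity claim made relative to an arbitrary $\mathsf{B}_\multii$, is what creates the unresolved step.
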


\begin{proof} 
The following relation endows the basis $\mathsf{B}_\multii$ with a strict partial order:
\begin{align} \label{StrictPO}
t_\alpha < t_\beta \qquad \Longleftrightarrow \qquad \alpha < \beta. 
\end{align}
Now, for $\alpha \in \mathsf{B}_\multii$ represented as~\eqref{LPrep}, we decompose the $\ell_\alpha$ projector boxes 
in $T(\alpha)$ depicted in~\eqref{LPrepIns} over their internal link diagrams. 
Because the coefficient of the identity term in~\eqref{ProjDecomp} equals one, 
and all of the other terms contain at least one turn-back link according to recursion relation~\eqref{wjrecursion}, we arrive with a sum of the form
\begin{align} \label{UpperTriMat}
T(\alpha) 
\underset{\eqref{ProjDecomp}}{\overset{\eqref{wjrecursion}}{=}}
\sum_{\beta \, \in \, \mathsf{B}_\multii} T_{\beta, \alpha} \, \beta , \qquad \text{with} \quad
T_{\beta, \alpha}
\begin{cases} 
\in \bC, & \alpha > \beta , \\ 
= 1 , & \alpha = \beta , \\ 
= 0 , & \text{otherwise} .
\end{cases} 
\end{align}
The coefficients $T_{\beta,\alpha}$ form a matrix representation 
of the linear operator $T$ with respect to the basis $\mathsf{B}_\multii$.
If we arrange the elements of $\mathsf{B}_\multii$ in such a way that the 
column for $\beta$ is left of the column for $\alpha$ if $\beta < \alpha$, then
$(T_{\beta, \alpha})$ is an upper-unitriangular matrix.
\end{proof}

\begin{lem} \label{InsProjAlphaLem} 
Suppose $\max \multii < \ppmin(q)$.  
The self-map of $\LS_\multii$ given by linear extension of $\alpha \mapsto \hcancel{\alpha}$ is an automorphism of vector spaces 
with an upper-unitriangular matrix representation.
\end{lem}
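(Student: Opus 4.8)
The plan is to find a total order on the basis $\smash{\LP_\multii}$ of valenced link patterns for which the map $\alpha \mapsto \hcancel{\alpha}$ sends each $\alpha$ to $\alpha$ plus a combination of strictly smaller link patterns; this gives both the upper-unitriangular matrix representation and invertibility, hence the automorphism property. First I would treat the generic case $\Summed_\multii < \ppmin(q)$, where $J_\alpha(q) = -\infty$ for every $\alpha$, so that, as explained around~\eqref{cbs}, $\hcancel{\alpha}$ is obtained from the walk representation~\eqref{Generic} of $\alpha$ by inserting Jones-Wenzl projector boxes of sizes $r_2, r_3, \ldots, r_{\np_\multii - 1}$ on the cables of that representation.

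I would then expand all of these boxes simultaneously using the recursion~\eqref{wjrecursion}: the term in which every box contributes its identity part reproduces $\alpha$ with coefficient $1$, while every other term carries at least one turn-back link. Simplifying such a term back to a valenced link pattern $\beta$ --- discarding turn-back paths, which have weight zero by rule~\eqref{TurnBack0}, so that the defect $s$ is preserved as in~\eqref{PreserveDef} --- one checks that the walk of $\beta$ lies pointwise weakly below the walk of $\alpha$ and strictly below at some step. Fixing once and for all a total order on $\smash{\LP_\multii}$ refining the partial order ``$\beta$ precedes $\alpha$ whenever every height of the walk of $\beta$ is at most the corresponding height of the walk of $\alpha$'', this is precisely the claim that $\alpha \mapsto \hcancel{\alpha}$ has an upper-unitriangular matrix representation. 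Alternatively, one may peel the boxes off one cable at a time and apply Lemma~\ref{ChangeOfBasisLem} at each step, checking that the orderings it produces are all refined by the single order above.

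For the remaining case $\Summed_\multii \ge \ppmin(q)$, still with $\max \multii < \ppmin(q)$, I would pass to the limit as in Remark~\ref{TrivLinkStateRem}: by Lemma~\ref{LimitLem}, $\hcancel{\alpha} = \lim_{q' \to q} \hcancel{\alpha}_{q'}$ along a sequence of $q'$ with $\ppmin(q') = \infty$ and with $J = J_\alpha(q)$ held fixed. For each such $q'$ the link state $\hcancel{\alpha}_{q'}$ is again $\alpha$ dressed by a collection of projector boxes that is independent of $q'$ --- it is determined by the fixed $J$ --- and that exists because $\ppmin(q') = \infty$; hence the expansion argument above applies verbatim and writes $\hcancel{\alpha}_{q'}$ as $\alpha$ plus a combination of strictly smaller link patterns, with respect to the same $q'$-independent order. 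Letting $q' \to q$, the coefficients converge by Lemma~\ref{LimitLem} and this triangular shape is preserved, so $\alpha \mapsto \hcancel{\alpha}$ again has an upper-unitriangular matrix representation and is an automorphism of $\LS_\multii$.

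The hard part will be the geometric claim invoked in the second paragraph: that inserting turn-back links into a possibly partially dressed valenced link pattern and simplifying never raises any height of the associated walk. I expect this to follow from planarity together with the fact that the turn-backs produced by~\eqref{wjrecursion} join adjacent strands of a cable, so that each such turn-back can only delete strands from the vertical cuts which measure the walk heights, while any move that would raise a height is forced to create a turn-back path and is killed by rule~\eqref{TurnBack0}. Making this precise, and tracking which cables each inserted box can affect, is essentially the only real work; the rest is routine bookkeeping with Lemma~\ref{ChangeOfBasisLem} and the limit of Lemma~\ref{LimitLem}.
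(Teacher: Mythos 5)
Your proposal is correct and takes essentially the same route as the paper: both expand the inserted Jones--Wenzl boxes via the recursion~\eqref{wjrecursion}, note that the identity term reproduces $\alpha$ with coefficient one while every other term carries a turn-back link that only lowers the walk heights, and order $\LP_\multii$ by (a refinement of) the pointwise height order, exactly as in the composition of box-insertion maps handled through lemma~\ref{ChangeOfBasisLem}. Your explicit treatment of the case $\Summed_\multii \geq \ppmin(q)$ via the limit of lemma~\ref{LimitLem} and remark~\ref{TrivLinkStateRem} is a point the paper's proof leaves implicit, and the geometric step you flag as ``the hard part'' (turn-backs never raise a height) is asserted at essentially the same level of detail in the paper's own argument.
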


\begin{proof} 
The map $\alpha \mapsto \hcancel{\alpha}$ amounts to the insertion of projector boxes of sizes 
$\smash{r_2\super{\alpha}, r_3\super{\alpha}, \ldots, r_{\np_\multii-1}\super{\alpha}}$ to the walk representation 
$\smash{\varrho_\alpha = (r_1\super{\alpha}, r_2\super{\alpha}, \ldots, r_{\np_\multii}\super{\alpha})}$ of $\alpha$,
in order to convert the open vertices in~\eqref{Generic} to closed vertices as in~\eqref{cbs}.
This map is a composition of linear maps of type~\eqref{LPrepIns} in lemma~\ref{ChangeOfBasisLem}, 
obtained by inserting the projector boxes one by one,
and each such map has an upper-unitriangular matrix representation.
We must show that we can order the elements $\alpha \in \LP_\multii$ so that 
the whole composition also has an upper-unitriangular matrix representation.

The set of walks over $\multii$ has a natural strict partial order defined by
\begin{align} \label{PartialOrd} 
\varrho \DPle \varrho' \quad \text{if and only if} \quad
r_i < r_i' , \text{ for all $i \in \{0,1,\ldots,\np_\multii\}$,}
\end{align}
for any two walks $\varrho = (r_1, r_2, \ldots, r_{\np_\multii})$ and $\varrho' = (r_1', r_2', \ldots, r_{\np_\multii}')$.
Using this, we endow the set $\LP_\multii$ with a strict partial order $\DPle$ 
by considering the walk representations $\{ \varrho_\alpha \, | \, \alpha \in \LP_\multii \}$.
We show recursively that the partial order $\DPle$ 
preserves the upper-unitriangular matrix structure of the box insertions in $\alpha \mapsto \hcancel{\alpha}$.

We define $\smash{\hcancel{\alpha}\super{1} }:= \alpha$ and $\smash{\hcancel{\alpha}\super{k+1} := T_{k+1}(\hcancel{\alpha}\super{k})}$,
for each $k \in \{1,2,\ldots, \np_\multii-2\}$, where $\smash{\hcancel{\alpha}\super{k}}$
is a network of type~\eqref{Generic} but with the first $k-1$ vertices closed, 
and $T_{k+1}$ converts the $k$:th vertex of $\smash{\hcancel{\alpha}\super{k}}$ from open to closed by 
inserting a projector box of size $\smash{r_{k+1}\super{\alpha}}$ into $\smash{\hcancel{\alpha}\super{k}}$.
(In particular, $\smash{\hcancel{\alpha} = \hcancel{\alpha}\super{\np_\multii-1}}$.)
For example, when $k = 1$, the map $T_2$ replaces the first open vertex with a closed vertex in
the walk representation~\eqref{Generic} of $\alpha$, by inserting a projector box on the cable of size $r_2$.
After decomposing this projector box via recursion relation~\eqref{wjrecursion}, 
analogously to~\eqref{UpperTriMat} in the proof of lemma~\ref{ChangeOfBasisLem},
we arrive with $\alpha$ plus a linear combination of valenced link patterns $\beta$ which all satisfy $\beta \DPle \alpha$.
In particular, any ordering such that the column for $\beta$ is left of the column for $\alpha$ if $\beta \DPle \alpha$
yields an upper-unitriangular matrix representation for $T_2$.
Iterating this argument, we see that with such an ordering, all maps $T_{k+1}$ for $k \in \{1,2,\ldots, \np_\multii-2\}$
have an upper-unitriangular matrix representation. 
In particular, the composition $\alpha \mapsto \hcancel{\alpha}$ of these maps has such a matrix representation, which is what we sought to prove. 
\end{proof}


Now we use lemma~\ref{InsProjAlphaLem} to prove that the set 
$\{ \hcancel{\alpha} \, | \, \alpha \in \LP_\multii, \max \varrho_\alpha <\ppmin(q) \}$ 
is orthogonal and linearly independent. 
We use lemmas~\ref{ExtractLem} and~\ref{LoopErasureLem} of appendix~\ref{TLRecouplingSect}, the latter containing 
the evaluation of the Theta network from lemma~\ref{ThetaLem}: 
\begin{align} \label{ThetaFormula0} 
\ThetaNet(r,s,t) 
= \frac{(-1)^{\frac{r + s + t}{2}} \left[ \frac{r + s + t}{2} + 1 \right]! \left[ \frac{ r + s - t }{2} \right]! \left[ \frac{ s + t - r}{2} \right]! \left[ \frac{t + r - s}{2} \right]! }{[ r ]! [s ]! [ t ]!} ,
\end{align}
where $[k]!$ denotes the \emph{$k$:th quantum factorial}, defined as
\begin{align}
[k]! = [k]_q! := \prod_{\ell=1}^k [\ell] . 
\end{align}

\begin{prop} \label{IndOrthBasisLem} 
Suppose $\max \multii < \ppmin(q)$.  The following hold:
\begin{enumerate}
\itemcolor{red}
\item \label{IndOrthBasisLemIt1} 
For any valenced link patterns $\alpha,\beta \in \smash{\LP_\multii}$ with $J_\alpha(q) = J_\beta(q) = -\infty$, 
writing $\varrho_\alpha = (r_1, r_2, \ldots, r_{\np_\multii})$, we have
\begin{align}\label{WalkBiForm} 
\BiForm{\hcancel{\alpha}}{\hcancel{\beta}} = \delta_{\alpha, \beta} \prod_{j = 1}^{\np_{\multii} - 1}  
\frac{ \ThetaNet( r_j, r_{j+1}, \sIndex_{j+1} )}{ (-1)^{r_{j + 1}} [r_{j + 1}+1]} .
\end{align}
\item \label{IndOrthBasisLemIt2} 
The collection $\{ \hcancel{\alpha} \, | \, \alpha \in \LP_\multii, J_\alpha(q) = -\infty \}$ is orthogonal and linearly independent.
\item \label{IndOrthBasisLemIt3} 
The collection $\big\{ \hcancel{\alpha} \, | \, \alpha \in \smash{\LP_\multii\super{s}} \big\}$ is a basis for $\smash{\LS_\multii\super{s}}$, 
and if $\Summed_\multii < \ppmin(q)$, then this basis is orthogonal.
\end{enumerate}
\end{prop}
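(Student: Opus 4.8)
The plan is to prove item~\ref{IndOrthBasisLemIt1} by an explicit evaluation of the glued network, and then obtain items~\ref{IndOrthBasisLemIt2} and~\ref{IndOrthBasisLemIt3} as formal consequences of it together with lemma~\ref{InsProjAlphaLem}.

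First I would record a reduction that makes item~\ref{IndOrthBasisLemIt1} purely graphical. Because $h_{\min,0}(\varrho_\alpha)=0$, the defining formula~\eqref{Jindex0} for $J_\alpha(q)$ shows that $J_\alpha(q)=-\infty$ forces $\Delta_{k_s}<0$, hence $\Delta_{k_s}=-1$ and $k_s=0$ by~\eqref{DeltaDefn}, and it further forces $h_{\max,j}(\varrho_\alpha)<\Delta_1=\pmin(q)-1$ for every $j$. Since $r_j,r_{j+1},\sIndex_{j+1}\le h_{\max,j}(\varrho_\alpha)$ for each $j$ (a triangle-type inequality following from~\eqref{WalkHeights}), every cable size in the walk representation of $\alpha$ is at most $\pmin(q)-2$, and every quantum-integer and quantum-factorial index occurring in~\eqref{WalkBiForm} and~\eqref{ThetaFormula0} is at most $\pmin(q)-1$; so no vanishing quantum integer $[m]=[m]_q$ arises. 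Consequently all projector boxes in $\hcancel{\alpha}$ genuinely exist (the analytic continuation of remark~\ref{TrivLinkStateRem} is not needed), $\hcancel{\alpha}$ is given by the simple recipe~\eqref{cbs}, and every quantum integer and quantum factorial entering the statement is invertible. The same holds for $\beta$.

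Then I would compute $\BiForm{\hcancel{\alpha}}{\hcancel{\beta}}$ directly from the network definition~\eqref{LSBiForm}--\eqref{LSBiFormExt}. Gluing $\hcancel{\alpha}$ to the vertical reflection of $\hcancel{\beta}$ produces a closed network that is a left-to-right chain of ``theta bubbles'': the $j$-th bubble is bounded by the internal cable of size $r_j$ of $\varrho_\alpha$ and the internal cable of size $r_j'$ of $\varrho_\beta$, meeting the external leg of size $\sIndex_{j+1}$. Reading from the left, where $r_1=r_1'=\sIndex_1$ is forced by~\eqref{WalkHeights}, I would repeatedly apply the bubble-removal identity~\eqref{LoopErasure1} (lemmas~\ref{ExtractLem},~\ref{LoopErasureLem}, and~\ref{ThetaLem}), after using idempotence~\eqref{ProjectorID0} and lemma~\ref{InsProjBoxLem} to discard the redundant projector boxes: erasing the $j$-th bubble forces $r_{j+1}=r_{j+1}'$ (the bubble, hence the whole network, vanishes otherwise), contributes the scalar $\ThetaNet(r_j,r_{j+1},\sIndex_{j+1})/((-1)^{r_{j+1}}[r_{j+1}+1])$, and leaves a single cable of size $r_{j+1}$ feeding the next bubble; the final step uses $r_{\np_\multii}=r_{\np_\multii}'=s$, already equal since $\BiForm{\cdot}{\cdot}$ vanishes between distinct defect sectors. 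The product of the scalars is exactly the product in~\eqref{WalkBiForm}, and the accumulated constraints $r_j=r_j'$ for all $j$, together with the bijection between valenced link patterns and walks (item~\ref{wmlIt2} of lemma~\ref{WalkMultiiLem}), produce the factor $\delta_{\alpha,\beta}$. This network bookkeeping is the main obstacle: one must keep track of exactly which of the three projector boxes at each closed vertex survives, check that~\eqref{LoopErasure1} applies verbatim at each of the $\np_\multii-1$ internal edges including the degenerate leftmost one, and confirm that no additional turn-back appears beyond those enforcing $r_j=r_j'$.

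Finally, item~\ref{IndOrthBasisLemIt2} follows at once: formula~\eqref{WalkBiForm} shows the family is pairwise orthogonal, and by the reduction above each diagonal value $\BiForm{\hcancel{\alpha}}{\hcancel{\alpha}}=\prod_{j=1}^{\np_\multii-1}\ThetaNet(r_j,r_{j+1},\sIndex_{j+1})/((-1)^{r_{j+1}}[r_{j+1}+1])$ is a product of nonzero quantities, hence nonzero, so this orthogonal family with nonvanishing diagonal form values is linearly independent. For item~\ref{IndOrthBasisLemIt3} I would invoke lemma~\ref{InsProjAlphaLem}: the linear extension of $\alpha\mapsto\hcancel{\alpha}$ is an automorphism of $\LS_\multii$, and it preserves the defect number (it only inserts Jones--Wenzl projector boxes, cf.~\eqref{PreserveDef} and definition~\ref{TrivalentLinkStateDef}), so it restricts to an automorphism of $\LS_\multii\super{s}$; it therefore carries the basis $\LP_\multii\super{s}$ of $\LS_\multii\super{s}$ onto $\{\hcancel{\alpha}\mid\alpha\in\LP_\multii\super{s}\}$, which is thus a basis. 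And if $\Summed_\multii<\ppmin(q)$ then $J_\alpha(q)=-\infty$ for every $\alpha$ (as noted below definition~\ref{TrivalentLinkStateDef}), so this basis is orthogonal by item~\ref{IndOrthBasisLemIt1}.
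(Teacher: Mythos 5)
Your proof is correct and follows essentially the same route as the paper: item~\ref{IndOrthBasisLemIt1} by gluing the two trivalent link states and recursively erasing the leftmost theta bubble via lemmas~\ref{ExtractLem} and~\ref{LoopErasureLem} (the $\delta_{s,s'}$ in~\eqref{LoopErasure1} forcing the walks to coincide and yielding $\delta_{\alpha,\beta}$), and item~\ref{IndOrthBasisLemIt3} from lemma~\ref{InsProjAlphaLem} plus the fact that $J_\alpha(q)=-\infty$ for all $\alpha$ when $\Summed_\multii<\ppmin(q)$. The one place you genuinely deviate is the linear-independence part of item~\ref{IndOrthBasisLemIt2}: the paper gets it from the unitriangular automorphism of lemma~\ref{InsProjAlphaLem} together with $\LP_\multii$ being a basis, whereas you get it from orthogonality plus the nonvanishing of the diagonal values $\BiForm{\hcancel{\alpha}}{\hcancel{\alpha}}$. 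That works, because your preliminary reduction is sound: $J_\alpha(q)=-\infty$ forces $\Delta_{k_s}=-1$ and $h_{\max,j}(\varrho_\alpha)<\Delta_1=\pmin(q)-1$ for every $j$, so all projector boxes exist, $\hcancel{\alpha}$ is given by~\eqref{cbs}, and every quantum factorial in~\eqref{ThetaFormula0} has index at most $\pmin(q)-1$, hence is nonzero; this is in effect the $J=-\infty$ special case of lemma~\ref{ThetaInFiniteAndNonzeroLem}. The paper's unitriangularity argument is marginally more robust (it needs no nonvanishing input), while yours makes the nondegeneracy on the $J=-\infty$ span explicit, which is information the paper anyway extracts later; either way the proposition follows.
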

\begin{proof} 
We prove items~\ref{IndOrthBasisLemIt1}--\ref{IndOrthBasisLemIt3} as follows:
\begin{enumerate}[leftmargin=*]
\itemcolor{red}
\item 
For $\alpha,\beta \in \smash{\LP_\multii}$ with $J_\alpha(q) = J_\beta(q) = -\infty$, the bilinear form 
$\BiForm{\hcancel{\alpha}}{\hcancel{\beta}}$ equals the evaluation of the following network: 
\begin{align}\label{ManyLoops} 
\vcenter{\hbox{\includegraphics[scale=0.275]{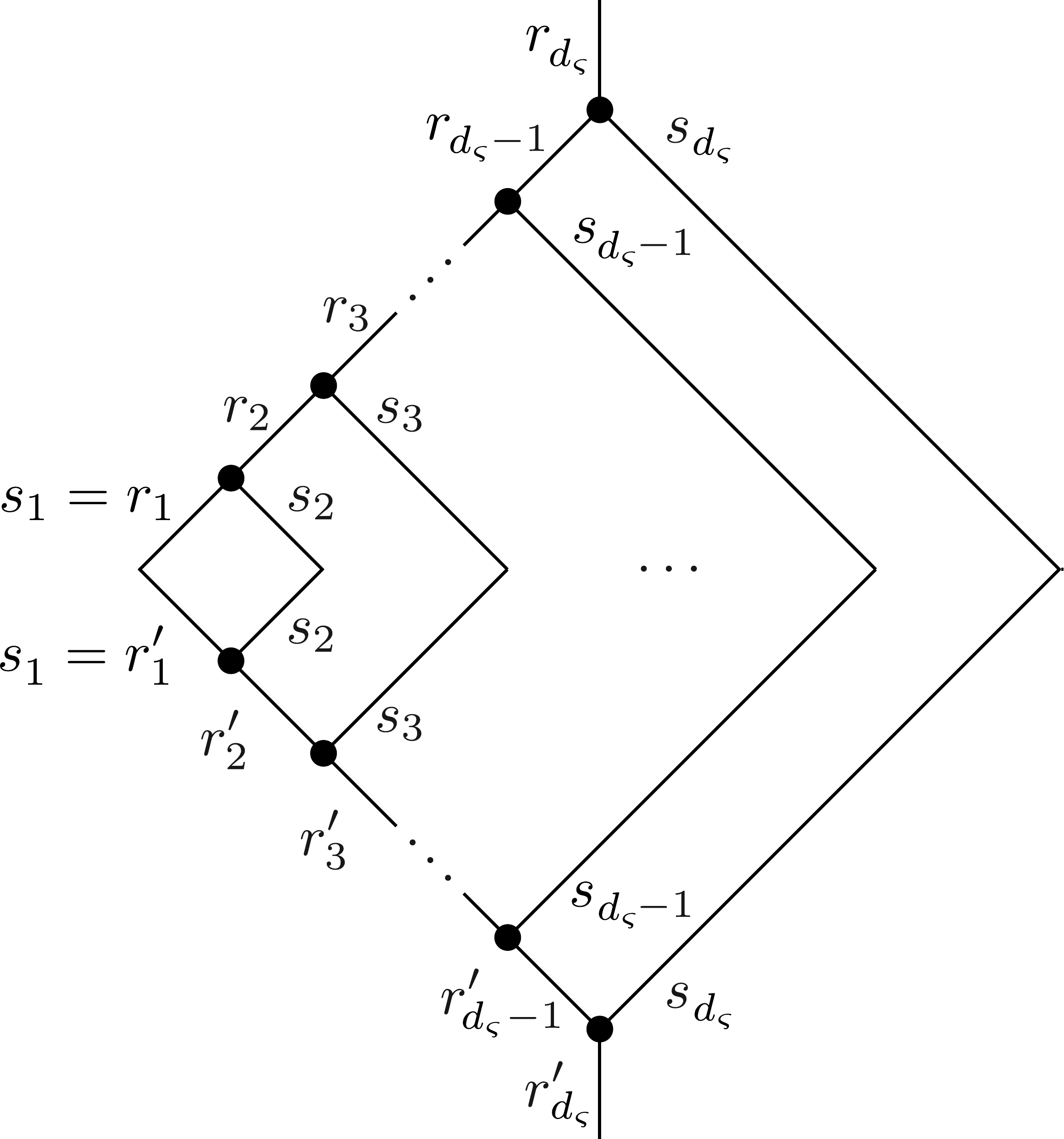} .}} 
\end{align}
To evaluate this network, we use lemmas~\ref{ExtractLem} and~\ref{LoopErasureLem}
of appendix~\ref{TLRecouplingSect} to recursively erase the smallest 
(i.e., the leftmost) loop in it, eventually arriving with~\eqref{WalkBiForm}. 

\item 
Orthogonality of the collection $\{ \hcancel{\alpha} \, | \, \alpha \in \LP_\multii, J_\alpha(q) = -\infty \}$ follows from~\eqref{WalkBiForm}.  
Linear independence of this collection follows from 
lemma~\ref{InsProjAlphaLem} and 
the fact that $\LP_\multii$ is a basis for $\LS_\multii$. 

\item That the set $\big\{ \hcancel{\alpha} \, | \, \alpha \in \smash{\LP_\multii\super{s}} \big\}$ is a basis for $\smash{\LS_\multii\super{s}}$ follows 
from lemma~\ref{InsProjAlphaLem} and the fact that $\smash{\LP_\multii\super{s}}$ is a basis for it.  
Orthogonality follows from item~\ref{IndOrthBasisLemIt1} with the fact that $J_\alpha(q) = -\infty$, 
for all $\multii$-valenced link patterns $\alpha$ if $\Summed_\multii < \ppmin(q)$.
\end{enumerate}
This concludes the proof.
\end{proof}

Next, we study the product~\eqref{WalkBiForm} in item~\ref{IndOrthBasisLemIt1} of proposition~\ref{IndOrthBasisLem} 
when the assumption that $J_\alpha(q) = -\infty$ may not hold (but we still have $\max \multii < \ppmin(q)$).  
In the next lemma, we show that certain factors in the product~\eqref{WalkBiForm} are finite and nonzero. 
We use this result to prove proposition~\ref{VanishDetLem2} below, and to determine the radical of $\smash{\LS_\multii\super{s}}$ in section~\ref{RadicalSect}.

\begin{lem} \label{ThetaInFiniteAndNonzeroLem} 
Suppose $\max \multii < \ppmin(q)$. 
Let $\varrho = (r_1, r_2, \ldots, r_{\np_\multii})$ be a walk over $\multii$, and let $\bar{J}:= \max(J,0)$.   
Then, for all $j \in \{\bar{J}+1, \bar{J}+2, \ldots, \np_\multii-1\}$, 
we have
\begin{align} \label{ThetaShouldBeFiniteAndNonzero0} 
0 < \bigg| \frac{ \ThetaNet( r_j, r_{j+1}, \sIndex_{j+1} )}{ (-1)^{r_{j+1}} [r_{j + 1}+1]} \bigg| < \infty. 
\end{align}
\end{lem}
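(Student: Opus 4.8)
Here is the plan. The key is to split on whether $q$ is a root of unity, and in the genuinely degenerate case to track the orders of vanishing at $q$ of the quantum factorials entering the explicit Theta-net formula~\eqref{ThetaFormula0}. I would first dispatch the case $\ppmin(q) = \infty$: then $[\ell]_q \neq 0$ for every $\ell \geq 1$, so---using that $\varrho$ is a walk, hence $r_{j+1}\in\DefectSet\sub{r_j,\sIndex_{j+1}}$ and the triangle and parity conditions in~\eqref{ThetaFormula0} hold---the number $\ThetaNet(r_j,r_{j+1},\sIndex_{j+1})$ and the factor $[r_{j+1}+1]$ are finite and nonzero, and \eqref{ThetaShouldBeFiniteAndNonzero0} is immediate. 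So I may assume $\ppmin(q) = \pmin(q) = p$ with $2 \leq p < \infty$. Then $[\ell]_q = 0 \iff p\mid\ell$ for $\ell\geq 1$, so every quantum factorial factors as $[k]_q! = \big(\prod_{m=1}^{\lfloor k/p\rfloor}[mp]_q\big)\cdot(\text{a nonzero product of quantum integers})$, and $\Delta_k = kp-1$ for all $k\geq 0$ by~\eqref{DeltaDefn}. Throughout, $s := r_{\np_\multii}$ denotes the defect of $\varrho$ and $k_s$ is as in~\eqref{skDefn}.

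Next I would unpack the hypothesis $\bar J + 1 \leq j \leq \np_\multii - 1$. Since $j > \bar J \geq J_\varrho(q)$ and $j$ lies in the range where $h_{\min,j}$ and $h_{\max,j}$ are defined, the supremum in~\eqref{Jindex0} is not attained at $j$, which forces $\Delta_{k_s} < h_{\min,j}(\varrho)$ and $h_{\max,j}(\varrho) < \Delta_{k_s+1}$. I would then introduce $A := h_{\min,j}(\varrho) = \tfrac{1}{2}(r_j+r_{j+1}-\sIndex_{j+1})$, $D := h_{\max,j}(\varrho) = \tfrac{1}{2}(r_j+r_{j+1}+\sIndex_{j+1})$, $B := \tfrac{1}{2}(r_{j+1}+\sIndex_{j+1}-r_j)$, $C := \tfrac{1}{2}(r_j+\sIndex_{j+1}-r_{j+1})$, all nonnegative integers because $r_{j+1}\in\DefectSet\sub{r_j,\sIndex_{j+1}}$, with $A+B+C = D$, $A+C = r_j$, $A+B = r_{j+1}$, $B+C = \sIndex_{j+1}$. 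Using $\Delta_{k_s}=k_sp-1$ and $\Delta_{k_s+1}=(k_s+1)p-1$, the two inequalities above become $k_sp \leq A \leq D \leq (k_s+1)p-2$. From this I would read off the floor quotients I need: $\lfloor B/p\rfloor = \lfloor C/p\rfloor = \lfloor(B+C)/p\rfloor = 0$, since $B,C \leq B+C = \sIndex_{j+1} \leq \max\multii < p$; and each of the five quantities $A$, $r_j$, $r_{j+1}$, $r_{j+1}+1$, $D+1$ lies in the single block $[\,k_sp,\ (k_s+1)p-1\,]$ (using $B,C\geq 0$ for the lower bounds and $A\le D\le (k_s+1)p-2$ for the upper bounds), so all five have floor quotient by $p$ equal to $k_s$.

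Then I would assemble the computation. Using~\eqref{ThetaFormula0} together with $[r_{j+1}]_q!\,[r_{j+1}+1]_q = [r_{j+1}+1]_q!$,
\begin{equation*}
\frac{\ThetaNet(r_j,r_{j+1},\sIndex_{j+1})}{(-1)^{r_{j+1}}[r_{j+1}+1]}
= (-1)^{C}\,\frac{[D+1]_q!\,[A]_q!\,[B]_q!\,[C]_q!}{[r_j]_q!\,[r_{j+1}+1]_q!\,[\sIndex_{j+1}]_q!} .
\end{equation*}
By the factorization of the quantum factorials and the floor quotients just computed, the numerator equals $\big(\prod_{m=1}^{k_s}[mp]_q\big)^2$ times a nonzero product of quantum integers (the two copies coming from $[D+1]_q!$ and $[A]_q!$, while $[B]_q!,[C]_q!$ contribute no factor $[mp]_q$), and the denominator equally equals $\big(\prod_{m=1}^{k_s}[mp]_q\big)^2$ times a nonzero product of quantum integers (the two copies from $[r_j]_q!$ and $[r_{j+1}+1]_q!$). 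Cancelling the common factor $\big(\prod_{m=1}^{k_s}[mp]_q\big)^2$ leaves a ratio of two nonzero complex numbers, which gives~\eqref{ThetaShouldBeFiniteAndNonzero0}. When $k_s\geq 1$ the left-hand side is literally a $0/0$ expression, so---consistently with Remark~\ref{TrivLinkStateRem}---I would phrase the whole argument in terms of the value at $q$ of the corresponding rational function of an indeterminate, so that the cancellation above is legitimate.

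I expect the only real obstacle to be bookkeeping rather than anything conceptual. The estimate $h_{\max,j}(\varrho) < \Delta_{k_s+1} = (k_s+1)p-1$ is exactly what confines all five of $A, r_j, r_{j+1}, r_{j+1}+1, D+1$ to a single residue block $[k_sp,(k_s+1)p-1]$; were any one of them to slip up to $(k_s+1)p$, the number of vanishing factors in the numerator and denominator would no longer agree and the argument would collapse. Verifying the five interval memberships carefully, and confirming that the small factorials $[B]_q!,[C]_q!,[\sIndex_{j+1}]_q!$ contribute no zero (which is where $\max\multii<p$ is used), is routine but is the step that must be executed with care.
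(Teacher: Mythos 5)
Your proposal is correct and follows essentially the same route as the paper's proof: both start from the explicit formula~\eqref{ThetaFormula0}, use $\max\multii<\ppmin(q)$ to dispose of the small factorials $[B]!,[C]!,[\sIndex_{j+1}]!$, and use $j>J$ via~\eqref{Jindex0} to confine $h_{\min,j}$ and $h_{\max,j}$ (hence $A$, $r_j$, $r_{j+1}+1$, $D+1$) to the single block between $\Delta_{k_s}+1$ and $\Delta_{k_s+1}$, so that the first-order zeros of the remaining quantum factorials cancel between numerator and denominator. Your explicit floor-quotient bookkeeping and the rational-function interpretation of the $0/0$ quotient simply make precise what the paper states more briefly at~(\ref{condi2General0}--\ref{condi2General1}) and~\eqref{ShouldBeFiniteAndNonzero}.
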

\begin{proof}
We fix $j \in \{\bar{J}+1, \bar{J}+2, \ldots, \np_{\multii}-1\}$ and consider the different factors in 
\begin{align} \label{ThetaShouldBeFiniteAndNonzero} 
\bigg| \frac{ \ThetaNet( r_j, r_{j+1}, \sIndex_{j+1} )}{ (-1)^{r_{j + 1}} [r_{j + 1}+1]} \bigg| \overset{\eqref{ThetaFormula0}}{=}
\bigg| \frac{ \big[\frac{r_{j} + r_{j + 1} + \sIndex_{j + 1}}{2} + 1 \big]! 
\big[\frac{r_j + r_{j+1} - \sIndex_{j+1} }{2} \big]! \big[\frac{r_{j+1} + \sIndex_{j+1} - r_j }{2} \big]! \big[\frac{\sIndex_{j+1} + r_j - r_{j+1} }{2} \big]!}{ [r_j]! [r_{j + 1}+1]! [\sIndex_{j+1}]! } \bigg| .
\end{align}
Definition~\eqref{WalkHeights} gives $r_{i+1} \in \DefectSet\sub{r_i, \sIndex_{i+1}}$, for all $i \in \{ 0, 1, \ldots, \np_{\multii} - 1 \}$. 
Combining this with (\ref{SpecialDefSet},~\ref{SameDefSet}) from lemma~\ref{SpecialDefLem}, we obtain
\begin{align} 
\label{EidsGeneral} 
& |r_i - r_{i+1}| \leq \sIndex_{i+1}, \qquad r_{i+1} \leq r_i + \sIndex_{i+1} , \qquad \text{and} \qquad r_i \leq r_{i+1} + \sIndex_{i+1},
\end{align}
for all $i \in \{ 0, 1, \ldots, \np_{\multii} - 1 \}$.  Thus, with $\max \multii < \ppmin(q)$, for any index $i$ in this set, we have
\begin{align} 
\label{condi1General} 
0 \overset{\eqref{EidsGeneral}}{\leq} \max \left( \frac{\sIndex_{i+1} - (r_j - r_{i+1}) }{2}, \frac{\sIndex_{i+1} - (r_{i+1} - r_i) }{2} \right) 
\overset{\eqref{EidsGeneral}}{\leq}  \sIndex_{i+1} < \max \multii < \ppmin(q). 
\end{align}
By definition~\eqref{Qinteger}, the quantum integer $[k]$ does not vanish if $k \in \{ 0, 1, \ldots, \ppmin(q) - 1 \}$, 
so~\eqref{condi1General} shows that the quantum factorials $\left[ \sIndex_{j+1} \right]!$, $\left[ \frac{1}{2}(r_{j+1} + \sIndex_{j+1} - r_j) \right]!$, and 
$\left[ \frac{1}{2} (\sIndex_{j+1} + r_j - r_{j+1}) \right]!$ in~\eqref{ThetaShouldBeFiniteAndNonzero} are nonzero.

To finish, we show that the zeros of the remaining factors 
in~\eqref{ThetaShouldBeFiniteAndNonzero} cancel, so their ratio is also finite and nonzero:
\begin{align} \label{ShouldBeFiniteAndNonzero}
0 < \bigg| \frac{\big[ \frac{ r_j + r_{j+1} + \sIndex_{j+1}}{2} + 1  \big]! \big[ \frac{r_j + r_{j+1} - \sIndex_{j+1}}{2} \big]! }{[ r_j ]! [ r_{j+1} + 1 ]! } \bigg| < \infty .
\end{align}
To see this, we observe that, for any $j \in \{\bar{J}+1, \bar{J}+2, \ldots, \np_{\multii}-1\}$, we have
\begin{align} 
\label{condi2General0}
k_s \pmin(q) \overset{\eqref{DeltaDefn}}{=} \Delta_{k_s} + 1 & \overset{\eqref{Jindex0}}{\leq} h_{\min,j}(\varrho) 
\overset{\eqref{minmaxh}}{=} \frac{r_j + r_{j+1} - \sIndex_{j+1}}{2} \\
& \overset{\eqref{EidsGeneral}}{\leq} \min(r_j, r_{j+1}) 
< \max(r_j, r_{j+1}) + 1 \\
& \overset{\eqref{EidsGeneral}}{<} 
\frac{r_j + r_{j+1} + \sIndex_{j+1}}{2} + 1 
\overset{\eqref{minmaxh2}}{=}  h_{\max,j}(\varrho) + 1 \\
\label{condi2General1} 
& \overset{\eqref{Jindex0}}{\leq} \Delta_{k_s+1} \overset{\eqref{DeltaDefn}}{=} (k_s +1)\pmin(q) - 1. 
\end{align}
From definition~\eqref{Qinteger}, we see that, for any $k \in \{ 0, 1, \ldots, \ppmin(q) - 1 \}$, we have
$[k]_q = 0$ if and only if $\ppmin(q) \,|\, k$, and these zeros of $q \mapsto [k]_q$ 
are of first order. 
Therefore,~(\ref{condi2General0}--\ref{condi2General1}) imply~\eqref{ShouldBeFiniteAndNonzero}, 
which implies~\eqref{ThetaShouldBeFiniteAndNonzero0}. 
\end{proof}

\subsection{Determinant of the Gram matrix} \label{DetSect}

Next we use proposition~\ref{IndOrthBasisLem} to find a formula for the determinant of the Gram matrix 
$\smash{\Gram_\multii\super{s}}$, defined in~\eqref{GramMatrix2}. We use this formula to prove that if $\Summed_\multii < \ppmin(q)$, 
then $\det \smash{\Gram_\multii\super{s}} \neq 0$, i.e., the radical of $\smash{\LS_\multii\super{s}}$ trivial, for all $s \in \DefectSet_\multii$.

\begin{prop} \label{GramDetLem} 
Suppose $\max \multii < \ppmin(q)$. We have
\begin{align}\label{DetFormula} 
\det \smash{\Gram_\multii\super{s}}
= \prod_{\varrho} \prod_{j=1}^{\np_{\multii} - 1}  
\frac{\ThetaNet(r_j, r_{j+1}, \sIndex_{j+1})}{(-1)^{r_{j+1}} [ r_{j+1} + 1 ]}, 
\end{align}
where the first product is over all walks $\WalkMultii = (r_1, r_2, \ldots, r_{\np_\multii})$ over $\multii$ with defect $r_{\np_\multii} = s$.
\end{prop}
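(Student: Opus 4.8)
The plan is to deduce the determinant formula from the change-of-basis lemma together with the orthogonality computation already packaged in proposition~\ref{IndOrthBasisLem}. The key observation is that the trivalent link states $\{\hcancel{\alpha} \,|\, \alpha \in \smash{\LP_\multii\super{s}}\}$ form a basis of $\smash{\LS_\multii\super{s}}$ by item~\ref{IndOrthBasisLemIt3} of proposition~\ref{IndOrthBasisLem}, and, crucially, the linear map $\alpha \mapsto \hcancel{\alpha}$ has an upper-unitriangular matrix representation by lemma~\ref{InsProjAlphaLem}. Let $C$ denote the matrix of this map restricted to $\smash{\LS_\multii\super{s}}$ with respect to the bases $\smash{\LP_\multii\super{s}}$ and $\{\hcancel{\alpha}\}$; by lemma~\ref{InsProjAlphaLem} (applied within the block corresponding to defect $s$, using that $\alpha \mapsto \hcancel{\alpha}$ preserves defect by~\eqref{PreserveDef}), we may order $\smash{\LP_\multii\super{s}}$ so that $C$ is upper-unitriangular, hence $\det C = 1$.

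The second step is the standard bilinear-form bookkeeping. Let $\smash{\tilde{\Gram}_\multii\super{s}}$ be the Gram matrix of $\BiForm{\cdot}{\cdot}$ in the trivalent basis, i.e. $[\smash{\tilde{\Gram}_\multii\super{s}}]_{\alpha,\beta} = \BiForm{\hcancel{\alpha}}{\hcancel{\beta}}$. Then the change-of-basis relation gives $\smash{\Gram_\multii\super{s}} = C^{\mathsf{T}}\, \smash{\tilde{\Gram}_\multii\super{s}}\, C$, so $\det \smash{\Gram_\multii\super{s}} = (\det C)^2 \det \smash{\tilde{\Gram}_\multii\super{s}} = \det \smash{\tilde{\Gram}_\multii\super{s}}$. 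Thus it suffices to compute $\det \smash{\tilde{\Gram}_\multii\super{s}}$, and for this I would want to invoke the diagonal evaluation of $\BiForm{\hcancel{\alpha}}{\hcancel{\beta}}$. When $\Summed_\multii < \ppmin(q)$ this is immediate: item~\ref{IndOrthBasisLemIt1} of proposition~\ref{IndOrthBasisLem} says $\smash{\tilde{\Gram}_\multii\super{s}}$ is diagonal with entries $\prod_{j=1}^{\np_\multii-1} \ThetaNet(r_j,r_{j+1},\sIndex_{j+1})/((-1)^{r_{j+1}}[r_{j+1}+1])$ indexed by the walk $\varrho_\alpha$, whence the determinant is the product of these over all $\alpha \in \smash{\LP_\multii\super{s}}$, equivalently (by item~\ref{wmlIt2} of lemma~\ref{WalkMultiiLem}) over all walks $\varrho$ over $\multii$ with defect $s$ --- which is exactly~\eqref{DetFormula}.

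The main obstacle is the general case $\max\multii < \ppmin(q)$ without the stronger hypothesis $\Summed_\multii < \ppmin(q)$, where $J_\alpha(q)$ need not be $-\infty$ and item~\ref{IndOrthBasisLemIt1} does not directly apply. Here my plan is an analytic-continuation argument: by remark~\ref{TrivLinkStateRem} and lemma~\ref{LimitLem}, the trivalent link states $\hcancel{\alpha}$ depend continuously on $q$ (perturbing to a nearby non-root-of-unity $q'$, holding $J$ fixed), so both sides of~\eqref{DetFormula} are continuous functions of $q$; since the identity holds on a dense set of $q$ with $\ppmin(q') = \infty$ (where $\Summed_\multii < \ppmin(q')$ automatically), it holds at the given $q$ by continuity --- provided one checks that no factor on the right-hand side of~\eqref{DetFormula} blows up or vanishes in the limit in a way that the left-hand side does not. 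For the factors with $j \geq \bar J + 1$ this finiteness and nonvanishing is precisely lemma~\ref{ThetaInFiniteAndNonzeroLem}; for the factors with $j \leq \bar J$ (the ``head''), one must verify that the limiting product is still well-defined, which should follow from the way definition~\ref{TrivalentLinkStateDef} is set up so that the ``head'' portion of $\hcancel{\alpha}$ contains no projector boxes that degenerate. I would carry out the argument by first establishing~\eqref{DetFormula} cleanly in the case $\Summed_\multii < \ppmin(q)$, then extending by continuity, citing lemmas~\ref{LimitLem} and~\ref{ThetaInFiniteAndNonzeroLem} for the limit control.
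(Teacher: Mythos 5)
Your main argument is exactly the paper's: pass to the trivalent basis via the upper-unitriangular map of lemma~\ref{InsProjAlphaLem} (so the determinant is unchanged), use the orthogonality and diagonal evaluation of item~\ref{IndOrthBasisLemIt1} of proposition~\ref{IndOrthBasisLem} when $\Summed_\multii < \ppmin(q)$, and re-index the product over walks via item~\ref{wmlIt2} of lemma~\ref{WalkMultiiLem}. The only divergence is in how you pass from $\Summed_\multii < \ppmin(q)$ to the general hypothesis $\max\multii < \ppmin(q)$, and there your route is both heavier and not fully closed: you propose to control the limit $q' \to q$ of the trivalent states themselves (remark~\ref{TrivLinkStateRem}, lemma~\ref{LimitLem}) and then check factor-by-factor that nothing blows up, invoking lemma~\ref{ThetaInFiniteAndNonzeroLem} for $j \geq \bar J + 1$ and leaving the ``head'' factors as an unverified ``should follow.'' This is awkward because the trivalent basis itself depends on $q$ discontinuously through the index $J_\alpha(q)$, so continuity of $\hcancel{\alpha}_{q'}$ with $J$ frozen does not by itself give continuity of the two sides of~\eqref{DetFormula} as functions of $q$. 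The paper avoids all of this with a one-line observation you should adopt: the entries $\BiForm{\alpha}{\beta}$ of $\smash{\Gram_\multii\super{s}}$ in the \emph{link-pattern} basis~\eqref{GramMatrix2} are defined via $\WJEmb_\multii$~\eqref{LSBiFormExt}, which only involves Jones-Wenzl projectors of sizes at most $\max\multii < \ppmin(q)$, so they are analytic in $q$ on the whole region $\{q \in \bC^\times \,|\, \max\multii < \ppmin(q)\}$; hence $\det\smash{\Gram_\multii\super{s}}$ is analytic there, and the identity, established on the (dense, accumulating) subset where $\Summed_\multii < \ppmin(q)$, extends by analytic continuation without any limit control on the trivalent states or on individual Theta factors. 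With that replacement your proof coincides with the paper's; as written, the continuation step is the one place where it falls short.
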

\begin{proof} 
For all $\alpha,\beta \in \smash{\LP_\multii\super{s}}$, the bilinear form $\BiForm{\alpha}{\beta}$ is an analytic function of 
$q \in \{q \in \bC^\times  \, | \,  \max \multii < \ppmin(q)\}$.
Thus, the determinant  of the Gram matrix $\smash{\Gram_\multii\super{s}}$,
defined in terms of this bilinear form in~\eqref{GramMatrix2},
is also analytic on this set. 
Hence, we may assume that $\Summed_\multii < \ppmin(q)$.  
Then by lemma~\ref{InsProjAlphaLem} and items~\ref{IndOrthBasisLemIt1} and~\ref{IndOrthBasisLemIt3} of proposition~\ref{IndOrthBasisLem}, we have
\begin{align} \label{detdet}  
\det \smash{\Gram_\multii\super{s}} 
\underset{\text{lem.}~\ref{InsProjAlphaLem}}{\overset{\eqref{GramMatrix2}}{=}} \det [\BiForm{\hcancel{\alpha}}{\hcancel{\beta}}]_{\alpha,\beta \in \LP_\multii\super{s}} 
\overset{\eqref{WalkBiForm}}{=} \prod_{\alpha \, \in \, \smash{\LP_\multii\super{s}}} 
\prod_{j=1}^{\np_{\multii} - 1}  \frac{\ThetaNet(r_j, r_{j+1}, \sIndex_{j+1})}{(-1)^{r_{j+1}} [ r_{j+1} + 1 ]} . 
\end{align}
To obtain determinant formula~\eqref{DetFormula} from this expression, we use the bijection of item~\ref{wmlIt2} of lemma~\ref{WalkMultiiLem}, 
sending $\alpha \mapsto \varrho_\alpha$, for all $\alpha \in \smash{\LP_\multii\super{s}}$, to index the product by all walks over 
$\multii$ with defect $s$. 
\end{proof}

In some cases, an alternative formula for the determinant of $\smash{\Gram_\multii\super{s}}$ is known.  
For example, in the case with $\multii = \OneVec{n}$ and $s = 0$ (so $n$ is necessarily 
even by~\eqref{DefectSet}),~\cite[equation~(\red{5.6})]{fgg} 
gives an explicit formula for the determinant of the Gram matrix $\smash{\Gram_n\super{0}}$, now called the \emph{meander matrix}.  
Next, in the case with $\multii = \OneVec{n}$,~\cite[theorem~\red{4.7}]{rsa}
 gives an alternative formula for the determinant of $\smash{\Gram_n\super{s}}$;
see also~\cite[corollary~\red{4.7}]{gl2} for a more general case. 
We state this formula as lemma~\ref{RidoutDetLem} in section~\ref{RecursSect} below, 
and we derive it from our formula~\eqref{DetFormula} for use 
in section~\ref{RadicalSect}.  The last special case that we know of has $\multii = (1,1,\ldots,1,k)$ for some $k \in \bZpos$.  
In this case,~\cite[proposition~\red{D.4}]{mrr} 
gives an alternative formula for the determinant of $\smash{\Gram\sub{1,1,\ldots,1,k}\super{s}}$.

Now, using the explicit formula~\eqref{DetFormula} for the determinant of $\smash{\Gram_\multii\super{s}}$ from proposition~\ref{GramDetLem}, 
we deduce that this determinant is not zero if $\Summed_\multii < \ppmin(q)$. 
This is the main result of the present section.

\begin{prop} \label{VanishDetLem2} 
Suppose $\Summed_\multii < \ppmin(q)$.  Then we have $\det \Gram_\multii\super{s} \neq 0$, for all $s \in \DefectSet_\multii$.
\end{prop}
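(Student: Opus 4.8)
The strategy is to use the explicit product formula~\eqref{DetFormula} for $\det\Gram_\multii\super{s}$ from proposition~\ref{GramDetLem}, and to show that under the hypothesis $\Summed_\multii < \ppmin(q)$ every factor in that product is finite and nonzero. By proposition~\ref{GramDetLem}, we have
\begin{align}
\det \Gram_\multii\super{s} = \prod_{\varrho} \prod_{j=1}^{\np_\multii - 1}
\frac{\ThetaNet(r_j, r_{j+1}, \sIndex_{j+1})}{(-1)^{r_{j+1}}[r_{j+1}+1]},
\end{align}
where the outer product ranges over all walks $\varrho = (r_1, r_2, \ldots, r_{\np_\multii})$ over $\multii$ with defect $r_{\np_\multii} = s$. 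Since a product vanishes only if one of its factors does (and is finite only if all factors are), it suffices to prove that
\begin{align} \label{FactorFiniteNonzero}
0 < \left| \frac{\ThetaNet(r_j, r_{j+1}, \sIndex_{j+1})}{(-1)^{r_{j+1}}[r_{j+1}+1]} \right| < \infty
\end{align}
for every walk $\varrho$ over $\multii$ with defect $s$ and every $j \in \{1, 2, \ldots, \np_\multii - 1\}$.

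To establish~\eqref{FactorFiniteNonzero}, I would invoke lemma~\ref{ThetaInFiniteAndNonzeroLem}. That lemma gives exactly the bound~\eqref{FactorFiniteNonzero} for all $j \in \{\bar J + 1, \bar J + 2, \ldots, \np_\multii - 1\}$, where $\bar J = \max(J_\varrho(q), 0)$. The key observation is that when $\Summed_\multii < \ppmin(q)$, the inequality chain~\eqref{rLessThanN} shows that every height $r_j$ of every walk over $\multii$ satisfies $r_j < \Summed_\multii < \ppmin(q)$, so also $h_{\min,j}(\varrho) \leq r_j$ and $h_{\max,j}(\varrho) \leq \Summed_\multii < \ppmin(q)$. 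Recalling $\Delta_{k_s+1} = (k_s+1)\pmin(q) - 1 \geq \pmin(q) - 1 \geq \ppmin(q) - 1$ and $\Delta_{k_s} = k_s \pmin(q) - 1 \leq s - 1 < \ppmin(q)$ is too small unless $k_s = 0$ — more carefully, one checks from definition~\eqref{Jindex0} that neither condition $h_{\min,j}(\varrho) \leq \Delta_{k_s}$ nor $\Delta_{k_s+1} \leq h_{\max,j}(\varrho)$ can hold, because $h_{\max,j}(\varrho) < \ppmin(q) - 1 \leq \Delta_{k_s+1}$ and (using $s < \ppmin(q) \leq \Summed_\multii$, forcing $k_s = 0$ and hence $\Delta_{k_s} = -1 < 0 \leq h_{\min,j}(\varrho)$) the lower condition fails as well. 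Consequently $J_\varrho(q) = -\infty$ for every such walk, so $\bar J = 0$ and lemma~\ref{ThetaInFiniteAndNonzeroLem} applies to all $j \in \{1, 2, \ldots, \np_\multii - 1\}$, giving~\eqref{FactorFiniteNonzero} in full.

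Combining these facts, every factor in the double product defining $\det \Gram_\multii\super{s}$ is finite and nonzero, and since there are finitely many walks over $\multii$ with defect $s$ (their number is $\Dim_\multii\super{s}$, finite by lemma~\ref{LSDimLem2}), the product itself is finite and nonzero. Hence $\det \Gram_\multii\super{s} \neq 0$ for all $s \in \DefectSet_\multii$. The only mildly delicate point — the main obstacle — is the bookkeeping showing $J_\varrho(q) = -\infty$, i.e., verifying that the hypothesis $\Summed_\multii < \ppmin(q)$ forces $k_s = 0$ and pins down the relevant $\Delta$-values so that neither clause in~\eqref{Jindex0} is ever triggered; once that is in hand, everything else is an immediate appeal to the earlier lemmas. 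Alternatively, and perhaps more cleanly, one can sidestep the $J$-analysis entirely by noting that $\Summed_\multii < \ppmin(q)$ is precisely the hypothesis of item~\ref{IndOrthBasisLemIt3} of proposition~\ref{IndOrthBasisLem}, so that $\{\hcancel\alpha \mid \alpha \in \LP_\multii\super{s}\}$ is an \emph{orthogonal} basis of $\LS_\multii\super{s}$; then $\det \Gram_\multii\super{s}$ equals the product of the squared norms $\BiForm{\hcancel\alpha}{\hcancel\alpha}$, each of which is the product in~\eqref{WalkBiForm} over the corresponding walk, and the same factor-by-factor argument via lemma~\ref{ThetaInFiniteAndNonzeroLem} (with $J_\alpha(q) = -\infty$) shows each norm is finite and nonzero.
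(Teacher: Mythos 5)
Your overall route is the same as the paper's: start from the product formula of proposition~\ref{GramDetLem} and control the factors via lemma~\ref{ThetaInFiniteAndNonzeroLem}. The gap is exactly at the bookkeeping step you flag as the main obstacle: it is \emph{not} true that $\Summed_\multii < \ppmin(q)$ forces $k_s = 0$ and $J_\varrho(q) = -\infty$ for every walk. The boundary case $s = \Summed_\multii = \ppmin(q) - 1$ (with $q$ a root of unity, $q \neq \pm 1$) satisfies the hypothesis, but there $s+1 = \pmin(q)$, so $R_s = 0$, $k_s = 1$, and $\Delta_{k_s} = s$; the unique walk $\varrho$ with defect $s$ then has $h_{\min,\np_\multii}(\varrho) = s = \Delta_{k_s}$, hence $J_\varrho(q) = \np_\multii$ (this is precisely lemma~\ref{JDeltaLem}). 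Lemma~\ref{ThetaInFiniteAndNonzeroLem} is then vacuous (its range $\{\bar J+1,\ldots,\np_\multii-1\}$ is empty), and your assertion that every factor is finite and nonzero is false as written: at $j = \np_\multii - 1$ both $\ThetaNet(r_{\np_\multii-1}, s, \sIndex_{\np_\multii})$ and $[s+1] = [\pmin(q)]$ vanish, so that factor is a $0/0$ ratio. Concretely, for $\multii = \OneVec{2}$, $s = 2$, $\pmin(q) = 3$, the single factor is $\ThetaNet(1,2,1)/[3] = [3]/[3]$. The paper devotes its second case to exactly this situation: there is only one walk (all heights maximal), each factor equals $\ThetaNet(r_j, r_j + \sIndex_{j+1}, \sIndex_{j+1})/\big((-1)^{r_{j+1}}[r_{j+1}+1]\big) = 1$ after cancellation, and hence $\det \Gram_\multii\super{s} = 1$. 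Your alternative route via orthogonality does not evade this, since item~\ref{IndOrthBasisLemIt3} of proposition~\ref{IndOrthBasisLem} and formula~\eqref{WalkBiForm} rest on the same claim $J_\alpha(q) = -\infty$, which fails for this $(\multii,s)$.

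A second, smaller omission: for $q = \pm 1$ one has $\pmin(q) = 1$ while $\ppmin(q) = \infty$, so your inference ``$s \leq \Summed_\multii < \ppmin(q) \Rightarrow k_s = 0$'' (which must compare against $\pmin(q)$, not $\ppmin(q)$) also fails there, and again $J_\varrho(q)$ need not be $-\infty$. The paper disposes of this case first by noting that no quantum integer $[k]$ with $k \geq 1$ vanishes at $q = \pm1$, so the product formula is manifestly nonzero. In short, your main argument reproduces the paper's case $s < \pmin(q)-1$ (modulo the garbled inequality ``$s < \ppmin(q) \leq \Summed_\multii$''), but a complete proof must add the two boundary cases $q = \pm 1$ and $s = \pmin(q) - 1$, the latter requiring the direct evaluation $\det \Gram_\multii\super{s} = 1$ rather than a factor-by-factor nonvanishing claim.
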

\begin{proof} 
If $\pmin(q) = 1$ (so $q \in \{\pm1\}$ and $\ppmin(q) = \infty$ by~\eqref{MinPower}), 
then definition~\eqref{Qinteger} shows that we have $[k] \neq 0$, for all 
$k \in \bZpos$.  
Then the claim follows from the explicit formula~\eqref{DetFormula} for 
the determinant of $\smash{\Gram_\multii\super{s}}$.
Throughout the rest of the proof, we therefore assume that $\pmin(q) \neq 1$, so $\ppmin(q) = \pmin(q)$ by~\eqref{MinPower}, 
and we let $s$ be an arbitrary integer in $\DefectSet_\multii$.  Then we have
\begin{align}\label{ks0} 
\begin{cases} s \in \DefectSet_\multii \\ \Summed_\multii < \ppmin(q) = \pmin(q) \end{cases} 
\qquad \overset{\eqref{DefSet2}}{\Longrightarrow} \qquad s \leq \Summed_\multii < \pmin(q) 
\qquad \underset{\eqref{skDefn}}{\overset{\eqref{DeltaDefn}}{\Longrightarrow}} \qquad k_s =0. 
\end{align}
We divide our analysis into two cases:
\begin{enumerate}[leftmargin=*]
\itemcolor{red}
\item $s < \pmin(q) - 1$:  
For any walk $\varrho = (r_1, r_2, \ldots, r_{\np_\multii})$ over $\multii$ with defect $s$, and for all $j \in \{0,1,\ldots,\np_\multii-2\}$, we have
\begin{align} 
\label{Jinf1-0} 
\Delta_{k_s} \underset{\eqref{ks0}}{\overset{\eqref{DeltaDefn}}{=}} -1 < 0 
& \overset{\eqref{EasyCompare}}{\leq} h_{\min,j}(\varrho) \\
\label{Jinf1-1} 
& \overset{\eqref{EasyCompare}}{\leq} h_{\max,j}(\varrho) \underset{\eqref{minmaxh2}}{\overset{\eqref{WalkHeights}}{\leq}} 
r_j + \sIndex_{j+1} \overset{\eqref{rLessThanN}}{<} \Summed_\multii \leq \pmin(q) - 1 \underset{\eqref{ks0}}{\overset{\eqref{DeltaDefn}}{=}} \Delta_{k_s+1}. 
\end{align}
Moreover, for $j = \np_\multii - 1$, we use the fact that $s < \pmin(q) - 1$ to obtain
\begin{align} 
\label{Jinf2-0} 
\Delta_{k_s} \underset{\eqref{ks0}}{\overset{\eqref{DeltaDefn}}{=}} -1 < 0 
& \overset{\eqref{EasyCompare}}{\leq} h_{\min,\np_\multii - 1}(\varrho) \\
\label{Jinf2-1} 
& \overset{\eqref{EasyCompare}}{\leq} h_{\max,\np_\multii - 1}(\varrho) \underset{\eqref{rLessThanN}}{\overset{\eqref{minmaxh2}}{\leq}} 
\frac{\Summed_\multii + s}{2} \overset{\eqref{rLessThanN}}{<} \pmin(q) - 1 \underset{\eqref{ks0}}{\overset{\eqref{DeltaDefn}}{=}} \Delta_{k_s+1}, 
\end{align}
and finally, for $j = \np_\multii$, we again use the fact that $s < \pmin(q) - 1$ to obtain
\begin{align} 
\label{Jinf3-0} 
\Delta_{k_s} \underset{\eqref{ks0}}{\overset{\eqref{DeltaDefn}}{=}} -1 < 0 \overset{\eqref{DefSet2}}{\leq} s 
& \overset{\eqref{minmaxh}}{=} h_{\min,\np_\multii}(\varrho) \\
\label{Jinf3-1} 
& \overset{\eqref{minmaxh}}{=} h_{\max,\np_\multii}(\varrho) \overset{\eqref{minmaxh2}}{=} s 
< \pmin(q) - 1 \underset{\eqref{ks0}}{\overset{\eqref{DeltaDefn}}{=}} \Delta_{k_s+1}. 
\end{align}
Altogether,~(\ref{Jinf1-0}--\ref{Jinf3-1}) imply that $J_\varrho(q) = -\infty$ for any walk $\varrho$ over $\multii$ with defect $s$.  
Lemma~\ref{ThetaInFiniteAndNonzeroLem} implies that the product in the formula~\eqref{DetFormula} for the determinant of $\smash{\Gram_\multii\super{s}}$ does not vanish.

\item $s = \pmin(q) - 1$: We have
\begin{align}
\begin{cases} 
s \leq \Summed_\multii \\ s = \pmin(q) - 1 \\ \Summed_\multii < \pmin(q) 
\end{cases} 
\qquad \Longrightarrow \qquad s = \pmin(q) - 1 = \Summed_\multii . 
\end{align}
Now, there exists only one walk over $\multii$ with defect $s = \Summed_\multii$: the one 
with heights $r_j = \sIndex_1 + \sIndex_2 + \dotsm + \sIndex_j$, for all $j \in \{1,2,\ldots,\np_\multii\}$.  
Moreover, it follows from~\eqref{ThetaFormula0} 
that for any pair of integers $r,t \in \bZnn$, we have
\begin{align}
\ThetaNet(r,t,r+t) \overset{\eqref{ThetaFormula0}}{=} 1.
\end{align}
Combining these facts with~\eqref{DetFormula}, we see that the determinant of $\smash{\Gram_\multii\super{s}}$ equals one 
(especially, not zero) in this case.
\end{enumerate}
This shows that  $\det \Gram_\multii\super{s} \neq 0$.
\end{proof}

\subsection{Recursion formulas for the Gram determinant} \label{RecursSect}

In this section, we gather additional results on the determinant of the Gram matrix $\smash{\Gram_\multii\super{s}}$. 
First, in lemma~\ref{RecurseLem} we give a general recursion formula for the determinant of $\smash{\Gram_\multii\super{s}}$,
and in lemma~\ref{RidoutDetLem}, we employ it to derive an alternative formula for this determinant 
when $\multii = \OneVec{n}$ for some $n \in \bZnn$. This formula already appears, e.g., in~\cite{bw, rsa}.
In lemma~\ref{ExtGramLem}, we derive a formula for the determinant of another 
Gram matrix $\smash{\Gram_{\multii;\ExtraDefect}\super{s}}$, 
related to the original $\smash{\Gram_\multii\super{s}}$ and defined below. 
We use these results in section~\ref{RadicalSect}.
Finally, in lemma~\ref{ForBSALem} we obtain yet another recursion formula for the determinant of 
$ \smash{\Gram_n\super{s}}$ in the case that $\multii = \OneVec{n}$, which we use in our forthcoming article~\cite{fp2}.

%

For the next lemma, we use notation~\eqref{hats} for $\lds$ and $t$. 

\begin{lem} \label{RecurseLem} 
Suppose $\max \multii < \ppmin(q)$. We have the recursion
\begin{align}\label{DetRecurse} 
\det \Gram_\multii\super{s} = \prod_{r \, \in \, \DefectSet_{\lds} \, \cap \, \DefectSet\sub{t,s}} 
\big(\det \Gram\sub{r,t}\super{s} \big)^{\Dim_{\lds}\super{r}} \det \Gram_{\lds}\super{r} . 
\end{align}
\end{lem}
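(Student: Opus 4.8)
The plan is to read off the recursion directly from the closed product formula for $\det\Gram_\multii\super{s}$ established in proposition~\ref{GramDetLem}, by regrouping walks over $\multii$ according to their penultimate height. Write $\multii=\lds\oplus(t)$ with $\lds=(\sIndex_1,\dots,\sIndex_{\np_\multii-1})$ and $t=\sIndex_{\np_\multii}$ as in~\eqref{hats}. The first thing I would record is the ``two-step base case'': applying proposition~\ref{GramDetLem} to the multiindex $(r,t)$, the only walk over $(r,t)$ with defect $s$ (present exactly when $s\in\DefectSet\sub{r,t}$) is $(r,s)$, so that
\[
\det\Gram\sub{r,t}\super{s}=\frac{\ThetaNet(r,s,t)}{(-1)^{s}[s+1]} ,
\]
which identifies the one-step transition weight with a small Gram determinant.

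Next comes the combinatorial heart of the argument. A walk $\varrho=(r_1,\dots,r_{\np_\multii})$ over $\multii$ with defect $r_{\np_\multii}=s$ amounts to the following data: its truncation $\hat\varrho=(r_1,\dots,r_{\np_\multii-1})$, which is a walk over $\lds$ whose defect I call $r:=r_{\np_\multii-1}$, together with a final step to $s$, which is legitimate precisely when $s\in\DefectSet\sub{r,t}$. By the symmetry~\eqref{SameDefSet} this last condition reads $r\in\DefectSet\sub{t,s}$, so combined with $r\in\DefectSet_{\lds}$ it says exactly $r\in\DefectSet_{\lds}\cap\DefectSet\sub{t,s}$; conversely every such pair $(\hat\varrho,r)$ extends uniquely to a walk over $\multii$ with defect $s$. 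Since $\np_{\lds}-1=\np_\multii-2$ and the first $\np_\multii-1$ entries of $\multii$ are those of $\lds$, the factor $\prod_{j=1}^{\np_\multii-2}\ThetaNet(r_j,r_{j+1},\sIndex_{j+1})/((-1)^{r_{j+1}}[r_{j+1}+1])$ occurring in~\eqref{DetFormula} is precisely the walk weight that $\hat\varrho$ contributes to $\det\Gram_{\lds}\super{r}$, while the leftover term $j=\np_\multii-1$ equals $\ThetaNet(r,s,t)/((-1)^{s}[s+1])=\det\Gram\sub{r,t}\super{s}$ by the base case above.

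Putting these pieces together, I would partition the product in~\eqref{DetFormula} over all walks $\varrho$ over $\multii$ with defect $s$ into blocks indexed by the penultimate height $r\in\DefectSet_{\lds}\cap\DefectSet\sub{t,s}$. Within the block for a fixed $r$, multiplying the walk weights of all truncations $\hat\varrho$ over $\lds$ with defect $r$ reproduces $\det\Gram_{\lds}\super{r}$, and the common last-step factor $\det\Gram\sub{r,t}\super{s}$ is counted once per such $\hat\varrho$, i.e.\ $\Dim_{\lds}\super{r}$ times by the count~\eqref{CountLP} of walks in lemma~\ref{WalkMultiiLem}. This yields
\[
\det\Gram_\multii\super{s}=\prod_{r\,\in\,\DefectSet_{\lds}\cap\DefectSet\sub{t,s}}\big(\det\Gram\sub{r,t}\super{s}\big)^{\Dim_{\lds}\super{r}}\det\Gram_{\lds}\super{r} ,
\]
which is~\eqref{DetRecurse}.

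I do not anticipate a serious obstacle here: the argument is essentially an unwinding of proposition~\ref{GramDetLem}. The only bookkeeping points deserving attention are matching the multiplicity of $\det\Gram\sub{r,t}\super{s}$ with $\Dim_{\lds}\super{r}$ correctly (this is exactly~\eqref{CountLP}), and the observation that, although only $\max\multii<\ppmin(q)$ is assumed, a penultimate height $r$ can exceed $\ppmin(q)-1$ when $\Summed_\multii\ge\ppmin(q)$, so that $\Gram\sub{r,t}\super{s}$ need not literally be the Gram matrix of an embedded link-state module. This causes no trouble, since $\det\Gram\sub{r,t}\super{s}$ is unambiguously given by $\ThetaNet(r,s,t)/((-1)^{s}[s+1])$ through formula~\eqref{DetFormula}, whose factors are finite quantum-factorial ratios controlled by lemma~\ref{ThetaInFiniteAndNonzeroLem} — precisely the analytic reading already used in the proof of proposition~\ref{GramDetLem}.
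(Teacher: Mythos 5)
Your proposal is correct and follows essentially the same route as the paper's proof: it unwinds the closed formula of proposition~\ref{GramDetLem} by grouping walks over $\multii$ according to their penultimate height $r\in\DefectSet_{\lds}\cap\DefectSet\sub{t,s}$, factoring the common last-step factor out with multiplicity $\Dim_{\lds}\super{r}$ and recognizing the remaining products as $\det\Gram_{\lds}\super{r}$ and $\det\Gram\sub{r,t}\super{s}$. Your closing remark about reading $\det\Gram\sub{r,t}\super{s}$ through formula~\eqref{DetFormula} when $r\geq\ppmin(q)$ is a sensible clarification of a point the paper leaves implicit.
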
 
\begin{proof} 
The following fact is evident from the definition~\eqref{WalkHeights} of a walk $\varrho$ over $\multii$:
\begin{align} 
\text{$\varrho = (r_1, r_2, \ldots, r_{\np_\multii})$ is a walk over $\multii$} 
\qquad \qquad \overset{\eqref{WalkHeights}}{\Longrightarrow} \qquad \qquad 
\text{$\hat{\varrho} = (r_1, r_2, \ldots, r_{\np_{\multii}-1})$ is a walk over $\lds$} . 
\end{align}
Therefore, by item~\ref{wmlIt4} of lemma~\ref{WalkMultiiLem}, the penultimate height $r_{\np_{\multii}-1}$ of a walk $\varrho$ over 
$\multii$ is an element of $\DefectSet_{\lds}$.  Moreover, 
\begin{align} 
\text{$\varrho = (r_1, r_2, \ldots, r_{\np_\multii})$ is a walk over $\multii$ with defect $r_{\np_\multii} = s$} 
\qquad \qquad \overset{\eqref{WalkHeights}}{\Longrightarrow} \qquad \qquad  r_{\np_{\multii}-1} \in \DefectSet\sub{t,s} . 
\end{align}
We conclude that if $\varrho = (r_1, r_2, \ldots, r_{\np_\multii})$ is a walk over $\multii$ with defect $s$, then 
$r := r_{\np_{\multii}-1} \in \DefectSet_{\lds} \cap \DefectSet\sub{t,s}$. 
Hence, with $\varrho_r$ denoting a walk over $\multii$ with penultimate height $r$ and defect $s$, we may write  
determinant formula~\eqref{DetFormula} as
\begin{align} \label{DetFormula2} 
\det \Gram_\multii\super{s}
\overset{\eqref{DetFormula}}{=} \prod_{r \, \in \, \DefectSet_{\lds} \, \cap \, \DefectSet\sub{t,s}} \prod_{\varrho_r} \prod_{j=1}^{\np_{\multii} - 1} 
\bigg( \frac{\ThetaNet(r_j, r_{j+1}, \sIndex_{j+1})}{(-1)^{r_{j+1}} [ r_{j+1} + 1 ]} \bigg).
\end{align}
The factor in the product over $j \in \{1,2,\ldots,\np_{\multii}-1\}$ with 
$j = \np_{\multii}-1$ depends only on the last two heights $r$ and $s$ of a walk over $\multii$, which are the same for all 
walks $\varrho_r$.  Hence, we may factor it out of the product, obtaining
\begin{align}\label{FinalDet} 
\det \Gram_\multii\super{s} 
\overset{\eqref{DetFormula2}}{=} \prod_{r \, \in \, \DefectSet_{\lds} \, \cap \, \DefectSet\sub{t,s}} 
\bigg(\frac{\ThetaNet(r, s, t)}{(-1)^s [ s + 1 ]} \bigg)^{\Dim_{\lds}\super{r}} \prod_{\varrho_r} \prod_{j=1}^{\np_{\multii} - 2} \bigg( 
\frac{\ThetaNet(r_j, r_{j+1}, \sIndex_{j+1})}{(-1)^{r_{j+1}} [ r_{j+1} + 1 ]} \bigg),
\end{align}
where the power $\smash{\Dim_{\lds}\super{r}}$ follows from the fact that there are this many distinct walks $\hat{\varrho}_r$ over $\lds$ with defect $r$.  
Finally, recalling the formula~\eqref{DetFormula} for $\smash{\det \Gram\sub{r,t}\super{s}}$ and $\smash{\det \Gram_\lds\super{r}}$, 
we obtain sought recursion formula~\eqref{DetRecurse}.
\end{proof}

In the next lemma, we give another formula for the determinant of the Gram matrix $\smash{\Gram_n\super{s}}$.
This formula appears in~\cite{bw}, and a proof for it appears in~\cite[theorem~\red{4.7}]{rsa}; ee also~\cite[corollary~\red{4.7}]{gl2}.
We use this formula to prove lemma~\ref{EasyRadLem} in section~\ref{RadicalSect}, a crucial ingredient for completely 
and explicitly characterizing the radical of 
$\smash{\LS_\multii\super{s}}$ for any multiindex $\multii \in \{\OneVec{0}\} \cup \smash{\bZpos^\#}$ 
and for any integer $s \in \DefectSet_\multii$.

\begin{lem} \label{RidoutDetLem} \textnormal{\cite[theorem~\red{4.7}]{rsa}} 
We have 
\begin{align}\label{RidoutDet} 
\det \Gram_n\super{s} = \prod_{j=1}^{\frac{n-s}{2}} \left(\frac{[s+j+1]}{(-1)^{s+1}[j]}\right)^{\Dim_n\super{s+2j}} . 
\end{align}
\end{lem}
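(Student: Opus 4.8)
The plan is to derive formula~\eqref{RidoutDet} as a special case of the general determinant formula~\eqref{DetFormula} from proposition~\ref{GramDetLem}, specialized to $\multii = \OneVec{n}$, by collapsing the product over walks into a product over a single walk parameter. First I would apply lemma~\ref{RecurseLem} with $\multii = \OneVec{n}$, so that $\lds = \OneVec{n-1}$ and $t = 1$; recursion~\eqref{DetRecurse} then reads
\begin{align}
\det \Gram_n\super{s} = \prod_{r \, \in \, \DefectSet_{n-1} \, \cap \, \DefectSet\sub{1,s}}
\big(\det \Gram\sub{r,1}\super{s} \big)^{\Dim_{n-1}\super{r}} \det \Gram_{n-1}\super{r} ,
\end{align}
where by lemma~\ref{SpecialDefLem} the set $\DefectSet\sub{1,s} = \{s-1, s+1\}$, so the product is over $r \in \{s-1, s+1\} \cap \DefectSet_{n-1}$. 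Since $\smash{\Gram\sub{r,1}\super{s}}$ is a $1\times 1$ matrix (there is a unique $((r,1),s)$-valenced link pattern), its determinant is computed directly from~\eqref{DetFormula} or~\eqref{ThetaFormula0} as a single Theta-over-quantum-integer factor: for $r = s+1$ one gets $\ThetaNet(s+1,s,1)/((-1)^s[s+1]) = 1$ after using $\ThetaNet(r,1,r+1) = 1$, and for $r = s-1$ one gets $\ThetaNet(s-1,s,1)/((-1)^s[s+1]) = [s]\big/\big((-1)^s[s+1]\big)$ type expression — the precise factor I would read off from~\eqref{ThetaFormula0}. This reduces the recursion to one in $n$ alone.

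Next I would solve this two-term recursion in closed form by induction on $n$. The base case $n = s$ gives $\Gram_s\super{s}$ a $1\times 1$ identity (the all-defects link state), so $\det \Gram_s\super{s} = 1$, matching the empty product on the right side of~\eqref{RidoutDet}. For the inductive step, I would substitute the claimed formula~\eqref{RidoutDet} for $\det \Gram_{n-1}\super{s-1}$ and $\det \Gram_{n-1}\super{s+1}$ into the recursion and verify the product telescopes correctly. The key bookkeeping is tracking the exponents: the exponent of $[s+j+1]/((-1)^{s+1}[j])$ in $\det\Gram_n\super{s}$ should be $\Dim_n\super{s+2j}$, and one checks via the recursion $\Dim_n\super{s} = \Dim_{n-1}\super{s-1} + \Dim_{n-1}\super{s+1}$ (item~\ref{result1}, equation~\eqref{Recursion}) that the contributions from the $r = s-1$ and $r = s+1$ branches, together with the shift in the index $j$ coming from the $\Gram\sub{r,1}\super{s}$ factor, assemble into exactly this exponent. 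One must handle the boundary cases $s = 0$ and $s = n$ separately since then only one of $r = s\pm 1$ lies in $\DefectSet_{n-1}$.

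The main obstacle I anticipate is the exponent combinatorics: the recursion mixes $\det\Gram_{n-1}\super{s-1}$ (whose product runs over $j' = 1, \ldots, (n-1-(s-1))/2 = (n-s)/2$ with factors indexed by $[(s-1)+j'+1]/[j'] = [s+j']/[j']$) and $\det\Gram_{n-1}\super{s+1}$ (whose product runs over $j'' = 1, \ldots, (n-s-2)/2$ with factors $[s+j''+2]/[j'']$), plus the standalone $\det\Gram\sub{s-1,1}\super{s}$ factor raised to the power $\Dim_{n-1}\super{s-1}$. Aligning the indices — reindexing $j'' \mapsto j''+1$ so the numerators $[s+j'']$ match across branches, and checking that the denominators and signs combine correctly — and then confirming that the total exponent of the generic factor $[s+j+1]/((-1)^{s+1}[j])$ equals $\Dim_{n-1}\super{(s-1)+2j} + \Dim_{n-1}\super{(s+1)+2(j-1)} = \Dim_{n-1}\super{s+2j-1} + \Dim_{n-1}\super{s+2j+1} = \Dim_n\super{s+2j}$ is the delicate computation. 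Alternatively, if this direct induction proves unwieldy, I would instead argue directly from~\eqref{DetFormula}: a walk over $\OneVec{n}$ with defect $s$ is a Dyck-type path, and grouping walks by their sequence of ``returns'' one can identify the multiplicity with which each Theta-factor $\ThetaNet(r-1, r, 1)$ or $\ThetaNet(r+1,r,1)$ appears; since $\ThetaNet(r,1,r+1) = 1$ only the ``descending'' steps contribute, and a counting argument shows a step from height $s+2j$ down contributes the factor $[s+j+1]/((-1)^{s+1}[j])$ with multiplicity $\Dim_n\super{s+2j}$, giving~\eqref{RidoutDet} directly. Either way, the heart of the proof is this multiplicity count, and I would present whichever version makes the telescoping most transparent.
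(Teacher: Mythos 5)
Your plan is essentially the paper's proof: the paper also specializes lemma~\ref{RecurseLem} to $\multii = \OneVec{n}$ (so $\lds = \OneVec{n-1}$, $t=1$), evaluates the two $1\times 1$ Gram determinants $\Gram\sub{s\pm 1,1}\super{s}$ from proposition~\ref{GramDetLem}, and then runs an induction on $n$ in which the exponents are assembled via $\Dim_{m-1}\super{s-1}+\Dim_{m-1}\super{s+1}=\Dim_m\super{s}$ and $\Dim_m\super{m}=1$, exactly the bookkeeping you describe (the paper also substitutes $q\mapsto -q$ at the outset purely to absorb the signs $(-1)^{s+1}$, which is cosmetic). One detail in your sketch is backwards, and you should fix it before carrying out the telescoping: it is the ascending branch $r=s-1$ that gives the trivial factor and the descending branch $r=s+1$ that gives the nontrivial one, namely (paper's~\eqref{AllOnesRecursSpec})
\begin{align}
\det \Gram\sub{s-1,1}\super{s} = 1 ,
\qquad
\det \Gram\sub{s+1,1}\super{s} = \frac{[s+2]_{-q}}{[s+1]_{-q}} = -\frac{[s+2]_{q}}{[s+1]_{q}} ,
\end{align}
since formula~\eqref{ThetaFormula0} gives $\ThetaNet(s-1,s,1) = (-1)^{s}[s+1]$ and $\ThetaNet(s+1,s,1) = (-1)^{s+1}[s+2]$; in particular $\ThetaNet(r,1,r+1)$ is \emph{not} equal to $1$ (it equals $(-1)^{r+1}[r+2]$) — what equals $1$ is the combined factor $\ThetaNet(r,t,r+t)/\big((-1)^{r+t}[r+t+1]\big)$ appearing in~\eqref{DetFormula}. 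With the two one-step determinants corrected, the rest of your induction (including the boundary cases $s=0$, $s=n$ and the index shift aligning the $[s+j+1]/[j]$ factors across the two branches) goes through exactly as in the paper.
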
  
\begin{proof} 
As in the proof of proposition~\ref{GramDetLem}, we may assume that $n < \ppmin(q)$ throughout.  
For convenience, we also substitute $q \mapsto -q$ throughout. 
With $[s]_{-q} = (-1)^{s-1}[s]_q$ by~\eqref{Qinteger}, 
we may write~\eqref{RidoutDet} as
\begin{align}\label{RidoutDet2} 
\det \Gram_n\super{s} = \prod_{j=1}^{\frac{n-s}{2}} \left(\frac{[s+j+1]_{-q}}{[j]_{-q}}\right)^{\Dim_n\super{s+2j}} . 
\end{align} 
We prove formula~\eqref{RidoutDet2} by induction on $n \in \bZnn$.
First, it trivially holds for $n = 0$. 
Next, assuming that~\eqref{RidoutDet2} holds with $n = m-1$ for some $m \geq 2$, we prove that it holds with $n = m$.  
Lemma~\ref{RecurseLem} with $\multii = \OneVec{m}$ gives
\begin{align}\label{AllOnesRecurs} 
\det \Gram_m\super{s} \overset{\eqref{DetRecurse}}{=} \det \Gram_{m-1}\super{s-1} \big(\det \Gram\sub{s-1,1}\super{s} \big)^{\Dim_{n-1}\super{s-1}} \det \Gram_{m-1}\super{s+1} \big(\det \Gram\sub{s+1,1}\super{s} \big)^{\Dim_{n-1}\super{s+1}}. \end{align}
Using proposition~\ref{GramDetLem}, we have 
\begin{align}\label{AllOnesRecursSpec}
\det \Gram\sub{s-1,1}\super{s} \overset{\eqref{DetFormula}}{=} 1 \qquad \qquad \text{and} \qquad \qquad
\det \Gram\sub{s+1,1}\super{s} \overset{\eqref{DetFormula}}{=} \frac{[s+2]_{-q}}{[s+1]_{-q}} . 
\end{align}
After inserting these formulas into~\eqref{AllOnesRecurs} and applying the induction hypothesis, we arrive with
\begin{align} 
\label{NumDenomProd0} 
\det \Gram_m\super{s} & 
\underset{\eqref{AllOnesRecursSpec}}{\overset{\eqref{AllOnesRecurs}}{=}} \left( \frac{[s+2]_{-q}}{[s+1]_{-q}} \right)^{\Dim_{m-1}\super{s+1}} 
\Bigg( \prod_{j=1}^{\frac{m-s}{2}} \left(\frac{[s+j]_{-q}}{[j]_{-q}}\right)^{\Dim_m\super{s+2j-1}} \Bigg) 
\Bigg( \prod_{k=1}^{\frac{m-s}{2}-1} \left(\frac{[s+k+2]_{-q}}{[k]_{-q}}\right)^{\Dim_m\super{s+2k+1}} \Bigg)\\
\label{NumDenomProd1} 
& \underset{\hphantom{\eqref{AllOnesRecursSpec}}}{\overset{\hphantom{\eqref{AllOnesRecurs}}}{=}}
\left( \frac{[s+2]_{-q}}{[s+1]_{-q}} \right)^{\Dim_{m-1}\super{s+1}} \frac{\Big( \prod_{j=0}^{\frac{m-s}{2}-1} [s+j+1]_{-q}^{\Dim_{m-1}\super{s+2j+1}} \Big) 
\Big( \prod_{k=2}^{\frac{m-s}{2}} [s+k+1]_{-q}^{\Dim_{m-1}\super{s+2k-1}} \Big)}{\left[ \frac{m-s}{2} \right]_{-q} 
\prod_{j=1}^{\frac{m-s}{2}-1} [j]_{-q}^{\Dim_{m-1}\super{s+2j-1} + \Dim_{m-1}\super{s+2j+1}}}.
\end{align}
Using the properties $\smash{\Dim_{m-1}\super{s-1} + \Dim_{m-1}\super{s+1} = \Dim_m\super{s}}$ for $s \in \DefectSet_m \cap \{1,2,\ldots,m-1\}$ and $\smash{\Dim_m\super{m}} = 1$ from~\eqref{Recursion}, 
this simplifies to
\begin{align} 
\det \Gram_m\super{s} & \underset{(\ref{NumDenomProd0}-\ref{NumDenomProd1})}{\overset{\eqref{Recursion}}{=}} 
\left( \frac{[s+2]_{-q}}{[s+1]_{-q}} \right)^{\Dim_{m-1}\super{s+1}} \left(\frac{ [s+1]_{-q}^{\Dim_{m-1}\super{s+1}}\left[s+\frac{m-s}{2}+1\right]_{-q}^{\Dim_m\super{m}} 
\prod_{j=1}^{\frac{m-s}{2}-1}[s+j+1]_{-q}^{\Dim_m\super{s+2j}}}{ [s+2]_{-q}^{\Dim_{m-1}\super{s+1}} \left[ \frac{m-s}{2} \right]_{-q}^{\Dim_m\super{m}} 
\prod_{j=1}^{\frac{m-s}{2}-1} [j]_{-q}^{\Dim_m\super{s+2j}} }\right), 
\end{align}
which further simplifies to~\eqref{RidoutDet2} with $n = m$.  This completes the induction step and finishes the proof.
\end{proof}

Next, in lemma~\ref{ExtGramLem} we prove another recursion formula, for use in section~\ref{rofSect2}. 
To state it, we define the set $\smash{\LP\super{s}_{\multii;\ExtraDefect}}$ 
to be the collection of valenced link patterns obtained by increasing the size $\sIndex_{\np_\multii}$ 
of the rightmost valenced node in every $(\multii,s)$-valenced link pattern $\alpha$ to size 
$\sIndex_{\np_\multii} + \ExtraDefect$, and attaching $\ExtraDefect$ defects to the extended box.  For example,
\begin{align}\label{LSd} 
\vcenter{\hbox{\includegraphics[scale=0.275]{e-RecursiveDecomposition2.pdf}}} 
\qquad \in \LP\super{s}\sub{r,t}
\qquad \longmapsto \qquad 
\vcenter{\hbox{\includegraphics[scale=0.275]{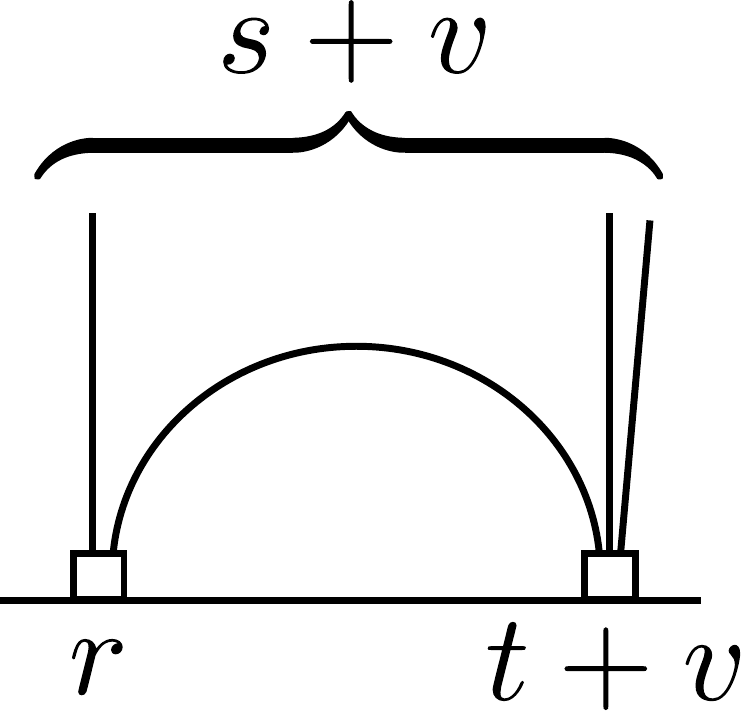}}}  
\qquad \in \LP\super{s}\sub{r,t}{}_{;\ExtraDefect} \; \subset \; \LP\super{s+\ExtraDefect}\sub{r,t+\ExtraDefect} . 
\end{align}
We also let $\smash{\LS\super{s}_{\multii;\ExtraDefect}}$ denote the complex vector space with 
basis $\smash{\LP\super{s}_{\multii;\ExtraDefect}}$.

\begin{lem} \label{AlternativeBasisWith_d_ExtraDefects} 
Suppose $\max \multii < \ppmin(q)$. We have 
\begin{align}
\alpha \in \smash{\LP\super{s}_{\multii;\ExtraDefect}} 
\qquad \Longrightarrow \qquad \hcancel{\alpha} \in \smash{\LS\super{s}_{\multii;\ExtraDefect}},
\end{align}
and the set $\smash{\big\{ \hcancel{\alpha} \,\big|\, \alpha \in \smash{\LP\super{s}_{\multii;\ExtraDefect}} \big\}}$ 
is a basis of $\smash{\LS\super{s}_{\multii;\ExtraDefect}}$.
\end{lem}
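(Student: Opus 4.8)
The statement is a variant of item~\ref{IndOrthBasisLemIt3} of proposition~\ref{IndOrthBasisLem}, now for the modified link-pattern set $\smash{\LP\super{s}_{\multii;\ExtraDefect}}$, and my plan is to reduce it to that proposition by a change of viewpoint. First I would observe that, by construction~\eqref{LSd}, every $\alpha \in \smash{\LP\super{s}_{\multii;\ExtraDefect}}$ arises from an ordinary $(\multii,s)$-valenced link pattern by enlarging the last node from $\sIndex_{\np_\multii}$ to $\sIndex_{\np_\multii} + \ExtraDefect$ and attaching $\ExtraDefect$ extra defects there; equivalently, $\smash{\LP\super{s}_{\multii;\ExtraDefect}}$ is precisely the set of those $(\nprime,s+\ExtraDefect)$-valenced link patterns over the enlarged multiindex $\nprime := (\sIndex_1, \ldots, \sIndex_{\np_\multii-1}, \sIndex_{\np_\multii} + \ExtraDefect)$ whose walk representation~\eqref{Generic} has penultimate height $r_{\np_\multii - 1}$ equal to a height reachable from a walk over $\multii$ — but more simply, in which the last $\ExtraDefect$ of the $s+\ExtraDefect$ defects "come straight out" of the enlarged box without interacting. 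Thus $\smash{\LP\super{s}_{\multii;\ExtraDefect}}$ sits inside $\smash{\LP\super{s+\ExtraDefect}_{\nprime}}$ as the subset of link patterns whose final step in the walk representation has the specific shape shown on the right of~\eqref{LSd}.

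Second, I would verify the first assertion, that $\hcancel{\alpha} \in \smash{\LS\super{s}_{\multii;\ExtraDefect}}$ whenever $\alpha \in \smash{\LP\super{s}_{\multii;\ExtraDefect}}$. This should follow from the explicit description of definition~\ref{TrivalentLinkStateDef} (or the simpler~\eqref{cbs} after analytic continuation as in remark~\ref{TrivLinkStateRem}): forming $\hcancel{\alpha}$ only inserts projector boxes of sizes $r_2, r_3, \ldots, r_{\np_\multii - 1}$ (the interior heights of the walk over $\multii$) into the walk representation, and does not touch the last step, which retains the $\ExtraDefect$ straight defects coming out of the enlarged rightmost box; hence $\hcancel{\alpha}$ is again a linear combination of link patterns of the form on the right of~\eqref{LSd}, i.e.\ lies in $\smash{\LS\super{s}_{\multii;\ExtraDefect}}$. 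Here one uses that decomposing the inserted projector boxes, by recursion~\eqref{wjrecursion} and~\eqref{ProjDecomp}, produces only link patterns whose final-step structure is unchanged (all turn-backs created are among the interior cables, never reaching across the last step to merge the $\ExtraDefect$ defects with anything).

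Third, for the basis claim, I would invoke lemma~\ref{InsProjAlphaLem}: the self-map $\alpha \mapsto \hcancel{\alpha}$ of $\LS_\multii$ (or here of the appropriate link-state space) has an upper-unitriangular matrix representation with respect to a suitable ordering of link patterns, coming from the partial order~\eqref{PartialOrd} on walks. The same argument applies verbatim to the subspace $\smash{\LS\super{s}_{\multii;\ExtraDefect}}$, once one notes that this subspace is spanned by $\smash{\LP\super{s}_{\multii;\ExtraDefect}}$ and is preserved by the box insertions (by the second step above). Restricting the upper-unitriangular operator to an invariant coordinate subspace spanned by a subset of the basis yields again an upper-unitriangular — in particular invertible — map, so $\smash{\big\{ \hcancel{\alpha} \,\big|\, \alpha \in \smash{\LP\super{s}_{\multii;\ExtraDefect}} \big\}}$ is the image of the basis $\smash{\LP\super{s}_{\multii;\ExtraDefect}}$ under an invertible linear map, hence is itself a basis of $\smash{\LS\super{s}_{\multii;\ExtraDefect}}$.

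\textbf{Main obstacle.} The delicate point is not the triangularity — that is purely formal once the setup is right — but rather checking carefully that the span $\smash{\LS\super{s}_{\multii;\ExtraDefect}}$ is genuinely invariant under each of the elementary box-insertion maps composing $\alpha \mapsto \hcancel{\alpha}$, i.e.\ that no term produced when decomposing an interior Jones–Wenzl projector can create a turn-back link that "reaches into" the block of $\ExtraDefect$ defects emanating from the enlarged last box. This requires the observation that the interior projector boxes sit on cables of sizes $r_2,\ldots,r_{\np_\multii-1}$ that are entirely to the left of the last step of the walk, together with the defect-preserving property~\eqref{PreserveDef}; so any turn-back generated stays within those interior cables and cannot lower the number of defects passing through the final box. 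Once this is pinned down, identifying $\smash{\LP\super{s}_{\multii;\ExtraDefect}}$ with a sub-collection of $\smash{\LP\super{s+\ExtraDefect}_{\nprime}}$ and applying lemma~\ref{InsProjAlphaLem} finishes the proof.
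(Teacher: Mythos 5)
Your proposal is correct and follows essentially the same route as the paper: membership $\hcancel{\alpha} \in \smash{\LS\super{s}_{\multii;\ExtraDefect}}$ comes from the fact that forming $\hcancel{\alpha}$ only inserts the interior projector boxes (the paper phrases this via lemma~\ref{InsProjBoxLem}, which lets one discard the box across the $s+\ExtraDefect$ defects), and the basis claim then follows from the unitriangularity/injectivity of $\alpha \mapsto \hcancel{\alpha}$ in lemma~\ref{InsProjAlphaLem} together with a dimension count. Your restriction-to-an-invariant-coordinate-subspace formulation is just a minor repackaging of the paper's ``linearly independent and of the right cardinality'' argument, and your flagged delicate point (that interior box decompositions cannot disturb the $\ExtraDefect$ extra defects) is exactly the content the paper delegates to lemma~\ref{InsProjBoxLem} and the remarks following~\eqref{cbs}.
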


\begin{proof}
Lemma~\ref{InsProjBoxLem} of appendix~\ref{TLRecouplingSect} implies that, 
for each valenced link pattern $\alpha \in \smash{\LP\super{s}_{\multii;\ExtraDefect}}$, discarding 
the projector box set across the $s + \ExtraDefect$ defects of the associated trivalent link state $\hcancel{\alpha}$  
does not alter $\hcancel{\alpha}$.  
Therefore, the collection $\smash{\big\{ \hcancel{\alpha} \,\big|\, \alpha \in \smash{\LP\super{s}_{\multii;\ExtraDefect}} \big\}}$ 
is a subset of $\smash{\LS\super{s}_{\multii;\ExtraDefect}}$.  
Furthermore, this collection is linearly independent by lemma~\ref{InsProjAlphaLem}, and its cardinality obviously equals 
the dimension of $\smash{\LS\super{s}_{\multii;\ExtraDefect}}$.  
Thus, the set $\smash{\big\{ \hcancel{\alpha} \,\big|\, \alpha \in \smash{\LP\super{s}_{\multii;\ExtraDefect}} \big\}}$ 
is actually a basis of $\smash{\LS\super{s}_{\multii;\ExtraDefect}}$.
\end{proof}

Now, we find the determinant of the Gram matrix
of the link state bilinear form with respect to the basis $\smash{\LP\super{s}_{\multii;\ExtraDefect}}$,
\begin{align}\label{GramMatrix3}
[ \Gram\super{s}_{\multii;\ExtraDefect} ]_{\alpha, \beta} := \BiForm{\alpha}{\beta} , \quad 
\text{for all $\alpha, \beta \in \LP\super{s}_{\multii;\ExtraDefect}$.}
\end{align}

\begin{lem} \label{ExtGramLem} 
Suppose $\max \multii < \ppmin(q)$. We have 
\begin{align} \label{ExtGram} 
\det \Gram\super{s}_{\multii;\ExtraDefect} = \det \Gram_\multii\super{s} 
\prod_{\varrho} \frac{ \big[ \frac{1}{2}(r_{\np_{\multii}-1} + \sIndex_{\np_\multii} + s) + \ExtraDefect + 1 \big]! 
\big[ \frac{1}{2}(\sIndex_{\np_\multii} + s - r_{\np_\multii-1}) + \ExtraDefect \big]! [\sIndex_{\np_{\multii}}]! [s + 1]! }{ \big[\frac{1}{2}(r_{\np_{\multii}-1} + \sIndex_{\np_\multii} + s) + 1 \big]! \big[ \frac{1}{2}(\sIndex_{\np_\multii} + s - r_{\np_{\multii}-1}) \big]! [\sIndex_{\np_\multii} + \ExtraDefect]! [s + \ExtraDefect + 1]!} , 
\end{align}
where the product is over all walks $\WalkMultii = (r_1, r_2, \ldots, r_{\np_\multii})$ over $\multii$ with defect $r_{\np_\multii} = s$.
\end{lem}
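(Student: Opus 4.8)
The strategy is to compute $\det \Gram\super{s}_{\multii;\ExtraDefect}$ by diagonalizing the Gram matrix in the trivalent link state basis, exactly as in the proof of Proposition~\ref{GramDetLem}, and then to read off the ratio with $\det \Gram_\multii\super{s}$ by comparing the two products term by term. First I would invoke analyticity: both sides of~\eqref{ExtGram} are analytic functions of $q$ on the set $\{q \in \bC^\times \mid \max \multii < \ppmin(q)\}$ (for the left side this uses Lemma~\ref{AlternativeBasisWith_d_ExtraDefects}, which guarantees the basis $\smash{\LP\super{s}_{\multii;\ExtraDefect}}$ and its trivalent counterpart are well-defined there), so it suffices to prove the identity under the stronger hypothesis $\Summed_\multii + \ExtraDefect < \ppmin(q)$. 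Under this hypothesis, every projector box appearing in a trivalent link state $\hcancel{\alpha}$ with $\alpha \in \smash{\LP\super{s}_{\multii;\ExtraDefect}} \subset \smash{\LP_\multii\super{s+\ExtraDefect}}$ (viewed, via~\eqref{LSd}, as a link pattern over the multiindex $\multii' := (\sIndex_1, \ldots, \sIndex_{\np_\multii-1}, \sIndex_{\np_\multii}+\ExtraDefect)$ with defect $s+\ExtraDefect$) has size below $\ppmin(q)$, so Lemma~\ref{InsProjAlphaLem} and Proposition~\ref{IndOrthBasisLem} apply and the map $\alpha \mapsto \hcancel{\alpha}$ is upper-unitriangular; hence $\det \Gram\super{s}_{\multii;\ExtraDefect} = \prod_{\alpha} \BiForm{\hcancel{\alpha}}{\hcancel{\alpha}}$, a product over $\alpha \in \smash{\LP\super{s}_{\multii;\ExtraDefect}}$.

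Next I would identify the walks. By the bijection of item~\ref{wmlIt2} of Lemma~\ref{WalkMultiiLem}, the link patterns $\alpha \in \smash{\LP\super{s}_{\multii;\ExtraDefect}}$ correspond bijectively to walks over $\multii' = \multii \oplus (\sIndex_{\np_\multii} + \ExtraDefect) \ominus (\sIndex_{\np_\multii})$... more precisely: a valenced link pattern in $\smash{\LP\super{s}_{\multii;\ExtraDefect}}$ is obtained from an $(\multii,s)$-link pattern $\gamma$ with walk $\varrho_\gamma = (r_1, \ldots, r_{\np_\multii-1}, s)$ by enlarging the last box and attaching $\ExtraDefect$ extra defects; its walk is $(r_1, \ldots, r_{\np_\multii-1}, s+\ExtraDefect)$, with the last step going from height $r_{\np_\multii-1}$ over a node of valence $\sIndex_{\np_\multii}+\ExtraDefect$ to the new defect $s+\ExtraDefect$ (this uses that $s \in \DefectSet\sub{r_{\np_\multii-1}, \sIndex_{\np_\multii}}$ iff $s+\ExtraDefect \in \DefectSet\sub{r_{\np_\multii-1}, \sIndex_{\np_\multii}+\ExtraDefect}$, a consequence of Lemma~\ref{SpecialDefLem}). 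Thus the walks indexing $\det \Gram\super{s}_{\multii;\ExtraDefect}$ are in bijection with the walks $\varrho = (r_1, \ldots, r_{\np_\multii})$ over $\multii$ with defect $r_{\np_\multii} = s$ indexing $\det \Gram_\multii\super{s}$, via modifying only the last height. Then, using the product formula~\eqref{WalkBiForm}, $\BiForm{\hcancel{\alpha}}{\hcancel{\alpha}}$ and the corresponding $\BiForm{\hcancel{\gamma}}{\hcancel{\gamma}}$ agree in all factors $j = 1, \ldots, \np_\multii - 2$, and differ only in the $j = \np_\multii - 1$ factor:
\begin{align}
\frac{\ThetaNet(r_{\np_\multii-1}, s+\ExtraDefect, \sIndex_{\np_\multii}+\ExtraDefect)}{(-1)^{s+\ExtraDefect}[s+\ExtraDefect+1]}
\qquad \text{versus} \qquad
\frac{\ThetaNet(r_{\np_\multii-1}, s, \sIndex_{\np_\multii})}{(-1)^{s}[s+1]}.
\end{align}
Taking the ratio and substituting the explicit Theta formula~\eqref{ThetaFormula0} (with the entries of the triple $(r_{\np_\multii-1}, s+\ExtraDefect, \sIndex_{\np_\multii}+\ExtraDefect)$ plugged in, noting $\tfrac12(r_{\np_\multii-1}+(s+\ExtraDefect)+(\sIndex_{\np_\multii}+\ExtraDefect)) = \tfrac12(r_{\np_\multii-1}+s+\sIndex_{\np_\multii})+\ExtraDefect$, and similarly for the other half-sums), the signs cancel because $\ExtraDefect$ appears twice, and after cancellation the ratio is exactly the bracket expression in~\eqref{ExtGram}; multiplying over all walks $\varrho$ and using $\det \Gram_\multii\super{s} = \prod_\varrho \BiForm{\hcancel{\gamma_\varrho}}{\hcancel{\gamma_\varrho}}$ (again from~\eqref{WalkBiForm}, valid since $\Summed_\multii < \ppmin(q)$) yields~\eqref{ExtGram}.

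The main obstacle, I expect, is purely bookkeeping rather than conceptual: one must verify carefully that the walk bijection described above really does preserve all factors except the last, i.e.\ that enlarging the final node and appending $\ExtraDefect$ defects changes the walk representation of a link pattern only in its terminal segment and does not disturb the projector boxes of sizes $r_2, \ldots, r_{\np_\multii-1}$ inserted by the $\alpha \mapsto \hcancel\alpha$ construction. This is geometrically clear from the walk representation~\eqref{Generic} and Definition~\ref{TrivalentLinkStateDef}, but it requires stating precisely which link pattern over $\multii'$ each element of $\smash{\LP\super{s}_{\multii;\ExtraDefect}}$ maps to and checking that the open-vertex-to-closed-vertex replacements are literally the same for all but the last vertex. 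Once that correspondence is pinned down, the remainder is the elementary algebraic simplification of the Theta-ratio, which I would carry out directly from~\eqref{ThetaFormula0}. No new ingredient beyond Lemmas~\ref{SpecialDefLem}, \ref{WalkMultiiLem}, \ref{InsProjAlphaLem}, \ref{AlternativeBasisWith_d_ExtraDefects}, Proposition~\ref{IndOrthBasisLem}, and Proposition~\ref{GramDetLem} is needed.
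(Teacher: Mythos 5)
Your proposal is correct and follows essentially the same route as the paper's proof: reduce to generic $q$ by analyticity, diagonalize $\smash{\Gram\super{s}_{\multii;\ExtraDefect}}$ in the trivalent basis via lemma~\ref{AlternativeBasisWith_d_ExtraDefects}, lemma~\ref{InsProjAlphaLem}, and item~\ref{IndOrthBasisLemIt1} of proposition~\ref{IndOrthBasisLem}, use the bijection that changes only the last walk height $s \mapsto s+\ExtraDefect$, and simplify the ratio of the two terminal Theta factors with~\eqref{ThetaFormula0}. The only cosmetic difference is that you assume $\Summed_\multii + \ExtraDefect < \ppmin(q)$ rather than $\Summed_\multii < \ppmin(q)$ in the generic reduction, which is harmless (and if anything slightly more careful about the enlarged node of valence $\sIndex_{\np_\multii}+\ExtraDefect$).
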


\begin{proof}
As in the proof of proposition~\ref{GramDetLem}, we may assume that $\Summed_\multii < \ppmin(q)$ throughout. 
Then, lemma~\ref{AlternativeBasisWith_d_ExtraDefects} shows that
$\smash{\big\{ \hcancel{\alpha} \,\big|\, \alpha \in \smash{\LP\super{s}_{\multii;\ExtraDefect}} \big\}}$ is a basis 
for $\smash{\LS\super{s}_{\multii;\ExtraDefect}}$, so 
lemma~\ref{InsProjAlphaLem} with item~\ref{IndOrthBasisLemIt1} of proposition~\ref{IndOrthBasisLem} imply
\begin{align} 
\label{detdet3-0} 
\det \smash{\Gram\super{s}_{\multii;\ExtraDefect}} & 
\underset{\text{lem.}~\ref{InsProjAlphaLem}}{\overset{\eqref{GramMatrix2}}{=}} 
\det [\BiForm{\hcancel{\alpha}}{\hcancel{\beta}}]_{\alpha,\beta \in \LP\super{s}_{\multii;\ExtraDefect}} \\
\label{detdet3-1} 
& \overset{\eqref{WalkBiForm}}{=} \prod_{\varrho_\alpha}  
\bigg( \prod_{j=1}^{\np_{\multii} - 2}  
\frac{\ThetaNet(r_j, r_{j+1}, \sIndex_{j+1})}{(-1)^{r_{j+1}} [ r_{j+1} + 1 ]} \bigg) \frac{\ThetaNet(r_{\np_{\multii}-1}, s + \ExtraDefect, \sIndex_{\np_{\multii}} + \ExtraDefect)}{(-1)^{s + \ExtraDefect} [ s + \ExtraDefect + 1 ]},
\end{align}
where the product is over all walks $\varrho_\alpha = (r_1, r_2, \ldots, r_{\np_{\multii}-1}, s+\ExtraDefect)$ corresponding to some valenced link pattern 
$\alpha \in \smash{\LP\super{s}_{\multii;\ExtraDefect}}$.  Now, 
in light of item~\ref{wmlIt2} in lemma~\ref{WalkMultiiLem} and the definition of the set $\smash{\LP\super{s}_{\multii;\ExtraDefect}}$, 
the map 
\begin{align} 
\varrho_\alpha = (r_1, r_2, \ldots, r_{\np_{\multii}-1}, s+\ExtraDefect) \qquad \longmapsto \qquad 
\varrho = (r_1, r_2, \ldots, r_{\np_{\multii}-1}, s) 
\end{align}
is a bijection from the set $\smash{\LP\super{s}_{\multii;\ExtraDefect}}$ of valenced link patterns 
to the set of all walks 
$\WalkMultii = (r_1, r_2, \ldots, r_{\np_{\multii}-1}, r_{\np_\multii})$
over $\multii$ with defect $r_{\np_\multii} = s$.  Hence, using proposition~\ref{GramDetLem}, we can write
\begin{align} 
\label{detdet4-0} 
\det \smash{\Gram\super{s}_{\multii;\ExtraDefect}} & \overset{(\ref{detdet3-0}-\ref{detdet3-1})}{=} 
\prod_\varrho \bigg( \prod_{j=1}^{\np_{\multii} - 2}  
\frac{\ThetaNet(r_j, r_{j+1}, \sIndex_{j+1})}{(-1)^{r_{j+1}} [ r_{j+1} + 1 ]} \bigg) 
\frac{\ThetaNet(r_{\np_{\multii}-1}, s + \ExtraDefect, \sIndex_{\np_{\multii}} + \ExtraDefect)}{(-1)^{s + \ExtraDefect} [ s + \ExtraDefect + 1 ]} \\
\label{detdet4-1} 
& \overset{\eqref{DetFormula}}{=} \det \Gram_\multii\super{s} 
\prod_\varrho \frac{(-1)^s [ s + 1 ] \ThetaNet(r_{\np_{\multii}-1}, 
s + \ExtraDefect, \sIndex_{\np_{\multii}} + \ExtraDefect)}{(-1)^{s + \ExtraDefect} [ s + \ExtraDefect + 1 ] \ThetaNet(r_{\np_{\multii}-1}, s, \sIndex_{\np_{\multii}})} .
\end{align}
After inserting~\eqref{ThetaFormula0} into~(\ref{detdet4-0}--\ref{detdet4-1}) and simplifying, we arrive with sought identity~\eqref{ExtGram}.
\end{proof}

Lastly, we derive a recursion formula for $\det \smash{\Gram_n\super{s}}$ 
different from the one obtained from lemma~\ref{RecurseLem}, for use in~\cite{fp2}.  
To obtain this recursion, we group the nodes of each link pattern in $\smash{\LP_n\super{s}}$ into $\np_{\multii}$ adjacent bins of sizes 
$\sIndex_1$, $\sIndex_2,\ldots,\sIndex_{\np_\multii}$, and we partition the link pattern into a unique collection of 
$\np_{\multii}+1$ sub-link patterns thus:
\begin{align}\label{PartitionMe} 
\alpha \quad = \quad \vcenter{\hbox{\includegraphics[scale=0.275]{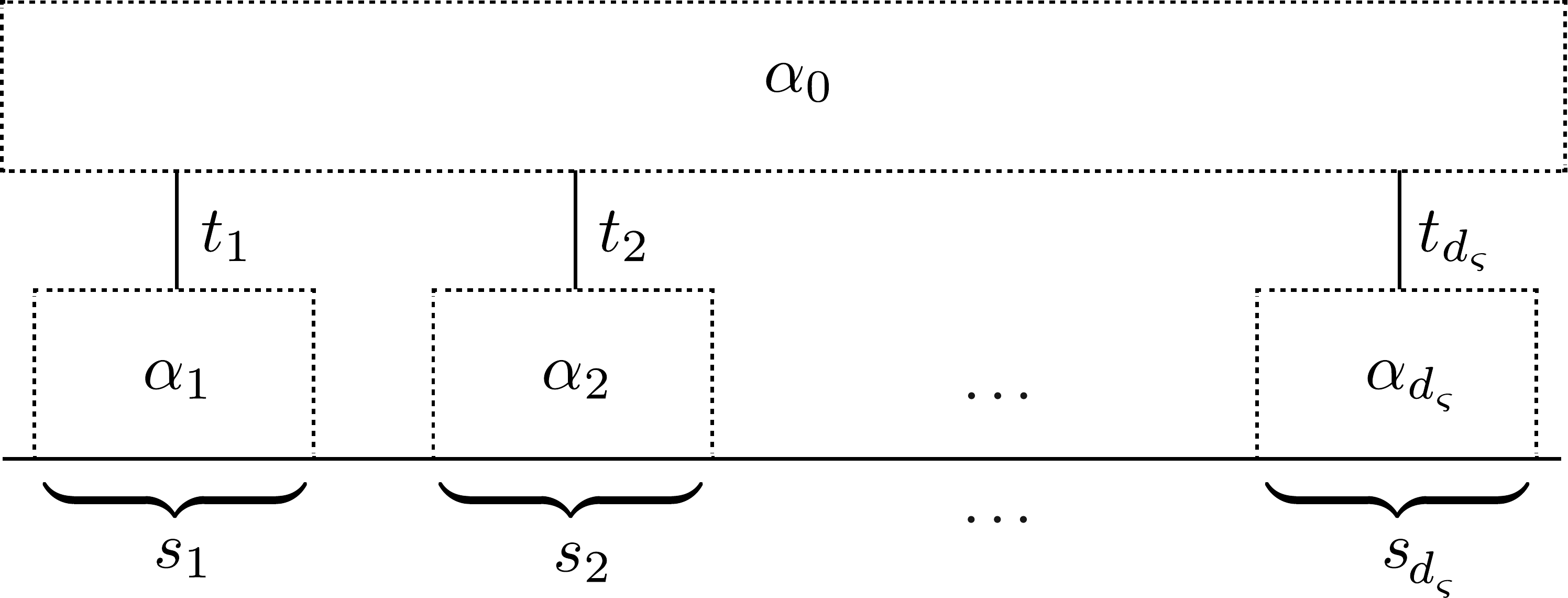} ,}}
\end{align}
where, for all  $i \in \{1,2,\ldots,\np_{\multii}\}$, we have 
\begin{align}\label{DefOfVarTheta} 
\alpha_i \in \LP_{\sIndex_i}\super{t_i} , \quad
\text{for some } t_i\in\DefectSet_{\sIndex_i} , 
\qquad \qquad \text{and} \qquad \qquad 
\alpha_0 \in \SpecialPattern_\vartheta\super{s} \subset \LP_{\Summed_\vartheta}\super{s} , \quad 
\text{ with $\vartheta=(t_1,t_2,\ldots,t_{\np_\multii})$} ,
\end{align}
and where $\SpecialPattern_\vartheta\super{s}$ is the set~\eqref{SpecialPatterns} 
of link patterns that do not have a turn-back link joining two nodes in a common group in~\eqref{PartitionMe} 
(discussed in appendix~\ref{AppWJ}).

Let $\multii = (\sIndex_1,\sIndex_2,\ldots,\sIndex_{\np_\multii})$ be the multiindex consisting of the sizes of the bins. 
To avoid too many subscripts, we denote $\np = \np_\multii$.  
The above type of partitioning of link patterns into link sub-patterns implies the existence of an isomorphism of vector spaces that sends 
$\smash{\LS_n\super{s}}$ onto the vector space 
\begin{align}\label{target} 
\bigoplus_{t_1 \, \in \, \DefectSet_{\sIndex_1}} 
\bigoplus_{t_2 \, \in \, \DefectSet_{\sIndex_2}} \dotsm \bigoplus_{t_{\np}  \, \in  \, \DefectSet_{\sIndex_\np}} 
\big(\Span \SpecialPattern_\vartheta\super{s}\big) \otimes \LS_{\sIndex_1}\super{t_1} \otimes \LS_{\sIndex_2}\super{t_2} \otimes \dotsm \otimes \LS_{\sIndex_{\np}}\super{t_{\np}} , \quad \text{with $\vartheta = (t_1,t_2,\ldots,t_{\np})$,}
\end{align}
by mapping $\alpha$ to the tensor product 
$\alpha_0 \otimes \alpha_1 \otimes \dotsm \otimes \alpha_{\np}$, where $\alpha_i$
are defined relative to $\alpha$ in~\eqref{PartitionMe}.  Next, assuming that $\max \multii < \ppmin(q)$, 
we consider the linear self-map of $\smash{\LS_n\super{s}}$ sending the link pattern $\alpha \in \smash{\LP_n\super{s}}$ 
to the link state
\begin{align}
\munderbar{\alpha} \quad = \quad  \vcenter{\hbox{\includegraphics[scale=0.275]{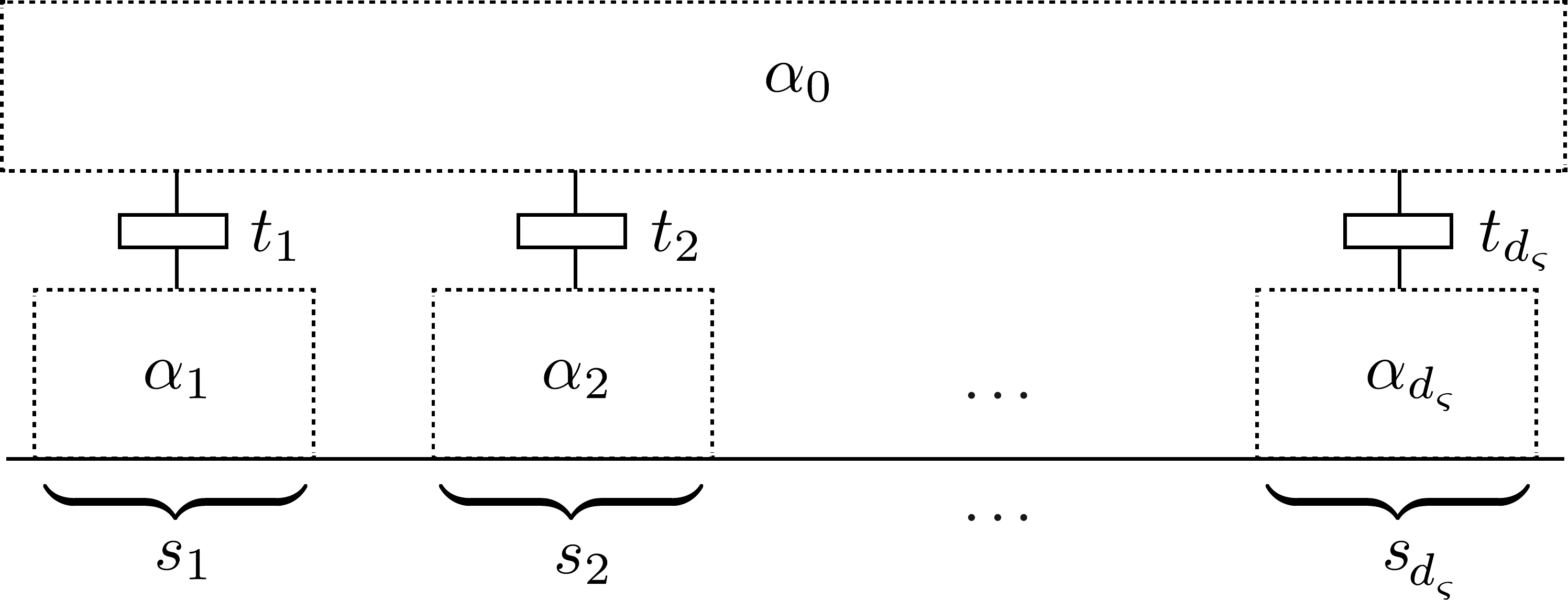} ,}}
\end{align}
with a projector box between 
$\alpha_0$ and each $\alpha_i$. Lemma~\ref{ChangeOfBasisLem} implies that this map is an automorphism of vector spaces with 
an upper-unitriangular matrix representation. 
Hence, we have
\begin{align}\label{detdet2}
\det \Gram_n\super{s} = \det \munderbar{\Gram}_n\super{s} , \qquad \text{where} \quad 
\munderbar{\Gram}_n\super{s}:= \big[ \BiForm{\munderbar{\alpha}}{\munderbar{\beta}} \big]_{\alpha,\beta \in \LP_n\super{s}}. 
\end{align}

With these observations, we are ready to state the new recursion for $\det \smash{\Gram_n\super{s}}$.  

\begin{lem} \label{ForBSALem}
Suppose $\max \multii < \ppmin(q)$, and denote $n = \Summed_\multii$ and $\np = \np_\multii$.  We have the recursion
\begin{align} \label{DetFactored}
\det\Gram_n\super{s} = 
\prod_{t_1 \, \in \, \DefectSet_{\sIndex_1}} 
\prod_{t_2 \, \in \, \DefectSet_{\sIndex_2}} \dotsm 
\prod_{t_\np \, \in \,\DefectSet_{\sIndex_\np}} 
\big(\det\Gram_\vartheta\super{s}\big)^{\Dim_{\sIndex_1}\super{t_1}\Dim_{\sIndex_2}\super{t_2}\dotsm\Dim_{\sIndex_\np}\super{t_\np}} 
\prod_{i=1}^{\np}\big(\det\Gram_{\sIndex_i}\super{t_i}\big)^{p_i}, 
\end{align}
where
\begin{align}
\vartheta := (t_1,t_2,\ldots,t_\np) , 
\qquad \textnormal{and} \qquad  
p_i := \Dim_\vartheta\super{s}\prod_{j \, \neq \, i}^{\np}\Dim_{\sIndex_i}\super{t_i} ,
\end{align}
and $\Dim_\vartheta\super{s}$, $\Dim_{\sIndex_i}\super{t_i}$ are the numbers determined 
respectively by recursions~\eqref{Recursion2} and~\eqref{Recursion}.
\end{lem}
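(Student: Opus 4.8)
The plan is to build on the reductions established just before the statement. We have the vector-space isomorphism carrying $\LS_n\super{s}$ onto the space \eqref{target}, under which the link pattern $\alpha\in\LP_n\super{s}$ — partitioned as in \eqref{PartitionMe} into $\alpha_0\in\SpecialPattern_\vartheta\super{s}$ and $\alpha_i\in\LP_{\sIndex_i}\super{t_i}$ for $i\in\{1,\ldots,\np\}$, with $\vartheta=(t_1,\ldots,t_\np)$ — corresponds to $\alpha_0\otimes\alpha_1\otimes\cdots\otimes\alpha_\np$; and, via lemma~\ref{ChangeOfBasisLem}, the self-map $\alpha\mapsto\munderbar{\alpha}$ is an automorphism with upper-unitriangular matrix, so that $\det\Gram_n\super{s}=\det\munderbar{\Gram}_n\super{s}$ by \eqref{detdet2}. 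Thus it suffices to compute $\det\munderbar{\Gram}_n\super{s}$, and the first task is to factorize its entries $\BiForm{\munderbar{\alpha}}{\munderbar{\beta}}$ along the partition.

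First I would prove the factorization of the bilinear form. Fix $\alpha,\beta\in\LP_n\super{s}$ with partitions $\alpha\leftrightarrow(\alpha_0;\alpha_1,\ldots,\alpha_\np)$, $\beta\leftrightarrow(\beta_0;\beta_1,\ldots,\beta_\np)$ and associated multiindices $\vartheta_\alpha=(t_1^\alpha,\ldots,t_\np^\alpha)$, $\vartheta_\beta=(t_1^\beta,\ldots,t_\np^\beta)$. Draw the network computing $\BiForm{\munderbar{\alpha}}{\munderbar{\beta}}$ (reflect $\munderbar{\alpha}$, stack it on $\munderbar{\beta}$, glue the $n$ nodes to the $n$ nodes and the $s$ defect endpoints to the $s$ defect endpoints). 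Restricted to the $i$:th bin of nodes, this network is the $(t_i^\alpha,t_i^\beta)$-tangle $N_i:=\tilde{\alpha}_i\circ\beta_i$ obtained by gluing the $\sIndex_i$ shared nodes, sandwiched between the projector boxes $\WJProj_{t_i^\alpha}$ and $\WJProj_{t_i^\beta}$ (well-defined since $t_i^\alpha,t_i^\beta\leq\sIndex_i\leq\max\multii<\ppmin(q)$). Because $\alpha_i$ and $\beta_i$ are link patterns, $N_i$ is a single link diagram times a power of $\nu$: either it contains a turn-back among its boundary endpoints, hence is annihilated by an adjacent projector box by \eqref{ProjectorID2}, or it is a power of $\nu$ times an identity tangle, which forces $t_i^\alpha=t_i^\beta=:t_i$. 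Comparing with the network-evaluation definition of the bilinear form on $\LS_{\sIndex_i}\super{t_i}$, in either case the group-$i$ region contributes exactly the scalar $\BiForm{\alpha_i}{\beta_i}$ together with a single projector box $\WJProj_{t_i}$ passing straight through (using idempotency \eqref{ProjectorID0} to merge the two boxes). In particular $\BiForm{\munderbar{\alpha}}{\munderbar{\beta}}=0$ unless $\vartheta_\alpha=\vartheta_\beta$; and when $\vartheta_\alpha=\vartheta_\beta=:\vartheta$, collapsing all group regions leaves precisely the network pairing $\alpha_0$ with $\beta_0$ with a box $\WJProj_{t_i}$ inserted on the $i$:th group line, which by definition \eqref{LSBiFormExt} of the valenced form equals $\BiForm{\hat\alpha_0}{\hat\beta_0}_{\LS_\vartheta\super{s}}$, where $\hat\alpha_0\in\LP_\vartheta\super{s}$ is the $\vartheta$-valenced link pattern with the same connectivity as the special pattern $\alpha_0$. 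Altogether, writing $t_i:=t_i^\alpha=t_i^\beta$ when $\vartheta_\alpha=\vartheta_\beta$,
\begin{align*}
\BiForm{\munderbar{\alpha}}{\munderbar{\beta}}
= \delta_{\vartheta_\alpha,\vartheta_\beta}\,\BiForm{\hat\alpha_0}{\hat\beta_0}_{\LS_\vartheta\super{s}}\,\prod_{i=1}^{\np}\BiForm{\alpha_i}{\beta_i}_{\LS_{\sIndex_i}\super{t_i}},
\end{align*}
each subscript indicating the module in which the form is evaluated.

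Next I would pass to determinants. Order the basis $\LP_n\super{s}$ so its elements are grouped by the value of $\vartheta$, and within each group ordered lexicographically in $(\alpha_0,\alpha_1,\ldots,\alpha_\np)$. By the factorization, $\munderbar{\Gram}_n\super{s}$ is then block-diagonal, the block indexed by $\vartheta$ being the Kronecker product $\Gram_\vartheta\super{s}\otimes\Gram_{\sIndex_1}\super{t_1}\otimes\cdots\otimes\Gram_{\sIndex_\np}\super{t_\np}$; here I identify the Gram matrix of the induced form on $\Span\SpecialPattern_\vartheta\super{s}$ with $\Gram_\vartheta\super{s}$ through the connectivity bijection $\SpecialPattern_\vartheta\super{s}\leftrightarrow\LP_\vartheta\super{s}$ obtained above, so in particular $\#\SpecialPattern_\vartheta\super{s}=\Dim_\vartheta\super{s}$. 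Using $\dim\LS_{\sIndex_i}\super{t_i}=\Dim_{\sIndex_i}\super{t_i}$ and $\dim\LS_\vartheta\super{s}=\Dim_\vartheta\super{s}$ (lemma~\ref{LSDimLem2}) together with the elementary identity $\det(A\otimes B)=(\det A)^{\dim B}(\det B)^{\dim A}$, iterated over the $\np+1$ tensor factors, the determinant of the $\vartheta$-block is
\begin{align*}
\big(\det\Gram_\vartheta\super{s}\big)^{\prod_{i=1}^{\np}\Dim_{\sIndex_i}\super{t_i}}\prod_{i=1}^{\np}\big(\det\Gram_{\sIndex_i}\super{t_i}\big)^{\Dim_\vartheta\super{s}\prod_{j\neq i}\Dim_{\sIndex_j}\super{t_j}}.
\end{align*}
Taking the product over all $\vartheta=(t_1,\ldots,t_\np)$ with $t_i\in\DefectSet_{\sIndex_i}$ and invoking $\det\Gram_n\super{s}=\det\munderbar{\Gram}_n\super{s}$ from \eqref{detdet2} gives exactly \eqref{DetFactored}.

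The hard part will be the bookkeeping in the first step: checking rigorously that the network for $\BiForm{\munderbar{\alpha}}{\munderbar{\beta}}$ genuinely decouples into a product over the bins times the ``core'' pairing of $\alpha_0$ and $\beta_0$, and in particular that the group-$i$ region contributes the scalar $\BiForm{\alpha_i}{\beta_i}$ itself rather than some projector-twisted variant. The structural fact that makes this work is that each $N_i=\tilde{\alpha}_i\circ\beta_i$ is a single link diagram (times a power of $\nu$), so it is either a multiple of an identity tangle or contains a turn-back, and in both cases the absorption properties \eqref{ProjectorID0}, \eqref{ProjectorID1} and the annihilation property \eqref{ProjectorID2} of the Jones--Wenzl projectors pin down the contribution. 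A secondary point needing care is the identification in the first step of the form induced on $\Span\SpecialPattern_\vartheta\super{s}$ with the valenced form on $\LS_\vartheta\super{s}$ — equivalently, $\#\SpecialPattern_\vartheta\super{s}=\Dim_\vartheta\super{s}$ and the agreement of the two Gram matrices — which one unwinds from the definitions of special patterns, of $\WJEmb_\vartheta$, and of the valenced form \eqref{LSBiFormExt}; cf. the discussion around \eqref{target} and appendix~\ref{AppWJ}.
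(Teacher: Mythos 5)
Your proposal is correct and follows essentially the same route as the paper: reduce to $\det\munderbar{\Gram}_n\super{s}$ via \eqref{detdet2}, factorize $\BiForm{\munderbar{\alpha}}{\munderbar{\beta}}$ bin-by-bin (the paper invokes lemma~\ref{ExtractLem}, which is exactly the local extraction you re-derive from the Jones--Wenzl properties), identify $\munderbar{\Gram}_n\super{s}$ with the direct sum over $\vartheta$ of the tensor products $\Gram_\vartheta\super{s}\otimes\Gram_{\sIndex_1}\super{t_1}\otimes\cdots\otimes\Gram_{\sIndex_\np}\super{t_\np}$, and apply the Kronecker-product determinant formula. The points you flag as delicate (the decoupling of the network and the identification of the form on $\Span\SpecialPattern_\vartheta\super{s}$ with the valenced form on $\LS_\vartheta\super{s}$) are handled the same way in the paper, via lemma~\ref{ExtractLem} and the correspondence $\WJEmb_\vartheta\LP_\vartheta=\WJProj_\vartheta\SpecialPattern_\vartheta$ from appendix~\ref{AppWJ}.
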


\begin{proof}  
To prove the lemma, we show that the determinant of $\smash{\munderbar{\Gram}_n\super{s}}$ 
equals the right side of~\eqref{DetFactored} and use observation~\eqref{detdet2}. 
To determine $\smash{\det \munderbar{\Gram}_n\super{s}}$, we consider the generic form of the network 
\begin{align} 
\munderbar{\alpha} \BarAction \munderbar{\beta} \quad = \quad 
\vcenter{\hbox{\includegraphics[scale=0.275]{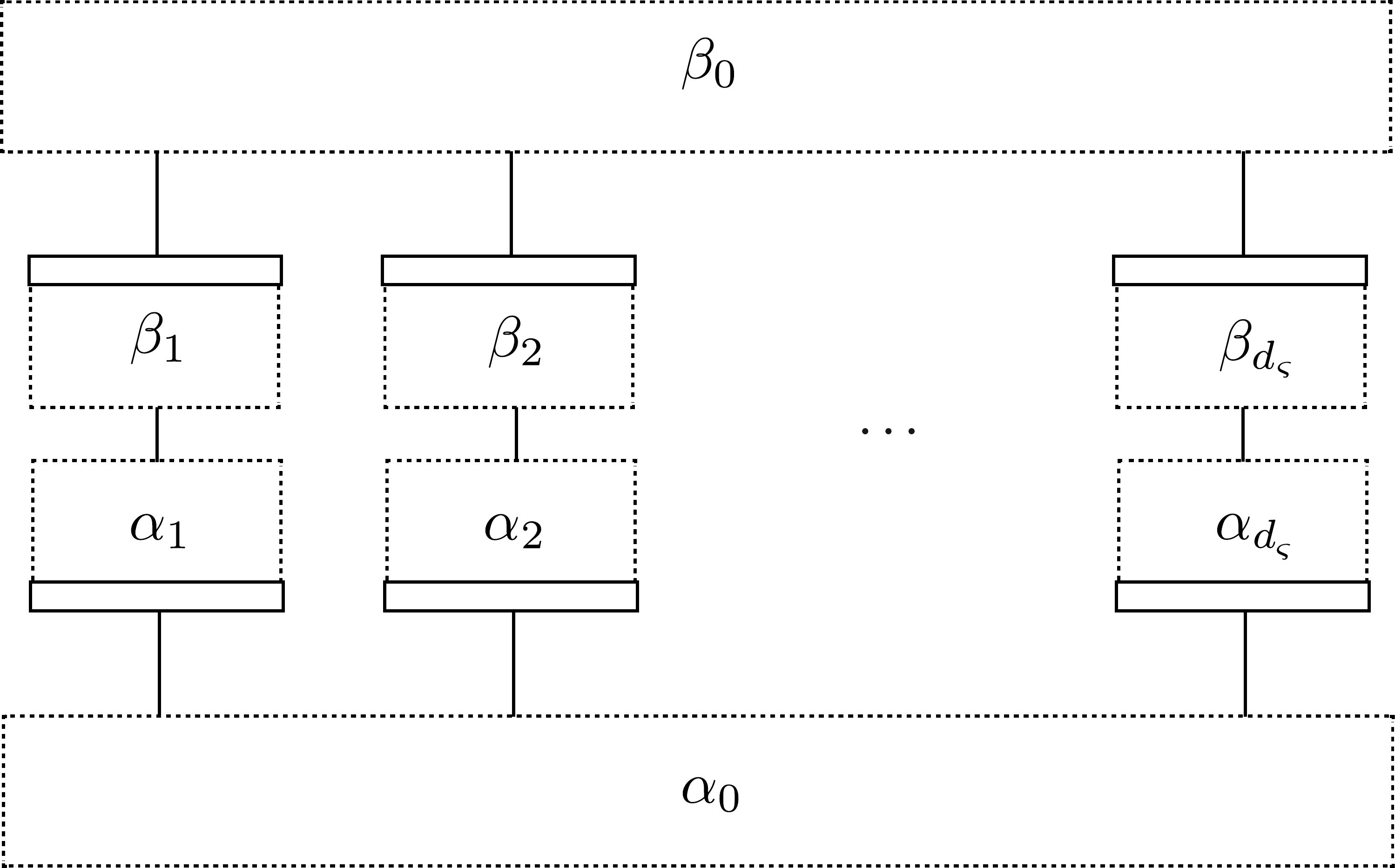} .}}
\end{align}
After using lemma~\ref{ExtractLem} of appendix~\ref{TLRecouplingSect} to factor out the evaluation of the networks
$\alpha_i \BarAction \beta_i$ 
sandwiched between the $i$:th left and right projector boxes, for each $i \in \{1,2,\ldots,d\}$, we find the factorization
\begin{align} \label{FactoredForm} 
\BiForm{\munderbar{\alpha}}{\munderbar{\beta}} 
= \BiForm{\WJProj_\vartheta\alpha_0}{\WJProj_\vartheta\beta_0} \BiForm{\alpha_1}{\beta_1} \BiForm{\alpha_2}{\beta_2} \dotsm \BiForm{\alpha_{\np}}{\beta_{\np}} .
\end{align}
Because the spaces $\smash{\LS_{\sIndex_i}\super{r_i}}$ and $\smash{\LS_{\sIndex_i}\super{t_i}}$ are orthogonal if $r_i\neq t_i$, 
and the linear map 
\begin{align}
\alpha \quad \longmapsto \quad \alpha_0 \otimes \alpha_1 \otimes \dotsm \otimes \alpha_\np ,
\end{align}
for all $(n,s)$-link patterns $\alpha$ is an isomorphism of vector spaces from $\smash{\LS_n\super{s}}$ to the space~\eqref{target}, we infer 
from factorization~\eqref{FactoredForm}
that the matrix $\smash{\munderbar{\Gram}_n\super{s}}$ equals the following direct sum of  
tensor products of Gram matrices:
\begin{align}
\munderbar{\Gram}_n\super{s} = 
\bigoplus_{t_1 \, \in \, \DefectSet_{\sIndex_1}} 
\bigoplus_{t_2 \, \in \, \DefectSet_{\sIndex_2}} \dotsm 
\bigoplus_{t_\np \, \in \, \DefectSet_{\sIndex_\np}} \Gram_\vartheta\super{s} \otimes 
\Gram_{\sIndex_1}\super{t_1}\otimes\Gram_{\sIndex_2}\super{t_2} \otimes \dotsm \otimes \Gram_{\sIndex_\np}\super{t_\np} . 
\end{align}
After taking the determinant of both sides, using the well-known formula for the determinant of the tensor product of matrices, and recalling 
from~\eqref{detdet2} that $\det \smash{\munderbar{\Gram}_n\super{s}} = \det \smash{\Gram_n\super{s}}$, we finally arrive with~\eqref{DetFactored}.
\end{proof} 

\section{Radical of the link state bilinear form} \label{RadicalSect}

In this section, we determine the dimension of and a basis for the radical of the standard module 
$\smash{\LS_\multii\super{s}}$ for all multiindices 
$\multii \in \{\OneVec{0}\} \cup \smash{\bZpos^\#}$ and integers $s \in \DefectSet_\multii$.  
Corollaries~\ref{RadicalCor2} and~\ref{RadicalCor3} below treat the trivial cases 
with $\Summed_\multii < \ppmin(q)$.  For the nontrivial cases, 
we divide the problem into two parts: the specific case with $\multii = \OneVec{n}$ for some $n \in \bZnn$, 
treated in section~\ref{rofSect1}, and the case of a general multiindex 
$\multii \in \{\OneVec{0}\} \cup \smash{\bZpos^\#}$, treated in section~\ref{rofSect2}. 
In section~\ref{rofSect31},
we use these results to prove the remarkable fact that $\rad \smash{\LS_\multii\super{s}}$ is trivial if and only if 
$\rad \smash{\LS_{\Summed_\multii}\super{s}}$ is trivial, 
and we then use this fact with our other results to determine all pairs $(\multii,s)$ such that $\rad \smash{\LS_\multii\super{s}}$ is trivial. 
Finally, in section~\ref{rofSect32}, we study cases in which and conditions under which $\rad \smash{\LS_\multii\super{s}}$ equals the entire module $\smash{\LS_\multii\super{s}}$.

\begin{cor} \label{RadicalCor2} 
Suppose $\Summed_\multii < \ppmin(q)$.  
Then we have $\rad \smash{\LS_\multii\super{s}} = \{0\}$, for all $s \in \DefectSet_\multii$, and thus, $\rad \LS_\multii = \{0\}$.
\end{cor}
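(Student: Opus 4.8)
The plan is to deduce Corollary~\ref{RadicalCor2} directly from the two facts already at hand: Proposition~\ref{VanishDetLem2}, which gives $\det \Gram_\multii\super{s} \neq 0$ for all $s \in \DefectSet_\multii$ whenever $\Summed_\multii < \ppmin(q)$, and the elementary linear-algebra observation (recorded right after~\eqref{GramMatrix2}) that the dimension of $\rad \smash{\LS_\multii\super{s}}$ equals the nullity of $\Gram_\multii\super{s}$. First I would invoke Proposition~\ref{VanishDetLem2} to conclude that the Gram matrix $\Gram_\multii\super{s}$ is invertible for each $s \in \DefectSet_\multii$. Since invertibility of $\Gram_\multii\super{s}$ is equivalent to $\rad \smash{\LS_\multii\super{s}} = \{0\}$, this gives the first assertion for every $s$ in the defect set.

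For the second assertion, I would use the orthogonal decomposition~\eqref{RadDirSum}, namely $\rad \LS_\multii = \bigoplus_{s \in \DefectSet_\multii} \rad \smash{\LS_\multii\super{s}}$, which holds because the standard modules $\smash{\LS_\multii\super{s}}$ for distinct $s$ are mutually orthogonal under the bilinear form (a consequence of rule~\eqref{TurnBack0}). Having just shown each summand $\rad \smash{\LS_\multii\super{s}}$ is trivial, the direct sum is trivial, so $\rad \LS_\multii = \{0\}$.

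Concretely, the proof is two sentences and there is no real obstacle, since all the substantive work — diagonalizing the Gram matrix via trivalent link states and showing the resulting product of Theta-network ratios is nonzero — is already carried out in Sections~\ref{ConformalBlocksSect}--\ref{DetSect}. The only thing to be slightly careful about is that $\Summed_\multii < \ppmin(q)$ implies $\max \multii < \ppmin(q)$ (trivially, since each entry of $\multii$ is at most the sum of all entries), so the bilinear form and hence the Gram matrix $\Gram_\multii\super{s}$ is well-defined and Proposition~\ref{VanishDetLem2} applies.

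Here is the proof.

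\begin{proof}
Since $\Summed_\multii < \ppmin(q)$ and each entry of $\multii$ is at most $\Summed_\multii$, we have $\max \multii < \ppmin(q)$, so the bilinear form~\eqref{LSBiFormExt} and its Gram matrix $\smash{\Gram_\multii\super{s}}$ in~\eqref{GramMatrix2} are well-defined for each $s \in \DefectSet_\multii$. By proposition~\ref{VanishDetLem2}, we have $\det \smash{\Gram_\multii\super{s}} \neq 0$, for all $s \in \DefectSet_\multii$. As noted after~\eqref{GramMatrix2}, the dimension of $\rad \smash{\LS_\multii\super{s}}$ equals the nullity of $\smash{\Gram_\multii\super{s}}$, so $\det \smash{\Gram_\multii\super{s}} \neq 0$ implies $\rad \smash{\LS_\multii\super{s}} = \{0\}$, for all $s \in \DefectSet_\multii$. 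Finally, by the orthogonal decomposition~\eqref{RadDirSum}, we have
\begin{align}
\rad \LS_\multii = \bigoplus_{s \, \in \, \DefectSet_\multii} \rad \LS_\multii\super{s} = \{0\} .
\end{align}
This concludes the proof.
\end{proof}
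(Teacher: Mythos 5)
Your proof is correct and follows exactly the paper's route: it combines proposition~\ref{VanishDetLem2} with the direct-sum decomposition~\eqref{RadDirSum}, together with the (correct) observation that $\det \smash{\Gram_\multii\super{s}} \neq 0$ is equivalent to $\rad \smash{\LS_\multii\super{s}} = \{0\}$. The extra remark that $\Summed_\multii < \ppmin(q)$ implies $\max \multii < \ppmin(q)$ is a harmless and valid bit of bookkeeping that the paper leaves implicit.
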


\begin{proof} 
The claim immediately follows from proposition~\ref{VanishDetLem2} and direct-sum decomposition~\eqref{RadDirSum}.
\end{proof}

\begin{cor} \label{RadicalCor3} 
Suppose $\ppmin(q) = \infty$.  
Then we have $\rad \smash{\LS_\multii} = \{0\}$, for all $\multii \in \{\OneVec{0}\} \cup \smash{\bZpos^\#}$.
\end{cor}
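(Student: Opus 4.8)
The goal is to prove corollary~\ref{RadicalCor3}: if $\ppmin(q) = \infty$, then $\rad \LS_\multii = \{0\}$ for all multiindices $\multii \in \{\OneVec{0}\} \cup \bZpos^\#$. The key point is that the hypothesis $\ppmin(q) = \infty$ means $q \in \{\pm 1\}$ or $q$ is not a root of unity, and in either case $\ppmin(q) = \infty$ gives in particular $\Summed_\multii < \ppmin(q) = \infty$ automatically, for \emph{any} multiindex $\multii$. So the statement should reduce directly to corollary~\ref{RadicalCor2}.

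The plan is as follows. First I would observe that $\max \multii \le \Summed_\multii < \infty = \ppmin(q)$, so in particular the standing hypothesis $\max \multii < \ppmin(q)$ needed to define the bilinear form on $\LS_\multii$ is satisfied, and moreover the stronger hypothesis $\Summed_\multii < \ppmin(q)$ of corollary~\ref{RadicalCor2} holds trivially. Then I would simply invoke corollary~\ref{RadicalCor2} to conclude $\rad \LS_\multii\super{s} = \{0\}$ for all $s \in \DefectSet_\multii$, and hence $\rad \LS_\multii = \{0\}$ by the direct-sum decomposition~\eqref{RadDirSum}.

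There is essentially no obstacle here: this is a one-line corollary whose only content is noticing that $\ppmin(q) = \infty$ makes the finiteness hypothesis of corollary~\ref{RadicalCor2} vacuous. The proof I would write is:

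\begin{proof}
Let $\multii \in \{\OneVec{0}\} \cup \bZpos^\#$ be arbitrary. Since $\ppmin(q) = \infty$, we trivially have $\Summed_\multii < \infty = \ppmin(q)$. Hence corollary~\ref{RadicalCor2} applies and gives $\rad \smash{\LS_\multii\super{s}} = \{0\}$ for all $s \in \DefectSet_\multii$, so that $\rad \LS_\multii = \{0\}$ by the direct-sum decomposition~\eqref{RadDirSum}.
\end{proof}
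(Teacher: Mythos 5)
Your proof is correct and coincides with the paper's, which simply deduces the corollary from corollary~\ref{RadicalCor2} since $\ppmin(q)=\infty$ makes the hypothesis $\Summed_\multii < \ppmin(q)$ automatic. The extra remarks about $\max\multii \le \Summed_\multii$ and the direct-sum decomposition~\eqref{RadDirSum} are fine but already implicit in corollary~\ref{RadicalCor2}.
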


\begin{proof} 
The claim immediately follows from corollary~\ref{RadicalCor2}.
\end{proof}

Corollary~\ref{RadicalCor3} settles the case that $\ppmin(q) = \infty$: all radicals of $\smash{\LS_\multii\super{s}}$ are trivial.  
On the other hand, if $\ppmin(q) < \infty$, then the radical of these standard modules is not trivial for certain 
$\multii \in \smash{\bZpos^\#}$ and $s \in \DefectSet_\multii$.  
In the remainder of this section, we completely determine these radicals, for all values of $\ppmin(q)$, 
all multiindices $\multii \in \smash{\bZpos^\#}$, and all integers $s \in \DefectSet_\multii$.  
These forthcoming results also include the case $\ppmin(q) = \infty$, already settled above, as a special instance.

\subsection{Radical at roots of unity} \label{rofSect1}

In this section, we find the dimension of and a basis for the radical of $\LS_n\super{s}$ for all integers $n \in \bZnn$ and $s \in \DefectSet_n$. 
Throughout, we use the integers $k_s$ and $R_s$ defined in~\eqref{skDefn}. 
First, we treat the case of $R_s = 0$ (i.e., $s = \Delta_{k_s}$).

\begin{lem} \label{EasyRadLem} 
\textnormal{\cite[corollary~\red{4.8}]{rsa}} 
Suppose $\pmin(q) \,|\, (s+1)$.  Then we have $\rad \LS_n\super{s} = \{0\}$.
\end{lem}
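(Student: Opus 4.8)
The statement is that $\det\Gram_n\super{s}\neq 0$ (equivalently $\rad\LS_n\super{s}=\{0\}$) whenever $\pmin(q)\mid(s+1)$, i.e.\ when $R_s=0$ and $s=\Delta_{k_s}$. The natural approach is to exploit the explicit product formula~\eqref{RidoutDet} of lemma~\ref{RidoutDetLem},
\begin{align*}
\det\Gram_n\super{s}=\prod_{j=1}^{\frac{n-s}{2}}\left(\frac{[s+j+1]}{(-1)^{s+1}[j]}\right)^{\Dim_n\super{s+2j}},
\end{align*}
and to show that for each $j\in\{1,2,\ldots,\tfrac{n-s}{2}\}$ the factor $[s+j+1]/[j]$ is finite and nonzero, so that each factor in the product is nonzero and hence so is the whole product. (One must of course first reduce to the case $\pmin(q)<\infty$, since if $\ppmin(q)=\infty$ then corollary~\ref{RadicalCor2} or corollary~\ref{RadicalCor3} already gives the claim; and if $\pmin(q)=1$ then $q\in\{\pm1\}$ and every quantum integer $[k]$ with $k\geq 1$ is nonzero, so the formula again yields the claim directly. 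So assume $1<\pmin(q)<\infty$.)

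\textbf{Key steps.} The crucial arithmetic input is the standard fact (already recalled in the proof of lemma~\ref{ThetaInFiniteAndNonzeroLem}) that for $q$ a primitive root of unity with $\pmin(q)=p$, one has $[k]_q=0$ if and only if $p\mid k$, and these are simple zeros of $q\mapsto[k]_q$. First I would observe that under the hypothesis $p\mid(s+1)$, the numerator $[s+j+1]$ vanishes precisely when $p\mid(s+j+1)$, i.e.\ when $p\mid j$, since $s+1\equiv 0\pmod p$. Thus the zeros of the numerator occur at exactly the same values of $j$ as the zeros of the denominator $[j]$. Since all zeros of quantum integers (as analytic functions of $q$) are simple, for each $j$ with $p\mid j$ the ratio $[s+j+1]/[j]$ extends analytically across these roots of unity to a finite nonzero value, while for $j$ with $p\nmid j$ both $[s+j+1]$ and $[j]$ are already nonzero. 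Hence every factor $\bigl([s+j+1]/((-1)^{s+1}[j])\bigr)^{\Dim_n\super{s+2j}}$ in~\eqref{RidoutDet} is finite and nonzero, and therefore $\det\Gram_n\super{s}\neq 0$, which gives $\rad\LS_n\super{s}=\{0\}$ via the general fact that the nullity of $\Gram_n\super{s}$ equals $\dim\rad\LS_n\super{s}$.

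\textbf{Main obstacle.} The argument itself is short; the only subtlety is making the "finite and nonzero" claim for the ratios $[s+j+1]/[j]$ fully rigorous when $q$ actually \emph{is} a root of unity (as opposed to a nearby generic value). Since $\det\Gram_n\super{s}$ is an analytic function of $q$ on $\{q\in\bC^\times\mid n<\ppmin(q)\}$, the cleanest route is probably to treat both sides of~\eqref{RidoutDet} as analytic functions of $q$ and to match orders of vanishing: the total order of the zero of the numerator over the relevant range of $j$ equals the total order of the zero of the denominator (this is exactly the congruence bookkeeping $p\mid(s+j+1)\iff p\mid j$), weighted identically by the exponents $\Dim_n\super{s+2j}$ since these are the same for numerator and denominator at each fixed $j$. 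This forces $\det\Gram_n\super{s}$ to have neither a zero nor a pole at $q$, i.e.\ it is finite and nonzero there. Alternatively, one can simply invoke lemma~\ref{ThetaInFiniteAndNonzeroLem} (which already packages the cancellation of zeros in $\Theta$-network ratios) applied to the walk with heights $r_j=s+2(\tfrac{n-s}{2}-\cdots)$ realizing the product, but the direct congruence argument on~\eqref{RidoutDet} is more transparent. I expect no genuine difficulty beyond this bookkeeping.
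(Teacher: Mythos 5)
Your proposal is correct and follows essentially the same route as the paper: both use the product formula~\eqref{RidoutDet} from lemma~\ref{RidoutDetLem} and the congruence observation that $\pmin(q)\mid(s+1)$ makes the first-order zeros of $[s+j+1]_q$ and $[j]_q$ occur at exactly the same indices $j$, so each ratio is finite and nonzero, with the case $q=\pm1$ treated separately. The analytic-continuation bookkeeping you flag as the main subtlety is handled in the paper in precisely the same spirit.
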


\begin{proof} 
First, we assume that $q \neq \pm1$.  
By definition~\eqref{Qinteger}, for all nonzero integers $k$, 
we have $[k]_q = 0$ if and only if $\pmin(q) \,|\, k$, and these zeros of $q \mapsto [k]_q$ are of first order. 
Also, our assumption that $\pmin(q) \,|\, (s+1)$ implies that 
$j$ is a multiple of $\pmin(q)$ if and only if $s+j+1$ is a multiple of $\pmin(q)$.  
Hence, we have
\begin{align} \label{zeroscancel}
\Big\{
[j]_q =0 \quad \Longleftrightarrow \quad [s+j+1]_q = 0 
\Big \}
\qquad \qquad \overset{\eqref{Qinteger}}{\Longrightarrow} \qquad \qquad 
0 < \bigg| \frac{[s+j+1]_{q}}{[j]_{q}} \bigg| < \infty .
\end{align}
Thus, from~\eqref{zeroscancel} and formula~\eqref{RidoutDet} for $\det \smash{\Gram_n\super{s}}$ from lemma~\ref{RidoutDetLem}, 
we see that $\det \smash{\Gram_n\super{s}} \neq 0$, so $\rad \smash{\LS_n\super{s}} = \{0\}$.

Last, we assume that $q = \pm 1$, in which case we have $\pmin(q) \,|\, (r+1)$ for all $r \geq 0$.  
If $q = \pm1$, then clearly no quantum integer~\eqref{Qinteger} 
except $[0]$ vanishes.  Hence, it is evident from~\eqref{RidoutDet} that $\det \smash{\Gram_n\super{s}} \neq 0$, so $\rad \smash{\LS_n\super{s}} = \{0\}$.
\end{proof}

Now we use lemma~\ref{EasyRadLem} to determine the radical of $ \LS_n\super{s}$ when $\ppmin(q) < \infty$.  
First, we recall from section~\ref{ConformalBlocksSect} the definition (\ref{Jindex0},~\ref{Jindex2}) 
of the tail of a link pattern $\alpha$ and definition~\ref{TrivalentLinkStateDef}
of the corresponding trivalent link state $\hcancel{\,\alpha}$. 
In the special case that $\multii = \OneVec{n}$ for some $n \in \bZnn$, stopping condition~\ref{StopIt2} 
in definition~\ref{TrivalentLinkStateDef} for forming the tail of $\alpha$ cannot occur.  
Hence, the definition of $\hcancel{\,\alpha}$ reduces to the following: to obtain $\hcancel{\alpha}$ from the link pattern $\alpha$, 
we replace each open three-vertex in the tail of 
the walk representation of $\alpha$ with a closed three-vertex.

We let $\smash{\mathsf{T}_n\super{s}}$ denote the collection of all tails pertaining to $(n,s)$-link patterns:
\begin{align} \label{Alltails}
\mathsf{T}_n\super{s} := \big\{ \text{tail}(\alpha) \,|\, \alpha \in \LP_n\super{s} \big\}. 
\end{align}
Any tail in $\smash{\mathsf{T}_n\super{s}}$ is exactly one of two possible types:
a \emph{radical tail} has $r_J = \Delta_{k_s+1}$ 
(i.e., stopping condition~\ref{StopIt1} in definition~\ref{TrivalentLinkStateDef} occurs), and
a \emph{moderate tail} has $r_J = \Delta_{k_s}$ (i.e., stopping condition~\ref{StopIt3} 
in definition~\ref{TrivalentLinkStateDef} occurs), where $k_s$ is given in~\eqref{skDefn}.  
We let $\smash{\mathsf{R}_n\super{s}}$ and $\smash{\mathsf{M}_n\super{s}}$ 
respectively denote the collection of all radical tails and all moderate tails of $(n,s)$-link patterns.  
Because stopping condition~\ref{StopIt2} 
in definition~\ref{TrivalentLinkStateDef} cannot occur if $\multii = \OneVec{n}$, we have  
\begin{align} 
\smash{\mathsf{R}_n\super{s}} \cup \smash{\mathsf{M}_n\super{s}} = \smash{\mathsf{T}_n\super{s}}.
\end{align}
We also note that, for each tail $\tau = (r_{J+1}, r_{J+2}, \ldots, r_n) \in \smash{\mathsf{T}_n\super{s}}$, the following properties hold:
\begin{align} \label{rJ+1} 
r_J = \begin{cases} \Delta_{k_s+1}, & \tau \in \smash{\mathsf{R}_n\super{s}}, \\ 
\Delta_{k_s}, & \tau \in \smash{\mathsf{M}_n\super{s}}, 
\end{cases} 
\qquad \qquad \text{and} \qquad  \qquad
r_{J+1} = 
\begin{cases} 
r_J-1, & \tau \in \smash{\mathsf{R}_n\super{s}}, 
\\ r_J+1, & \tau \in \smash{\mathsf{M}_n\super{s}}.
\end{cases} 
\end{align}

Or next goal, proposition~\ref{BigTailLem}, is to prove that
$\smash{\big\{ \hcancel{\,\alpha} \,\big|\, \alpha \in \smash{\LP_\multii\super{s}}, \, \textnormal{tail}(\alpha) \in \mathsf{R}_\multii\super{s} \big\} }$ 
is a basis for the radical of $\smash{\LS_n\super{s}}$. 
For this, we first observe that link patterns with different tails span orthogonal subspaces of $\smash{\LS_n\super{s}}$.

\begin{lem} \label{OrthTailsLem} 
For all link patterns $\alpha, \beta \in \smash{\LP_n\super{s}}$, we have 
\begin{align} \label{OrthTails} 
\textnormal{tail}(\alpha) \neq \textnormal{tail}(\beta) \qquad \Longrightarrow \qquad \BiForm{\hcancel{\,\alpha}}{\hcancel{\,\beta}} = 0.
\end{align}
\end{lem}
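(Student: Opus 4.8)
The plan is to reduce the claim to the evaluation of the network $\hcancel{\alpha}\BarAction\hcancel{\beta}$ and to carry that out by erasing loops starting at the defect end, just as in the proof of item~\ref{IndOrthBasisLemIt1} of proposition~\ref{IndOrthBasisLem} and of proposition~\ref{GramDetLem}, but now keeping track of where the two tails first disagree. Write $\varrho_\alpha=(r_1,\ldots,r_n)$ and $\varrho_\beta=(r_1',\ldots,r_n')$ for the associated walks over $\OneVec{n}$ via~\eqref{Generic}; then $r_n=r_n'=s$, and (since every step carries an open three-vertex when $\multii=\OneVec{n}$) the trivalent link state $\hcancel{\alpha}$ is obtained from the walk representation of $\alpha$ by replacing the open three-vertices at precisely the tail steps $J_\alpha,J_\alpha+1,\ldots,n-1$ with closed three-vertices, and likewise for $\hcancel{\beta}$ with $J_\beta$. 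In $\hcancel{\alpha}\BarAction\hcancel{\beta}$ the $j$-th node of $\alpha$ is glued to the $j$-th node of $\beta$ and the $s$ defects are joined through, so that, after collapsing everything to the right of step $j$, the step-$j$ vertices of $\hcancel{\alpha}$ and $\hcancel{\beta}$ (with third legs of sizes $r_j$ and $r_j'$) are joined along a bubble formed by the size-$\sIndex_{j+1}$ tooth and the size-$r_{j+1}$ strand.

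First I would handle the generic case $J_\alpha=J_\beta=:J$; if both are $-\infty$ (equivalently, $\hcancel{\alpha}$ and $\hcancel{\beta}$ have all vertices closed), the hypothesis forces $\alpha\ne\beta$ and the claim follows from item~\ref{IndOrthBasisLemIt1} of proposition~\ref{IndOrthBasisLem}, so assume $J\ge 0$. Proceeding from the innermost bubble (step $n-1$) leftwards, erasing the bubble at step $j$ via lemma~\ref{ExtractLem} and the loop-erasure identity~\eqref{LoopErasure1} produces a Kronecker factor $\delta_{r_j,r_j'}$ (forcing the two walks to agree at step $j$) together with the constant $\ThetaNet(r_j,r_{j+1},\sIndex_{j+1})/\big((-1)^{r_{j+1}}[r_{j+1}+1]\big)$, which is finite and nonzero for $j\ge J+1$ by lemma~\ref{ThetaInFiniteAndNonzeroLem}. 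Since $\textnormal{tail}(\alpha)\ne\textnormal{tail}(\beta)$ while $r_n=r_n'=s$, there is a largest index $j^\ast\in\{J,J+1,\ldots,n-1\}$ with $r_{j^\ast}\ne r_{j^\ast}'$; the erasures at steps $n-1,\ldots,j^\ast+1$ contribute only $\delta=1$ and finite nonzero constants, and the next step annihilates the network: if $j^\ast>J$ then $\delta_{r_{j^\ast},r_{j^\ast}'}=0$; and if $j^\ast=J$, then what remains is a network in which the two closed three-vertices at step $J$ sit between the heads of $\alpha$ and $\beta$ with outer legs of sizes $r_J\ne r_J'$, and a Temperley-Lieb tangle between $r_J$ and $r_J'$ parallel strands carrying Jones-Wenzl projectors on both ends necessarily vanishes (it must contain a turn-back killed by a projector), so the network is zero.

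Next I would handle the more delicate case $J_\alpha\ne J_\beta$, say $J_\alpha<J_\beta$ (allowing $J_\alpha=-\infty$); here the tails have different lengths, hence are distinct, and I claim the network again vanishes. First note $J_\beta\le n-1$: otherwise $J_\beta=n$ and, since $h_{\min,n}(\varrho_\beta)=h_{\max,n}(\varrho_\beta)=s$ with $\Delta_{k_s+1}>s$ by~\eqref{skSet}, the defining condition in~\eqref{Jindex0} at $j=n$ reads $s\le\Delta_{k_s}$, i.e.\ $R_s=0$, whence $J_\varrho(q)=n$ for every walk $\varrho$ and all tails equal $(s)$, contradicting the hypothesis. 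Now erase the bubbles at steps $n-1,n-2,\ldots,J_\beta$ (all $\ge J_\beta>J_\alpha$, so both $\hcancel{\alpha}$ and $\hcancel{\beta}$ carry closed vertices there); if some resulting $\delta_{r_j,r_j'}$ vanishes we are done, so assume $r_j=r_j'$ for all $j\ge J_\beta$. In particular $r_{J_\beta}=r_{J_\beta}'\in\{\Delta_{k_s},\Delta_{k_s+1}\}$, the latter because $\beta$'s tail begins at a stopping height. On the other hand $J_\beta>J_\alpha=J_{\varrho_\alpha}(q)$ means step $J_\beta$ fails the defining condition of~\eqref{Jindex0} for $\varrho_\alpha$, i.e.\ $\Delta_{k_s}<h_{\min,J_\beta}(\varrho_\alpha)$ and $h_{\max,J_\beta}(\varrho_\alpha)<\Delta_{k_s+1}$; combined with $h_{\min,J_\beta}(\varrho_\alpha)\le r_{J_\beta}\le h_{\max,J_\beta}(\varrho_\alpha)$ (which holds since $J_\beta\le n-1$, from~\eqref{minmaxh}--\eqref{minmaxh2} and~\eqref{WalkHeights}) this forces $\Delta_{k_s}<r_{J_\beta}<\Delta_{k_s+1}$, contradicting $r_{J_\beta}\in\{\Delta_{k_s},\Delta_{k_s+1}\}$. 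Hence the network vanishes in this case too, which completes the argument.

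The hard part will be the bookkeeping in the unequal-$J$ case and in the boundary sub-case $j^\ast=J$ — in particular, making the loop-erasure step rigorous near the stopping vertex, where the constant $1/[r_J+1]$ may be infinite, so one must argue directly that the relevant projected Temperley-Lieb tangle vanishes rather than blindly applying~\eqref{LoopErasure1}; the bulk of the computation is otherwise a transcription of the loop-erasure argument already used for propositions~\ref{IndOrthBasisLem} and~\ref{GramDetLem}, with lemma~\ref{ThetaInFiniteAndNonzeroLem} supplying the required finiteness and non-vanishing of the intermediate constants.
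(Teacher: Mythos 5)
Your argument is correct in substance, but it takes a noticeably longer, more computational route than the paper. The paper's proof is a one-liner in the spirit of your final sub-case only: setting $I := \max\{ j \,|\, r_j \neq r_j'\}$, it observes that $\max(J_\alpha(q), J_\beta(q)) \leq I$ (if both indices exceeded $I$, the defining conditions in~\eqref{Jindex0} at steps $>I$ would coincide for the two walks and the tails would be equal), so both $\hcancel{\,\alpha}$ and $\hcancel{\,\beta}$ carry projector boxes of the mismatched sizes $r_I \neq r_I'$ at step $I$; since everything to the right of these boxes is glued, some link must have both endpoints on the larger box, and~\eqref{ProjectorID2} kills the network — no loop erasure, no Theta constants, and no case split on $J_\alpha$ versus $J_\beta$ is needed. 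Your right-to-left bubble-erasure argument reaches the same conclusion: your equal-$J$, $j^\ast = J$ sub-case is exactly the paper's mechanism, and your band-condition contradiction in the unequal-$J$ case is in effect a proof of the paper's unstated observation that the walks must already disagree at or after $\max(J_\alpha, J_\beta)$, so it is a legitimate (if heavier) substitute. What the paper's route buys is that it never evaluates any constants, hence never meets the $1/[r_J+1]$ singularity you rightly worry about; what your route buys is that it runs in parallel with the computations used later for lemma~\ref{RadCasesLem}. Two small points of bookkeeping in your write-up: erasing from the right at step $j$ produces the constant $\ThetaNet(r_{j+1}, r_j, 1)/\big((-1)^{r_j}[r_j+1]\big)$, whose denominator is $[r_j+1]$ rather than the $[r_{j+1}+1]$ controlled by lemma~\ref{ThetaInFiniteAndNonzeroLem}, so that lemma does not apply verbatim — the needed finiteness instead follows directly from $\Delta_{k_s} < r_j < \Delta_{k_s+1}$ at tail-interior steps, which forbids $\pmin(q) \,|\, (r_j+1)$; and in the unequal-$J$ case you should either stop the erasure just before step $J_\beta$ or run the combinatorial contradiction first (it uses no network manipulation), since performing the erasure at step $J_\beta$ with matching heights would involve the divergent constant you are trying to avoid. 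Neither point affects the validity of the argument.
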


\begin{proof} 
Assuming that $\alpha, \beta \in \smash{\LP_n\super{s}}$ satisfy $\textnormal{tail}(\alpha) \neq \textnormal{tail}(\beta)$, with 
$\varrho_\alpha = (r_1, r_2, \ldots, r_n)$ and $\varrho_\beta = (r_1', r_2', \ldots, r_n')$, we set $I := \max \{ j \in \bZnn \,|\, r_j \neq r_j' \}$.  
Then we have $\max( J_\alpha(q), J_\beta(q)) \leq I$, so the network $\hcancel{\,\alpha} \BarAction \hcancel{\,\beta}$ has the form
\begin{align} \label{AlphBet}
\hcancel{\,\alpha} \BarAction \hcancel{\,\beta} \quad = \quad
\vcenter{\hbox{\includegraphics[scale=0.275]{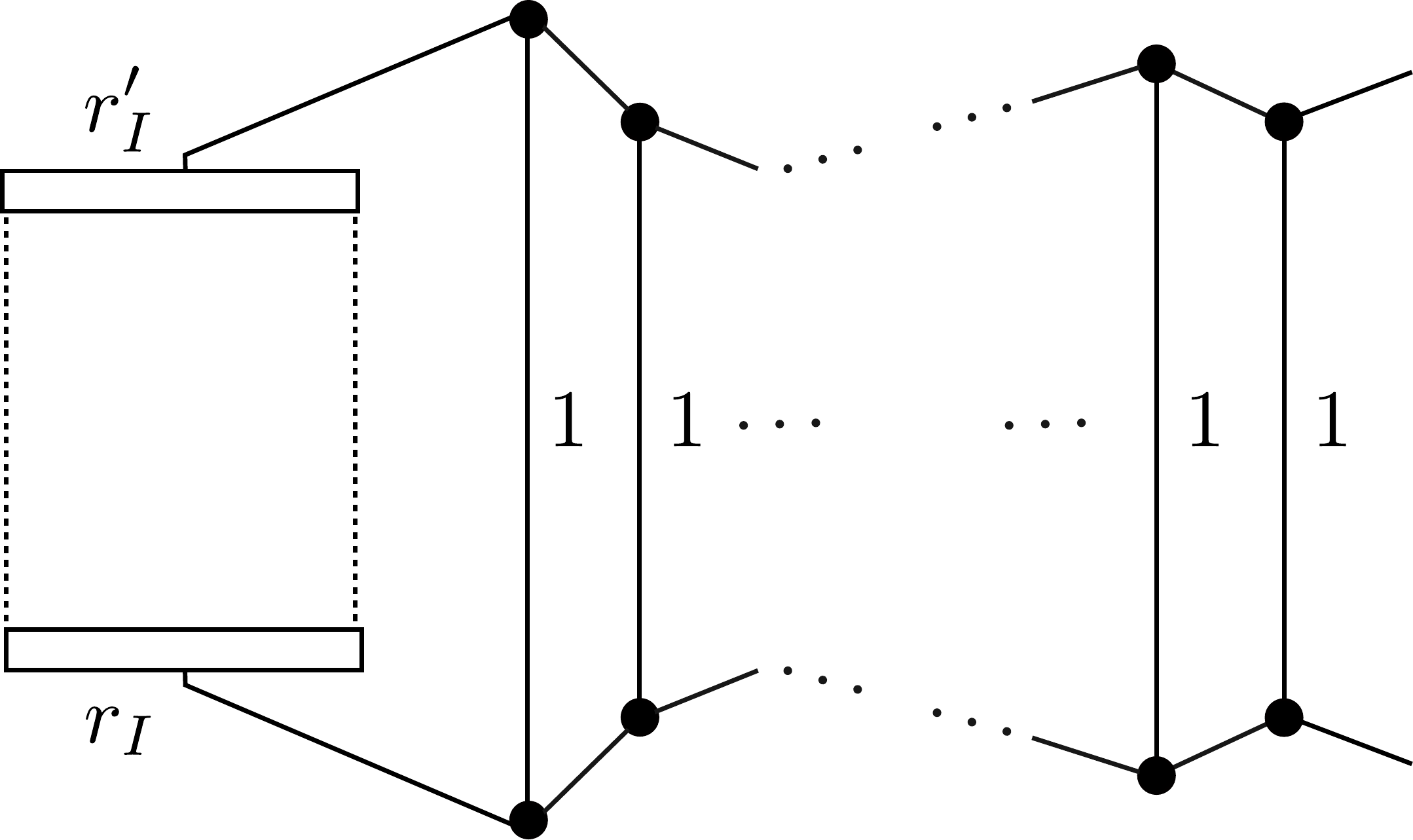} .}} 
\end{align}
The sizes $r_I$ and $r_I'$ of the two leftmost projector boxes are different by definition of $I$.  
Hence, there exists a link with both of its endpoints touching the larger of these two boxes, so we have $\BiForm{\hcancel{\,\alpha}}{\hcancel{\,\beta}} = 0$.
\end{proof}

Orthogonality of link patterns with different tails immediately gives a direct-sum decomposition of the radical:

\begin{cor} \label{RadDirectCor} 
We have the direct-sum decomposition
\begin{align} \label{radDirect} 
\rad \LS_n\super{s} = \bigoplus_{\tau \, \in \, \smash{\mathsf{T}_n\super{s}}} \rad \Span \big\{ \hcancel{\,\alpha} \,\big|\, \alpha \in \smash{\LP_n\super{s}}, \, \textnormal{tail}(\alpha) = \tau \big\}.
\end{align}
\end{cor}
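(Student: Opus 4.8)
The plan is to deduce the direct-sum decomposition from the orthogonality statement of lemma~\ref{OrthTailsLem} together with the fact, established in item~\ref{IndOrthBasisLemIt3} of proposition~\ref{IndOrthBasisLem}, that $\smash{\{ \hcancel{\,\alpha} \,|\, \alpha \in \LP_n\super{s} \}}$ is a basis for $\smash{\LS_n\super{s}}$. First I would partition this basis according to the tail: write
\begin{align}
\LS_n\super{s} = \bigoplus_{\tau \, \in \, \mathsf{T}_n\super{s}} \mathsf{W}_\tau, \qquad
\mathsf{W}_\tau := \Span \big\{ \hcancel{\,\alpha} \,\big|\, \alpha \in \LP_n\super{s}, \, \textnormal{tail}(\alpha) = \tau \big\},
\end{align}
which is a genuine direct sum of vector spaces because the trivalent link states in question form a basis and each basis element lies in exactly one summand (its tail is uniquely determined by~(\ref{Jindex0},~\ref{Jindex2})). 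Lemma~\ref{OrthTailsLem} says precisely that $\mathsf{W}_\tau \perp \mathsf{W}_{\tau'}$ whenever $\tau \neq \tau'$ with respect to the bilinear form $\BiForm{\cdot}{\cdot}$.

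Next I would invoke the elementary linear-algebra fact that if a vector space $V$ carries a symmetric bilinear form and decomposes as an \emph{orthogonal} direct sum $V = \bigoplus_i V_i$, then $\rad V = \bigoplus_i \rad V_i$, where $\rad V_i$ denotes the radical of the restricted form on $V_i$. The inclusion $\bigoplus_i \rad V_i \subseteq \rad V$ is immediate from orthogonality: a vector $\sum_i v_i$ with each $v_i \in \rad V_i$ pairs to zero with every $v_j' \in V_j$ since $\BiForm{v_i}{v_j'} = 0$ for $i \neq j$ by orthogonality and $\BiForm{v_j}{v_j'} = 0$ because $v_j \in \rad V_j$. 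Conversely, given $v = \sum_i v_i \in \rad V$, pairing $v$ against an arbitrary $w \in V_j$ gives $0 = \BiForm{v}{w} = \BiForm{v_j}{w}$, using orthogonality to kill all cross-terms; hence each component $v_j$ lies in $\rad V_j$. Applying this with $V = \smash{\LS_n\super{s}}$ and $V_\tau = \mathsf{W}_\tau$ yields exactly~\eqref{radDirect}.

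I do not anticipate a serious obstacle here: the corollary is essentially a formal consequence of orthogonality, and all the substantive work (establishing that the $\hcancel{\,\alpha}$ form a basis, and establishing the orthogonality across distinct tails) has already been done in proposition~\ref{IndOrthBasisLem} and lemma~\ref{OrthTailsLem}. The only point requiring a word of care is that the sum in~\eqref{radDirect} is indexed by the full set $\mathsf{T}_n\super{s} = \mathsf{R}_n\super{s} \cup \mathsf{M}_n\super{s}$ of all tails, not merely the radical tails; the later identification (in proposition~\ref{BigTailLem}) of which summands actually contribute to the radical — namely that $\rad \mathsf{W}_\tau = \mathsf{W}_\tau$ for $\tau \in \mathsf{R}_n\super{s}$ and $\rad \mathsf{W}_\tau = \{0\}$ for $\tau \in \mathsf{M}_n\super{s}$ — is a separate step and should not be conflated with this purely structural decomposition.
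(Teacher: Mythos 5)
Your proposal is correct and follows the same route as the paper: decompose $\smash{\LS_n\super{s}}$ by tails via the trivalent basis from item~\ref{IndOrthBasisLemIt3} of proposition~\ref{IndOrthBasisLem}, invoke lemma~\ref{OrthTailsLem} for orthogonality of the summands, and conclude that the radical decomposes accordingly. The paper leaves the final linear-algebra step implicit, whereas you spell it out; otherwise the arguments coincide.
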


\begin{proof} 
Item~\ref{IndOrthBasisLemIt3} of proposition~\ref{IndOrthBasisLem} implies the direct-sum decomposition
\begin{align} \label{DirectLS1} 
\LS_n\super{s} =  \bigoplus_{\tau \, \in \, \smash{\mathsf{T}_n\super{s}}} \Span \big\{ \hcancel{\,\alpha} \,\big|\, 
\alpha \in \smash{\LP_n\super{s}}, \, \textnormal{tail}(\alpha) = \tau \big\} .
\end{align}
Also, lemma~\ref{OrthTailsLem} implies that the spans in the direct sum~\eqref{DirectLS1} are orthogonal.  
Hence,~\eqref{radDirect} follows from~\eqref{DirectLS1}. 
\end{proof}

Now, to determine the radical of $\smash{\LS_n\super{s}}$, we only need to understand the summands of~\eqref{radDirect}.  

\begin{lem} \label{RadCasesLem} 
Suppose $\tau \in \smash{\mathsf{T}_n\super{s}}$.  Then we have
\begin{align} \label{radCases} \rad \Span \big\{ \hcancel{\,\alpha} \,\big|\, \alpha \in \smash{\LP_n\super{s}}, \,\textnormal{tail}(\alpha) = \tau \big\} = \begin{cases}    \Span \big\{ \hcancel{\,\alpha} \,\big|\, \alpha \in \smash{\LP_n\super{s}}, \, \textnormal{tail}(\alpha) = \tau \big\}, & \textnormal{$\tau \in \smash{\mathsf{R}_n\super{s}}$}, \\ \{0\}, & \textnormal{$\tau \in \smash{\mathsf{M}_n\super{s}}$}. \end{cases} 
\end{align}
\end{lem}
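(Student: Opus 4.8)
The plan is to prove the two cases of~\eqref{radCases} separately, using the fact (already available from item~\ref{IndOrthBasisLemIt3} of proposition~\ref{IndOrthBasisLem}) that $\smash{\big\{ \hcancel{\,\alpha} \,\big|\, \alpha \in \LP_n\super{s},\ \textnormal{tail}(\alpha) = \tau \big\}}$ is a linearly independent set, so the bilinear form restricted to its span is represented by a Gram matrix whose entries are $\BiForm{\hcancel{\,\alpha}}{\hcancel{\,\beta}}$, and which by lemma~\ref{OrthTailsLem} (and more precisely by the product formula~\eqref{WalkBiForm}, valid here since $J_\alpha(q)=J_\beta(q)=J$ is the common index $J$ determined by the common tail $\tau$, and the steps beyond $J$ contribute the product in~\eqref{WalkBiForm}) is \emph{diagonal}. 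The diagonal entry attached to a link pattern $\alpha$ with tail $\tau$ is a product over the head steps $j \in \{1,\ldots,J-1\}$ of the factors $\ThetaNet(r_j,r_{j+1},\sIndex_{j+1})/((-1)^{r_{j+1}}[r_{j+1}+1])$ times the contribution of the step at $j=J$, which involves the heights $r_J, r_{J+1}$, and $\sIndex_{J+1} = 1$. So the whole problem reduces to deciding whether this diagonal entry vanishes, and this splits according to whether $\tau$ is a radical tail or a moderate tail via~\eqref{rJ+1}.

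For the moderate case $\tau \in \smash{\mathsf{M}_n\super{s}}$, I would argue that the diagonal entry is finite and nonzero, hence the Gram matrix is invertible on that span and its radical is $\{0\}$. The head factors $j \in \{\bar J+1,\ldots,n-1\}$ are handled directly by lemma~\ref{ThetaInFiniteAndNonzeroLem}, which states exactly that $0 < |\ThetaNet(r_j,r_{j+1},\sIndex_{j+1})/((-1)^{r_{j+1}}[r_{j+1}+1])| < \infty$ for $j$ past $\bar J$; the remaining finitely many head factors with smaller index are handled by an elementary argument: by definition of $J$ as the supremum in~\eqref{Jindex0}, for every step $j < J$ we have $\Delta_{k_s} < h_{\min,j}(\varrho)$ and $h_{\max,j}(\varrho) < \Delta_{k_s+1}$, so all the quantum factorial arguments appearing in $\ThetaNet(r_j,r_{j+1},\sIndex_{j+1})$ (which are bounded above by $h_{\max,j}(\varrho)+1 \leq \Delta_{k_s+1} = \smash{(k_s+1)}\pmin(q)-1$) and the argument $r_{j+1}+1$ of the denominator lie in a range $\{\Delta_{k_s}+1,\ldots,\Delta_{k_s+1}\}$, i.e.\ in a window of length $\pmin(q)$ that contains exactly one multiple of $\pmin(q)$, namely $\Delta_{k_s}+1 = k_s\pmin(q)$, which occurs only as the value of $h_{\min,j}$; combining this with the first-order nature of the zeros of $q\mapsto[k]_q$ shows numerator and denominator zeros cancel. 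Together with the $j=J$ step, where $r_J = \Delta_{k_s}$ and $r_{J+1} = \Delta_{k_s}+1 = k_s\pmin(q)$ so $[r_{J+1}+1] = [k_s\pmin(q)+1] \neq 0$ and $\ThetaNet(\Delta_{k_s}, \Delta_{k_s}+1, 1)$ is a ratio of factorials that also has cancelling zeros (explicitly computable from~\eqref{ThetaFormula0}), this gives a nonzero finite diagonal entry.

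For the radical case $\tau \in \smash{\mathsf{R}_n\super{s}}$, the claim is that \emph{every} $\hcancel{\,\alpha}$ with $\textnormal{tail}(\alpha) = \tau$ lies in the radical of the span, which since the Gram matrix is diagonal amounts to showing the diagonal entry is zero. Now $r_J = \Delta_{k_s+1} = \smash{(k_s+1)}\pmin(q)-1$ and $r_{J+1} = r_J - 1 = \smash{(k_s+1)}\pmin(q) - 2$, and the denominator factor at step $j = J$ is $(-1)^{r_{J+1}}[r_{J+1}+1] = (-1)^{r_{J+1}}[\smash{(k_s+1)}\pmin(q)-1]$, which is \emph{nonzero}; so the vanishing must come from the numerator $\ThetaNet(r_J, r_{J+1}, \sIndex_{J+1})$ with $\sIndex_{J+1}=1$. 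Using~\eqref{ThetaFormula0}, $\ThetaNet(r_J, r_J-1, 1)$ contains in its numerator the quantum factorial $\big[\tfrac{r_J + (r_J-1) + 1}{2} + 1\big]! = [r_J+1]! = [\smash{(k_s+1)}\pmin(q)]!$, which vanishes because $[\smash{(k_s+1)}\pmin(q)]_q = 0$; checking that the denominator quantum factorials $[r_J]!\,[r_J-1]!\,[1]!$ (all with arguments at most $r_J = \Delta_{k_s+1} < \smash{(k_s+1)}\pmin(q)$, hence containing no multiple of $\pmin(q)$ above $k_s\pmin(q)$ that is not already matched) do not cancel this extra zero, and invoking lemma~\ref{ThetaInFiniteAndNonzeroLem} plus the elementary head argument above to see that no \emph{other} factor blows up to cancel it, we conclude the diagonal entry is $0$. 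Hence the entire span sits in the radical.

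Assembling these: in the moderate case the restricted form is nondegenerate, giving the $\{0\}$ branch; in the radical case the restricted form is identically zero, giving the full-span branch. This is exactly~\eqref{radCases}. I expect the main obstacle to be the bookkeeping in the radical case — carefully verifying that the single zero coming from $[\smash{(k_s+1)}\pmin(q)]!$ in $\ThetaNet(r_J,r_J-1,1)$ genuinely survives, i.e.\ is not cancelled by a pole elsewhere in the product (this is where lemma~\ref{ThetaInFiniteAndNonzeroLem} and the first-order-zero structure of $[k]_q$ must be used with care) — together with the observation that the head-step factors with index $\le \bar J$, which are not covered by lemma~\ref{ThetaInFiniteAndNonzeroLem}, are nonetheless finite and nonzero by virtue of the supremum defining $J$. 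Everything else is a direct consequence of results already established in the excerpt (proposition~\ref{IndOrthBasisLem}, lemma~\ref{OrthTailsLem}, lemma~\ref{ThetaInFiniteAndNonzeroLem}, formula~\eqref{ThetaFormula0}).
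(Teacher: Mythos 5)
Your proposal goes wrong at the very first structural step: the Gram matrix of the set $\smash{\big\{ \hcancel{\,\alpha} \,\big|\, \alpha \in \LP_n\super{s},\ \textnormal{tail}(\alpha) = \tau \big\}}$ is \emph{not} diagonal. Formula~\eqref{WalkBiForm} only applies when $J_\alpha(q) = J_\beta(q) = -\infty$, i.e.\ when \emph{every} vertex has been closed; here the defining feature of the tail construction is that the vertices in the head (steps $1,\ldots,J-1$) remain \emph{open}, precisely because projector boxes of size $\geq \pmin(q)$ do not exist there. Lemma~\ref{OrthTailsLem} only gives orthogonality across \emph{different} tails, and says nothing about two patterns with the same tail but different heads. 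What is actually true (and is the content of~(\ref{DotBiIt0}--\ref{DotBiIt}) in the paper, obtained from lemmas~\ref{ExtractLem} and~\ref{LoopErasureLem}) is the factorization $\BiForm{\beta}{\gamma} = \BiForm{\beta'}{\gamma'}\prod_{j=J}^{n-1}\ThetaNet(r_j,r_{j+1},1)/\big((-1)^{r_{j+1}}[r_{j+1}+1]\big)$, where $\beta',\gamma'$ are \emph{ordinary} $(J,r_J)$-link states (the heads): the head contributes a full, generally non-diagonal pairing, not a product of Theta factors over head steps. (Your auxiliary claim that the supremum defining $J$ forces $\Delta_{k_s} < h_{\min,j}$ and $h_{\max,j} < \Delta_{k_s+1}$ for all $j<J$ is also backwards — that is what holds for $j > J$, which is exactly the range covered by lemma~\ref{ThetaInFiniteAndNonzeroLem}; for $j<J$ the walk can and does cross the critical lines.)

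Because of this, your moderate case does not close: non-vanishing of the ``diagonal entries'' says nothing about nondegeneracy of a non-diagonal form. The restricted form equals the Gram matrix $\smash{\Gram_J\super{r_J}}$ of $\smash{\LS_J\super{r_J}}$ times the (finite, nonzero) tail factor, so what you actually need is $\rad \smash{\LS_J\super{r_J}} = \{0\}$. This is the essential missing ingredient, and it is not elementary: it is lemma~\ref{EasyRadLem}, which applies because~\eqref{rJ+1} gives $r_J = \Delta_{k_s}$, hence $\pmin(q) \,|\, (r_J+1)$, and whose proof rests on the determinant formula~\eqref{RidoutDet}. Once you have it, you also need the small construction of a companion link state $\gamma_\beta$ (gluing a dual head onto the fixed tail) to exhibit a nonzero pairing for each nonzero $\beta$ in the span. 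Your radical case reaches the right conclusion — and your identification of the vanishing factor $[\Delta_{k_s+1}+1] = [(k_s+1)\pmin(q)] = 0$ at the step $j=J$ is the correct mechanism, matching~\eqref{ThetaCases} — but as written it is justified by the false diagonality and the wrong entry formula; it should instead be phrased as: the common factor at $j=J$ vanishes, so by the factorization above \emph{all} pairings on the span vanish, irrespective of the head pairing.
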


\begin{proof} 
Let $\beta$ and $\gamma$ be two link states in the span of 
$\smash{ \big\{ \hcancel{\,\alpha} \,\big|\, \alpha \in \smash{\LP_n\super{s}}, \, \textnormal{tail}(\alpha) = \tau \big\}}$,
and denote their common tail by $\tau = (r_J, r_{J+1}, \ldots, r_n)$, 
with~\eqref{rJ+1}.
Then, the bilinear form $\BiForm{\beta}{\gamma}$ equals the evaluation of the network
\begin{align} \label{AlphBetNet}
\beta \BarAction \gamma \quad = \quad \vcenter{\hbox{\includegraphics[scale=0.275]{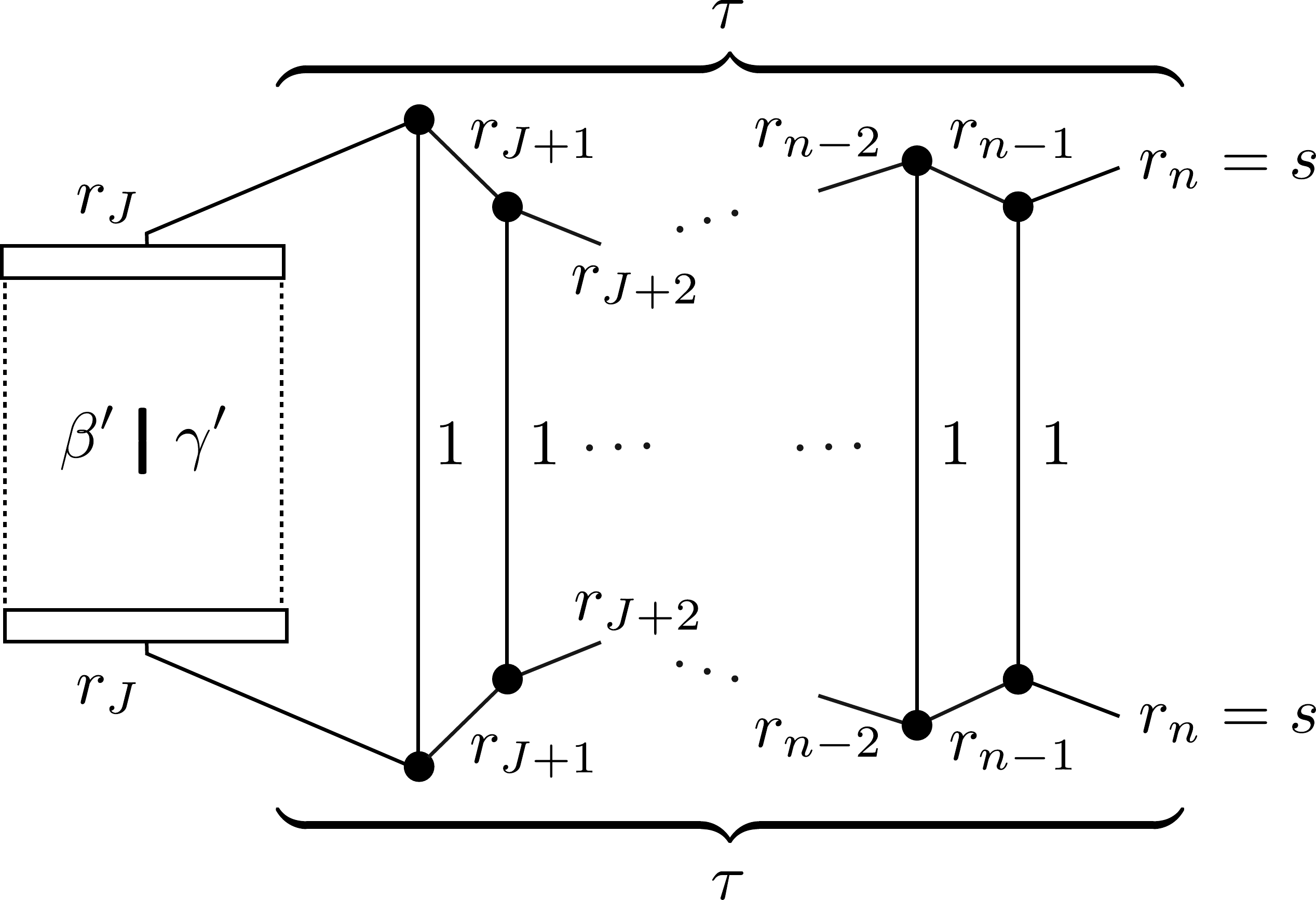} ,}} 
\end{align}
where $\beta'$ and $\gamma'$ are $(J,r_J)$-link states.  
We use lemmas~\ref{ExtractLem} and~\ref{LoopErasureLem} of appendix~\ref{TLRecouplingSect} to evaluate this network, obtaining
\begin{align} 
\label{DotBiIt0} 
\BiForm{\beta}{\gamma} & \quad \overset{\eqref{ExtractID}}{=} \quad \BiForm{\beta'}{\gamma'} \,\, \times 
\,\, 
\left( \;
\vcenter{\hbox{\includegraphics[scale=0.275]{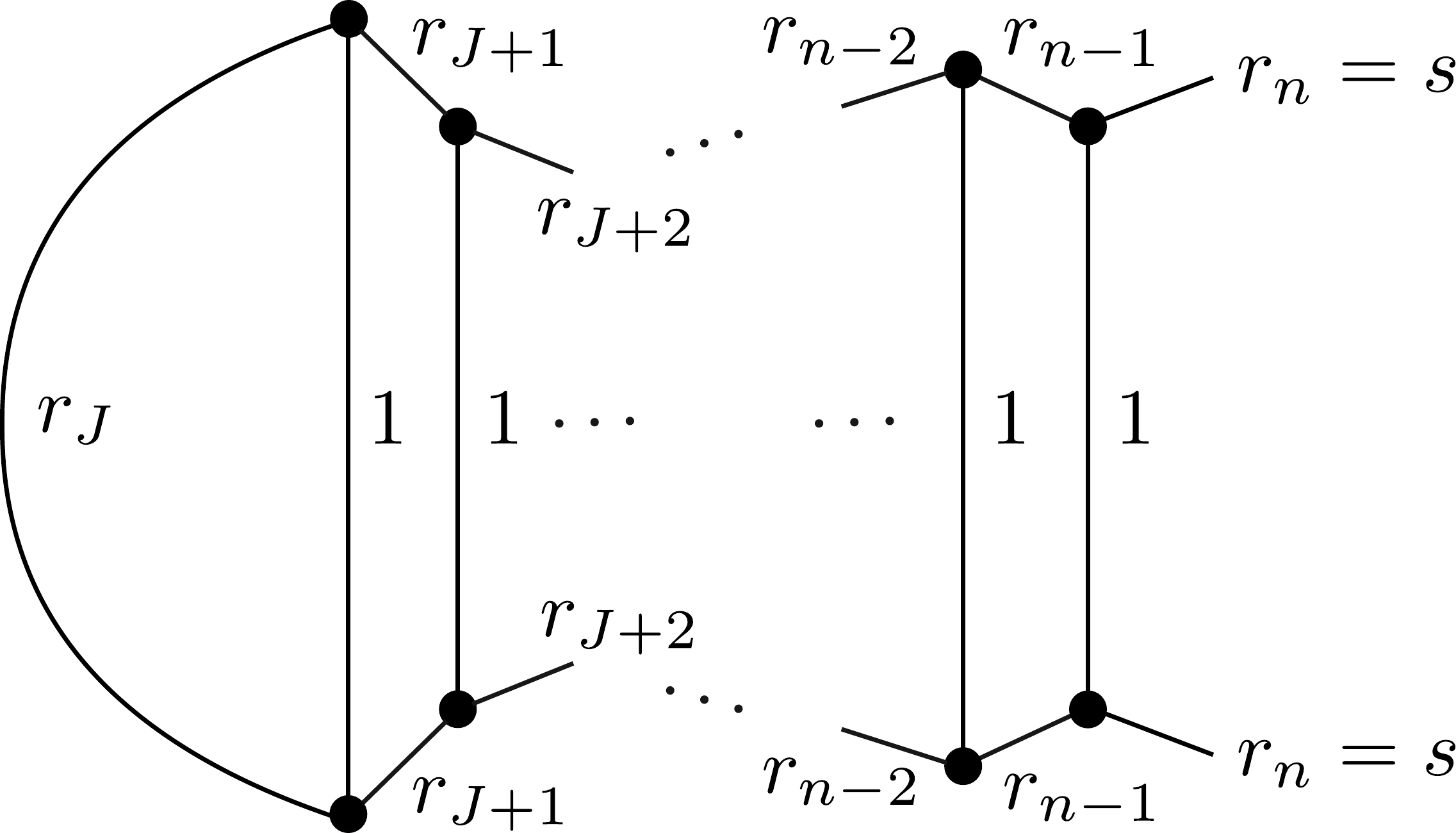}}} \; \right) \\[1em]
\label{DotBiIt} 
& \quad \overset{\eqref{LoopErasure1}}{=} \quad \BiForm{\beta'}{\gamma'} \,\, \prod_{j = J}^{n-1} \frac{ \ThetaNet( r_j, r_{j+1}, 1 )}{ (-1)^{r_{j + 1}} [r_{j + 1}+1]}.
\end{align}
Now, lemma~\ref{ThetaInFiniteAndNonzeroLem} shows that the product of the factors in~\eqref{DotBiIt} with $j > J$ is finite and does not vanish,
\begin{align} \label{NonVan} 
0 < \bigg| \prod_{j = J+1}^{n-1} \frac{ \ThetaNet( r_j, r_{j+1}, 1 )}{ (-1)^{r_{j + 1}} [r_{j + 1}+1]} \bigg| < \infty .
\end{align}
Using (\ref{ThetaFormula0},~\ref{rJ+1}), we find that the factor with $j = J$ equals
\begin{align} \label{ThetaCases} 
\frac{\ThetaNet(r_J,r_{J+1},1)}{(-1)^{r_{J+1}}[r_{J+1}+1]} \underset{\eqref{rJ+1}}{\overset{\eqref{ThetaFormula0}}{=}} 
\begin{cases} 
0, 
& \tau \in \smash{\mathsf{R}_n\super{s}}, \\ 1, & \tau \in \smash{\mathsf{M}_n\super{s}}. 
\end{cases}
\end{align}

Next, according to lemma~\ref{EasyRadLem} together with~\eqref{rJ+1}, we have $\rad \smash{\LS_J\super{r_J}} =\{0\}.$  
Hence, for each nonzero link state $\beta' \in \smash{\LS_J\super{r_J}}$, there exists a companion link state $\gamma_\beta' \in \smash{\LS_J\super{r_J}}$ such that 
\begin{align} \label{alphaprime} 
\BiForm{\beta'}{\gamma_\beta'} \neq 0. 
\end{align}
We define $\gamma_\beta$ to be the link state obtained by setting $\gamma' = \gamma_\beta'$ in
\begin{align}
\gamma \quad = \quad \vcenter{\hbox{\includegraphics[scale=0.275]{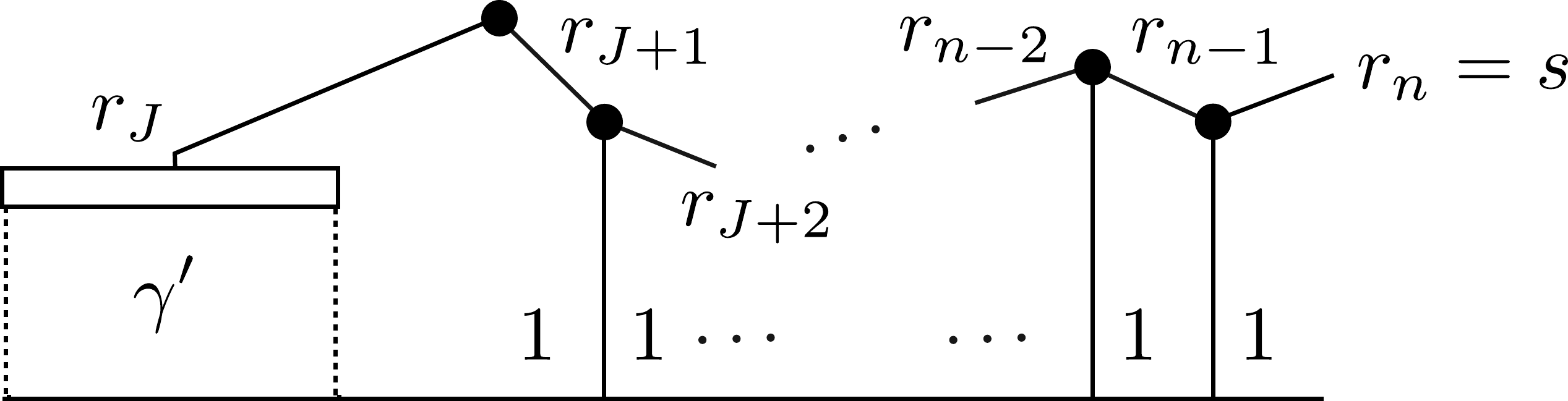} .}} 
\end{align} 
In conclusion, we may combine (\ref{DotBiIt0}--\ref{alphaprime}) to arrive with the following result: for every link state $\beta$ in the span of 
$\smash{ \big\{ \hcancel{\,\alpha} \,\big|\, \alpha \in \smash{\LP_n\super{s}}, \, \textnormal{tail}(\alpha) = \tau \big\}}$, we have
\begin{align} 
\BiForm{\beta}{\gamma} \quad \begin{cases} \text{ $=0 \,\,$ for all $\gamma \in \Span \big\{ \hcancel{\,\alpha} \,\big|\, \alpha \in \LP_n\super{s}, 
\textnormal{tail}(\alpha) = \tau \big\}$}, & \tau \in \smash{\mathsf{R}_n\super{s}}, \\ \text{ $\neq 0 \,\,$ if $\gamma = \gamma_\beta$}, 
& \tau \in \smash{\mathsf{M}_n\super{s}}.
\end{cases} 
\end{align}
This final result is equivalent to assertion~\eqref{radCases}.
\end{proof}

\begin{prop} \label{BigTailLem} 
The collection
\begin{align} \label{BigTail} 
\big\{ \hcancel{\,\alpha} \,\big|\, \alpha \in \smash{\LP_n\super{s}}, \, \textnormal{tail}(\alpha) \in \smash{\mathsf{R}_n\super{s}} \big\} 
\end{align}
is a basis for $\smash{\rad \LS_n\super{s}}$.
\end{prop}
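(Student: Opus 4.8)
The plan is to assemble the result directly from the structural lemmas already in hand, essentially reading it off the orthogonal decomposition of $\smash{\LS_n\super{s}}$ by tails. First I would invoke corollary~\ref{RadDirectCor}, which gives the direct-sum decomposition
\begin{align*}
\rad \LS_n\super{s} = \bigoplus_{\tau \, \in \, \smash{\mathsf{T}_n\super{s}}} \rad \Span \big\{ \hcancel{\,\alpha} \,\big|\, \alpha \in \smash{\LP_n\super{s}}, \, \textnormal{tail}(\alpha) = \tau \big\},
\end{align*}
so the whole problem reduces to understanding each summand. Then lemma~\ref{RadCasesLem} identifies each summand precisely: for a radical tail $\tau \in \smash{\mathsf{R}_n\super{s}}$ the summand is the entire span $\Span \big\{ \hcancel{\,\alpha} \,\big|\, \alpha \in \smash{\LP_n\super{s}}, \, \textnormal{tail}(\alpha) = \tau \big\}$, and for a moderate tail $\tau \in \smash{\mathsf{M}_n\super{s}}$ it is $\{0\}$. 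Combining these two facts gives
\begin{align*}
\rad \LS_n\super{s} = \bigoplus_{\tau \, \in \, \smash{\mathsf{R}_n\super{s}}} \Span \big\{ \hcancel{\,\alpha} \,\big|\, \alpha \in \smash{\LP_n\super{s}}, \, \textnormal{tail}(\alpha) = \tau \big\}.
\end{align*}

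Next I would argue that the right-hand side is exactly the span of the collection~\eqref{BigTail}, namely $\Span \big\{ \hcancel{\,\alpha} \,\big|\, \alpha \in \smash{\LP_n\super{s}}, \, \textnormal{tail}(\alpha) \in \smash{\mathsf{R}_n\super{s}} \big\}$: the elements of this set are partitioned by their tails, and the direct sum above collects the contributions tail-by-tail, so the two spans coincide. Finally, to upgrade "spanning set" to "basis" I would appeal to linear independence. By item~\ref{IndOrthBasisLemIt2} (equivalently item~\ref{IndOrthBasisLemIt3}) of proposition~\ref{IndOrthBasisLem}, the full collection $\big\{ \hcancel{\,\alpha} \,\big|\, \alpha \in \smash{\LP_n\super{s}} \big\}$ is linearly independent (it is in fact a basis of $\smash{\LS_n\super{s}}$ via lemma~\ref{InsProjAlphaLem}); hence any subcollection, in particular~\eqref{BigTail}, is linearly independent. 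A spanning, linearly independent set is a basis, which completes the proof.

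I do not anticipate a genuine obstacle here — the proposition is a bookkeeping corollary of lemmas~\ref{OrthTailsLem}, \ref{RadCasesLem}, corollary~\ref{RadDirectCor}, and proposition~\ref{IndOrthBasisLem}. The only point requiring a little care is making sure that the union over radical tails really does reassemble into the single span $\Span \big\{ \hcancel{\,\alpha} \,\big|\, \alpha \in \smash{\LP_n\super{s}}, \, \textnormal{tail}(\alpha) \in \smash{\mathsf{R}_n\super{s}} \big\}$ without overcounting: this is immediate because the tail of a link pattern is a well-defined function of the pattern, so the index sets $\{\alpha : \textnormal{tail}(\alpha) = \tau\}$ over distinct $\tau$ are disjoint, and since $\smash{\mathsf{R}_n\super{s}} \cup \smash{\mathsf{M}_n\super{s}} = \smash{\mathsf{T}_n\super{s}}$ (stopping condition~\ref{StopIt2} cannot occur when $\multii = \OneVec{n}$), nothing is lost and nothing is double-counted. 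I would write the argument in roughly a paragraph, chaining the cited results in the order: decomposition (corollary~\ref{RadDirectCor}), summand identification (lemma~\ref{RadCasesLem}), reassembly of the span, and independence (proposition~\ref{IndOrthBasisLem}).
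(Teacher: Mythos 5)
Your proposal is correct and follows the paper's own proof essentially verbatim: combine corollary~\ref{RadDirectCor} with lemma~\ref{RadCasesLem} to identify $\rad \smash{\LS_n\super{s}}$ as the span of~\eqref{BigTail}, then deduce linear independence from item~\ref{IndOrthBasisLemIt3} of proposition~\ref{IndOrthBasisLem}. Nothing is missing; your remark on the disjointness of the tail fibers is the same implicit bookkeeping the paper relies on.
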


\begin{proof}  
Combining corollary~\ref{RadDirectCor} with lemma~\ref{RadCasesLem}, we obtain the direct-sum decomposition
\begin{align} 
\rad \LS_n\super{s} \underset{\eqref{radCases}}{\overset{\eqref{radDirect}}{=}} \bigoplus_{\tau \, \in \, \smash{\mathsf{R}_n\super{s}}} 
\Span \{ \hcancel{\,\alpha} \,|\, \alpha \in \smash{\LP_n\super{s}}, \, \textnormal{tail}(\alpha) = \tau \} 
= \Span \big\{ \hcancel{\,\alpha} \,\big|\, \alpha \in \smash{\LP_n\super{s}}, \, \textnormal{tail}(\alpha) \in \smash{\mathsf{R}_n\super{s}} \big\}. 
\end{align}
Item~\ref{IndOrthBasisLemIt3} of proposition~\ref{IndOrthBasisLem} implies that the collection~\eqref{BigTail} is linearly independent.  
Hence, it is basis for $\smash{\rad \LS_n\super{s}}$.
\end{proof}

Using proposition~\ref{BigTailLem}, we next determine the dimension of $\rad \smash{\LS_n\super{s}}$. 
We recall from lemma~\ref{LSDimLem2} that the dimension of the standard module $\smash{\LS_n\super{s}}$ is $\smash{\Dim_n\super{s}}$, 
the unique solution to recursion problem~\eqref{Recursion}.
Then, we define the numbers $\smash{\hcancel{\Dim}_n\super{s}}$ for all integers $n \geq 0$ and
$s \in \DefectSet_n$ to be the unique solution to the recursion 
\begin{align} \label{RadRecurs} 
\hspace*{-3mm}
\hcancel{\Dim}_n\super{s} = \begin{cases} 0, & R_s = 0, \\ 
\hcancel{\Dim}_{n-1}\super{s-1} + \Dim_{n-1}\super{s+1}, & R_s = \pmin(q) - 1, \\ 
\hcancel{\Dim}_{n-1}\super{s-1} + \hcancel{\Dim}_{n-1}\super{s+1}, & R_s \in \{1, 2, \ldots, \pmin(q) -2\},
\end{cases}  
\quad \quad \text{and} \quad \qquad \hcancel{\Dim}_1\super{1} = 0 ,
\end{align}
with the convention that $\hcancel{\Dim}_{n-1}\super{-1} = 0$.
This recursion is equivalent to the recursion problem in~\cite[proposition~\red{4.5}]{rsa}.

The following lemma is similar to observation~\eqref{CountLP} in item~\ref{wmlIt4} of lemma~\ref{WalkMultiiLem} with $\multii = \OneVec{n}$.

\begin{lem} \label{RadWalkLem} 
We have 
\begin{align} \label{RadWalk} 
\hcancel{\Dim}_n\super{s} = \#\left\{\parbox{8.3cm}{\textnormal{walks $\varrho$ over $\OneVec{n}$ with defect $s$ and such that, when} \\ 
\textnormal{followed backward, hit height $\Delta_{k_s+1}$ before height $\Delta_{k_s}$}}\right\}.
\end{align} 
\end{lem}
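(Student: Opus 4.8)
The plan is to match both sides of \eqref{RadWalk} against the characterization of $\rad\LS_n\super{s}$ already obtained in proposition~\ref{BigTailLem}. By that proposition, the collection $\smash{\{\hcancel{\,\alpha}\mid\alpha\in\LP_n\super{s},\,\textnormal{tail}(\alpha)\in\mathsf{R}_n\super{s}\}}$ is a basis for $\rad\LS_n\super{s}$, so $\dim\rad\LS_n\super{s}$ equals the number of $(n,s)$-link patterns $\alpha$ whose tail is a radical tail. Combining this with the bijection $\alpha\mapsto\varrho_\alpha$ from item~\ref{wmlIt2} of lemma~\ref{WalkMultiiLem} (which by item~\ref{wmlIt3} preserves defects), $\dim\rad\LS_n\super{s}$ equals the number of walks $\varrho$ over $\OneVec{n}$ with defect $s$ whose tail $\textnormal{tail}(\varrho)$ is a radical tail. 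So the first task is to translate the condition ``$\textnormal{tail}(\varrho)\in\mathsf{R}_n\super{s}$'' into the geometric condition appearing on the right side of \eqref{RadWalk}, namely that running the walk backward from its right endpoint, it reaches height $\Delta_{k_s+1}$ strictly before height $\Delta_{k_s}$.

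First I would unpack definitions~\eqref{Jindex0}, \eqref{TailDef}, \eqref{rJ+1} in the special case $\multii=\OneVec{n}$. Here $h_{\min,j}(\varrho)$ and $h_{\max,j}(\varrho)$ differ by exactly one from $\smash{r_j}$ (since $\sIndex_{j+1}=1$), and $J=J_\varrho(q)$ is the largest index $j$ at which the walk ``touches or exceeds'' one of the critical lines $\Delta_{k_s}$, $\Delta_{k_s+1}$. Because consecutive heights differ by exactly one and stopping condition~\ref{StopIt2} cannot occur for all-ones multiindices (as noted above \eqref{Alltails}), at step $J$ the walk must actually sit on one of the two lines: $r_J=\Delta_{k_s+1}$ (radical tail) or $r_J=\Delta_{k_s}$ (moderate tail), and $r_{J+1}$ is determined by \eqref{rJ+1}. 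The key observation is then: scanning the walk backward from $r_n=s$, which sits strictly between $\Delta_{k_s}$ and $\Delta_{k_s+1}$ by \eqref{skSet}, the index $J$ is precisely the \emph{first} step (going backward) at which the walk meets $\{\Delta_{k_s},\Delta_{k_s+1}\}$; and $\textnormal{tail}(\varrho)\in\mathsf{R}_n\super{s}$ exactly when that first meeting is with $\Delta_{k_s+1}$. This is a short argument using the fact that $\sup$ of the defining set in \eqref{Jindex0} is attained and that a walk with unit steps cannot cross a horizontal line without landing on it.

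The remaining task is to verify that the counting function $N_n\super{s}$ on the right side of \eqref{RadWalk} satisfies the recursion \eqref{RadRecurs} and initial condition $N_1\super{1}=0$, since the solution to that recursion is unique. For the initial condition, the single walk over $\OneVec{1}$ is $(1)$, which has defect $1$ and no earlier steps, so it never hits $\Delta_{k_1+1}$ backward; hence $N_1\super{1}=0$. For the recursive step, I would condition on the penultimate height $r_{n-1}\in\{s-1,s+1\}$ (those being the elements of $\DefectSet_{n-1}\cap\DefectSet\sub{s,1}$), and split into the three cases of \eqref{RadRecurs} according to the value of $R_s$, i.e.\ the position of $s$ relative to the lines $\Delta_{k_s}=k_s\pmin(q)-1$ and $\Delta_{k_s+1}=(k_s+1)\pmin(q)-1$: (i) if $R_s=0$, then $s=\Delta_{k_s}$, so scanning backward the walk starts already on line $\Delta_{k_s}$ and the first meeting is with $\Delta_{k_s}$, never $\Delta_{k_s+1}$ first; hence $N_n\super{s}=0$; (ii) if $R_s=\pmin(q)-1$, then $s+1=\Delta_{k_s+1}$, so the branch $r_{n-1}=s+1$ already touches the upper line, contributing \emph{all} $\Dim_{n-1}\super{s+1}$ walks, while the branch $r_{n-1}=s-1$ contributes $N_{n-1}\super{s-1}$ (here one must check that for a walk through $s-1$, the critical lines relative to $s-1$ are the same as relative to $s$, which holds since $1\le R_s-1$ keeps $s-1$ in the same ``window'' $(\Delta_{k_s},\Delta_{k_s+1})$, so $k_{s-1}=k_s$); (iii) if $R_s\in\{1,\ldots,\pmin(q)-2\}$, then both $s-1$ and $s+1$ lie strictly inside the window, so $k_{s\pm1}=k_s$, and each branch contributes $N_{n-1}\super{s\mp?}$ — giving $N_{n-1}\super{s-1}+N_{n-1}\super{s+1}$. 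With the convention $N_{n-1}\super{-1}=0$ this matches \eqref{RadRecurs} term by term.

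The main obstacle I anticipate is the bookkeeping in case (ii)/(iii): one must carefully justify that for a walk whose last two heights are $s$ and $s\pm1$, the relevant critical lines are the lines $\Delta_{k_s},\Delta_{k_s+1}$ attached to the \emph{final} defect $s$, not lines attached to $s\pm1$, and that ``first backward meeting with the upper line'' for the truncated walk $(r_1,\ldots,r_{n-1})$ with target $s\pm1$ is the same event as for the full walk — which requires knowing $s\pm1$ still lies in the open interval $(\Delta_{k_s},\Delta_{k_s+1})$ in cases (ii) lower branch and (iii), and lies on the boundary in case (ii) upper branch. This is exactly controlled by the value of $R_s$ via \eqref{skDefn} and \eqref{skSet}, and is the reason the recursion splits into precisely these three cases; once this is spelled out the rest is immediate. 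Finally, uniqueness of the solution to \eqref{RadRecurs} (same shape as \eqref{Recursion}) forces $N_n\super{s}=\hcancel{\Dim}_n\super{s}$, completing the proof.
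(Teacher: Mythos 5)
Your proposal is correct and takes essentially the same route as the paper: one shows that the walk count on the right of \eqref{RadWalk} satisfies recursion \eqref{RadRecurs} (by conditioning on the last step of the walk) together with the initial condition $N_1\super{1}=0$, and then invokes uniqueness of the solution to that recursion — your three-case analysis in terms of $R_s$ merely spells out what the paper leaves terse. The material in your first two paragraphs relating the count to $\rad \LS_n\super{s}$ via proposition~\ref{BigTailLem} and radical tails is not needed for this lemma (it is the content of corollary~\ref{DnsLemAndRadDimCor}), but it does no harm.
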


\begin{proof} 
We recall that by~\eqref{CountLP}, the quantity $\smash{\Dim_n\super{s}}$ equals the number of walks over $\OneVec{n}$ with defect $s$.  
By considering the last step of an arbitrary walk in the set appearing on the right side of~\eqref{RadWalk}, we see that
the cardinality of this set satisfies recursion~\eqref{RadRecurs}. If $\multii = (1)$ and $s=1$, then there are no such walks, 
so the initial condition also holds.
We conclude that the right side of asserted equation~\eqref{RadWalk} equals 
the unique solution $\smash{\hcancel{\Dim}_n\super{s}}$ to recursion problem~\eqref{RadRecurs}.
\end{proof}

By (\ref{DeltaDefn},~\ref{skDefn}) we have $\Delta_{k_s} = \Delta_0 = -1$ if $s + 1 < \pmin(q)$. Because a walk over $\multii$ cannot have negative height, it follows that 
if $s + 1 < \pmin(q)$, then $\smash{\hcancel{\Dim}_n\super{s}}$ equals the number of walks $\varrho$ over $\multii$ with defect $s$ and that hit height $\pmin(q) - 1$.

\begin{cor} \label{DnsLemAndRadDimCor} 
We have
\begin{align} \label{DnsDotDefn} 
 \dim \rad \LS_n\super{s} = 
\#\mathsf{R}_n\super{s} = \hcancel{\Dim}_n\super{s}.
\end{align} 
\end{cor}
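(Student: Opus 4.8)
The statement is a corollary of the results already established, so the plan is essentially to assemble them. The quantity $\dim \rad \LS_n\super{s}$ is governed by proposition~\ref{BigTailLem}: the collection $\big\{ \hcancel{\,\alpha} \,\big|\, \alpha \in \LP_n\super{s}, \, \textnormal{tail}(\alpha) \in \mathsf{R}_n\super{s} \big\}$ is a basis for $\rad \LS_n\super{s}$, hence $\dim \rad \LS_n\super{s}$ equals the number of distinct radical tails $\tau \in \mathsf{R}_n\super{s}$ weighted by the number of link patterns with that tail. First I would observe that the map $\alpha \mapsto \hcancel{\,\alpha}$ is a bijection on $\LP_n\super{s}$ (it is the automorphism of lemma~\ref{InsProjAlphaLem} restricted to $\LP_n\super{s}$ as a basis), so the cardinality of the basis~\eqref{BigTail} equals $\#\big\{ \alpha \in \LP_n\super{s} \,\big|\, \textnormal{tail}(\alpha) \in \mathsf{R}_n\super{s} \big\}$. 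Under the walk bijection of item~\ref{wmlIt2} of lemma~\ref{WalkMultiiLem} with $\multii = \OneVec{n}$, this is exactly the number of walks $\varrho$ over $\OneVec{n}$ with defect $s$ whose tail is a radical tail.

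Next I would identify ``walk with a radical tail'' with the combinatorial condition in lemma~\ref{RadWalkLem}. By the definition of the tail~(\ref{Jindex0},~\ref{Jindex2}) and the dichotomy in~\eqref{rJ+1}, a tail of an $(n,s)$-link pattern is radical precisely when $r_J = \Delta_{k_s+1}$, i.e.\ when, reading the walk backward from its defect $s = r_n$, the first of the two heights $\Delta_{k_s}, \Delta_{k_s+1}$ to be attained is $\Delta_{k_s+1}$. (Here one uses $s \in \{\Delta_{k_s},\ldots,\Delta_{k_s+1}-1\}$ from~\eqref{skSet}, so that neither threshold is crossed ``at the start''.) This is verbatim the set counted by lemma~\ref{RadWalkLem}, whose cardinality is $\hcancel{\Dim}_n\super{s}$, the unique solution to recursion~\eqref{RadRecurs}. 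Chaining the three identifications gives
\begin{align}
\dim \rad \LS_n\super{s} = \#\big\{ \text{radical tails of $(n,s)$-link patterns} \big\} = \#\mathsf{R}_n\super{s} = \hcancel{\Dim}_n\super{s},
\end{align}
which is precisely~\eqref{DnsDotDefn}. The only slightly delicate point — and the place I expect to have to be careful — is the middle equality $\#\mathsf{R}_n\super{s} = \#\big\{\alpha \in \LP_n\super{s} : \textnormal{tail}(\alpha)\in\mathsf{R}_n\super{s}\big\}$: one must confirm that each radical tail $\tau = (r_J, r_{J+1}, \ldots, r_n)$ is the tail of a \emph{unique} $(n,s)$-link pattern, not merely of some link pattern. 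This follows because a walk over $\OneVec{n}$ is determined by listing all its heights $r_1, \ldots, r_n$, and once $J = J_\alpha(q)$ is fixed the condition ``$(r_1,\ldots,r_J)$ extends to a walk whose $J$:th index is the first crossing point'' pins down nothing about $r_J,\ldots,r_n$ beyond $\tau$ itself — so the map $\alpha \mapsto \textnormal{tail}(\alpha)$ restricted to patterns with radical tail is injective onto $\mathsf{R}_n\super{s}$, by the very definition~\eqref{Alltails} of $\mathsf{R}_n\super{s}$ as the image of that map. Thus the count is a genuine bijection and no overcounting occurs.

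In summary, the proof is short: invoke proposition~\ref{BigTailLem} for the basis, lemma~\ref{InsProjAlphaLem} and item~\ref{wmlIt2} of lemma~\ref{WalkMultiiLem} to pass to walks, the definition~\eqref{rJ+1} of radical versus moderate tails to match the stopping condition, and finally lemma~\ref{RadWalkLem} to evaluate the count as $\hcancel{\Dim}_n\super{s}$. No new estimates or constructions are needed; the content is entirely in citing the preceding machinery in the correct order and verifying the bijectivity of $\alpha \mapsto \textnormal{tail}(\alpha)$ on the relevant subset.
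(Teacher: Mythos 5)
Your chain of citations is exactly the paper's own two-line proof: proposition~\ref{BigTailLem} supplies the basis of $\rad \LS_n\super{s}$, the walk bijection (item~\ref{wmlIt2} of lemma~\ref{WalkMultiiLem}) turns the count of basis elements into a count of walks, and lemma~\ref{RadWalkLem} identifies that count with $\hcancel{\Dim}_n\super{s}$. But the one point you single out as delicate, and then claim to settle, is false: the map $\alpha \mapsto \textnormal{tail}(\alpha)$ is \emph{not} injective on the set of $(n,s)$-link patterns with radical tail. The tail only records the walk from its last critical index $J$ onward, while the head $(r_1,\ldots,r_{J-1})$ is completely unconstrained — in particular it may itself touch the critical heights any number of times, since $J$ is by definition~\eqref{Jindex0} the \emph{last} such index. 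Concretely, take $\pmin(q)=3$, $n=8$, $s=0$, so $\Delta_{k_s}=-1$ and $\Delta_{k_s+1}=2$: the three walks $(1,0,1,2,1,0,1,0)$, $(1,2,1,2,1,0,1,0)$, and $(1,2,3,2,1,0,1,0)$ over $\OneVec{8}$ all have $J=4$ and the identical radical tail $(2,1,0,1,0)$, so they give three distinct basis elements of $\rad\LS_8\super{0}$ mapping to a single element of $\mathsf{R}_8\super{0}$. (In fact $\dim\rad\LS_8\super{0}=13$ here, while there are only three distinct radical tails, namely those with $J\in\{2,4,6\}$.) Your justification is also circular: that $\mathsf{R}_n\super{s}$ is defined in~\eqref{Alltails} as the image of $\alpha\mapsto\textnormal{tail}(\alpha)$ gives surjectivity onto $\mathsf{R}_n\super{s}$, not injectivity.

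The substantive equality, $\dim\rad\LS_n\super{s}=\hcancel{\Dim}_n\super{s}$, does not need that step at all: the basis of proposition~\ref{BigTailLem} is indexed by link patterns (equivalently, by walks) whose tail is radical, the dichotomy~\eqref{rJ+1} identifies ``radical tail'' with ``when followed backward, the walk hits $\Delta_{k_s+1}$ before $\Delta_{k_s}$,'' and lemma~\ref{RadWalkLem} counts exactly those walks. The middle expression $\#\mathsf{R}_n\super{s}$ must therefore be read as the number of link patterns with radical tail — i.e.\ counted with multiplicity over tails — which is how the paper's proof implicitly uses it when it says the first equality is immediate from proposition~\ref{BigTailLem}; read as the cardinality of the set of distinct tails it would undercount, as the example shows. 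So drop the injectivity argument and state the middle quantity in the multiplicity sense; with that amendment your proposal coincides with the paper's proof.
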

\begin{proof} 
The first equality in~\eqref{DnsDotDefn} immediately follows from proposition~\ref{BigTailLem}.
The second equality follows from identity~\eqref{RadWalk} of lemma~\ref{RadWalkLem} and the definition of a radical tail.
\end{proof}

\subsection{Valenced radical at roots of unity} \label{rofSect2}

In this section, we find the dimension of and a basis for the radical of $\smash{\LS_\multii\super{s}}$ 
for all multiindices $\multii \in \{\OneVec{0}\} \cup \smash{\bZpos^\#}$ and integers 
$s \in \DefectSet_\multii$. 
For convenience, we assume that $\ppmin(q) < \infty$, although this condition is not necessary for the results in this section to be true. 
As before, we use the integers $k_s$ and $R_s$ defined in~\eqref{skDefn}.

To begin, we prove in corollary~\ref{EmbProjCor} in appendix~\ref{AppWJ} that
the radical of $\smash{\LS_\multii\super{s}}$ is given by a projection of
the corresponding radical of $\smash{\LS_{\Summed_\multii}\super{s}}$ 
under the map defined via~(\ref{WJProjHatEmb},~\ref{ProjHatDef1-1}),
\begin{align} \label{EmbProj2}
\rad \LS_\multii\super{s} \overset{\eqref{EmbProj22}}{=} \WJProjHat_\multii \rad \LS_{\Summed_\multii}\super{s} .
\end{align}

\begin{cor} \label{EasyRadCor} 
Suppose $\max \multii < \ppmin(q)$. If $\pmin(q) \,|\, (s+1)$, then $\rad \smash{\LS_\multii\super{s}} = \{0\}$.
\end{cor}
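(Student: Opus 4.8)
The plan is to reduce the claim for a general multiindex $\multii$ to the already-established case $\multii = \OneVec{n}$, by invoking the projection identity~\eqref{EmbProj2} stated just above. Concretely, suppose $\max \multii < \ppmin(q)$ and $\pmin(q) \,|\, (s+1)$. First I would observe that, since $\pmin(q) \,|\, (s+1)$, lemma~\ref{EasyRadLem} applies with $n = \Summed_\multii$, giving $\rad \smash{\LS_{\Summed_\multii}\super{s}} = \{0\}$. (Note that lemma~\ref{EasyRadLem} has no hypothesis relating $\Summed_\multii$ to $\ppmin(q)$, so this step is unconditional.)

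Next I would feed this into the identity~\eqref{EmbProj2}, which says $\rad \smash{\LS_\multii\super{s}} = \WJProjHat_\multii \rad \smash{\LS_{\Summed_\multii}\super{s}}$. Since the right-hand side is $\WJProjHat_\multii$ applied to the zero space, it is $\{0\}$, which is exactly the assertion. So the proof is essentially two lines:
\begin{align}
\rad \smash{\LS_\multii\super{s}}
\overset{\eqref{EmbProj2}}{=} \WJProjHat_\multii \rad \smash{\LS_{\Summed_\multii}\super{s}}
\overset{\text{lem.}~\ref{EasyRadLem}}{=} \WJProjHat_\multii \{0\} = \{0\}.
\end{align}

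The only subtle point — and the one I would flag as the place where care is needed — is whether identity~\eqref{EmbProj2} is legitimately available here. It is attributed to corollary~\ref{EmbProjCor} in appendix~\ref{AppWJ}, which presumably requires the linear surjection $\WJProjHat_\multii (\,\cdot\,) \colon \LS_{\Summed_\multii} \longrightarrow \LS_\multii$ from~\eqref{ProjHatDef1-1} to be well-defined; this in turn needs $\WJProj_\multii$ and $\WJEmb_\multii$ to exist, which holds precisely under the standing hypothesis $\max \multii < \ppmin(q)$ (so that all Jones-Wenzl projector boxes appearing in~\eqref{WJCompProj}--\eqref{WJProjHatEmb} are defined). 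Thus the hypothesis $\max \multii < \ppmin(q)$ in the corollary statement is exactly what makes~\eqref{EmbProj2} applicable, and no further restriction (in particular, not $\Summed_\multii < \ppmin(q)$) is needed. Assuming~\eqref{EmbProj2} as granted, there is no real obstacle; the content of the corollary is entirely in lemma~\ref{EasyRadLem} together with the projection formula, and the proof is a direct substitution.
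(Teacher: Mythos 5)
Your proposal is correct and is exactly the paper's argument: the paper's proof reads "This immediately follows from lemma~\ref{EasyRadLem} with~\eqref{EmbProj2}," i.e., it applies lemma~\ref{EasyRadLem} at $n = \Summed_\multii$ and then pushes the trivial radical through the projection identity $\rad \LS_\multii\super{s} = \WJProjHat_\multii \rad \LS_{\Summed_\multii}\super{s}$ of corollary~\ref{EmbProjCor}. Your observation that the hypothesis $\max \multii < \ppmin(q)$ is needed only to make $\WJProjHat_\multii$ and hence~\eqref{EmbProj2} available, with no condition on $\Summed_\multii$, is also consistent with the paper.
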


\begin{proof} 
This immediately follows from lemma~\ref{EasyRadLem} with~\eqref{EmbProj2}.
\end{proof}

We first recall from section~\ref{ConformalBlocksSect} the definition 
(\ref{TailDef},~\ref{Jindex2}) of the tail of a valenced  link pattern $\alpha$, and we denote
\begin{align} \label{tails} 
\mathsf{T}_\multii\super{s} := \big\{ \text{tail}(\alpha) \,\big|\, \alpha \in \smash{\LP_\multii\super{s}} \big\}. 
\end{align} 
Recalling definition~\ref{TrivalentLinkStateDef} of the trivalent link state $\hcancel{\,\alpha}$ 
for any $(\multii,s)$-valenced link pattern $\alpha$, we say that the tail of $\alpha$ is 
\begin{enumerate}
\itemcolor{red}
\item \emph{type-one radical tail} if condition~\ref{StopIt1} 
in definition~\ref{TrivalentLinkStateDef} occurs in the determination of the trivalent link state $\hcancel{\alpha}$,
\item \emph{type-two radical tail} if condition~\ref{StopIt2} 
in definition~\ref{TrivalentLinkStateDef} occurs in the determination of the trivalent link state $\hcancel{\alpha}$,
\item \emph{moderate tail} if condition~\ref{StopIt3} 
in definition~\ref{TrivalentLinkStateDef} occurs in the determination of the trivalent link state $\hcancel{\alpha}$.
\end{enumerate}
Figures~\ref{fig4-1},~\ref{fig4-2}, and~\ref{fig4-3} show examples of these tails.
We also say that the tail of $\alpha$ is a moderate tail if none of these stopping condition occurs, i.e., if $J_\alpha(q) = -\infty$.  Finally, we set
\begin{align} 
\mathsf{R}_{\multii,1}\super{s} &:= \big\{ \text{tail}(\alpha) \,\big|\, \alpha \in \smash{\LP_\multii\super{s}}, \, \text{$\text{tail}(\alpha)$ is a type-one radical tail} \big\}, \\ 
\mathsf{R}_{\multii,2}\super{s} &:= \big\{ \text{tail}(\alpha) \,\big|\, \alpha \in \smash{\LP_\multii\super{s}}, \, \text{$\text{tail}(\alpha)$ is a type-two radical tail} \big\}, \\ 
\mathsf{M}_\multii\super{s} &:= \big\{ \text{tail}(\alpha) \,\big|\, \alpha \in \smash{\LP_\multii\super{s}}, \, \text{$\text{tail}(\alpha)$ is a moderate tail} \big\} ,
\end{align}
we denote
\begin{align} \label{R12} 
\mathsf{R}_\multii\super{s} := \mathsf{R}_{\multii,1}\super{s} \cup \mathsf{R}_{\multii,2}\super{s} , 
\end{align} 
and we call an element of this set a ``radical tail." 
The union of these sets equals the collection of all tails $\smash{\mathsf{T}_\multii\super{s}}$:
\begin{align} \label{AllTails}
\smash{\mathsf{T}_\multii\super{s}} = \smash{\mathsf{R}_\multii\super{s}} \cup  \smash{\mathsf{M}_\multii\super{s}}.
\end{align}

Our next goal, theorem~\ref{BigTailLem2}, is to prove that the set 
$\smash{\big\{ \hcancel{\,\alpha} \,\big|\, \alpha \in \smash{\LP_\multii\super{s}}, \, \textnormal{tail}(\alpha) \in \mathsf{R}_\multii\super{s} \big\} }$ 
is a basis for the radical of $\smash{\LS_\multii\super{s}}$.
The logic of this work is similar to the proof of proposition~\ref{BigTailLem}: 
we decompose the radical of $\smash{\LS_\multii\super{s}}$ into a direct sum of 
certain subspaces labeled by either type-one radical tails, type-two radical tails, or moderate tails, 
and we explicitly determine these subspaces.  
To establish this, we employ proposition~\ref{BigTailLem} from section~\ref{rofSect1}, 
which already gives the radical of the related standard module $\smash{\LS_{\Summed_\multii}\super{s}}$.  

\begin{lem} \label{FirstOrthLem} 
Suppose $\max \multii < \ppmin(q)$.  We have
\begin{align} \label{FirstOrth} 
\big\{ \hcancel{\,\alpha} \,\big|\, \alpha \in \smash{\LP_\multii\super{s}}, \, \textnormal{tail}(\alpha) \in \mathsf{R}_{\multii,1}\super{s} \big\} 
\subset \rad \smash{\LS_\multii\super{s}}. 
\end{align} 
\end{lem}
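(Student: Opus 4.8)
The plan is to transfer the statement to the ordinary standard module $\smash{\LS_{\Summed_\multii}\super{s}}$, whose radical is completely described by proposition~\ref{BigTailLem}, via the composite embedder $\WJEmb_\multii$. Fix $\alpha \in \smash{\LP_\multii\super{s}}$ with $\textnormal{tail}(\alpha) \in \smash{\mathsf{R}_{\multii,1}\super{s}}$, so that $\hcancel{\,\alpha} \in \smash{\LS_\multii\super{s}}$ by construction. First I would reduce to the ordinary case: by the defining formula~\eqref{LSBiFormExt} of the valenced bilinear form, $\BiForm{\hcancel{\,\alpha}}{\gamma} = \BiForm{\WJEmb_\multii\hcancel{\,\alpha}}{\WJEmb_\multii\gamma}$ for every $\gamma \in \smash{\LS_\multii\super{s}}$, so it is enough to show $\WJEmb_\multii\hcancel{\,\alpha} \in \rad \smash{\LS_{\Summed_\multii}\super{s}}$ --- an element of the radical of the larger module annihilates, in the bilinear form, every link state of $\smash{\LS_{\Summed_\multii}\super{s}}$, in particular every $\WJEmb_\multii\gamma$. (Equivalently, once this is known one may afterwards apply $\WJProjHat_\multii$ and use~\eqref{IdComp} together with identity~\eqref{EmbProj2}.)

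The core step is to identify $\WJEmb_\multii\hcancel{\,\alpha}$. Unwinding definition~\ref{TrivalentLinkStateDef}, $\hcancel{\,\alpha}$ is built from the lowest-walk representation $\varrho^{\,\downarrow}_\alpha$, a walk over $\OneVec{\Summed_\multii}$: its vertices are closed from the right up to the stopping point and left open on the head in the form~\eqref{Generic}. Applying $\WJEmb_\multii$ replaces the $i$:th valenced node by the Jones--Wenzl projector $\WJProj_{\sIndex_i}$ followed by a cable of size $\sIndex_i$; the projectors $\WJProj_{\sIndex_i}$ have size $<\ppmin(q)$ by the hypothesis $\max\multii<\ppmin(q)$, while the projectors appearing along the tail of $\hcancel{\,\alpha}$ are, when their size is $\geq\ppmin(q)$, read through the analytic-continuation definition of remark~\ref{TrivLinkStateRem}. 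I would then decompose the extra projectors $\WJProj_{\sIndex_i}$ sitting over the head: each contributes the identity plus turn-back terms, which merely reroute links inside the head and leave the closed three-vertices of the tail --- in particular the leftmost tail vertex of type $\langle\Delta_{k_s+1},\,\Delta_{k_s+1}-1,\,1\rangle$ forced by stopping condition~\ref{StopIt1} --- untouched. The upshot is that $\WJEmb_\multii\hcancel{\,\alpha}$ is a linear combination of ordinary trivalent link states $\hcancel{\,\beta}$, $\beta\in\smash{\LP_{\Summed_\multii}\super{s}}$, all sharing one common tail $\tau$ (the image of $\textnormal{tail}(\alpha)$ under the $\varrho^{\,\downarrow}$-substitution), and by~\eqref{Jindex0} and~\eqref{rJ+1} this $\tau$ is a radical tail, i.e.\ $\tau\in\smash{\mathsf{R}_{\Summed_\multii}\super{s}}$, since its leftmost height equals $\Delta_{k_s+1}$.

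To finish I would invoke the ordinary-case analysis: corollary~\ref{RadDirectCor} together with lemma~\ref{RadCasesLem} shows that, for a radical tail $\tau$, the entire subspace $\Span\smash{\big\{\hcancel{\,\beta}\,\big|\,\beta\in\LP_{\Summed_\multii}\super{s},\,\textnormal{tail}(\beta)=\tau\big\}}$ lies inside $\rad\smash{\LS_{\Summed_\multii}\super{s}}$. Hence $\WJEmb_\multii\hcancel{\,\alpha}\in\rad\smash{\LS_{\Summed_\multii}\super{s}}$, and the reduction of the first paragraph gives $\hcancel{\,\alpha}\in\rad\smash{\LS_\multii\super{s}}$, which is the assertion; running this over all $\alpha$ with a type-one radical tail yields the stated inclusion of collections.

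The hard part will be the middle step: verifying rigorously that pushing the composite embedder's Jones--Wenzl projectors into the diagram produces only ordinary trivalent link states whose tail remains the radical tail dictated by condition~\ref{StopIt1} --- i.e.\ that those projectors interact solely with the head and never disturb the tail --- together with the bookkeeping needed to run this argument through the analytic-continuation definition (remark~\ref{TrivLinkStateRem}, lemma~\ref{LimitLem}) in the regime $\Summed_\multii\geq\ppmin(q)$, checking that the continuation commutes with the diagrammatic manipulations above.
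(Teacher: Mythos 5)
Your proposal is correct in outline and is essentially a mirrored version of the paper's argument: both transfer the statement to the ordinary standard module $\smash{\LS_{\Summed_\multii}\super{s}}$ and then invoke the ordinary-case description of the radical (proposition~\ref{BigTailLem}, equivalently corollary~\ref{RadDirectCor} with lemma~\ref{RadCasesLem}). The difference is the direction of transfer. You push $\hcancel{\,\alpha}$ upward with the embedder $\WJEmb_\multii$, reduce via the defining formula~\eqref{LSBiFormExt} (which lets you bypass corollary~\ref{EmbProjCor}), and then must control the full Jones--Wenzl decomposition of the head projectors to see that $\WJEmb_\multii\hcancel{\,\alpha}$ lies in the span of ordinary trivalent link states sharing the fixed radical tail --- exactly the step you defer as ``the hard part,'' and which carries all the real content. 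The paper goes downward instead: it takes the single ordinary link pattern $\beta$ obtained by unbundling the nodes of $\alpha$, notes that $\beta$ has a radical tail, records the exact identity $\hcancel{\,\alpha} = \WJProjHat_\multii\hcancel{\,\beta}$ of~\eqref{alphID} (one term, no linear combination to control), and concludes with $\rad\LS_\multii\super{s} = \WJProjHat_\multii\rad\LS_{\Summed_\multii}\super{s}$ from corollary~\ref{EmbProjCor}. So the paper's route compresses the delicate diagrammatics into one identity between two explicit trivalent states, while yours trades that for a heavier projector-decomposition argument plus the bookkeeping with the analytic continuation of remark~\ref{TrivLinkStateRem}. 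If you want to keep your embedding direction but avoid that burden, observe that the paper's identity together with~\eqref{IdComp} gives $\WJEmb_\multii\hcancel{\,\alpha} = \WJProj_\multii\hcancel{\,\beta}$, and then invariance~\eqref{InvarProp} with $\WJProj_\multii^\dagger = \WJProj_\multii$ and $\hcancel{\,\beta}\in\rad\smash{\LS_{\Summed_\multii}\super{s}}$ immediately shows that $\WJEmb_\multii\hcancel{\,\alpha}$ is orthogonal to all of $\smash{\LS_{\Summed_\multii}\super{s}}$, with no analysis of how the embedder's projectors interact with the tail.
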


\begin{proof} 
Let $\beta \in \smash{\LP_{\Summed_\multii}\super{s}}$ be the link pattern obtained by separating the $i$:th node of $\alpha \in \smash{\LP_\multii\super{s}}$ into $\sIndex_i$ adjacent nodes, for each $i \in \{1,2,\ldots,\np_\multii\}$, 
without changing the connectivities of the links in $\alpha$.  Then we have
\begin{align} \label{alphID} 
\begin{cases} 
\alpha = \WJProjHat_\multii \beta \\ \text{tail}(\alpha) = \mathsf{R}_{\multii,1}\super{s} 
\end{cases} 
\qquad \Longrightarrow \qquad 
\hcancel{\alpha} = \WJProjHat_\multii \hcancel{\beta} ,
\end{align} 
where $\WJProjHat_\multii$ is the map defined via~(\ref{WJProjHatEmb},~\ref{ProjHatDef1-1}).
Now, if the valenced link pattern $\alpha$ has a type-one radical tail, then the link pattern $\beta$ necessarily has a radical tail.  
Hence, by proposition~\ref{BigTailLem} and corollary~\ref{EmbProjCor} in appendix~\ref{AppWJ}, we have  
\begin{align} 
\hcancel{\beta} \in \rad \smash{\LS_{\Summed_\multii}\super{s}} \qquad 
\underset{\eqref{EmbProj22}}{\overset{\eqref{EmbProj2}}{\Longrightarrow}} \qquad \hcancel{\alpha} 
\overset{\eqref{alphID}}{=} \WJProjHat_\multii \hcancel{\beta} \in \rad \smash{\LS_\multii\super{s}}, 
\end{align} 
so any trivalent link state $\hcancel{\alpha}$ derived from a $(\multii,s)$-valenced link state $\alpha$ with type-one radical tail 
belongs to $\rad \smash{\LS_\multii\super{s}}$.
\end{proof}

\begin{lem} \label{JDeltaLem} 
Suppose $\max \multii < \ppmin(q)$.  The following statements are equivalent: 
\begin{enumerate}
\itemcolor{red}
\item \label{JDeltaIt1} We have $s = \Delta_{k_s}$ \textnormal{(}i.e., $R_s = 0$\textnormal{)}.
\item \label{JDeltaIt2} We have $J_\alpha(q) = \np_\multii$, for all valenced link patterns $\alpha \in \smash{\LP_\multii\super{s}}$.
\item \label{JDeltaIt3} We have $J_\alpha(q) = \np_\multii$, for some valenced link pattern $\alpha \in \smash{\LP_\multii\super{s}}$.
\end{enumerate}
Furthermore, if any one of statements~\ref{JDeltaIt1}--\ref{JDeltaIt3} holds, then we have 
\begin{align} \label{BreakDown} 
\mathsf{T}_\multii\super{s} = \mathsf{M}_\multii\super{s}, \qquad \qquad 
\mathsf{R}_{\multii,1}\super{s} = \mathsf{R}_{\multii,2}\super{s} = \emptyset, \qquad \qquad \textnormal{and} \qquad \qquad 
\rad \smash{\LS_\multii\super{s}} = \{0\}. 
\end{align} 
\end{lem}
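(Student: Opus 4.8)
The plan is to establish the cyclic chain of implications \ref{JDeltaIt1} $\Rightarrow$ \ref{JDeltaIt2} $\Rightarrow$ \ref{JDeltaIt3} $\Rightarrow$ \ref{JDeltaIt1}, then deduce~\eqref{BreakDown} from \ref{JDeltaIt1}. The implication \ref{JDeltaIt2} $\Rightarrow$ \ref{JDeltaIt3} is trivial (the set $\smash{\LP_\multii\super{s}}$ is nonempty since $s \in \DefectSet_\multii$), so the content is in the other two implications plus the final assertion.

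\textbf{First I would prove \ref{JDeltaIt1} $\Rightarrow$ \ref{JDeltaIt2}.} Assume $s = \Delta_{k_s}$, so by~\eqref{skSet} the defect $s$ is the smallest value in $\{\Delta_{k_s}, \ldots, \Delta_{k_s+1}-1\}$. Fix any $\alpha \in \smash{\LP_\multii\super{s}}$ with walk $\varrho_\alpha = (r_1, \ldots, r_{\np_\multii})$. I need to show the supremum defining $J_\alpha(q)$ in~\eqref{Jindex0} is attained at $j = \np_\multii$, i.e.\ that the condition $h_{\min, \np_\multii}(\varrho) \leq \Delta_{k_s}$ or $\Delta_{k_s+1} \leq h_{\max, \np_\multii}(\varrho)$ holds. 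By~\eqref{minmaxh} and~\eqref{minmaxh2}, $h_{\min,\np_\multii}(\varrho) = h_{\max,\np_\multii}(\varrho) = r_{\np_\multii} = s = \Delta_{k_s}$, so the first condition holds with equality. Hence $J_\alpha(q) = \np_\multii$. (Note $\np_\multii$ is always an upper bound for the supremum by the definition's indexing, so this also forces $J_\alpha(q) = \np_\multii$ exactly.)

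\textbf{Next I would prove \ref{JDeltaIt3} $\Rightarrow$ \ref{JDeltaIt1}.} Suppose $J_\alpha(q) = \np_\multii$ for some $\alpha \in \smash{\LP_\multii\super{s}}$, with walk $\varrho = (r_1,\ldots,r_{\np_\multii})$, so that either $h_{\min,\np_\multii}(\varrho) \leq \Delta_{k_s}$ or $\Delta_{k_s+1} \leq h_{\max,\np_\multii}(\varrho)$. Both min and max at step $\np_\multii$ equal $r_{\np_\multii} = s$, so either $s \leq \Delta_{k_s}$ or $s \geq \Delta_{k_s+1}$. But~\eqref{skSet} gives $\Delta_{k_s} \leq s \leq \Delta_{k_s+1}-1 < \Delta_{k_s+1}$, so the second alternative is impossible and the first forces $s = \Delta_{k_s}$, i.e.\ $R_s = 0$ by~\eqref{skDefn}. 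This closes the cycle.

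\textbf{Finally, the assertion~\eqref{BreakDown}.} Assume $s = \Delta_{k_s}$. By the equivalence just proved, $J_\alpha(q) = \np_\multii$ for every $\alpha \in \smash{\LP_\multii\super{s}}$. I would then check that when $J = \np_\multii$, stopping condition~\ref{StopIt3} always applies: the tail $(r_J, \ldots, r_{\np_\multii})$ reduces to the single height $r_{\np_\multii} = s = \Delta_{k_s}$, and definition~\ref{TrivalentLinkStateDef} (as noted in the sentence ``If $J = 0$, then item~\ref{StopIt3} always gives the stopping condition,'' whose argument applies verbatim when the tail is a single step of height $\Delta_{k_s}$) terminates via condition~\ref{StopIt3}. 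Hence every tail in $\smash{\mathsf{T}_\multii\super{s}}$ is a moderate tail, giving $\smash{\mathsf{T}_\multii\super{s}} = \smash{\mathsf{M}_\multii\super{s}}$ and $\smash{\mathsf{R}_{\multii,1}\super{s}} = \smash{\mathsf{R}_{\multii,2}\super{s}} = \emptyset$. For the triviality of the radical: since $s = \Delta_{k_s}$ means $R_s = 0$, and by~\eqref{DeltaDefn} $\Delta_{k_s} = k_s\pmin(q) - 1$ (or $-1$), so $\pmin(q) \mid (s+1)$; then corollary~\ref{EasyRadCor} gives $\rad \smash{\LS_\multii\super{s}} = \{0\}$.

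\textbf{Expected main obstacle.} The only delicate point is making the claim ``$J = \np_\multii$ forces stopping condition~\ref{StopIt3}'' fully rigorous against definition~\ref{TrivalentLinkStateDef}, since that definition describes the tail-construction procedurally (walking leftward from the rightmost vertex). I would need to verify carefully that when the tail consists of the single step at height $r_{\np_\multii} = \Delta_{k_s}$, the procedure indeed stops immediately by item~\ref{StopIt3} rather than by~\ref{StopIt1} or~\ref{StopIt2} — this follows because $\Delta_{k_s} < \Delta_{k_s+1}$ rules out~\ref{StopIt1} at that height, and the walk $\varrho^{\, \downarrow}_\alpha$ passes through height $\Delta_{k_s}$ at the final step by construction. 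Everything else is a direct unwinding of the definitions~\eqref{Jindex0},~\eqref{minmaxh},~\eqref{minmaxh2}, and~\eqref{skSet}.
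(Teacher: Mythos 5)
Your proposal is correct and follows essentially the same route as the paper's proof: the cycle \ref{JDeltaIt1} $\Rightarrow$ \ref{JDeltaIt2} $\Rightarrow$ \ref{JDeltaIt3} $\Rightarrow$ \ref{JDeltaIt1} via $h_{\min,\np_\multii}(\varrho) = h_{\max,\np_\multii}(\varrho) = s$ together with~\eqref{skSet}, and then~\eqref{BreakDown} by observing that the tail is moderate by definition and invoking corollary~\ref{EasyRadCor} since $\pmin(q) \mid (s+1)$. Your extra care about why stopping condition~\ref{StopIt3} is the one that occurs is a harmless elaboration of what the paper dismisses with ``by definition.''
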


\begin{proof} 
First, it is evident from definition (\ref{Jindex0},~\ref{Jindex2}) of the index $J_\alpha(q)$ and definitions~(\ref{minmaxh},~\ref{minmaxh2}) of the heights 
$h_{\min,\np_\multii}(\varrho_\alpha)$ and $h_{\max,\np_\multii}(\varrho_\alpha)$ that item~\ref{JDeltaIt1} implies item~\ref{JDeltaIt2}.  
Also, it is obvious that item~\ref{JDeltaIt2} implies item~\ref{JDeltaIt3}.

Next, we prove that item~\ref{JDeltaIt3} implies item~\ref{JDeltaIt1}.  By definition (\ref{Jindex0},~\ref{Jindex2}) of the index $J_\alpha(q)$ and the fact that $\np_\multii$ is 
the maximal value that this quantity may equal, we have
\begin{align} \label{Jconds} 
J_\alpha(q) \overset{\eqref{Jindex2}}{=} J_{\varrho_\alpha}(q) = \np_\multii \qquad 
\overset{\eqref{Jindex0}}{\Longrightarrow} \qquad 
\text{$s \overset{\eqref{minmaxh}}{=} h_{\min,\np_\multii}(\varrho_\alpha) \overset{\eqref{Jindex0}}{\leq} \Delta_{k_s}$ \quad \text{or} \quad $\Delta_{k_s+1} 
\overset{\eqref{Jindex0}}{\leq} h_{\max, \np_\multii}(\varrho_\alpha) \overset{\eqref{minmaxh2}}{=} s.$} 
\end{align} 
By~\eqref{skSet}, it is impossible to have $\Delta_{k_s+1} \leq s$ or $s < \Delta_{k_s}$.  Hence,~\eqref{Jconds} implies item~\ref{JDeltaIt1}.

Finally, with $s = \Delta_{k_s}$, we have by definition that $\text{tail}(\alpha) \in \smash{\mathsf{M}_\multii\super{s}}$ for all $(\multii,s)$-valenced link patterns $\alpha$.  Furthermore, the equality $s = \Delta_{k_s}$ implies that $\pmin(q) \,|\, (s+1)$, so $\rad \smash{\LS_\multii\super{s}} = \{0\}$ by corollary~\ref{EasyRadCor}.
\end{proof}

Now we 
decompose the radical of $\smash{\LS_\multii\super{s}}$ into a direct sum of radicals of three subspaces, specified by 
the three types of tails according to decomposition~(\ref{R12},~\ref{AllTails}).

\begin{lem} \label{radDir2Lem} 
Suppose $\max \multii < \ppmin(q)$.  We have the direct-sum decomposition
\begin{align} \label{radDir2} 
\rad \LS_\multii\super{s}
\; = \; 
\rad \Span \big\{ \hcancel{\,\alpha} \,\big|\, \alpha \in \smash{\LP_\multii\super{s}}, \, \textnormal{tail}(\alpha) \in \mathsf{M}_\multii\super{s} \big\} 
\bigoplus_{i=1}^2 \rad \Span \big\{ \hcancel{\,\alpha} \,\big|\, \alpha \in \smash{\LP_\multii\super{s}}, \, \textnormal{tail}(\alpha) \in \mathsf{R}_{\multii,i}\super{s} \big\}. 
\end{align} 
\end{lem}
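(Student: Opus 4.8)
The plan is to reduce the claimed direct-sum decomposition to an orthogonality statement, exactly as in the proof of proposition~\ref{BigTailLem}. First I would establish that trivalent link states with distinct tails are orthogonal: for $\alpha, \beta \in \smash{\LP_\multii\super{s}}$ with $\textnormal{tail}(\alpha) \neq \textnormal{tail}(\beta)$, we have $\BiForm{\hcancel{\,\alpha}}{\hcancel{\,\beta}} = 0$. This is the valenced analogue of lemma~\ref{OrthTailsLem}, and its proof is the same: set $I$ to be the largest index where the walks $\varrho_\alpha$ and $\varrho_\beta$ (in the lowest-walk representation $\varrho^{\downarrow}$ used in definition~\ref{TrivalentLinkStateDef}) differ in height. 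Then both $J_\alpha(q)$ and $J_\beta(q)$ are at most $I$, so the network $\hcancel{\,\alpha} \BarAction \hcancel{\,\beta}$ contains, at step $I$, two closed vertices with projector boxes of distinct sizes facing each other; the larger box then necessarily has a link with both endpoints on it, forcing the evaluation to vanish. The one subtlety here compared with the $\multii = \OneVec{n}$ case is that "tail" is now defined via the lowest walk $\varrho^{\downarrow}_\alpha$ over $\OneVec{\Summed_\multii}$ and may stop for three different reasons; but the argument only uses that distinct tails produce distinct projector-box sizes at some common step inside the tail region, which still holds.

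Next I would combine this orthogonality with item~\ref{IndOrthBasisLemIt3} of proposition~\ref{IndOrthBasisLem}, which says $\smash{\{ \hcancel{\,\alpha} \mid \alpha \in \LP_\multii\super{s} \}}$ is a basis for $\smash{\LS_\multii\super{s}}$. This yields the direct-sum decomposition into tail-labeled spans,
\begin{align} \label{DecompByTails}
\LS_\multii\super{s} = \bigoplus_{\tau \, \in \, \smash{\mathsf{T}_\multii\super{s}}} \Span \big\{ \hcancel{\,\alpha} \,\big|\, \alpha \in \smash{\LP_\multii\super{s}}, \, \textnormal{tail}(\alpha) = \tau \big\},
\end{align}
and the orthogonality of the summands means that the radical of the whole space is the direct sum of the radicals of the summands. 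Grouping the tails according to the trichotomy $\smash{\mathsf{T}_\multii\super{s}} = \smash{\mathsf{R}_{\multii,1}\super{s}} \cup \smash{\mathsf{R}_{\multii,2}\super{s}} \cup \smash{\mathsf{M}_\multii\super{s}}$ from~(\ref{R12},~\ref{AllTails}) then gives precisely~\eqref{radDir2}, once one observes that a direct sum (over tails of a fixed type) of spans that are pairwise orthogonal has radical equal to the direct sum of the individual radicals — so the three grouped blocks $\Span\{\hcancel{\,\alpha} \mid \textnormal{tail}(\alpha) \in \smash{\mathsf{R}_{\multii,i}\super{s}}\}$ and $\Span\{\hcancel{\,\alpha} \mid \textnormal{tail}(\alpha) \in \smash{\mathsf{M}_\multii\super{s}}\}$ are themselves mutually orthogonal, and the radical distributes over the orthogonal direct sum.

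I would write the proof in two short moves: (1) prove the orthogonality lemma for distinct tails (a clean adaptation of lemma~\ref{OrthTailsLem}), and (2) deduce~\eqref{radDir2} from \eqref{DecompByTails} together with the elementary fact that if $V = \bigoplus_k V_k$ is an orthogonal direct sum with respect to a symmetric bilinear form, then $\rad V = \bigoplus_k \rad V_k$. The main obstacle — really the only place any care is needed — is verifying the orthogonality claim in the valenced setting, because the tail is now governed by the more intricate stopping conditions \ref{StopIt1}--\ref{StopIt3} of definition~\ref{TrivalentLinkStateDef} and the relevant walk is $\varrho^{\downarrow}_\alpha$ rather than $\varrho_\alpha$ itself. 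One should check that the index $I$ of last disagreement indeed lies weakly beyond $\max(J_\alpha(q), J_\beta(q))$ in all combinations of tail types, so that the mismatched projector boxes genuinely appear in the network $\hcancel{\,\alpha} \BarAction \hcancel{\,\beta}$; this is a short case inspection using that $J_\alpha(q)$ depends only on the head heights $h_{\min,j}, h_{\max,j}$ and that heights inside the tail satisfy the interlacing bounds of lemma~\ref{ThetaInFiniteAndNonzeroLem}. Everything else is bookkeeping with the already-established basis and orthogonality results.
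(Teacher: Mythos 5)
Your plan hinges on a valenced analogue of lemma~\ref{OrthTailsLem} --- that $\textnormal{tail}(\alpha)\neq\textnormal{tail}(\beta)$ forces $\BiForm{\hcancel{\,\alpha}}{\hcancel{\,\beta}}=0$ --- and this is where the argument breaks: that statement is false for general $\multii$. Take $\alpha,\beta\in\smash{\LP_\multii\super{s}}$ whose walks agree at every step strictly after $J:=J_\alpha(q)=J_\beta(q)$, both stopping by condition~\ref{StopIt3}, but with $r_J\neq r_J'$; when $\sIndex_{J+1}\geq 2$ there is room for several admissible values of $r_J$ with $h_{\min,J}\leq\Delta_{k_s}$, so such pairs exist (they cannot occur for $\multii=\OneVec{n}$, where~\eqref{rJ+1} pins $r_J$, which is why lemma~\ref{OrthTailsLem} is fine). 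Their tails differ, since the tail includes $r_J$, yet in both trivalent link states the leftmost projector box has the same size $\Delta_{k_s}$ and sits at the same place inside the $(J+1)$:st cable, so there is no mismatch of box sizes anywhere; the computation in the proof of lemma~\ref{M0Lem} gives $\BiForm{\hcancel{\,\alpha}}{\hcancel{\,\beta}}=\BiForm{\alpha'}{\beta'}\times(\text{nonzero factor})$ with $\alpha'\neq\beta'$ two genuine valenced link patterns in $\smash{\LS_{\varpi;v}\super{u}}$, and that Gram entry has no reason to vanish. This is exactly why lemma~\ref{M0Lem} decomposes the moderate block by the reduced tail $\textnormal{tail}\smash{\und{\,}}(\alpha)$ rather than the full tail. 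Consequently your fine decomposition indexed by full tails is a direct sum of vector spaces but not an orthogonal one, and the step ``the radical distributes over the summands'' is unjustified.

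Even restricted to the only cross-block pairs that~\eqref{radDir2} actually needs --- a moderate tail against a type-two radical tail --- the ``same proof as lemma~\ref{OrthTailsLem}'' does not go through in the critical case where the two walks agree strictly beyond $K:=\max(J_\alpha(q),J_\beta(q))$. There the network $\hcancel{\,\alpha}\BarAction\hcancel{\,\beta}$ need not display two boxes of sizes $r_K$ and $r_K'$ facing each other at a common step (on the moderate side, when $J_\alpha=K$, the leftmost box has size $\Delta_{k_s}$ and lies strictly inside the $(K+1)$:st cable), and one must know which side carries the larger height before the turn-back argument lands on an actual projector box. The paper's proof supplies precisely this: a three-case comparison of $J_\alpha$ and $J_\beta$, using the inequalities characterizing the two tail types, $h_{\max,K}(\varrho_\alpha)<\Delta_{k_s+1}\leq h_{\max,K}(\varrho_\beta)$, respectively $h_{\min,K}(\varrho_\alpha)\leq\Delta_{k_s}<h_{\min,K}(\varrho_\beta)$, to conclude $r_K<r_K'$, after which the turn-back sits on the box of size $r_K'$. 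Note also that orthogonality of the type-one block to everything is obtained not from box mismatches but from lemma~\ref{FirstOrthLem} (those states already lie in $\rad\smash{\LS_\multii\super{s}}$), and mutual orthogonality inside the type-two block comes from the vanishing Theta factor in lemma~\ref{MiLem}, not from distinct tails. So the three-block statement~\eqref{radDir2} is correct, but your route rests on an orthogonality claim that fails, and the genuinely hard part of the proof --- the $r_K<r_K'$ case analysis --- is absent from your proposal.
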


\begin{proof} 
To begin, we prove the lemma 
when $\pmin(q) \,|\, (s+1)$, or equivalently, $\Delta_{k_s} = s$.  In this case, all tails in $\smash{\mathsf{T}_\multii\super{s}}$ are moderate by lemma~\ref{JDeltaLem}. 
This fact with item~\ref{IndOrthBasisLemIt3} of proposition~\ref{IndOrthBasisLem} shows that 
\begin{align} \label{TrivLS} 
\smash{\LS_\multii\super{s}} = \Span \big\{ \hcancel{\alpha} \,|\, \alpha \in \smash{\LP_\multii\super{s}} \big\} \overset{\eqref{BreakDown}}{=} \Span 
\big\{ \hcancel{\,\alpha} \,\big|\, \alpha \in \smash{\LP_\multii\super{s}}, \, \textnormal{tail}(\alpha) \in \mathsf{M}_\multii\super{s} \big\}, \qquad \qquad \text{if $\Delta_{k_s} = s$.}
\end{align}
After taking the radical of both sides and invoking~\eqref{BreakDown}, we arrive with asserted formula~\eqref{radDir2}.

Now we prove the lemma when $\pmin(q) \nmid (s+1)$, or equivalently, $\Delta_{k_s} \neq s$.
In this case, lemma~\ref{FirstOrthLem} readily implies that the following subspaces are orthogonal to one another:
\begin{align}
\rad \Span \big\{ \hcancel{\,\alpha} \,\big|\, \alpha \in \smash{\LP_\multii\super{s}}, \, \textnormal{tail}(\alpha) \in \mathsf{R}_{\multii,1}\super{s} \big\} 
&\perp \rad \Span \big\{ \hcancel{\,\alpha} \,\big|\, \alpha \in \smash{\LP_\multii\super{s}}, \, \textnormal{tail}(\alpha) \in \mathsf{M}_\multii\super{s} \big\}, \\
\rad \Span \big\{ \hcancel{\,\alpha} \,\big|\, \alpha \in \smash{\LP_\multii\super{s}}, \, \textnormal{tail}(\alpha) \in \mathsf{R}_{\multii,1}\super{s} \big\} 
&\perp \rad \Span \big\{ \hcancel{\,\alpha} \,\big|\, \alpha \in \smash{\LP_\multii\super{s}}, \, \textnormal{tail}(\alpha) \in \mathsf{R}_{\multii,2}\super{s} \big\}.
\end{align}
Thus, it remains to prove that
\begin{align} \label{OrthProve} 
\rad \Span \big\{ \hcancel{\,\alpha} \,\big|\, \alpha \in \smash{\LP_\multii\super{s}}, \, \textnormal{tail}(\alpha) \in \mathsf{M}_\multii\super{s} \big\} 
\perp \rad \Span \big\{ \hcancel{\,\alpha} \,\big|\, \alpha \in \smash{\LP_\multii\super{s}}, \, \textnormal{tail}(\alpha) \in \mathsf{R}_{\multii,2}\super{s} \big\} . 
\end{align} 
For this, we let $\alpha$ and $\beta$ be arbitrary $(\multii,s)$-valenced link patterns with moderate and type-two radical tails, 
\begin{align} \label{tails2} 
\text{tail}(\alpha) \in \mathsf{M}_\multii\super{s} \qquad \qquad \text{and} \qquad \qquad \text{tail}(\beta) \in \mathsf{R}_{\multii,2}\super{s} , 
\end{align} 
and we write $\varrho_\alpha = (r_1, r_2, \ldots, \smash{r_{\np_\multii}})$ and $\varrho_\beta = (r_1', r_2', \ldots, \smash{r_{\np_\multii}'})$ for their respective walks over the multiindex $\multii$. We set 
\begin{align} \label{shorthandI} 
I := \max \{ j \in \bZnn \,|\, r_j \neq r_j' \} \qquad \qquad \text{and} \qquad \qquad K :=  \max(J_\alpha(q), J_\beta(q)) . 
\end{align} 
Because the tail of $\beta$ is not moderate, we have $J_\beta(q) \geq 0$ by definition.  We also have  
$J_\alpha(q), J_\beta(q), K \leq \np_\multii$ by definition.  
However, if $K = \np_\multii$, then $s = \Delta_{k_s}$ by lemma~\ref{JDeltaLem}, which contradicts our initial assumption.  Thus, we have
\begin{align} \label{Kineq} 
0 \leq K \leq \np_\multii - 1 . 
\end{align} 
In the two respective cases that $J_\alpha < J_\beta = K$ or $J_\beta \leq J_\alpha = K$, 
the network $\hcancel{\,\alpha} \BarAction \hcancel{\,\beta}$ has the following form:
\begin{align}
\hcancel{\,\alpha} \BarAction \hcancel{\,\beta} \quad = \quad \vcenter{\hbox{\includegraphics[scale=0.275]{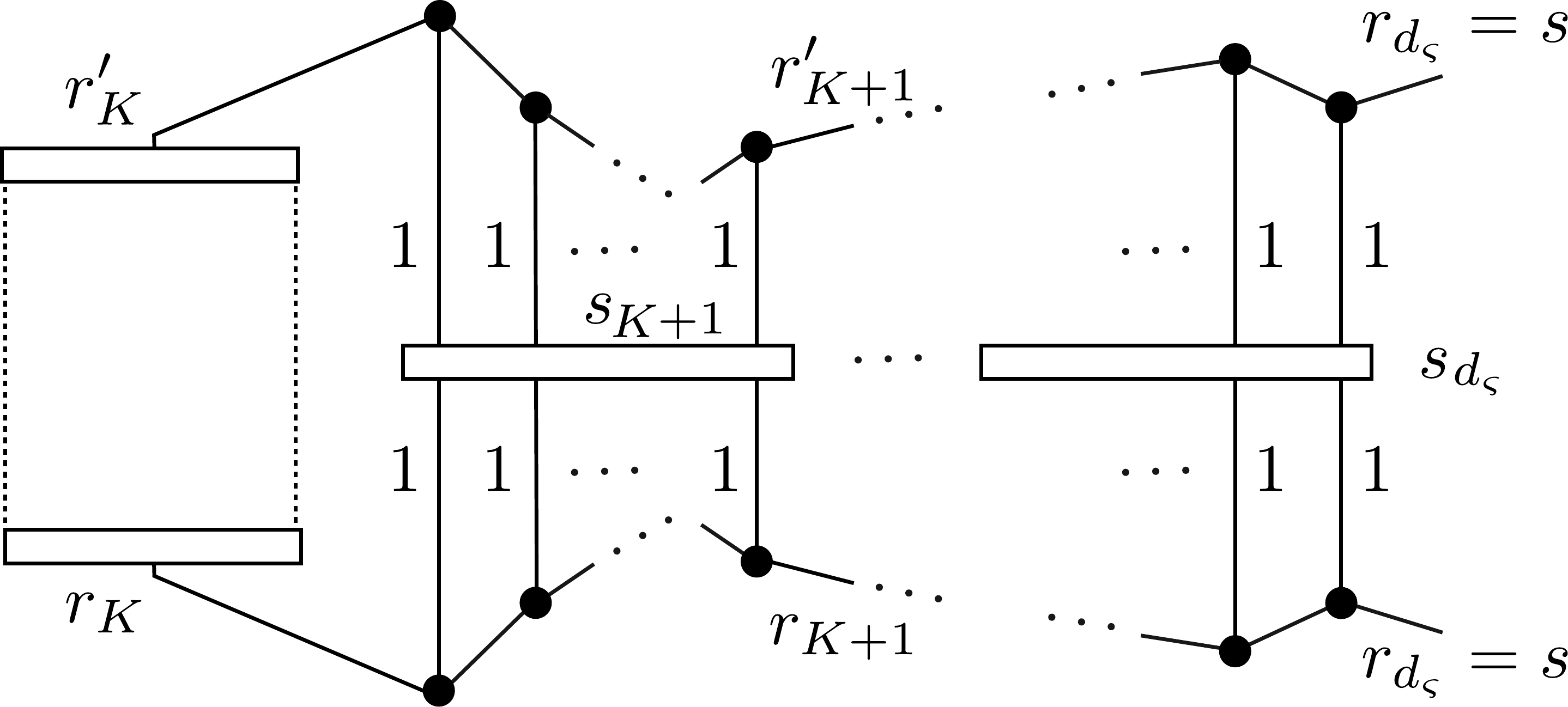} ,}} 
\qquad \qquad  \text{when $J_\alpha < J_\beta$} , \\[2em]
\hcancel{\,\alpha} \BarAction \hcancel{\,\beta} \quad = \quad \vcenter{\hbox{\includegraphics[scale=0.275]{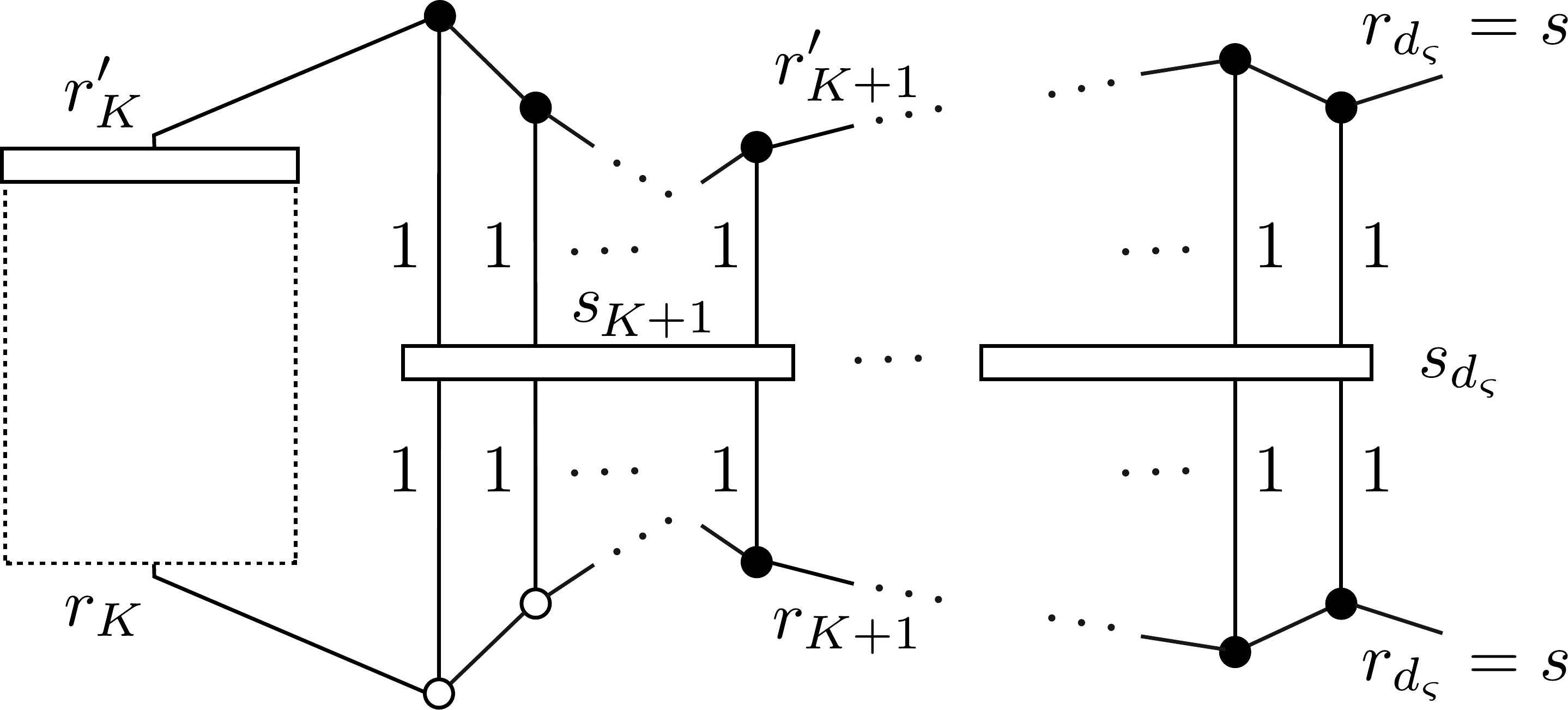} ,}} 
\qquad \qquad  \text{when $J_\alpha \geq J_\beta$} .
\end{align}
Because the tails of $\alpha$ and $\beta$ are different, we also have $K \leq I$.  If $K < I$, then we arrive with $\BiForm{\hcancel{\,\alpha}}{\hcancel{\,\beta}} = 0$ 
by reusing the arguments in the proof of lemma~\ref{OrthTailsLem}.  
Hence, we assume $K = I$ throughout and, to lighten notation, we write
\begin{align} \label{shorthand} 
r := r_K, \qquad r' := r_K', \qquad t := \sIndex_{K+1}, \qquad u:= r_{K+1} = r_{K+1}' . 
\end{align} 
Inequality~\eqref{Kineq} guarantees that these quantities exist.
Now, there are three cases to consider: either $J_\alpha < J_\beta \; (= K)$, or $J_\alpha = J_\beta \; (= K)$, or $J_\beta < J_\alpha \; (= K)$. 
We illustrate the part of the walk $\varrho_\alpha$ that goes from height $r$ to height $u$ and 
the part of the walk $\varrho_\beta$ that goes from height $r'$ to height $u$ respectively as
\begin{align} \label{Illust1}
\vcenter{\hbox{\includegraphics[scale=0.275]{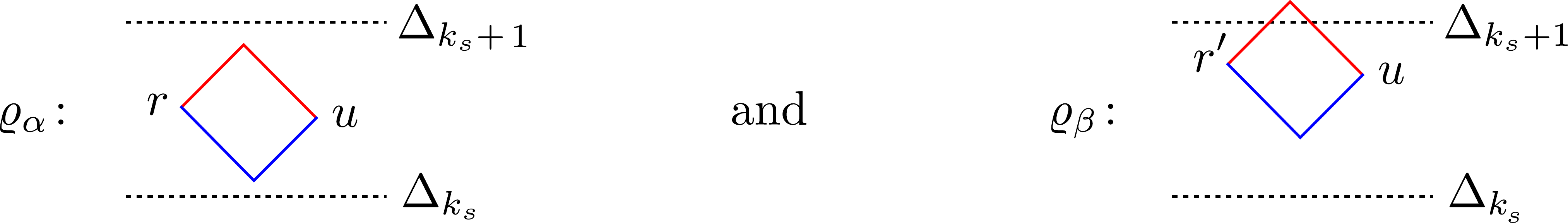} ,}} 
\qquad \qquad \text{when $J_\alpha < J_\beta$},
\end{align} 
\begin{align} \label{Illust2}
\vcenter{\hbox{\includegraphics[scale=0.275]{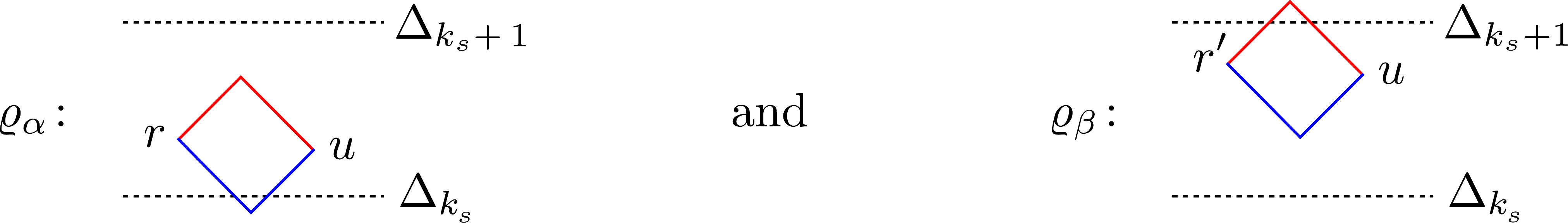} ,}}
\qquad \qquad  \text{when $J_\alpha = J_\beta$},
\end{align} 
\begin{align} \label{Illust3}
\vcenter{\hbox{\includegraphics[scale=0.275]{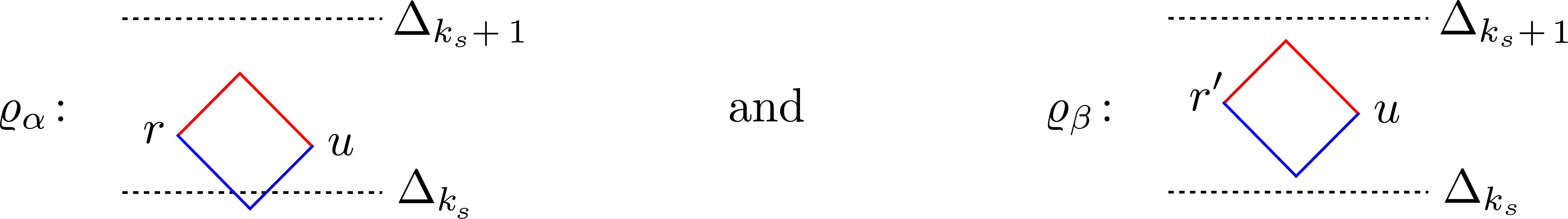} ,}} 
\qquad \qquad  \text{when $J_\alpha > J_\beta$} ,
\end{align} 
where the top sides and bottom sides of the rhombus are respectively parts of the walks 
$\smash{\varrho_\alpha\superscr{\uparrow}}$ or $\smash{\varrho_\beta\superscr{\uparrow}}$ and 
$\smash{\varrho_\alpha\superscr{\downarrow}}$ or $\smash{\varrho_\beta\superscr{\downarrow}}$. 
We show that in all of the cases, the following inequality holds:
\begin{align} \label{r<r'} 
r < r' :
\end{align}  
\begin{enumerate}[leftmargin=*]
\itemcolor{red}
\item \label{J1} $J_\alpha \leq J_\beta = K$: From illustrations~(\ref{Illust1},~\ref{Illust2}) we see that
\begin{align}
\frac{r + u + t}{2} \overset{\eqref{minmaxh2}}{=} h_{\max,K}(\varrho_\alpha)  \overset{\eqref{tails2}}{\leq} \Delta_{k_s+1} - 1
< \Delta_{k_s+1} \overset{\eqref{tails2}}{\leq} h_{\max,K}(\varrho_\beta) \overset{\eqref{minmaxh2}}{=} \frac{r' + u + t}{2} ,
\end{align}
which implies that $r < r'$, so~\eqref{r<r'} holds in this case.

\item \label{J3}  $K = J_\alpha > J_\beta$: Similarly, from illustration~\eqref{Illust3}, we see that
\begin{align}
\frac{r + u - t}{2} \overset{\eqref{minmaxh}}{=} h_{\min,K}(\varrho_\alpha)  \overset{\eqref{tails2}}{\leq} \Delta_{k_s}
< \Delta_{k_s} + 1 \overset{\eqref{tails2}}{\leq} h_{\min,K}(\varrho_\beta) \overset{\eqref{minmaxh}}{=} \frac{r' + u - t}{2} ,
\end{align}
which implies that $r < r'$, so~\eqref{r<r'} holds also in this case.
\end{enumerate}
Now with $r < r'$, the same argument that we used in the proof of lemma~\ref{OrthTailsLem} shows that
in the network $\hcancel{\alpha} \BarAction \hcancel{\beta}$, 
there must exist a turn-back link with both endpoints touching the projector box of size $r'$.  
Hence, we have $\BiForm{\hcancel{\alpha}}{\hcancel{\beta}} = 0$, which proves~\eqref{OrthProve} and implies the claim~\eqref{radDir2}.
\end{proof}

Our next task is to determine all three of the radicals appearing in direct-sum decomposition~\eqref{radDir2}. 
We begin with moderate tails, in which case the radical is trivial.

\begin{lem} \label{M0Lem} 
Suppose $\max \multii < \ppmin(q)$.  We have 
\begin{align} \label{M0} \rad \Span \big\{ \hcancel{\,\alpha} \,\big|\, \alpha \in \smash{\LP_\multii\super{s}}, \, \textnormal{tail}(\alpha) \in \mathsf{M}_\multii\super{s} \big\} = \{0\}. \end{align} 
\end{lem}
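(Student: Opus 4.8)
\textbf{Proof plan for Lemma~\ref{M0Lem}.} The plan is to show that the bilinear form is nondegenerate when restricted to the span of trivalent link states with moderate tails, by reducing to the orthogonality and non-vanishing facts already established for trivalent link states. First I would note that, by Lemma~\ref{OrthTailsLem}'s valenced analogue---more precisely, by the same turn-back argument used in the proofs of Lemmas~\ref{OrthTailsLem} and~\ref{radDir2Lem}---the subspaces $\Span\{\hcancel{\,\alpha}\,|\,\alpha\in\smash{\LP_\multii\super{s}},\,\textnormal{tail}(\alpha)=\tau\}$ for distinct $\tau\in\mathsf{M}_\multii\super{s}$ are mutually orthogonal. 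Hence it suffices to show, for each fixed moderate tail $\tau$, that the bilinear form is nondegenerate on $\Span\{\hcancel{\,\alpha}\,|\,\alpha\in\smash{\LP_\multii\super{s}},\,\textnormal{tail}(\alpha)=\tau\}$.

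Next I would fix such a tail $\tau=(r_J,r_{J+1},\ldots,r_{\np_\multii})$ with $r_J=\Delta_{k_s}$ (stopping condition~\ref{StopIt2} giving $\Delta_{k_s}<r_J<\Delta_{k_s+1}$ is absorbed into the moderate case, and $J_\alpha(q)=-\infty$ is the case covered already by Proposition~\ref{IndOrthBasisLem}, where the form is genuinely nondegenerate). For two link states $\beta,\gamma$ in this span with common tail $\tau$, the network $\beta\BarAction\gamma$ factors, via Lemmas~\ref{ExtractLem} and~\ref{LoopErasureLem} of appendix~\ref{TLRecouplingSect}, exactly as in~(\ref{DotBiIt0}--\ref{DotBiIt}) of the proof of Lemma~\ref{RadCasesLem}: it equals $\BiForm{\beta'}{\gamma'}$ times a product of $\Theta$-network factors indexed by the steps $j\in\{J,J+1,\ldots,\np_\multii-1\}$ of the tail, where $\beta',\gamma'$ are valenced link states built from the head (more precisely from the ``head piece'' of the new walk representation up to the height $r_J$). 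By Lemma~\ref{ThetaInFiniteAndNonzeroLem}, all factors with $j\geq\bar{J}+1$ are finite and nonzero; and the factor at $j=J$ equals $\ThetaNet(r_J,r_{J+1},\sIndex_{J+1})/((-1)^{r_{J+1}}[r_{J+1}+1])$, which by~\eqref{ThetaFormula0} together with $r_J=\Delta_{k_s}$ and $r_{J+1}=r_J+1$ (the defining feature of a moderate tail, cf.~the valenced analogue of~\eqref{rJ+1}) is finite and nonzero as well; one checks that the relevant quantum factorials $[\Delta_{k_s}+1+\cdots]!$ have only the expected first-order zeros, which cancel. So the whole product of tail factors is finite and nonzero.

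It then remains to produce, for each nonzero $\beta$ in the span, a companion $\gamma$ in the span with $\BiForm{\beta}{\gamma}\neq0$. Since $r_J=\Delta_{k_s}$ forces $\pmin(q)\,|\,(r_J+1)$, Lemma~\ref{EasyRadLem} (or Corollary~\ref{EasyRadCor} applied to the appropriate multiindex arising from the head) gives that the bilinear form on the relevant head space is nondegenerate; choosing $\gamma'$ dual to $\beta'$ there and building $\gamma$ from $\gamma'$ and the same tail $\tau$ yields $\BiForm{\beta}{\gamma}\neq0$ by the factorization above. This shows $\rad\Span\{\hcancel{\,\alpha}\,|\,\textnormal{tail}(\alpha)\in\mathsf{M}_\multii\super{s}\}=\{0\}$.

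\textbf{Main obstacle.} The delicate point---exactly as in the proof of Lemma~\ref{RadCasesLem}, but slightly more involved here because of the valenced (type-two) stopping condition---is pinning down precisely which ``head space'' the link states $\beta',\gamma'$ live in and verifying that Lemma~\ref{EasyRadLem}/Corollary~\ref{EasyRadCor} applies to it, i.e.~that the height $r_J$ at which the tail is cut off satisfies $\pmin(q)\,|\,(r_J+1)$ in the moderate case. For a genuine moderate tail one has $r_J=\Delta_{k_s}$, so $r_J+1=k_s\pmin(q)$ and the divisibility is automatic; for the $J=-\infty$ sub-case there is no head-space issue since Proposition~\ref{IndOrthBasisLem} already gives an orthogonal basis on which the form is nondegenerate. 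Assembling these sub-cases cleanly, and making sure the factorization-of-network step is justified in the valenced setting via the appendix lemmas, is the part requiring care.
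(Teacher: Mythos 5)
Your overall strategy (decompose by tails, factor the network with the appendix lemmas, and pair each nonzero head against a dual partner in a nondegenerate head space) is the same as the paper's, but two valenced-specific steps in your plan do not survive scrutiny. First, you absorb stopping condition~\ref{StopIt2} into the moderate case; in the paper that condition defines the \emph{type-two radical} tails, which by lemma~\ref{MiLem} lie entirely inside the radical, so nondegeneracy on your enlarged span would simply be false. Second, and more fundamentally, your reduction to the head rests on the identity $r_J=\Delta_{k_s}$, which holds only when $\multii=\OneVec{n}$ (cf.~\eqref{rJ+1}); in the valenced setting the vertex replacement of definition~\ref{TrivalentLinkStateDef} stops at an \emph{intermediate} height of the refined walk $\varrho^{\,\downarrow}_\alpha$ inside the block of the $(J{+}1)$:st node, so in general $\Delta_{k_s}<r_J$ and there is no full closed three-vertex at the junction. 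Consequently your claimed factor $\ThetaNet(r_J,r_{J+1},\sIndex_{J+1})/((-1)^{r_{J+1}}[r_{J+1}+1])$ at $j=J$ never appears: in the paper's computation~(\ref{DotBiIt20}--\ref{DotBiIt2}) the loop factors run only over $j\geq J+1$, and the junction is absorbed into link states $\beta',\gamma'$ living in the extended space $\smash{\LS_{\varpi;v}\super{u}}$ of~\eqref{LSd}, with $u=\Delta_{k_s}$, $v=r_{J+1}-u$, $t=\sIndex_{J+1}-v$, $\varpi=(\sIndex_1,\ldots,\sIndex_J,t)$ as in~\eqref{ShortH}. Nondegeneracy of the form on this space is not a consequence of corollary~\ref{EasyRadCor} alone: the paper needs lemma~\ref{ExtGramLem} for $\det\smash{\Gram_{\varpi;v}\super{u}}$, corollary~\ref{EasyRadCor} for the factor $\det\smash{\Gram_\varpi\super{u}}$, and the quantum-factorial estimates~(\ref{ineqs1}--\ref{ineqs4}) to see that the extra product does not vanish. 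Your plan has no substitute for this step.

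A related, smaller issue: because there is no large projector box at the coarse step $J$, trivalent states with the same $(r_{J+1},\ldots,r_{\np_\multii})$ but different $r_J$ need not be orthogonal, so your decomposition over \emph{full} tails $\tau$ is not justified. The paper instead decomposes over the reduced tails $(r_{J+1},\ldots,r_{\np_\multii})$ as in~\eqref{radDir}, which is exactly what the extended head space $\smash{\LS_{\varpi;v}\super{u}}$ is designed to accommodate. Finally, the boundary case $\pmin(q)\,|\,(s+1)$ is not the sub-case $J=-\infty$ you describe but the case $J=\np_\multii$ for every pattern (lemma~\ref{JDeltaLem}), where the span is all of $\smash{\LS_\multii\super{s}}$ and triviality of the radical comes directly from corollary~\ref{EasyRadCor}; your appeal to proposition~\ref{IndOrthBasisLem} does not cover it when $\Summed_\multii\geq\ppmin(q)$.
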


\begin{proof}  
To begin, we prove the lemma when $\pmin(q) \,|\, (s+1)$, or equivalently, $\Delta_{k_s} = s$.  In this case, we recall the work in the first paragraph in 
the proof of lemma~\ref{radDir2Lem}.  After taking the radical of both sides of~\eqref{TrivLS} and invoking corollary~\ref{EasyRadCor}, we arrive with~\eqref{M0}.

Now we prove the lemma when $\pmin(q) \nmid (s+1)$, or equivalently, $\Delta_{k_s} \neq s$.  
In this case, lemma~\ref{JDeltaLem} shows that 
for any $(\multii, s)$-valenced link pattern $\alpha$, we have
\begin{align} \label{Jinq} 
J := J_\alpha(q) < \np_\multii . 
\end{align} 
Therefore, with inequality~\eqref{Jinq} satisfied, we may denote
\begin{align} 
\textnormal{tail}(\alpha) = (r_J, r_{J+1}, r_{J+2}, \ldots, r_{\np_\multii}) 
\qquad \qquad \Longrightarrow \qquad \qquad \textnormal{tail}\smash{\und{\,}}(\alpha) := (r_{J+1}, r_{j+2}, \ldots, r_{\np_\multii}),
\end{align} 
and with this notation, we write
\begin{align} 
\Span \big\{ \hcancel{\,\alpha} \,\big|\, \alpha \in \smash{\LP_\multii\super{s}}, \, \textnormal{tail}(\alpha) \in \mathsf{M}_\multii\super{s} \big\} 
= \bigoplus_{\smash{\und{\tau}} \, \in \, \bZnn^\#} \Span \big\{ \hcancel{\,\alpha} \,\big|\, \alpha \in \smash{\LP_\multii\super{s}}, \, \textnormal{tail}(\alpha) \in \mathsf{M}_\multii\super{s}, \, \textnormal{tail}\smash{\und{\,}}(\alpha) = \smash{\und{\tau}} \big\} .
\end{align} 
Reusing the arguments from the proof of lemma~\ref{OrthTailsLem}, we see that the subspaces in this direct sum are orthogonal, so 
\begin{align} \label{radDir} 
\rad \Span \big\{ \hcancel{\,\alpha} \,\big|\, \alpha \in \smash{\LP_\multii\super{s}}, \, \textnormal{tail}(\alpha) \in \mathsf{M}_\multii\super{s} \big\} 
= \bigoplus_{\smash{\und{\tau}} \, \in \, \bZnn^\#} \rad \Span \big\{ \hcancel{\,\alpha} \,\big|\, \alpha \in \smash{\LP_\multii\super{s}}, \, 
\textnormal{tail}(\alpha) \in \mathsf{M}_\multii\super{s}, \, \textnormal{tail}\smash{\und{\,}}(\alpha) = \smash{\und{\tau}} \big\}.
\end{align}

Now we show that each summand in the direct sum~\eqref{radDir} is trivial.  
Selecting an arbitrary summand, let $\beta$ and $\gamma$ be two link states in the span of 
$\smash{ \big\{ \hcancel{\,\alpha} \,\big|\, \alpha \in \smash{\LP_\multii\super{s}}, \, 
\textnormal{tail}(\alpha) \in \mathsf{M}_\multii\super{s}, \, \textnormal{tail}\smash{\und{\,}}(\alpha) = \smash{\und{\tau}} \big\} }$,
and $\smash{\und{\tau}} = (r_{J+1}, r_{J+2}, \ldots, r_{\np_\multii})$, with 
\begin{align} \label{Jindex3} 
J = J_\beta(q) = J_\gamma(q) .
\end{align} 
Then, the bilinear form $\BiForm{\beta}{\gamma}$ equals the evaluation of the network
\begin{align} \label{AlphBetNet2} 
\beta \BarAction \gamma \quad & = \quad \vcenter{\hbox{\includegraphics[scale=0.275]{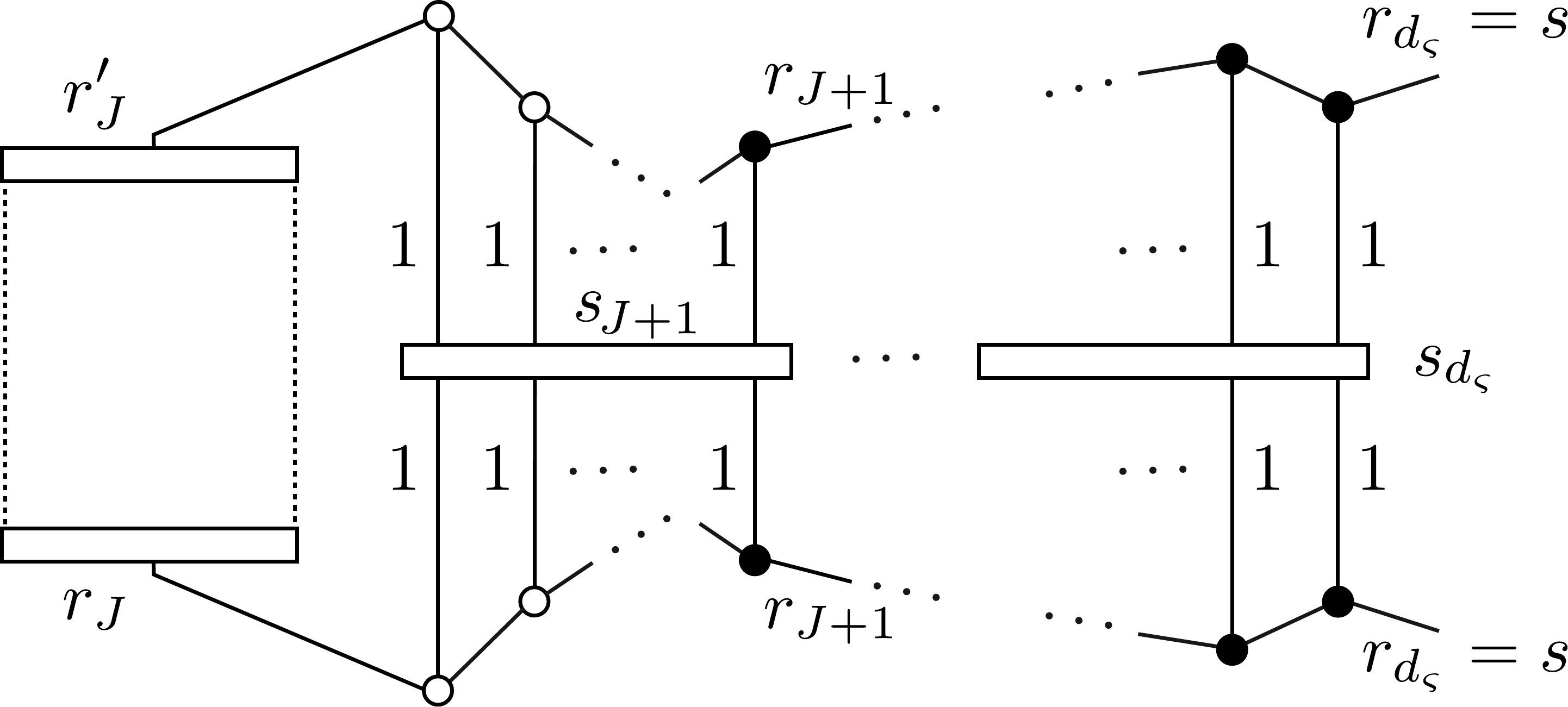}}}  \\[1em]
& \overset{\eqref{EquivPathsClosed}}{=} \quad \vcenter{\hbox{\includegraphics[scale=0.275]{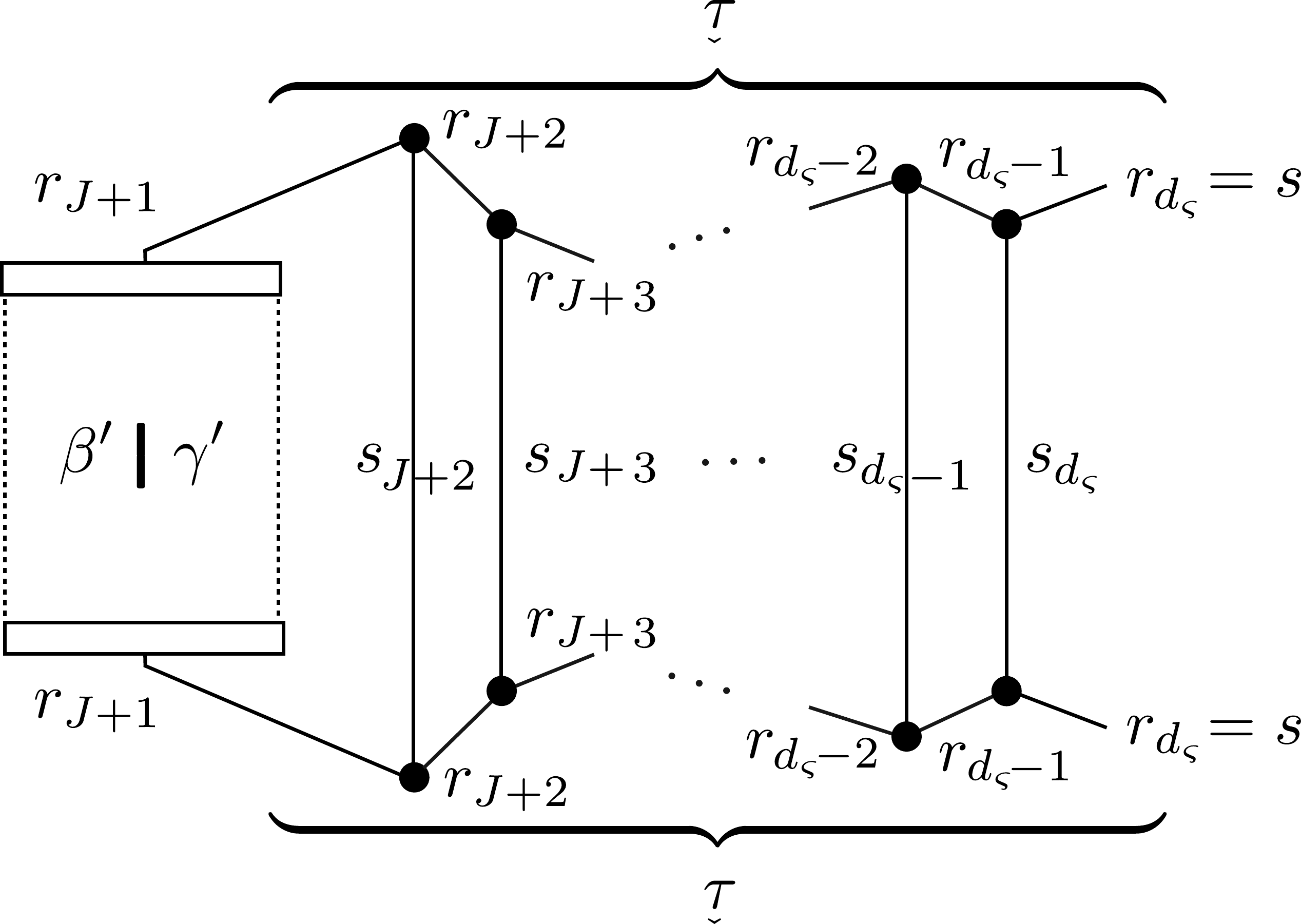} ,}}
\end{align} 
where $\beta'$ and $\gamma'$ are are appropriate valenced link states, described in greater detail below. 
We use lemmas~\ref{ExtractLem} and~\ref{LoopErasureLem} of appendix~\ref{TLRecouplingSect} to evaluate this network, obtaining
\begin{align} 
\label{DotBiIt20} 
\BiForm{\beta}{\gamma} \quad & \overset{\eqref{ExtractID}}{=} \BiForm{\beta'}{\gamma'} \,\, \times  \,\,
\left( \; \vcenter{\hbox{\includegraphics[scale=0.275]{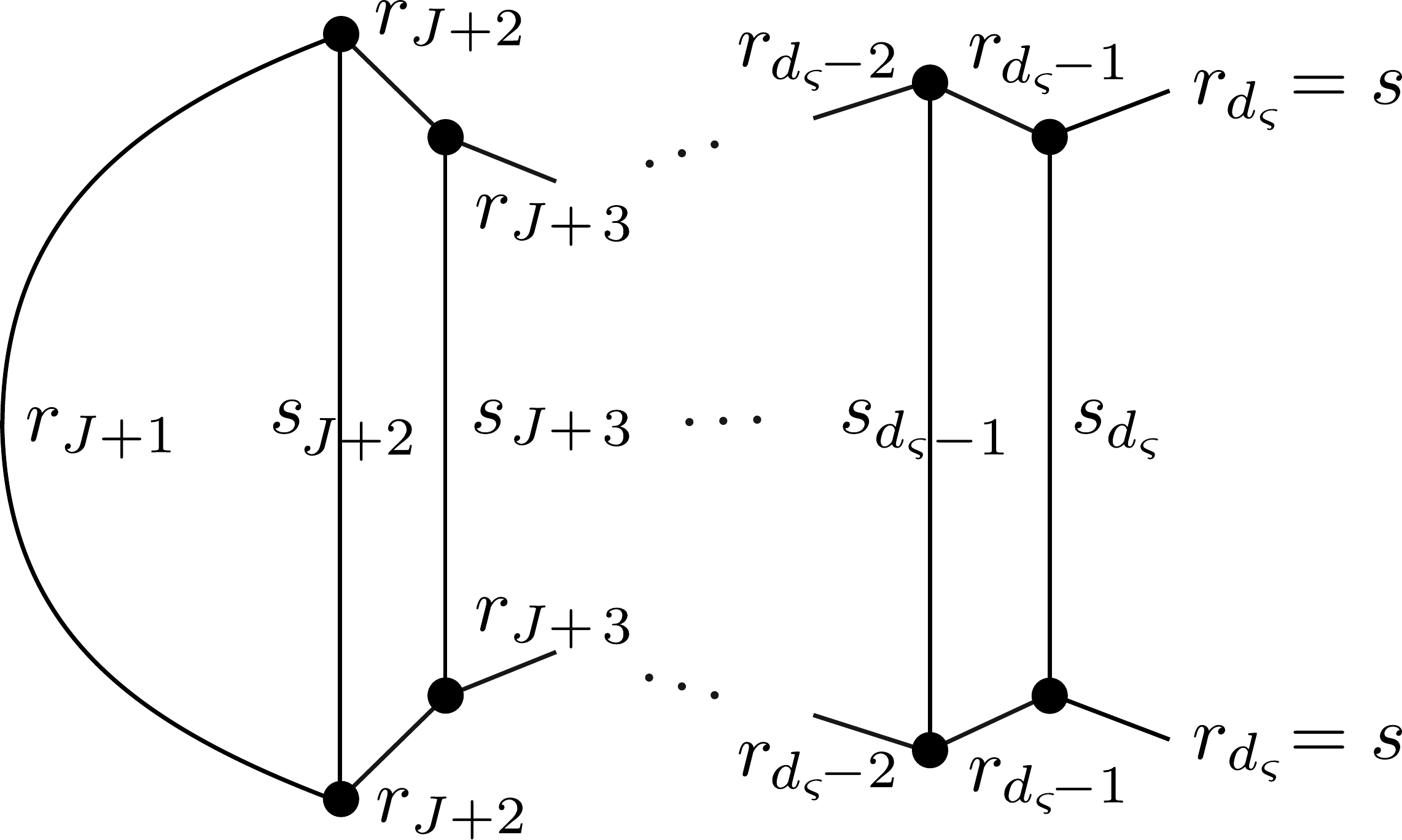}}} \; \right) \\[1em]
\label{DotBiIt2} 
& \overset{\eqref{LoopErasure1}}{=} \BiForm{\beta'}{\gamma'} \prod_{j = J+1}^{\np_\multii-1} \frac{ \ThetaNet( r_j, r_{j+1}, \sIndex_{j+1} )}{ (-1)^{r_{j + 1}} [r_{j + 1}+1]} .
\end{align} 
Lemma~\ref{ThetaInFiniteAndNonzeroLem} shows that the product in~\eqref{DotBiIt2} is finite and nonzero,
\begin{align} \label{NonVan2} 
0 < \bigg| \prod_{j = J+1}^{\np_\multii-1} \frac{ \ThetaNet( r_j, r_{j+1}, \sIndex_{j+1} )}{ (-1)^{r_{j + 1}} [r_{j + 1}+1]} \bigg| < \infty .
\end{align}

Next, we focus on the bilinear form $\BiForm{\beta'}{\gamma'}$ in~\eqref{DotBiIt2}.  To understand this quantity, we first define
\begin{align} \label{ShortH} 
u := \Delta_{k_s}, \qquad\qquad v := r_{J+1} - u \geq 0, \qquad\qquad t := \sIndex_{J+1} - v \geq 0, \qquad\qquad \varpi := (\sIndex_1, \sIndex_2, \ldots, \sIndex_J, t) . 
\end{align} 
By definition (\ref{Jindex0},~\ref{Jindex2}) of $J$, the valenced link states $\beta', \gamma'$ are elements of 
$\smash{\LS_{\varpi; v}\super{u}}$
(defined beneath~\eqref{LSd}), with $v$ defects anchored to the rightmost valenced node of size $\sIndex_{J+1}$:
\begin{align} \label{LSform3ab} 
\vcenter{\hbox{\includegraphics[scale=0.275]{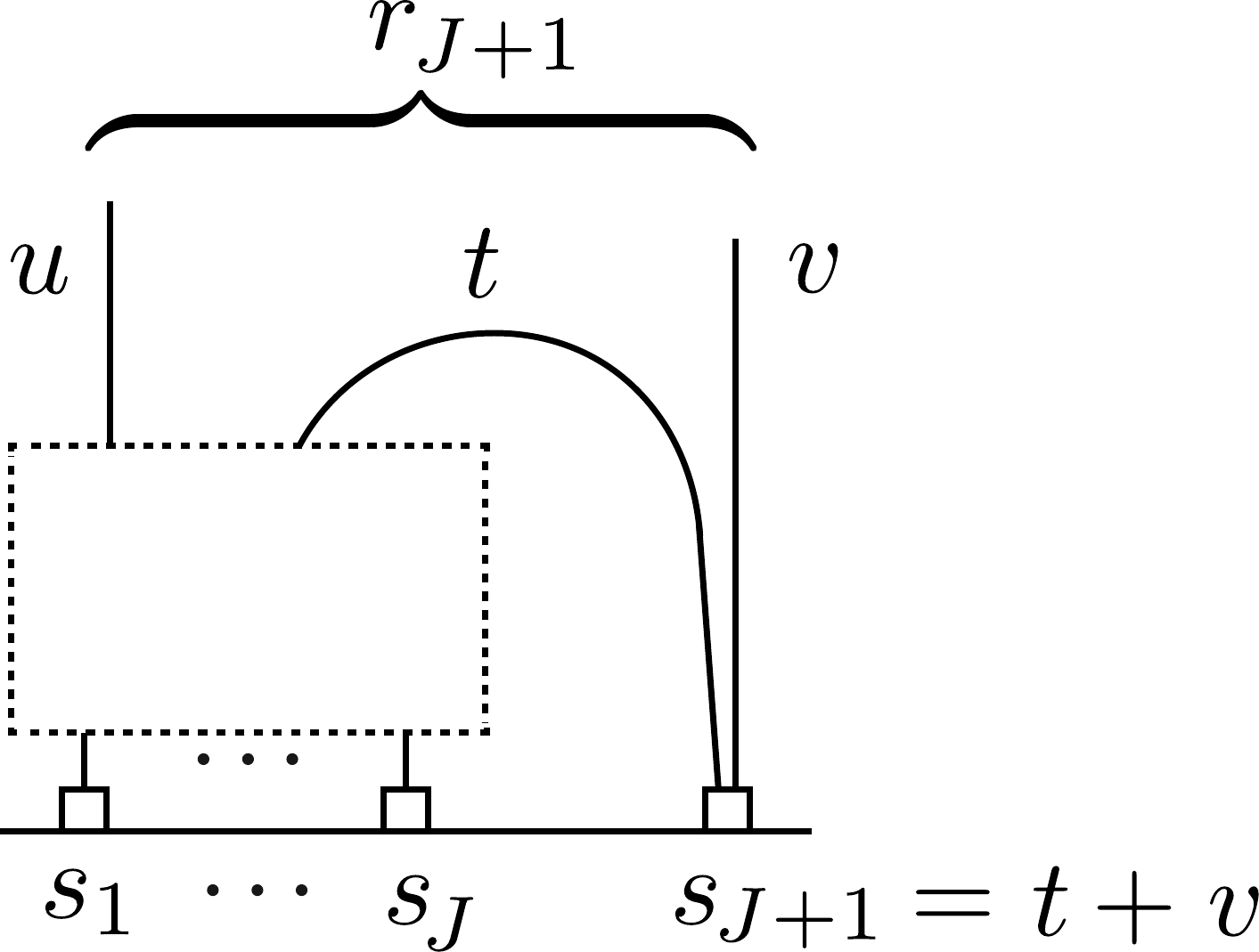} .}} 
\end{align}
We will prove that $\rad \smash{\LS_{\varpi; v}\super{u}} =\{0\}$ by showing that the determinant $\det \smash{\Gram_{\varpi; v}\super{u}}$ 
does not vanish.  
For this, we use lemma~\ref{ExtGramLem}, with replacements $\np_\multii \mapsto J+1$, $\multii \mapsto \varpi$, 
and $s \mapsto u$, to write
\begin{align} \label{ExtGram2} 
\det \Gram_{\varpi; v}\super{u} \overset{\eqref{ExtGram}}{=} 
\det \Gram_\varpi\super{u} \prod_{\varrho} 
\frac{ \big[ \frac{r + t + u}{2} + v + 1 \big]! \big[ \frac{t + u - r}{2} + v \big]! [t]! [u + 1]! }{ \big[\frac{r + t + u}{2} + 1 \big]! \big[ \frac{t + u - r}{2} \big]! [t + v]! [u + v + 1]!},
\end{align} 
where $u, v, t$ are fixed in~\eqref{ShortH} and 
the product is over all walks $\varrho$ over $\varpi$ with defect $u$, and where $r$ denotes the penultimate height of $\varrho$.  
Next, we prove that $\det \smash{\Gram_{\varpi; v}\super{u}} \neq 0$. 
According to lemma~\ref{EasyRadCor}, we have
\begin{align} \label{consq3} 
\text{$\pmin(q) \,|\, (u+1)$ by (\ref{DeltaDefn},~\ref{ShortH})} \qquad \Longrightarrow \qquad \rad \smash{\LS_\multiii\super{u}} = \{0\} 
\qquad \Longrightarrow \qquad \det \smash{\Gram_\multiii\super{u}} \neq 0. 
\end{align}

To show that the product over $\varrho$ in~\eqref{ExtGram2} does not vanish either, we gather some inequalities.  
First, we have
\begin{align} \label{ineqs1} 
0 \overset{\eqref{ShortH}}{\leq} t \overset{\eqref{ShortH}}{\leq} t + v \overset{\eqref{ShortH}}{=} \sIndex_{J+1} \leq \max \multii \leq \ppmin(q) - 1 
\qquad \Longrightarrow \qquad \frac{[t]!}{[t + v]!} \neq 0.  
\end{align} 
Furthermore, by definition~\eqref{Jindex0} of $J = J_\beta(q) = J_\gamma(q)$, we have $r_{J+1} < \Delta_{k_s+1}$.  
Hence, we have
\begin{align} 
k_s \pmin(q) \overset{\eqref{DeltaDefn}}{=} \Delta_{k_s} + 1 & \overset{\eqref{ShortH}}{=} u + 1 \overset{\eqref{ShortH}}{\leq} u + v + 1  \\ 
& \overset{\eqref{ShortH}}{=} r_{J+1} + 1 
\overset{\eqref{LSform3ab}}{\leq} \Delta_{k_s+1} 
\overset{\eqref{DeltaDefn}}{=} (k_s + 1)\pmin(q) - 1 \qquad \Longrightarrow \qquad 
0 < \frac{ [ u + 1 ]! }{ [ u + v + 1 ]! } < \infty .
\end{align}
Next, with $u \in \DefectSet\sub{r,t}$, we have
\begin{align} 
k_s \pmin(q) & \overset{\eqref{DeltaDefn}}{=} \Delta_{k_s} + 1 
\overset{\eqref{ShortH}}{=} u + 1 \overset{\eqref{SpecialDefSet}}{\leq} \frac{r + t + u}{2} + 1 \overset{\eqref{ShortH}}{\leq} 
\frac{r - t + u}{2} + t + v + 1 \\
&  \overset{\eqref{SpecialDefSet}}{\leq}  u + t + v + 1 \underset{\eqref{ineqs1}}{\overset{\eqref{ShortH}}{\leq}} \Delta_{k_s} + \pmin(q) 
\overset{\eqref{DeltaDefn}}{=} (k_s + 1)\pmin(q) - 1 \qquad \Longrightarrow \qquad 
0 <  \frac{ \big[ \frac{r + t + u}{2} + v + 1 \big]! }{ \big[\frac{r + t + u}{2} + 1 \big]! } < \infty ,
\end{align}
and finally, with $t \in \DefectSet\sub{r,u}$, we have
\begin{align} 
0 \overset{\eqref{SpecialDefSet}}{\leq} \frac{t-(r-u)}{2} & \overset{\eqref{ShortH}}{\leq} \frac{t-(r-u)}{2} + v \\
\label{ineqs4} 
& \underset{\eqref{ineqs1}}{\overset{\eqref{SpecialDefSet}}{\leq}} t + v \leq \ppmin(q) - 1 \qquad \Longrightarrow \qquad 
0 < \frac{ \big[ \frac{t + u - r}{2} + v \big]! }{ \big[ \frac{t + u - r}{2} \big]! } < \infty .
\end{align}
Combining (\ref{consq3}--\ref{ineqs4}), we conclude from~\eqref{ExtGram2} that $\det \smash{\Gram_{\varpi; v}\super{u}} \neq 0$.  
Hence, we have $\rad \smash{\LS_{\varpi; v}\super{u}} =\{0\}$.

Now we are ready to finish the proof.  
Because the radical of $\smash{\LS_{\varpi; v}\super{u}}$ is trivial, it follows that each nonzero valenced link state 
$\beta' \in \smash{\LS_{\varpi; v}\super{u}}$ has a companion link state $\gamma_\beta' \in \smash{\LS_{\varpi; v}\super{u}}$ such that 
\begin{align} \label{alphaprime2} \BiForm{\beta'}{\gamma_\beta'} \neq 0. 
\end{align} 
We define $\gamma_\beta$ to be the link state obtained by setting $\gamma' = \gamma_\beta'$ in 
\begin{align}
\gamma \quad = \quad \vcenter{\hbox{\includegraphics[scale=0.275]{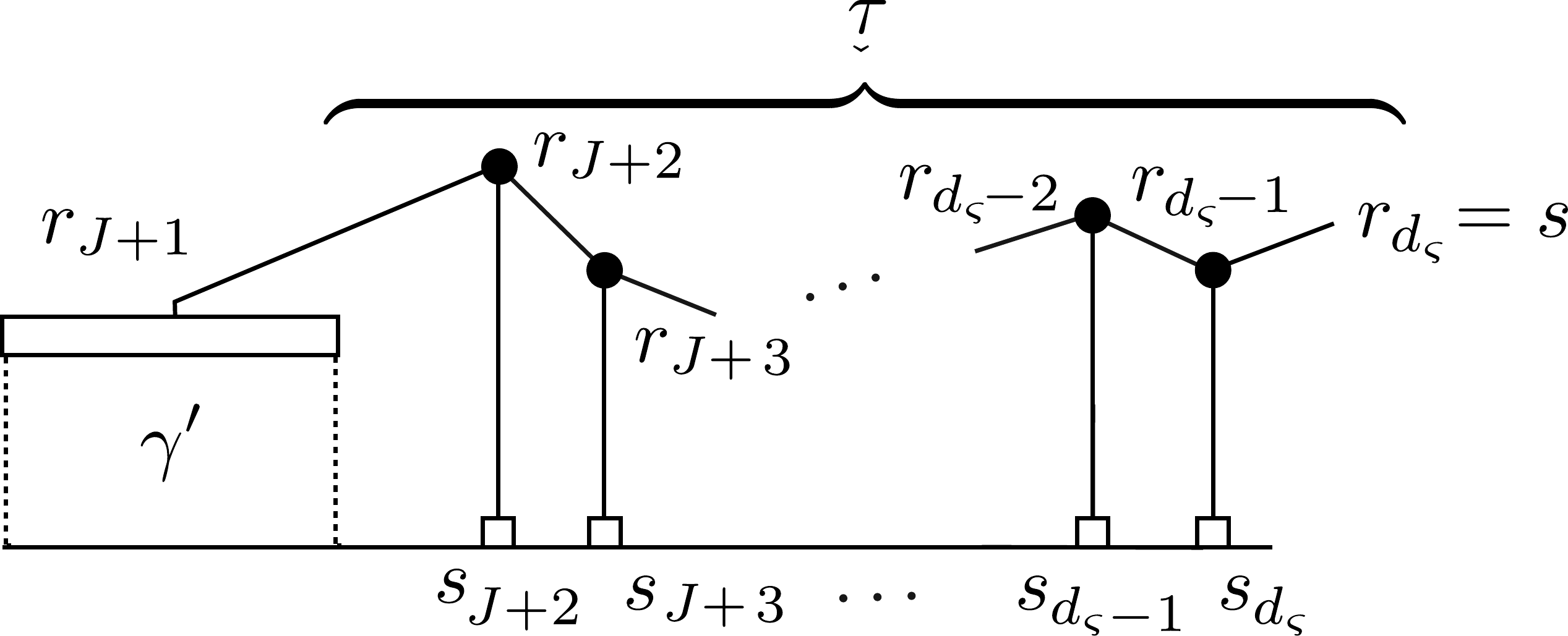} .}} 
\end{align} 
Finally, we combine~(\ref{DotBiIt20},~\ref{DotBiIt2},~\ref{NonVan2},~\ref{alphaprime2}) to 
conclude that for every nonzero link state $\beta$ in the span of 
$\smash{ \big\{ \hcancel{\,\alpha} \,\big|\, \alpha \in \smash{\LP_\multii\super{s}}, \, \textnormal{tail}(\alpha) \in \mathsf{M}_\multii\super{s}, \, \textnormal{tail}\smash{\und{\,}}(\alpha) = \smash{\und{\tau}} \big\} }$, we have
\begin{align} 
\BiForm{\beta}{\gamma_\beta} \overset{\eqref{DotBiIt2}}{=} \BiForm{\beta'}{\gamma_\beta'} 
\prod_{j = J+1}^{\np_\multii-1} \frac{ \ThetaNet( r_j, r_{j+1}, \sIndex_{j+1} )}{ (-1)^{r_{j + 1}} [r_{j + 1}+1]} \underset{\eqref{alphaprime2}}{\overset{\eqref{NonVan2}}{\neq}} 0. 
\end{align} 
Therefore, we have
\begin{align} \label{Comp0} 
\rad \Span \big\{ \hcancel{\,\alpha} \,\big|\, \alpha \in \smash{\LP_\multii\super{s}}, \, \textnormal{tail}(\alpha) \in \mathsf{M}_\multii\super{s}, \, \textnormal{tail}\smash{\und{\,}}(\alpha) = \smash{\und{\tau}} \big\} = \{0\} . 
\end{align} 
Because $\smash{\und{\tau}}$ is arbitrary, (\ref{radDir},~\ref{Comp0}) combine to give~\eqref{M0}.  This finishes the proof.
\end{proof}

We continue the determination of the radical~\eqref{radDir2}, now addressing the case of radical tails.

\begin{lem} \label{MiLem} 
Suppose $\max \multii < \ppmin(q)$.  For $i \in \{1,2\}$, we have
\begin{align} \label{Mi} 
\rad \Span \big\{ \hcancel{\,\alpha} \,\big|\, \alpha \in \smash{\LP_\multii\super{s}}, \, \textnormal{tail}(\alpha) \in \mathsf{R}_{\multii,i}\super{s} \big\} 
= \Span \big\{ \hcancel{\,\alpha} \,\big|\, \alpha \in \smash{\LP_\multii\super{s}}, \, \textnormal{tail}(\alpha) \in \mathsf{R}_{\multii,i}\super{s} \big\}. \end{align} 
\end{lem}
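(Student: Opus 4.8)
\textbf{Plan for the proof of lemma~\ref{MiLem}.}
The goal is to show that for $i \in \{1,2\}$, every trivalent link state $\hcancel{\,\alpha}$ coming from a valenced link pattern $\alpha$ with a type-$i$ radical tail lies in $\rad \smash{\LS_\multii\super{s}}$. Since lemma~\ref{radDir2Lem} already gives the direct-sum decomposition~\eqref{radDir2} of $\rad \smash{\LS_\multii\super{s}}$ into the three subspaces labeled by moderate tails, type-one radical tails, and type-two radical tails, and since the collection $\smash{\big\{ \hcancel{\,\alpha} \,\big|\, \alpha \in \LP_\multii\super{s} \big\}}$ is a basis for $\smash{\LS_\multii\super{s}}$ by item~\ref{IndOrthBasisLemIt3} of proposition~\ref{IndOrthBasisLem}, it suffices to prove the inclusion $\mathrm{(\subseteq)}$: namely that the span on the right side of~\eqref{Mi} is contained in the radical. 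The reverse inclusion is automatic, because the radical of a subspace is a subspace of it.

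\textbf{Case $i = 1$.} This case is essentially already done: lemma~\ref{FirstOrthLem} states precisely that $\smash{\big\{ \hcancel{\,\alpha} \,\big|\, \alpha \in \LP_\multii\super{s}, \, \textnormal{tail}(\alpha) \in \mathsf{R}_{\multii,1}\super{s} \big\}} \subset \rad \smash{\LS_\multii\super{s}}$. Hence the span of these trivalent link states lies in $\rad \smash{\LS_\multii\super{s}}$, and since (by lemma~\ref{radDir2Lem}) it is orthogonal to the other two summands, it equals its own radical; this gives~\eqref{Mi} for $i=1$.

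\textbf{Case $i = 2$.} Here I would argue directly via the bilinear form. Fix $\alpha \in \LP_\multii\super{s}$ with $\textnormal{tail}(\alpha) \in \mathsf{R}_{\multii,2}\super{s}$, so that stopping condition~\ref{StopIt2} in definition~\ref{TrivalentLinkStateDef} occurs: the walk $\varrho_\alpha$ has $J = J_\alpha(q) \geq 0$ with height $r_J$ satisfying $\Delta_{k_s} < r_J < \Delta_{k_s+1}$. I must show $\BiForm{\hcancel{\,\alpha}}{\gamma} = 0$ for all $\gamma \in \smash{\LS_\multii\super{s}}$, and by linearity and the basis property it is enough to take $\gamma = \hcancel{\,\beta}$ for $\beta \in \LP_\multii\super{s}$. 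By the orthogonality results already established — lemma~\ref{OrthTailsLem}-type arguments reused in the proof of lemma~\ref{radDir2Lem}, together with lemmas~\ref{FirstOrthLem} and~\ref{M0Lem} — the form $\BiForm{\hcancel{\,\alpha}}{\hcancel{\,\beta}}$ vanishes unless $\textnormal{tail}(\beta)$ is also a type-two radical tail with the \emph{same} tail as $\alpha$ (otherwise the two leftmost projector boxes in the network $\hcancel{\,\alpha}\BarAction\hcancel{\,\beta}$ have unequal sizes, forcing a turn-back link, exactly as in lemma~\ref{OrthTailsLem}). So assume $\textnormal{tail}(\alpha) = \textnormal{tail}(\beta) = (r_J, r_{J+1}, \ldots, r_{\np_\multii})$ and evaluate the network $\hcancel{\,\alpha}\BarAction\hcancel{\,\beta}$ using lemmas~\ref{ExtractLem} and~\ref{LoopErasureLem} of appendix~\ref{TLRecouplingSect}, as in the proof of lemma~\ref{M0Lem}. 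This factors out a product $\prod_{j=J+1}^{\np_\multii-1} \ThetaNet(r_j,r_{j+1},\sIndex_{j+1})/((-1)^{r_{j+1}}[r_{j+1}+1])$, finite and nonzero by lemma~\ref{ThetaInFiniteAndNonzeroLem}, times a "head factor" involving a bilinear form of $(\varpi,v)$-type link states together with the three-vertex weight at step $J$. The crucial point — stopping condition~\ref{StopIt2} — is that when we build the closed three-vertex at the $J$:th step, the cable of size $r_J$ with $\Delta_{k_s} < r_J < \Delta_{k_s+1}$ forces the relevant quantum-factorial ratio to produce a zero: concretely, the Theta-to-quantum-integer factor $\ThetaNet(r_J, r_{J+1}, \sIndex_{J+1})$ (or the corresponding loop-erasure factor from lemma~\ref{LoopErasureLem}) contains $[r_J]!$ or $[r_{J+1}+1]!$ with an argument divisible by $\pmin(q)$ that is not cancelled, since $r_J$ lies strictly between two consecutive "critical" values $\Delta_{k_s}$ and $\Delta_{k_s+1}$. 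Thus the head factor vanishes, giving $\BiForm{\hcancel{\,\alpha}}{\hcancel{\,\beta}} = 0$ for every $\beta$, hence $\hcancel{\,\alpha} \in \rad \smash{\LS_\multii\super{s}}$.

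\textbf{Main obstacle.} The delicate step is the last one: making precise, via~\eqref{ThetaFormula0} and the order-of-vanishing analysis of quantum factorials (as in the proofs of proposition~\ref{VanishDetLem2} and lemma~\ref{ThetaInFiniteAndNonzeroLem}), that stopping condition~\ref{StopIt2} genuinely forces a \emph{zero} in the head contribution — and that this zero is not cancelled by a pole from the neighboring factors. One must carefully track which quantum integers $[k]$ with $\pmin(q) \mid k$ appear in $\ThetaNet(r_J,r_{J+1},\sIndex_{J+1})$, $[r_J+1]$, and the $(\varpi,v)$-Gram determinant of lemma~\ref{ExtGramLem}, and check that the net effect is a simple zero. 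This is the same style of bookkeeping used in lemma~\ref{M0Lem}, but run "in reverse" (there the factors were shown nonzero; here one factor must be shown to vanish), so the arithmetic with $\Delta_{k_s}$, $\Delta_{k_s+1}$, $k_s$, and $R_s$ from~(\ref{DeltaDefn},~\ref{skDefn},~\ref{skSet}) is where the real work lies. Everything else — the orthogonality reductions, the network manipulations, the finiteness of the "outer" product — is routine given the lemmas already proved.
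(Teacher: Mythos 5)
Your skeleton matches the paper's: for $i=1$ the claim is indeed immediate from lemma~\ref{FirstOrthLem}, and for $i=2$ the paper, like you, evaluates the network $\hcancel{\beta}\BarAction\hcancel{\gamma}$ with lemmas~\ref{ExtractLem} and~\ref{LoopErasureLem}, isolates the outer product over $j>J$ (finite and nonzero by lemma~\ref{ThetaInFiniteAndNonzeroLem}), and locates the zero in the three-vertex factor $\ThetaNet(r_J,r_{J+1},\sIndex_{J+1})/((-1)^{r_{J+1}}[r_{J+1}+1])$ at step $J$. However, you aim at the stronger statement that each $\hcancel{\alpha}$ with type-two radical tail lies in $\rad\LS_\multii\super{s}$, which forces your reduction to ``$\beta$ has the same type-two tail as $\alpha$.'' That reduction is both unnecessary and under-justified: the lemma only asserts that the bilinear form vanishes on pairs taken from the type-two span, and the single computation (the factors $\delta_{r_j,r_j'}$ produced by lemma~\ref{LoopErasureLem} take care of unequal tails) handles everything at once, which is what the paper does; moreover your justification ``exactly as in lemma~\ref{OrthTailsLem}'' does not cover the case where the first differing height occurs at the step $J$ itself (that is the delicate rhombus comparison in the proof of lemma~\ref{radDir2Lem}), and lemma~\ref{M0Lem} gives no orthogonality between the moderate and type-two spans.

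The genuine gap is the vanishing mechanism itself, which you defer and sketch incorrectly. You attribute the zero to $[r_J]!$ or $[r_{J+1}+1]!$ having an argument divisible by $\pmin(q)$; but these are \emph{denominator} factorials in formula~\eqref{ThetaFormula0}, so their vanishing could never make the factor zero (it would instead threaten finiteness). The actual zero comes from the numerator: for a type-two radical tail one has $\Delta_{k_s}+1 \le h_{\min,J}(\varrho) \le \min(r_J,r_{J+1})$ and $\max(r_J,r_{J+1}) < \Delta_{k_s+1} \le h_{\max,J}(\varrho)$, so $[h_{\max,J}(\varrho)+1]!$ contains the vanishing quantum integer $[(k_s+1)\pmin(q)]$ (and, when $k_s\ge 1$, also $[k_s\pmin(q)]$, as does $[h_{\min,J}(\varrho)]!$), while $[r_J]!$ and $[r_{J+1}+1]!$ contain at most the single vanishing factor $[k_s\pmin(q)]$ each, because $k_s\pmin(q)\le r_J,\, r_{J+1}+1 < (k_s+1)\pmin(q)$, and the remaining numerator factorials have arguments at most $\sIndex_{J+1}<\ppmin(q)$. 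Counting zeros, the numerator strictly outnumbers the denominator, so the $j=J$ factor vanishes; this is precisely the chain~(\ref{condi4-0}--\ref{condi4-3}) and conclusion~\eqref{Van} in the paper. Note also that no appeal to lemma~\ref{ExtGramLem} or to nonvanishing of the head contribution is needed here: the head pairing $\BiForm{\beta'}{\gamma'}$ is simply a finite number multiplying a vanishing factor. Since the bookkeeping you propose targets the wrong factorials, the step that carries the content of the lemma is missing as written.
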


\begin{proof} 
Lemma~\ref{FirstOrthLem} already gives~\eqref{Mi} for $i = 1$,
so it only remains to prove the case $i = 2$. For this, we let $\beta$ or $\gamma$ be any valenced link states in the span of 
$\smash{ \big\{ \hcancel{\,\alpha} \,\big|\, \alpha \in \smash{\LP_\multii\super{s}}, \, \textnormal{tail}(\alpha) \in \mathsf{R}_{\multii,2}\super{s} \big\}}$.  
Without loss of generality, we assume that
\begin{align} 
J_\gamma(q) \leq J:= J_\beta(q). 
\end{align} 
By definition and lemma~\ref{JDeltaLem}, we have  
\begin{align} \label{Kinq} 
0 \leq J < \np_\multii ,
\end{align} 
because the tails of $\beta$ and $\gamma$ are not moderate.  
The bilinear form $\BiForm{\beta}{\gamma}$ equals the evaluation of the network
\begin{align} 
\label{AlphBetNet3} 
\beta \BarAction \gamma \quad & = \quad \vcenter{\hbox{\includegraphics[scale=0.275]{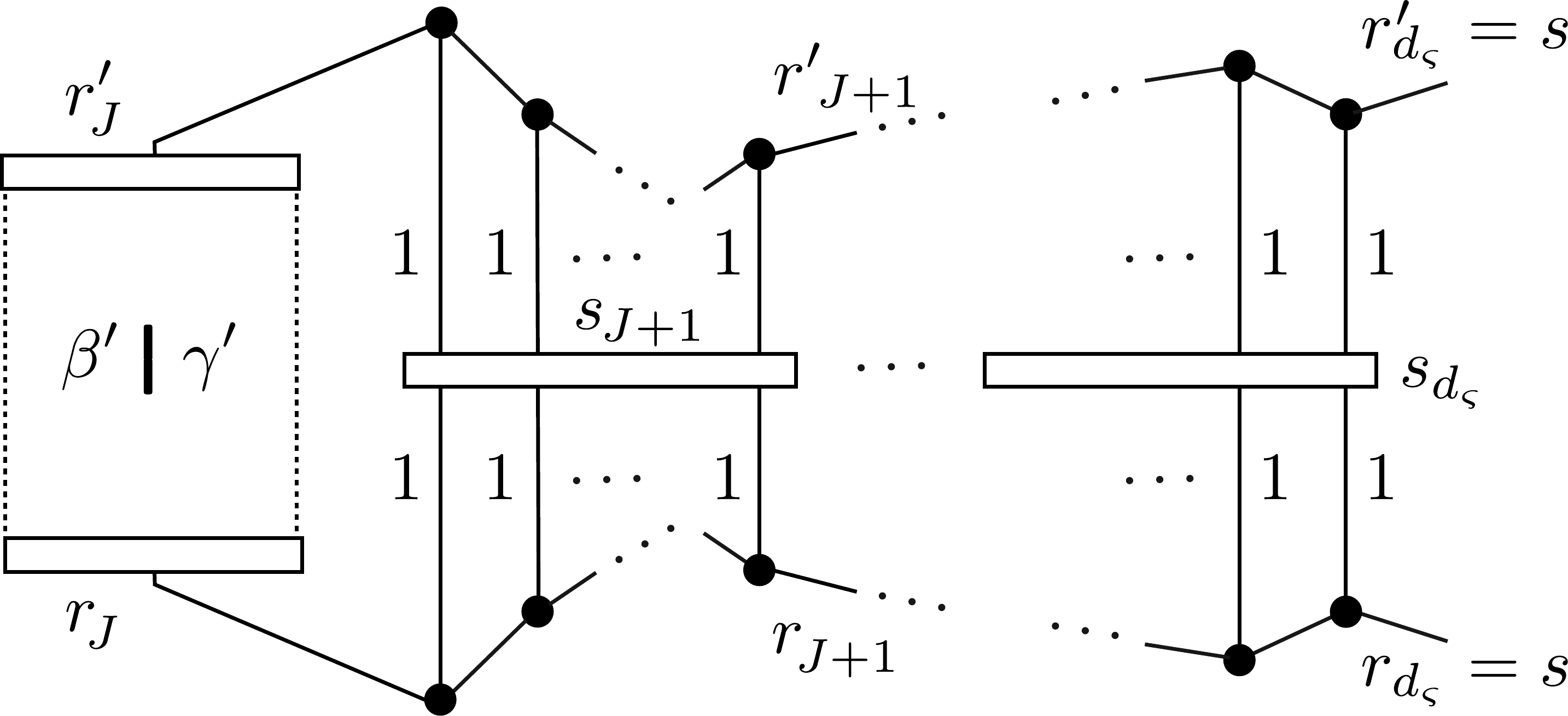} }} \\[2em]
\label{AlphBetNet30} 
& \overset{\eqref{EquivPathsClosed}}{=} \quad \vcenter{\hbox{\includegraphics[scale=0.275]{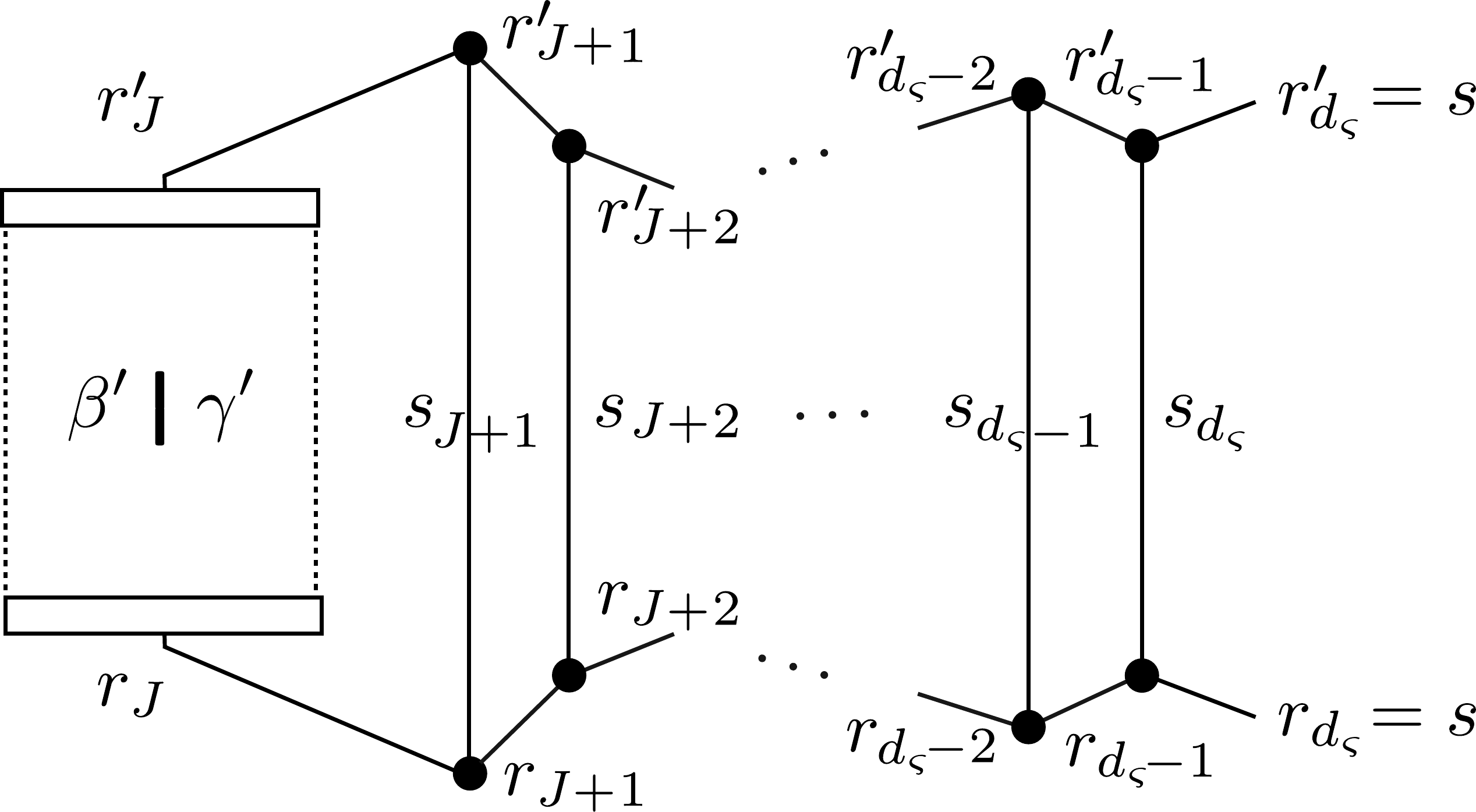} ,}}
\end{align} 
where $\beta'$ and $\gamma'$ are are appropriate valenced link states. 
We use lemmas~\ref{ExtractLem} and~\ref{LoopErasureLem} of appendix~\ref{TLRecouplingSect} to evaluate~\eqref{AlphBetNet30},
\begin{align} 
\BiForm{\beta}{\gamma} \quad 
& \overset{\eqref{ExtractID}}{=} \BiForm{\beta'}{\gamma'} \, \times \quad 
\left( \; \vcenter{\hbox{\includegraphics[scale=0.275]{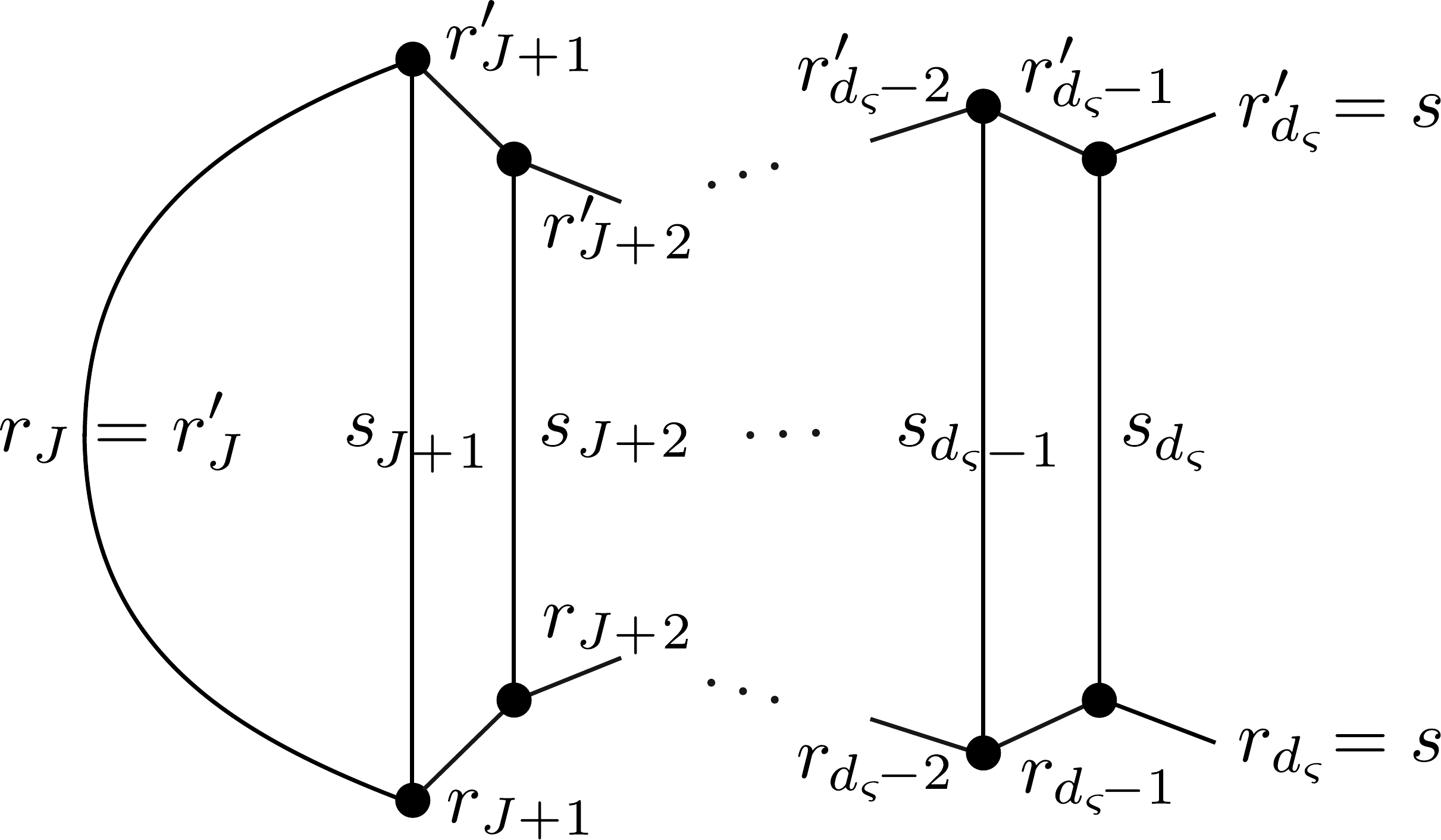}}} \; \right) \\[1em]
\label{DotBiIt4} 
& \overset{\eqref{LoopErasure1}}{=} \BiForm{\beta'}{\gamma'} 
\prod_{j = J}^{\np_\multii-1} \delta_{r_j,r_j'} \frac{ \ThetaNet( r_j, r_{j+1}, \sIndex_{j+1} )}{ (-1)^{r_{j + 1}} [r_{j + 1}+1]}.
\end{align}

First, we consider the factor of the product in~\eqref{DotBiIt4} with $j = J$.  
By definition~(\ref{Jindex0},~\ref{Jindex2}) of $J = J_\beta(q)$, we have
\begin{align} \label{InequalityForType2RadicalTail}
\max(r_J, r_{J+1}) < \Delta_{k_s+1} .
\end{align} 
This gives 
\begin{align} 
\label{condi4-0}  
k_s \pmin(q) \overset{\eqref{DeltaDefn}}{=} \Delta_{k_s} + 1 & \overset{\eqref{Jindex0}}{\leq} h_{\min,J}(\varrho) 
\overset{\eqref{minmaxh}}{=} \frac{r_J + r_{J+1} - \sIndex_{J+1}}{2} \\
\label{condi4-1}  
& \overset{\eqref{EidsGeneral}}{\leq} 
\min(r_J, r_{J+1}) < \max(r_J, r_{J+1}) + 1 \\
\label{condi4-2}  
& \overset{\eqref{Jindex0}}{\underset{\eqref{InequalityForType2RadicalTail}}{<}}
\Delta_{k_s+1} \overset{\eqref{DeltaDefn}}{=} (k_s +1)\pmin(q) - 1 \\
\label{condi4-3}  
& \overset{\eqref{Jindex0}}{<} h_{\max, J}(\varrho) + 1 \overset{\eqref{minmaxh2}}{=} \frac{r_J + r_{J+1} + \sIndex_{J+1}}{2} + 1 . 
\end{align}
Similarly to our reasoning in the proof of lemma~\ref{ThetaInFiniteAndNonzeroLem}, formula~\eqref{ThetaFormula0} of the Theta network with~(\ref{condi4-0}--\ref{condi4-3}) gives
\begin{align} \label{Van}
\frac{ \ThetaNet( r_J, r_{J+1}, \sIndex_{J+1} )}{ (-1)^{r_{J + 1}} [r_{J + 1}+1]} = 0 . 
\end{align}

On the other hand, lemma~\ref{ThetaInFiniteAndNonzeroLem} says that the factors in~\eqref{DotBiIt4} with $j \in \{J+1, J+2, \ldots, \np_\multii - 1\}$ are finite, so
\begin{align} \label{NonVan3} 
0 < \bigg| \prod_{j = J+1}^{\np_\multii-1} \frac{ \ThetaNet( r_j, r_{j+1}, \sIndex_{j+1} )}{ (-1)^{r_{j + 1}} [r_{j + 1}+1]} \bigg| < \infty . 
\end{align} 
After inserting 
(\ref{Van},~\ref{NonVan3}) into~\eqref{DotBiIt4}, we arrive with $\BiForm{\beta}{\gamma} = 0$.   
Because $\beta$ and $\gamma$ were arbitrary valenced link states in the span of the collection 
$\smash{ \big\{ \hcancel{\,\alpha} \,\big|\, \alpha \in \smash{\LP_\multii\super{s}}, \, \textnormal{tail}(\alpha) \in \mathsf{R}_{\multii,2}\super{s} \big\}}$, 
we conclude that~\eqref{Mi} holds for $i = 2$.
\end{proof}

Now we are finally ready to collect our results and finish the complete determination of the radical $\smash{\rad \LS_\multii\super{s}}$.

\begin{theorem} \label{BigTailLem2} 
Suppose $\max \multii < \ppmin(q)$.  The collection
\begin{align} \label{BigTail2} 
\big\{ \hcancel{\,\alpha} \,\big|\, \alpha \in \smash{\LP_\multii\super{s}}, \, \textnormal{tail}(\alpha) \in \mathsf{R}_\multii\super{s} \big\} 
\end{align} 
is a basis for $\smash{\rad \LS_\multii\super{s}}$.
\end{theorem}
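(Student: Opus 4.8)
The plan is to obtain theorem~\ref{BigTailLem2} as an essentially immediate consequence of the three preceding lemmas~\ref{radDir2Lem},~\ref{M0Lem}, and~\ref{MiLem}, in exact parallel with the way proposition~\ref{BigTailLem} was deduced from corollary~\ref{RadDirectCor} and lemma~\ref{RadCasesLem} in the special case $\multii = \OneVec{n}$. No further network evaluations or quantum-factorial estimates are needed; all of that analytic work has been carried out inside lemmas~\ref{M0Lem} and~\ref{MiLem}.

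First I would invoke the direct-sum decomposition~\eqref{radDir2} of lemma~\ref{radDir2Lem}, which expresses $\rad \LS_\multii\super{s}$ as the direct sum of $\rad \Span \{ \hcancel{\,\alpha} \,|\, \alpha \in \LP_\multii\super{s},\ \textnormal{tail}(\alpha) \in \mathsf{M}_\multii\super{s} \}$ and the two summands $\rad \Span \{ \hcancel{\,\alpha} \,|\, \alpha \in \LP_\multii\super{s},\ \textnormal{tail}(\alpha) \in \mathsf{R}_{\multii,i}\super{s} \}$ for $i \in \{1,2\}$. Into this I substitute lemma~\ref{M0Lem}, which shows that the moderate-tail summand vanishes, and lemma~\ref{MiLem}, which identifies each radical-tail summand with the full span $\Span \{ \hcancel{\,\alpha} \,|\, \alpha \in \LP_\multii\super{s},\ \textnormal{tail}(\alpha) \in \mathsf{R}_{\multii,i}\super{s} \}$. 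Since $\mathsf{R}_{\multii,1}\super{s}$ and $\mathsf{R}_{\multii,2}\super{s}$ are disjoint by construction and $\mathsf{R}_\multii\super{s} = \mathsf{R}_{\multii,1}\super{s} \cup \mathsf{R}_{\multii,2}\super{s}$ by~\eqref{R12}, the index sets of the two surviving spans are disjoint subsets of $\LP_\multii\super{s}$ whose union is $\{ \alpha \in \LP_\multii\super{s} \,|\, \textnormal{tail}(\alpha) \in \mathsf{R}_\multii\super{s} \}$; as the trivalent link states attached to distinct link patterns are linearly independent by item~\ref{IndOrthBasisLemIt3} of proposition~\ref{IndOrthBasisLem}, the direct sum of these two spans equals $\Span \{ \hcancel{\,\alpha} \,|\, \alpha \in \LP_\multii\super{s},\ \textnormal{tail}(\alpha) \in \mathsf{R}_\multii\super{s} \}$. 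This yields $\rad \LS_\multii\super{s} = \Span \{ \hcancel{\,\alpha} \,|\, \alpha \in \LP_\multii\super{s},\ \textnormal{tail}(\alpha) \in \mathsf{R}_\multii\super{s} \}$.

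To upgrade this spanning statement to the basis claim, I would again cite item~\ref{IndOrthBasisLemIt3} of proposition~\ref{IndOrthBasisLem}: the collection $\{ \hcancel{\,\alpha} \,|\, \alpha \in \LP_\multii\super{s} \}$ is a basis of $\LS_\multii\super{s}$, so its subcollection~\eqref{BigTail2} is linearly independent, and being a linearly independent spanning set of $\rad \LS_\multii\super{s}$ it is therefore a basis. I do not expect any real obstacle at this stage, since the difficulty of the theorem lies entirely in the orthogonality bookkeeping and the Theta-network finiteness and vanishing analysis already performed in lemmas~\ref{radDir2Lem},~\ref{M0Lem}, and~\ref{MiLem}; the only point requiring care is the recombination of the three pieces of~\eqref{radDir2}, namely that $\mathsf{R}_{\multii,1}\super{s} \cup \mathsf{R}_{\multii,2}\super{s}$ is precisely $\mathsf{R}_\multii\super{s}$ and that $\rad$ distributes over the orthogonal direct sum in~\eqref{radDir2}, which is already built into the statement of lemma~\ref{radDir2Lem}.
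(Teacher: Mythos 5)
Your proposal is correct and follows exactly the paper's own argument: combine the direct-sum decomposition of lemma~\ref{radDir2Lem} with lemmas~\ref{M0Lem} and~\ref{MiLem} to identify $\rad \LS_\multii\super{s}$ with the span of~\eqref{BigTail2}, then invoke item~\ref{IndOrthBasisLemIt3} of proposition~\ref{IndOrthBasisLem} for linear independence. No gaps; this is the same proof.
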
 
\begin{proof}
Combining lemma~\ref{radDir2Lem} with lemmas~\ref{M0Lem} and~\ref{MiLem}, we obtain
\begin{align} 
\rad \smash{\LS_\multii\super{s}} 
& \underset{\eqref{M0}}{\overset{\eqref{radDir2}}{=}} 
\bigoplus_{i =1}^2 \rad \Span \{ \hcancel{\,\alpha} \,|\,  \alpha \in \smash{\LP_\multii\super{s}}, \, \textnormal{tail}(\alpha) \in \mathsf{R}_{\multii,i}\super{s} \} \\
& \underset{\eqref{R12}}{\overset{\eqref{Mi}}{=}} \Span \big\{ \hcancel{\,\alpha} \,\big|\, \alpha \in \smash{\LP_\multii\super{s}}, \, \textnormal{tail}(\alpha) \in \mathsf{R}_\multii\super{s} \big\}. 
\end{align}
Also, item~\ref{IndOrthBasisLemIt3} of proposition~\ref{IndOrthBasisLem} implies that the set~\eqref{BigTail2} is linearly independent.  Thus, it is basis for $\smash{\rad \LS_\multii\super{s}}$.
\end{proof}

To end this section, we determine the dimension of $\rad \smash{\LS_\multii\super{s}}$.  
We recall from lemma~\ref{LSDimLem2} that the dimension of the standard module $\smash{\LS_\multii\super{s}}$ is $\smash{\Dim_\multii\super{s}}$, 
the unique solution to recursion problem~\eqref{Recursion2}.
Then, analogously to~\eqref{RadRecurs}, with $\Delta_k$ defined in~\eqref{DeltaDefn} and 
denoting $\multii = (\sIndex_1, \sIndex_2, \ldots, \sIndex_{\np_\multii})$, $\hat{\multii} := (\sIndex_1, \sIndex_2, \ldots, \sIndex_{\np_\multii-1})$, and $t := \sIndex_{\np_\multii}$,
we define the numbers $\smash{\hcancel{\Dim}_\multii\super{s}}$ to be the unique solution to the recursion 
\begin{align} 
\label{RadRecurs2}
\hcancel{\Dim}_\multii\super{s} = 
\sum_{r \, \in \, \DefectSet_{\hat{\multii}} \, \cap \, \DefectSet\sub{s,t}} 
\Big(\one{ \Big\{\Delta_{k_s} < \, \frac{r+s-t}{2} \Big\} } & \one{ \Big\{ \frac{r+s+t}{2} \, < \, \Delta_{k_s+1} \Big\} } \hcancel{\Dim}_{\hat{\multii}}\super{r} \\
\nonumber
 + & \one{ \Big\{ \Delta_{k_s+1} \, \leq \, \frac{r+s+t}{2} \Big\}} \Dim_{\hat{\multii}}\super{r} \Big), \qquad \qquad \text{and} \quad \hcancel{\Dim}\sub{r}\super{r} = 0.
\end{align}

The following lemma is similar to~\eqref{CountLP} in item~\ref{wmlIt4} of lemma~\ref{WalkMultiiLem}. 
Before stating it, it is useful to make the following observation: for any walk $\varrho$ over $\multii$ with defect $s$, 
and for each $j \in \{0, 1, \ldots, \np_\multii - 1\}$, we have
\begin{align}\label{NotSimul} 
0 \overset{\eqref{minmaxh}}{\underset{\eqref{minmaxh2}}{\leq}} h_{\max, j}(\varrho) - h_{\min, j}(\varrho) 
\overset{\eqref{minmaxh}}{\underset{\eqref{minmaxh2}}{=}} \sIndex_{j+1} \leq \max \multii < \pmin(q) = \Delta_{k_s+1} - \Delta_{k_s}. 
\end{align}
Thus, the walks $\varrho\superscr{\,\uparrow}$ and $\varrho\superscr{\,\downarrow}$ cannot simultaneously hit the 
heights $\Delta_{k_s+1}$ and $\Delta_{k_s}$ respectively at the same step of $\varrho$.

\begin{lem} \label{RadWalkLem2} We have 
\begin{align} \label{RadWalk2} 
\hcancel{\Dim}_\multii\super{s} = \#\left\{\parbox{8.9cm}{\textnormal{walks $\varrho$ over $\multii$ with defect $s$ and such that, when followed} \\ 
\textnormal{backward, $\varrho\superscr{\,\uparrow}$ hits height $\Delta_{k_s+1}$ before $\varrho\superscr{\,\downarrow}$ hits height $\Delta_{k_s}$}}\right\}. 
\end{align} 
\end{lem}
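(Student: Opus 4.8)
The plan is to show that the quantity $\smash{\hcancel{\Dim}_\multii\super{s}}$, defined as the unique solution to recursion~\eqref{RadRecurs2}, counts exactly the walks described on the right side of~\eqref{RadWalk2}. Since by theorem~\ref{BigTailLem2} the set $\smash{\big\{ \hcancel{\,\alpha} \,\big|\, \alpha \in \LP_\multii\super{s}, \, \textnormal{tail}(\alpha) \in \mathsf{R}_\multii\super{s} \big\}}$ is a basis for $\smash{\rad \LS_\multii\super{s}}$, and since $\alpha \mapsto \varrho_\alpha$ is a bijection onto walks over $\multii$ with defect $s$ (item~\ref{wmlIt2} of lemma~\ref{WalkMultiiLem}), this lemma is really the combinatorial bookkeeping that will, in the next corollary, yield $\dim \smash{\rad \LS_\multii\super{s}} = \smash{\hcancel{\Dim}_\multii\super{s}}$. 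So the task reduces to: \emph{the walk $\varrho_\alpha$ has $\textnormal{tail}(\alpha) \in \mathsf{R}_\multii\super{s}$ if and only if, followed backward, $\varrho_\alpha\superscr{\,\uparrow}$ hits height $\Delta_{k_s+1}$ strictly before $\varrho_\alpha\superscr{\,\downarrow}$ hits height $\Delta_{k_s}$}, and then the count of such walks satisfies recursion~\eqref{RadRecurs2}.

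First I would identify, using lemma~\ref{WalkMultiiLem} items~\ref{wmlIt5a}--\ref{wmlIt5b}, the condition in~\eqref{RadWalk2} with the structure of the index $J = J_\varrho(q)$ from~\eqref{Jindex0} and the three stopping conditions~\ref{StopIt1}--\ref{StopIt3} of definition~\ref{TrivalentLinkStateDef}. The key point is that, scanning the steps of $\varrho$ from right to left, the maximal height of $\varrho\superscr{\,\uparrow}$ over the $(j+1)$st step is $h_{\max,j}(\varrho)$ and the minimal height of $\varrho\superscr{\,\downarrow}$ is $h_{\min,j}(\varrho)$; moreover by observation~\eqref{NotSimul} the two walks cannot hit $\Delta_{k_s+1}$ and $\Delta_{k_s}$ respectively at the same step. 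Therefore exactly one of the following happens first, reading backward: $\varrho\superscr{\,\uparrow}$ reaches $\Delta_{k_s+1}$ (equivalently $\Delta_{k_s+1} \leq h_{\max,j}(\varrho)$), which is a type-one radical tail and stopping condition~\ref{StopIt1}; or some intermediate height $r_J$ of $\varrho$ itself satisfies $\Delta_{k_s} < r_J < \Delta_{k_s+1}$ while $h_{\min,J}(\varrho) \le \Delta_{k_s}$, giving a type-two radical tail and condition~\ref{StopIt2}; or $\varrho\superscr{\,\downarrow}$ reaches $\Delta_{k_s}$ (equivalently $h_{\min,j}(\varrho) \le \Delta_{k_s}$) while $r_J$ is \emph{not} in the open interval, giving a moderate tail and condition~\ref{StopIt3}; or none of these ever triggers, i.e. $J = -\infty$, again a moderate tail. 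The cases with $\textnormal{tail}(\alpha) \in \mathsf{R}_\multii\super{s} = \mathsf{R}_{\multii,1}\super{s} \cup \mathsf{R}_{\multii,2}\super{s}$ are exactly the first two, which are precisely those counted in~\eqref{RadWalk2}. I would spell this dictionary out carefully, since it is the conceptual heart of the argument.

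Next I would verify that the cardinality of the set in~\eqref{RadWalk2} satisfies the recursion~\eqref{RadRecurs2}. Here the strategy mimics the proof of lemma~\ref{RadWalkLem}: condition on the penultimate height $r = r_{\np_\multii - 1}$ of a walk $\varrho = (r_1, \ldots, r_{\np_\multii})$ over $\multii$ with defect $s = r_{\np_\multii}$. Since $(r_1, \ldots, r_{\np_\multii-1})$ is a walk over $\hat\multii$, we have $r \in \DefectSet_{\hat\multii} \cap \DefectSet\sub{s,t}$ (using lemma~\ref{SpecialDefLem} and $t = \sIndex_{\np_\multii}$), which fixes the index set of the sum. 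For the last step (from $r$ to $s$) the relevant quantities are $h_{\min,\np_\multii-1}(\varrho) = \tfrac{r+s-t}{2}$ and $h_{\max,\np_\multii-1}(\varrho) = \tfrac{r+s+t}{2}$. Now split into the three exhaustive cases determined by where the backward scan of the last step lands relative to $\Delta_{k_s}, \Delta_{k_s+1}$: (i) if $\Delta_{k_s+1} \le \tfrac{r+s+t}{2}$, then $\varrho\superscr{\,\uparrow}$ already hits $\Delta_{k_s+1}$ on this last step (before anything else), so \emph{every} extension by a walk over $\hat\multii$ with defect $r$ is counted, contributing $\smash{\Dim_{\hat\multii}\super{r}}$; (ii) if $\tfrac{r+s+t}{2} < \Delta_{k_s+1}$ and $\Delta_{k_s} < \tfrac{r+s-t}{2}$, then neither walk triggers on the last step, so one recurses: the walk is counted iff the head $(r_1, \ldots, r_{\np_\multii-1})$ over $\hat\multii$ with defect $r$ already has the radical-tail property, contributing $\smash{\hcancel{\Dim}_{\hat\multii}\super{r}}$; (iii) if $\tfrac{r+s-t}{2} \le \Delta_{k_s}$, then $\varrho\superscr{\,\downarrow}$ hits $\Delta_{k_s}$ on the last step; but one must check that in this subcase $r_J$ for the full walk cannot satisfy $\Delta_{k_s} < r_J < \Delta_{k_s+1}$ \emph{after} the last step, so this subcase contributes $0$ to the backward-scan count — this is where I'd use that $s = r_{\np_\multii}$ satisfies $\Delta_{k_s} \le s \le \Delta_{k_s+1}-1$ via~\eqref{skSet}, so $h_{\min,\np_\multii-1}(\varrho) \le \Delta_{k_s}$ already forces the moderate stopping condition. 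The three cases combine exactly into the summand of~\eqref{RadRecurs2}, and the initial condition $\smash{\hcancel{\Dim}\sub{r}\super{r}} = 0$ holds since a one-node walk $(r)$ has empty tail structure (no steps to scan, $J$ is necessarily $\np_\multii$ or $-\infty$, hence a moderate tail). Uniqueness of the solution to~\eqref{RadRecurs2} then identifies the two quantities.

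The main obstacle I anticipate is the careful case analysis in the dictionary step: correctly matching the three stopping conditions of definition~\ref{TrivalentLinkStateDef} — which are phrased in terms of the \emph{first} time (reading the modified walk representation $\varrho\superscr{\,\downarrow}_\alpha$ leftward) that the lowest walk or the walk $\varrho_\alpha$ itself hits $\Delta_{k_s}, r_J \in (\Delta_{k_s}, \Delta_{k_s+1})$, or $\Delta_{k_s+1}$ — against the ``backward scan'' phrasing of~\eqref{RadWalk2} in terms of $\varrho\superscr{\,\uparrow}$ and $\varrho\superscr{\,\downarrow}$. The subtlety is that $\varrho\superscr{\,\uparrow}$ reaching $\Delta_{k_s+1}$ over step $j+1$ corresponds to stopping condition~\ref{StopIt1}, while $\varrho\superscr{\,\downarrow}$ reaching $\Delta_{k_s}$ corresponds to condition~\ref{StopIt3} \emph{only when} condition~\ref{StopIt2} has not intervened; observation~\eqref{NotSimul} (that $\max\multii < \pmin(q) = \Delta_{k_s+1} - \Delta_{k_s}$) is exactly what guarantees these three alternatives are mutually exclusive at each step, so the ``which happens first'' comparison is well-defined. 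Once that equivalence is pinned down, the recursion verification is routine bookkeeping along the lines already used for lemma~\ref{RadWalkLem}.
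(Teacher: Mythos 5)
Your proposal is correct and takes essentially the same route as the paper: the paper's proof just conditions on the last step of an arbitrary walk (exactly as in lemma~\ref{RadWalkLem}) to show that the count on the right side of~\eqref{RadWalk2} satisfies recursion~\eqref{RadRecurs2}, checks the initial condition, and invokes uniqueness of the solution --- which is your third paragraph, with observation~\eqref{NotSimul} guaranteeing the ``which hits first'' comparison is well-defined step by step. Your second paragraph, the dictionary between radical tails and the backward-scan condition, is not actually needed for this purely combinatorial lemma (it is the content of the subsequent corollary~\ref{DnsLemAndRadDimCor2}), but including it does no harm.
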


\begin{proof} 
This lemma can be proven similarly as lemma~\ref{RadWalkLem}, 
except that $\multii$ may be any multiindex in $\{\OneVec{0}\} \cup \smash{\bZnn^\#}$.  
\end{proof}

Extending the comment following lemma~\ref{RadWalkLem}, if $s + 1 < \pmin(q)$, then $\smash{\hcancel{\Dim}_\multii\super{s}}$ equals the number of walks $\varrho$ 
over $\multii$ with defect $s$ and such that $\varrho\superscr{\,\uparrow}$ hits height $\pmin(q) - 1$ (or equivalently, with maximum apex at or above $\pmin(q) - 1$).

\begin{cor} \label{DnsLemAndRadDimCor2} 
Suppose $\max \multii < \ppmin(q)$.  We have
\begin{align} \label{DnsDotDefn2} 
 \dim \rad \smash{\LS_\multii\super{s}} = 
\#\mathsf{R}_\multii\super{s} = \hcancel{\Dim}_\multii\super{s}.
\end{align} 
\end{cor}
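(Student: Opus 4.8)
The plan is to deduce the corollary from Theorem~\ref{BigTailLem2} together with Lemma~\ref{RadWalkLem2}. First I would observe that Theorem~\ref{BigTailLem2} asserts that the set $\smash{\big\{ \hcancel{\,\alpha} \,\big|\, \alpha \in \smash{\LP_\multii\super{s}}, \, \textnormal{tail}(\alpha) \in \mathsf{R}_\multii\super{s} \big\}}$ is a basis for $\smash{\rad \LS_\multii\super{s}}$; since a basis has as many elements as the dimension of the space, this immediately yields
\begin{align}
\dim \smash{\rad \LS_\multii\super{s}} = \# \big\{ \hcancel{\,\alpha} \,\big|\, \alpha \in \smash{\LP_\multii\super{s}}, \, \textnormal{tail}(\alpha) \in \mathsf{R}_\multii\super{s} \big\} .
\end{align}
For this count, I would note that the map $\alpha \mapsto \hcancel{\,\alpha}$ is injective (it has an upper-unitriangular matrix representation by Lemma~\ref{InsProjAlphaLem}), so the right-hand side equals $\#\big\{ \alpha \in \smash{\LP_\multii\super{s}} \,\big|\, \textnormal{tail}(\alpha) \in \mathsf{R}_\multii\super{s} \big\}$.

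Next I would translate this count of link patterns into a count of tails. By the definitions in~\eqref{tails}, \eqref{R12}, and the decomposition~\eqref{AllTails}, the set $\smash{\mathsf{R}_\multii\super{s}}$ consists precisely of the tails that are either type-one or type-two radical tails. The key point is that the number of link patterns with a given tail is irrelevant to whether that tail is radical: the type of the tail (type-one radical, type-two radical, or moderate) is determined by $\textnormal{tail}(\alpha)$ alone, via the stopping conditions~\ref{StopIt1}--\ref{StopIt3} in Definition~\ref{TrivalentLinkStateDef}. Hence I want the second equality $\#\big\{ \alpha \in \smash{\LP_\multii\super{s}} \,\big|\, \textnormal{tail}(\alpha) \in \mathsf{R}_\multii\super{s} \big\} = \#\mathsf{R}_\multii\super{s}$, which requires showing that distinct radical tails come from distinct link patterns. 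Since by item~\ref{wmlIt2} of Lemma~\ref{WalkMultiiLem} the map $\alpha \mapsto \varrho_\alpha$ is a bijection between $\smash{\LP_\multii\super{s}}$ and walks over $\multii$ with defect $s$, and since a radical tail $\tau = (r_J, r_{J+1}, \ldots, r_{\np_\multii})$ is itself a walk segment, I need the observation that whenever $J_\alpha(q) \ge 0$ the head of $\varrho_\alpha$ is forced: the head must be a walk over $(\sIndex_1, \ldots, \sIndex_{J-1})$ whose final height, combined with $\sIndex_J$, reaches $r_J$ without $\varrho\superscr{\,\uparrow}$ hitting $\Delta_{k_s+1}$ or $\varrho\superscr{\,\downarrow}$ hitting $\Delta_{k_s}$ earlier --- but actually a radical tail $\tau$ already records $r_J$, and I claim the map $\alpha \mapsto \textnormal{tail}(\alpha)$ restricted to link patterns with radical tail is injective. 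The cleanest route is: if $J_\alpha(q) = J \ge 0$, then by the stopping rule the last vertex replacement occurs between steps $J$ and $J+1$, which happens only when the walk $\varrho\superscr{\,\downarrow}_\alpha$ first hits $\Delta_{k_s\pm1}$; this condition, read backwards from the defect $s$, is determined entirely by the tail, and so is the value of $J$. Thus a radical tail determines $J$ and the full suffix $(r_J, \ldots, r_{\np_\multii})$ --- but since there can be many heads, injectivity of $\alpha\mapsto\mathsf{R}_\multii\super{s}$ as a set map need not hold. The correct statement I should prove instead is: $\#\mathsf{R}_\multii\super{s}$, as a \emph{set} of tails, equals the number of \emph{walks} over $\multii$ with defect $s$ whose backward trace has $\varrho\superscr{\,\uparrow}$ hitting $\Delta_{k_s+1}$ before $\varrho\superscr{\,\downarrow}$ hits $\Delta_{k_s}$, and that this equals $\#\big\{\alpha \mid \textnormal{tail}(\alpha)\in\mathsf{R}_\multii\super{s}\big\}$ because the defining ``radical'' condition on a walk $\varrho$ depends only on its tail portion from step $J$ onward.

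Having set up the two equalities, the final equality $\#\mathsf{R}_\multii\super{s} = \smash{\hcancel{\Dim}_\multii\super{s}}$ is exactly the content of Lemma~\ref{RadWalkLem2}: the right side of~\eqref{RadWalk2} counts walks $\varrho$ over $\multii$ with defect $s$ such that, read backwards, $\varrho\superscr{\,\uparrow}$ hits $\Delta_{k_s+1}$ before $\varrho\superscr{\,\downarrow}$ hits $\Delta_{k_s}$, which by~\eqref{NotSimul} (the two walks cannot hit these heights simultaneously) is a well-defined mutually exclusive dichotomy, and by Definition~\ref{TrivalentLinkStateDef} this is precisely the set of link patterns whose tail is a radical tail (type one if $\varrho\superscr{\,\downarrow}$ never reaches $\Delta_{k_s}$ in the relevant range, type two in the boundary case~\ref{StopIt2}). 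So I would invoke Lemma~\ref{RadWalkLem2} to conclude $\#\mathsf{R}_\multii\super{s} = \smash{\hcancel{\Dim}_\multii\super{s}}$, completing the chain of equalities. I expect the main obstacle to be the careful bookkeeping in the middle step --- matching the set-theoretic cardinality $\#\mathsf{R}_\multii\super{s}$ of tails against the walk count in Lemma~\ref{RadWalkLem2} --- since one must be precise that the ``radical'' property of a walk factors through its tail, so that counting radical tails is the same as counting walks with radical backward behavior; the rest is a direct application of Theorem~\ref{BigTailLem2}, the injectivity of $\alpha \mapsto \hcancel{\,\alpha}$ (Lemma~\ref{InsProjAlphaLem}), and Lemma~\ref{RadWalkLem2}.
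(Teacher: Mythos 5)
Your overall route is the paper's own proof: the first equality is read off from theorem~\ref{BigTailLem2} (the displayed collection is a basis, and $\alpha \mapsto \hcancel{\,\alpha}$ is injective by lemma~\ref{InsProjAlphaLem}), and the final equality is exactly lemma~\ref{RadWalkLem2} together with the observation that the radical/moderate dichotomy of $\textnormal{tail}(\alpha)$, via the stopping conditions of definition~\ref{TrivalentLinkStateDef} and~\eqref{NotSimul}, is precisely the backward-hitting dichotomy in~\eqref{RadWalk2}. So in substance you are reproducing the paper's two-line argument.

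The one step that does not survive scrutiny is your ``corrected'' middle claim, namely that $\#\mathsf{R}_\multii\super{s}$, read as the cardinality of the \emph{set of distinct radical tails}, equals the number of walks over $\multii$ with defect $s$ whose backward trace is radical. That is false in general, because the head $(r_1,\ldots,r_{J-1})$ is not determined by the tail: distinct walks can share one and the same radical tail. For instance, take $\pmin(q)=3$, $\multii = \OneVec{6}$, $s=0$, so $\Delta_{k_s}=-1$ and $\Delta_{k_s+1}=2$. The walks $(1,2,3,2,1,0)$, $(1,0,1,2,1,0)$, and $(1,2,1,2,1,0)$ all have $J=4$ and the identical radical tail $(2,1,0)$, while $(1,2,1,0,1,0)$ has $J=2$ and radical tail $(2,1,0,1,0)$; thus there are four radical walks, and indeed $\dim \rad \LS_6\super{0} = \hcancel{\Dim}_6\super{0} = 4$, but only two distinct radical tails. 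What the chain of equalities needs --- and what the notation $\#\mathsf{R}_\multii\super{s}$ is implicitly counting here (the paper's statement is loose in exactly the same way) --- is the number of link patterns $\alpha \in \LP_\multii\super{s}$ whose tail is radical, equivalently the number of walks counted in~\eqref{RadWalk2}; your first step already identifies that count with $\dim \rad \LS_\multii\super{s}$, and lemma~\ref{RadWalkLem2} identifies it with $\hcancel{\Dim}_\multii\super{s}$. So you should simply drop the distinct-tail assertion (no injectivity of $\alpha \mapsto \textnormal{tail}(\alpha)$ holds or is needed); with that removed, your argument is the paper's proof.
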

\begin{proof} 
The first equality in~\eqref{DnsDotDefn2} immediately follows from theorem~\ref{BigTailLem2}, and the second 
equality from~\eqref{RadWalk2} of proposition~\ref{RadWalkLem2} and the definition of a radical tail.
\end{proof}

Lemma~\ref{RadWalkLem2} and corollary~\ref{DnsLemAndRadDimCor2} reduce to lemma~\ref{RadWalkLem} and 
corollary~\ref{DnsLemAndRadDimCor} respectively if $\multii = \OneVec{n}$.

\subsection{Nondegenerate cases} \label{rofSect31}

We recall that the bilinear form~\eqref{LSBiFormExt} on the standard module $\smash{\LS_\multii\super{s}}$ is said to be ``nondegenerate'' if its radical is trivial, i.e., $\rad \smash{\LS_\multii\super{s}} = \{0\}$. 
In section~\ref{LinkStateModSect}, proposition~\ref{GenLem2} implies that the standard module $\smash{\LS_\multii\super{s}}$ is simple if and only if $\rad \smash{\LS_\multii\super{s}} = \{0\}$. 
Thus, for the purpose of classifying all simple $\TL_\multii(\nu)$-modules, it is worthwhile to determine all $q \in \bC^\times$ for which the bilinear form on 
a given standard module is nondegenerate. To this end, we begin with the following lemma. To state it, it is first helpful to recall the containment 
$\DefectSet_\multii \subset \DefectSet_{\Summed_\multii}$ from~\eqref{subset}.

\begin{lem} \label{IfonlyIfLem} 
Suppose $\max \multii < \ppmin(q)$.  For each $s \in \DefectSet_\multii$, we have
\begin{align} \label{IfonlyIf} 
\hcancel{\Dim}_\multii\super{s} = 0 \qquad \Longleftrightarrow \qquad \hcancel{\Dim}_{\Summed_\multii}\super{s} = 0 .
\end{align} 
\end{lem}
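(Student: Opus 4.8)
The plan is to prove the equivalence \eqref{IfonlyIf} by translating both sides into statements about walks, using the combinatorial characterizations of $\smash{\hcancel{\Dim}_\multii\super{s}}$ and $\smash{\hcancel{\Dim}_{\Summed_\multii}\super{s}}$ established in lemmas~\ref{RadWalkLem2} and~\ref{RadWalkLem}. Recall that $\smash{\hcancel{\Dim}_\multii\super{s}}$ counts walks $\varrho$ over $\multii$ with defect $s$ such that, followed backward, $\varrho\superscr{\,\uparrow}$ hits height $\Delta_{k_s+1}$ before $\varrho\superscr{\,\downarrow}$ hits height $\Delta_{k_s}$; call such walks \emph{radical walks over $\multii$}. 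Similarly, $\smash{\hcancel{\Dim}_{\Summed_\multii}\super{s}}$ counts walks over $\OneVec{\Summed_\multii}$ with defect $s$ that, followed backward, hit height $\Delta_{k_s+1}$ before height $\Delta_{k_s}$; call these \emph{radical walks over $\OneVec{\Summed_\multii}$}. Both quantities are nonnegative integers, so \eqref{IfonlyIf} amounts to: there exists a radical walk over $\multii$ with defect $s$ if and only if there exists a radical walk over $\OneVec{\Summed_\multii}$ with defect $s$.

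The key observation is that the two notions of walk are linked by the operations $\varrho \mapsto \varrho\superscr{\,\uparrow}$ and $\varrho \mapsto \varrho\superscr{\,\downarrow}$ of \eqref{highest}--\eqref{lowest}, which produce walks over $\OneVec{\Summed_\multii}$ that touch $\varrho$ at every step, together with the replacement \eqref{EquivPaths} that refines each step of a $\multii$-walk into $\sIndex_{j+1}$ unit steps. For the forward direction, I would take a radical walk $\varrho$ over $\multii$ with defect $s$ and consider the refined unit-step walk $\varrho\superscr{\,\downarrow}$ over $\OneVec{\Summed_\multii}$ that realizes it (the blue walk in figure~\ref{fig3}); I claim this (or a suitable modification near the tail) is a radical walk over $\OneVec{\Summed_\multii}$. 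The point is that the defining inequality for being radical — $\varrho\superscr{\,\uparrow}$ reaching $\Delta_{k_s+1}$ strictly before $\varrho\superscr{\,\downarrow}$ reaches $\Delta_{k_s}$, read backward — is precisely a condition on the envelope walks $\varrho\superscr{\,\uparrow}$ and $\varrho\superscr{\,\downarrow}$, and these \emph{are} themselves walks over $\OneVec{\Summed_\multii}$ with the same defect $s$; by item~\ref{wmlIt5} of lemma~\ref{WalkMultiiLem} their apex/trough heights between consecutive $\multii$-steps are exactly $h_{\max,j}(\varrho)$ and $h_{\min,j}(\varrho)$, which are the quantities entering the stopping conditions \ref{StopIt1}--\ref{StopIt3} of definition~\ref{TrivalentLinkStateDef}. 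So a radical $\multii$-walk directly exhibits a unit-step walk whose upper envelope hits $\Delta_{k_s+1}$ before its lower envelope hits $\Delta_{k_s}$, which is what a radical $\OneVec{\Summed_\multii}$-walk requires.

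For the reverse direction, given a radical walk $w$ over $\OneVec{\Summed_\multii}$ with defect $s$, I would project it down to a walk $\varrho$ over $\multii$ by reading off the heights $r_j$ of $w$ at the cumulative positions $\sIndex_1, \sIndex_1+\sIndex_2, \ldots$; this is a walk over $\multii$ by \eqref{WalkHeights} (the constraint $r_{j+1}\in\DefectSet\sub{r_j,\sIndex_{j+1}}$ is exactly the statement that $w$ travels between $r_j$ and $r_{j+1}$ in $\sIndex_{j+1}$ unit steps). One then checks that $w$ lies between $\varrho\superscr{\,\downarrow}$ and $\varrho\superscr{\,\uparrow}$, so if $w$'s backward trajectory reaches $\Delta_{k_s+1}$ before $\Delta_{k_s}$, then a fortiori $\varrho\superscr{\,\uparrow}$ reaches $\Delta_{k_s+1}$ no later than $w$ does and $\varrho\superscr{\,\downarrow}$ reaches $\Delta_{k_s}$ no earlier, making $\varrho$ radical over $\multii$. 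Here the inequality \eqref{NotSimul} — that $\varrho\superscr{\,\uparrow}$ and $\varrho\superscr{\,\downarrow}$ cannot hit $\Delta_{k_s+1}$ and $\Delta_{k_s}$ at the same $\multii$-step, a consequence of $\max\multii<\pmin(q)=\Delta_{k_s+1}-\Delta_{k_s}$ — is what makes the "before'' relation well-defined and stable under these envelope comparisons. The trivial cases $R_s=0$ (where both sides vanish by lemma~\ref{JDeltaLem} / the $R_s=0$ branch of the recursions) and $\pmin(q)=\infty$ should be dispatched separately at the outset.

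The main obstacle I anticipate is bookkeeping the "reads backward'' direction carefully across the refinement map: a single $\multii$-step of height change can, when refined into $\sIndex_{j+1}$ unit steps, have its upper envelope touch $\Delta_{k_s+1}$ at an intermediate unit step even when $h_{\max,j}(\varrho)$ is defined by the endpoint data, and likewise near the tail one must match the precise stopping conditions \ref{StopIt1}--\ref{StopIt3} (type-one vs.\ type-two radical tails, figures~\ref{fig4-1}--\ref{fig4-3}) against the plain "hit $\Delta_{k_s+1}$ before $\Delta_{k_s}$'' condition for $\OneVec{\Summed_\multii}$-walks. I expect that item~\ref{wmlIt5} of lemma~\ref{WalkMultiiLem}, which pins down exactly the extremal heights of $\varrho\superscr{\,\uparrow}$ and $\varrho\superscr{\,\downarrow}$ on each segment, together with \eqref{EquivPathsClosed} and the inequality \eqref{NotSimul}, resolves this; but getting the strict-versus-weak inequalities consistent at the step where the stopping condition triggers is the delicate point and will require the most care.
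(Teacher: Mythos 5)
Your reduction of both sides to existence statements about walks (via lemmas~\ref{RadWalkLem} and~\ref{RadWalkLem2}) and your dispatch of the trivial cases is fine, but both implications rest on witnesses that do not work as stated. In the forward direction you propose $\varrho\superscr{\,\downarrow}$ as the radical walk over $\OneVec{\Summed_\multii}$; this conflates the envelopes of $\varrho$ with the envelopes of $\varrho\superscr{\,\downarrow}$ regarded as a unit-step walk (for a walk over $\OneVec{n}$ the upper and lower envelopes are the walk itself). Concretely, take $\pmin(q)=5$, $\multii=(3,2,3)$, $s=0$, $\varrho=(3,3,0)$: the apex of the middle segment is $4=\Delta_{k_s+1}$, so $\varrho$ is radical, yet $\varrho\superscr{\,\downarrow}=(1,2,3,2,3,2,1,0)$ never reaches height $4$ and is not radical. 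This slip is repairable: the correct witness is $\varrho\superscr{\,\uparrow}$, using $\varrho\superscr{\,\downarrow}\leq\varrho\superscr{\,\uparrow}$ and the intermediate-value property of $\pm1$ walks to rule out an early backward hit of $\Delta_{k_s}$.

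The reverse direction has the more serious flaw: you assert that since $\varrho\superscr{\,\downarrow}\leq w\leq\varrho\superscr{\,\uparrow}$, the lower envelope of the projected walk ``reaches $\Delta_{k_s}$ no earlier'' than $w$ does — but the inequality goes the other way: $\varrho\superscr{\,\downarrow}\leq w$ means it can reach the \emph{lower} threshold strictly earlier in backward time, and indeed the projection of a radical walk over $\OneVec{\Summed_\multii}$ need not be radical over $\multii$. For example, with $\pmin(q)=5$, $s=6$ (so $\Delta_{k_s}=4$, $\Delta_{k_s+1}=9$) and $\multii=(4,4,4,2,4)$, let $w$ be a unit walk whose projected heights are $(4,8,8,6,6)$, which on its last segment runs $6\to7\to8\to7\to6$ and which touches $9$ on an earlier segment: then $w$ is radical, but the projected walk already has $h_{\min,4}=\tfrac{6+6-4}{2}=4=\Delta_{k_s}$ on its last segment, so its lower envelope hits $\Delta_{k_s}$ backward before its upper envelope (apex $8<9$ there) can hit $\Delta_{k_s+1}$; the projection is moderate, not radical. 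So the implication cannot be proved by projecting the given walk; one must exhibit a different $\multii$-walk. The paper sidesteps both issues with an extremality argument: it compares only the \emph{tallest} walk $\varrho_{\max}$ over $\multii$ with defect $s$ and the tallest walk over $\OneVec{\Summed_\multii}$ with defect $s$ (which is $\varrho_{\max}\superscr{\,\uparrow}$), notes that raising a walk raises both envelopes so that if any walk is radical then the tallest one is, and then reads off directly from the definition of a radical tail that $\varrho_{\max}$ is radical if and only if $\varrho_{\max}\superscr{\,\uparrow}$ is. Inserting that extremality step (in place of your projection argument) and swapping $\varrho\superscr{\,\downarrow}$ for $\varrho\superscr{\,\uparrow}$ in the forward direction would repair your proof, but as written both directions contain genuine gaps.
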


\begin{proof} 
We consider the tallest walk $\varrho_{\max}$ over $\multii$ with defect $s$, and the tallest walk $\varrho\superscr{\,\uparrow}_{\max}$ 
over $\OneVec{n}_\multii$ with defect $s$: e.g.,
\begin{align}
\vcenter{\hbox{\includegraphics[scale=0.275]{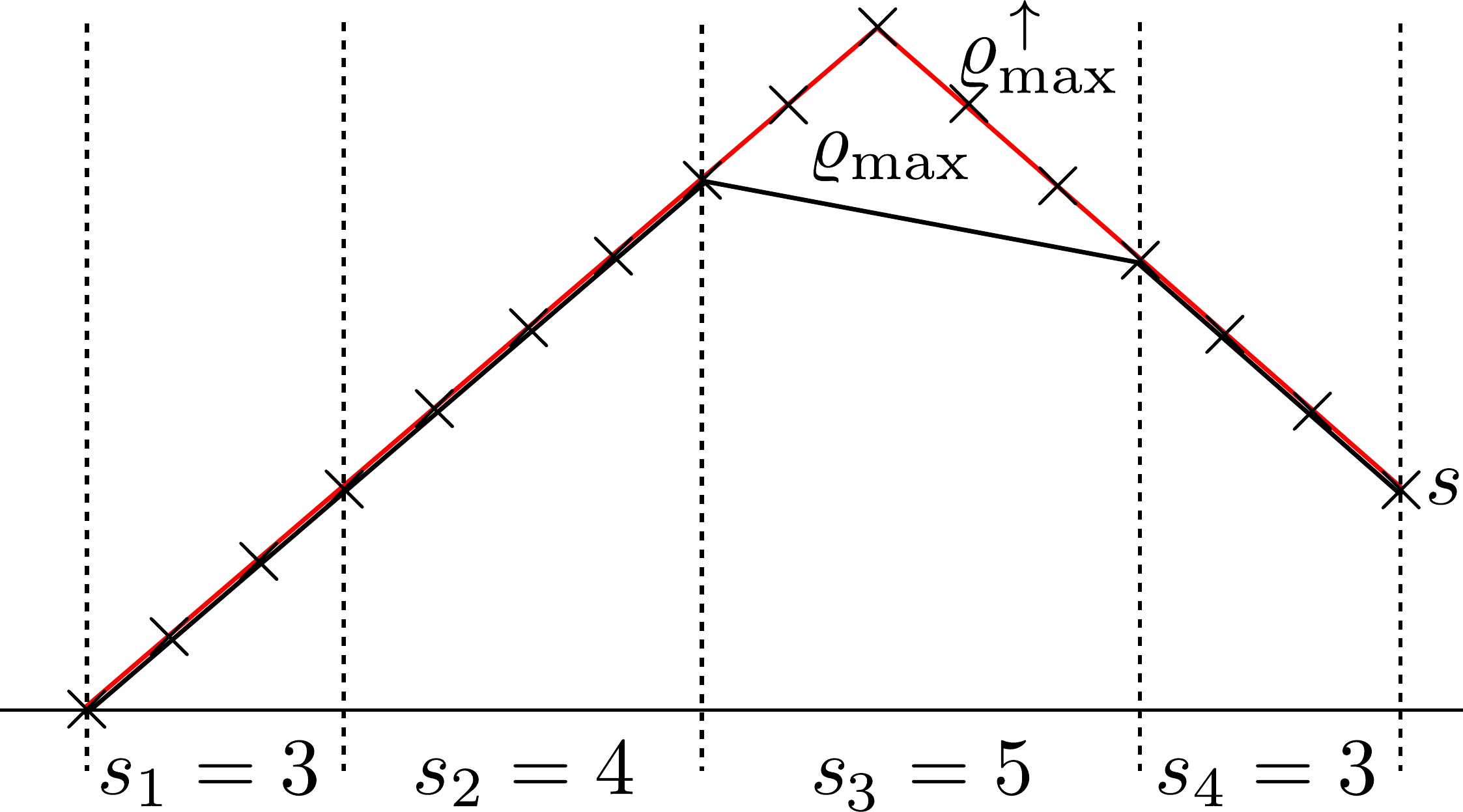}}} \qquad \qquad \qquad
\vcenter{\hbox{\includegraphics[scale=0.275]{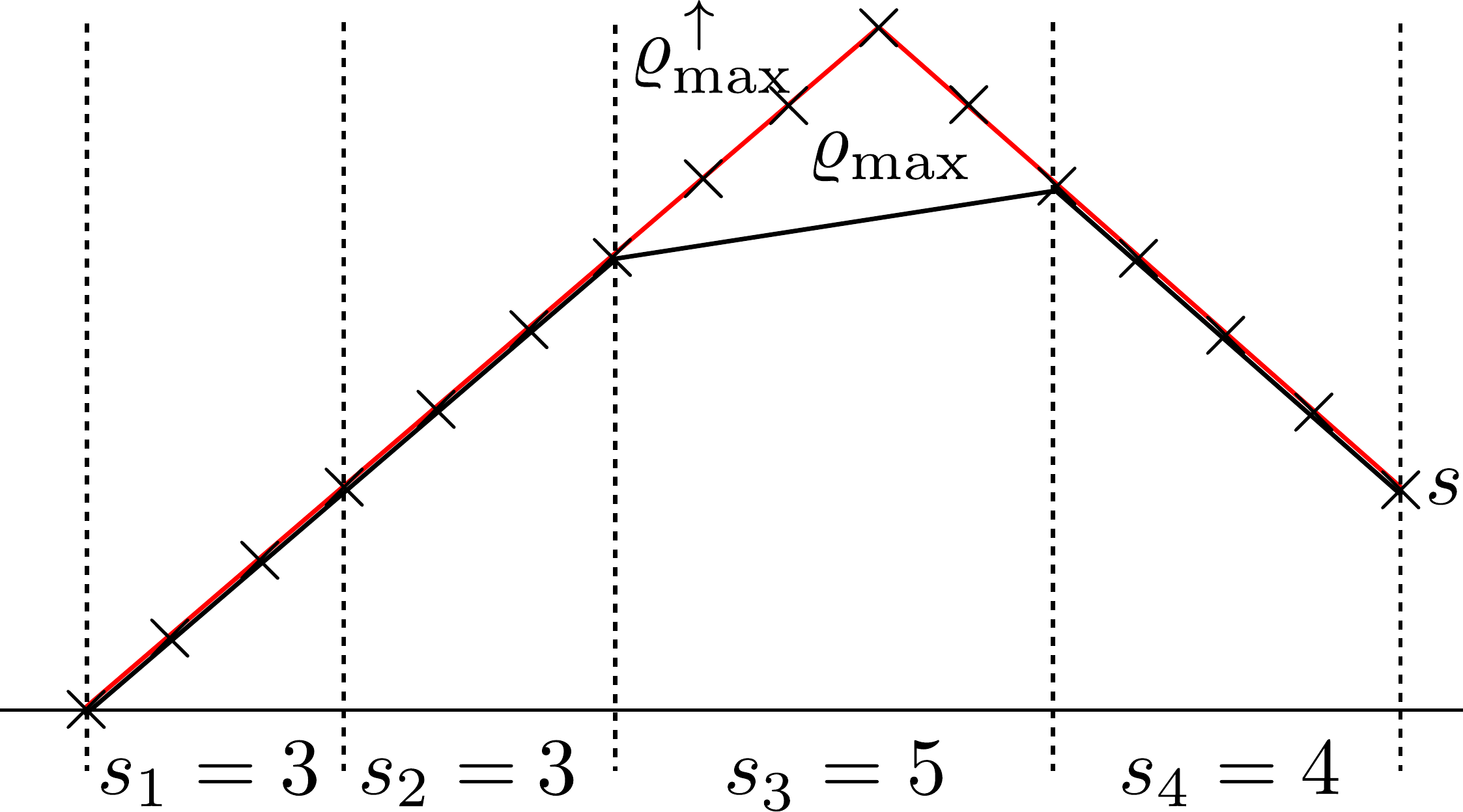} .}}
\end{align}
Because $\varrho_{\max}$ is the tallest walk over $\multii$ with defect $s$, if there exists a walk over $\multii$ with defect $s$ and with radical tail, then the tail of 
$\varrho_{\max}$ is also radical.  
A similar fact holds for $\varrho_{\max}\superscr{\,\uparrow}$ and walks over $\OneVec{n}_\multii$ with defect $s$.  
Hence, we have 
\begin{align} \label{Iff}
& \text{the tail of $\varrho_{\max}$ is radical} \qquad \overset{\eqref{RadWalk2}}{\Longleftrightarrow}  
\qquad \hcancel{\Dim}_\multii\super{s} \neq 0 \\
& \text{the tail of $\varrho_{\max}\superscr{\,\uparrow}$ is radical} \qquad 
\overset{\eqref{RadWalk}}{\Longleftrightarrow} 
\qquad \hcancel{\Dim}_{\Summed_\multii}\super{s} \neq 0 .
\end{align}
Now from the definition of a radical tail, we see that the tail of $\varrho_{\max}$ is radical if and only if the tail of $\varrho_{\max}\superscr{\,\uparrow}$ is radical.  
Combined with~\eqref{Iff}, this last fact implies that $\smash{\hcancel{\Dim}_\multii\super{s}} \neq 0$ if and only if $\smash{\hcancel{\Dim}_{\Summed_\multii}\super{s}} \neq 0$.
\end{proof}

\begin{cor} \label{RadDimCor3} 
Suppose $\max \multii  < \ppmin(q)$.  For each $s \in \DefectSet_\multii$, we have
\begin{align} \label{IfonlyIf2} 
\rad \smash{\LS_\multii\super{s}} = \{0\} \qquad \Longleftrightarrow \qquad \rad \LS_{\Summed_\multii}\super{s} = \{0\} . 
\end{align} 
\end{cor}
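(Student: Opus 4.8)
The plan is to derive Corollary~\ref{RadDimCor3} directly from the dimension statement in Lemma~\ref{IfonlyIf2Lem}—that is, from Lemma~\ref{IfonlyIfLem}—together with the dimension formula for the radical already established in Corollary~\ref{DnsLemAndRadDimCor2} (for the valenced case) and Corollary~\ref{DnsLemAndRadDimCor} (for the $\OneVec{n}$ case). The observation is simply that $\rad \smash{\LS_\multii\super{s}}$ is trivial precisely when its dimension is zero, and the dimension is governed by the combinatorial quantity $\smash{\hcancel{\Dim}_\multii\super{s}}$.

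First I would note that, by Corollary~\ref{DnsLemAndRadDimCor2}, we have $\dim \rad \smash{\LS_\multii\super{s}} = \smash{\hcancel{\Dim}_\multii\super{s}}$ under the hypothesis $\max \multii < \ppmin(q)$, and hence $\rad \smash{\LS_\multii\super{s}} = \{0\}$ if and only if $\smash{\hcancel{\Dim}_\multii\super{s}} = 0$. Next, I would observe that the hypothesis $\max \multii < \ppmin(q)$, together with the reduction convention that $\OneVec{n}_\multii = (\underbrace{1,\ldots,1}_{\Summed_\multii})$ has $\max \OneVec{n}_\multii = 1 \leq \ppmin(q) - 1$ (noting $\ppmin(q) \geq 2$ always), ensures that Corollary~\ref{DnsLemAndRadDimCor} applies to the multiindex $\OneVec{n}_\multii$, giving $\dim \rad \LS_{\Summed_\multii}\super{s} = \smash{\hcancel{\Dim}_{\Summed_\multii}\super{s}}$, so $\rad \LS_{\Summed_\multii}\super{s} = \{0\}$ if and only if $\smash{\hcancel{\Dim}_{\Summed_\multii}\super{s}} = 0$. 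Then the chain of equivalences
\begin{align}
\rad \smash{\LS_\multii\super{s}} = \{0\}
\quad \Longleftrightarrow \quad \smash{\hcancel{\Dim}_\multii\super{s}} = 0
\quad \Longleftrightarrow \quad \smash{\hcancel{\Dim}_{\Summed_\multii}\super{s}} = 0
\quad \Longleftrightarrow \quad \rad \LS_{\Summed_\multii}\super{s} = \{0\} ,
\end{align}
where the middle equivalence is exactly Lemma~\ref{IfonlyIfLem}, completes the argument.

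I do not anticipate a genuine obstacle here: the substantive content is all carried by Lemma~\ref{IfonlyIfLem}, whose proof (already given in the excerpt) identifies when the tallest walk over $\multii$ with defect $s$ has a radical tail and matches this against the tallest walk over $\OneVec{n}_\multii$ with the same defect. The only small care needed in writing the corollary's proof is to confirm that all three cited results (Corollary~\ref{DnsLemAndRadDimCor2}, Lemma~\ref{IfonlyIfLem}, and Corollary~\ref{DnsLemAndRadDimCor}) are applicable under the single hypothesis $\max \multii < \ppmin(q)$—which they are, since the $\OneVec{n}_\multii$ multiindex automatically satisfies its own instance of the hypothesis. So the proof is a two-line invocation of the preceding dimension computations. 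If one instead wanted a proof that bypasses the explicit dimension formula, one could use the projection identity $\rad \LS_\multii\super{s} = \WJProjHat_\multii \rad \LS_{\Summed_\multii}\super{s}$ from~\eqref{EmbProj2} for one direction and the basis description in Theorem~\ref{BigTailLem2} versus Proposition~\ref{BigTailLem} for the other, but the dimension-based route is cleaner and I would present that.
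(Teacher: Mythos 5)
Your proposal is correct and follows essentially the same route as the paper, which proves the corollary by combining corollaries~\ref{DnsLemAndRadDimCor} and~\ref{DnsLemAndRadDimCor2} with lemma~\ref{IfonlyIfLem} exactly as in your chain of equivalences. The extra check that the $\OneVec{n}_\multii$ case satisfies its hypothesis is harmless (and in fact unnecessary, since corollary~\ref{DnsLemAndRadDimCor} carries no such hypothesis).
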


\begin{proof} 
This immediately follows from corollaries~\ref{DnsLemAndRadDimCor} and~\ref{DnsLemAndRadDimCor2} 
with lemma~\ref{IfonlyIfLem}.
\end{proof}

\begin{figure}
\includegraphics[scale=0.275]{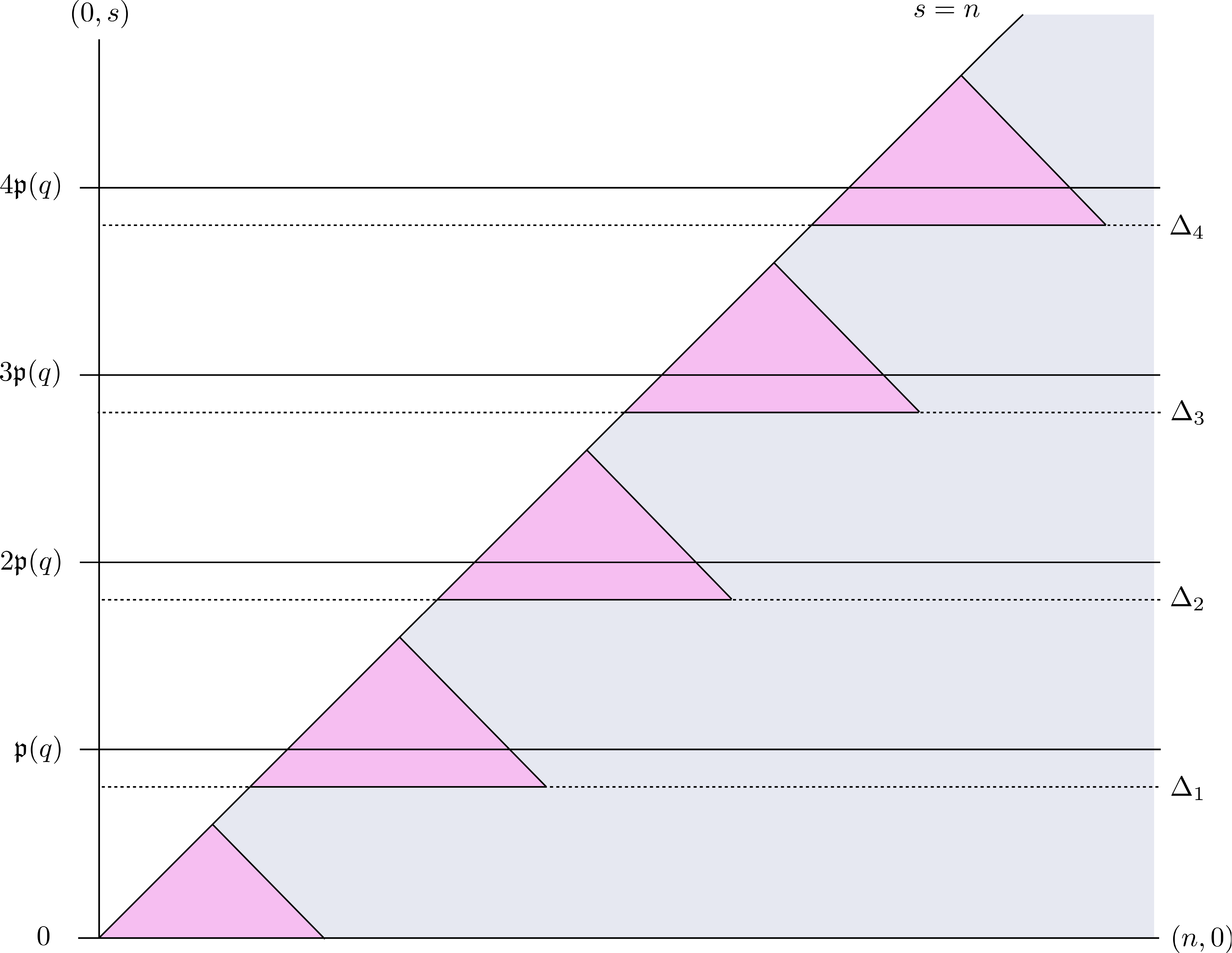} 
\caption{\label{GridFigure}
Illustration of the pairs $(n,s) \in \bZnn \times \DefectSet_n$.  
The set of $q \in \bC^\times$ such that $(n,s)$ is on either a pink triangle or 
a horizontal dashed line is denoted by $\smash{\Dom_n^{(s)}}$.}
\end{figure}

Now we determine all $q \in \bC^\times$ such that the bilinear form  on $\smash{\LS_\multii\super{s}}$ is nondegenerate. 
In light of corollary~\ref{RadDimCor3}, we only need to consider the case $\multii = \OneVec{n}$, for $n \in \bZnn$. 
As illustrated in figure~\ref{GridFigure}, the pairs $(n,s) \in \bZnn \times \DefectSet_n$ live on the square lattice, 
within a semi-infinite triangle bound between the lines $s = 0$ and $s = n$.
Certain points of the lattice shown in figure~\ref{GridFigure} are of special interest:
\begin{enumerate}
\itemcolor{red}
\item points $(n,s) = (n,\Delta_k)$ on a dashed line, each at height $\Delta_k = \Delta_k(q)$ for some $k \in \bZpos$, 

\item points $(n,s)$ on the pink
triangle with corners at $(0,0)$, $(2\Delta_1-2,0)$, and $(\Delta_1-1,\Delta_1-1)$, and

\item points $(n,s)$ on pink
triangles with corners at $(\Delta_k,\Delta_k)$, $(2\Delta_{k+1}-2,\Delta_k)$, and $(\Delta_{k+1}-1,\Delta_{k+1}-1)$, for $k \in \bZpos$.
\end{enumerate}
We define
\begin{align} 
\Dom_n\super{s} 
&:= \Big\{ q \in \bC^\times \,\Big|\, \parbox{4.7cm}{$(n,s)$ lies on a 
pink triangle \\ or on a dashed line in figure~\ref{GridFigure}} \Big\} \\
& \hphantom{:}= \Big\{ q \in \bC^\times \,\big|\, \text{either $R_s = 0$, or $\frac{n - s}{2} \in \{0, 1, \ldots, \pmin(q) - 1 - R_s \}$}  \Big\}.
\label{Domns2} 
\end{align}
We note that the complement of this set within $\bC$ has Lebesgue measure zero. We also define
\begin{align} \label{Domn2} 
\Dom_n := \bigcap_{s \, \in \, \DefectSet_n} \Dom_n\super{s} 
= \big\{ q \in \bC^\times \,\big|\, \textnormal{either $n < \ppmin(q),$ or if $n$ is odd, $q = \pm \ii$} \big\}.
\end{align}
Finally, for each $s \in \DefectSet_\multii$, we define
\begin{align}\label{Dommultii} 
\Dom_\multii\super{s} := \Dom_{\Summed_\multii}\super{s} \qquad \qquad \text{and} \qquad \qquad \Dom_\multii := \bigcap_{s \, \in \, \DefectSet_\multii} \Dom_\multii\super{s}. 
\end{align}
We consider these sets in more detail in the end of this section, lemmas~\ref{ContainLem} and~\ref{qtonulem}.

\begin{cor} \label{GridCor} 
Suppose $\max \multii  < \ppmin(q)$.  
We have $\rad \smash{\LS_\multii\super{s}} = \{0\}$ if and only if $q \in \smash{\Dom_\multii\super{s}}$.
\end{cor}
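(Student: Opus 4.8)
The statement to prove is Corollary~\ref{GridCor}: under the standing hypothesis $\max\multii<\ppmin(q)$, we have $\rad\smash{\LS_\multii\super{s}}=\{0\}$ if and only if $q\in\smash{\Dom_\multii\super{s}}$. The plan is to reduce this to the all-ones case $\multii=\OneVec{n}$ via the results already established, and then to analyze the combinatorics of walks over $\OneVec{n}$ directly. First I would invoke Corollary~\ref{RadDimCor3}, which says $\rad\smash{\LS_\multii\super{s}}=\{0\}$ if and only if $\rad\LS_{\Summed_\multii}\super{s}=\{0\}$; together with definition~\eqref{Dommultii} of $\smash{\Dom_\multii\super{s}}:=\smash{\Dom_{\Summed_\multii}\super{s}}$, this shows it suffices to prove the corollary in the case $\multii=\OneVec{n}$ with $n=\Summed_\multii$. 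Note that since $\max\multii<\ppmin(q)$ does not force $n<\ppmin(q)$, we genuinely work at roots of unity here, so Corollary~\ref{RadicalCor2} alone does not suffice.

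Next I would handle the reduced claim: $\rad\LS_n\super{s}=\{0\}\Longleftrightarrow q\in\smash{\Dom_n\super{s}}$. By Corollary~\ref{DnsLemAndRadDimCor} (or Proposition~\ref{BigTailLem}), $\rad\LS_n\super{s}=\{0\}$ is equivalent to $\smash{\hcancel{\Dim}_n\super{s}}=0$, which by Lemma~\ref{RadWalkLem} (equation~\eqref{RadWalk}) is equivalent to the nonexistence of a walk $\varrho$ over $\OneVec{n}$ with defect $s$ that, when traversed backward, reaches height $\Delta_{k_s+1}$ before reaching height $\Delta_{k_s}$. Using $\Delta_{k_s}=\Delta_0=-1$ when $R_s=0$ and the fact that walk heights are nonnegative, the case $R_s=0$ immediately gives $\smash{\hcancel{\Dim}_n\super{s}}=0$ (no walk can hit negative height first), matching the first clause of~\eqref{Domns2}. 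For $R_s\ge 1$, I would argue with the tallest walk over $\OneVec{n}$ with defect $s$, as in the proof of Lemma~\ref{IfonlyIfLem}: a radical tail exists for some walk of defect $s$ iff the tallest such walk has a radical tail. The tallest walk rises as $1,2,\dots$ to its apex and then, reading backward from the endpoint at height $s$, the apex of the last step is $\tfrac{n-(n-2)+s}{2}$ style expression — more precisely its maximum apex is $\smash{\frac{n+s}{2}}$ when $n-s\le 2(\pmin(q)-1-R_s)$ fails... I would carefully compute the maximum value of $h_{\max,j}$ over this tallest walk and over the backward traversal, comparing it to $\Delta_{k_s+1}=(k_s+1)\pmin(q)-1$, and show this maximum is $\ge\Delta_{k_s+1}$ (so a radical tail exists, $\smash{\hcancel{\Dim}_n\super{s}}\neq0$) precisely when $\tfrac{n-s}{2}>\pmin(q)-1-R_s$, i.e.\ precisely when $q\notin\smash{\Dom_n\super{s}}$ in the notation of~\eqref{Domns2}.

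The main obstacle I anticipate is the bookkeeping in this last comparison: one must keep straight the roles of the backward traversal, the fact that stopping can be triggered either by $\varrho\superscr{\,\uparrow}$ hitting $\Delta_{k_s+1}$ or by $\varrho\superscr{\,\downarrow}$ hitting $\Delta_{k_s}$, and the interaction with the edge-of-triangle constraints $0\le s\le n$. The cleanest route is probably to phrase everything in terms of the highest walk $\varrho\superscr{\,\uparrow}_{\max}$ over $\OneVec{n}$ with defect $s$ (following Lemma~\ref{IfonlyIfLem}), for which the relevant apex heights are explicit linear functions of $n$ and $s$, so that ``radical tail exists'' becomes an elementary arithmetic inequality. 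I would also cross-check the boundary behavior against the picture in figure~\ref{GridFigure}: the pink triangles and dashed lines there are exactly the locus where no walk of defect $s$ manages to get its backward apex up to $\Delta_{k_s+1}$ before dropping to $\Delta_{k_s}$, which is the geometric content of~\eqref{Domns2}. Assembling the two cases $R_s=0$ and $R_s\ge1$ and feeding the result back through Corollary~\ref{RadDimCor3} then yields the corollary for general $\multii$.
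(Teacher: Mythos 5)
Your route is essentially the paper's: reduce to $\multii=\OneVec{n}$ via corollary~\ref{RadDimCor3} and definition~\eqref{Dommultii}, convert triviality of the radical to $\hcancel{\Dim}_n\super{s}=0$ via corollary~\ref{DnsLemAndRadDimCor}, and then decide when a ``radical'' walk exists — the step the paper disposes of with ``it is evident'' by pointing at figure~\ref{GridFigure}, and which you instead carry out explicitly through lemma~\ref{RadWalkLem} and a tallest-walk argument in the spirit of lemma~\ref{IfonlyIfLem}. Your arithmetic criterion for $R_s\geq 1$ (a radical tail exists iff $\tfrac{n-s}{2}>\pmin(q)-1-R_s$) is the correct content of the pink-triangle picture, so the plan goes through.

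One step is misjustified, though the conclusion survives: for the case $R_s=0$ you assert $\Delta_{k_s}=\Delta_0=-1$ and argue that no walk can reach a negative height. But $R_s=0$ means $s=\Delta_{k_s}=k_s\pmin(q)-1$, and when $\pmin(q)<\infty$ this forces $k_s\geq 1$, so $\Delta_{k_s}\geq\pmin(q)-1\geq 0$ and your ``negative height'' argument does not apply. The correct (and immediate) fix: either quote the first clause of recursion~\eqref{RadRecurs}, which sets $\hcancel{\Dim}_n\super{s}=0$ whenever $R_s=0$, or cite lemma~\ref{EasyRadLem} (equivalently lemma~\ref{JDeltaLem}), or simply note that when $s=\Delta_{k_s}$ the backward traversal of any walk with defect $s$ starts at height $\Delta_{k_s}$, so stopping condition~\ref{StopIt3} fires at once and no walk can reach $\Delta_{k_s+1}$ first (indeed $h_{\max,\np_\multii}(\varrho)=s<\Delta_{k_s+1}$). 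With that one-line repair the proof is complete and matches the paper's argument, just with the combinatorial verification spelled out rather than read off the figure.
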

\begin{proof} 
It is evident that $\smash{\hcancel{\Dim}_{\Summed_\multii}\super{s}} = 0$ if and only if $(\Summed_\multii,s)$ lies in the closure of a pink triangle 
or on a dashed line in the lattice in figure~\ref{GridFigure}. 
Hence, the claim follows from corollaries~\ref{DnsLemAndRadDimCor} 
and~\ref{RadDimCor3} and the definition of $\smash{\Dom_\multii\super{s}}$.
\end{proof}

We can use corollary~\ref{GridCor} to strengthen corollaries~\ref{RadicalCor2} and~\ref{RadicalCor3} to if-and-only-if statements:

\begin{cor} \label{RadicalCor4} 
Suppose $\max \multii < \ppmin(q)$.
We have $\rad \LS_\multii = \{0\}$ if and only if $q \in \Dom_\multii$. 
\end{cor}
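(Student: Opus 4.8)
The plan is to deduce Corollary~\ref{RadicalCor4} from the results already assembled in this section. The statement to prove is that, assuming $\max \multii < \ppmin(q)$, we have $\rad \LS_\multii = \{0\}$ if and only if $q \in \Dom_\multii$. First I would use the direct-sum decomposition~\eqref{RadDirSum}, namely $\rad \LS_\multii = \bigoplus_{s \, \in \, \DefectSet_\multii} \rad \LS_\multii\super{s}$, which shows that $\rad \LS_\multii = \{0\}$ if and only if $\rad \LS_\multii\super{s} = \{0\}$ for every $s \in \DefectSet_\multii$.

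Next I would invoke Corollary~\ref{GridCor}, which (under the standing hypothesis $\max \multii < \ppmin(q)$) states that $\rad \LS_\multii\super{s} = \{0\}$ if and only if $q \in \Dom_\multii\super{s}$. Combining this with the previous step, $\rad \LS_\multii = \{0\}$ if and only if $q \in \Dom_\multii\super{s}$ for every $s \in \DefectSet_\multii$, i.e., if and only if $q \in \bigcap_{s \, \in \, \DefectSet_\multii} \Dom_\multii\super{s}$. By definition~\eqref{Dommultii}, this intersection is exactly $\Dom_\multii$, which completes the argument.

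The proof is therefore essentially a one-line assembly, so there is no serious obstacle; the only point requiring a word of care is that Corollary~\ref{GridCor} is stated for a fixed index $s$, so one should note explicitly that the hypothesis $\max \multii < \ppmin(q)$ is the same for all $s \in \DefectSet_\multii$ and hence the equivalence may be applied uniformly across the direct sum. I would write:

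\begin{proof}
By the direct-sum decomposition~\eqref{RadDirSum}, we have $\rad \LS_\multii = \{0\}$ if and only if $\rad \smash{\LS_\multii\super{s}} = \{0\}$ for every $s \in \DefectSet_\multii$. Since $\max \multii < \ppmin(q)$ by assumption, corollary~\ref{GridCor} applies for each such $s$ and gives $\rad \smash{\LS_\multii\super{s}} = \{0\}$ if and only if $q \in \smash{\Dom_\multii\super{s}}$. Hence $\rad \LS_\multii = \{0\}$ if and only if $q \in \smash{\Dom_\multii\super{s}}$ for every $s \in \DefectSet_\multii$, that is, if and only if $q \in \bigcap_{s \, \in \, \DefectSet_\multii} \smash{\Dom_\multii\super{s}} = \Dom_\multii$, the last equality being the definition~\eqref{Dommultii} of $\Dom_\multii$.
\end{proof}
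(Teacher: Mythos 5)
Your proof is correct and follows essentially the same route as the paper: both reduce via the direct-sum decomposition~\eqref{RadDirSum} to the statement that $\rad \smash{\LS_\multii\super{s}} = \{0\}$ for all $s \in \DefectSet_\multii$, and then apply corollary~\ref{GridCor} together with the definition~\eqref{Dommultii} of $\Dom_\multii$ as the intersection of the sets $\smash{\Dom_\multii\super{s}}$. The paper merely phrases the final step in terms of the lattice picture of figure~\ref{GridFigure} (citing corollary~\ref{RadDimCor3} along the way), but the logical content is the same as your one-line assembly.
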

\begin{proof} 
Corollary~\ref{GridCor} with corollary~\ref{RadDimCor3} implies that 
$\rad \smash{\LS_\multii\super{s}} = \{0\}$
for all $s \in \DefectSet_\multii$, or equivalently by direct-sum decomposition~\eqref{RadDirSum} 
that $\rad \LS_{\Summed_\multii} = \{0\} = \rad \LS_\multii$, 
if and only if all of the points $(\Summed_\multii,s)$ with $s \in \DefectSet_\multii$ 
lie in the closures of the pink triangles or on the dashed lines in figure~\ref{GridFigure}. 
This happens if and only if $q \in \Dom_\multii$.  
\end{proof}

\begin{cor} \label{RadicalCor5} 
Suppose $\max \multii < \ppmin(q)$.
We have $\rad \smash{\LS_\multii} = \{0\}$, for all $\multii \in \{\OneVec{0}\} \cup \smash{\bZnn^\#}$, if and only if $\ppmin(q) = \infty$.
\end{cor}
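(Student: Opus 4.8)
\textbf{Proof proposal for Corollary~\ref{RadicalCor5}.}

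The plan is to reduce this corollary to Corollary~\ref{RadicalCor4} and then analyze the set $\Dom_\multii$ explicitly. First I would observe that the statement concerns the quantifier over all multiindices $\multii \in \{\OneVec{0}\} \cup \smash{\bZnn^\#}$, so by the recursive extension of $\DefectSet_\multii$ to multiindices with zero entries (stated beneath~\eqref{NonNegIndexSetDef}) and the fact that radicals of standard modules are unaffected by appending zero entries, it suffices to quantify over $\multii \in \{\OneVec{0}\} \cup \smash{\bZpos^\#}$. The ``if'' direction is the easy one: if $\ppmin(q) = \infty$, then $\max \multii < \ppmin(q)$ holds automatically for every $\multii$, and Corollary~\ref{RadicalCor3} (itself immediate from Corollary~\ref{RadicalCor2}) gives $\rad \LS_\multii = \{0\}$ for all such $\multii$. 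So the content is in the ``only if'' direction.

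For the ``only if'' direction, I would argue by contraposition: assume $\ppmin(q) < \infty$ and exhibit a single multiindex $\multii$ with $\max \multii < \ppmin(q)$ for which $\rad \LS_\multii \neq \{0\}$. The natural candidate is $\multii = \OneVec{n}$ for a suitably chosen $n$; then $\max \multii = 1 < \ppmin(q)$ is satisfied (since $\ppmin(q) \geq 2$ always, by~\eqref{MinPower}), so the hypothesis $\max \multii < \ppmin(q)$ of Corollary~\ref{RadicalCor4} is met. By Corollary~\ref{RadicalCor4}, $\rad \LS_{\OneVec{n}} = \{0\}$ if and only if $q \in \Dom_{\OneVec{n}} = \Dom_n$, and by~\eqref{Domn2} we have $\Dom_n = \{q \in \bC^\times \mid \text{either } n < \ppmin(q), \text{ or $n$ odd and } q = \pm\ii\}$. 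Since $\ppmin(q) < \infty$, the condition $n < \ppmin(q)$ fails as soon as $n \geq \ppmin(q)$; and the exceptional case $q = \pm\ii$ corresponds (by~\eqref{MinPower}) to $\pmin(q) = 2$, $q \notin \{\pm 1\}$, hence $\ppmin(q) = \pmin(q) = 2$, in which case choosing $n$ to be a large \emph{even} integer with $n \geq \ppmin(q) = 2$ removes that escape too. Concretely: if $\ppmin(q) \geq 3$, pick $n = \ppmin(q)$; if $\ppmin(q) = 2$, pick $n = 2$ (even, so $q = \pm\ii$ does not rescue us). In either case $q \notin \Dom_n$, so $\rad \LS_n \neq \{0\}$, which gives the desired multiindex and completes the contrapositive.

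The main obstacle I anticipate is purely bookkeeping rather than conceptual: one must make sure that the chosen $n$ genuinely lands outside $\Dom_n$ in \emph{all} subcases of the definition~\eqref{Domns2}--\eqref{Domn2}, in particular handling the parity condition and the $q = \pm\ii$ (equivalently $\nu = 0$) exceptional case carefully, since that case is precisely where the structure of $\DefectSet_n$ and the triangle picture of figure~\ref{GridFigure} behaves differently (the $s=0$ standard module can be totally degenerate). A clean way to package this is to note that the description of $\Dom_n$ in~\eqref{Domn2} already folds all these subtleties into two clauses, so it is enough to negate both clauses simultaneously, which the choice above does. I would also briefly remark that, by Corollary~\ref{GridCor} and Corollary~\ref{RadDimCor3}, the nontriviality $\rad \LS_n \neq \{0\}$ can alternatively be seen directly from $\smash{\hcancel{\Dim}_n\super{s}} > 0$ for some $s \in \DefectSet_n$ when $n \geq \ppmin(q)$ — for instance $s$ with $R_s = \pmin(q) - 1$ and $\tfrac{n-s}{2}$ small — which provides an independent sanity check on the argument but is not needed for the proof.
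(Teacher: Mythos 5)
Your proposal is correct and follows essentially the same route as the paper, which simply deduces the statement "immediately" from Corollary~\ref{RadicalCor4}; your argument just fills in the implicit step by exhibiting, when $\ppmin(q)<\infty$, an explicit $\multii=\OneVec{n}$ (with $n=\ppmin(q)$, or $n=2$ when $q=\pm\ii$) for which $q\notin\Dom_n$ and hence $\rad\LS_n\neq\{0\}$. The choice of witness and the handling of the $q=\pm\ii$ parity clause are both sound, so no gaps.
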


\begin{proof} 
This immediately follows from corollary~\ref{RadicalCor4}.
\end{proof}

The containment $\DefectSet_\multii \subset \DefectSet_{\Summed_\multii}$ 
implies that $\Dom_{\Summed_\multii} \subset \Dom_\multii$.
In fact, this containment becomes an equality when intersected with the set $ \{ q \in \bC^\times \,|\, \max \multii < \ppmin(q)\}$:

\begin{lem} \label{ContainLem} We have 
\begin{align} 
\label{Contain0} 
\Dom_\multii \cap \{ q \in \bC \,|\, \max \multii < \ppmin(q) \} 
&= \Dom_{\Summed_\multii} \cap \{ q \in \bC \,|\, \max \multii < \ppmin(q) \} \\
\label{Contain}
&= \big\{ q \in \bC^\times \,\big|\, \textnormal{either $\Summed_\multii < \ppmin(q),$ or if 
$\Summed_\multii$ is odd and $\multii = \OneVec{\Summed}_\multii$, $q = \pm \ii$} \big\}. 
\end{align}

\end{lem}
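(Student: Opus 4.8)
The plan is to prove Lemma~\ref{ContainLem} by reducing everything to the explicit description of $\Dom_n\super{s}$ in~\eqref{Domns2} and the description of $\Dom_n$ in~\eqref{Domn2}, then tracking how the intersection with $\{q \,|\, \max\multii < \ppmin(q)\}$ interacts with the definitions $\Dom_\multii\super{s} := \Dom_{\Summed_\multii}\super{s}$ and $\Dom_\multii := \bigcap_{s \in \DefectSet_\multii} \Dom_\multii\super{s}$ from~\eqref{Dommultii}. First I would observe that by definition $\Dom_\multii = \bigcap_{s\in\DefectSet_\multii}\Dom_{\Summed_\multii}\super{s}$ while $\Dom_{\Summed_\multii} = \bigcap_{s\in\DefectSet_{\Summed_\multii}}\Dom_{\Summed_\multii}\super{s}$, so the containment $\DefectSet_\multii \subset \DefectSet_{\Summed_\multii}$ from~\eqref{subset} immediately gives $\Dom_{\Summed_\multii}\subset\Dom_\multii$, hence ``$\supseteq$'' in~\eqref{Contain0}. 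The substance of~\eqref{Contain0} is therefore the reverse inclusion: I must show that if $q\in\Dom_\multii$ and $\max\multii<\ppmin(q)$, then $q\in\Dom_{\Summed_\multii}$, i.e.\ that $q\in\Dom_{\Summed_\multii}\super{s}$ for \emph{every} $s\in\DefectSet_{\Summed_\multii}$, not merely those in the subset $\DefectSet_\multii$.

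For this reverse inclusion I would split on whether $\ppmin(q)=\infty$ or $\ppmin(q)<\infty$. If $\ppmin(q)=\infty$, then by~\eqref{Domns2} we have $R_s = 0$ fails only when... actually more simply: if $q\in\{\pm1\}$ then $\pmin(q)=1$, every $R_s=0$, so $\Dom_n\super{s}=\bC^\times$ for all $n,s$ and $\Dom_{\Summed_\multii}=\bC^\times\ni q$; and if $q=\pm\ii$ then $\pmin(q)=2$ so $\ppmin(q)=\infty$ only when... wait, $\ppmin(\pm\ii)=\pmin(\pm\ii)=2\neq\infty$. So $\ppmin(q)=\infty$ forces $q\in\{\pm1\}$, and that case is trivial. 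If instead $\ppmin(q)=\pmin(q)<\infty$, then $\max\multii<\pmin(q)$. Here the key step is: $q\in\Dom_\multii$ together with $\smax(\multii)=\Summed_\multii$ (lemma~\ref{SminLem}\ref{itsmin1}, so $\Summed_\multii\in\DefectSet_\multii$) forces $q\in\Dom_{\Summed_\multii}\super{\Summed_\multii}$, which by~\eqref{Domns2} with $n=s=\Summed_\multii$ (so $\frac{n-s}{2}=0$) is automatic; more usefully, applying $q\in\Dom_\multii\super{s}$ for $s=\smin(\multii)$ and using lemma~\ref{DefectLem} (which gives $\smin(\multii)\le\max\multii<\pmin(q)$ when $\np_\multii>1$) should pin down the global constraint. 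The cleanest route: $q\in\Dom_\multii$ means for every $s\in\DefectSet_\multii$ we have either $R_s=0$ or $\frac{\Summed_\multii-s}{2}\le\pmin(q)-1-R_s$; I want to conclude $\Summed_\multii<\pmin(q)$ unless $\multii=\OneVec{\Summed_\multii}$ with $\Summed_\multii$ odd and $q=\pm\ii$. Taking $s=\smin(\multii)$: by lemma~\ref{DefectLem}, if $\np_\multii>1$ then $\smin(\multii)<\max\multii<\pmin(q)$, so $R_{\smin(\multii)}=\smin(\multii)$ (as $k_{\smin(\multii)}=0$), and the condition becomes $\frac{\Summed_\multii-\smin(\multii)}{2}\le\pmin(q)-1-\smin(\multii)$, i.e.\ $\Summed_\multii+\smin(\multii)\le 2\pmin(q)-2$. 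This does not yet give $\Summed_\multii<\pmin(q)$, so I expect I will need to iterate the argument over several intermediate heights, or argue directly that $\Summed_\multii\ge\pmin(q)$ would produce some $s\in\DefectSet_\multii$ with $s$ in the ``bad'' range $(\Delta_{k}-1,\Delta_k)$-gap — precisely the forbidden region of figure~\ref{GridFigure}. The single-node case $\np_\multii=1$ gives $\multii=(\Summed_\multii)$, $\DefectSet_\multii=\{\Summed_\multii\}$, so $\Dom_\multii=\Dom_{\Summed_\multii}\super{\Summed_\multii}=\bC^\times$, matching the $\OneVec{\Summed_\multii}$-with... no wait, $(\Summed_\multii)$ as a one-entry multiindex equals $\OneVec{\Summed_\multii}$ only if $\Summed_\multii=1$; for $\Summed_\multii>1$ these differ, and indeed $\Dom_{(\Summed_\multii)}=\bC^\times$ while $\Dom_{\OneVec{\Summed_\multii}}$ is smaller — consistent with~\eqref{Contain} naming $\OneVec{\Summed_\multii}$ as the exceptional shape.

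Once~\eqref{Contain0} is established, equation~\eqref{Contain} follows by directly intersecting the formula~\eqref{Domn2} for $\Dom_{\Summed_\multii} = \{q\in\bC^\times \,|\, \Summed_\multii<\ppmin(q),\text{ or }\Summed_\multii\text{ odd and }q=\pm\ii\}$ with $\{q \,|\, \max\multii<\ppmin(q)\}$: the first clause $\Summed_\multii<\ppmin(q)$ already implies $\max\multii<\ppmin(q)$ since $\max\multii\le\Summed_\multii$, so it survives intact; the second clause $q=\pm\ii$ (which has $\ppmin(q)=2$) survives the intersection iff $\max\multii<2$, i.e.\ $\max\multii=1$, i.e.\ $\multii=\OneVec{\Summed_\multii}$ — and this is exactly the exceptional case recorded in~\eqref{Contain}. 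The main obstacle I anticipate is the reverse inclusion in~\eqref{Contain0}, specifically converting the family of inequalities ``$q\in\Dom_\multii\super{s}$ for all $s\in\DefectSet_\multii$'' into the clean dichotomy of~\eqref{Contain}; I expect the right tool is to look at the \emph{largest} $s\in\DefectSet_\multii$ below a given $\Delta_k$, use~\eqref{RecuFormula} to control $\smin$ along intermediate truncations $\lds_i$, and chase through the geometric picture of figure~\ref{GridFigure} that the defect set $\DefectSet_\multii$ cannot ``straddle'' a forbidden gap unless $\Summed_\multii<\pmin(q)$ outright or the degenerate all-ones odd case occurs. Everything else is bookkeeping with~\eqref{Domns2},~\eqref{Domn2},~\eqref{Dommultii}, and~\eqref{subset}.
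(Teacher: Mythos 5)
Your outer scaffolding is fine: the inclusion $\Dom_{\Summed_\multii}\subset\Dom_\multii$ does follow at once from $\DefectSet_\multii\subset\DefectSet_{\Summed_\multii}$~\eqref{subset}, and the passage from~\eqref{Contain0} to~\eqref{Contain} by intersecting the formula~\eqref{Domn2} with $\{q \,|\, \max\multii<\ppmin(q)\}$ (the clause $q=\pm\ii$ surviving precisely when $\max\multii=1$, i.e.\ $\multii=\OneVec{\Summed}_\multii$) is correct. But the heart of the lemma is the reverse inclusion in~\eqref{Contain0}, and there you stop at a plan: you extract a single inequality at $s=\smin(\multii)$, observe correctly that it does not yield $\Summed_\multii<\pmin(q)$, and then write that you ``expect'' to iterate over intermediate heights or chase figure~\ref{GridFigure}. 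That expected argument is exactly what has to be supplied, and it is not. What closes the gap (and is how the paper argues) is this: assume $\Summed_\multii\geq\ppmin(q)$ and $q\notin\{\pm1,\pm\ii\}$, so $\ppmin(q)=\pmin(q)\geq3$. Then $\np_\multii>1$, and lemmas~\ref{SminLem} and~\ref{DefectLem} give $\smin(\multii)<\max\multii\leq\pmin(q)-1=\Delta_1<\ppmin(q)\leq\Summed_\multii=\smax(\multii)$, so the arithmetic progression $\DefectSet_\multii=\{\smin(\multii),\smin(\multii)+2,\ldots,\Summed_\multii\}$ straddles the dashed line at height $\Delta_1$; depending on parity it contains $s=\Delta_1-1$ (with $R_s=\pmin(q)-1$) or $s=\Delta_1-2$ (with $R_s=\pmin(q)-2$), and in either case $\frac{\Summed_\multii-s}{2}>\pmin(q)-1-R_s$ because $\Summed_\multii\geq\pmin(q)$, so $q\notin\Dom_\multii\super{s}\supset\Dom_{\Summed_\multii}\super{s}$, i.e.\ $q$ lies in neither set of~\eqref{Contain0}. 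The case $q=\pm\ii$ (forcing $\multii=\OneVec{\Summed}_\multii$ since $\max\multii<2$) is then settled by the parity of $\Summed_\multii$: for $\Summed_\multii$ odd every $s\in\DefectSet_{\Summed_\multii}$ has $R_s=0$, while for $\Summed_\multii$ even one takes $s=0$. Without exhibiting such an $s$, or an equivalent argument, the lemma is not proved.

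Two smaller slips, both repairable but worth fixing: $\ppmin(q)=\infty$ does \emph{not} force $q\in\{\pm1\}$ --- any $q$ that is not a root of unity also has $\ppmin(q)=\infty$ (that case is still harmless, since then $\Summed_\multii<\ppmin(q)$ puts $q$ in both sets, but your dichotomy as stated is wrong); and when $k_s=0$ one has $\Delta_0=-1$, hence $R_s=s+1$ rather than $R_s=s$, so your test inequality at $s=\smin(\multii)$ should read $\frac{\Summed_\multii-s}{2}\leq\pmin(q)-2-s$. Neither of these affects your (correct) conclusion that the single inequality at $\smin(\multii)$ is insufficient, but the first would propagate into a genuinely incomplete case split if left as is.
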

\begin{proof}
To prove the lemma, we show that each $q \in \bC^\times$ either belongs to both sets on either side of the equality~\eqref{Contain0}, 
or belongs to neither. This approach will indirectly yield the explicit form~\eqref{Contain} for these two sets.

Throughout this proof, we assume that $q \in \bC^\times$ is such that $\max \multii < \ppmin(q)$. Proving~\eqref{Contain0} first, we also initially assume that $q \neq \pm 1$, so $2 \leq \pmin(q) = \ppmin(q)$ by~\eqref{MinPower}. Under these assumptions, we consider two cases:

\begin{enumerate}[leftmargin=*]
\itemcolor{red}
\item  \label{ItemA} $\Summed_\multii \geq \ppmin(q)$: In this case, we first observe that
\begin{align}
\np_\multii = 1 \qquad \Longrightarrow \qquad 
\text{$\qquad \multii = (s)$, for some $s \in \bZnn$} \qquad \Longrightarrow \qquad \Summed_\multii = s = \max \multii < \ppmin(q), 
\end{align}
a contradiction. Hence, we must have $\np_\multii > 1$ whenever $\Summed_\multii \geq \ppmin(q)$. In light of this observation, we may invoke lemma~\ref{DefectLem} to say that the minimum value $\smin(\multii)$ of the set $\DefectSet_\multii$~\eqref{DefSet2} satisfies
\begin{align} \label{sminSmallerThanDelta}
\smin(\multii) \overset{\eqref{sminineq}}{<} \max \multii \leq \ppmin(q) - 1 \overset{\eqref{MinPower}}{=} \pmin(q) - 1 \overset{\eqref{DeltaDefn}}{=} \Delta_1 .
\end{align}
Furthermore, lemma~\ref{DefectLem} implies that the maximum value $\smax(\multii)$ of the set $\DefectSet_\multii$ satisfies
\begin{align} \label{twoine} 
\Delta_1 \overset{\eqref{DeltaDefn}}{<} \ppmin(q) \leq \Summed_\multii \overset{\eqref{smaxeq}}{=}\smax(\multii). 
\end{align}
Assuming that $q \neq \pm \ii$, so $\ppmin(q) \geq 3$, it is straightforward to see that, under (\ref{sminSmallerThanDelta},~\ref{twoine}), there is a lattice point 
$(\Summed_\multii, s)$ off the pink triangles and dashed lines in figure~\ref{GridFigure} and with $s \in \DefectSet_\multii \subset \DefectSet_{\Summed_\multii}$. 
Thus, we have $q \notin \Dom_\multii \cup \Dom_{\Summed_\multii}$.

On the other hand, if $q = \pm \ii$, then $\ppmin(q) = 2$, and $\max \multii < \ppmin(q) = 2$ 
implies that $\multii = \OneVec{\Summed_\multii}$.
Also, if $\Summed_\multii$ is odd, then by~\eqref{DefSet2}, $\ppmin(q) = \pmin(q) = 2$ divides $s+1$, 
for each $s \in \DefectSet_{\Summed_\multii}$. 
By the containment $\DefectSet_\multii \subset \DefectSet_{\Summed_\multii}$, the same holds for each $s \in \DefectSet_\multii$. 
Hence, $\pm \ii \in \Dom_\multii \cap \Dom_{\Summed_\multii}$ if $\Summed_\multii$ is odd.  
On the other hand, if $\Summed_\multii$ is even, then $\pmin(q) = 2$ 
divides no element in the set $\DefectSet_{\Summed_\multii}$ nor in $\DefectSet_\multii$.  
Reasoning as in the previous paragraph, we then see that 
$\pm \ii \notin \Dom_\multii \cup \Dom_{\Summed_\multii}$.

\item  \label{ItemB} $\Summed_\multii < \ppmin(q)$: 
By~\eqref{DefSet2}, it is evident that for each $s \in \DefectSet_{\Summed_\multii}$, the lattice point $(\Summed_\multii, s)$ is on the bottommost pink triangle in figure~\ref{GridFigure}, with one exception: if $s = \Summed_\multii =  \pmin(q)-1$, then $(\Summed_\multii, s)$ lies on the lowest dashed line, at height $\Delta_1$. By the containment $\DefectSet_\multii \subset \DefectSet_{\Summed_\multii}$, the same holds for every $s \in \DefectSet_\multii$. Therefore, we have $q \in \Dom_\multii \cap \Dom_{\Summed_\multii}$. 
\end{enumerate}
Finally, from (\ref{MinPower},~\ref{Domns2},~\ref{Domn2}), it is straightforward to see that $\pm1$ is an element of the sets in (\ref{Contain0},~\ref{Contain}). From this and items~\ref{ItemA} and~\ref{ItemB} above, we conclude that equality~\eqref{Contain0} holds, and we infer~\eqref{Contain}.
\end{proof}

It is sometimes useful to understand the domain $\Dom_n$, 
determined mainly by the condition $\Summed < \ppmin(q)$, in terms of the fugacity $\nu$. 

\begin{lem} \label{qtonulem} 
Suppose $\nu = -q-q^{-1}$ and $\ppmin(q)$ is given by~\eqref{MinPower}. The following hold:
\begin{enumerate}
\itemcolor{red}

\item \label{AltItem1}
We have $q = \pm \ii$ if and only if $\nu = 0$.

\item \label{AltItem2}
We have
\begin{align}\label{Alt} 
n < \ppmin(q) \qquad \qquad \Longleftrightarrow \qquad \qquad
\nu^2 \neq 4\cos^2\left(\frac{\pi p'}{p}\right) \quad \parbox{5cm}{\textnormal{for any $p',p \in \bZpos$ coprime \\ and satisfying $0 < p' < p \leq n.$}}
\end{align} 
\end{enumerate}
\end{lem}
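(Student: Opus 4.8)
The statement consists of two elementary facts relating the root-of-unity parameter $q$ to the loop fugacity $\nu = -q - q^{-1}$. For item~\ref{AltItem1}, the plan is to observe directly from $\nu = -q - q^{-1}$ that $\nu = 0$ is equivalent to $q^2 = -1$, i.e.\ $q = \pm\ii$, which is a two-line computation. For item~\ref{AltItem2}, I would argue by translating the condition ``$q$ is a root of unity with $\pmin(q) = p \leq n$'' into a trigonometric condition on $\nu$. Writing $q = e^{\ii\theta}$ gives $\nu = -2\cos\theta$, so $\nu^2 = 4\cos^2\theta$. The key point is that $\ppmin(q) \leq n$ fails (that is, $n < \ppmin(q)$) exactly when $q$ is \emph{not} one of the ``bad'' roots of unity that would force $\pmin(q) \leq n$; recalling definition~\eqref{MinPower}, $\pmin(q) = p$ precisely when $q = e^{\pi\ii p'/p}$ for coprime $p, p' \in \bZpos$, and $\ppmin(q)$ additionally excludes $q \in \{\pm 1\}$ (i.e.\ $\pmin(q) = 1$) by setting $\ppmin = \infty$ there.

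The main steps, in order, would be: (i) dispatch item~\ref{AltItem1} by the direct computation above. (ii) For item~\ref{AltItem2}, prove the contrapositive of each direction. Suppose first that $n \geq \ppmin(q)$; then $q \notin \{\pm 1\}$ and $q$ is a root of unity with $\pmin(q) = p$ for some coprime pair $p' < p$ with $2 \leq p \leq n$ (using that $\pmin(q) \geq 2$ once $q \neq \pm 1$, so $p' < p$ can be arranged with $0 < p'$), so $q = e^{\pi\ii p'/p}$ up to the sign ambiguity $q \leftrightarrow -q$, and either way $\nu^2 = (q + q^{-1})^2 = 4\cos^2(\pi p'/p)$, exhibiting the forbidden value. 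Conversely, suppose $\nu^2 = 4\cos^2(\pi p'/p)$ for some coprime $p', p \in \bZpos$ with $0 < p' < p \leq n$; then $q + q^{-1} = \pm 2\cos(\pi p'/p)$, so $q \in \{\pm e^{\pm \ii \pi p'/p}\}$, hence $q$ is a $2p$-th root of unity of the stated shape, giving $\pmin(q) \mid p$ (in fact one checks $\pmin(q) = p/\gcd(\text{stuff})$; the precise value only needs to be $\leq p \leq n$), so $\ppmin(q) \leq n$, i.e.\ $n \geq \ppmin(q)$. (iii) Handle the sign subtlety $q \leftrightarrow -q$ carefully, noting $\cos^2$ is insensitive to it and that $\pmin(-q)$ and $\pmin(q)$ agree up to the comparison needed; also note that the case $q = \pm 1$ (where $\pmin(q) = 1$, $\ppmin(q) = \infty$) corresponds to $\nu = \mp 2 = -2\cos(0)$, but $p' = 0$ is excluded, consistent with $n < \ppmin(q) = \infty$ always holding there.

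The one place requiring a little care — and the ``hard part,'' though it is still routine — is bookkeeping the relationship between $\pmin(q)$ and the denominator $p$ when $q = e^{\pi\ii p'/p}$ with $p', p$ coprime: one must confirm that this is already in lowest terms in the sense of definition~\eqref{MinPower}, and track how replacing $q$ by $-q$ (which changes $e^{\pi\ii p'/p}$ to $e^{\pi\ii(p' \pm p)/p}$) affects parity of $p' + p$ and hence the reduced denominator. The clean way to organize this is to note that the map $q \mapsto \nu = -q - q^{-1}$ is two-to-one onto its image (fibers $\{q, q^{-1}\}$) except at $\nu = \pm 2$, and that $\nu$ takes a value of the form $-2\cos(\pi p'/p)$ with $0 < p'/p < 1$ if and only if $q$ lies on the unit circle at a rational angle in $(0,\pi)$, which is precisely the root-of-unity locus with $\pmin(q) < \infty$ and $q \neq \pm 1$. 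Combining this with the monotone relation $\pmin(q) \leq n \iff \ppmin(q) \leq n$ (valid once $q \neq \pm 1$) closes the argument. Finally I would remark that this lemma justifies the equivalence displayed after~\eqref{Tot} in the introduction and feeds into the characterization of $\Dom_n$ in~\eqref{Domn2}.
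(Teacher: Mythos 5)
Your proposal is correct and follows essentially the same route as the paper: the paper's own proof simply records that $\ppmin(q) \leq n$ if and only if $q = \pm e^{\pi \ii p'/p}$ for some coprime $p',p$ with $0 < p' < p \leq n$, and then invokes $\nu = -q - q^{-1}$, which is exactly the translation you carry out (with the sign and lowest-terms bookkeeping made explicit).
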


\begin{proof}
Item~\ref{AltItem1} is obvious.  
For item~\ref{AltItem2}, we note that with $p'$ any positive integer coprime with and less than $p$, we have
\begin{align} 
p:= \ppmin(q) \leq n \qquad \Longleftrightarrow \qquad q = \pm e^{\pi \ii p'/p} .
\end{align} 
Relation~\eqref{Alt} follows from this and our chosen parameterization $\nu = -q-q^{-1}$.
\end{proof}

\subsection{Totally degenerate cases} \label{rofSect32}

We recall that the bilinear form  on the standard module $\smash{\LS_\multii\super{s}}$ 
is said to be ``totally degenerate'' if $\rad \smash{\LS_\multii\super{s}} = \smash{\LS_\multii\super{s}}$. 
In section~\ref{LinkStateModSect}, propositions~\ref{GenLem2} and~\ref{HomLem2}, 
and corollary~\ref{nonisoCor2} all assume that 
this is not the case. Because these results are fundamental to 
understanding the structure of the standard modules, it is worthwhile to determine all $q \in \bC^\times$ for which the bilinear form  of a given standard module is totally degenerate. 
We establish this in proposition~\ref{WholeRadicalImpliesSsmallLem}.

\begin{lem} \label{AuxiliaryForWholeRadicalImpliesSsmallLem}
Suppose $\max \multii < \ppmin(q)$. If  $\ppmin(q) \leq s + 1$,
then there exists a walk $\varrho$ over $\multii$ with defect $s$ such that, when followed
backward, $\varrho\superscr{\,\downarrow}$ hits height $\Delta_{k_s}$ before $\varrho\superscr{\,\uparrow}$ hits height $\Delta_{k_s+1}$.
\end{lem}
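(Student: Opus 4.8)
The goal is, assuming $\ppmin(q) \le s+1$ (equivalently $\Delta_{k_s+1} \le s+\pmin(q)$, and in particular $k_s \ge 1$), to exhibit a single walk $\varrho$ over $\multii$ with defect $s$ whose tail is \emph{not} a type-one radical tail; by the dichotomy encoded in stopping conditions~\ref{StopIt1}--\ref{StopIt3} of definition~\ref{TrivalentLinkStateDef} together with observation~\eqref{NotSimul} (the walks $\varrho^{\,\uparrow}$ and $\varrho^{\,\downarrow}$ can never simultaneously hit $\Delta_{k_s+1}$ and $\Delta_{k_s}$ at the same step), this is exactly the statement that, read backward from the defect, $\varrho^{\,\downarrow}$ reaches height $\Delta_{k_s}$ strictly before $\varrho^{\,\uparrow}$ reaches height $\Delta_{k_s+1}$. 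The natural candidate is the \emph{tallest} walk $\varrho_{\max}$ over $\multii$ with defect $s$, the one whose penultimate-region heights are as large as possible, since it is the walk most likely to have $\varrho^{\,\uparrow}$ climb quickly to $\Delta_{k_s+1}$ --- so I expect the correct candidate to be instead the \emph{shortest} (lowest) such walk, call it $\varrho_{\min}$, or a suitable low walk.

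\textbf{Key steps, in order.} First I would record the arithmetic consequences of the hypothesis: $\ppmin(q) \le s+1$ forces $\pmin(q) = \ppmin(q)$ (if $q = \pm 1$ then $\pmin(q) = 1 \le s+1$ automatically, but then $\ppmin(q) = \infty$, contradicting $\ppmin(q) \le s+1$, so in fact $q \ne \pm 1$ and $\ppmin(q) = \pmin(q) < \infty$); hence $R_s \in \{0,1,\dots,\pmin(q)-1\}$ and $s = \Delta_{k_s} + R_s$ with $k_s \ge 1$, so $\Delta_{k_s} = k_s\pmin(q) - 1 \ge \pmin(q)-1 \ge 0$. In particular the height $\Delta_{k_s}$ is a \emph{nonnegative} integer that a walk can actually attain. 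Second, I would use lemma~\ref{SminLem} to see that $s \in \DefectSet_\multii$ and $\multii \oplus (s)$ has $\smin = 0$, i.e.\ $0 \in \DefectSet_{\multii \oplus (s)}$, which puts us in the regime where items~\ref{wmlIt7} and~\ref{wmlIt6} of lemma~\ref{WalkMultiiLem} apply to walks over $\multii$ with defect $s$: the minimum height over such walks at step $j$ equals $\mu_j(\multii\oplus(s))$, and the minimum apex at step $j$ is given by~\eqref{minapex}. Third --- the heart of the argument --- I would construct the low walk $\varrho$ and track, reading backward from step $\np_\multii$ (height $s$), the highest height reached by $\varrho^{\,\uparrow}$ and the lowest reached by $\varrho^{\,\downarrow}$: because $s \ge \pmin(q)-1 \ge \Delta_1 - 1$, and $s$ itself can exceed $\Delta_{k_s}$ only by $R_s < \pmin(q)$, the walk must on its way backward pass \emph{through} or near the band $[\Delta_{k_s}, \Delta_{k_s+1}]$; choosing $\varrho$ low forces $\varrho^{\,\downarrow}$ to dip down to $\Delta_{k_s}$, while keeping $\varrho$ low also keeps $h_{\max,j}(\varrho)$ small, preventing $\varrho^{\,\uparrow}$ from reaching $\Delta_{k_s+1}$ until after that dip. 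Concretely I would compare $h_{\min,j}(\varrho)$ against $\Delta_{k_s}$ and $h_{\max,j}(\varrho)$ against $\Delta_{k_s+1}$ step by step, using $h_{\max,j}(\varrho) - h_{\min,j}(\varrho) = \sIndex_{j+1} \le \max\multii < \pmin(q)$ from~\eqref{NotSimul} to guarantee the ``hit $\Delta_{k_s}$ first'' event genuinely precedes (rather than ties) the ``hit $\Delta_{k_s+1}$'' event.

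\textbf{Main obstacle.} The delicate point is making the informal ``choose the walk low enough'' precise: I need a walk $\varrho$ over $\multii$ with defect $s$ such that at the \emph{first} step $j$ (counting backward) where $\varrho^{\,\uparrow}$ or $\varrho^{\,\downarrow}$ touches the band boundaries, it is $\varrho^{\,\downarrow}$ touching $\Delta_{k_s}$ and not $\varrho^{\,\uparrow}$ touching $\Delta_{k_s+1}$. The right tool is lemma~\ref{WalkMultiiLem}, especially the explicit minimum-apex formula~\eqref{minapex} and the minimum-height formula~\eqref{minrj}: I would argue that the minimal-apex walk has all apices $h_{\max,j} \le \max\{\tfrac12(\mu_j + \mu_{j+1} + \sIndex_{j+1}), \sIndex_{j+1}\}$, bound each $\mu_j(\multii\oplus(s)) \le \min\{\smax(\lds_j), \smax(\fds_j \oplus (s))\}$, and since $\max\multii < \pmin(q)$ these apices cannot reach $(k_s+1)\pmin(q)-1 = \Delta_{k_s+1}$ on the portion of the walk before it has already, going backward from $s = \Delta_{k_s}+R_s$, descended to $\Delta_{k_s}$ via a suitable choice of $\varrho^{\,\downarrow}$ (possible because $R_s \le \pmin(q)-1$ and $\Delta_{k_s} \in \DefectSet_{\lds_j}\cap\DefectSet_{\fds_j\oplus(s)}$ at the relevant step). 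Once such a $\varrho$ is produced, the conclusion is immediate from the definition of the stopping conditions and~\eqref{NotSimul}; the write-up will mostly be the bookkeeping of these inequalities, which I would keep to the minimum needed.
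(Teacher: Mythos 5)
Your overall intuition — keep the walk as low as possible so that the backward reading meets the down-crossing of $\Delta_{k_s}$ before any apex reaches $\Delta_{k_s+1}$ — is the same idea that drives the paper's proof, but the mechanism you propose to make it precise does not work, and the missing piece is exactly the hard part. The per-step minima in items~\ref{wmlIt6} and~\ref{wmlIt9} of lemma~\ref{WalkMultiiLem} are minima over \emph{different} walks at different steps: there is in general no single ``lowest walk'' or ``minimal-apex walk'' attaining them simultaneously, because the pointwise minimum of two walks over $\multii$ need not be a walk (the constraint $r_j+r_{j+1}\geq \sIndex_{j+1}$ fails; e.g.\ for $\multii=(1,1,2,1,1)$ with defect $0$ the walks $(1,0,2,1,0)$ and $(1,2,0,1,0)$ have pointwise minimum $(1,0,0,1,0)$, which is not a walk, and the minimal apex at the first step is attained only by the first walk while the minimal apex at the third step is attained only by the second). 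So the sentence ``the minimal-apex walk has all apices bounded by~\eqref{minapex}'' has no referent, and the subsequent claim that ``since $\max\multii<\pmin(q)$ these apices cannot reach $\Delta_{k_s+1}$'' is false near the defect without a further choice: since $s$ may be as large as $\Delta_{k_s+1}-1$, the last apex $h_{\max,\np_\multii-1}(\varrho)=\tfrac12(r_{\np_\multii-1}+s+\sIndex_{\np_\multii})$ exceeds $\Delta_{k_s+1}-1$ whenever $r_{\np_\multii-1}>s-\sIndex_{\np_\multii}$, no matter how small $\max\multii$ is. Controlling precisely this is where your proposal stops being an argument.

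What is actually needed — and what the paper supplies — is a backward-greedy construction made rigorous by induction on $\np_\multii$: choose the penultimate height minimal, $r=\max(\smin(\lds),\,|s-\sIndex_{\np_\multii}|)$, which exists by item~\ref{wmlIt7} of lemma~\ref{WalkMultiiLem}. If $r\leq\Delta_{k_s}$, then $\varrho\superscr{\,\downarrow}$ hits $\Delta_{k_s}$ already at the last step while, by~\eqref{NotSimul}, $\varrho\superscr{\,\uparrow}$ cannot simultaneously hit $\Delta_{k_s+1}$, and you are done. If $r>\Delta_{k_s}$, the hypothesis $\pmin(q)\leq s+1$ forces $k_s\geq 1$ and $\smin(\lds)\leq\max\multii\leq\pmin(q)-1\leq\Delta_{k_s}<r$, hence $r=s-\sIndex_{\np_\multii}$, so the last apex equals exactly $s<\Delta_{k_s+1}$; moreover $\Delta_{k_s}<r\leq s<\Delta_{k_s+1}$ gives $k_r=k_s$ and $\pmin(q)\leq r+1$, so the induction hypothesis applies to $\lds$ with defect $r$, and concatenating the inductively chosen walk with the last step yields the desired walk. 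None of these steps (the case split on $r$ versus $\Delta_{k_s}$, the identification $r=s-\sIndex_{\np_\multii}$, the equality $k_r=k_s$ that keeps the critical band the same under the recursion, and the fact that the parity/attainability issues are handled by comparing $h_{\min,j}$ to $\Delta_{k_s}$ rather than asking $\varrho$ itself to reach $\Delta_{k_s}$) appear in your outline, and they cannot be replaced by the global minimum formulas you cite. So as written the proposal has a genuine gap at its central step.
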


\begin{proof}
Because $s$ is finite and $\ppmin(\pm1) = \infty$ by~\eqref{MinPower}, we must assume that $q \neq \pm1$ throughout, 
so $\ppmin(q) = \pmin(q) $. We prove the lemma by induction on the length $\np_\multii \in \bZpos$ of the multiindex $\multii$. 
Assuming first that $\np_\multii = 1$, we have
\begin{align}\label{mulmax} 
\multii = (\sIndex_1) \qquad \Longrightarrow \qquad \sIndex_1 = \max \multii < \pmin(q) ,
\end{align}
by the assumption in the lemma. Furthermore, there is exactly one walk $\varrho = (\sIndex_1)$ over 
$\multii = (\sIndex_1)$, trivially with defect $s = \sIndex_1$. 
Thus, with $\pmin(q) < s + 1$ by assumption, we have
\begin{align}
\pmin(q) -1 \leq s + 1= \sIndex_1 + 1 \qquad \overset{\eqref{mulmax}}{\Longrightarrow} \qquad \pmin(q) = \sIndex_1 + 1 = s + 1 \qquad 
\underset{\eqref{skDefn}}{\overset{\eqref{DeltaDefn}}{\Longrightarrow}} \qquad 
\begin{cases} s = \Delta_{k_s}, \\ k_s = 1. 
\end{cases} 
\end{align}
Thus, it is trivially true that when followed
backward, $\varrho\superscr{\,\downarrow}$ hits height $\Delta_{k_s}$ before $\varrho\superscr{\,\uparrow}$ hits height $\Delta_{k_s+1}$.

Next, we prove that if the lemma holds for all multiindices in $\smash{\bZpos^{\np-1}}$ for some $\np \in \{2,3, \ldots\}$, then it holds for all 
multiindices $\multii \in \smash{\bZpos^\np}$. In light of the comment immediately beneath the proof of lemma~\ref{WalkMultiiLem}, 
item~\ref{wmlIt7} of that lemma implies that there exists a walk $\varrho$ over $\multii$ with defect $s$ whose penultimate height equals
\begin{align}\label{rDefn} 
r:= r_{\np-1} = \min (\DefectSet_{\hat{\multii}} \cap \DefectSet\sub{\sIndex_{\np},s}) 
\underset{\eqref{SpecialDefSet}}{\overset{\eqref{pre-rj2}}{=}} \max(\smin(\hat{\multii}),|s - \sIndex_{\np}|). 
\end{align}
Now, there are two scenarios to consider:

\begin{enumerate}[leftmargin=*]
\itemcolor{red}
\item $r \leq \Delta_{k_s}$: With the penultimate height of $\varrho$ equaling $r$, it is evident that 
$\varrho\superscr{\,\downarrow}$ hits height $\Delta_{k_s}$ at the last step of $\varrho$ while, as we observed in~\eqref{NotSimul}, 
$\varrho\superscr{\,\uparrow}$ cannot simultaneously hit height $\Delta_{k_s+1}$.

\item $r > \Delta_{k_s}$:  
First, for the last step of $\varrho$,  by the assumptions of this lemma and by lemma~\ref{DefectLem}, we have
\begin{align}\label{FirstOb} 
\begin{cases} \pmin(q) \leq s + 1 \\ \sIndex_\np \leq \max \multii < \pmin(q) 
\end{cases} 
\qquad \Longrightarrow \qquad |s - \sIndex_\np| 
= s - \sIndex_\np \leq s \quad \text{and} \quad 0 \overset{\eqref{skDefn}}{<} k_s, 
\end{align}
and 
\begin{align}\label{SecondOb} 
\smin(\smash{\lds}) \overset{\eqref{sminineq}}{\leq} \max \smash{\lds}
\leq \max \multii \leq \pmin(q) - 1 \underset{\eqref{FirstOb}}{\overset{\eqref{DeltaDefn}}{\leq}} \Delta_{k_s} < r, 
\end{align}
which together show that
\begin{align}\label{whatisr} 
r \overset{\eqref{rDefn}}{=} \max(\smin(\hat{\multii}),|s - \sIndex_{\np}|) \underset{\eqref{SecondOb}}{\overset{\eqref{FirstOb}}{=}} s - \sIndex_\np. 
\end{align}
This implies that the apex of $\varrho$ at its last step is less than $\Delta_{k_s+1}$,
\begin{align}\label{lastapex} 
h_{\max,\np-1}(\varrho) \overset{\eqref{minmaxh2}}{=} \frac{r + \sIndex_\np + s}{2} 
\overset{\eqref{whatisr}}{=} s \overset{\eqref{skDefn}}{<} \Delta_{k_s + 1}. 
\end{align}
Second, we consider $\varrho$ from its first to its penultimate step, or equivalently, we consider $\hat{\varrho}$. 
We observe that
\begin{align} 
\pmin(q) - 1 & \overset{\eqref{DeltaDefn}}{=} \Delta_1 
\label{krks} 
\, \leq \, \Delta_{k_s} < r \overset{\eqref{whatisr}}{\leq} s \overset{\eqref{skDefn}}{<} \Delta_{k_s + 1},
\end{align}
which implies two facts. First, by~\eqref{skDefn}, we have $k_r = k_s$. Second, with $\pmin(q) < r + 1$, the induction hypothesis says that we can choose the walk 
$\varrho = (\hat{\varrho}, s)$ such that $\hat{\varrho}$ is a walk over $\smash{\lds}$ with defect $r$ and with the property that, when followed backward, 
$\hat{\varrho}\superscr{\,\downarrow}$ hits height $\Delta_{k_r} = \Delta_{k_s}$ before $\hat{\varrho}\superscr{\,\uparrow}$ hits height $\Delta_{k_r+1} = \Delta_{k_s + 1}$. 
In light of~\eqref{krks}, this implies that $\varrho$, a walk over $\multii$ with defect $s$, has this same property when followed backward from its penultimate step.  
It follows from this fact and~\eqref{lastapex} that $\varrho\superscr{\,\downarrow}$ hits height $\Delta_{k_s}$ before $\varrho\superscr{\,\uparrow}$ hits height 
$\Delta_{k_s + 1}$ as we follow $\varrho$ backward.
\end{enumerate}
This concludes the proof.
\end{proof}

Next, we state a condition that is both necessary and sufficient for the bilinear form on $\smash{\LS_\multii\super{s}}$ to be totally degenerate. 
For this purpose, we define the following set, with Lebesgue-measure zero in $\bC$:
\begin{align}\label{TotDefn} 
\Tot_\multii\super{s} := \big\{ q \in \bC^\times \, \big|\, s + 1 < \ppmin(q) 
\leq \displaystyle{\min_\varrho} \, \max _{ 0 \, \leq \, j \, < \, \np_\multii} h_{\max,j}(\varrho) + 1 \big\} ,
\end{align}
where the maximum is taken over all walks $\varrho$ over $\multii$ with defect $s$. 
Stated in other words, we have $q \in \smash{\Tot_\multii\super{s}}$ if and only if 
$q \in \bC^\times$ and the maximum apex of each walk $\varrho$ over $\multii$, with defect $s < \ppmin(q) - 1$, 
is at or above height $\ppmin(q) - 1$.

\begin{prop} \label{WholeRadicalImpliesSsmallLem} 
Suppose $\max \multii  < \ppmin(q)$.  
We have $\rad \smash{\LS_\multii\super{s}} = \smash{\LS_\multii\super{s}}$ if and only if $q \in \smash{\Tot_\multii\super{s}}$.
\end{prop}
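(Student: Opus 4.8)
The plan is to prove the claimed equivalence by translating the condition $\rad \smash{\LS_\multii\super{s}} = \smash{\LS_\multii\super{s}}$ into a combinatorial statement about walks, using the basis for the radical established in theorem~\ref{BigTailLem2}. Since $\smash{\big\{ \hcancel{\,\alpha} \,\big|\, \alpha \in \smash{\LP_\multii\super{s}} \big\}}$ is a basis for $\smash{\LS_\multii\super{s}}$ by item~\ref{IndOrthBasisLemIt3} of proposition~\ref{IndOrthBasisLem}, while $\smash{\big\{ \hcancel{\,\alpha} \,\big|\, \alpha \in \smash{\LP_\multii\super{s}}, \, \textnormal{tail}(\alpha) \in \mathsf{R}_\multii\super{s} \big\}}$ is a basis for the radical, we have $\rad \smash{\LS_\multii\super{s}} = \smash{\LS_\multii\super{s}}$ precisely when $\smash{\mathsf{R}_\multii\super{s}} = \smash{\mathsf{T}_\multii\super{s}}$, i.e., when every tail of a $(\multii,s)$-valenced link pattern is a radical tail (of type one or type two), equivalently when $\smash{\mathsf{M}_\multii\super{s}} = \emptyset$. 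By the walk description of tails and stopping conditions in definition~\ref{TrivalentLinkStateDef}, together with observation~\eqref{NotSimul} (which says $\varrho\superscr{\,\uparrow}$ and $\varrho\superscr{\,\downarrow}$ cannot simultaneously hit $\Delta_{k_s+1}$ and $\Delta_{k_s}$), a walk $\varrho$ over $\multii$ with defect $s$ has a radical tail if and only if, traversed backward, $\varrho\superscr{\,\uparrow}$ hits $\Delta_{k_s+1}$ strictly before $\varrho\superscr{\,\downarrow}$ hits $\Delta_{k_s}$; it has a moderate tail otherwise. So I need: every walk over $\multii$ with defect $s$ has a radical tail $\iff$ $q \in \smash{\Tot_\multii\super{s}}$.

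First I would dispose of the easy direction of the containment $\smash{\mathsf{R}_\multii\super{s}} = \smash{\mathsf{T}_\multii\super{s}} \Rightarrow q \in \smash{\Tot_\multii\super{s}}$. If $R_s = 0$ (i.e., $\pmin(q) \mid (s+1)$), then by lemma~\ref{JDeltaLem} all tails are moderate, so $\smash{\mathsf{M}_\multii\super{s}} = \emptyset$ forces $\smash{\LP_\multii\super{s}}$ to be empty — impossible since $s \in \DefectSet_\multii$. Hence $R_s \neq 0$, so $s + 1 < \ppmin(q)$ (using $s < \Delta_{k_s+1} = (k_s+1)\pmin(q)-1$ and a short argument with $\pmin(q)\nmid(s+1)$; more carefully, $s \in \{\Delta_{k_s},\ldots,\Delta_{k_s+1}-1\}$ and $R_s\ge 1$ gives $s+1\le \Delta_{k_s+1} = (k_s+1)\pmin(q)-1$, and combined with $k_s\ge 0$ one gets $s+1 < \ppmin(q)$ precisely when $k_s=0$; the case $k_s\ge 1$ needs separate handling, which is the first subtle point). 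Assuming $s+1 < \ppmin(q)$, the requirement that every walk has a radical tail means, in the backward traversal, $\varrho\superscr{\,\uparrow}$ must reach $\Delta_{k_s+1}$ before $\varrho\superscr{\,\downarrow}$ reaches $\Delta_{k_s}$; since $\Delta_{k_s} = -1$ when $k_s = 0$ (and $\varrho\superscr{\,\downarrow}$ never goes negative), this reduces to: every walk $\varrho$ over $\multii$ with defect $s$ has $\varrho\superscr{\,\uparrow}$ hitting height $\ppmin(q)-1 = \Delta_{k_s+1}$, i.e., by item~(a) of item~\ref{wmlIt5} of lemma~\ref{WalkMultiiLem}, $\max_j h_{\max,j}(\varrho) \ge \ppmin(q)-1$ for all such $\varrho$. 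Taking the minimum over $\varrho$ gives exactly the defining inequality of $\smash{\Tot_\multii\super{s}}$ in~\eqref{TotDefn}. The case $k_s \ge 1$ is ruled out here because $s+1 < \ppmin(q)$ and $s = \Delta_{k_s}+R_s \ge \Delta_{k_s} = k_s\pmin(q)-1 \ge \pmin(q)-1$ would force $s+1 \ge \pmin(q)$, contradiction.

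For the reverse direction $q \in \smash{\Tot_\multii\super{s}} \Rightarrow \smash{\mathsf{R}_\multii\super{s}} = \smash{\mathsf{T}_\multii\super{s}}$: from the definition of $\smash{\Tot_\multii\super{s}}$ we have $s+1 < \ppmin(q)$, which as above forces $k_s = 0$ and $\Delta_{k_s} = -1$. Take any walk $\varrho$ over $\multii$ with defect $s$. Since $q \in \smash{\Tot_\multii\super{s}}$, its maximum apex is $\ge \ppmin(q)-1 = \Delta_{1}$, so by item~(a) of item~\ref{wmlIt5} of lemma~\ref{WalkMultiiLem}, $\varrho\superscr{\,\uparrow}$ hits height $\Delta_{k_s+1}$; meanwhile $\varrho\superscr{\,\downarrow} \ge 0 > -1 = \Delta_{k_s}$ so $\varrho\superscr{\,\downarrow}$ never hits $\Delta_{k_s}$. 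Therefore, traversing backward, $\varrho\superscr{\,\uparrow}$ hits $\Delta_{k_s+1}$ before $\varrho\superscr{\,\downarrow}$ hits $\Delta_{k_s}$ (which never happens), so $\textnormal{tail}(\varrho) \in \smash{\mathsf{R}_\multii\super{s}}$ by the characterization above. As $\varrho$ was arbitrary, $\smash{\mathsf{M}_\multii\super{s}} = \emptyset$, hence $\smash{\mathsf{R}_\multii\super{s}} = \smash{\mathsf{T}_\multii\super{s}}$ and $\rad \smash{\LS_\multii\super{s}} = \smash{\LS_\multii\super{s}}$. I expect the main obstacle to be the bookkeeping around the case $k_s \ge 1$ and ensuring the reduction to $\Delta_{k_s} = -1$ is airtight — specifically, verifying that membership in $\smash{\Tot_\multii\super{s}}$ genuinely forces $k_s = 0$, and dually that $\smash{\mathsf{R}_\multii\super{s}} = \smash{\mathsf{T}_\multii\super{s}}$ cannot happen when $k_s \ge 1$; here lemma~\ref{AuxiliaryForWholeRadicalImpliesSsmallLem} is the key input, since it exhibits, whenever $\ppmin(q) \le s+1$, a walk over $\multii$ with defect $s$ whose $\varrho\superscr{\,\downarrow}$ hits $\Delta_{k_s}$ before $\varrho\superscr{\,\uparrow}$ hits $\Delta_{k_s+1}$ — that is, a moderate tail — showing $\smash{\mathsf{M}_\multii\super{s}} \neq \emptyset$ and hence $q \notin \smash{\Tot_\multii\super{s}}$ is consistent and the equivalence is clean. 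I would invoke lemma~\ref{AuxiliaryForWholeRadicalImpliesSsmallLem} to handle exactly the boundary regime $s+1 \ge \ppmin(q)$ uniformly, so that the only live case is $s+1 < \ppmin(q)$, where the argument above applies directly.
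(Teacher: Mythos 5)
Your proposal is correct and follows essentially the same route as the paper: the paper phrases the equivalence as the dimension identity $\smash{\hcancel{\Dim}_\multii\super{s}} = \smash{\Dim_\multii\super{s}}$ (via lemma~\ref{LSDimLem2}, corollary~\ref{DnsLemAndRadDimCor2}, and lemma~\ref{RadWalkLem2}) rather than your equivalent ``every tail is radical'' formulation, and it likewise uses lemma~\ref{AuxiliaryForWholeRadicalImpliesSsmallLem} to exclude the regime $\ppmin(q) \leq s+1$ and the observation $\Delta_{k_s} = -1$ when $s+1 < \pmin(q)$ for the converse. Your mid-argument handling of $k_s \geq 1$ reads circular as written, but your final reliance on lemma~\ref{AuxiliaryForWholeRadicalImpliesSsmallLem} to dispose of $s+1 \geq \ppmin(q)$ uniformly repairs this and matches the paper's own proof (which also dispatches $q = \pm 1$ separately at the outset, a case your $R_s = 0$ branch absorbs).
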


\begin{proof} 
If $q \in \{\pm1\}$, then $q \not \in \smash{\Tot_\multii\super{s}}$ and 
$\rad \smash{\LS_\multii\super{s}} = \{0\}$ by~\eqref{MinPower} and corollary~\ref{RadicalCor2}. 
Thus, we take $q \not\in \{\pm1\}$ throughout the proof. As such, we have $\ppmin(q) = \pmin(q) $ throughout.

First, we assume that $q \in \smash{\Tot_\multii\super{s}}$. Then by the comment beneath~\eqref{TotDefn}, 
the comment beneath corollary~\ref{RadWalkLem2}, and 
item~\ref{wmlIt4} of lemma~\ref{WalkMultiiLem}, this implies that 
$\smash{\hcancel{\Dim}_\multii\super{s}} = \smash{\Dim_\multii\super{s}}$, or equivalently by 
lemma~\ref{LSDimLem2} and corollary~\ref{DnsLemAndRadDimCor2}, that $\rad \smash{\LS_\multii\super{s}} = \smash{\LS_\multii\super{s}}$.

Next, we assume that $\rad \smash{\LS_\multii\super{s}} = \smash{\LS_\multii\super{s}}$, or equivalently by lemma~\ref{LSDimLem2} and 
corollary~\ref{DnsLemAndRadDimCor2}, that $\smash{\hcancel{\Dim}_\multii\super{s}} = \smash{\Dim_\multii\super{s}}$. 
Now, item~\ref{wmlIt4} of lemma~\ref{WalkMultiiLem}, 
lemma~\ref{RadWalkLem2}, and lemma~\ref{AuxiliaryForWholeRadicalImpliesSsmallLem} combine to show that if $\pmin(q) \leq s + 1$, 
then 
$\smash{\hcancel{\Dim}_\multii\super{s}} < \smash{\Dim_\multii\super{s}}$, a contradiction. Hence, we have $s + 1 < \pmin(q)$. 
From this inequality, the equality 
$\smash{\hcancel{\Dim}_\multii\super{s}} = \smash{\Dim_\multii\super{s}}$, 
the comment following corollary~\ref{RadWalkLem2}, and the comment following~\eqref{TotDefn}, 
we conclude that $q \in \smash{\Tot_\multii\super{s}}$.
\end{proof}

Now we consider the special case that $\multii = \OneVec{n}$, for some $n \in \bZpos$.

\begin{lem} \label{TnsLem} 
We have
\begin{align} \label{TnsLemID}
\Tot_n\super{s} = 
\begin{cases} 
\emptyset, & s \neq 0, \\ \{\pm \ii \}, & s = 0. 
\end{cases}
\end{align}
\end{lem}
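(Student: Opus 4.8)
\textbf{Proof proposal for lemma~\ref{TnsLem}.}
The plan is to unwind the definition~\eqref{TotDefn} of $\smash{\Tot_n\super{s}}$ in the special case $\multii = \OneVec{n}$ and match it against the possible values of $\ppmin(q)$. Recall that $q \in \smash{\Tot_n\super{s}}$ iff $q \in \bC^\times$ satisfies $s+1 < \ppmin(q) \leq \min_\varrho \max_{0 \le j < n} h_{\max,j}(\varrho) + 1$, where the outer minimum is over all walks $\varrho$ over $\OneVec{n}$ with defect $s$. First I would compute the quantity $\min_\varrho \max_{0 \le j < n} h_{\max,j}(\varrho)$ for $\multii = \OneVec{n}$. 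For a walk $\varrho = (r_1, r_2, \ldots, r_n)$ over $\OneVec{n}$ with defect $r_n = s$, each apex is $h_{\max,j}(\varrho) = \tfrac{r_j + r_{j+1} + 1}{2}$ for $j < n$, and since the heights change by $\pm 1$ at each step (as $\sIndex_{j+1} = 1$), we have $h_{\max,j}(\varrho) = \max(r_j, r_{j+1})$. The walk that stays as low as possible is the one that goes $0,1,0,1,\ldots$ for as long as possible and only climbs at the end to reach height $s$; its maximal apex is $\max(s,1)$ when $n > s$, and exactly $s$ when $n = s$ (the unique maximal walk), and in fact one checks the minimal possible value of the maximal apex over all walks with defect $s$ is $\max(s,1)$. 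So $\min_\varrho \max_{0 \le j < n} h_{\max,j}(\varrho) + 1 = \max(s,1) + 1$.

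Now the condition defining $\smash{\Tot_n\super{s}}$ becomes: $q \in \bC^\times$ with $s + 1 < \ppmin(q) \leq \max(s,1) + 1$. I would split into cases on $s$. If $s \geq 1$, then $\max(s,1) + 1 = s + 1$, so the condition reads $s+1 < \ppmin(q) \le s+1$, which is vacuous; hence $\smash{\Tot_n\super{s}} = \emptyset$ for $s \ge 1$, and also for $s > 0$ as claimed. If $s = 0$, then $\max(s,1) + 1 = 2$, so the condition reads $1 < \ppmin(q) \le 2$, i.e. $\ppmin(q) = 2$. By~\eqref{MinPower}, $\ppmin(q) = 2$ holds precisely when $q \notin \{\pm 1\}$ and $\pmin(q) = 2$, i.e. $q = \pm e^{\pi \ii p'/2}$ for $p'$ odd coprime to $2$, which means $q = \pm \ii$. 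Thus $\smash{\Tot_n\super{0}} = \{\pm \ii\}$, giving~\eqref{TnsLemID}.

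The only genuinely substantive step is the combinatorial computation of $\min_\varrho \max_{0 \le j < n} h_{\max,j}(\varrho)$ for walks over $\OneVec{n}$ with prescribed defect $s$; everything else is bookkeeping against the definition of $\ppmin(q)$. I expect this to be the main (though mild) obstacle: one needs to exhibit an explicit low walk achieving maximal apex $\max(s,1)$ and argue no walk with defect $s$ can do better, which follows since any such walk must at some step have $r_j + r_{j+1} \geq 2s - 1$ once it has climbed to height $s$ (indeed at the penultimate step $r_{n-1} \in \{s-1, s+1\}$, forcing apex $\geq s$ when $s \ge 1$), while for $s = 0$ any nonconstant walk touches height $1$, and the constant walk $\varrho = (0,0,\ldots,0)$ does not exist unless $n = 0$ (excluded since $n \in \bZpos$), so the apex is at least $1$. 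Alternatively, this computation is a direct instance of item~\ref{wmlIt9} of lemma~\ref{WalkMultiiLem} with $\multii = \OneVec{n}$: there $\mu_j(\OneVec{n}) \in \{0,1\}$ and $\sIndex_{j+1} = 1$, and the formula $\min_\varrho h_{\max,j}(\varrho) = \max\{\tfrac{\mu_j + \mu_{j+1} + 1}{2}, 1\}$ combined with the shape of walks with defect $s$ yields the same answer; I would cite lemma~\ref{WalkMultiiLem} to shorten the argument where possible.
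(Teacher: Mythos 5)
Your proposal is correct and follows essentially the same route as the paper: exhibit the low "zigzag-then-climb" walk to show the min-max apex equals $s$ (for $s\neq 0$) resp. $1$ (for $s=0$), then read off $\smash{\Tot_n\super{s}}$ from~\eqref{TotDefn} and~\eqref{MinPower}. The only difference is that you also spell out the matching lower bound on the maximal apex, which the paper leaves implicit (and which is not strictly needed for the $s\neq 0$ case, where any upper bound $\leq s$ already forces emptiness).
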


\begin{proof}
Suppose first $s \in \DefectSet_n \setminus \{0\}$, and let $\varrho$ be the walk over $\OneVec{n}$ which, 
when followed backward, descends from height 
$s$ until it hits height zero and then jumps back and forth between heights one and zero for the rest of its length. 
For this walk, the maximum apex $h_{\max,j}(\varrho)$ over $j \in \{0,1,\ldots,\np_\multii-1\}$ equals $s$. 
Therefore, by~\eqref{TotDefn}, we have
\begin{align}
\Tot_n\super{s} = \{q \in \bC^\times \,|\, s + 1 < \ppmin(q) \leq s + 1 \} = \emptyset .
\end{align}
On the other hand, if $s = 0$ and $\varrho$ is the walk over $\OneVec{n}$ that jumps back and forth between heights one and zero, then 
the maximum apex $h_{\max,j}(\varrho)$ over $j \in \{0,1,\ldots,\np_\multii-1\}$ equals one and we have
\begin{align} 
\Tot_n\super{0} = \{q \in \bC^\times \,|\, 1 < \ppmin(q) \leq 2 \} \overset{\eqref{MinPower}}{=} \{\pm \ii \} .
\end{align}
This proves asserted identity~\eqref{TnsLemID}.
\end{proof}

From the above lemma, we recover a result of D.~Ridout and Y.~Saint-Aubin: 

\begin{cor} \label{FullRadProp}
\textnormal{\cite[proposition~\red{3.5}]{rsa}} 
We have $\rad \smash{\LS_n\super{s}} = \smash{\LS_n\super{s}}$ if and only if $s = 0$ and $\ppmin(q) = 2$ 
\textnormal{(}i.e., $q \in \{\pm \ii\}$\textnormal{)}.
\end{cor}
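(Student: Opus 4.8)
The plan is to derive Corollary~\ref{FullRadProp} as the special case $\multii = \OneVec{n}$ of proposition~\ref{WholeRadicalImpliesSsmallLem}, which states that $\rad \smash{\LS_\multii\super{s}} = \smash{\LS_\multii\super{s}}$ if and only if $q \in \smash{\Tot_\multii\super{s}}$. First I would recall that for $\multii = \OneVec{n}$, the hypothesis $\max \multii < \ppmin(q)$ of proposition~\ref{WholeRadicalImpliesSsmallLem} reads $1 < \ppmin(q)$, which by~\eqref{MinPower} holds for every $q \in \bC^\times$; thus the proposition applies without any restriction on $q$. Consequently, $\rad \smash{\LS_n\super{s}} = \smash{\LS_n\super{s}}$ if and only if $q \in \smash{\Tot_n\super{s}}$.

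Next I would invoke lemma~\ref{TnsLem}, which computes $\smash{\Tot_n\super{s}}$ explicitly: it equals $\emptyset$ when $s \neq 0$ and equals $\{\pm\ii\}$ when $s = 0$. Combining this with the if-and-only-if statement from the previous paragraph gives: if $s \neq 0$, then $\rad \smash{\LS_n\super{s}} \neq \smash{\LS_n\super{s}}$ (since $\smash{\Tot_n\super{s}} = \emptyset$), and if $s = 0$, then $\rad \smash{\LS_n\super{0}} = \smash{\LS_n\super{0}}$ precisely when $q \in \{\pm\ii\}$. Finally, I would note that $q \in \{\pm\ii\}$ is equivalent to $\ppmin(q) = 2$ by definition~\eqref{MinPower}, which yields the stated form of the corollary. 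Altogether this is a short deduction: the two-line proof simply chains proposition~\ref{WholeRadicalImpliesSsmallLem} and lemma~\ref{TnsLem}, observing that the side condition $\max \OneVec{n} = 1 < \ppmin(q)$ is automatic.

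There is essentially no obstacle here, since all the substantive work has been done in proposition~\ref{WholeRadicalImpliesSsmallLem} and lemma~\ref{TnsLem}. The only point requiring the slightest care is confirming that the hypothesis $\max \multii < \ppmin(q)$ is vacuous in the case $\multii = \OneVec{n}$ — this follows because $\ppmin(q) \geq 2$ for all $q \in \bC^\times$ by~\eqref{MinPower} (indeed $\ppmin(q) = \infty$ when $q \in \{\pm 1\}$ and $\ppmin(q) = \pmin(q) \geq 2$ otherwise), while $\max \OneVec{n} = 1$. So the corollary holds for all $q \in \bC^\times$, matching the statement in~\cite{rsa}.

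\begin{proof}
Since $\max \OneVec{n} = 1 < 2 \leq \ppmin(q)$ for every $q \in \bC^\times$ by~\eqref{MinPower}, proposition~\ref{WholeRadicalImpliesSsmallLem} applies with $\multii = \OneVec{n}$ and gives
\begin{align}
\rad \smash{\LS_n\super{s}} = \smash{\LS_n\super{s}} \qquad \Longleftrightarrow \qquad q \in \smash{\Tot_n\super{s}} .
\end{align}
By lemma~\ref{TnsLem}, we have $\smash{\Tot_n\super{s}} = \emptyset$ if $s \neq 0$ and $\smash{\Tot_n\super{0}} = \{\pm \ii\}$. Hence $\rad \smash{\LS_n\super{s}} = \smash{\LS_n\super{s}}$ if and only if $s = 0$ and $q \in \{\pm \ii\}$, which by~\eqref{MinPower} is equivalent to $s = 0$ and $\ppmin(q) = 2$.
\end{proof}
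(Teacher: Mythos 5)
Your proof is correct and follows exactly the paper's route: the paper likewise deduces the corollary immediately from proposition~\ref{WholeRadicalImpliesSsmallLem} combined with lemma~\ref{TnsLem}. Your additional check that the hypothesis $\max \OneVec{n} = 1 < \ppmin(q)$ is automatic (since $\ppmin(q) \geq 2$ for all $q \in \bC^\times$ by~\eqref{MinPower}) is a correct and worthwhile observation, though the paper leaves it implicit.
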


\begin{proof} 
This immediately follows from proposition~\ref{WholeRadicalImpliesSsmallLem} and lemma~\ref{TnsLem}. 
\end{proof}

We finish by some further observations concerning a totally degenerate bilinear form.
First, thanks to the condition $s + 1 < \ppmin(q)$, we can ``zero-out" the defect height in all cases where 
$\smash{\rad \LS_\multii\super{s}} = \smash{\LS_\multii\super{s}}$.

\begin{lem} \label{ZeroOutLem} 
Suppose $\max \multii < \ppmin(q)$. We have
\begin{align} 
\rad \LS_\multii\super{s} = \LS_\multii\super{s} \qquad \Longrightarrow \qquad
\rad \LS_{\multii \oplus \scaleobj{0.85}{(s)}}\super{0} 
= \LS_{\multii \oplus \scaleobj{0.85}{(s)}}\super{0}. 
\end{align}
\end{lem}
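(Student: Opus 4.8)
The plan is to prove the contrapositive-free version directly by showing that if $\rad \LS_\multii\super{s} = \LS_\multii\super{s}$, then $q \in \Tot_\multii\super{s}$ forces the analogous equality for the multiindex $\multii \oplus (s)$ via proposition~\ref{WholeRadicalImpliesSsmallLem}. By that proposition, $\rad \LS_\multii\super{s} = \LS_\multii\super{s}$ is equivalent to $q \in \Tot_\multii\super{s}$, and $\rad \LS_{\multii \oplus (s)}\super{0} = \LS_{\multii \oplus (s)}\super{0}$ is equivalent to $q \in \Tot_{\multii \oplus (s)}\super{0}$. So it suffices to prove the inclusion of sets
\begin{align}\label{ZeroOutInclusion}
\Tot_\multii\super{s} \subset \Tot_{\multii \oplus (s)}\super{0}.
\end{align}
Both sets are defined via~\eqref{TotDefn} in terms of walks, so the argument is entirely combinatorial: I need to relate walks over $\multii$ with defect $s$ to walks over $\multii \oplus (s)$ with defect $0$.

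First I would set up the walk correspondence. Write $\multii = (\sIndex_1, \ldots, \sIndex_{\np_\multii})$ so $\multii \oplus (s) = (\sIndex_1, \ldots, \sIndex_{\np_\multii}, s)$. Given a walk $\varrho = (r_1, \ldots, r_{\np_\multii})$ over $\multii$ with defect $r_{\np_\multii} = s$, appending one more step from height $s$ down with increment $s$ yields the walk $\varrho' := (r_1, \ldots, r_{\np_\multii}, 0)$ over $\multii \oplus (s)$, which is legitimate since $0 \in \DefectSet\sub{s,s}$ by lemma~\ref{SpecialDefLem}, and it has defect $0$. Conversely, every walk over $\multii \oplus (s)$ with defect $0$ has penultimate height equal to $s$ (because the last increment is $s$ and the final height is $0$, forcing $r_{\np_\multii} = s$ by~\eqref{WalkHeights} and~\eqref{SpecialDefSet}), so it is of the form $\varrho'$ for a unique walk $\varrho$ over $\multii$ with defect $s$. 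This gives a bijection $\varrho \leftrightarrow \varrho'$. Now I would compare the relevant maxima of apices: the apices of $\varrho'$ at steps $j \in \{0, 1, \ldots, \np_\multii - 1\}$ coincide with those of $\varrho$, while the extra apex at step $\np_\multii$ is $h_{\max,\np_\multii}(\varrho') = (s + 0 + s)/2 = s$ by~\eqref{minmaxh2}. Since $s < \ppmin(q) - 1$ on $\Tot_\multii\super{s}$ (the left condition in~\eqref{TotDefn}), this extra apex is strictly below $\ppmin(q) - 1$, hence contributes nothing new to the relevant maximum; therefore
\begin{align}
\max_{0 \leq j < \np_\multii + 1} h_{\max,j}(\varrho') = \max\Big\{ \max_{0 \leq j < \np_\multii} h_{\max,j}(\varrho), \, s \Big\} = \max_{0 \leq j < \np_\multii} h_{\max,j}(\varrho),
\end{align}
where the last equality uses that $\max_{0 \leq j < \np_\multii} h_{\max,j}(\varrho) \geq r_{\np_\multii} = s$ by~\eqref{EasyCompare} and~\eqref{minmaxh2} applied at $j = \np_\multii - 1$. (Here I am using $s \neq \max\multii$ is not needed; the inequality $h_{\max,\np_\multii - 1}(\varrho) \geq r_{\np_\multii}$ follows directly from~\eqref{EasyCompare}.)

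Taking the minimum over the bijectively paired walks, I conclude that $\min_{\varrho'} \max_{0 \leq j < \np_\multii+1} h_{\max,j}(\varrho') = \min_\varrho \max_{0 \leq j < \np_\multii} h_{\max,j}(\varrho)$. Combined with the observation that $0 + 1 \leq s + 1 < \ppmin(q)$ automatically holds (so the left condition defining $\Tot_{\multii \oplus (s)}\super{0}$ is satisfied whenever that of $\Tot_\multii\super{s}$ is), the defining inequality chain for membership in $\Tot_{\multii \oplus (s)}\super{0}$ reduces to exactly that for $\Tot_\multii\super{s}$, giving~\eqref{ZeroOutInclusion}. Applying proposition~\ref{WholeRadicalImpliesSsmallLem} twice then yields the claim. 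I expect the only real subtlety — the ``main obstacle'' — to be verifying carefully that the extra apex $s$ is genuinely irrelevant, i.e.\ the interplay between the strict inequality $s < \ppmin(q) - 1$ guaranteed on $\Tot_\multii\super{s}$ and the fact that some interior apex of $\varrho$ always already reaches at least height $s$; once that is pinned down, the rest is bookkeeping with the walk bijection. One should also double-check the degenerate edge case $\np_\multii$ small (e.g.\ $\np_\multii = 1$) to make sure the walk bijection and apex comparison still go through, but these follow the same pattern.
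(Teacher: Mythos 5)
Your proof is correct, but it takes a genuinely different route from the paper. The paper's argument is diagrammatic: it invokes proposition~\ref{WholeRadicalImpliesSsmallLem} only to extract the inequality $s+1 < \ppmin(q)$ (after disposing of $q = \pm 1$ via corollary~\ref{RadicalCor2}), and then uses lemma~\ref{TieOffLem} of appendix~\ref{TLRecouplingSect} to show that every Gram pairing $\gamma \BarAction \delta$ of link states in $\smash{\LS_{\multii \oplus (s)}\super{0}}$ equals $\tfrac{(-1)^s}{[s+1]}$ times a pairing of corresponding states in $\smash{\LS_\multii\super{s}}$; since $[s+1] \neq 0$ and all the latter pairings vanish by hypothesis, total degeneracy transfers to $\smash{\LS_{\multii \oplus (s)}\super{0}}$. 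You instead use proposition~\ref{WholeRadicalImpliesSsmallLem} as a two-sided equivalence and reduce the lemma to the purely combinatorial inclusion $\smash{\Tot_\multii\super{s}} \subset \smash{\Tot_{\multii \oplus (s)}\super{0}}$, proved via the bijection appending a final height $0$ to walks with defect $s$ and comparing apices; your key observations (the penultimate height of the extended walk is forced to be $s$, the extra apex equals $s$, and the original maximum apex is already $\geq s$) are all correct, and in fact for the inclusion you only need that the extra apex cannot decrease the maximum. One point worth making explicit: the second application of proposition~\ref{WholeRadicalImpliesSsmallLem}, to the multiindex $\multii \oplus (s)$, requires $\max(\multii \oplus (s)) < \ppmin(q)$, which holds precisely because $q \in \smash{\Tot_\multii\super{s}}$ forces $s + 1 < \ppmin(q)$ (and the case $s \notin \DefectSet_\multii$ is vacuous on both sides). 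Your route buys a proof with no diagram calculus beyond what is already packaged in the earlier radical-dimension results, while the paper's network identity~\eqref{CloseIt0} is more self-contained and immediately yields the companion converse statement in the corollary following lemma~\ref{ZeroOutLem}; your walk bijection would give that converse as well, by the reverse inclusion of the $\Tot$ sets under $s+1 < \pmin(q)$.
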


\begin{proof} 
If $q \in \{\pm1\}$, then we have $\rad \smash{\LS_\multii\super{s}} = \{0\}$ by~\eqref{MinPower} 
and corollary~\ref{RadicalCor2}. 
Thus, we may assume that $q \not\in \{\pm1\}$ throughout the proof. As such, we have $\ppmin(q) = \pmin(q)$ throughout.

Now, if $\rad \smash{\LS_\multii\super{s}} = \smash{\LS_\multii\super{s}}$, then we have $s + 1 < \pmin(q)$ 
by lemma~\ref{WholeRadicalImpliesSsmallLem} and~\eqref{TotDefn}. 
In light of this fact, lemma~\ref{TieOffLem} of appendix~\ref{TLRecouplingSect} says that the evaluations of 
the following networks are equal for any 
two valenced link states $\smash{\alpha, \beta \in \LS_\multii\super{s}}$:
\begin{align}\label{CloseIt0} 
\vcenter{\hbox{\includegraphics[scale=0.275]{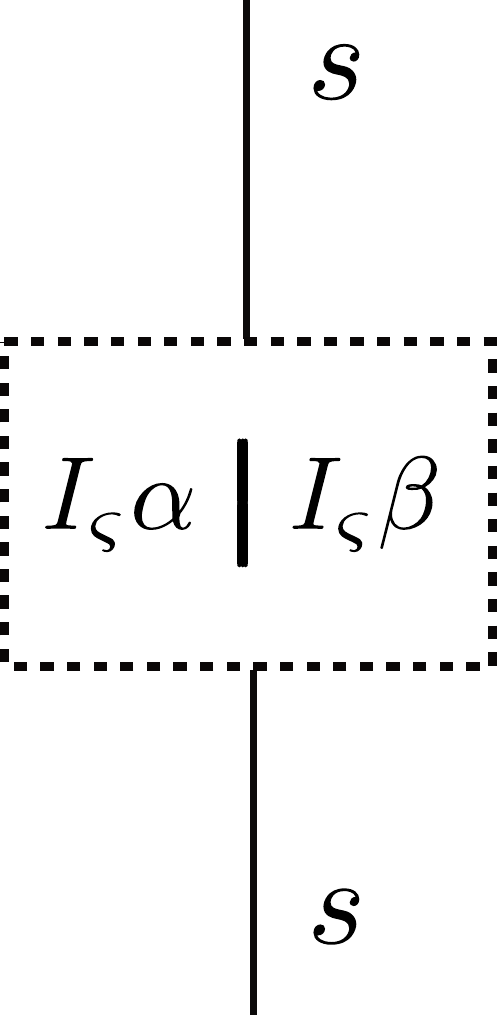}}} 
\qquad \qquad  \text{and} \qquad \qquad
\frac{(-1)^s}{[s+1]} \,\, \times \,\,
\vcenter{\hbox{\includegraphics[scale=0.275]{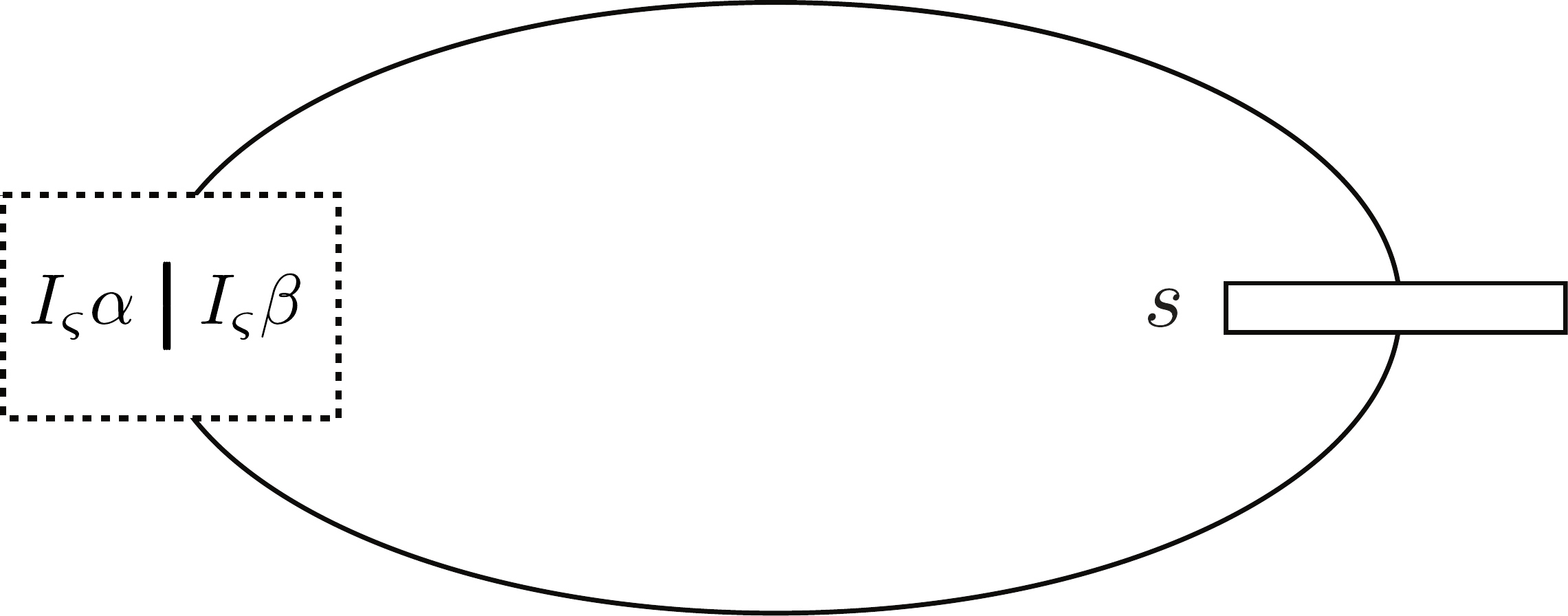} .}} 
\end{align}
For any pair of valenced link states $\smash{\gamma, \delta \in \LS_{\multii \oplus \scaleobj{0.85}{(s)}}\super{0}}$, 
we may write the network $\gamma \BarAction \delta$ in the form on the right side of~\eqref{CloseIt0} for 
some corresponding pair of valenced link states $\smash{\alpha, \beta \in \LS_\multii\super{s}}$. 
Thus, if $\smash{\rad \LS_\multii\super{s} = \LS_\multii\super{s}}$, then the network on the left side of~\eqref{CloseIt0} 
vanishes for any pair of valenced link states 
$\smash{\alpha, \beta \in \LS_\multii\super{s}}$, and with $s + 1 < \ppmin(q)$, we have $[s+1] \neq 0$. 
Thus, the network $\gamma \BarAction \delta$ on 
the right side of~\eqref{CloseIt0} also vanishes for any pair of valenced link states 
$\smash{\gamma, \delta \in \LS_{\multii \oplus \scaleobj{0.85}{(s)}}\super{0}}$.
\end{proof}

\begin{cor}
Suppose $\max \multii < \ppmin(q)$. We have
\begin{align}
\begin{cases} 
s + 1 < \pmin(q) \\ 
\rad \LS_{\multii \oplus \scaleobj{0.85}{(s)}}\super{0} = \LS_{\multii \oplus \scaleobj{0.85}{(s)}}\super{0} 
\end{cases} 
\qquad \Longrightarrow \qquad \rad \LS_\multii\super{s} = \LS_\multii\super{s}. 
\end{align}
\end{cor}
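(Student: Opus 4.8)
The plan is to transfer total degeneracy across the ``bend the defects into a new node'' correspondence, running the network identity behind lemma~\ref{ZeroOutLem} in the opposite direction. First observe that the hypothesis $s+1<\pmin(q)$ already forces $q\notin\{\pm1\}$ (since $\pmin(\pm1)=1$), so $\ppmin(q)=\pmin(q)$, and moreover $s<\ppmin(q)$, so that $\max(\multii\oplus\scaleobj{0.85}{(s)})<\ppmin(q)$ and all the bilinear forms in sight are defined. It suffices to show $\Gram_\multii\super{s}$ is the zero matrix, since $\rad\LS_\multii\super{s}=\LS_\multii\super{s}$ is equivalent to $\BiForm{\alpha}{\beta}=0$ for all $\alpha,\beta\in\LP_\multii\super{s}$. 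The basic combinatorial input is the bijection $\Phi\colon\LP_\multii\super{s}\longrightarrow\LP_{\multii\oplus\scaleobj{0.85}{(s)}}\super{0}$ that attaches the $s$ defects of $\alpha$ to a single new rightmost node of valence $s$ (equivalently, the special case $\multiii=(s)$ of lemma~\ref{LinkDiagPattLem2} composed with lemma~\ref{WJSandwichLem}, using $\DefectSet\sub{s}=\{s\}$ from lemma~\ref{SpecialDefLem} and $\#\LP\sub{s}\super{s}=1$). One checks $\Phi$ is a genuine bijection of bases: it never creates a loop link, it is clearly injective, and it is surjective because every link in a $(\multii\oplus\scaleobj{0.85}{(s)},0)$-link pattern emanating from the last valence-$s$ node must reach one of the first $\np_\multii$ nodes, so cutting those $s$ links recovers a unique preimage in $\LP_\multii\super{s}$.

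Next I would invoke lemma~\ref{TieOffLem} of appendix~\ref{TLRecouplingSect} in the form already used in the proof of lemma~\ref{ZeroOutLem}, namely the pair of equal networks in~\eqref{CloseIt0}. Exactly as there, for $\alpha,\beta\in\LP_\multii\super{s}$ the network $\Phi(\alpha)\BarAction\Phi(\beta)$ is the right-hand network of~\eqref{CloseIt0}, while the left-hand network of~\eqref{CloseIt0} is the network computing $\BiForm{\alpha}{\beta}$ with a single redundant projector box across the $s$ through-strands (harmless by the box-insertion lemma). Since the two networks in~\eqref{CloseIt0} differ by the scalar $\tfrac{(-1)^s}{[s+1]}$, and since $s+1<\pmin(q)$ guarantees that $[s+1]$ is finite and nonzero (so this scalar is defined and invertible), we obtain a relation
\begin{align}
\BiForm{\Phi(\alpha)}{\Phi(\beta)} = c\,\BiForm{\alpha}{\beta},\qquad\text{for all }\alpha,\beta\in\LP_\multii\super{s},
\end{align}
with a fixed nonzero constant $c=c(q,s)$. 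The one new point, relative to lemma~\ref{ZeroOutLem}, is precisely that the extra hypothesis $s+1<\pmin(q)$ makes this identity \emph{invertible} in $\BiForm{\alpha}{\beta}$; in lemma~\ref{ZeroOutLem} this hypothesis had to be derived (via lemma~\ref{WholeRadicalImpliesSsmallLem}) before the same identity could be read off in the other direction.

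Finally, since $\Phi$ is a bijection of bases, the displayed relation says that after matching bases via $\Phi$ one has $\Gram_{\multii\oplus\scaleobj{0.85}{(s)}}\super{0}=c\,\Gram_\multii\super{s}$ up to a permutation of rows and columns. By hypothesis $\rad\LS_{\multii\oplus\scaleobj{0.85}{(s)}}\super{0}=\LS_{\multii\oplus\scaleobj{0.85}{(s)}}\super{0}$, i.e. $\Gram_{\multii\oplus\scaleobj{0.85}{(s)}}\super{0}=0$; as $c\neq0$ this forces $\Gram_\multii\super{s}=0$, hence $\BiForm{\cdot}{\cdot}\equiv0$ on $\LS_\multii\super{s}$, i.e. $\rad\LS_\multii\super{s}=\LS_\multii\super{s}$, as claimed. (If one prefers to be agnostic about the exact scalar: even if lemma~\ref{TieOffLem} produced $\BiForm{\Phi(\alpha)}{\Phi(\beta)}$ as a more elaborate linear combination of the $\BiForm{\cdot}{\cdot}$ on $\LS_\multii\super{s}$, the conclusion would still follow because $\Phi$ is a basis bijection and the change of variables is invertible; so the only genuine obstacle is the bookkeeping in identifying $\Phi(\alpha)\BarAction\Phi(\beta)$ with the left network of~\eqref{CloseIt0}, which is routine diagram calculus borrowed verbatim from the proof of lemma~\ref{ZeroOutLem}.)
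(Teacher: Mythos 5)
Your argument is correct and is essentially the paper's own proof: the paper also just reads the network identity~\eqref{CloseIt0} from lemma~\ref{TieOffLem} in the reverse direction of lemma~\ref{ZeroOutLem}, using $s+1<\pmin(q)$ to ensure the projector box of size $s$ exists and the factor $(-1)^s/[s+1]$ is finite and nonzero. Your explicit bijection $\Phi$ and Gram-matrix bookkeeping merely spell out what the paper leaves implicit in the phrase ``arguing as in the proof of lemma~\ref{ZeroOutLem}.''
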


\begin{proof}
Arguing as in the proof of lemma~\ref{ZeroOutLem}, we see that
if $s + 1 < \pmin(q)$, the evaluations of the two networks in~\eqref{CloseIt0} are equal, so
$\smash{\rad \LS_{\multii \oplus \scaleobj{0.85}{(s)}}\super{0} = \LS_{\multii \oplus \scaleobj{0.85}{(s)}}\super{0}}$
implies $\rad \LS_\multii\super{s} = \LS_\multii\super{s}$.
\end{proof}

In the next two lemmas, we assume that $s = 0$. However,
using lemma~\ref{ZeroOutLem} and the comment beneath the proof of lemma~\ref{WalkMultiiLem}, 
it is straightforward to extend them to all cases in which this assumption is not true.

\begin{lem} \label{maxmaxlem} 
Suppose $\max \multii < \ppmin(q)$.  
If there exists a walk over $\multii$ with defect zero \textnormal{(}i.e., $0 \in \DefectSet_\multii$~\eqref{AltDefectSet}\textnormal{)}, then 
\begin{align}\label{thistothat} 
\ppmin(q) \leq \max_{0 \, \leq \, j \, < \, \np_\multii} \max \bigg\{ \frac{\mu_j(\multii) + \mu_{j+1}(\multii) + \sIndex_{j+1}}{2}, \, \sIndex_{j+1} \bigg\} + 1 
\qquad \qquad \Longrightarrow \qquad \qquad  \rad \LS_\multii\super{0} = \LS_\multii\super{0}. 
\end{align}
\end{lem}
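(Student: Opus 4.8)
The plan is to use proposition~\ref{WholeRadicalImpliesSsmallLem}, which says that $\rad \LS_\multii\super{s} = \LS_\multii\super{s}$ if and only if $q \in \Tot_\multii\super{s}$, where for $s = 0$ the set $\Tot_\multii\super{0}$ from~\eqref{TotDefn} consists of all $q \in \bC^\times$ with $1 < \ppmin(q) \leq \min_\varrho \max_{0 \leq j < \np_\multii} h_{\max,j}(\varrho) + 1$, the inner maximum being over all walks $\varrho$ over $\multii$ with defect zero. Since the condition $s = 0 < \ppmin(q) - 1$ is automatic whenever $\ppmin(q) \geq 2$ (and the cases $q \in \{\pm1\}$ are excluded by $\ppmin(\pm1) = \infty$ together with corollary~\ref{RadicalCor2}), it suffices to identify the quantity $\min_\varrho \max_{0 \leq j < \np_\multii} h_{\max,j}(\varrho)$ with the right side appearing inside the hypothesis of~\eqref{thistothat}.

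First I would observe that, assuming $q \not\in \{\pm1\}$ and $0 \in \DefectSet_\multii$, the key computation is the interchange of the minimum over walks and the maximum over steps, namely
\begin{align}
\min_\varrho \; \max_{0 \, \leq \, j \, < \, \np_\multii} h_{\max, j}(\varrho) \; \leq \; \max_{0 \, \leq \, j \, < \, \np_\multii} \; \min_\varrho h_{\max, j}(\varrho),
\end{align}
where all minima are over walks over $\multii$ with defect zero. The right side is computed by item~\ref{wmlIt9} of lemma~\ref{WalkMultiiLem}: for each $j$, $\min_\varrho h_{\max,j}(\varrho) = \max\{ (\mu_j(\multii) + \mu_{j+1}(\multii) + \sIndex_{j+1})/2, \, \sIndex_{j+1} \}$, which is precisely the $j$-th term of the maximum on the right side of~\eqref{thistothat}. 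Hence, if $\ppmin(q)$ is bounded above by $\max_j \max\{ \cdots \} + 1$, then a priori this only gives $\ppmin(q) \leq \max_j \min_\varrho h_{\max,j}(\varrho) + 1$, and I would need $\ppmin(q) \leq \min_\varrho \max_j h_{\max,j}(\varrho) + 1$, which is the \emph{reverse} of the minimax inequality above. So the naive interchange goes the wrong way, and the real content is to show that the two expressions actually coincide: there is a single walk $\varrho$ over $\multii$ with defect zero whose maximal apex equals $\max_j \min_\varrho h_{\max,j}(\varrho)$.

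The main obstacle, therefore, is constructing such an extremal walk. The strategy I would pursue is: let $j_0$ be an index achieving $\max_j \min_\varrho h_{\max,j}(\varrho)$, and let $\varrho_0$ be a walk over $\multii$ with defect zero realizing $\min_{\varrho} h_{\max,j_0}(\varrho)$ at step $j_0$ (which exists by item~\ref{wmlIt9}, or by the construction in the proof of lemma~\ref{WalkMultiiLem}). The point is that $\varrho_0$ can be taken to be ``low everywhere else'': one concatenates the minimal-height walk pieces on both sides of step $j_0$, using item~\ref{wmlIt7} of lemma~\ref{WalkMultiiLem} to glue compatible pieces — exactly the concatenation argument used in cases (a) and (b) of the proof of item~\ref{wmlIt9}. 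For the glued walk $\varrho_0$, at every step $j \neq j_0$ its apex satisfies $h_{\max,j}(\varrho_0) \leq \min_\varrho h_{\max,j}(\varrho) + (\text{a correction bounded by } \sIndex_{j+1})$; one must check this correction does not push any other apex above $h_{\max,j_0}(\varrho_0)$. Here I would use $\sIndex_{j+1} \leq \max \multii < \ppmin(q)$ together with the fact that $h_{\max,j}(\varrho_0) \leq r_j^{(0)} + \sIndex_{j+1}$ where the height $r_j^{(0)}$ is minimal; since the heights of $\varrho_0$ away from $j_0$ are as small as possible, any apex of $\varrho_0$ is at most $h_{\max,j_0}(\varrho_0)$. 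This yields $\max_j h_{\max,j}(\varrho_0) = h_{\max,j_0}(\varrho_0) = \max_j \min_\varrho h_{\max,j}(\varrho)$, hence $\min_\varrho \max_j h_{\max,j}(\varrho) \leq \max_j \min_\varrho h_{\max,j}(\varrho)$, and combined with the trivial minimax inequality the two are equal. Feeding this equality into~\eqref{TotDefn} and invoking proposition~\ref{WholeRadicalImpliesSsmallLem} gives the claim. I expect the delicate bookkeeping in the gluing step — verifying that the concatenated walk's apex profile is dominated by its value at $j_0$ — to be where most of the care is needed; everything else is a direct citation of lemma~\ref{WalkMultiiLem} and proposition~\ref{WholeRadicalImpliesSsmallLem}.
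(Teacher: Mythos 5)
Your overall reduction is the paper's: rewrite the hypothesis of~\eqref{thistothat} via item~\ref{wmlIt9} of lemma~\ref{WalkMultiiLem} as $\ppmin(q) \leq \max_j \min_\varrho h_{\max,j}(\varrho) + 1$, deduce $q \in \Tot_\multii\super{0}$ from~\eqref{TotDefn}, and conclude with proposition~\ref{WholeRadicalImpliesSsmallLem}. But you have the elementary minimax inequality backwards, and this manufactures a phantom obstacle. The inequality that always holds (with all extrema over walks $\varrho$ over $\multii$ with defect zero) is
\[
\max_{0 \, \leq \, j \, < \, \np_\multii} \; \min_{\varrho} h_{\max,j}(\varrho) \;\; \leq \;\; \min_{\varrho} \; \max_{0 \, \leq \, j \, < \, \np_\multii} h_{\max,j}(\varrho) ,
\]
because for any fixed walk $\varrho$ and any $j$ we have $\min_{\varrho'} h_{\max,j}(\varrho') \leq h_{\max,j}(\varrho) \leq \max_k h_{\max,k}(\varrho)$, and one takes the maximum over $j$ on the left and then the minimum over $\varrho$ on the right. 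This is exactly the implication you need: it upgrades the hypothesis to $\ppmin(q) \leq \min_\varrho \max_j h_{\max,j}(\varrho) + 1$, which together with $1 < \ppmin(q)$ (true for every $q \in \bC^\times$ by~\eqref{MinPower}) gives $q \in \Tot_\multii\super{0}$, and proposition~\ref{WholeRadicalImpliesSsmallLem} finishes. This one-line observation is precisely the paper's proof.

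Because you labelled the needed direction ``the reverse of the minimax inequality,'' you set out to prove that min-max and max-min over walks actually coincide, by gluing a single walk that is (near-)minimal at every step simultaneously. That equality is not needed for the lemma, and as written your argument for it is incomplete: the crucial claim that the glued walk's apexes away from $j_0$ stay below its apex at $j_0$ is only asserted (``one must check\dots''), and it is not even clear that a walk realizing the minimal height $\mu_j(\multii)$ at \emph{all} steps exists --- item~\ref{wmlIt7} of lemma~\ref{WalkMultiiLem} only provides, for each $j$ separately, some defect-zero walk with $r_j = \mu_j(\multii)$. So the proposal's main step is unproven; the fix is simply to drop the extremal-walk construction and use the correct (trivially true) direction of the minimax inequality as above.
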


\begin{proof} 
For each $j \in \{0, 1, \ldots, \np_\multii-1\}$ and each walk $\varrho$ over $\multii$ with defect zero, we have 
\begin{align}\label{observ0} 
\min_{\varrho'} h_{\max, j}(\varrho') \,  \leq \, 
h_{\max, j}(\varrho) \, \leq \, 
\max_{0 \, \leq \, k \, < \, \np_\multii} h_{\max, k}(\varrho),
\end{align}
where the minimum is over all walks $\varrho'$ over $\multii$ with defect zero. 
Taking the minimum and maximum of~\eqref{observ0} over all $j \in \{0, 1, \ldots, \np_\multii-1\}$ 
and all walks $\varrho$ over $\multii$ with defect zero, respectively, 
and using item~\ref{wmlIt9} of lemma~\ref{WalkMultiiLem}, we find
\begin{align} 
\ppmin(q) 
& \overset{\eqref{thistothat}}{\leq} \max_{0 \, \leq \, j \, < \, \np_\multii} \max \bigg\{ \frac{\mu_j(\multii) + \mu_{j+1}(\multii) + \sIndex_{j+1}}{2}, \, \sIndex_{j+1} \bigg\} \\
&\overset{\eqref{minapex}}{=} \max_{0 \, \leq \, j \, < \, \np_\multii} \min_\varrho h_{\max, j}(\varrho) 
\overset{\eqref{observ0}}{\leq} \min_\varrho \max_{0 \, \leq \, j \, < \, \np_\multii} h_{\max, j}(\varrho) . 
\end{align}
Taking this together with~\eqref{TotDefn} and corollary~\ref{WholeRadicalImpliesSsmallLem} with 
$s = 0$ and the fact that $1 < \ppmin(q)$ for all $q \in \bC^\times$ by~\eqref{MinPower}, we arrive with~\eqref{thistothat}.
\end{proof}

\begin{cor} \label{WholeRadLem} 
Suppose $\max \multii < \ppmin(q)$. 
If $\ppmin(q) = \sIndex_i + 1$ for some $i \in \{1,2,\ldots,\np_\multii\}$ and there exists a walk over $\multii$ with defect zero 
\textnormal{(}i.e., $0 \in \DefectSet_\multii$~\eqref{AltDefectSet}\textnormal{)}, then we have $\rad \smash{\LS_\multii\super{0}} = \smash{\LS_\multii\super{0}}$.
\end{cor}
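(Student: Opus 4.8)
The plan is to reduce Corollary~\ref{WholeRadLem} to Lemma~\ref{maxmaxlem}. First I would observe that the hypothesis $\ppmin(q) = \sIndex_i + 1$ for some $i \in \{1, 2, \ldots, \np_\multii\}$ means that one of the valences realizes the bound $\max \multii < \ppmin(q)$ with equality at $\ppmin(q) - 1$; the goal is to deduce that the chain of apex quantities appearing on the left-hand side of~\eqref{thistothat} is at least $\ppmin(q)$, so that Lemma~\ref{maxmaxlem} applies directly and yields $\rad \smash{\LS_\multii\super{0}} = \smash{\LS_\multii\super{0}}$.

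The key step is to analyze the term $\sIndex_{j+1}$ appearing inside the maximum in~\eqref{thistothat} for the appropriate index. If $i \in \{1, 2, \ldots, \np_\multii - 1\}$, then taking $j = i - 1$ gives a term $\max \big\{ \tfrac{1}{2}(\mu_{i-1}(\multii) + \mu_i(\multii) + \sIndex_i), \, \sIndex_i \big\} \geq \sIndex_i = \ppmin(q) - 1$, so
\begin{align}
\max_{0 \, \leq \, j \, < \, \np_\multii} \max \bigg\{ \frac{\mu_j(\multii) + \mu_{j+1}(\multii) + \sIndex_{j+1}}{2}, \, \sIndex_{j+1} \bigg\} + 1 \; \geq \; \sIndex_i + 1 \; = \; \ppmin(q),
\end{align}
which is exactly the hypothesis of~\eqref{thistothat}. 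The remaining case is $i = \np_\multii$, i.e. $\sIndex_{\np_\multii} = \ppmin(q) - 1$; here I would need to see that some index $j \in \{0, 1, \ldots, \np_\multii - 1\}$ produces a term involving $\sIndex_{\np_\multii}$, and the natural choice is $j = \np_\multii - 1$, again giving $\max\{\ldots, \sIndex_{\np_\multii}\} \geq \sIndex_{\np_\multii} = \ppmin(q) - 1$. So in every case the same index works, and Lemma~\ref{maxmaxlem} concludes the argument. One should also note that the hypothesis that there exists a walk over $\multii$ with defect zero (i.e. $0 \in \DefectSet_\multii$) is precisely what is required to invoke Lemma~\ref{maxmaxlem}, and the quantities $\mu_j(\multii)$ and $\smin$ that appear there are well-defined and nonnegative by~\eqref{mudefn} and~\eqref{EasyCompare}.

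The main obstacle, though a minor one, is bookkeeping at the boundary indices: the maximum in~\eqref{thistothat} runs over $j \in \{0, 1, \ldots, \np_\multii - 1\}$, so the term $\sIndex_{j+1}$ ranges over $\sIndex_1, \sIndex_2, \ldots, \sIndex_{\np_\multii}$ — that is, every valence including the first and last one is genuinely hit. I would want to state this correspondence explicitly (choosing $j = i - 1$ for the valence $\sIndex_i$, valid since $1 \leq i \leq \np_\multii$ forces $0 \leq j \leq \np_\multii - 1$) so that there is no off-by-one gap. Once that index match is recorded, the inequality $\sIndex_i + 1 \leq \max_j \max\{\ldots\} + 1$ is immediate from dropping all but one term in the maximum, and combining with $\ppmin(q) = \sIndex_i + 1$ gives the hypothesis of Lemma~\ref{maxmaxlem}. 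Everything else is a direct citation, so the proof is short.
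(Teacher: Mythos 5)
Your proposal is correct and follows essentially the same route as the paper: the paper's proof simply notes that $\ppmin(q) = \sIndex_i + 1 \leq \max_{0 \leq j < \np_\multii} \max\big\{ \tfrac{1}{2}(\mu_j(\multii) + \mu_{j+1}(\multii) + \sIndex_{j+1}), \, \sIndex_{j+1} \big\} + 1$ (your choice $j = i-1$) and then invokes lemma~\ref{maxmaxlem}, exactly as you do — your case split at $i = \np_\multii$ is unnecessary since $j = i - 1$ covers all $i$. The paper also remarks that one could alternatively argue via lemma~\ref{LoopLemGen} with $s = \sIndex_i = \ppmin(q) - 1$ and $r = 0$, but that is only an aside.
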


\begin{proof} 
This follows from lemma~\ref{maxmaxlem} with the fact that
\begin{align}
\ppmin(q) = \sIndex_i + 1 \leq \max_{0 \, \leq \, j \, < \, \np_\multii} \max \bigg\{ \frac{\mu_j(\multii) + \mu_{j+1}(\multii) + \sIndex_{j+1}}{2}, \, \sIndex_{j+1} \bigg\} + 1. 
\end{align}
(Alternatively, one may use lemma~\ref{LoopLemGen} of appendix~\ref{TLRecouplingSect} with $s =  \sIndex_i = \ppmin(q) - 1$ and $r=0$.)
\end{proof}

\section{Semisimplicity of the valenced Temperley-Lieb algebra} \label{FinalResultSect}

In this section, we give several equivalent criteria for the valenced Temperley-Lieb algebra to be semisimple.
We also classify the simple and indecomposable $\TL_\multii(\nu)$-modules, and determine the Jacobson radical of $\TL_\multii(\nu)$.

The proofs that we present in this section are explicit and self-contained, 
relying on results from sections~\ref{DiagramAlgebraSect} and~\ref{StdModulesSect} and basic representation-theoretical facts. 
We remark that, admitting the fact that the valenced Temperley-Lieb algebra is cellular, 
by~\cite[proposition~\red{2.4}]{fp0} and corollary~\ref{AnIsoCor} from appendix~\ref{AppWJ},
some of these results could also be obtained using the formalism of cellular algebras from J.~Graham and G.~Lehrer~\cite{gl, gl2}.

\subsection{Perspective from general representation theory of algebras} \label{RecapSec}

To begin, we briefly recall basic notions on the representation theory of associative algebras 
and collect salient facts in proposition~\ref{RepRecapProp}.
We recommend~\cite[appendix~\red{A}]{am} and~\cite[chapters~\red{III},~\red{IV}, and~\red{VIII}]{cr} for background.
We also use the standard terminology introduced in the beginning of section~\ref{StdModulesSect}.

Throughout, we let $\mathsf{A}$ be a finite-dimensional associative unital $\bC$-algebra.
An element $e \in \mathsf{A}$ is called an \emph{idempotent} if $e^2 = e$.
If the following further properties hold, then $e$ is called a \emph{primitive idempotent}:
first, $e \neq 0$, and second, if $e = e_1 + e_2$ for some idempotents $e_1,e_2 \in \mathsf{A}$
such that $e_1 e_2 = 0 = e_2 e_1$, then either $e_1 = 0$ or $e_2 = 0$. 
%

We may view $\mathsf{A}$ as a (left) $\mathsf{A}$-module, with the left action given by its multiplication.
The associated representation is called the \emph{regular representation} of $\mathsf{A}$.
There exists a finite set $\{\mathsf{P}_\lambda\}_\lambda$ of indecomposable $\mathsf{A}$-modules such that
\begin{align} \label{WeddDecPre}
\mathsf{A} \; \cong \; \bigoplus_\lambda  \mathsf{P}_\lambda .
\end{align}
These modules $\mathsf{P}_\lambda$ are called \emph{principal indecomposable} $\mathsf{A}$-modules.

Some of the modules $\mathsf{P}_\lambda$ in decomposition~\eqref{WeddDecPre} might be isomorphic,
and the multiplicities of the non-isomorphic ones are
given by the dimensions of all simple $\mathsf{A}$-modules (by item~\ref{RepRecap4} of proposition~\ref{RepRecapProp} below).
By the Krull-Schmidt theorem~\cite[theorem~\red{A6}]{am}, decomposition~\eqref{WeddDecPre} is unique up to permutation of the components.

The \emph{Jacobson radical} of $\mathsf{A}$ is the intersection of all of the maximal ideals in $\mathsf{A}$.
By~\cite[corollary~\red{A11}]{am}, it is equal to the nilradical of $\mathsf{A}$ (the intersection of all nilpotent ideals).
We denote this radical by $\rad \mathsf{A}$. 
Equivalently, the Jacobson radical of $\mathsf{A}$ is 
the intersection of all annihilators of simple $\mathsf{A}$-modules~\cite[chapter~\red{2}]{lam},
\begin{align} \label{Jacobson}
\rad \mathsf{A} 
\quad = \bigcap_{\substack{\text{maximal} \\ \text{ideals } \mathsf{J} \, \subset \,  \mathsf{A}}} \mathsf{J}
\quad  = \bigcap_{\substack{\text{simple} \\ \text{$\mathsf{A}$-modules } \mathsf{M}}} 
\big\{a \in \mathsf{A} \, \big| \, a.v = 0 \text{ for all } v \in \mathsf{M} \big\} .
\end{align}

There are numerous equivalent notions of ``semisimplicity'' of the algebra $\mathsf{A}$.
We say that $\mathsf{A}$ is \emph{semisimple} if the Jacobson radical of $\mathsf{A}$ is trivial: $\rad \mathsf{A} = \{0\}$.
The quotient algebra $\mathsf{A} / \rad \mathsf{A}$ is always semisimple.
We give some alternative conditions in the next lemma.
\begin{lem} \label{SSLem}
\textnormal{\cite[corollary~\red{A13}]{am} and \cite[(\red{24.5},~\red{25.8}), chapter~\red{25}]{cr}} \\
The algebra $\mathsf{A}$ is semisimple if and only if one of the following equivalent conditions hold:
\begin{enumerate}
\itemcolor{red}
\item \label{SS1} All principal indecomposable $\mathsf{A}$-modules are simple.

\item \label{SS2} The regular representation of $\mathsf{A}$ is completely reducible.


\item \label{SS4} All finite-dimensional $\mathsf{A}$-modules are semisimple. 

\end{enumerate}
\end{lem}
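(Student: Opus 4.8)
The final statement is \textbf{Lemma~\ref{SSLem}}, which asserts the equivalence of semisimplicity of a finite-dimensional associative unital $\bC$-algebra $\mathsf{A}$ with conditions \ref{SS1}, \ref{SS2}, and \ref{SS4}. This is a standard result in the representation theory of finite-dimensional algebras, and the paper itself cites \cite[corollary~\red{A13}]{am} and \cite[(\red{24.5},~\red{25.8}), chapter~\red{25}]{cr} for it. So the plan is simply to assemble a self-contained proof from the facts already recalled in section~\ref{RecapSec}, namely: the Jacobson radical $\rad\mathsf{A}$ is the nilradical and also the intersection of annihilators of simple modules~\eqref{Jacobson}; the regular representation decomposes as a direct sum of principal indecomposable modules~\eqref{WeddDecPre}; and ``semisimple'' is defined to mean $\rad\mathsf{A} = \{0\}$.

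First I would prove \ref{SS2} $\Leftrightarrow$ semisimplicity. If the regular representation is completely reducible, write $\mathsf{A} = \bigoplus_i \mathsf{S}_i$ as a sum of simple submodules; then any element $a$ of $\rad\mathsf{A}$ annihilates every $\mathsf{S}_i$ by~\eqref{Jacobson}, hence annihilates $\mathsf{A}$, hence $a = a \cdot \mathbf{1} = 0$, so $\rad\mathsf{A} = \{0\}$. Conversely, if $\rad\mathsf{A} = \{0\}$: consider a minimal nonzero left ideal $\mathsf{S}$ (a simple submodule), note that for $x \notin \rad\mathsf{A}$ there is a simple module not annihilated by $x$, and run the standard argument that $\mathsf{A}$ is a sum of its simple left ideals --- e.g., take $\mathsf{S}$ a simple submodule, observe $\mathsf{A}\mathsf{S}$ (the sum of translates) is a two-sided ideal which is a sum of simple submodules, show its ``complement'' in the socle must vanish when $\rad\mathsf{A} = \{0\}$, using that a submodule with no simple complement forces a nilpotent or annihilating obstruction. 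Then I would prove \ref{SS1} $\Leftrightarrow$ \ref{SS2}: \eqref{WeddDecPre} exhibits $\mathsf{A}$ as a direct sum of principal indecomposables $\mathsf{P}_\lambda$, and a direct sum is completely reducible if and only if each summand is, and an indecomposable module is completely reducible if and only if it is simple --- so \ref{SS1} holds iff the regular representation is completely reducible. Finally \ref{SS4} $\Rightarrow$ \ref{SS2} is trivial (the regular module is a finite-dimensional $\mathsf{A}$-module), and for \ref{SS2} $\Rightarrow$ \ref{SS4} I would use that any finite-dimensional $\mathsf{A}$-module $\mathsf{M}$ is a quotient of a free module $\mathsf{A}^{\oplus k}$; if $\mathsf{A}$ is a direct sum of simple modules then so is $\mathsf{A}^{\oplus k}$, and a quotient of a semisimple module is semisimple (the quotient map splits on the socle), hence $\mathsf{M}$ is semisimple.

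The main obstacle --- really the only nontrivial point --- is the implication $\rad\mathsf{A} = \{0\} \Rightarrow$ the regular representation is completely reducible, i.e.\ showing that triviality of the Jacobson radical forces $\mathsf{A}$ to equal its socle as a left module. The clean way to do this uses the fact that $\rad\mathsf{A}$ coincides with the nilradical (already recalled in the text): let $\mathsf{N}$ be the socle of $\mathsf{A}$ (the sum of all simple left ideals), which is a two-sided ideal; if $\mathsf{N} \neq \mathsf{A}$, consider the module $\mathsf{A}/\mathsf{N}$, pick a minimal nonzero left ideal $\bar{\mathsf{T}}/\mathsf{N}$ inside it, and lift to get a left ideal $\mathsf{T} \supsetneq \mathsf{N}$ with $\mathsf{T}/\mathsf{N}$ simple; then one produces a nilpotent ideal or an annihilating element from the failure of this extension to split, contradicting $\rad\mathsf{A} = \{0\}$. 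Since this is entirely routine and available verbatim in the cited references, I would keep the write-up brief, either citing \cite{am,cr} directly for this step or giving the three-line socle argument, and spend the bulk of the proof on the formal equivalences \ref{SS1} $\Leftrightarrow$ \ref{SS2} $\Leftrightarrow$ \ref{SS4}, which follow immediately from~\eqref{WeddDecPre},~\eqref{Jacobson}, and elementary facts about semisimple modules.
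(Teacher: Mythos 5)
The paper does not actually prove this lemma: it is quoted directly from the cited references ([am, corollary A13] and [cr, chapter 25]), so there is no internal argument to compare yours against. Your outline is the standard textbook proof and is essentially correct: the equivalence \ref{SS1} $\Leftrightarrow$ \ref{SS2} via the decomposition~\eqref{WeddDecPre} (a direct summand of a completely reducible module is completely reducible, and an indecomposable completely reducible module is simple), the trivial implication \ref{SS4} $\Rightarrow$ \ref{SS2}, the quotient-of-free-module argument for \ref{SS2} $\Rightarrow$ \ref{SS4}, and the observation that complete reducibility of the regular module kills $\rad \mathsf{A}$ via~\eqref{Jacobson} and $a = a\cdot\mathbf{1}$ are all fine. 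The one soft spot is exactly where you flag it: your sketch of $\rad \mathsf{A} = \{0\} \Rightarrow$ \ref{SS2} ("lift a simple quotient past the socle and produce a nilpotent ideal or an annihilating element") is not a proof as stated, and the usual socle-splitting arguments secretly use semisimplicity of $\mathsf{A}/\rad\mathsf{A}$, which risks circularity. The clean elementary route is: since $\rad\mathsf{A}$ is the intersection of the maximal left ideals and $\dim_\bC \mathsf{A} < \infty$, finitely many maximal left ideals $\mathsf{M}_1,\dots,\mathsf{M}_k$ already satisfy $\bigcap_i \mathsf{M}_i = \{0\}$, so the diagonal map embeds the regular module into the semisimple module $\bigoplus_i \mathsf{A}/\mathsf{M}_i$, and a submodule of a semisimple module is semisimple. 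With that substitution (or simply by citing [am, cr] for this step, as the paper itself does), your proof is complete.
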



The next proposition and corollary collect salient facts about the representation theory of $\mathsf{A}$.

\begin{prop} \label{RepRecapProp}
The following hold for any finite-dimensional associative unital $\bC$-algebra $\mathsf{A}$:
\begin{enumerate}
\itemcolor{red}

\item \label{RepRecap1} 
\textnormal{\cite[corollary~\red{A8}]{am}}
Every principal indecomposable $\mathsf{A}$-module has the form $\mathsf{A} e$,
where $e$ is some primitive idempotent.

\item \label{RepRecap2} 
\textnormal{\cite[theorem~\red{A10}]{am}}
Every principal indecomposable $\mathsf{A}$-module $\mathsf{P}$ 
has a unique maximal proper submodule $\mathsf{N}$, 
and its quotient $\mathsf{P} / \mathsf{N}$ 
with respect to this submodule is simple.

\item \label{RepRecap3} 
\textnormal{\cite[corollary~\red{A12}]{am}}
There is a one-to-one correspondence between the non-isomorphic principal indecomposable $\mathsf{A}$-modules 
and the non-isomorphic simple $\mathsf{A}$-modules, given by 
$\mathsf{P} \leftrightarrow \mathsf{P} / \mathsf{N}$.

\item \label{RepRecap4} 
\textnormal{\cite[corollary~\red{A22}]{am}}
Let $\{\mathsf{M}_\lambda\}_\lambda$ and $\{\mathsf{P}_\lambda\}_\lambda$ be respectively the complete sets of non-isomorphic
simple and principal indecomposable $\mathsf{A}$-modules.
Under the regular representation, we have the direct-sum decomposition 
\begin{align} \label{WeddDec}
\mathsf{A} \; \cong \; \bigoplus_\lambda \, ( \dim \mathsf{M}_\lambda  ) \, \mathsf{P}_\lambda .
\end{align}
\end{enumerate}
\end{prop}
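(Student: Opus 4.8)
Looking at the final statement, it's Proposition A (labeled \texttt{RepRecapProp}), which collects four standard facts from representation theory of finite-dimensional algebras, each with an explicit citation to \cite{am}. So the "proof" here is really a matter of organizing citations and giving the short classical arguments, not proving anything novel.

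\textbf{Approach.} The plan is to prove items~\ref{RepRecap1}--\ref{RepRecap4} in order, each time either citing the stated reference or giving the short classical argument, since all four are textbook facts about finite-dimensional associative unital algebras over a field (here $\bC$). The structural backbone is the decomposition of the regular module into principal indecomposables together with properties of the Jacobson radical, so I would set up that backbone once and then harvest each item from it. Because the paper already cites \cite[corollaries~A8, A12, A22 and theorem~A10]{am} inline, the cleanest route is to present each item as a consequence of those references plus the Krull--Schmidt theorem \cite[theorem~A6]{am} and lemma~\ref{SSLem}, filling in only the connective tissue.

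\textbf{Key steps in order.} First, for item~\ref{RepRecap1}: decompose $\mathsf{A} = \bigoplus_\lambda \mathsf{P}_\lambda$ as a left module (this is \eqref{WeddDecPre}); writing $1 = \sum_\lambda e_\lambda$ accordingly gives orthogonal idempotents with $\mathsf{P}_\lambda = \mathsf{A} e_\lambda$, and indecomposability of $\mathsf{P}_\lambda$ forces $e_\lambda$ to be primitive --- conversely any primitive idempotent arises this way after refining a decomposition, so every principal indecomposable has the form $\mathsf{A} e$. Second, for item~\ref{RepRecap2}: a principal indecomposable $\mathsf{P} = \mathsf{A} e$ has $\rad \mathsf{A} \cdot e$ as a proper submodule, and one checks $\mathsf{P}/(\rad \mathsf{A})e$ is simple while $(\rad \mathsf{A})e$ contains every proper submodule (because $\mathrm{End}_\mathsf{A}(\mathsf{P})$ is local, as $\mathsf{P}$ is indecomposable of finite dimension --- Fitting's lemma), giving uniqueness of the maximal proper submodule $\mathsf{N}$ and simplicity of $\mathsf{P}/\mathsf{N}$. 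Third, for item~\ref{RepRecap3}: the map $\mathsf{P} \mapsto \mathsf{P}/\mathsf{N}$ is well-defined on isomorphism classes; surjectivity onto simple modules holds because any simple $\mathsf{M}$ is a quotient of $\mathsf{A}$ hence of some $\mathsf{P}_\lambda$, and $\mathsf{M} \cong \mathsf{P}_\lambda / \mathsf{N}_\lambda$ since the latter is the unique simple quotient; injectivity follows from the fact that $\mathsf{P}_\lambda \cong \mathsf{P}_\mu$ whenever their simple tops agree, again via local endomorphism rings. Fourth, for item~\ref{RepRecap4}: group the summands in \eqref{WeddDecPre} by isomorphism type, so $\mathsf{A} \cong \bigoplus_\lambda m_\lambda \mathsf{P}_\lambda$ with $m_\lambda$ the multiplicity; then $m_\lambda = \dim_\bC \mathrm{Hom}_\mathsf{A}(\mathsf{A}, \mathsf{M}_\lambda)/\!\sim$, more precisely $m_\lambda = \dim \mathsf{M}_\lambda$ by comparing the composition series multiplicity of $\mathsf{M}_\lambda$ in $\mathsf{A}$ with $\mathrm{Hom}$-counting (using that $\mathrm{Hom}_\mathsf{A}(\mathsf{P}_\mu, \mathsf{M}_\lambda)$ is $\bC$ if $\mu=\lambda$ and $0$ otherwise, the key point where algebraic closedness of $\bC$ enters via Schur's lemma giving $\mathrm{End}_\mathsf{A}(\mathsf{M}_\lambda) = \bC$).

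\textbf{Main obstacle.} Honestly, there is no real mathematical obstacle --- the only "hard part" is presentational: deciding how much of these classical arguments to reproduce versus delegate to \cite{am}. Given that the surrounding text already pins each item to a specific corollary in \cite{am}, I would keep the proof short and essentially say that items~\ref{RepRecap1}--\ref{RepRecap4} are \cite[corollary~A8, theorem~A10, corollary~A12, corollary~A22]{am} respectively, perhaps adding a one-line reminder that the multiplicity statement in item~\ref{RepRecap4} uses Schur's lemma over the algebraically closed field $\bC$ so that $\mathrm{End}_{\TL_\multii(\nu)}$ of a simple module is one-dimensional. The risk to guard against is over-claiming: since these are cited verbatim, the "proof" should not pretend to reprove them but should simply assemble the references, which is exactly what the excerpt's style (inline \textnormal{\cite{...}} tags on each item) already signals.
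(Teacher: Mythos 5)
Your proposal is correct and matches the paper exactly: the paper offers no proof of this proposition, presenting each item purely as a citation to \cite{am} (corollary~A8, theorem~A10, corollary~A12, corollary~A22), which is precisely the assemble-the-references route you settle on. Your classical sketches (orthogonal idempotent decomposition of the unit, radical quotient with local endomorphism rings, unique simple top, and the multiplicity count via Schur's lemma over $\bC$) are standard and accurate, but they go beyond what the paper itself records.
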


Item~\ref{RepRecap4} is a generalization of the well-known Wedderburn decomposition for semisimple algebras:
in the semisimple case, all of the modules $\mathsf{P}_\lambda = \mathsf{M}_\lambda$ are simple,
and we have the \emph{sum-of-squares formula}~\cite[corollary~\red{A21}]{am},
\begin{align}
\dim \mathsf{A} = \sum_\lambda \, ( \dim \mathsf{M}_\lambda  )^2 .
\end{align}


\subsection{Simple modules of the valenced Temperley-Lieb algebra} \label{TLSimpleModSect}

Next we determine the complete set of non-isomorphic simple $\TL_\multii(\nu)$-modules (proposition~\ref{SimpleModuleProp}).
All of them are quotients of standard modules $\smash{\LS_{\multii}\super{s}}$, indexed by those $s \in \DefectSet_\multii$
for which the radical of $\smash{\LS_{\multii}\super{s}}$ is not totally degenerate.
We begin with investigating idempotent elements in $\TL_\multii(\nu)$ and the corresponding principal indecomposable modules.

\begin{lem} \label{idempotentFormLem}
Suppose $\max \multii < \ppmin(q)$. Any nonzero idempotent $\idem \in \TL_\multii(\nu)$ has the form
\begin{align} \label{idempotent}
\idem \; = 
\sum_{\alpha , \beta \, \in \, \LP_{\multii}\super{s_\idem}} c_{\alpha,\beta}\super{s_\idem} \BarAction \alpha \quad \beta \BarAction
+ \sum_{\substack{\gamma, \delta  \, \in \, \LP_{\multii}\super{r} \\ r \, < \, s_\idem}} 
c_{\gamma,\delta}\super{r} \BarAction \gamma \quad \delta \BarAction , 
\end{align}
where $s_\idem \in \DefectSet_\multii$ is the largest number such that $\smash{c_{\alpha,\beta}\super{s_\idem}} \neq 0$, 
for some pair of valenced link patterns $\alpha, \beta \in \smash{\LP_\multii\super{s_\idem}}$,
and where the coefficients $\smash{c_{\alpha,\beta}\super{s_\idem}} \in \bC$ satisfy
\begin{align} \label{idempotentCoef}
c_{\alpha,\beta}\super{s_\idem}  \; = \; & \sum_{ \gamma , \delta \, \in \, \LP_{\multii}\super{s_\idem}} 
c_{\alpha,\gamma}\super{s_\idem} c_{\delta,\beta}\super{s_\idem} \BiForm{\gamma}{\delta} ,
\end{align}
for all valenced link patterns $\alpha , \beta \in \smash{\LP_{\multii}\super{s_\idem}}$.
\end{lem}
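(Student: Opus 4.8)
The plan is to exploit the $s$-grading of $\TL_\multii(\nu)$ from~\eqref{WJDirSum} together with the factorization identity~\eqref{RidoutId} of lemma~\ref{RidoutIdLem}, which turns multiplication of diagram generators $\BarAction \alpha \quad \beta \BarAction$ into pairings via the bilinear form. First I would expand an arbitrary nonzero idempotent $\idem \in \TL_\multii(\nu)$ in the link-diagram basis $\LD_\multii = \bigsqcup_{s} \LD_\multii^{\multii;\scaleobj{0.85}{(s)}}$, using lemma~\ref{WJSandwichLem} to write each graded component as a linear combination $\sum_{\alpha,\beta \in \LP_\multii\super{r}} c_{\alpha,\beta}\super{r} \BarAction \alpha \quad \beta \BarAction$. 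Since $\idem \neq 0$, there is a well-defined largest $s_\idem \in \DefectSet_\multii$ for which some coefficient $c_{\alpha,\beta}\super{s_\idem}$ is nonzero; this gives the shape~\eqref{idempotent}, separating the ``top'' term at level $s_\idem$ from the lower-order terms at levels $r < s_\idem$. So far this is just bookkeeping and costs nothing beyond invoking the cited lemmas.

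The heart of the argument is deriving~\eqref{idempotentCoef}, which I would get by computing $\idem^2$ and comparing its level-$s_\idem$ component with that of $\idem$. When multiplying two diagrams $\BarAction \alpha \quad \beta \BarAction$ and $\BarAction \gamma \quad \delta \BarAction$, observation~\eqref{SmallerRAnnihilate} (turn-back links kill lower-degree products against higher-degree link states) ensures that the only contributions to the level-$s_\idem$ part of $\idem^2$ come from the product of the top term of $\idem$ with itself; all cross-terms involving a factor at level $r < s_\idem$ land at level $< s_\idem$. Concretely, for $\alpha,\beta,\gamma,\delta \in \LP_\multii\super{s_\idem}$, lemma~\ref{RidoutIdLem} applied to the ``inner'' pairing gives
\begin{align}
\BarAction \alpha \quad \gamma \BarAction \, \BarAction \delta \quad \beta \BarAction
= \BiForm{\gamma}{\delta} \, \BarAction \alpha \quad \beta \BarAction ,
\end{align}
where I read $\BarAction \delta \quad \beta \BarAction$ as (the embedding into $\TL_\multii$ of) the link state obtained from $\beta$, acted on by the tangle $\BarAction \alpha \quad \gamma \BarAction$; strictly one should phrase this via the action $\lambda_\nu\super{s_\idem}$ and the isomorphism of lemma~\ref{WJSandwichLem}, but the upshot is that the level-$s_\idem$ part of $\idem^2$ equals $\sum_{\alpha,\beta} \bigl( \sum_{\gamma,\delta} c_{\alpha,\gamma}\super{s_\idem} c_{\delta,\beta}\super{s_\idem} \BiForm{\gamma}{\delta} \bigr) \BarAction \alpha \quad \beta \BarAction$. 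Since $\LD_\multii^{\multii;\scaleobj{0.85}{(s_\idem)}}$ is a linearly independent set (part of the basis $\LD_\multii$), matching coefficients in $\idem^2 = \idem$ at level $s_\idem$ yields precisely~\eqref{idempotentCoef}.

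The main obstacle, and the step I would be most careful with, is justifying that \emph{no} lower-degree terms of $\idem$ feed back into level $s_\idem$ under squaring, and that the identification of $\BarAction \alpha \quad \gamma \BarAction \, \BarAction \delta \quad \beta \BarAction$ with $\BiForm{\gamma}{\delta}\BarAction \alpha \quad \beta \BarAction$ is exactly the content of lemma~\ref{RidoutIdLem} rather than a mild variant. For the first point I would argue that in $\mu_\nu(\BarAction \alpha \quad \beta \BarAction, \BarAction \gamma \quad \delta \BarAction)$ with $\beta \in \LP_\multii\super{r}$, $r < s_\idem$, every resulting diagram has at most $\min(r, \deg \gamma) \leq r < s_\idem$ crossing links, so such products contribute only to graded pieces strictly below $s_\idem$; this is precisely the reasoning encoded in~\eqref{SmallerRAnnihilate} and in the remark following corollary~\ref{PreFaithfulCor}. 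For the second point, lemma~\ref{RidoutIdLem} with $\multiii = \multii$ gives $\BarAction \alpha \quad \gamma \BarAction \epsilon = \BiForm{\gamma}{\epsilon} \alpha$ for a link state $\epsilon \in \LS_\multii\super{s_\idem}$; applying this with $\epsilon$ running over the ``right halves'' of the second diagram and extending by bilinearity produces the displayed diagram identity, after which linear independence of the basis closes the argument. No genuinely hard estimate is involved — the difficulty is purely in setting up the grading/annihilation combinatorics cleanly and citing the right prior results.
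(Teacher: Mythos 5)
Your proposal is correct and follows essentially the same route as the paper: expand $\idem$ in the graded diagram basis $\BarAction \alpha \quad \beta \BarAction$, note that a product of two such diagrams has at most $\min(r,r')$ crossing links so only the top term squared can contribute at level $s_\idem$, identify the level-$s_\idem$ part of that product as $\BiForm{\gamma}{\delta}$ times the outer diagram (the paper, like you, adapts the reasoning of lemma~\ref{RidoutIdLem} rather than applying it verbatim, recording the exact product identity with lower-degree remainder as its equation~\eqref{HighestTerms}), and match coefficients via linear independence of the link diagrams. Your caveat that the displayed identity holds only for the level-$s_\idem$ component, with lower-order terms discarded, is exactly how the paper handles it, so there is no gap.
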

\begin{proof}
Similarly as in the proof of proposition~\ref{PreFaithfulProp}, we write the idempotent $\idem$ in the form
\begin{align}  \label{Idemp}
\idem \overset{\eqref{WJDirSum}}{=} \sum_{r \, \in \, \DefectSet_\multii} \sum_{\alpha , \beta \, \in \, \LP_{\multii}\super{r}} 
c_{\alpha,\beta}\super{r} \BarAction \alpha \quad \beta \BarAction ,
\end{align}
for some coefficients $\smash{c_{\alpha,\beta}\super{r}} \in \bC$. 
Then the square of $\idem$ reads
\begin{align} \label{IdempSquared}
\idem^2 \overset{\eqref{Idemp}}{=}
\sum_{r, r' \, \in \, \DefectSet_\multii} \sum_{\alpha , \beta \, \in \, \LP_{\multii}\super{r}} \sum_{\gamma , \delta \, \in \, \LP_{\multii}\super{r'}} 
c_{\alpha,\beta}\super{r}  c_{\gamma,\delta}\super{r'} \BarAction \alpha \quad \beta \BarAction \BarAction \gamma \quad \delta \BarAction .
\end{align}
A tangle $\BarAction \alpha \quad \beta \BarAction \BarAction \gamma \quad \delta \BarAction$ in~\eqref{IdempSquared}
is a linear combination of valenced link diagrams, each of which has a number $s \leq \min(r,r')$ of crossing links.
In particular, valenced link diagrams in~\eqref{IdempSquared} with maximal number $s_\idem$ of crossing links have 
both $r$ and $r'$ equal to $s_\idem$. Similarly as in the proof of lemma~\ref{RidoutIdLem}, 
we see that these diagrams arise from terms of the form 
\begin{align} \label{HighestTerms}
\BarAction \alpha \quad \beta \BarAction \BarAction \gamma \quad \delta \BarAction
= \BiForm{\beta}{\gamma} \BarAction \alpha \quad \delta \BarAction 
+ \sum_{s \, < \, s_\idem} T_{\alpha, \beta, \gamma, \delta}\super{s} ,
\end{align}
for some $\alpha , \beta, \gamma, \delta \in \smash{\LP_{\multii}\super{s_\idem}}$,
where the tangles $\smash{T_{\alpha, \beta, \gamma, \delta}\super{s} \in \TL_\multii^{\multii ;\scaleobj{0.85}{(s)}}(\nu)}$ 
have $s < s_\idem$ crossing links.

On the other hand, by the idempotent property $\idem^2 = \idem$, the terms in~\eqref{Idemp} and~\eqref{IdempSquared} 
with maximal number $s_\idem$ of crossing links must agree.  
These terms give rise to asserted system of equations~\eqref{idempotentCoef}:
\begin{align}
\sum_{\alpha , \beta \, \in \, \LP_{\multii}\super{s_\idem}} c_{\alpha,\beta}\super{s_\idem} \BarAction \alpha \quad \beta \BarAction
 \; \overset{(\ref{Idemp}-\ref{HighestTerms})}{=} 
 & \sum_{\alpha , \beta, \gamma , \delta \, \in \, \LP_{\multii}\super{s_\idem}} c_{\alpha,\beta}\super{s_\idem} c_{\gamma,\delta}\super{s_\idem}
\BiForm{\beta}{\gamma} \BarAction \alpha \quad \delta \BarAction \\
 \; \overset{\hphantom{(\ref{Idemp}-\ref{HighestTerms})}}{=} 
 & \sum_{\alpha , \beta, \gamma , \delta \, \in \, \LP_{\multii}\super{s_\idem}} 
 c_{\alpha,\gamma}\super{s_\idem} c_{\delta,\beta}\super{s_\idem}
\BiForm{\gamma}{\delta} \BarAction \alpha \quad \beta \BarAction ,
\end{align}
from which~\eqref{idempotentCoef} follows
because all of the valenced link diagrams $\BarAction \alpha \quad \beta \BarAction$ are linearly independent.
\end{proof}

To each nonzero idempotent in $\TL_\multii(\nu)$, we associate 
the maximal number $s_\idem \in \DefectSet_\multii$ of crossing links from lemma~\ref{idempotentFormLem}.

\begin{cor} \label{idempotentCor}
Suppose $\max \multii < \ppmin(q)$. If $\idem \in \TL_\multii(\nu)$ is a nonzero idempotent,
then $\smash{\rad \LS_\multii\super{s_\idem} \neq \LS_\multii\super{s_\idem}}$.
\end{cor}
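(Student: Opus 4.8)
The statement follows almost immediately from lemma~\ref{idempotentFormLem} together with the system of equations~\eqref{idempotentCoef} that the top-degree coefficients of any idempotent must satisfy. The plan is to argue by contradiction: suppose that $\smash{\rad \LS_\multii\super{s_\idem} = \LS_\multii\super{s_\idem}}$, i.e. the bilinear form $\BiForm{\cdot}{\cdot}$ vanishes identically on $\smash{\LS_\multii\super{s_\idem}}$, and derive that all the coefficients $\smash{c_{\alpha,\beta}\super{s_\idem}}$ are zero, contradicting the defining property of $s_\idem$ in lemma~\ref{idempotentFormLem} (namely that $\smash{c_{\alpha,\beta}\super{s_\idem}} \neq 0$ for some pair $\alpha, \beta \in \smash{\LP_\multii\super{s_\idem}}$).

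\textbf{Key step.} If the bilinear form on $\smash{\LS_\multii\super{s_\idem}}$ is totally degenerate, then $\BiForm{\gamma}{\delta} = 0$ for every pair of valenced link patterns $\gamma, \delta \in \smash{\LP_\multii\super{s_\idem}}$. Substituting this into the right-hand side of~\eqref{idempotentCoef} gives
\begin{align}
c_{\alpha,\beta}\super{s_\idem} \; = \; \sum_{\gamma, \delta \, \in \, \LP_{\multii}\super{s_\idem}} c_{\alpha,\gamma}\super{s_\idem} c_{\delta,\beta}\super{s_\idem} \BiForm{\gamma}{\delta} \; = \; 0 ,
\end{align}
for all $\alpha, \beta \in \smash{\LP_{\multii}\super{s_\idem}}$. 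This directly contradicts the choice of $s_\idem$ as the largest index for which some top-degree coefficient is nonzero. Hence $\smash{\rad \LS_\multii\super{s_\idem} \neq \LS_\multii\super{s_\idem}}$, which is the assertion.

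\textbf{Remarks on obstacles.} There is essentially no obstacle here: all the genuine work is contained in lemma~\ref{idempotentFormLem}, which already established the form~\eqref{idempotent} of an arbitrary nonzero idempotent and the constraint~\eqref{idempotentCoef} on its leading coefficients. The only thing to be careful about is the standing hypothesis $\max \multii < \ppmin(q)$, which is needed for lemma~\ref{idempotentFormLem} to apply (it guarantees that the bilinear form, the Jones–Wenzl composite embedder, and the valenced standard modules are all well-defined), and which we therefore carry through the statement of the corollary. One could also phrase the proof without contradiction: the relation~\eqref{idempotentCoef} says precisely that $\idem$ would act as zero on $\smash{\LS_\multii\super{s_\idem}}$ in leading degree if the form vanished, but since lemma~\ref{idempotentFormLem} exhibits a nonzero leading coefficient by construction, the form cannot vanish on $\smash{\LS_\multii\super{s_\idem}}$. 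Either formulation is a one-line deduction from the already-proven lemma.
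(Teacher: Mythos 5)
Your argument is correct and is exactly the paper's proof: total degeneracy of the form on $\smash{\LS_\multii\super{s_\idem}}$ makes the right side of~\eqref{idempotentCoef} vanish, forcing all coefficients $\smash{c_{\alpha,\beta}\super{s_\idem}}$ to be zero and contradicting the choice of $s_\idem$ in lemma~\ref{idempotentFormLem}. Nothing further is needed.
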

\begin{proof}
If $\smash{\rad \LS_\multii\super{s_\idem} = \LS_\multii\super{s_\idem}}$, then 
all of the coefficients $\smash{c_{\alpha,\beta}\super{s_\idem}}$ in~\eqref{idempotentCoef} are zero.
This contradicts the choice of $s_\idem$.
\end{proof}

\begin{lem} \label{PreProjModuleLem}
Suppose $\max \multii < \ppmin(q)$. If $\idem \in \TL_\multii(\nu)$ is a nonzero idempotent,
then there exists a non-trivial surjective homomorphism of $\TL_\multii(\nu)$-modules
from $\TL_\multii(\nu) \idem$ onto the standard module $\smash{\LS_{\multii}\super{s_\idem}}$.
\end{lem}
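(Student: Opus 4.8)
The plan is to build the desired surjection explicitly from the ``top-degree'' part of the idempotent $\idem$, using the machinery already developed for proposition~\ref{PreFaithfulProp} and lemma~\ref{RidoutIdLem}. Write $\idem$ in the form~\eqref{idempotent} from lemma~\ref{idempotentFormLem}, with $s := s_\idem$ the maximal number of crossing links, and for each valenced link pattern $\alpha \in \smash{\LP_\multii\super{s}}$ set
\begin{align}
\Sigma_\alpha := \sum_{\beta \, \in \, \LP_\multii\super{s}} c_{\alpha,\beta}\super{s} \beta \; \in \; \LS_\multii\super{s} ,
\end{align}
exactly as in~\eqref{TGammaIsZero1}. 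By the choice of $s$, at least one $\Sigma_\alpha$ is nonzero. First I would define a linear map $\Phi \colon \TL_\multii(\nu) \idem \longrightarrow \LS_\multii\super{s}$ by declaring, for each tangle $T \in \TL_\multii(\nu)$,
\begin{align}
\Phi(T \idem) := T \Big( \sum_{\alpha \, \in \, \LP_\multii\super{s}} (\text{something built from } \Sigma_\alpha \text{ and } \alpha) \Big),
\end{align}
the natural candidate being $v_\idem := \sum_{\alpha} \Sigma_\alpha^{\vee}$-type element, or more simply: fix a pair $(\alpha_0,\beta_0)$ with $c_{\alpha_0,\beta_0}\super{s} \neq 0$ and consider the element $\Sigma_{\alpha_0} \in \LS_\multii\super{s}$, then set $\Phi(T\idem) := T \Sigma_{\alpha_0}$ (up to a normalizing scalar). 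The content is to check this is well-defined, i.e.\ that $T\idem = 0$ implies $T\Sigma_{\alpha_0} = 0$; this is where lemma~\ref{RidoutIdLem} enters, since $\BarAction \alpha \quad \beta \BarAction \gamma = \BiForm{\beta}{\gamma}\alpha$ lets one compute $T\idem$ acting on link states of defect $\leq s$ and extract $\Sigma_{\alpha_0}$ from the defect-$s$ component, just as in the implication \ref{FF2}~$\Rightarrow$~\ref{FF1} of proposition~\ref{PreFaithfulProp}.

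The cleanest route, which I would actually carry out, avoids choosing $\alpha_0$: define $\Phi$ on the whole module by using that $\smash{\LS_\multii\super{s}} \ni \gamma \mapsto \idem \gamma$ already lands in $\idem \LS_\multii\super{s}$ and that, by~\eqref{idempotentCoef} together with lemma~\ref{RidoutIdLem}, the defect-$s$ part of $\idem$ acts on $\smash{\LS_\multii\super{s}}$ as the matrix $E := \big(c_{\alpha,\beta}\super{s}\big)$ composed with the Gram matrix $\smash{\Gram_\multii\super{s}}$, and~\eqref{idempotentCoef} says exactly that $E \Gram_\multii\super{s} E = E$, so $E\Gram_\multii\super{s}$ is idempotent on $\smash{\LS_\multii\super{s}}$ with image $\im E$. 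Then the map $\TL_\multii(\nu)\idem \to \LS_\multii\super{s}$ given by $a \mapsto$ ``image of $a$ under the representation on $\smash{\LS_\multii\super{s}}$ composed with $E\Gram_\multii\super{s}$'' is a $\TL_\multii(\nu)$-module homomorphism (it is the restriction to the submodule $\TL_\multii(\nu)\idem$ of the composite $\rho_s \colon \TL_\multii(\nu) \to \End \LS_\multii\super{s}$ followed by the projection onto $\im(E\Gram_\multii\super{s})$), and it is nonzero because $\idem \mapsto$ its own image under $\rho_s$, which is $E\Gram_\multii\super{s} \neq 0$ by the maximality of $s$ and corollary~\ref{idempotentCor}. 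The key steps in order: (1) rewrite $\idem$ via lemma~\ref{idempotentFormLem}; (2) use lemma~\ref{RidoutIdLem} to identify the action of the defect-$s$ part of $\idem$ on $\smash{\LS_\multii\super{s}}$ with $E\Gram_\multii\super{s}$, and observe $\eqref{idempotentCoef}$ is the relation $E\Gram_\multii\super{s}E = E$; (3) verify $\rho_s|_{\TL_\multii(\nu)\idem}$ composed with the (module-theoretically natural) identification is a nonzero homomorphism into $\smash{\LS_\multii\super{s}}$; (4) upgrade nonzero to surjective.

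For step (4): the image of any nonzero $\TL_\multii(\nu)$-module homomorphism into $\smash{\LS_\multii\super{s}}$ is a nonzero submodule of $\smash{\LS_\multii\super{s}}$, and I would argue it must be all of $\smash{\LS_\multii\super{s}}$ using the cyclicity established inside the proof of proposition~\ref{GenLem2}: since $\smash{\rad \LS_\multii\super{s}} \neq \smash{\LS_\multii\super{s}}$ by corollary~\ref{idempotentCor}, every valenced link state $\gamma \notin \rad \smash{\LS_\multii\super{s}}$ generates $\smash{\LS_\multii\super{s}}$ (equation~\eqref{NiceComp} in that proof shows $\BarAction \alpha \quad \beta \BarAction \gamma$ recovers an arbitrary $\alpha$ when $\BiForm{\beta}{\gamma} = 1$). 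So it suffices to know the image contains one such $\gamma$; if the image were contained in $\rad \smash{\LS_\multii\super{s}}$ it would be a proper submodule, but one checks $\Phi(\idem) = E\Gram_\multii\super{s}$-image is not in the radical precisely because $\Gram_\multii\super{s}$ restricted to $\im E$ is what makes $\eqref{idempotentCoef}$ consistent and nondegenerate on that subspace — more concretely, if $\im(E\Gram_\multii\super{s}) \subset \rad \LS_\multii\super{s}$ then $\Gram_\multii\super{s}E\Gram_\multii\super{s} = 0$, forcing $E\Gram_\multii\super{s}E\Gram_\multii\super{s} = 0 = E\Gram_\multii\super{s}$, a contradiction.

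The main obstacle I anticipate is step (3): making precise the passage from ``$\idem$ is an abstract idempotent in $\TL_\multii(\nu)$'' to ``the representation $\rho_s$ on $\smash{\LS_\multii\super{s}}$ identifies the relevant submodule $\TL_\multii(\nu)\idem$ with something mapping onto $\smash{\LS_\multii\super{s}}$'' — in particular checking that the lower-defect terms of $\idem$ (the $\sum_{r < s_\idem}$ part in~\eqref{idempotent}) act as zero on $\smash{\LS_\multii\super{s}}$, which follows from observation~\eqref{SmallerRAnnihilate} (any tangle with $r < s$ crossing links annihilates $\smash{\LS_\multii\super{s}}$), so in fact $\rho_s(\idem) = \rho_s$ of the defect-$s$ part $= E\Gram_\multii\super{s}$ cleanly; once this is noted, the homomorphism is simply $a\idem \mapsto \rho_s(a\idem)(v)$ for a suitable fixed $v \in \im(E\Gram_\multii\super{s})$ and one only has to confirm well-definedness ($a\idem = 0 \Rightarrow \rho_s(a\idem) = 0$ is automatic since $\rho_s$ is a homomorphism of algebras) and non-triviality (take $v = \rho_s(\idem)w$ with $w$ chosen so that $E\Gram_\multii\super{s}w \neq 0$, possible by the argument in step~(4)).
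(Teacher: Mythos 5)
Your carried-out (``cleanest'') route is correct and is essentially the paper's own proof: the paper also starts from lemma~\ref{idempotentFormLem}, picks the explicit vector $\gamma = \sum_{\eta} c_{\eta,\beta}\super{s_\idem}\, \eta$ (a column of your coefficient matrix, so that $\idem\gamma = \gamma$ is exactly your relation that the coefficient matrix times $\Gram_\multii\super{s_\idem}$ times the coefficient matrix returns the coefficient matrix, evaluated via lemma~\ref{RidoutIdLem} and~\eqref{SmallerRAnnihilate}), defines the map $T\idem \mapsto T\gamma$, and deduces surjectivity from the cyclicity argument in the proof of proposition~\ref{GenLem2}; your observation that the image of the defect-$s_\idem$ action of $\idem$ cannot lie in $\rad\LS_\multii\super{s_\idem}$ plays the role of the paper's explicit pairing $\BiForm{\delta}{\gamma} = c_{\alpha,\beta}\super{s_\idem} \neq 0$. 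Two minor remarks: in your final paragraph the vector must be chosen outside $\rad\LS_\multii\super{s_\idem}$, not merely nonzero (your step (4) does supply exactly this), and your first tentative candidate $\Sigma_{\alpha_0}$ --- a row, rather than a column, of the coefficient matrix --- need not be fixed by $\idem$, so abandoning it for the intrinsic choice was the right call.
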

\begin{proof}
%
We write $\idem$ in the form~\eqref{idempotent} of lemma~\ref{idempotentFormLem},
and choose $\alpha, \beta \in \smash{\LS_\multii\super{s_\idem}}$ such that $\smash{c_{\alpha,\beta}\super{s_\idem}} \neq 0$.
Then we define
\begin{align} \label{Gamma}
\gamma := 
\sum_{\eta \, \in \, \LP_{\multii}\super{s_\idem}} c_{\eta,\beta}\super{s_\idem} \eta 
\quad \in \; \LS_\multii\super{s_\idem} ,
\end{align}
and we note that $\gamma \notin \rad \smash{\LS_\multii\super{s_\idem}}$:
\begin{align}
\delta =
\sum_{\eta \, \in \, \LP_{\multii}\super{s_\idem}} c_{\alpha,\eta}\super{s_\idem} \eta 
\qquad \qquad \Longrightarrow \qquad \qquad 
\BiForm{\delta}{\gamma} = & \;
\sum_{\eta, \eta' \, \in \, \LP_{\multii}\super{s_\idem}} 
c_{\alpha,\eta}\super{s_\idem} c_{\eta',\beta}\super{s_\idem} \BiForm{\eta}{\eta'} 
\overset{\eqref{idempotentCoef}}{=}  c_{\alpha,\beta}\super{s_\idem} \neq 0 .
\end{align}
The next crucial observation is that both $\TL_\multii(\nu)$-modules are cyclic: $\idem$ generates $\TL_\multii(\nu) \idem$ and,
by the proof of proposition~\ref{GenLem2}, the valenced link state $\gamma$
generates $\smash{\LS_\multii\super{s_\idem}}$. 
Hence, our goal is to define a map $\theta \colon \TL_\multii(\nu) \idem \longrightarrow \smash{\LS_{\multii}\super{s_\idem}}$ 
by homomorphic extension of 
its image on the generator tangle $E$:
\begin{align}
\theta \colon \idem \mapsto \gamma \qquad \qquad \Longrightarrow \qquad \qquad \theta (T \idem) := T \theta(\idem) := T \gamma .
\end{align}
By construction, such a map is a homomorphism of $\TL_\multii$-modules from $\TL_\multii(\nu) \idem$ 
to $\smash{\LS_\multii\super{s_\idem}}$.
Furthermore, because the valenced link state $\gamma$ generates $\smash{\LS_\multii\super{s_\idem}}$, 
the map $\theta$ is a surjection. 
However, we need to verify that $\theta$ is well-defined, i.e.,
\begin{align} \label{weldef1}
& T_1 \idem =  T_2 \idem \qquad \text{for some valenced tangles $T_1, T_2 \in \TL_\multii(\nu)$} \\
\qquad \qquad \Longrightarrow \qquad \qquad
& \theta (T_1 \idem ) = \theta (T_2 \idem)
\qquad \Longrightarrow \qquad
T_1 \theta(\idem) = T_2 \theta(\idem)
\qquad \Longrightarrow \qquad
T_1 \gamma = T_2 \gamma .
\end{align}
In other words, $\theta$ is well-defined if and only if 
\begin{align} \label{weldef2}
T \idem = 0  \qquad \qquad \Longrightarrow \qquad \qquad T \gamma = 0  .
\end{align}
%
For this purpose, it suffices to show that $\idem \gamma = \gamma$, because 
\begin{align} \label{weldef3}
\begin{cases} 
\idem \gamma = \gamma \\ 
T \idem = 0 
\end{cases}
\qquad \qquad \Longrightarrow \qquad \qquad T \gamma =  T \idem \gamma = 0 .
\end{align}
From~\eqref{SmallerRAnnihilate} in the proof of proposition~\ref{PreFaithfulProp}, we see that 
only those terms in $\idem$ which have the maximal number $s_\idem$ of crossing links 
may give a nonzero contribution when acting on $\gamma$.
Hence, using lemmas~\ref{RidoutIdLem} and~\ref{idempotentFormLem}, we calculate
\begin{align}
\idem \gamma 
\underset{\eqref{Gamma}}{\overset{\eqref{idempotent}}{=}} & \;
\sum_{\mu, \nu , \eta \, \in \, \LP_{\multii}\super{s_\idem}} c_{\mu,\nu}\super{s_\idem} 
c_{\eta,\beta}\super{s_\idem} \BarAction \mu \quad \nu \BarAction \eta
\overset{\eqref{RidoutId}}{=}
\sum_{\mu, \nu , \eta \, \in \, \LP_{\multii}\super{s_\idem}} 
c_{\mu,\nu}\super{s_\idem} c_{\eta,\beta}\super{s_\idem} \BiForm{\nu}{\eta} \mu 
\overset{\eqref{idempotentCoef}}{=} 
\sum_{\mu \, \in \, \LP_{\multii}\super{s_\idem}} 
c_{\mu,\beta}\super{s_\idem} \mu 
\overset{\eqref{Gamma}}{=} \gamma .
\end{align}
It now follows from~(\ref{weldef1}--\ref{weldef3}) that the map $\theta$ is indeed well-defined, 
which is what we sought to prove.
\end{proof}

Next we find a connection between the principal indecomposable $\TL_\multii(\nu)$-modules and the standard modules.

\begin{lem} \label{ProjModuleLem} 
Suppose $\max \multii < \ppmin(q)$. The following hold for 
any principal indecomposable $\TL_\multii(\nu)$-module $\mathsf{P}$:
\begin{enumerate}
\itemcolor{red}
\item \label{ProjModuleProp1} 
There exists a non-trivial surjective homomorphism of $\TL_\multii(\nu)$-modules from $\mathsf{P}$ 
onto some standard module~$\smash{\LS_{\multii}\super{s}}$.

\item \label{ProjModuleProp2} 
The simple quotient of $\mathsf{P}$ by its maximal proper submodule is isomorphic to $\smash{\Quo_{\multii}\super{s}}$.
\end{enumerate}
%
\end{lem}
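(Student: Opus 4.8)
\textbf{Proof plan for Lemma~\ref{ProjModuleLem}.}
The plan is to obtain item~\ref{ProjModuleProp1} directly from the previous lemmas and then to upgrade it to item~\ref{ProjModuleProp2} using the already-established fact (proposition~\ref{GenLem2}) that $\smash{\rad\LS_\multii\super{s}}$ is the unique maximal proper submodule of $\smash{\LS_\multii\super{s}}$ when it is not totally degenerate. First I would recall from item~\ref{RepRecap1} of proposition~\ref{RepRecapProp} that every principal indecomposable $\TL_\multii(\nu)$-module $\mathsf{P}$ is of the form $\TL_\multii(\nu)\idem$ for some primitive idempotent $\idem \in \TL_\multii(\nu)$. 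In particular $\idem$ is a nonzero idempotent, so lemma~\ref{idempotentFormLem} applies: we may write $\idem$ in the form~\eqref{idempotent} with associated maximal defect number $s := s_\idem \in \DefectSet_\multii$. Then lemma~\ref{PreProjModuleLem} furnishes a nontrivial surjective homomorphism of $\TL_\multii(\nu)$-modules $\theta \colon \mathsf{P} = \TL_\multii(\nu)\idem \twoheadrightarrow \smash{\LS_\multii\super{s}}$. This establishes item~\ref{ProjModuleProp1}; I would emphasize that corollary~\ref{idempotentCor} guarantees $\smash{\rad\LS_\multii\super{s}} \neq \smash{\LS_\multii\super{s}}$, so the target standard module is one whose quotient $\smash{\Quo_\multii\super{s}}$ is a genuine simple module by proposition~\ref{GenLem2}.

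For item~\ref{ProjModuleProp2}, I would compose $\theta$ with the canonical projection $\pi \colon \smash{\LS_\multii\super{s}} \twoheadrightarrow \smash{\LS_\multii\super{s}}/\rad\smash{\LS_\multii\super{s}} = \smash{\Quo_\multii\super{s}}$, obtaining a surjective homomorphism $\pi\circ\theta \colon \mathsf{P} \twoheadrightarrow \smash{\Quo_\multii\super{s}}$ onto a simple module (the target is simple by proposition~\ref{GenLem2}, since $\smash{\rad\LS_\multii\super{s}}\neq\smash{\LS_\multii\super{s}}$ by corollary~\ref{idempotentCor}). Its kernel is a proper submodule of $\mathsf{P}$. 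By item~\ref{RepRecap2} of proposition~\ref{RepRecapProp}, $\mathsf{P}$ has a unique maximal proper submodule $\mathsf{N}$, and any proper submodule is contained in $\mathsf{N}$; in particular $\ker(\pi\circ\theta) \subseteq \mathsf{N}$. Conversely, since $\mathsf{P}/\mathsf{N}$ is simple and $\mathsf{P}/\ker(\pi\circ\theta)\cong\smash{\Quo_\multii\super{s}}$ is a nonzero quotient of $\mathsf{P}$, the composed surjection $\mathsf{P}\to\mathsf{P}/\mathsf{N}$ (which factors through $\mathsf{P}/\ker(\pi\circ\theta)$ only if $\ker(\pi\circ\theta)\subseteq\mathsf{N}$, already known) forces $\mathsf{N} = \ker(\pi\circ\theta)$: indeed $\mathsf{N}\supseteq\ker(\pi\circ\theta)$ together with the simplicity of $\mathsf{P}/\ker(\pi\circ\theta)$ (it has no nonzero proper submodules) means $\mathsf{N}/\ker(\pi\circ\theta)$ is either zero or everything, and it cannot be everything because $\mathsf{N}\subsetneq\mathsf{P}$. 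Hence $\mathsf{P}/\mathsf{N} \cong \mathsf{P}/\ker(\pi\circ\theta) \cong \smash{\Quo_\multii\super{s}}$, which is item~\ref{ProjModuleProp2}.

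The main obstacle, such as it is, is bookkeeping rather than mathematics: one must be careful that the chain lemma~\ref{idempotentFormLem} $\to$ corollary~\ref{idempotentCor} $\to$ lemma~\ref{PreProjModuleLem} $\to$ proposition~\ref{GenLem2} is invoked in the right order so that every hypothesis ($\max\multii < \ppmin(q)$; $\idem$ nonzero idempotent; $\smash{\rad\LS_\multii\super{s_\idem}}\neq\smash{\LS_\multii\super{s_\idem}}$) is in force when it is needed. In particular it is essential to note that the surjection of lemma~\ref{PreProjModuleLem} is \emph{non-trivial} (its image is all of $\smash{\LS_\multii\super{s}}\neq\{0\}$), since otherwise the quotient argument in the second paragraph would collapse. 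No new computation is required; everything follows formally from proposition~\ref{RepRecapProp} and the structural results of section~\ref{LinkStateModSect}.
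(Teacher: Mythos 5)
Your proposal is correct and follows essentially the same route as the paper: item~1 via proposition~\ref{RepRecapProp} (item~\ref{RepRecap1}), lemma~\ref{idempotentFormLem}, corollary~\ref{idempotentCor}, and lemma~\ref{PreProjModuleLem}, and item~2 via proposition~\ref{GenLem2} together with the uniqueness of the maximal proper submodule of $\mathsf{P}$. The only difference is cosmetic: you phrase item~2 through the kernel of $\pi\circ\theta$ and the simplicity of $\mathsf{P}/\ker(\pi\circ\theta)$, whereas the paper takes the pre-image $\theta^{-1}\big(\rad \LS_\multii\super{s}\big)$ (which is the same submodule) and then cites Schur's lemma for the isomorphism of simple quotients.
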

\begin{proof}
By item~\ref{RepRecap1} of proposition~\ref{RepRecapProp}, we have
$\mathsf{P} = \TL_\multii(\nu) \idem$, where $\idem$ is a primitive idempotent.
Thus, lemma~\ref{PreProjModuleLem} gives a non-trivial surjective homomorphism 
$\theta \colon \mathsf{P} \longrightarrow \smash{\LS_{\multii}\super{s}}$ of $\TL_\multii(\nu)$-modules,
where $s = s_\idem$ is given by lemma~\ref{idempotentFormLem}.
This proves item~\ref{ProjModuleProp1}.
Now, corollary~\ref{idempotentCor} and proposition~\ref{GenLem2} imply that 
$\rad \smash{\LS_{\multii}\super{s}}$ is the maximal proper submodule of $\smash{\LS_{\multii}\super{s}}$.
Its pre-image under the homomorphism $\theta$ is the maximal proper submodule $\mathsf{N}$ of $\mathsf{P}$.
Therefore, Schur's lemma shows that the simple quotient modules
$\mathsf{P} / \mathsf{N}$ and $\smash{\Quo_{\multii}\super{s}}$ are isomorphic.
This proves item~\ref{ProjModuleProp2}.
%
%
\end{proof}

We are now ready to conclude with the classification of the simple $\TL_\multii(\nu)$-modules. 
These modules are indexed by those $s \in \DefectSet_\multii$ for which the radical of $\smash{\LS_{\multii}\super{s}}$ is not totally degenerate.
We denote the set of such indices by
\begin{align} \label{WeddDecForTLIndex}
\DefectSet_\multii' := 
\big\{s \in \DefectSet_\multii \, \big| \, \dim \Quo_\multii\super{s} > 0 \big\}
\overset{\eqref{QuoMod}}{=}  \big\{s \in \DefectSet_\multii \, \big| \, \rad \LS_\multii\super{s} \neq \LS_\multii\super{s} \big\} 
\underset{\ref{WholeRadicalImpliesSsmallLem}}{\overset{\text{prop.}}{=}}
\big\{s \in \DefectSet_\multii \, \big| \, q \notin \Tot_\multii\super{s} \big\} .
\end{align}

\begin{prop} \label{SimpleModuleProp}
Suppose $\max \multii < \ppmin(q)$.  The collection 
$\smash{\big\{ \Quo_\multii\super{s} \,\big| \, s \in \DefectSet_{\multii}' \big\}}$
is the complete set of non-isomorphic simple $\TL_\multii(\nu)$-modules.
\end{prop}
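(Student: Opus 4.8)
The plan is to prove two things: first, that each $\Quo_\multii\super{s}$ with $s \in \DefectSet_\multii'$ is simple and that these are pairwise non-isomorphic; and second, that every simple $\TL_\multii(\nu)$-module is isomorphic to one of them. The first half is essentially already done in the excerpt: proposition~\ref{GenLem2} guarantees that for each $s \in \DefectSet_\multii'$ (i.e.\ $\rad \LS_\multii\super{s} \neq \LS_\multii\super{s}$) the quotient $\Quo_\multii\super{s} = \LS_\multii\super{s}/\rad \LS_\multii\super{s}$ is a simple $\TL_\multii(\nu)$-module, and corollary~\ref{nonisoCor2} shows that $\Quo_\multii\super{s} \cong \Quo_\multii\super{r}$ implies $s = r$. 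So the only substantive work is the completeness statement.

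For completeness, I would argue as follows. Let $\mathsf{M}$ be an arbitrary simple $\TL_\multii(\nu)$-module. By item~\ref{RepRecap3} of proposition~\ref{RepRecapProp}, $\mathsf{M}$ is isomorphic to $\mathsf{P}/\mathsf{N}$ for some principal indecomposable module $\mathsf{P}$ (with $\mathsf{N}$ its unique maximal proper submodule). Now apply lemma~\ref{ProjModuleLem}: item~\ref{ProjModuleProp2} tells us that $\mathsf{P}/\mathsf{N} \cong \Quo_\multii\super{s}$ for some $s \in \DefectSet_\multii$, and moreover by item~\ref{ProjModuleProp1} together with corollary~\ref{idempotentCor} (applied to the primitive idempotent $\idem$ with $\mathsf{P} = \TL_\multii(\nu)\idem$), we have $\rad \LS_\multii\super{s} \neq \LS_\multii\super{s}$, i.e.\ $s \in \DefectSet_\multii'$. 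Hence $\mathsf{M} \cong \Quo_\multii\super{s}$ with $s \in \DefectSet_\multii'$, which is what we want. Combining this with the first half finishes the proof.

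I do not anticipate a genuine obstacle here, since the heavy lifting was already carried out in lemmas~\ref{idempotentFormLem}, \ref{PreProjModuleLem}, \ref{ProjModuleLem} and corollary~\ref{idempotentCor}. The one point requiring a little care is making sure the index $s$ produced by lemma~\ref{ProjModuleLem} is the \emph{same} $s$ appearing in corollary~\ref{idempotentCor} — but this is automatic because both are $s_\idem$ attached to the same primitive idempotent $\idem$ with $\mathsf{P} = \TL_\multii(\nu)\idem$, as recorded in the proof of lemma~\ref{ProjModuleLem}. A secondary small point is to observe that the map $s \mapsto \Quo_\multii\super{s}$ restricted to $\DefectSet_\multii'$ is well-defined (each such quotient is nonzero and simple by proposition~\ref{GenLem2}), so the collection is genuinely a set of pairwise non-isomorphic simple modules, and the completeness argument shows nothing is missing. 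I would write the proof in roughly four sentences invoking exactly these references.

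\begin{proof}
By proposition~\ref{GenLem2}, for each $s \in \DefectSet_\multii'$ the module $\smash{\Quo_\multii\super{s}}$ is a nonzero simple $\TL_\multii(\nu)$-module, and by corollary~\ref{nonisoCor2}, distinct values of $s \in \DefectSet_\multii'$ yield non-isomorphic such modules. It remains to show that every simple $\TL_\multii(\nu)$-module is isomorphic to one of these. Let $\mathsf{M}$ be a simple $\TL_\multii(\nu)$-module. By item~\ref{RepRecap3} of proposition~\ref{RepRecapProp}, there is a principal indecomposable $\TL_\multii(\nu)$-module $\mathsf{P}$ with maximal proper submodule $\mathsf{N}$ such that $\mathsf{M} \cong \mathsf{P} / \mathsf{N}$, and by item~\ref{RepRecap1} of the same proposition we may write $\mathsf{P} = \TL_\multii(\nu) \idem$ for a primitive idempotent $\idem \in \TL_\multii(\nu)$. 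Lemma~\ref{ProjModuleLem} then gives $\mathsf{P}/\mathsf{N} \cong \smash{\Quo_\multii\super{s_\idem}}$, where $s_\idem \in \DefectSet_\multii$ is the integer associated to $\idem$ in lemma~\ref{idempotentFormLem}; moreover corollary~\ref{idempotentCor} shows $\smash{\rad \LS_\multii\super{s_\idem}} \neq \smash{\LS_\multii\super{s_\idem}}$, so that $s_\idem \in \DefectSet_\multii'$ by~\eqref{WeddDecForTLIndex}. Hence $\mathsf{M} \cong \smash{\Quo_\multii\super{s_\idem}}$ with $s_\idem \in \DefectSet_\multii'$, which completes the proof.
\end{proof}
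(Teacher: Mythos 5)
Your proposal is correct and follows essentially the same route as the paper: simplicity and pairwise non-isomorphism via proposition~\ref{GenLem2} and corollary~\ref{nonisoCor2}, and completeness via the correspondence of item~\ref{RepRecap3} of proposition~\ref{RepRecapProp} combined with lemma~\ref{ProjModuleLem}. Your only addition is to cite corollary~\ref{idempotentCor} explicitly to place $s_\idem$ in $\DefectSet_\multii'$, a point the paper leaves implicit inside the proof of lemma~\ref{ProjModuleLem}.
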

\begin{proof}
By item~\ref{RepRecap3} of proposition~\ref{RepRecapProp}, the non-isomorphic 
simple and principal indecomposable $\TL_\multii(\nu)$-modules are in one-to-one correspondence with each other.
Therefore, 
item~\ref{ProjModuleProp2} of lemma~\ref{ProjModuleLem} shows that
the collection $\smash{\big\{ \Quo_\multii\super{s} \,\big| \, s \in \DefectSet_{\multii}' \big\}}$ 
contains all of the simple $\TL_\multii(\nu)$-modules.
On the other hand, proposition~\ref{GenLem2} and
corollary~\ref{nonisoCor2} show that all of these $\TL_\multii(\nu)$-modules are simple and non-isomorphic.
This concludes the proof.
\end{proof}

With proposition~\ref{SimpleModuleProp}, item~\ref{RepRecap4} of proposition~\ref{RepRecapProp} gives a direct-sum decomposition 
for 
$\TL_\multii(\nu)$ under the regular representation,
in terms of the complete sets of non-isomorphic simple and principal indecomposable $\TL_\multii(\nu)$-modules.

\begin{cor} \label{WeddDecForTLCor}
Suppose $\max \multii < \ppmin(q)$.  We have the direct-sum decomposition of $\TL_\multii(\nu)$-modules
\begin{align} \label{WeddDecForTL}
\TL_\multii(\nu) \; \cong \; 
\bigoplus_{s \, \in \, \DefectSet_\multii'} ( \dim \Quo_\multii\super{s}  ) \, \mathsf{P}_\multii\super{s} ,
\end{align}
where $\smash{\big\{ \mathsf{P}_\multii\super{s} \,\big| \, s \in \DefectSet_{\multii}' \big\}}$
is the complete set of non-isomorphic principal indecomposable $\TL_\multii(\nu)$-modules.
\end{cor}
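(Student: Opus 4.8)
The final statement to prove is Corollary~\ref{WeddDecForTLCor}, the direct-sum decomposition of $\TL_\multii(\nu)$ under the regular representation. The plan is to deduce it as a direct application of item~\ref{RepRecap4} of proposition~\ref{RepRecapProp} (the generalized Wedderburn decomposition for an arbitrary finite-dimensional associative unital $\bC$-algebra), combined with the classification of the simple $\TL_\multii(\nu)$-modules established just above in proposition~\ref{SimpleModuleProp}. Since the valenced Temperley-Lieb algebra $\TL_\multii(\nu)$ is finite-dimensional (its dimension is given by~\eqref{Dim78}), associative, and unital, it is precisely the kind of algebra $\mathsf{A}$ to which proposition~\ref{RepRecapProp} applies.

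First I would invoke proposition~\ref{SimpleModuleProp}: under the hypothesis $\max \multii < \ppmin(q)$, the collection $\smash{\big\{ \Quo_\multii\super{s} \,\big| \, s \in \DefectSet_{\multii}' \big\}}$ is the complete set of non-isomorphic simple $\TL_\multii(\nu)$-modules, where $\DefectSet_\multii'$ is the index set defined in~\eqref{WeddDecForTLIndex}. Next, by item~\ref{RepRecap3} of proposition~\ref{RepRecapProp}, the non-isomorphic principal indecomposable modules are in one-to-one correspondence with the non-isomorphic simple modules via $\mathsf{P} \leftrightarrow \mathsf{P}/\mathsf{N}$; combined with item~\ref{ProjModuleProp2} of lemma~\ref{ProjModuleLem}, this means the complete set of non-isomorphic principal indecomposable $\TL_\multii(\nu)$-modules is also indexed by $\DefectSet_\multii'$, and I would simply name its members $\smash{\{\mathsf{P}_\multii\super{s} \mid s \in \DefectSet_\multii'\}}$, with the labeling chosen so that $\mathsf{P}_\multii\super{s}/\mathsf{N}_\multii\super{s} \cong \Quo_\multii\super{s}$. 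Finally, substituting this indexing into the direct-sum decomposition~\eqref{WeddDec} of item~\ref{RepRecap4} of proposition~\ref{RepRecapProp}—which states $\mathsf{A} \cong \bigoplus_\lambda (\dim \mathsf{M}_\lambda)\,\mathsf{P}_\lambda$ over the complete set of simple modules $\{\mathsf{M}_\lambda\}$ with $\mathsf{M}_\lambda \leftrightarrow \mathsf{P}_\lambda$—yields exactly formula~\eqref{WeddDecForTL}, since $\dim \mathsf{M}_\lambda = \dim \Quo_\multii\super{s}$ for $\lambda$ corresponding to $s$.

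There is essentially no obstacle here: the corollary is a bookkeeping consequence of results already proven. The only point requiring a moment's care is the consistency of the indexing—one must check that the same index set $\DefectSet_\multii'$ parametrizes both the simple modules (directly from proposition~\ref{SimpleModuleProp}) and the principal indecomposables (via the bijection of item~\ref{RepRecap3} together with lemma~\ref{ProjModuleLem}), and that the multiplicity attached to $\mathsf{P}_\multii\super{s}$ in~\eqref{WeddDec} is the dimension of the corresponding simple module $\Quo_\multii\super{s}$, not of $\mathsf{P}_\multii\super{s}$ itself. Once this identification is spelled out, the proof is a single sentence: "This follows from item~\ref{RepRecap4} of proposition~\ref{RepRecapProp} together with proposition~\ref{SimpleModuleProp} and item~\ref{ProjModuleProp2} of lemma~\ref{ProjModuleLem}." If desired, one could add the remark that the summands with $s \in \DefectSet_\multii \setminus \DefectSet_\multii'$ (i.e., those $s$ with $q \in \Tot_\multii\super{s}$) simply do not contribute because $\dim \Quo_\multii\super{s} = 0$ there, so the sum is genuinely over $\DefectSet_\multii'$.
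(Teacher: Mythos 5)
Your proposal is correct and follows essentially the same route as the paper: invoke item~\ref{RepRecap4} of proposition~\ref{RepRecapProp} for the general decomposition, then use proposition~\ref{SimpleModuleProp} to identify the simple modules as $\smash{\{\Quo_\multii\super{s} \mid s \in \DefectSet_\multii'\}}$ and item~\ref{RepRecap3} to transfer the index set to the principal indecomposables. Your extra appeal to item~\ref{ProjModuleProp2} of lemma~\ref{ProjModuleLem} to pin down the labeling is a harmless elaboration of the same bookkeeping the paper performs.
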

\begin{proof}
Item~\ref{RepRecap4} of proposition~\ref{RepRecapProp} gives a direct-sum decomposition for $\TL_\multii(\nu)$ as 
a $\TL_\multii(\nu)$-module:
with $\{\mathsf{M}_\lambda\}_\lambda$ and $\{\mathsf{P}_\lambda\}_\lambda$ respectively the complete sets of non-isomorphic
simple and principal indecomposable $\TL_\multii(\nu)$-modules, we have
\begin{align} \label{WeddDecForTLPre}
\TL_\multii(\nu) \; \cong \; \bigoplus_\lambda \, ( \dim \mathsf{M}_\lambda  ) \, \mathsf{P}_\lambda .
\end{align}
Proposition~\ref{SimpleModuleProp}  now says that 
$\{\mathsf{M}_\lambda\}_\lambda = \smash{\big\{ \Quo_\multii\super{s} \,\big| \, s \in \DefectSet_{\multii}' \big\}}$,
and item~\ref{RepRecap3} of proposition~\ref{RepRecapProp} shows that the principal indecomposable modules 
share the same index set. This concludes the proof.
\end{proof}

\subsection{Semisimplicity of the valenced Temperley-Lieb algebra}\label{SemiSect}

Now we give several equivalent criteria for the valenced Temperley-Lieb algebra to be semisimple.

\begin{theorem} \label{BigSSTHM}
Suppose $\max \multii < \ppmin(q)$. The following statements are equivalent:
\begin{enumerate}
\itemcolor{red}
\item \label{SSitem0}
The valenced Temperley-Lieb algebra $\TL_\multii(\nu)$ is semisimple, i.e., $\rad\TL_\multii(\nu) = \{0\}$.

%

\item \label{SSitem4}
We have $\rad \LS_\multii = \{0\}$.

\item \label{SSitem5}
The link state representation induced by the action of $\TL_\multii (\nu)$ on $\LS_\multii$ is faithful.

\item \label{SSitem6}
The link state representation induces an isomorphism of algebras from $\TL_\multii (\nu)$ to 
$\smash{\underset{s \, \in \, \DefectSet_\multii}{\bigoplus} \End \LS_\multii\super{s}}$.

\item \label{SSitem2}
The collection $\smash{\big\{ \LS_\multii\super{s} \,\big| \, s \in \DefectSet_\multii \big\}}$ 
is the complete set of non-isomorphic simple $\TL_\multii(\nu)$-modules.

\item \label{SSitem3}
We have $q \in \Dom_\multii$.
\end{enumerate}
\end{theorem}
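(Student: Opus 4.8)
The plan is to prove the chain of equivalences by establishing a cycle, using the results already assembled in sections~\ref{StdModulesSect}--\ref{FinalResultSect}. The logical backbone will be \ref{SSitem3} $\Rightarrow$ \ref{SSitem4} $\Rightarrow$ \ref{SSitem5} $\Rightarrow$ \ref{SSitem6} $\Rightarrow$ \ref{SSitem2} $\Rightarrow$ \ref{SSitem0} $\Rightarrow$ \ref{SSitem4}, with a few of these arrows being essentially immediate restatements of prior results. The implication \ref{SSitem3} $\Rightarrow$ \ref{SSitem4} is exactly corollary~\ref{RadicalCor4} (``$\rad \LS_\multii = \{0\}$ iff $q \in \Dom_\multii$''), which in fact gives \ref{SSitem3} $\Leftrightarrow$ \ref{SSitem4} outright. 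Likewise \ref{SSitem4} $\Leftrightarrow$ \ref{SSitem5} is precisely corollary~\ref{PreFaithfulCor}. So the real content lies in connecting faithfulness and nondegeneracy to the structural statements \ref{SSitem6}, \ref{SSitem2}, and \ref{SSitem0}.

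First I would handle \ref{SSitem4} $\Rightarrow$ \ref{SSitem6}. Assuming $\rad \LS_\multii = \{0\}$, decomposition~\eqref{RadDirSum} forces $\rad \smash{\LS_\multii\super{s}} = \{0\}$ for every $s \in \DefectSet_\multii$, so by proposition~\ref{GenLem2} each standard module $\smash{\LS_\multii\super{s}}$ is simple, and by corollary~\ref{nonisoCor2} they are pairwise non-isomorphic. The link state representation thus maps $\TL_\multii(\nu)$ into $\bigoplus_s \End \LS_\multii\super{s}$; it is injective by corollary~\ref{PreFaithfulCor} (faithfulness, which follows from \ref{SSitem4}). For surjectivity, I would compare dimensions: by~\eqref{Dim78} we have $\dim \TL_\multii(\nu) = \sum_{s \in \DefectSet_\multii} (\dim \LS_\multii\super{s})^2 = \sum_s \dim \End \LS_\multii\super{s}$, so the injective algebra homomorphism between spaces of equal finite dimension is an isomorphism. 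This simultaneously proves \ref{SSitem6} and, reading off from it, \ref{SSitem2}, since a direct sum of matrix algebras $\End V_s$ over non-isomorphic simple modules $V_s$ has exactly the $V_s$ as its complete set of simple modules; combined with proposition~\ref{SimpleModuleProp} (which identifies the simples as the nontrivial $\smash{\Quo_\multii\super{s}}$), nondegeneracy of every standard module forces $\DefectSet_\multii' = \DefectSet_\multii$ and $\smash{\Quo_\multii\super{s}} = \smash{\LS_\multii\super{s}}$.

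The remaining arrows close the cycle through the Jacobson radical. For \ref{SSitem6} $\Rightarrow$ \ref{SSitem0}: an algebra isomorphic to a finite direct sum of full matrix algebras $\End \LS_\multii\super{s}$ is semisimple (its Jacobson radical, the intersection of annihilators of simples, is zero). Alternatively, \ref{SSitem2} $\Rightarrow$ \ref{SSitem0} follows from the sum-of-squares identity: if the standard modules constitute all simple modules, then by item~\ref{RepRecap4} of proposition~\ref{RepRecapProp} and~\eqref{Dim78}, $\dim \TL_\multii(\nu) = \sum_s (\dim \LS_\multii\super{s})^2 = \sum_s (\dim \mathsf{M}_s)^2 \le \dim \TL_\multii(\nu)$ with equality iff each principal indecomposable $\mathsf{P}_s$ equals $\mathsf{M}_s$, which by lemma~\ref{SSLem}\eqref{SS1} is semisimplicity. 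Finally \ref{SSitem0} $\Rightarrow$ \ref{SSitem4}: if $\TL_\multii(\nu)$ is semisimple, then by lemma~\ref{SSLem}\eqref{SS4} the link state module $\LS_\multii$ is a semisimple module; its radical $\rad \LS_\multii$ is a $\TL_\multii(\nu)$-submodule (by invariance~\eqref{InvarProp}), hence a direct summand, and I would argue it must be zero because otherwise it would contain a simple submodule, which by proposition~\ref{SimpleModuleProp} is some $\smash{\Quo_\multii\super{r}}$; but a nonzero element of $\rad \smash{\LS_\multii\super{r}}$ generating such a copy would contradict the fact that $\smash{\Quo_\multii\super{r}}$ is a \emph{quotient} of $\smash{\LS_\multii\super{r}}$ appearing with the radical as kernel --- more cleanly, one uses that $\rad \LS_\multii$ is killed by the bilinear form while every simple $\smash{\Quo_\multii\super{r}}$ embeds as a standard module on which the form is nondegenerate, so $\rad \smash{\LS_\multii\super{r}} = \{0\}$ for all $r$, giving $\rad \LS_\multii = \{0\}$.

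I expect the main obstacle to be the careful bookkeeping in the \ref{SSitem0} $\Rightarrow$ \ref{SSitem4} step: one must relate the abstract semisimplicity of the algebra back to the concrete bilinear form, and the cleanest route is probably to invoke proposition~\ref{Jacobsonprop} (mentioned in the organization section as identifying $\rad\TL_\multii(\nu)$ with the kernel of the action on $\LS_\multii / \rad \LS_\multii$), from which $\rad\TL_\multii(\nu) = \{0\}$ together with faithfulness of the action on the \emph{quotient} would force $\rad \LS_\multii = \{0\}$ only after checking that the kernel of the action on $\LS_\multii$ itself is contained in $\rad\TL_\multii(\nu)$ --- a point that needs the structure of proposition~\ref{GenLem2} and the orthogonality of standard modules. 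If proposition~\ref{Jacobsonprop} is available in the form stated, this step becomes short; otherwise I would prove directly that the annihilator of $\bigoplus_s \smash{\Quo_\multii\super{s}}$ equals $\rad\TL_\multii(\nu)$ via~\eqref{Jacobson} and proposition~\ref{SimpleModuleProp}, and then use faithfulness on $\LS_\multii$ (equivalently on the standard modules when their radicals vanish) to conclude.
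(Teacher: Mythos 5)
Most of your chain coincides with the paper's own proof: \ref{SSitem3} $\Leftrightarrow$ \ref{SSitem4} via corollary~\ref{RadicalCor4}, \ref{SSitem4} $\Leftrightarrow$ \ref{SSitem5} via corollary~\ref{PreFaithfulCor}, the dimension count~\eqref{Dim78} for the isomorphism onto $\bigoplus_s \End \LS_\multii\super{s}$, and your \ref{SSitem2} $\Rightarrow$ \ref{SSitem0} via the one-to-one correspondence between simples and principal indecomposables plus the sum-of-squares identity is the same argument the paper runs through corollary~\ref{WeddDecForTLCor} and lemma~\ref{ProjModuleLem}. Those steps are fine.

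The genuine gap is in your primary argument for \ref{SSitem0} $\Rightarrow$ \ref{SSitem4}. You argue that a nonzero $\rad \LS_\multii$, being a submodule of a semisimple module, contains a simple submodule isomorphic to some $\smash{\Quo_\multii\super{r}}$, and you claim a contradiction because ``every simple $\smash{\Quo_\multii\super{r}}$ embeds as a standard module on which the form is nondegenerate.'' That is circular: $\smash{\Quo_\multii\super{r}} \cong \smash{\LS_\multii\super{r}}$ only when $\rad \smash{\LS_\multii\super{r}} = \{0\}$, which is what you are trying to prove; and there is no contradiction in an abstract copy of $\smash{\Quo_\multii\super{r}}$ sitting inside a radical and carrying the zero bilinear form, since the zero form is also invariant and forms are not transported canonically along module isomorphisms. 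What your sketch does prove (via semisimplicity plus indecomposability from proposition~\ref{GenLem2}) is that $\rad \smash{\LS_\multii\super{s}} = \{0\}$ whenever $s \in \smash{\DefectSet_\multii'}$, i.e.\ whenever the form on $\smash{\LS_\multii\super{s}}$ is not totally degenerate; but the totally degenerate case $\rad \smash{\LS_\multii\super{s}} = \smash{\LS_\multii\super{s}}$ (which does occur, cf.\ section~\ref{rofSect32} and~\eqref{example0}) is exactly where propositions~\ref{GenLem2},~\ref{HomLem2} and corollary~\ref{nonisoCor2} give nothing, and your argument is silent there. The paper closes this implication instead by a global dimension count: decomposition~\eqref{WeddDecForTLSS} over $\DefectSet_\multii'$ together with $\dim \smash{\Quo_\multii\super{s}} \leq \dim \smash{\LS_\multii\super{s}}$ and~\eqref{Dim78} forces $\dim \rad \smash{\LS_\multii\super{s}} = 0$ for all $s \in \DefectSet_\multii$, totally degenerate indices included.

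Your fallback route can be made to work and would repair the step: the kernel of the action on $\LS_\multii$ is trivially contained in the kernel of the action on $\LS_\multii / \rad \LS_\multii$ (anything killing $\LS_\multii$ kills its quotient), the latter kernel equals $\rad \TL_\multii(\nu)$ by proposition~\ref{Jacobsonprop} (whose proof uses only proposition~\ref{SimpleModuleProp} and~\eqref{Jacobson}, so no circularity even though it appears after the theorem), and semisimplicity then gives faithfulness on $\LS_\multii$, i.e.\ item~\ref{SSitem5}, hence item~\ref{SSitem4}. Note that this containment is immediate and does not ``need the structure of proposition~\ref{GenLem2} and the orthogonality of standard modules'' as you suggest; but as written you leave this route as a contingency rather than carrying it out, so the implication \ref{SSitem0} $\Rightarrow$ \ref{SSitem4} is not yet established in your proposal.
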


\begin{proof}
We prove the equivalences as follows:
\begin{enumerate}[leftmargin=3.5em]
\item[\ref{SSitem4} $\Leftrightarrow$ \ref{SSitem5}:] This is the content of corollary~\ref{PreFaithfulCor}.

\item[\ref{SSitem5} $\Leftrightarrow$ \ref{SSitem6}:] 
By definition~\eqref{LSDirSum2}, $\LS_\multii$ is the direct sum of its submodules $\smash{\LS_\multii\super{s}}$.
Because each of these submodules is closed under the $\TL_\multii(\nu)$-action, 
the image of the link state representation is contained in 
$\smash{\underset{s \, \in \, \DefectSet_\multii}{\bigoplus} \End \LS_\multii\super{s}}$.
Also, 
\begin{align} \label{SumSquaresEnd}
\dim \TL_\multii(\nu) 
\overset{\eqref{Dim56}}{=} \sum_{s \, \in \, \DefectSet_\multii} \big( \dim \LS_\multii\super{s} \big)^2 
= \dim \Big( \bigoplus_{s \, \in \, \DefectSet_\multii} \End \LS_\multii\super{s} \Big) ,
\end{align}
by corollary~\ref{WJDimLem1} with $\multiii = \multii$.
Therefore, by the dimension theorem, we have \ref{SSitem5} $\Leftrightarrow$ \ref{SSitem6}.

\item[\ref{SSitem4} $\Leftrightarrow$ \ref{SSitem3}:] 
This is the content of corollary~\ref{RadicalCor4}.
\end{enumerate}
Thus, items~\ref{SSitem4},~\ref{SSitem5},~\ref{SSitem6}, and~\ref{SSitem3} are equivalent.
We then prove the remaining equivalences:
\begin{enumerate}[leftmargin=3.5em]
\item[\ref{SSitem4} $\Rightarrow$ \ref{SSitem2}:]
Suppose $\rad \LS_\multii = \{0\}$.  
Then by~(\ref{RadDirSum},~\ref{WeddDecForTLIndex}), we have $\DefectSet_{\multii}' = \DefectSet_{\multii}$ and
$\smash{\LS_\multii\super{s} = \Quo_\multii\super{s}}$ for all $s \in \DefectSet_{\multii}$,
so $\smash{\big\{ \LS_\multii\super{s} \,\big| \, s \in \DefectSet_\multii \big\}}$ 
is the complete set of non-isomorphic simple $\TL_\multii(\nu)$-modules by proposition~\ref{SimpleModuleProp}.
Hence, we have \ref{SSitem4} $\Rightarrow$ \ref{SSitem2}.

\item[\ref{SSitem2} $\Rightarrow$ \ref{SSitem0}:]
Suppose $\smash{\big\{ \LS_\multii\super{s} \,\big| \, s \in \DefectSet_\multii \big\}}$ 
are all of the non-isomorphic simple $\TL_\multii(\nu)$-modules. 
Then, proposition~\ref{SimpleModuleProp} shows that $\DefectSet_{\multii}' = \DefectSet_{\multii}$,
and definition~\eqref{WeddDecForTLIndex} implies that
$\smash{\LS_\multii\super{s} = \Quo_\multii\super{s}}$ for all $s \in \DefectSet_{\multii}$. Thus, 
corollary~\ref{WeddDecForTLCor} gives 
\begin{align} \label{DSD}
\TL_\multii(\nu) \; \overset{\eqref{WeddDecForTL}}{\cong} \; 
\bigoplus_{s \, \in \, \DefectSet_\multii} ( \dim \LS_\multii\super{s}  ) \, \mathsf{P}_\multii\super{s}  .
\end{align}
On the other hand, by item~\ref{ProjModuleProp1} of lemma~\ref{ProjModuleLem},
we have $\smash{\dim \LS_\multii\super{s} \leq \dim \mathsf{P}_\multii\super{s}}$,
and combining this with~(\ref{SumSquaresEnd},~\ref{DSD}), we have  
$\smash{\dim \LS_\multii\super{s} =\dim \mathsf{P}_\multii\super{s}}$.
Therefore, by item~\ref{ProjModuleProp1} of lemma~\ref{ProjModuleLem} and the dimension theorem, we have
$\smash{\mathsf{P}_\multii\super{s}} \cong \smash{\LS_\multii\super{s}}$ for all $s \in \DefectSet_\multii$.
Hence, all of the principal indecomposable $\TL_\multii(\nu)$-modules are simple. 
Therefore, $\TL_\multii(\nu)$ is semisimple by item~\ref{SS1} of lemma~\ref{SSLem},
so we have \ref{SSitem2} $\Rightarrow$ \ref{SSitem0}.

\item[\ref{SSitem0} $\Rightarrow$ \ref{SSitem4}:] 
Suppose $\TL_\multii(\nu)$ is semisimple. Then item~\ref{SS1} of lemma~\ref{SSLem}
and item~\ref{ProjModuleProp2} of lemma~\ref{ProjModuleLem} together imply that we have
$\smash{\mathsf{P}_\multii\super{s} = \Quo_\multii\super{s}}$ for all $s \in \DefectSet_\multii'$.
Thus, corollary~\ref{WeddDecForTLCor} gives the direct-sum decomposition
\begin{align}  \label{WeddDecForTLSS}
\TL_\multii(\nu) \; \overset{\eqref{WeddDecForTL}}{\cong} \; 
\bigoplus_{s \, \in \, \DefectSet_\multii'} ( \dim \Quo_\multii\super{s}  ) \,  \Quo_\multii\super{s} .
\end{align}
On the other hand, we have
\begin{align} \label{SumSquares} 
\dim \TL_\multii(\nu)  
\overset{\eqref{WeddDecForTLSS}}{=}
\sum_{s \, \in \, \DefectSet_\multii'} \big( \dim \Quo_\multii\super{s} \big)^2 
\overset{\eqref{WeddDecForTLIndex}}{=} 
\sum_{s \, \in \, \DefectSet_\multii} \big( \dim \Quo_\multii\super{s} \big)^2 
\overset{\eqref{QuoMod}}{\leq} 
\sum_{s \, \in \, \DefectSet_\multii} \big( \dim \LS_\multii\super{s} \big)^2 
\overset{\eqref{Dim56}}{=} 
\dim \TL_\multii(\nu) .
\end{align}
Therefore, by~\eqref{SumSquares}, we have 
$\smash{\dim \Quo_\multii\super{s} = \dim \LS_\multii\super{s}}$, so 
$\smash{\dim \rad \LS_\multii\super{s}} = 0$, for all $s \, \in \, \DefectSet_\multii$.
This with~\eqref{RadDirSum} implies that $\smash{\rad \LS_\multii} = \{0\}$.
Hence, we have \ref{SSitem0} $\Rightarrow$ \ref{SSitem4}.

\end{enumerate}
This proves the asserted equivalences.
\end{proof}

In the special case that $\multii = \OneVec{n}$, for some $n \in \bZnn$,
theorem~\ref{BigSSTHM} gives equivalent criteria for the Temperley-Lieb algebra $\TL_n(\nu)$ to be semisimple,
many of which are already well-known~\cite{gl2,rsa}. For instance, we have:

\begin{cor} \label{TLSemiSimpCor} 
The Temperley-Lieb algebra $\TL_n(\nu)$ is semisimple if and only if 
\begin{align}  \label{TLSemiSimpQ}
q \in \Dom_n = \bigcap_{s \, \in \, \DefectSet_n} \Dom_n\super{s} 
= \big\{ q \in \bC^\times \,\big|\, \textnormal{either $n < \ppmin(q),$ or if $n$ is odd, $q = \pm \ii$} \big\}.
\end{align}
Lemma~\ref{qtonulem} phrases this condition~\eqref{TLSemiSimpQ} in terms of $\nu$. 
\end{cor}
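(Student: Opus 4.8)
The plan is to obtain Corollary~\ref{TLSemiSimpCor} as the special case $\multii = \OneVec{n}$ of theorem~\ref{BigSSTHM}, which already establishes all of the needed equivalences. First I would note that the hypothesis of theorem~\ref{BigSSTHM}, namely $\max \multii < \ppmin(q)$, is automatically satisfied when $\multii = \OneVec{n}$ for \emph{any} $q \in \bC^\times$, since $\max \OneVec{n} = 1 < \ppmin(q)$ always holds by~\eqref{MinPower} (indeed $\ppmin(q) \geq 2$ for all $q \in \bC^\times$). Thus theorem~\ref{BigSSTHM} applies with no restriction on $q$, and the equivalence of its items~\ref{SSitem0} and~\ref{SSitem3} immediately gives that $\TL_n(\nu)$ is semisimple if and only if $q \in \Dom_n$.

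Next I would unpack what the set $\Dom_n$ is in this case. By definition~\eqref{Dommultii}, we have $\Dom_{\OneVec{n}}\super{s} = \Dom_{\Summed_{\OneVec{n}}}\super{s} = \Dom_n\super{s}$ and $\Dom_{\OneVec{n}} = \bigcap_{s \in \DefectSet_n} \Dom_n\super{s}$, which is exactly~\eqref{Domn2}, so the first two equalities in~\eqref{TLSemiSimpQ} hold by definition. The third equality --- the explicit description $\Dom_n = \{ q \in \bC^\times \mid n < \ppmin(q), \text{ or } n \text{ odd and } q = \pm\ii \}$ --- is the content of~\eqref{Domn2}, which is recorded (and was established via the analysis surrounding lemma~\ref{ContainLem} and figure~\ref{GridFigure}) earlier in the paper; alternatively it follows from lemma~\ref{ContainLem} applied with $\multii = \OneVec{n}$, whose right side~\eqref{Contain} reads ``either $n < \ppmin(q)$, or $n$ odd and $\OneVec{n} = \OneVec{n}$ and $q = \pm\ii$,'' the middle condition being vacuous. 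So I would simply cite~\eqref{Domn2} (equivalently lemma~\ref{ContainLem}) for this identity.

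Finally I would close by pointing to lemma~\ref{qtonulem}, whose items~\ref{AltItem1} and~\ref{AltItem2} translate the two conditions ``$n < \ppmin(q)$'' and ``$q = \pm\ii$'' into the statements ``$\nu^2 \neq 4\cos^2(\pi p'/p)$ for all coprime $p', p \in \bZpos$ with $0 < p' < p \leq n$'' and ``$\nu = 0$'' respectively, giving the promised reformulation in terms of the fugacity $\nu$.

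There is essentially no obstacle here: the corollary is a pure specialization plus bookkeeping, and every ingredient (theorem~\ref{BigSSTHM}, the definitions~\eqref{Dommultii} and~\eqref{Domn2}, lemma~\ref{ContainLem}, lemma~\ref{qtonulem}) is already in place. The only point requiring a moment's care is the trivial-but-essential observation that $\max \OneVec{n} = 1 < \ppmin(q)$ unconditionally, so that the standing hypothesis of theorem~\ref{BigSSTHM} imposes no constraint in the $\multii = \OneVec{n}$ case; I would state this explicitly to make the corollary's lack of hypotheses on $q$ clear.

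\begin{proof}
Take $\multii = \OneVec{n}$. Since $\max \OneVec{n} = 1$ and $\ppmin(q) \geq 2$ for every $q \in \bC^\times$ by~\eqref{MinPower}, the hypothesis $\max \multii < \ppmin(q)$ of theorem~\ref{BigSSTHM} holds for all $q \in \bC^\times$, so theorem~\ref{BigSSTHM} applies without restriction. The equivalence of items~\ref{SSitem0} and~\ref{SSitem3} in that theorem states that $\TL_n(\nu)$ is semisimple if and only if $q \in \Dom_{\OneVec{n}}$. By definition~\eqref{Dommultii}, $\Dom_{\OneVec{n}}\super{s} = \Dom_n\super{s}$ and $\Dom_{\OneVec{n}} = \bigcap_{s \, \in \, \DefectSet_n} \Dom_n\super{s}$, which together with~\eqref{Domn2} gives
\begin{align}
\Dom_n = \bigcap_{s \, \in \, \DefectSet_n} \Dom_n\super{s}
= \big\{ q \in \bC^\times \,\big|\, \textnormal{either $n < \ppmin(q),$ or if $n$ is odd, $q = \pm \ii$} \big\},
\end{align}
establishing~\eqref{TLSemiSimpQ}. (The explicit description of this set also follows from lemma~\ref{ContainLem} with $\multii = \OneVec{n}$, whose right-hand side~\eqref{Contain} specializes to the displayed set, the condition ``$\multii = \OneVec{\Summed}_\multii$'' being automatic.) Finally, lemma~\ref{qtonulem} rephrases the two alternatives ``$n < \ppmin(q)$'' and ``$q = \pm\ii$'' in terms of the fugacity $\nu$.
\end{proof}
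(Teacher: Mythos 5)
Your proposal is correct and follows essentially the same route as the paper: the paper's proof likewise invokes the equivalence of items~\ref{SSitem0} and~\ref{SSitem3} of theorem~\ref{BigSSTHM} together with definition~\eqref{Domn2}. Your explicit remark that $\max \OneVec{n} = 1 < \ppmin(q)$ holds for every $q \in \bC^\times$, so the theorem's hypothesis imposes no restriction, is a small but welcome clarification that the paper leaves implicit.
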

\begin{proof}
This follows from the equivalence of items~\ref{SSitem0} and~\ref{SSitem3} in theorem~\ref{BigSSTHM} and definition~\eqref{Domn2}.
\end{proof}

The necessary and sufficient condition for the semisimplicity of $\TL_n(\nu)$ in terms of $q$ (or $\nu$)
seems to be rarely stated and even misstated in the literature. For example, in~\cite[theorem~\red{B8.4}]{br}, 
it is stated, without proof, 
to be
\begin{align}
\frac{1}{\nu^2} \neq 4 \cos^2 (\pi / p) \quad \text{for all $p \in \{2,3,\ldots,n\}$.}
\end{align}
This condition is similar but not identical to our condition~\eqref{Alt} discussed in lemma~\ref{qtonulem}.

%

\bigskip

Lastly, we identify the Jacobson radical of $\TL_\multii(\nu)$ as the kernel of 
its representation on the quotient space $\LS_\multii / \rad \LS_\multii$.
By propositions~\ref{GenLem2} and~\ref{SimpleModuleProp},
this quotient module is the direct sum of all simple $\TL_\multii(\nu)$-modules,
\begin{align} \label{QuotientL}
\LS_\multii / \rad \LS_\multii 
\overset{\eqref{LSDirSum2}}{=} 
\bigoplus_{s \, \in \, \DefectSet_\multii} \LS_\multii\super{s} / \rad \LS_\multii\super{s}
\overset{\eqref{QuoMod}}{=}  \bigoplus_{s \, \in \, \DefectSet_\multii'} \Quo_\multii\super{s} .
\end{align}

\begin{prop} \label{Jacobsonprop}
Suppose $\max \multii < \ppmin(q)$. 
The Jacobson radical of $\TL_\multii(\nu)$ equals 
the kernel of the representation of $\TL_\multii(\nu)$ on $\smash{\LS_\multii / \rad \LS_\multii}$.
\end{prop}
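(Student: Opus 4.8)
The plan is to identify the kernel of the representation $\varrho \colon \TL_\multii(\nu) \longrightarrow \End(\LS_\multii / \rad \LS_\multii)$ with $\rad \TL_\multii(\nu)$ by a two-sided containment argument, using the characterization of the Jacobson radical in~\eqref{Jacobson} as the intersection of the annihilators of all simple modules. Write $\Ksp := \ker \varrho$. By~\eqref{QuotientL}, the module $\LS_\multii / \rad \LS_\multii$ is the direct sum of the modules $\smash{\Quo_\multii\super{s}}$ with $s \in \DefectSet_\multii'$, which by proposition~\ref{SimpleModuleProp} is exactly the complete list of non-isomorphic simple $\TL_\multii(\nu)$-modules. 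Hence $\Ksp$ is the intersection of the annihilators $\Ann(\Quo_\multii\super{s})$ over $s \in \DefectSet_\multii'$, and so $\Ksp = \rad \TL_\multii(\nu)$ by~\eqref{Jacobson}. This already gives the whole statement; the remaining paragraphs are about making the middle step airtight.

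First I would spell out why $\Ksp = \bigcap_{s \in \DefectSet_\multii'} \Ann(\Quo_\multii\super{s})$. An element $a \in \TL_\multii(\nu)$ lies in $\Ksp$ if and only if $a$ annihilates every element of $\LS_\multii / \rad \LS_\multii$; since this module is the direct sum $\bigoplus_{s \in \DefectSet_\multii'} \Quo_\multii\super{s}$ of $\TL_\multii(\nu)$-submodules, and each summand is stable under the action, $a$ kills the direct sum precisely when it kills each summand, i.e.\ $a \in \bigcap_s \Ann(\Quo_\multii\super{s})$. This is routine linear algebra together with the decomposition~\eqref{QuotientL}, which in turn rests on invariance of the bilinear form (so that $\rad \LS_\multii\super{s}$ is a submodule, lemma~\ref{EasyLem2}) and on proposition~\ref{GenLem2} identifying $\Quo_\multii\super{s}$ as simple when nontrivial.

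Second I would invoke~\eqref{Jacobson}: the Jacobson radical of a finite-dimensional algebra is the intersection of the annihilators of its simple modules. Since by proposition~\ref{SimpleModuleProp} the collection $\smash{\big\{ \Quo_\multii\super{s} \,\big|\, s \in \DefectSet_\multii' \big\}}$ is exactly the complete set of non-isomorphic simple $\TL_\multii(\nu)$-modules (and the Jacobson radical only sees isomorphism classes of simples), we conclude
\begin{align}
\Ksp = \bigcap_{s \, \in \, \DefectSet_\multii'} \Ann\big( \Quo_\multii\super{s} \big)
= \bigcap_{\substack{\text{simple} \\ \text{$\TL_\multii(\nu)$-modules } \mathsf{M}}} \Ann(\mathsf{M})
\overset{\eqref{Jacobson}}{=} \rad \TL_\multii(\nu) .
\end{align}

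The main obstacle — really the only nontrivial input — is proposition~\ref{SimpleModuleProp}, which asserts that the $\smash{\Quo_\multii\super{s}}$ exhaust the simple modules; but that is proved earlier in the excerpt, so in the present proposition it may simply be cited. Everything else is formal. One minor point worth a sentence in the write-up: $\TL_\multii(\nu)$ is finite-dimensional (it has the finite basis $\LD_\multii$, cf.\ the discussion around~\eqref{Dim78}), so the description~\eqref{Jacobson} of the Jacobson radical applies; and if some $\smash{\Quo_\multii\super{s}}$ is trivial it is automatically excluded from $\DefectSet_\multii'$ by~\eqref{WeddDecForTLIndex}, so no degenerate summand sneaks into the direct sum.
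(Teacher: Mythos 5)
Your proposal is correct and takes essentially the same approach as the paper: both identify the kernel with the intersection of the annihilators of the simple modules $\smash{\Quo_\multii\super{s}}$, $s \in \DefectSet_\multii'$, via the direct-sum decomposition~\eqref{QuotientL} together with proposition~\ref{SimpleModuleProp}, and then invoke the characterization~\eqref{Jacobson} of the Jacobson radical. The paper merely runs the same chain of equalities in the opposite direction, starting from $\rad \TL_\multii(\nu)$ and ending at the kernel.
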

\begin{proof}
By proposition~\ref{SimpleModuleProp}, the Jacobson radical of $\TL_\multii(\nu)$ equals
\begin{align} \label{JabQ}
\rad \TL_\multii(\nu) \overset{\eqref{Jacobson}}{=} 
\bigcap_{s \, \in \, \DefectSet_\multii'} \big\{T \in \TL_\multii(\nu) \, \big| \, T.\alpha = 0 \text{ for all } \alpha \in \Quo_\multii\super{s} \big\} .
\end{align}
On the other hand, direct-sum decomposition~\eqref{QuotientL} shows that 
\begin{align}
\bigcap_{s \, \in \, \DefectSet_\multii'} \big\{T \in \TL_\multii(\nu) \, \big| \, T.\alpha = 0 \text{ for all } \alpha \in \Quo_\multii\super{s} \big\} 
\overset{\eqref{WeddDecForTLIndex}}{=} \; & 
\bigcap_{s \, \in \, \DefectSet_\multii} \big\{T \in \TL_\multii(\nu) \, \big| \, T.\alpha = 0 \text{ for all } \alpha \in \LS_\multii\super{s} / \rad \LS_\multii\super{s} \big\} \\
 \label{kernel}
\overset{\eqref{QuotientL}}{=} \; & \big\{T \in \TL_\multii(\nu) \, \big| \, T.\alpha = 0 \text{ for all } \alpha \in \LS_\multii / \rad \LS_\multii  \big\} .
\end{align}
By definition, the right side of~\eqref{kernel} 
is the kernel of the representation of $\TL_\multii(\nu)$ on $\smash{\LS_\multii / \rad \LS_\multii}$.
\end{proof}


\bigskip
\bigskip

\appendixpage
\begin{appendices}
\renewcommand{\thesection}{\Alph{section}}
\renewcommand{\thesubsection}{\arabic{subsection}}
\renewcommand{\thesubsubsection}{\Alph{subsubsection}}

%
\section{Diagram simplifications} \label{TLRecouplingSect}

The purpose of this appendix is to collect auxiliary results needed in this article, 
using diagram calculus known as Temperley-Lieb recoupling theory~\cite{pen, kl, cfs}.
We include the proofs for convenience of the reader, but all results of this appendix already appear in some forms in the literature.
We recall that the evaluation $(T)$ of a network $T$ is defined in~\eqref{evT} in section~\ref{StdModulesSect} as the product of weights 
(\ref{LoopWeight}--\ref{TurnBack0}) of all connected components in $T$.

We begin with some diagram simplifications. 
We use the following extraction rule in section~\ref{StdModulesSect}.

\begin{lem} \label{LoopLemGen} 
Suppose $s + r < \ppmin(q)$.  Then we have the following extraction rule:
\begin{align} \label{DeltaTangleGen} 
\vcenter{\hbox{\includegraphics[scale=0.275]{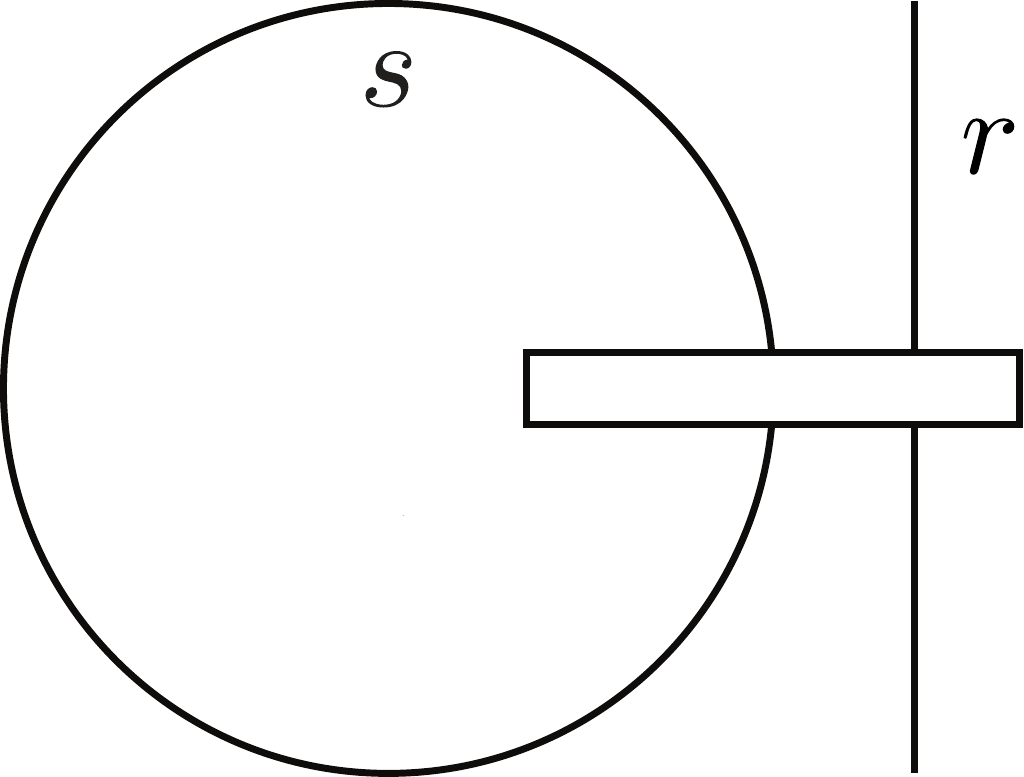}}}
\quad = \quad (-1)^s \frac{[r+s+1]}{[r+1]} 
 \,\, \times \,\,  \vcenter{\hbox{\includegraphics[scale=0.275]{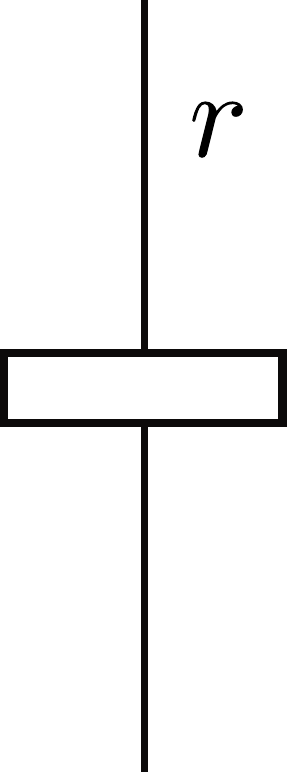} .}} 
\end{align}
\end{lem}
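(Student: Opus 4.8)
The statement is an ``extraction rule'' that lets us remove a loop of size $s$ wrapped around a strand of size $r$ (with a Jones--Wenzl projector box of size $r$ present), replacing the diagram by a scalar multiple of the plain projector box $\WJProj\sub{r}$. The natural strategy is to reduce to a Theta-network evaluation. The left-hand diagram is a network built from a three-vertex (the trivalent junction of cables of sizes $r$, $s$, and $r+s$, say, or more precisely the relevant open/closed three-vertex as in~\eqref{3vertex2},~\eqref{3vertex1}) composed with its reflection, with a projector box of size $r$ threaded through. First I would argue, using the idempotent property~\ref{wj1} together with property~\ref{wj1}$'$ (so that $\WJProj\sub{r}\WJProj\sub{s} = \WJProj\sub{r}$ when appropriate) and the fact that the tangle being evaluated has $r$ through-strands, that the network must be a scalar multiple of $\WJProj\sub{r}$: indeed, capping off a tangle in $\TL_r(\nu)$ that commutes with all the $\Gen_i$ and is annihilated on both sides by the $\Gen_i$ forces it to be proportional to $\WJProj\sub{r}$ by the defining characterization of the Jones--Wenzl projector (uniqueness in properties~\ref{wj1},~\ref{wj2}). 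So it remains only to pin down the scalar.

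To find the scalar, I would close up the diagram: take the trace (connect the $r$ top nodes to the $r$ bottom nodes through a projector box, forming a closed network). On the left side this produces a Theta network $\ThetaNet(r,s,r+s)$ — actually, more precisely, the closure yields a network whose evaluation is $\ThetaNet(r, r, s)$ or a closely related Theta graph — divided by the loop value $\Delta_r := (-1)^r[r+1]$ coming from the closure of $\WJProj\sub{r}$ (using e.g.~lemma~\ref{LoopErasureLem} or the standard formula that closing $\WJProj\sub{r}$ gives $\Delta_r$). On the right side the closure of $\const \times \WJProj\sub{r}$ gives $\const \times (-1)^r[r+1]$. Equating the two and solving yields
\begin{align}
\const = \frac{\ThetaNet(r, r, s)}{\big((-1)^r[r+1]\big)^2} \cdot \big((-1)^r[r+1]\big) \Big/ (\text{normalization})
\end{align}
which, after substituting the explicit Theta formula~\eqref{ThetaFormula0} for $\ThetaNet(r,r,s)$ (or $\ThetaNet(r,s,r+s)$, whichever is the relevant closed network) and simplifying the quantum factorials, collapses to $(-1)^s\,[r+s+1]/[r+1]$. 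The hypothesis $s+r < \ppmin(q)$ is exactly what guarantees all the quantum factorials and quantum integers appearing — in particular $[r+1]$, which sits in the denominator, and the internal edges of the Theta network — are finite and nonzero, so the projector boxes of all the sizes involved actually exist and the scalar is well-defined.

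The main obstacle will be getting the combinatorics of the Theta-network closure exactly right: identifying precisely which Theta graph arises, correctly accounting for the $\Delta_r$ factor(s) from closing the size-$r$ projector, and carefully cancelling the quantum factorials using the identities for $[k]!$ (the kind collected in lemma~\ref{QintIDLemAndSumFormulaLem2} of this appendix). This is a routine but bookkeeping-heavy computation of exactly the type done in~\cite{kl, cfs}; I expect no conceptual difficulty, only the need to be meticulous with signs $(-1)^s$ versus $(-1)^r$ and with the off-by-one shifts in the arguments of $[\,\cdot\,]!$. Once the scalar is verified in a small case (say $r=0$, where the rule should reduce to the loop-value identity, or $s=1$, where it reduces to the standard single-strand bubble move), the general case follows by the same manipulation.
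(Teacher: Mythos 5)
Your route is genuinely different from the paper's and is workable in outline: the paper proves \eqref{DeltaTangleGen} by a direct induction on $s$, peeling off one strand of the loop at a time (the single-strand partial trace, which comes from Wenzl's recursion~\eqref{wjrecursion}), whereas you first argue that the left side, viewed as an element of $\TL_r(\nu)$, is annihilated on both sides by all $\Gen_i$ (slide $\Gen_i$, $i\le r-1$, into the size-$(r+s)$ projector box and use~\ref{wj2}) and hence is a scalar multiple of $\WJProj\sub{r}$, and then you fix the scalar by closing the diagram. The first step is sound, though ``uniqueness in P1--P2'' is not literally the statement you need: what you need is that $\{x\in\TL_r(\nu)\,:\,\Gen_i x=x\Gen_i=0 \text{ for all } i\}$ is one-dimensional and spanned by $\WJProj\sub{r}$, which follows because every non-identity $r$-link diagram is a word in the $\Gen_i$ and $\Gen_i\WJProj\sub{r}=0$; this uses that $\WJProj\sub{r}$ exists, guaranteed by $r\le r+s<\ppmin(q)$.

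The weak point is the scalar step. First, the bookkeeping you display is off: closing the $r$ open strands simply produces the full closure of $\WJProj\sub{r+s}$, whose value is the degenerate theta $\ThetaNet(r,s,r+s)=(-1)^{r+s}[r+s+1]$; there is no $\ThetaNet(r,r,s)$, no division by $\big((-1)^r[r+1]\big)^2$, and no extra ``normalization.'' Comparing with the closure $(-1)^r[r+1]$ of $\WJProj\sub{r}$ (nonzero because $r+1\le r+s<\ppmin(q)$ when $s\ge 1$; the case $s=0$ is trivial) then gives exactly the stated scalar. Second, and more seriously relative to this paper's logical order, every ingredient you propose to quote for this step --- the loop value of a Jones--Wenzl projector, lemma~\ref{LoopErasureLem}, and the theta formula~\eqref{ThetaFormula0} --- is established \emph{after} lemma~\ref{LoopLemGen} and depends on it: the loop value is precisely the $r=0$ case of the lemma you are proving, lemmas~\ref{TieOffLem} and~\ref{LoopErasureLem} invoke that case, and lemma~\ref{ThetaLem} uses lemma~\ref{LoopLemGen} again. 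So, as written, your plan is circular within the paper. The repair is easy but must be made explicit: prove independently that the closure of $\WJProj\sub{n}$ equals $(-1)^n[n+1]$ (a short induction using~\eqref{wjrecursion}), after which your proportionality-plus-trace argument goes through; note, however, that this auxiliary induction is essentially the one-strand-at-a-time computation the paper carries out directly.
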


\begin{proof} 
We prove formula~\eqref{DeltaTangleGen} by induction on $s \in \bZnn$.
It is obvious for $s = 0$. Now, assuming that~\eqref{DeltaTangleGen} holds for all $s \leq t - 1$ for some integer $t \geq 2$, 
using the induction hypothesis first for $s=1$ and then for $s=t-1$, we get
\begin{align} 
\vcenter{\hbox{\includegraphics[scale=0.275]{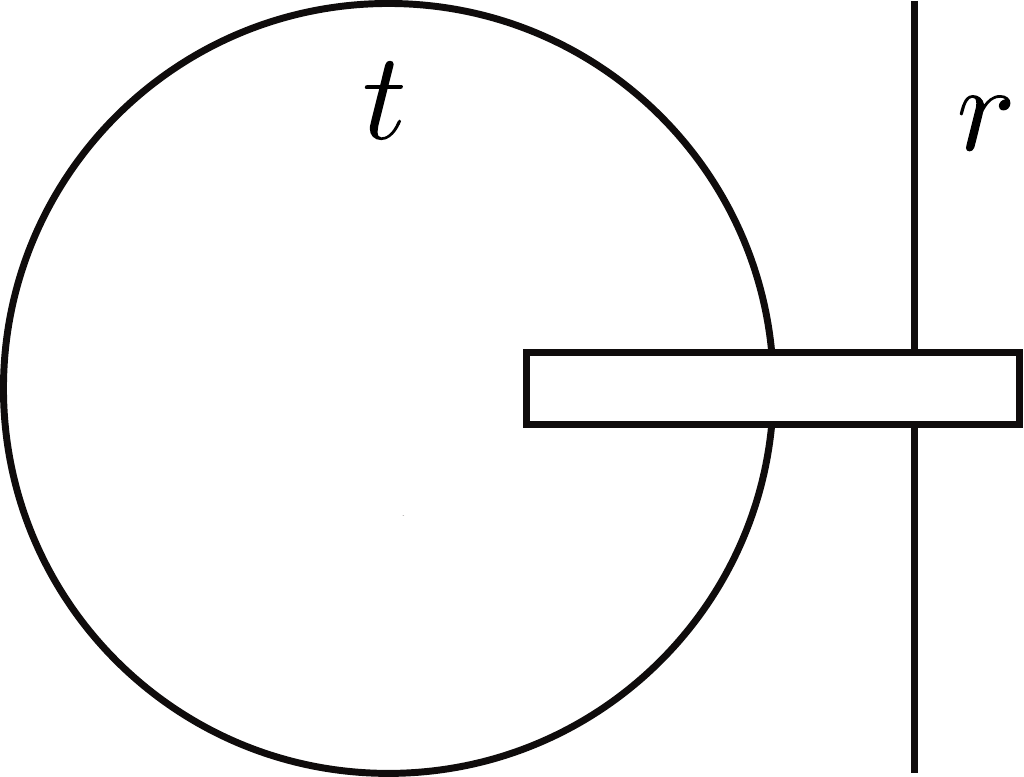}}}
\quad = \; & \quad 
\vcenter{\hbox{\includegraphics[scale=0.275]{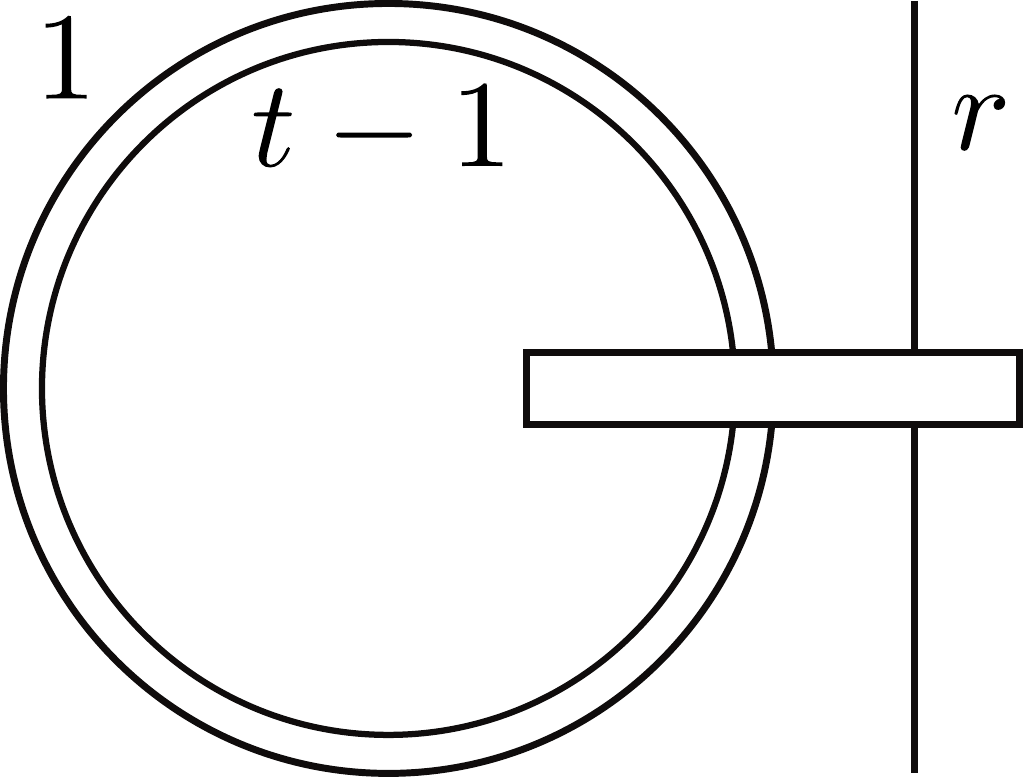}}} \quad
= \quad - \frac{[r + t + 1]}{[r + t]}  \,\, \times \,\,   \vcenter{\hbox{\includegraphics[scale=0.275]{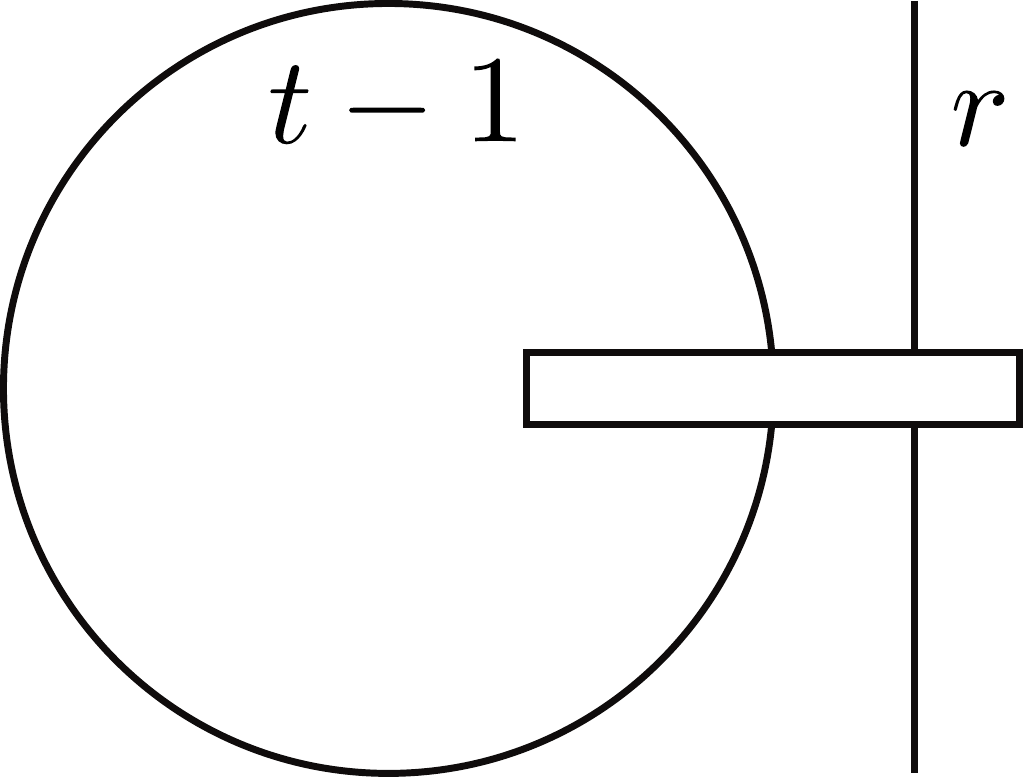}}} \\
= \; & \quad  - \frac{[r + t + 1]}{[r + t]} (-1)^{t-1} \frac{[r+t]}{[r+1]} 
 \,\, \times \,\,  \vcenter{\hbox{\includegraphics[scale=0.275]{e-ProjectorBox_tall_r.pdf}}} \quad
= \quad (-1)^t \frac{[r+t+1]}{[r+1]}  \,\, \times \,\,  \vcenter{\hbox{\includegraphics[scale=0.275]{e-ProjectorBox_tall_r.pdf} .}}
\end{align}
This proves~\eqref{DeltaTangleGen} with $s = t$, finishing the induction step.
\end{proof}

We use the next simple observation in section~\ref{GramMatrixSect} and later in this appendix.
\begin{lem} \label{InsProjBoxLem}  
We may insert a projector box above and/or below any network as follows: 
\begin{align}\label{InsProjBox} 
\vcenter{\hbox{\includegraphics[scale=0.275]{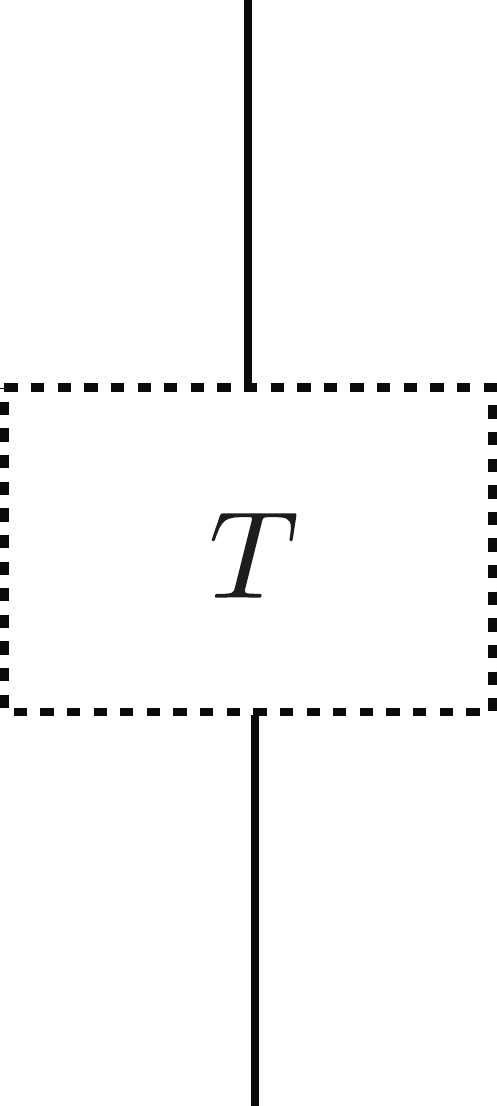}}} 
\quad =  \quad \vcenter{\hbox{\includegraphics[scale=0.275]{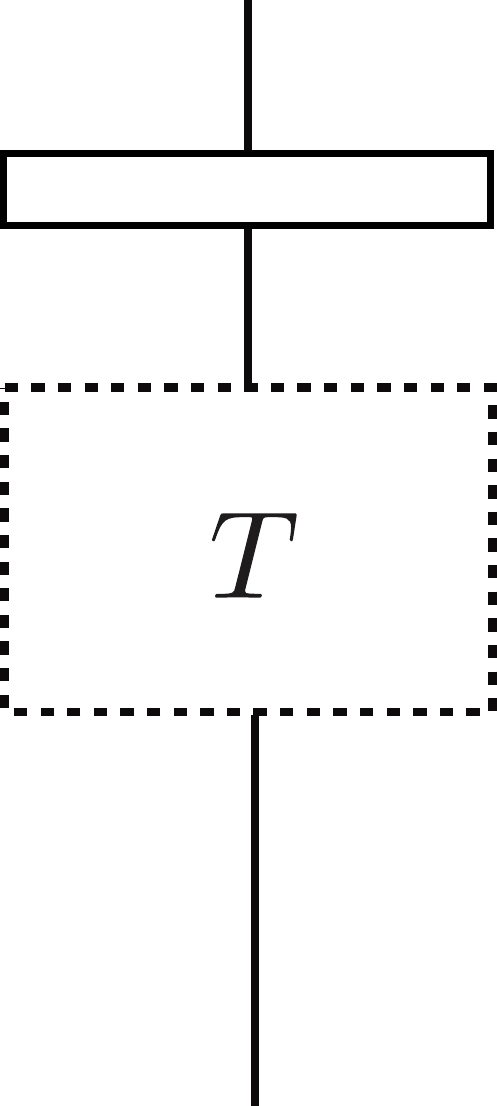}}}
\quad =  \quad \vcenter{\hbox{\includegraphics[scale=0.275]{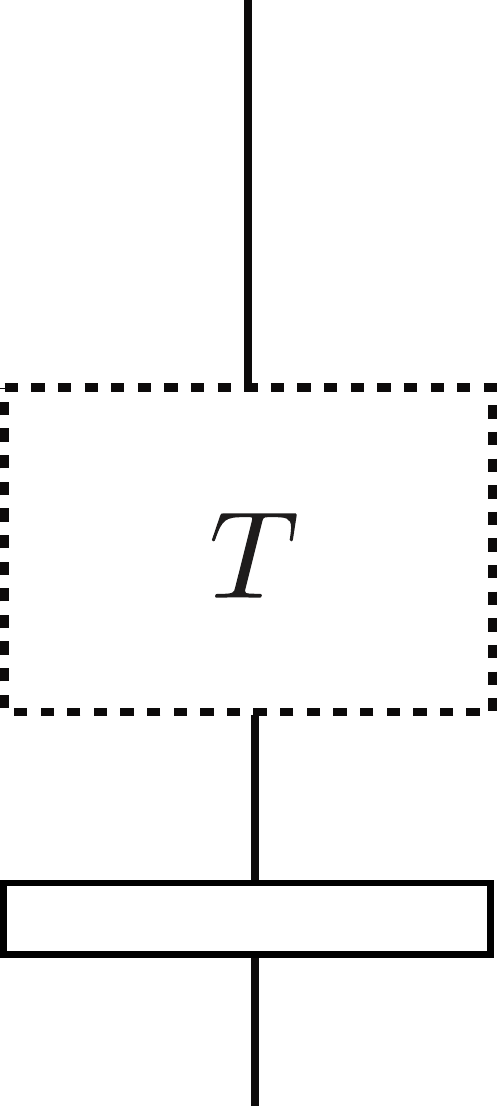}}}
\quad =  \quad \vcenter{\hbox{\includegraphics[scale=0.275]{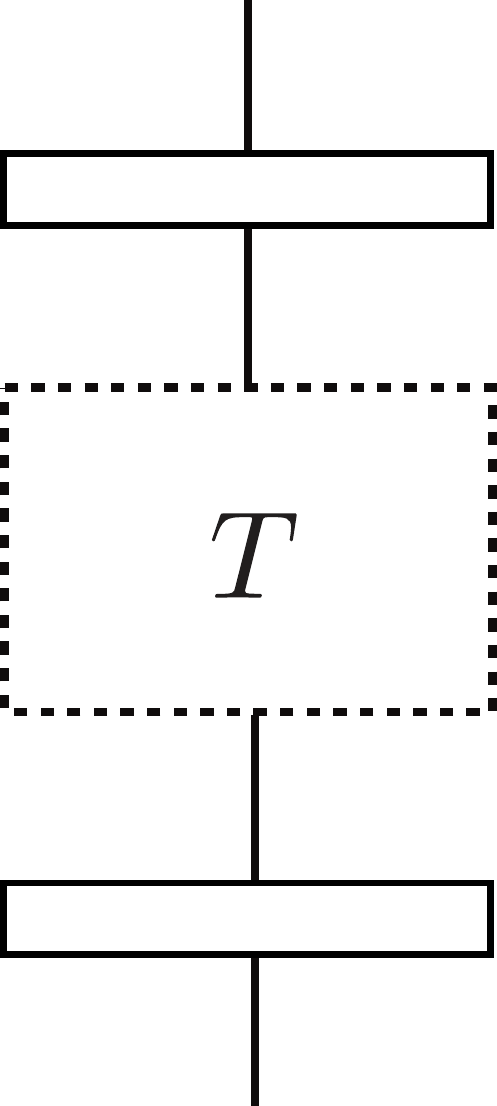} .}} 
\end{align} 
In particular, we have
\begin{align} \label{WeightOne}
\left( \; \vcenter{\hbox{\includegraphics[scale=0.275]{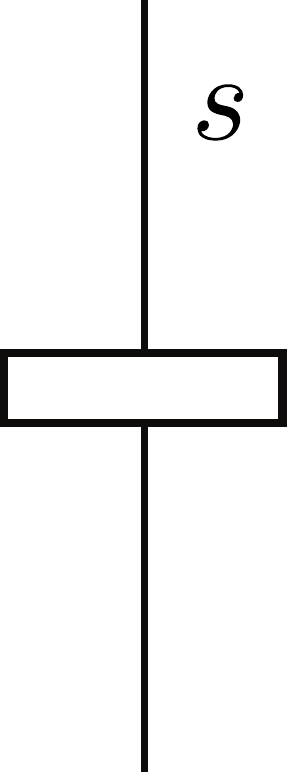}}} \; \right) 
\quad = \quad 1 . 
\end{align} 
\end{lem}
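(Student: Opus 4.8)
\textbf{Proof plan for Lemma~\ref{InsProjBoxLem}.}
The statement to prove is the identity~\eqref{InsProjBox}, that a Jones--Wenzl projector box of size $s$ may be freely inserted at the top and/or bottom of a network in which a cable of $s$ parallel through-strands enters, together with the special case~\eqref{WeightOne}. The plan is to reduce everything to the defining properties~\ref{wj1} and~\ref{wj2} of the Jones--Wenzl projector, expressed diagrammatically as~\eqref{ProjectorID0} and~\eqref{ProjectorID2}, and the stronger absorption property~\eqref{ProjectorID1}.

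First I would observe that in a network, a cable of $s$ strands that passes straight through a horizontal slice can, by definition~\eqref{ProjBoxDiag} of the projector box, be replaced by that straight cable with a projector box $\WJProj\sub{s}$ inserted \emph{followed by} its own decomposition; but since $\WJProj\sub{s} = \mathbf{1}_{\TL_s(\nu)} + (\text{turn-back terms})$ by~\eqref{ProjDecomp}, inserting the box and then using the identity term recovers the original network. So the content is really the reverse: once a box is present, a second box placed adjacent to it can be absorbed. Concretely, the three equalities in~\eqref{InsProjBox} follow by reading from left to right: the leftmost network (no box) equals the one with a box inserted at the bottom because the strands entering the network from below form $s$ through-paths, so one may formally write $\mathbf{1}_{\TL_s}$ there and then replace $\mathbf{1}_{\TL_s}$ by $\WJProj\sub{s}$ at the cost of turn-back terms --- but here I must argue that those turn-back terms contribute zero. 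This is where the hypothesis that the $s$ strands are genuine through-paths of the surrounding network is essential: a turn-back appended to $s$ through-paths creates a turn-back path in the resulting network, whose weight is $0$ by rule~\eqref{TurnBack0}. Hence only the identity term of the decomposition~\eqref{ProjDecomp} survives, giving the first equality. The second and third equalities are proved identically, inserting a box at the top and then at both top and bottom; in the last case one additionally uses idempotence~\eqref{ProjectorID0} (equivalently~\ref{wj1}) to collapse the two stacked boxes into one, or one simply notes each insertion is justified separately by the same turn-back-vanishing argument.

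For~\eqref{WeightOne}: the network consisting of a single projector box of size $s$ with $s$ through-strands and nothing else evaluates, upon decomposing the box via~\eqref{ProjDecomp}, to a sum over $T \in \LD_s$ of $\mathrm{coef}_T$ times the evaluation of the network obtained by closing up $T$ with $s$ straight through-paths. For $T = \mathbf{1}_{\TL_s}$ this closure is $s$ disjoint through-paths, each of weight $1$ by~\eqref{ThroughPathWeight}, so its evaluation is $1$, and its coefficient is $1$ by~\eqref{ProjDecomp}. For every other $T \in \LD_s$, the diagram $T$ contains a turn-back link, so the closure contains a turn-back path and evaluates to $0$ by~\eqref{TurnBack0}. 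Therefore the total evaluation is $1 \cdot 1 + \sum_{T \neq \mathbf{1}} \mathrm{coef}_T \cdot 0 = 1$, which is~\eqref{WeightOne}.

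The only genuinely delicate point --- the ``main obstacle'' --- is making precise the claim that appending a turn-back to a bundle of through-strands of the ambient network always produces a turn-back path (rather than, say, a new loop or a longer through-path), so that~\eqref{TurnBack0} applies and kills all non-identity terms. I would handle this by a short topological remark: the $s$ marked endpoints where the cable meets the rest of the network are connected, inside the rest of the network, to endpoints on the \emph{far} boundary side (that is what ``through-path'' means in the hypothesis of the cable being a through-cable); attaching a cup or cap from the decomposition of $\WJProj\sub{s}$ joins two of these endpoints to each other on the \emph{near} side, so the resulting connected component has both of its boundary endpoints on the same (far) side --- a turn-back path --- or is a loop, either way of weight $0$ or $\nu^k$ with the component still containing a turn-back, hence still $0$ by~\eqref{evT2}. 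Once this is spelled out, the rest is the routine bookkeeping above, and I would keep it brief.
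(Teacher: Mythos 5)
Your proposal is correct and follows essentially the same route as the paper: decompose the inserted Jones--Wenzl box via~\eqref{ProjDecomp}, note that every non-identity internal diagram contributes a turn-back and hence weight zero by~\eqref{TurnBack0}, so only the unit term (with coefficient one) survives, and then~\eqref{WeightOne} follows from the through-path weight~\eqref{ThroughPathWeight}. Your extra topological remark about why the turn-back terms vanish just spells out what the paper leaves implicit, so no substantive difference.
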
 

\begin{proof}  
On the right side of~\eqref{InsProjBox}, each internal link diagram of the upper (resp.~lower) projector box with a turn-back link 
has weight zero by~\eqref{TurnBack0}. Thus, only the unit link diagram~\eqref{Units} contributes,  
and replacing the projector box with only this diagram is the same as removing the box altogether.  
Identity~\eqref{WeightOne} then follows from~\eqref{ThroughPathWeight}. 
\end{proof}

We remark that because only the unit link diagram~\eqref{Units} 
contributes to the right side of~\eqref{InsProjBox} and its coefficient equals one in~\eqref{ProjDecomp}, 
we do not need to restrict the size of the inserted projector boxes to less than $\ppmin(q)$.  
Therefore, we need not include this condition in the statement of lemma~\ref{InsProjBoxLem}.

We use the following network evaluation rule in section~\ref{StdModulesSect} and later in this appendix.

\begin{lem} \label{TieOffLem} 
Suppose $s < \ppmin(q)$.
Let $T$ be a network with $s$ links passing through the top side of the rectangle and $s$ links passing through the bottom side.  
Then the evaluations of the following networks are equal:
\begin{align}\label{ThetaExtraction3} 
(-1)^s [s+1] \quad \vcenter{\hbox{\includegraphics[scale=0.275]{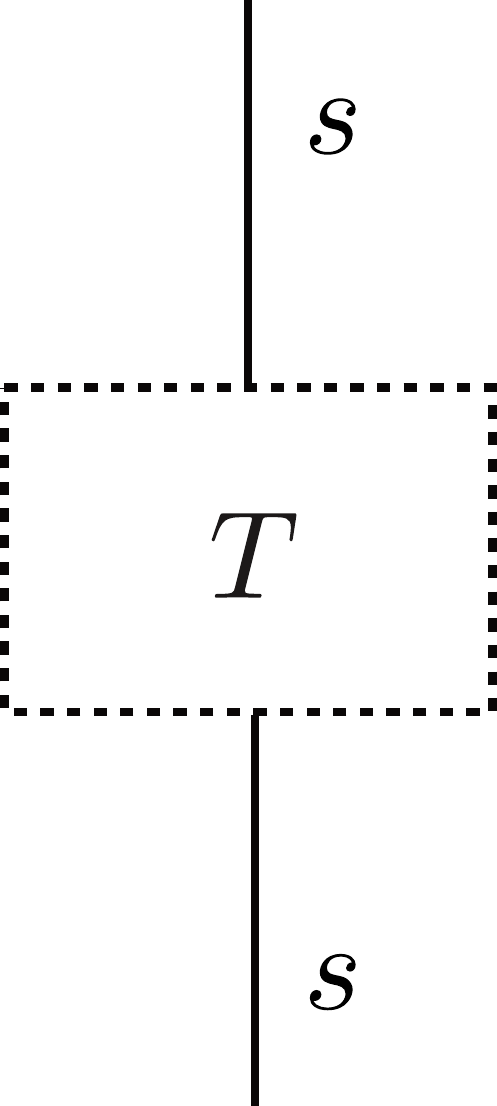}}} 
\qquad \qquad  \textnormal{and} \qquad \qquad
\vcenter{\hbox{\includegraphics[scale=0.275]{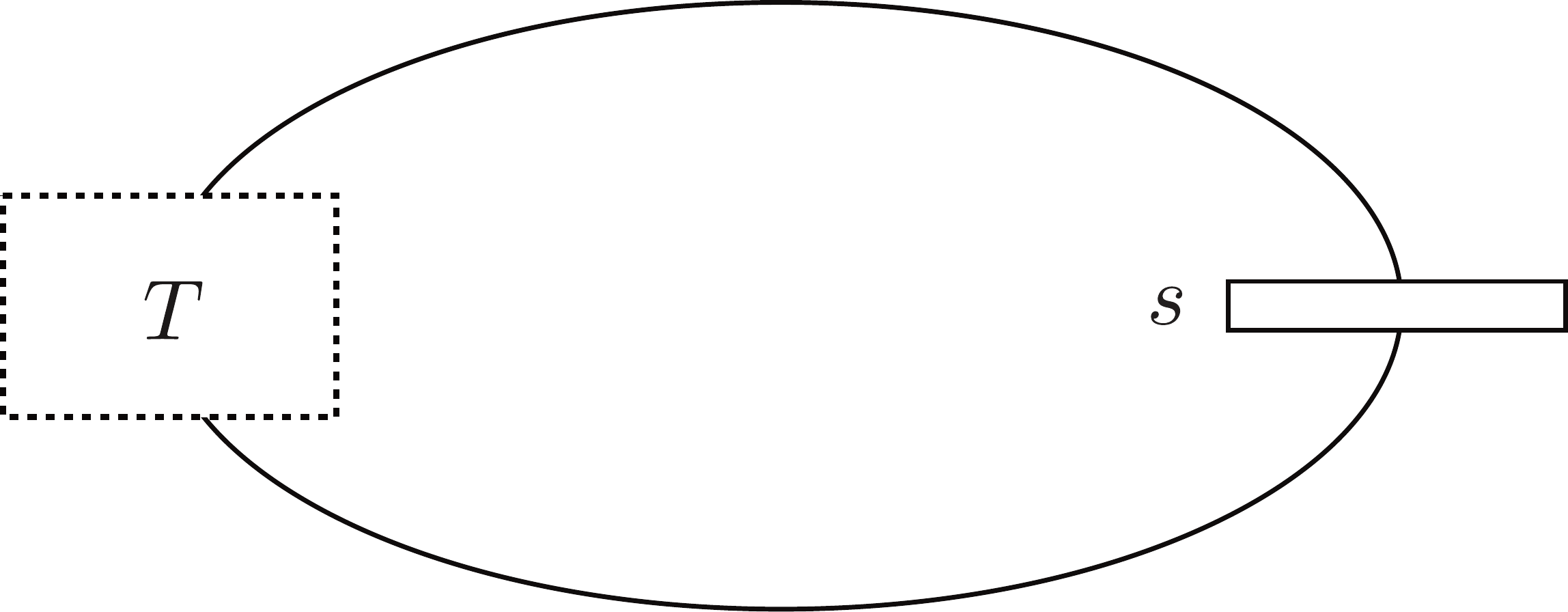} .}} 
\end{align} 
\end{lem}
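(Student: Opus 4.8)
\textbf{Proof plan for Lemma~\ref{TieOffLem}.} The plan is to reduce the claim to the extraction rule of Lemma~\ref{LoopLemGen} applied at $r = 0$, after first inserting a Jones--Wenzl projector box of size $s$ into the $s$ through-strands of $T$. Since $s < \ppmin(q)$, the projector $\WJProj\sub{s}$ exists. First I would use Lemma~\ref{InsProjBoxLem}, specifically identity~\eqref{InsProjBox}, to insert a size-$s$ projector box on the $s$ links passing through the bottom side of the rectangle of $T$; this does not change the evaluation of the network, because only the unit link diagram~\eqref{Units} in the decomposition~\eqref{ProjDecomp} of $\WJProj\sub{s}$ survives (all turn-back terms have weight zero by~\eqref{TurnBack0}). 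The same reasoning shows the insertion is harmless even though we make no a priori size restriction in Lemma~\ref{InsProjBoxLem}.

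Next I would analyze the network on the right side of~\eqref{ThetaExtraction3}, which is $T$ with its $s$ top strands tied off to its $s$ bottom strands by a bundle of $s$ nested arcs (forming, together with the inserted projector box, a loop of $s$ strands wrapped around a trivial cable of size $r = 0$). This is precisely the configuration appearing on the left side of~\eqref{DeltaTangleGen} with $r = 0$: Lemma~\ref{LoopLemGen} then applies, since $s + 0 = s < \ppmin(q)$, and gives
\begin{align}
\vcenter{\hbox{\includegraphics[scale=0.275]{e-ThroughPath1_loop_with_s.pdf}}}
\quad = \quad (-1)^s \frac{[s+1]}{[1]} \,\, \times \,\, \vcenter{\hbox{\includegraphics[scale=0.275]{e-ThroughPath1_with_s.pdf}}} ,
\end{align}
where on the right the loop-closure has been replaced by a plain projector box of size $s$ sitting on the $s$ through-strands of $T$. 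Since $[1] = 1$ by definition~\eqref{Qinteger}, the scalar is $(-1)^s [s+1]$. Finally, I would remove the remaining projector box again via Lemma~\ref{InsProjBoxLem}, leaving the bare network $T$ with $s$ through-strands; thus the evaluation of the right-hand network in~\eqref{ThetaExtraction3} equals $(-1)^s[s+1]$ times the evaluation of $T$, which is the evaluation of the left-hand network in~\eqref{ThetaExtraction3}. This proves the claimed equality.

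I do not anticipate a genuine obstacle here: the only subtlety is keeping track of where the inserted projector boxes go, and verifying that the closure pattern in the right-hand network of~\eqref{ThetaExtraction3} literally matches the left-hand side of~\eqref{DeltaTangleGen} at $r=0$ after the insertion; this is a routine diagrammatic check. One should also note explicitly that the condition $s < \ppmin(q)$ is exactly what is needed both for $\WJProj\sub{s}$ to be defined and for the hypothesis $s + r < \ppmin(q)$ of Lemma~\ref{LoopLemGen} to hold with $r = 0$.
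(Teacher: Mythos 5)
Your overall strategy (reduce to lemma~\ref{LoopLemGen} at $r=0$, with lemma~\ref{InsProjBoxLem} to shuffle the projector box) is in the same spirit as the paper's proof, but there is a genuine gap at the step you dismiss as a ``routine diagrammatic check.'' The right-hand network of~\eqref{ThetaExtraction3} is \emph{not} literally the configuration on the left of~\eqref{DeltaTangleGen} with $r=0$: the $s$ closure strands are threaded through the arbitrary network $T$, and $T$ may connect its boundary points by turn-back links. In that case there is no closed Jones--Wenzl loop at all, so lemma~\ref{LoopLemGen} says nothing; the identity still holds, but for a different reason, namely both sides vanish --- the open network by the turn-back weight~\eqref{TurnBack0}, and the closed network because the closure turns the turn-back into a link with both endpoints on the projector box, which is killed by property~\eqref{ProjectorID2}. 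This case split is exactly what the paper's proof does, and it is the part your argument omits. Even in the remaining case, where all $s$ boundary links of $T$ are through-paths, you still need to invoke the weight-one rule~\eqref{ThroughPathWeight} (and planar isotopy) to decouple $T$ --- its loops factor out and the through-strands straighten --- before the closed network matches the bare closed-projector configuration of lemma~\ref{LoopLemGen} at $r=0$; as written, you apply an extraction rule to a local picture that is not actually present in the diagram.

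Two smaller points: the insertion of an extra size-$s$ box is unnecessary, since the closed network of~\eqref{ThetaExtraction3} already contains the projector box in its closure (adjacent boxes would merge by~\eqref{ProjectorID0} anyway); and lemma~\ref{LoopLemGen} at $r=0$ leaves a box of size $0$ (the empty diagram), not a size-$s$ box ``sitting on the through-strands of $T$'' to be removed afterwards --- the statement you actually write down there is the assertion of lemma~\ref{TieOffLem} itself rather than an instance of~\eqref{DeltaTangleGen}. With the turn-back/through-path dichotomy added, your argument becomes the paper's proof.
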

\begin{proof} 
Assuming without loss of generality that $T$ is a link diagram, there are two scenarios to consider:
\begin{enumerate}[leftmargin=*]
\itemcolor{red}
\item \label{sce1} $T$ contains a turn-back link.
Then, the left side of~\eqref{ThetaExtraction3} contains a turn-back link, which vanishes by rule~\eqref{TurnBack0},
and the right side of~\eqref{ThetaExtraction3} has a link attached to two nodes of the 
projector box, which also vanishes by property~\eqref{ProjectorID2}.

\item \label{sce2} $T$ does not contain a turn-back link.
Then, all of the $s$ links passing through the top and bottom side of $T$ are through-links.
We recall from~\eqref{ThroughPathWeight} that through-links have weight one.
Thus, lemma~\ref{LoopLemGen} with $r=0$ shows that the evaluations of the link diagrams in~\eqref{ThetaExtraction3} are equal.
\end{enumerate}
This concludes the proof.
\end{proof}

The next extraction rule, even though simple, is very useful in sections~\ref{GramMatrixSect} and~\ref{RadicalSect}. 

\begin{lem} \label{ExtractLem} 
Suppose $s < \ppmin(q)$.  
Then, for any network $T$ contained between two projector boxes of size $s$
within a larger network, we have the following extraction rule:
\begin{align} \label{ExtractID} 
\vcenter{\hbox{\includegraphics[scale=0.275]{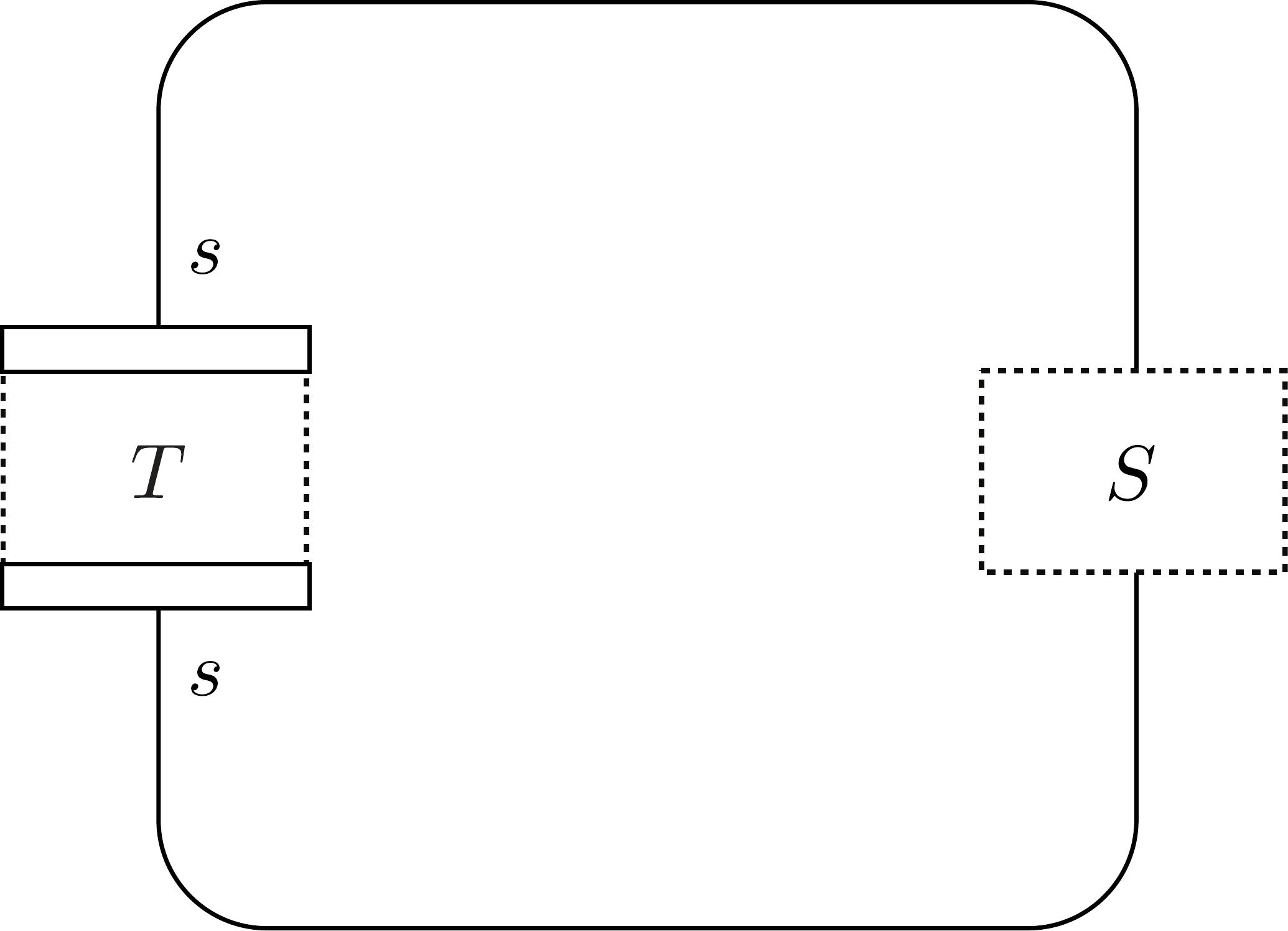}}} 
\quad = \quad (T) \,\, \times \,\, \vcenter{\hbox{\includegraphics[scale=0.275]{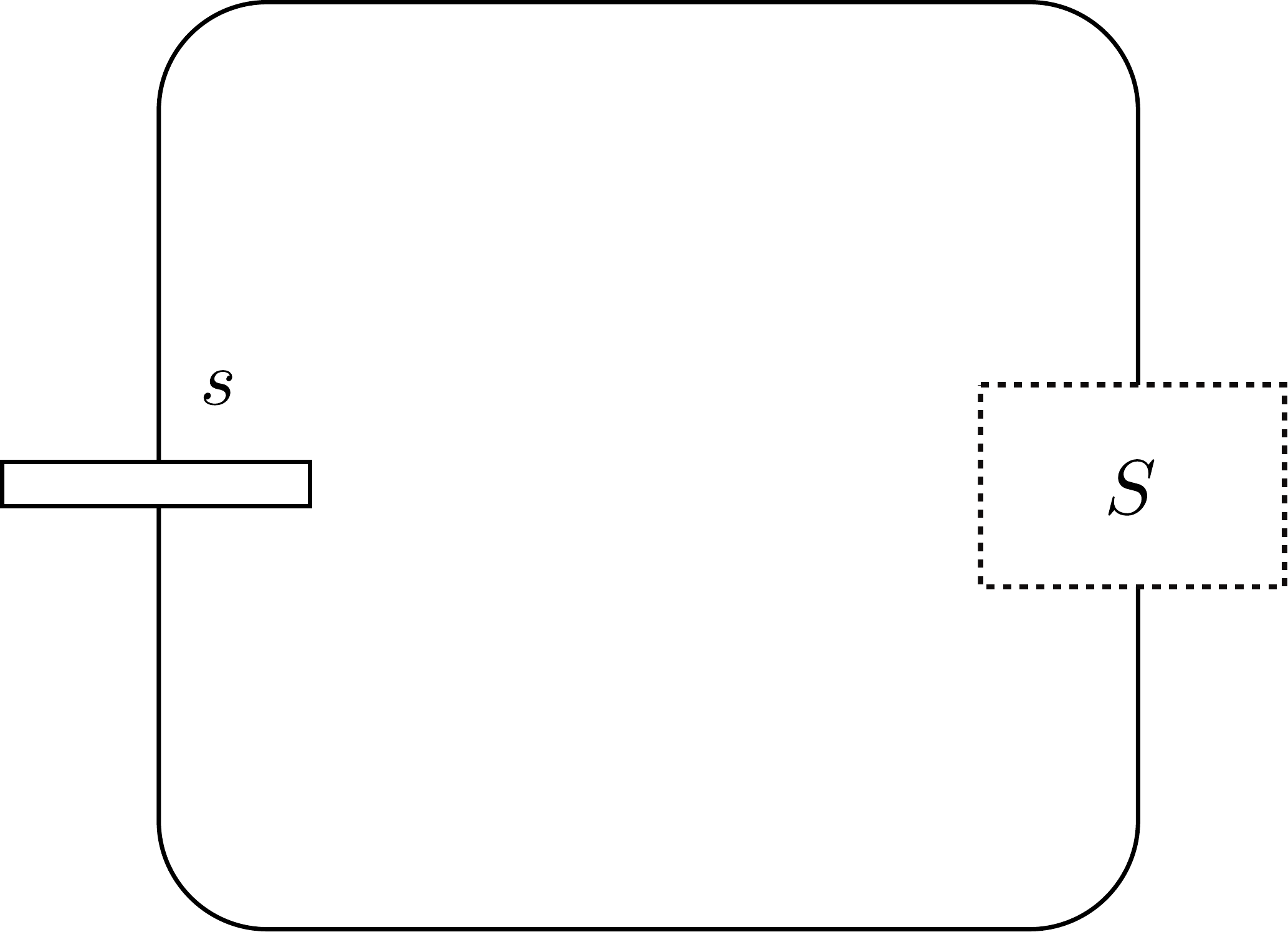} .}}
\end{align} 
\end{lem}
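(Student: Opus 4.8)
The statement to prove is Lemma~\ref{ExtractLem}: the extraction rule
\[
\vcenter{\hbox{\includegraphics[scale=0.275]{e-LoopNetwork2.pdf}}}
\quad = \quad (T) \times \vcenter{\hbox{\includegraphics[scale=0.275]{e-LoopNetwork1.pdf}}},
\]
which says that a network $T$ sandwiched between two size-$s$ projector boxes inside a larger network factors out as its scalar evaluation $(T)$, leaving behind a single cable of size $s$ (equivalently, the two projector boxes fused into one).

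\textbf{Proof proposal.} The plan is to reduce the claim to linearity plus two base cases, exactly mirroring the case analysis used in Lemma~\ref{TieOffLem}. First I would note that both sides of~\eqref{ExtractID} are linear in $T$: the left side is linear because inserting a fixed configuration of projector boxes around $T$ and closing up into the ambient network is a linear operation on $T \in \TL_s$, and the right side is linear because the evaluation map $(\,\cdot\,)$ is linear by its definition~\eqref{evT}. Hence it suffices to verify~\eqref{ExtractID} when $T$ is a single link diagram in $\LD_s$. Since $s < \ppmin(q)$, the projector boxes $\WJProj\sub{s}$ are well-defined, and I can freely use properties~\eqref{ProjectorID0},~\eqref{ProjectorID2}, and~\eqref{ProjectorID1}.

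For a single link diagram $T$ there are two cases. If $T$ contains a turn-back link, then on the left side that link, together with the adjacent projector box, produces a diagram in which a link is attached to two nodes of the same projector box; by property~\eqref{ProjectorID2} this evaluates to zero. On the right side, $(T) = 0$ by rule~\eqref{TurnBack0}, so both sides vanish and agree. If $T$ contains no turn-back link, then $T$ consists entirely of $s$ through-links (it is the identity link diagram $\mathbf{1}_{\TL_s}$, since an $s$-link diagram with $s$ top and $s$ bottom nodes and no turn-backs must be the identity), so $(T) = 1$ by~\eqref{ThroughPathWeight}. In this case the left side has $\WJProj\sub{s}\,\mathbf{1}_{\TL_s}\,\WJProj\sub{s} = \WJProj\sub{s}^2 = \WJProj\sub{s}$ by idempotency~\eqref{ProjectorID0}, i.e.\ the two boxes collapse into one, which is precisely the right side with its factor $(T) = 1$. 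This handles both cases.

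\textbf{Main obstacle.} The genuine content is not any calculation but making the reduction to link diagrams airtight: one must check that the ``larger network'' surrounding the two projector boxes interacts with the local picture only through the $s$ strands entering and leaving the sandwiched region, so that replacing the local subdiagram $\WJProj\sub{s}\,T\,\WJProj\sub{s}$ by $(T)\cdot\WJProj\sub{s}$ really is a valid substitution inside the evaluation of the full network. I would dispatch this by observing that evaluation~\eqref{evT} is multiplicative over connected components, and that in the turn-back case the relevant component is closed off by~\eqref{ProjectorID2} independently of the rest of the network, while in the through-link case the substitution changes no connectivity at all outside the boxed region. After that, the two base cases above are immediate, and linearity finishes the proof.
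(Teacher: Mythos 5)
Your overall route is the same as the paper's: split into the case where the sandwiched piece has a turn-back link (both sides vanish, by property~\eqref{ProjectorID2} on the left and rule~\eqref{TurnBack0} on the right) and the case where it does not. However, there is a genuine gap in your second case. The lemma is stated for a \emph{network} $T$, and a network with no turn-back paths need not be the identity link diagram: it consists of $s$ through-paths \emph{together with some number $k \geq 0$ of closed loops}, so that $(T) = \nu^k$ by~\eqref{evT2}, not $1$. Your reduction ``by linearity to $T \in \LD_s$'' silently converts the network into a tangle and thereby discards exactly these loops, and your subsequent claims ``$T$ is the identity link diagram'' and ``$(T)=1$'' are false whenever $k \geq 1$. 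This is not a fringe case: the lemma is applied in the paper precisely to peel loop factors out of networks such as $\hcancel{\alpha} \BarAction \hcancel{\beta}$, where the piece between the projector boxes does contain loops, so a proof that only covers loopless $T$ does not suffice as written.

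The repair is short and brings you back to the paper's argument: in the no-turn-back case, first replace each of the $k$ loops of $T$ by a multiplicative factor $\nu$ (rule~\eqref{LoopWeight}), which multiplies the left side of~\eqref{ExtractID} by $\nu^k$ and leaves only the $s$ through-strands between the boxes; then idempotency~\eqref{ProjectorID0} collapses the two boxes into one, and since $\nu^k = (T)$ by~\eqref{evT2} the two sides agree. With that amendment (and dropping the unnecessary linearity reduction, since $T$ is a single geometric network rather than an element of $\TL_s$), your proof coincides with the one in the paper.
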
 

\begin{proof} 
We consider two cases:
\begin{enumerate}[leftmargin=*]
\itemcolor{red}
\item $T$ contains a turn-back link.
Then, the left side of~\eqref{ExtractID} vanishes by property~\eqref{ProjectorID2}, and the right side of~\eqref{ExtractID} vanishes 
because $(T) = 0$ by rule~\eqref{TurnBack0}.  Hence, equality in~\eqref{ExtractID} holds for this case, with both sides equaling zero.  

\item $T$ does not contain a turn-back link.
Then, the network $T$ comprises only through-links and a number $k$ of loops. After replacing each loop by a factor $\nu$ on the left side of~\eqref{ExtractID}, we obtain
\begin{align} \label{ExtractProof} 
\vcenter{\hbox{\includegraphics[scale=0.275]{e-LoopNetwork2.pdf}}} 
\quad = \quad \nu^k \,\, \times \,\, \vcenter{\hbox{\includegraphics[scale=0.275]{e-LoopNetwork1.pdf} .}}
\end{align} 
However, by~\eqref{evT2}, the factor $\nu^k$ on the right side equals the evaluation $(T)$ of $T$,
so the right side of~\eqref{ExtractProof} equals the right side of~\eqref{ExtractID}.  
Hence,~\eqref{ExtractID} holds also for this case.
\end{enumerate}
This concludes the proof. 
\end{proof}

We define the \emph{Theta network}~\cite{kl} to be the tangle
\begin{align} \label{ThetaDefinition} 
\vcenter{\hbox{\includegraphics[scale=0.275]{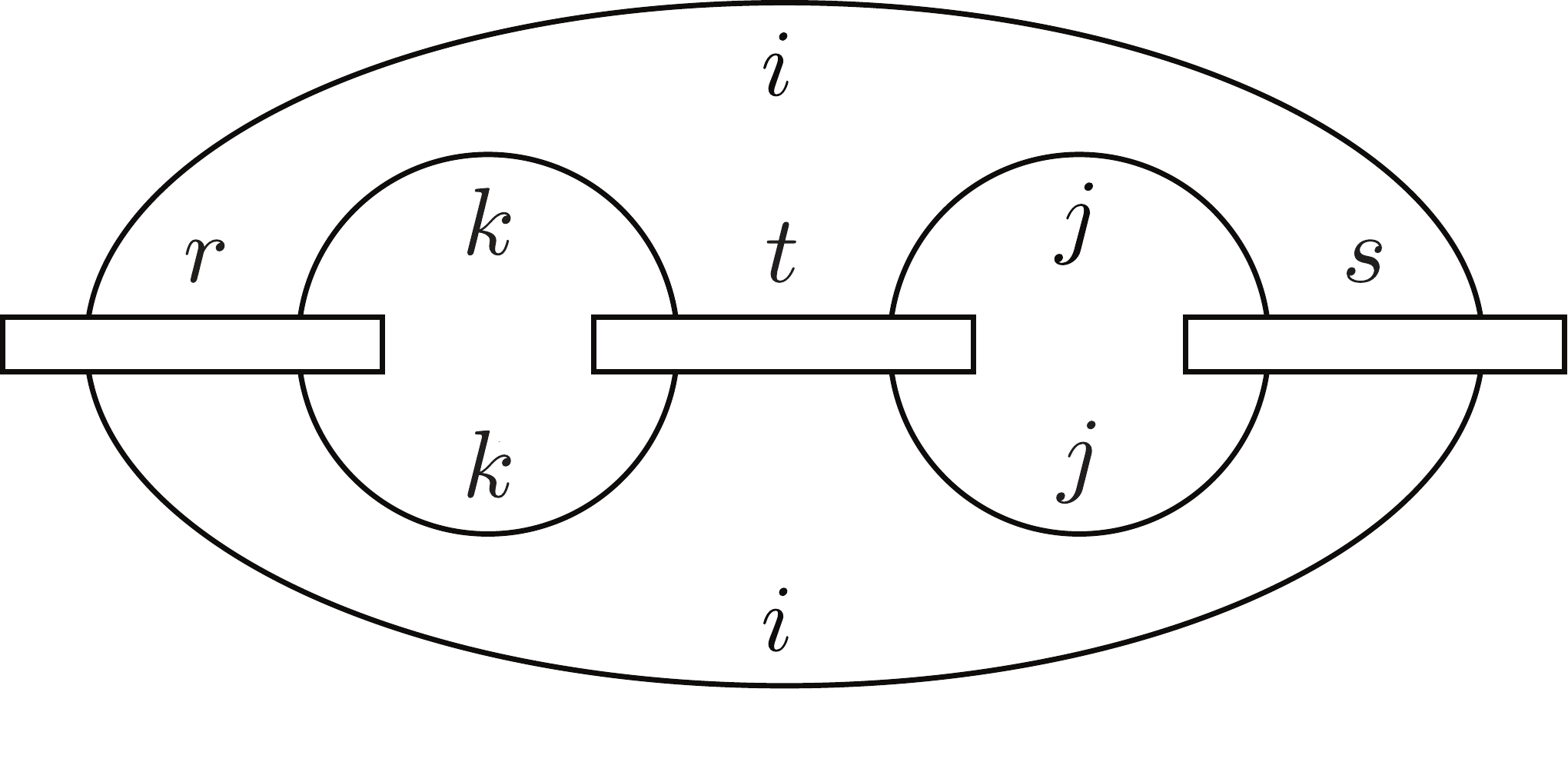}}} \quad 
& = \quad \vcenter{\hbox{\includegraphics[scale=0.275]{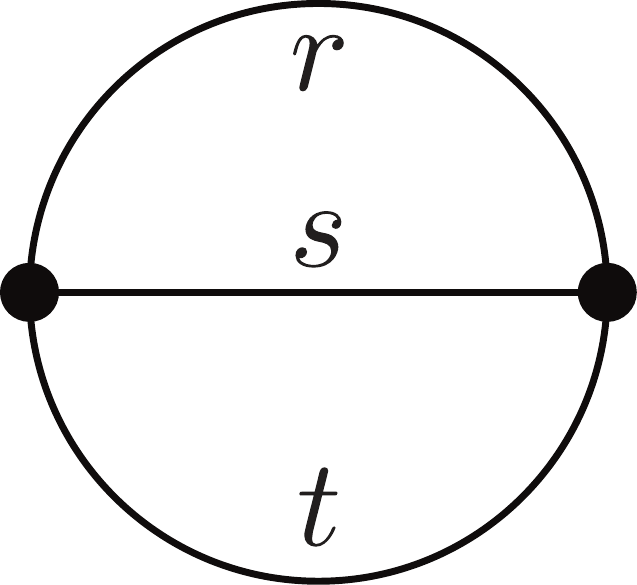} ,}} \qquad \qquad
\begin{aligned} 
i & = \frac{r + s - t}{2} , \\[.7em] 
j & = \frac{s + t - r}{2} , \\[.7em] 
k & = \frac{t + r - s}{2} .
\end{aligned}
\end{align} 
We denote the evaluation of the Theta network by $\ThetaNet(r,s,t)$.

Together with lemma~\ref{ExtractLem}, the following lemma~\ref{LoopErasureLem} 
is a crucial tool in section~\ref{GramMatrixSect} and, in particular, in section~\ref{RadicalSect}.

\begin{lem} \label{LoopErasureLem} 
Suppose $\max(r,s,s',t) < \ppmin(q)$.  Then we have 
\begin{align}\label{LoopErasure1} 
 \left( \; \vcenter{\hbox{\includegraphics[scale=0.275]{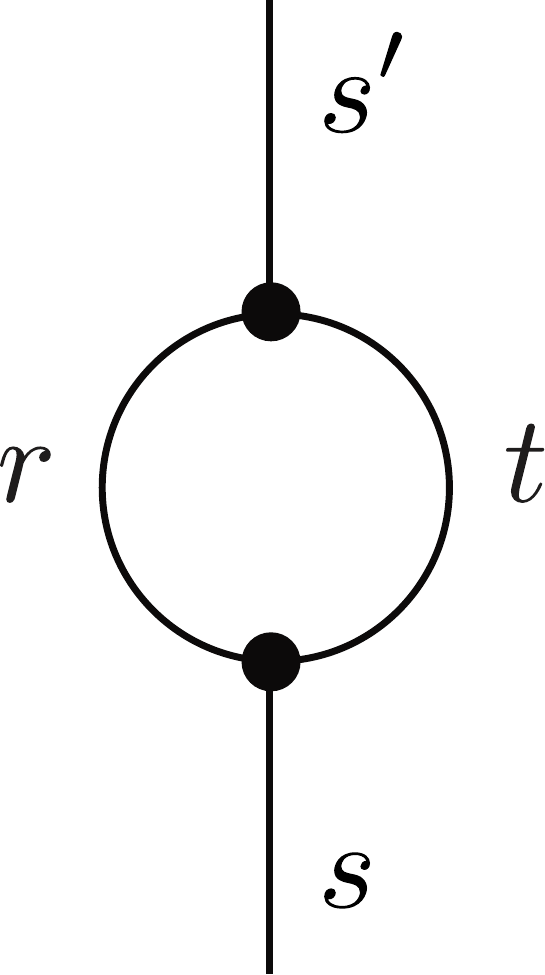}}} \;  \right) \quad 
= \quad \delta_{s, s'} \frac{ \ThetaNet(r,s,t) }{(-1)^s [s+1]} . 
\end{align} 
\end{lem}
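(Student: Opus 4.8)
The statement~\eqref{LoopErasure1} asserts that the network obtained by joining a closed three-vertex $(r,s,t)$ to a closed three-vertex $(r,s',t)$ along the $r$-cable and the $t$-cable, leaving the $s$-cable and the $s'$-cable open, evaluates to a Kronecker delta in $s,s'$ times $\ThetaNet(r,s,t)/\big((-1)^s[s+1]\big)$. My plan is to split into the two cases $s \neq s'$ and $s = s'$.

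First I would dispose of the case $s \neq s'$. Without loss of generality assume $s < s'$. Recall that each closed three-vertex $(r,s,t)$ carries Jones-Wenzl projector boxes of sizes $r$, $s$, $t$ on its three legs (definition~\eqref{3vertex1}), and these boxes exist since $\max(r,s,s',t) < \ppmin(q)$. The network in~\eqref{LoopErasure1} therefore has a projector box of size $s'$ sitting between the two vertices on the $s'$-leg. Since the $s'$ links must enter the left vertex, where only $r$ and $t$ cables are available and $\frac{r+s'+t}{2} $-many endpoints... more precisely, since $s' > s$, the number $i' = \frac{r+s'-t}{2}$ or $j' = \frac{s'+t-r}{2}$ forces at least one link from the $s'$-cable to turn back and return to the projector box of size $s'$; by property~\eqref{ProjectorID2} (equivalently~\eqref{wj2}) such a configuration evaluates to zero. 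This is the same argument used in the proof of lemma~\ref{OrthTailsLem}, so I would simply invoke that reasoning.

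Next, the case $s = s'$. Here I would use lemma~\ref{InsProjBoxLem} to recognize that the network is (up to inserting a harmless projector box of size $s$ across the two open $s$-cables, which does not change the evaluation) exactly a closed loop formed by two three-vertices. Concretely, closing off the $s$-cable turns~\eqref{LoopErasure1} into the Theta network $\ThetaNet(r,s,t)$ of~\eqref{ThetaDefinition}, but now with an extra projector box of size $s$ on the closing cable; applying lemma~\ref{TieOffLem} with the network $T$ being the remaining part shows that tying off this $s$-cable with a projector box costs a factor $(-1)^s[s+1]$ relative to tying it off directly. Unwinding this: the network of~\eqref{LoopErasure1} with its two open $s$-ends joined by a loop equals $\ThetaNet(r,s,t)$; on the other hand, by lemma~\ref{ExtractLem} (or directly lemma~\ref{TieOffLem}), joining the open $s$-ends of the left side of~\eqref{LoopErasure1} produces $(-1)^s[s+1]$ times the left side evaluated as a scalar. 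Equating these gives that the left side of~\eqref{LoopErasure1} equals $\ThetaNet(r,s,t)/\big((-1)^s[s+1]\big)$, which is the claim; here $[s+1] \neq 0$ because $s < \ppmin(q)$.

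The main obstacle I anticipate is keeping the bookkeeping of projector boxes straight: the closed three-vertices already contain projector boxes, so when I "close off" a cable I must be careful whether an additional box is present or absent, since that is precisely the difference between getting $\ThetaNet(r,s,t)$ and getting $(-1)^s[s+1]$ times the desired scalar. The clean way to handle this is to apply lemma~\ref{TieOffLem} in the direction that inserts the box, with $T$ taken to be the two-vertex network with its $s$-cable open, and then read off~\eqref{LoopErasure1} directly; the vanishing in the $s \neq s'$ case is routine given lemma~\ref{OrthTailsLem}'s argument, and the nonvanishing of $[s+1]$ is immediate from $s < \ppmin(q)$ and~\eqref{Qinteger}.
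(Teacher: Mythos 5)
Your proof is correct and follows essentially the same route as the paper: the case $s \neq s'$ is killed by a turn-back link on a single projector box via property~\eqref{ProjectorID2}, and the case $s = s'$ is handled by merging boxes with~\eqref{ProjectorID0} and lemma~\ref{InsProjBoxLem} and then applying lemma~\ref{TieOffLem} to relate the open network to $\ThetaNet(r,s,t)$, producing the factor $(-1)^s[s+1]$ (nonzero since $s < \ppmin(q)$). No gaps worth flagging.
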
 

\begin{proof} 
First, we show that if $s \neq s'$ in~\eqref{LoopErasure1}, then both sides vanish. The right side of~\eqref{LoopErasure1} is clearly zero if $s \neq s'$.
Also, if $s' < s$ (resp.~$s < s'$), then on the left side of~\eqref{LoopErasure1}, a turn-back link touches two nodes of the projector box with size $s$ 
in the lower (resp.~upper) vertex, so the network vanishes by property~\eqref{ProjectorID2}.

Thus, we may assume $s = s'$. Then, after substituting~\eqref{3vertex1}, 
simplifying via~\eqref{ProjectorID0}, and applying lemmas~\ref{InsProjBoxLem} and~\ref{TieOffLem}, 
the network on the left side of~\eqref{LoopErasure1} becomes
\begin{align} \label{LoopErasure3}
\vcenter{\hbox{\includegraphics[scale=0.275]{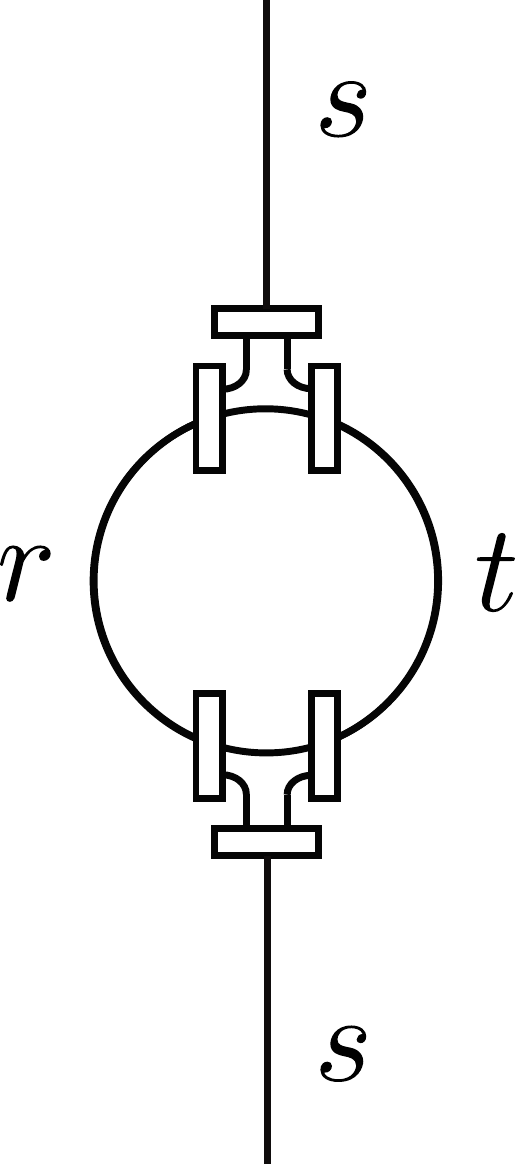}}} \quad 
\underset{\eqref{ProjectorID0}}{\overset{\eqref{InsProjBox}}{=}}
\quad \vcenter{\hbox{\includegraphics[scale=0.275]{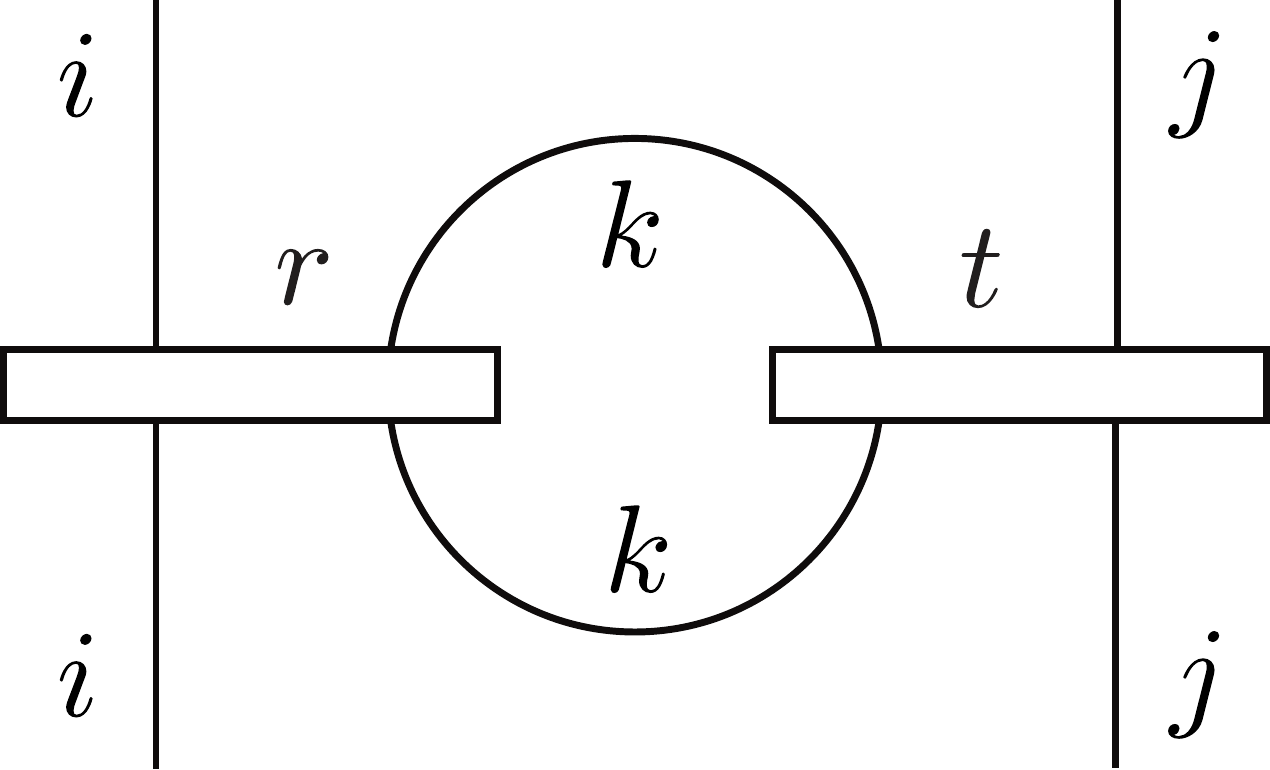}}} \quad
\underset{\eqref{ThetaDefinition}}{\overset{\eqref{ThetaExtraction3}}{=}}
\quad \frac{1}{(-1)^s [s+1]} \,\, \times \,\, 
\vcenter{\hbox{\includegraphics[scale=0.275]{e-ThetaNet2.pdf}}} \quad 
\vcenter{\hbox{\includegraphics[scale=0.275]{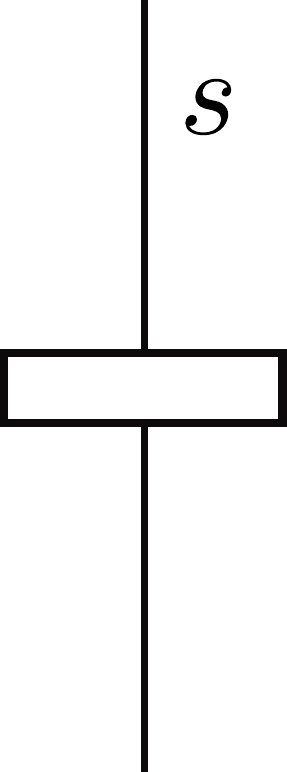} .}} 
\end{align} 
Taking evaluations of both sides and recalling identity~\eqref{WeightOne} from lemma~\ref{InsProjBoxLem} finishes the proof.
\end{proof}

We find an explicit formula for the evaluation of the Theta network in lemma~\ref{ThetaLem}.

\begin{lem} \label{QintIDLemAndSumFormulaLem2}
\
\begin{enumerate}
\itemcolor{red}
\item \label{QintIDItem}
The following identity holds, for all $i,j,k \in \bZ$:
\begin{align} \label{QintID} 
[i] [ j - k]+[j] [ k - i ]+[k] [ i - j ] = 0. 
\end{align} 

\item \label{SumFormulaItem}
The following identity holds, for all $i,k \in \bZnn$, and $j \in \bZ$: 
\begin{align} \label{SumFormula2}
\sum_{m = 0}^{\min(i,k)} \frac{(-1)^{m}}{[j+m+1]} \frac{[i+k-m]!}{[i-m]![k-m]![m]!} 
= \frac{[j]! [i+j+k+1]!}{[i+j+1]! [j+k+1]!} .
\end{align}
\end{enumerate}
\end{lem}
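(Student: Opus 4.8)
\textbf{Proof plan for Lemma~\ref{QintIDLemAndSumFormulaLem2}.}
For item~\ref{QintIDItem}, the plan is to expand each quantum integer via its definition~\eqref{Qinteger}, $[k] = (q^k - q^{-k})/(q-q^{-1})$, and clear the common denominator $q - q^{-1}$. After multiplying through, the left side of~\eqref{QintID} becomes $(q-q^{-1})^{-2}$ times a sum of terms of the form $\pm(q^{a} - q^{-a})(q^{b} - q^{-b})$. Expanding the products gives exponents $q^{\pm(a+b)}$ and $q^{\pm(a-b)}$; with $(a,b) = (i, j-k)$, $(j, k-i)$, $(k, i-j)$ the ``sum'' exponents become $q^{\pm(i+j-k)}$, $q^{\pm(j+k-i)}$, $q^{\pm(k+i-j)}$ and the ``difference'' exponents all collapse to $q^{\pm(i-j+k)}$-type monomials that pair up and cancel in the alternating sum. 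This is a short, mechanical verification; the only care needed is bookkeeping of signs, and it can be done in one display.

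For item~\ref{SumFormulaItem}, this is a $q$-analogue of a known hypergeometric/binomial summation (essentially a $q$-Vandermonde- or $q$-Saalsch\"utz-type identity), and the cleanest route is induction. I would fix $j \in \bZ$ and $i \in \bZnn$ and induct on $k \in \bZnn$. The base case $k = 0$ reduces~\eqref{SumFormula2} to the trivial identity $1/[j+1] = [j]!\,[i+j+1]!/([i+j+1]!\,[j+1]!)$. For the inductive step, I would write the summand for parameter $k+1$ in terms of the summand for parameter $k$ using the elementary quantum-factorial recursion $[i+k+1-m]! = [i+k+1-m]\,[i+k-m]!$ and $[k+1-m]! = [k+1-m]\,[k-m]!$, split the factor $[i+k+1-m]/[k+1-m]$ or reindex $m \mapsto m-1$ in part of the sum, and collect. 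The key telescoping identity that makes the reindexed sum recombine is exactly the three-term relation~\eqref{QintID} from item~\ref{QintIDItem} (applied with a suitable choice of $i,j,k$ among the running parameters), which is why the two items are proved together. After the dust settles, the sum for $k+1$ equals the claimed closed form $[j]!\,[i+j+k+2]!/([i+j+1]!\,[j+k+2]!)$, again using $[j+k+2]! = [j+k+2]\,[j+k+1]!$ and $[i+j+k+2]! = [i+j+k+2]\,[i+j+k+1]!$ to match against the inductive hypothesis.

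The main obstacle is the inductive step of~\eqref{SumFormula2}: one must manipulate the alternating sum so that, after reindexing, two pieces recombine into a single sum of the correct shape, and verifying that the leftover rational factors in $[\,\cdot\,]$ simplify to the target requires invoking~\eqref{QintID} in precisely the right specialization. An alternative, possibly cleaner, route would be to recognize~\eqref{SumFormula2} as a specialization of the $q$-Chu--Vandermonde identity ${}_2\phi_1(q^{-n}, b; c; q, q) = (c/b;q)_n/(c;q)_n \cdot b^n$ after translating quantum factorials into $q$-Pochhammer symbols and matching parameters; if a citable form of $q$-Chu--Vandermonde is available this would replace the induction entirely, at the cost of a somewhat tedious change of notation. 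I would present the self-contained induction to keep the appendix self-contained, falling back on the hypergeometric identification only as a remark.
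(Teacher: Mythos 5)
Your item~\ref{QintIDItem} is handled exactly as in the paper (a direct expansion using definition~\eqref{Qinteger}); no issue there. For item~\ref{SumFormulaItem}, the paper also proceeds by ``recursion plus induction with \eqref{QintID} as the engine,'' but it organizes the induction differently and more robustly: writing $A^j_{i,k}$ for either side of \eqref{SumFormula2}, the paper checks by a direct computation with \eqref{QintID} that both sides satisfy
\begin{align}
[k]\, A^j_{i-1,k} - [i]\, A^j_{i,k-1} = [k-i]\, A^{j+1}_{i-1,k-1} ,
\qquad A^j_{0,k} = A^j_{i,0} = \frac{1}{[j+1]} ,
\end{align}
(with the convention $A^j_{-1,k} = 0 = A^j_{i,-1}$) and then inducts on $i$. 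Note the shift $j \mapsto j+1$ and the simultaneous decrease of both $i$ and $k$: the induction hypothesis must be quantified over all $j$ and $k$, so your framing ``fix $j$ and $i$, induct on $k$'' is already too rigid as stated.

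The genuine gap is in your inductive step, which is asserted rather than carried out. The one-step recursion in $k$ that would close your induction is $[j+k+2]\,S(i,j,k+1) = [i+j+k+2]\,S(i,j,k)$, where $S(i,j,k)$ denotes the left side of \eqref{SumFormula2}. Termwise the two sums do combine pleasantly: the consequence $[j+k+2][i+k+1-m] - [i+j+k+2][k+1-m] = [i][j+m+1]$ of \eqref{QintID} cancels the factor $1/[j+m+1]$, but what is left over is
\begin{align}
[i] \sum_{m=0}^{\min(i,k)} \frac{(-1)^m\, [i+k-m]!}{[i-m]!\,[k+1-m]!\,[m]!} ,
\end{align}
plus, when $i \geq k+1$, the unmatched boundary term $(-1)^{k+1}[i]!/([i-k-1]!\,[k+1]!)$ coming from $m = k+1$ of the longer sum. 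Showing that these cancel (and that the displayed sum vanishes when $i \leq k$) is a separate alternating-sum identity that a single application of \eqref{QintID} does not deliver, and your ``reindex $m \mapsto m-1$ and collect, after the dust settles'' does not address it. So you must either prove that auxiliary identity (yet another induction), or switch to the paper's symmetric two-variable recursion above, or invoke your fallback: the identification with $q$-Chu--Vandermonde / $q$-Saalsch\"utz is legitimate, but then the substance of the write-up becomes the translation between the balanced brackets $[n] = (q^n - q^{-n})/(q-q^{-1})$ of \eqref{Qinteger} and $q$-Pochhammer symbols, including the check that all stray powers of $q$ cancel.
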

\begin{proof}
The proof of identity~\eqref{QintID} in item~\ref{QintIDItem} is a straightforward exercise, using definition~\eqref{Qinteger} of the $q$-integers.

For item~\ref{SumFormulaItem}, we first observe by a straightforward calculation using identity~\eqref{QintID} 
that both sides of~\eqref{SumFormula2} satisfy the recursion 
\begin{align} \label{RecursionForI}
[k] A_{i-1,k}^j - [i] A_{i,k-1}^j = \; & [k-i] A_{i-1,k-1}^{j+1} , \qquad A_{0,k}^j =  A_{i,0}^j = \frac{1}{[j+1]} , 
\qquad \text{for all $i,k \in \bZnn$, and $j \in \bZ$}
\end{align}
with the convention that $A_{-1,k}^j = 0 =A_{i,-1}^j$.  
It follows immediately from~\eqref{RecursionForI} that if $i=0$, then identity~\eqref{SumFormula2} holds, for all $k \in \bZnn$, and $j \in \bZ$. 
Then, assuming that identity~\eqref{SumFormula2} holds for all $k \in \bZnn$, $j \in \bZ$, and $i \in \{0, 1, \ldots, n-1\}$, for some $n \in \bZpos$,
it remains to conclude that by~\eqref{RecursionForI} and induction, it also holds when $i=n$.
\end{proof}

\begin{lem} \label{ThetaLem} 
Suppose $\max(r,s,t) < \ppmin(q)$. Then we have 
\begin{align} \label{ThetaFormula1}
\ThetaNet(r,s,t)
= \frac{(-1)^{\frac{r + s + t}{2}} \left[ \frac{r + s + t}{2} + 1 \right]! \left[ \frac{ r + s - t }{2} \right]! \left[ \frac{ s + t - r}{2} \right]! \left[ \frac{t + r - s}{2} \right]! }{[ r ]! [s ]! [ t ]!} . 
\end{align}
\end{lem}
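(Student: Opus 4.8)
The plan is to establish the closed formula for $\ThetaNet(r,s,t)$ by induction on the smallest of the three ``internal'' indices $i,j,k$ attached to the closed three-vertex via~\eqref{ThetaDefinition}, say on $k = \tfrac{t+r-s}{2}$, reducing it to the summation identity~\eqref{SumFormula2} of lemma~\ref{QintIDLemAndSumFormulaLem2}. First I would fix a convenient recursion on the Jones-Wenzl projectors: apply the recursion relation~\eqref{RecursionDiagram} to one of the three projector boxes in the Theta network~\eqref{ThetaDefinition} (say the box of size $s$), which expresses $\ThetaNet(r,s,t)$ as $\ThetaNet(r,s-1,t)$ plus a correction term of the form $\tfrac{[s-1]}{[s]}$ times a network containing a turn-back that can be resolved using~\eqref{ProjectorID0} and lemma~\ref{LoopLemGen}. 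This should produce a scalar recursion in $s$ whose solution, after repeatedly unwinding, matches the claimed formula; alternatively, and this is the route I would actually favor, I would decompose the middle projector in the Theta network completely via~\eqref{ProjDecomp} and lemma~\ref{ExtractLem}, collapsing it to a single sum over the number $m$ of turn-backs, with each term a ratio of quantum factorials and a factor $\tfrac{(-1)^m}{[\,\cdot\,+m+1]}$; this is precisely the left-hand side of~\eqref{SumFormula2} after suitable substitution $i \mapsto$ (one of the three half-sums), $k \mapsto$ (another), $j \mapsto$ (the remaining combination).

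Concretely, the second approach proceeds as follows. One writes the $s$-cable in~\eqref{ThetaDefinition} with its Jones-Wenzl projector expanded by Morrison's formula~\eqref{ProjDecomp}; each internal link diagram on $s$ strands that contributes nontrivially to the closed network is determined by how many of its strands turn back, say $m$ of them with $0 \leq m \leq \min(i,k)$ where $i,j,k$ are as in~\eqref{ThetaDefinition}. For each such $m$, lemma~\ref{ExtractLem} and lemma~\ref{LoopErasureLem} reduce the resulting network to a product of Theta evaluations of smaller index together with a loop contributing a factor $[\,\ell+1\,]$ in the denominator, and the combinatorial coefficient from~\eqref{ProjDecomp} supplies the quantum-multinomial weight $\tfrac{[i+k-m]!}{[i-m]![k-m]![m]!}$. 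Summing over $m$ and invoking identity~\eqref{SumFormula2} of lemma~\ref{QintIDLemAndSumFormulaLem2} collapses the sum to the single ratio of quantum factorials appearing in~\eqref{ThetaFormula1}. The sign $(-1)^{(r+s+t)/2}$ is bookkept from the $(-1)^m$ factors combined with the base case.

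For the induction-based route, the base case is $k = 0$, i.e.\ $s = r + t$: here the Theta network~\eqref{ThetaDefinition} has its $s$-cable built entirely from the parallel juxtaposition of the $r$-cable and the $t$-cable, the projector of size $s = r+t$ restricted to this configuration acts as the identity by property~\eqref{ProjectorID1} (since $\WJProj\sub{r+t}$ absorbs $\WJProj\sub{r} \otimes \WJProj\sub{t}$), and one checks directly that~\eqref{ThetaFormula1} gives $\ThetaNet(r,r+t,t) = 1$, consistent with what is already used in the proof of proposition~\ref{VanishDetLem2}. The inductive step lowers $k$ by one using the recursion~\eqref{RecursionDiagram} on the $t$-cable, extracting the turn-back contribution with lemma~\ref{LoopLemGen}, and verifying that the ratio of the proposed closed forms at $k$ and $k-1$ equals the scalar factor produced by the recursion — a routine quantum-factorial manipulation using the elementary identity~\eqref{QintID}.

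The main obstacle I anticipate is the careful combinatorial bookkeeping in the second approach: matching the coefficients coming out of Morrison's explicit Jones-Wenzl decomposition~\eqref{ProjDecomp} to the summand of~\eqref{SumFormula2}, including getting all three index substitutions and the denominator shift right, and tracking signs uniformly. This is where sign errors and off-by-one errors in the quantum factorials typically creep in. The induction-on-$k$ route avoids invoking~\eqref{ProjDecomp} altogether and is cleaner, but it has its own delicate point: resolving the turn-back produced by~\eqref{RecursionDiagram} requires the full strength of lemma~\ref{LoopLemGen} with the correct value of $r$ (the size of the cable the loop wraps), and one must confirm that the condition $s + r < \ppmin(q)$ in that lemma is implied by the hypothesis $\max(r,s,t) < \ppmin(q)$ at each stage of the recursion — which it is, since all intermediate cable sizes in a Theta network are bounded by $\max(r,s,t)$. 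Given that, I would present the induction-on-$k$ proof as the primary argument and relegate the summation-identity verification to the background lemma~\ref{QintIDLemAndSumFormulaLem2}, which is already stated and available.
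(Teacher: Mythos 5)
Your primary argument (the induction on $k=\tfrac{t+r-s}{2}$) has a concrete error at the base case. At $k=0$, i.e.\ $s=r+t$, formula~\eqref{ThetaFormula1} gives
\begin{align}
\ThetaNet(r,r+t,t) = \frac{(-1)^{r+t}\,[r+t+1]!\,[r]!\,[t]!\,[0]!}{[r]!\,[r+t]!\,[t]!} = (-1)^{r+t}\,[r+t+1] ,
\end{align}
not $1$; and the network itself agrees with this, since with $k=0$ the boxes of sizes $r$ and $t$ are absorbed into the box of size $s$ by~\eqref{ProjectorID1}, leaving a single closed cable of size $s$ through one projector box, which evaluates to $(-1)^s[s+1]$ by lemma~\ref{LoopLemGen}. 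What is actually used in the proof of proposition~\ref{VanishDetLem2} is the \emph{normalized} quantity $\ThetaNet(r,s,t)/\bigl((-1)^s[s+1]\bigr)$ — the factor appearing in the Gram determinant — which does equal $1$ when $s=r+t$; taking the unnormalized value to be $1$ breaks your induction from the start. The inductive step is also thinner than you suggest: applying~\eqref{RecursionDiagram} to the $t$-box of the closed Theta network produces a first term in which the cable still has $t$ strands but carries a box of size $t-1$ (this is not $\ThetaNet(r,s,t-1)$), plus a term with an inserted generator, and resolving these is not just lemma~\ref{LoopLemGen}; making it rigorous essentially forces you through the same bubble-removal and coefficient computations (lemmas~\ref{InsProjBoxLem},~\ref{ExtractLem},~\ref{LoopErasureLem} and an identity equivalent to~\eqref{SumFormula2}) that you were hoping to avoid.

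Your second route is, in outline, the paper's actual proof, and you should promote it to the primary argument. The paper first writes $\ThetaNet(r,s,t) = (-1)^s[s+1]$ times the opened (``bubble'') network, as in the proof of lemma~\ref{LoopErasureLem}, then expands the projector box of size $r$ — so the number $m$ of turn-backs runs over $0\leq m\leq \min(i,k)$, consistent with $r=i+k$ — using the closed-form coefficients of the Jones-Wenzl decomposition to weight the class of diagrams with $m$ turn-backs by $\tfrac{[i]![k]!}{[i+k]!}\tfrac{[i+k-m]!}{[i-m]![k-m]![m]!}$; each resulting network evaluates via lemmas~\ref{LoopLemGen} and~\ref{InsProjBoxLem} to $\tfrac{(-1)^{k-m}[j+k+1]}{[j+m+1]}$, and the sum collapses by~\eqref{SumFormula2}. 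Note the bookkeeping slip in your write-up: if you instead expand the $s$-box, the turn-back count is bounded by $\min(i,j)$ (since $s=i+j$), not $\min(i,k)$; by the symmetry of $\ThetaNet$ either choice works, but the indices fed into~\eqref{SumFormula2} must be matched accordingly.
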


\begin{proof} 
According to 
the proof of lemma~\ref{LoopErasureLem}, we write the evaluation of the Theta network in the following form:
\begin{align} \label{ThetaPrePre}
\ThetaNet(r,s,t) \overset{\eqref{LoopErasure3}}{=}
(-1)^s [s+1] \,\, \times \,\,
\left( \;  \vcenter{\hbox{\includegraphics[scale=0.275]{e-NeatTangle10.pdf}}} \;  \right) \; ,
\end{align} 
where $i$, $j$ and $k$ are given in~\eqref{ThetaDefinition}. 
Decomposing the projector box of size $r$ as in~\eqref{ProjDecomp}
and using the formula from~\cite[proposition~\red{A.9}]{fp0}
for the coefficients of this decomposition gives the sum formula
\begin{align} \label{ThetaPreSum}
\eqref{ThetaPrePre} = 
(-1)^{i+j} [i+j+1] \frac{[i]![k]!}{[i+k]!} \sum_{m = 0}^{\min(i,k)} \frac{[i+k-m]!}{[i-m]![k-m]![m]!} \,\, \times \,\,
 \left( \; \vcenter{\hbox{\includegraphics[scale=0.275]{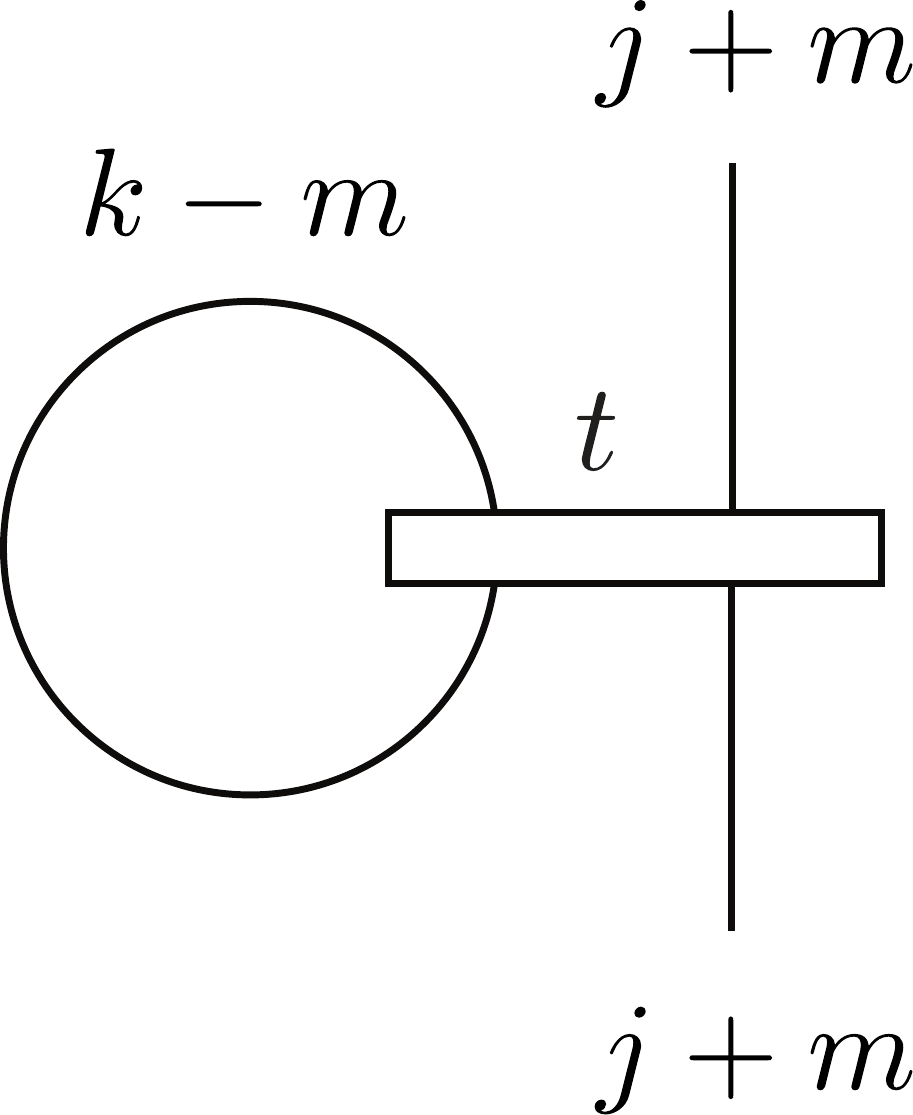}}} \; \; \right) \; .
\end{align} 
Now, we evaluate the networks on the right side of~\eqref{ThetaPreSum} using lemmas~\ref{LoopLemGen} and~\ref{InsProjBoxLem}:
\begin{align} \label{ThetaSum}
 \left( \; \vcenter{\hbox{\includegraphics[scale=0.275]{e-NeatTangle11.pdf}}} \; \; \right) \quad = \quad
\frac{(-1)^{k-m} [j+k+1]}{[j+m+1]} .
\end{align} 
Inserting~\eqref{ThetaSum} into~\eqref{ThetaPreSum}, we obtain
\begin{align}
\ThetaNet(r,s,t) \overset{(\textnormal{\ref{ThetaPrePre}--\ref{ThetaSum}})}{=}
(-1)^{i+j+k} [i+j+1] [j+k+1] \frac{[i]![k]!}{[i+k]!} \sum_{m = 0}^{\min(i,k)} \frac{[i+k-m]!}{[i-m]![k-m]![m]!} \frac{(-1)^{m}}{[j+m+1]} .
\end{align} 
Using identity~\eqref{SumFormula2} from lemma~\ref{QintIDLemAndSumFormulaLem2} 
and plugging in the values of $i$, $j$ and $k$ from~\eqref{ThetaDefinition} 
gives formula~\eqref{ThetaFormula1}.
\end{proof}

\section{Jones-Wenzl algebra} \label{AppWJ}

In this appendix, we detail the relationship of the valenced Temperley-Lieb algebra $\TL_\multii(\nu)$ 
with a certain subalgebra of the Temperley-Lieb algebra $\TL_{\Summed_\multii}(\nu)$, that we call the ``Jones-Wenzl algebra.''
In particular, we show in corollary~\ref{AnIsoCor} that $\TL_\multii(\nu)$ is isomorphic to this subalgebra.
Throughout, we assume that $\max (\multii) < \ppmin(q)$.

To begin, we define the \emph{Jones-Wenzl algebra} $\WJ_\multii(\nu)$ to be
\begin{align} \label{WJAdef}
\WJ_\multii(\nu) := \WJProj_\multii \TL_{\Summed_\multii}(\nu) \WJProj_\multii 
= \big\{ \WJProj_\multii T \WJProj_\multii \,|\, T \in \TL_{\Summed_\multii}(\nu) \big\} .
\end{align}
using the Jones-Wenzl composite projector from~\eqref{WJCompProj},
\begin{align} \label{WJCompProj00} 
\WJProj_\multii \quad := \quad \vcenter{\hbox{\includegraphics[scale=0.275]{e-CompositeProjector.pdf} .}} 
\hphantom{\WJProj_\multii \quad := \quad}
\end{align} 
In other words, $\WJ_\multii(\nu)$ is the collection of all tangles in $\TL_{\Summed_\multii}(\nu)$ of the form
\begin{align} \label{JWform2-1} 
\vcenter{\hbox{\includegraphics[scale=0.275]{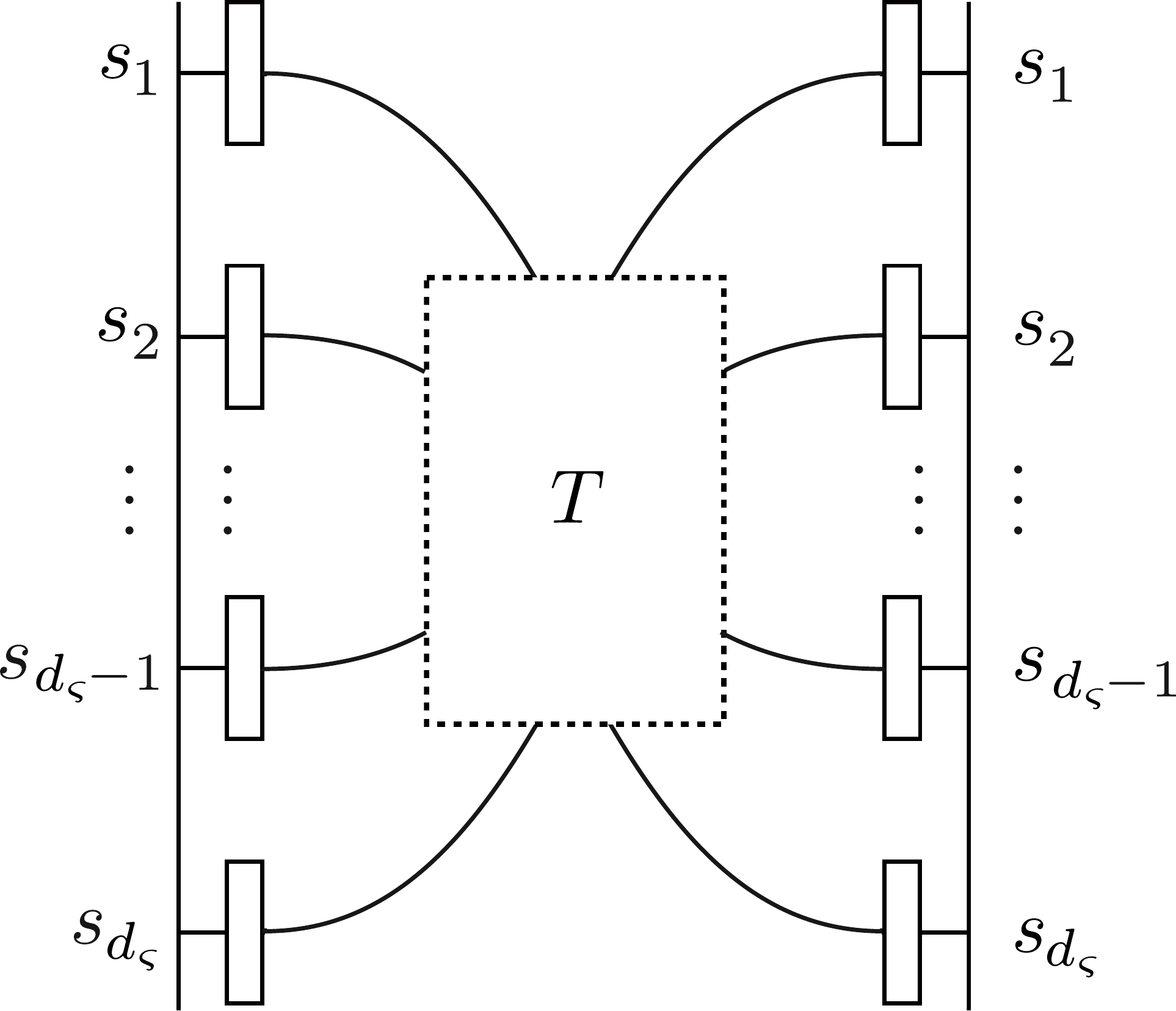} ,}}
\end{align} 
where $T \in \TL_{\Summed_\multii}(\nu)$. 
We note that 
by property~\eqref{ProjectorID2}, those tangles~\eqref{JWform2-1} that have a link with both endpoints at the same projector box are zero.
We call an element of $\WJ_\multii(\nu)$ a \emph{$\multii$-Jones-Wenzl tangle}. 
If $T$ is an $\Summed_\multii$-link diagram such that~\eqref{JWform2-1} does not vanish, then we call~\eqref{JWform2-1} 
a \emph{$\multii$-Jones-Wenzl link diagram.}
For example,
\begin{align}
\vcenter{\hbox{\includegraphics[scale=0.275]{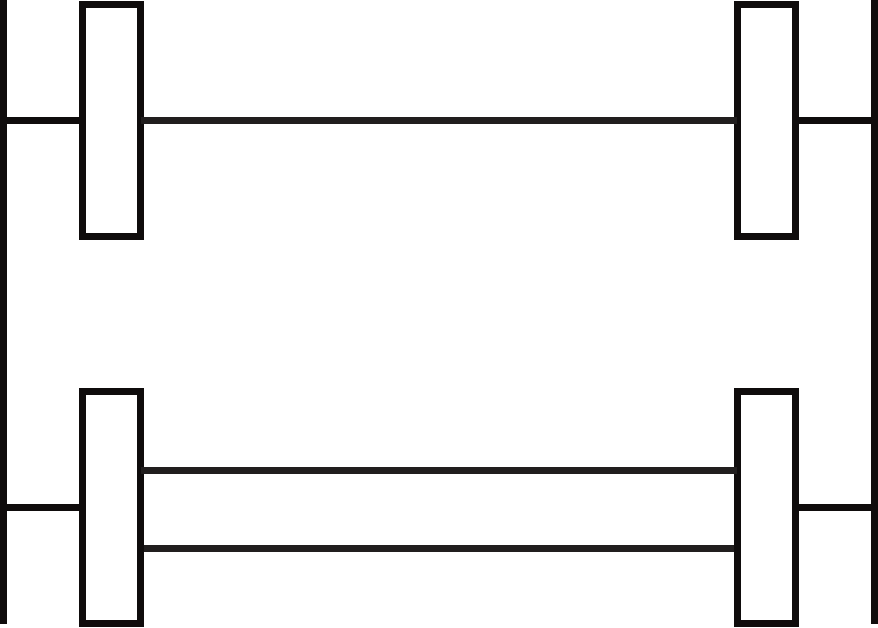}}} \qquad \qquad \text{and} \qquad \qquad
\vcenter{\hbox{\includegraphics[scale=0.275]{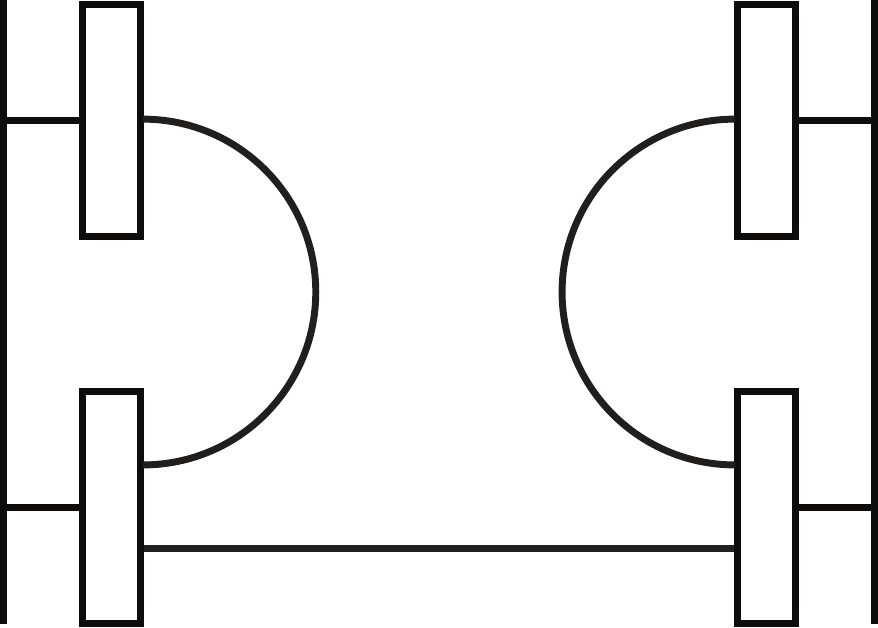}}} 
\end{align} 
are $(1,2)$-Jones-Wenzl link diagrams, and the following $(1,2)$-Jones-Wenzl tangle 
is not a Jones-Wenzl link diagram because it vanishes by property~\eqref{ProjectorID2}:
\begin{align}
\vcenter{\hbox{\includegraphics[scale=0.275]{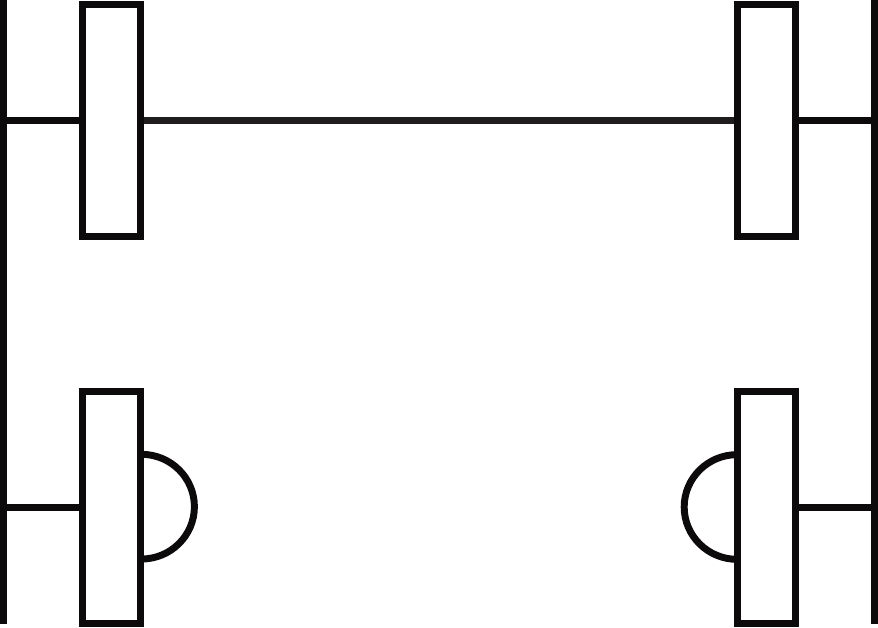}}} \quad \overset{\eqref{ProjectorID2}}{=} \quad 0 .
\end{align} 
By definition, the set of all $\multii$-Jones-Wenzl link diagrams 
forms a spanning set for the Jones-Wenzl algebra $\WJ_\multii(\nu)$,
and in fact, items~\ref{wjIt3}--\ref{wjIt4} of lemma~\ref{WJLSBasisLem} 
below imply that this spanning set is also a basis for $\WJ_\multii(\nu)$.


The Jones-Wenzl algebra is a unital, associative algebra: indeed, 
it inherits the associative multiplication from the Temperley-Lieb algebra $\TL_{\Summed_\multii}(\nu)$, 
and property~\eqref{ProjectorID0} of the Jones-Wenzl projectors implies that $\WJProj_\multii$ is its unit: 
\begin{align} \label{WJunit}
\WJProj_\multii T = T \WJProj_\multii = T ,
\end{align}
for all tangles $T \in \TL_{\Summed_\multii}(\nu)$.  

As an analogue of proposition~\ref{GeneratorPropTwo}, we prove 
in~\cite[theorem~\red{1.1}, item~\red{1}]{fp0} that, 
when $\Summed_\multii < \ppmin(q)$, 
the algebra $\WJ_\multii(\nu)$ is generated by its unit~\eqref{WJCompProj00} together with the following $\multii$-Jones-Wenzl link diagrams:
\begin{align} \label{EGenerators} 
\WJProj_\multii \Gen_{\sIndex_1 + \sIndex_2 + \dotsm + \sIndex_i} \WJProj_\multii 
\quad = \quad \vcenter{\hbox{\includegraphics[scale=0.275]{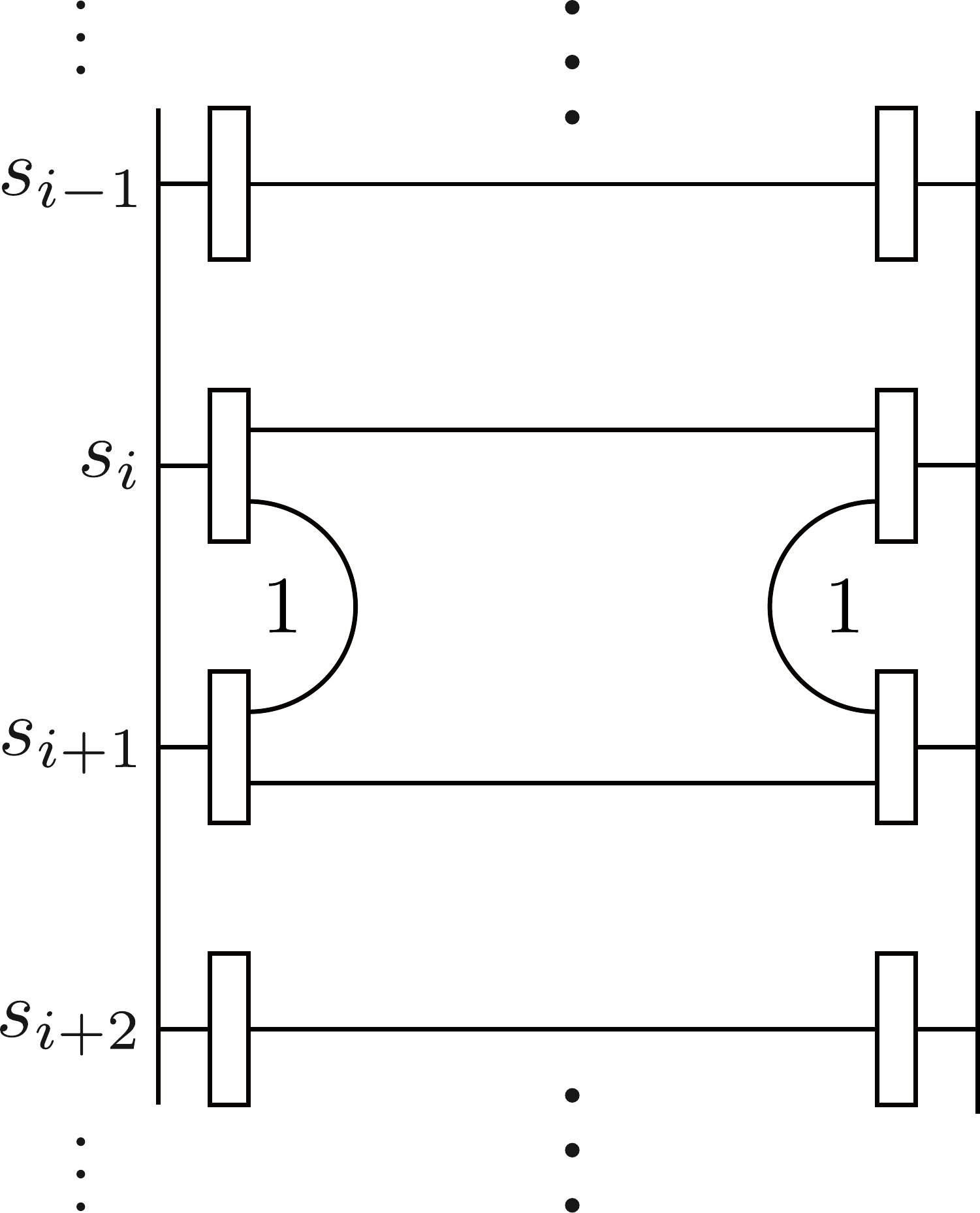} ,}}
\hphantom{\WJProj_\multii \Gen_{\sIndex_1 + \sIndex_2 + \dotsm + \sIndex_i} \WJProj_\multii 
\quad = \quad}
\end{align}
with $i \in \{1,2,\ldots,\np_\multii-1\}$, and this is a minimal generating set.
In fact, we expect this fact to hold whenever $\max (\multii) < \ppmin(q)$~\cite[conjecture~\red{1.2}]{fp0}.
We also prove in~\cite[theorem~\red{1.1}, item~\red{2}]{fp0} that, 
when $\Summed_\multii < \ppmin(q)$, all $\multii$-Jones-Wenzl tangles of the form
\begin{align} \label{MasterDiagramsWJ} 
\vcenter{\hbox{\includegraphics[scale=0.275]{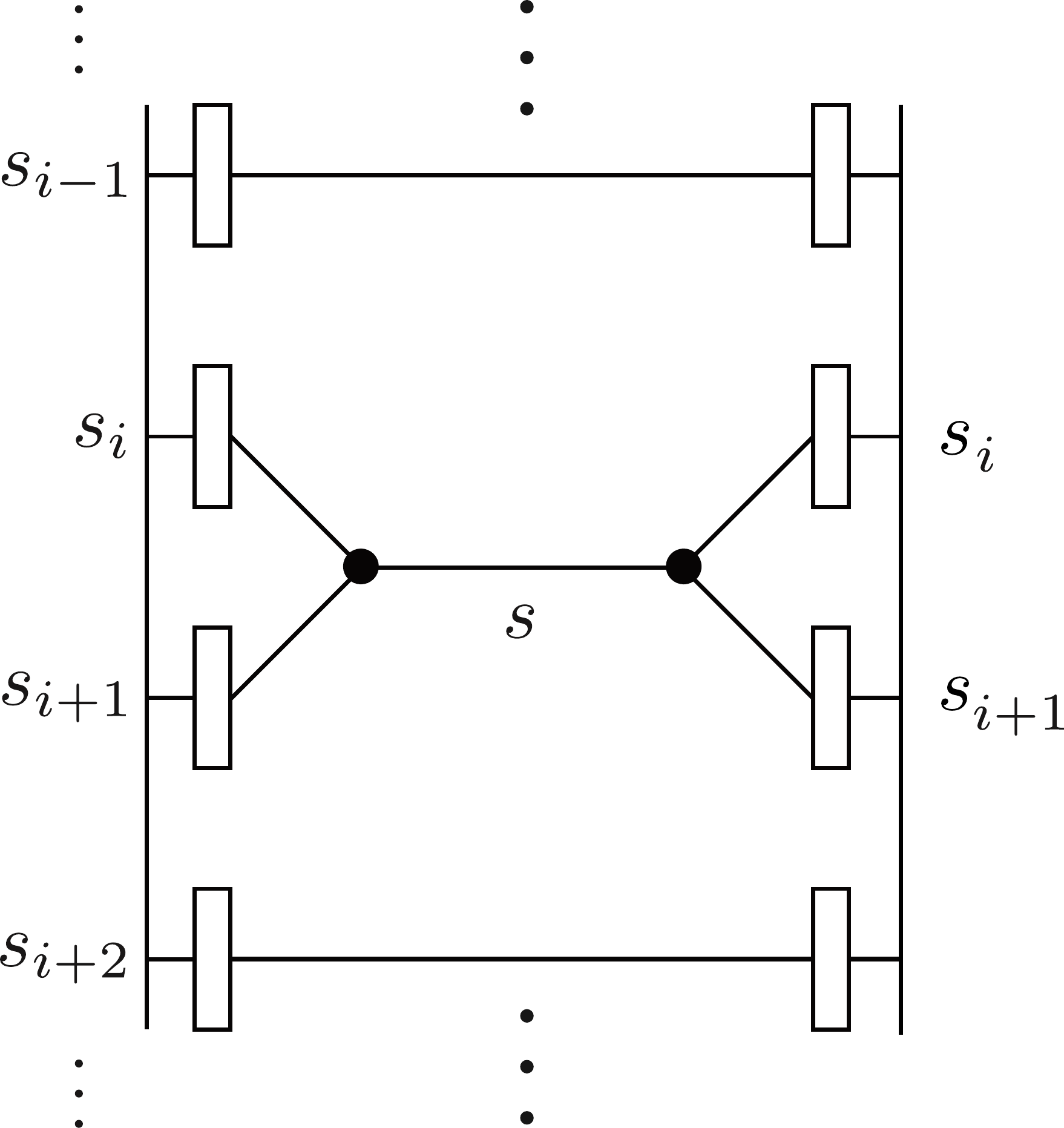} ,}}
\end{align}
with $s \in \DefectSet\sub{\sIndex_i,\sIndex_{i+1}}$ and $i \in \{ 1, 2, \ldots, \np_\multii - 1 \}$, 
form an alternative minimal generating set for $\WJ_\multii(\nu)$.
Furthermore, in~\cite[section~\red{4}]{fp0} we investigate relations satisfied by these generators.

\bigskip

The representation theory of the Jones-Wenzl algebra $\WJ_\multii(\nu)$ is analogous to that of 
the valenced Temperley-Lieb algebra $\TL_\multii(\nu)$. It has \emph{(Jones-Wenzl) standard modules} 
\begin{align} \label{ProjsSpDefns}  
\PS_\multii\super{s} := 
\WJProj_\multii  \LS_{\Summed_\multii}\super{s} = \big\{ \WJProj_\multii \alpha \,|\, \alpha \in \LS_{\Summed_\multii}\super{s} \big\}, 
\end{align}
whose direct sum we call the \emph{(Jones-Wenzl) link state module},
\begin{align}
\PS_\multii &:= \WJProj_\multii  \LS_{\Summed_\multii} = \big\{ \WJProj_\multii \alpha \,|\, \alpha \in \LS_{\Summed_\multii} \big\} 
= \bigoplus_{s \, \in \, \DefectSet_\multii} \PS_\multii\super{s}.
\end{align}
We call a generic element of $\PS_\multii\super{s}$ a \emph{$(\multii,s)$-Jones-Wenzl link state}, having the form
\begin{align}\label{JWLinkState3} 
\vcenter{\hbox{\includegraphics[scale=0.275]{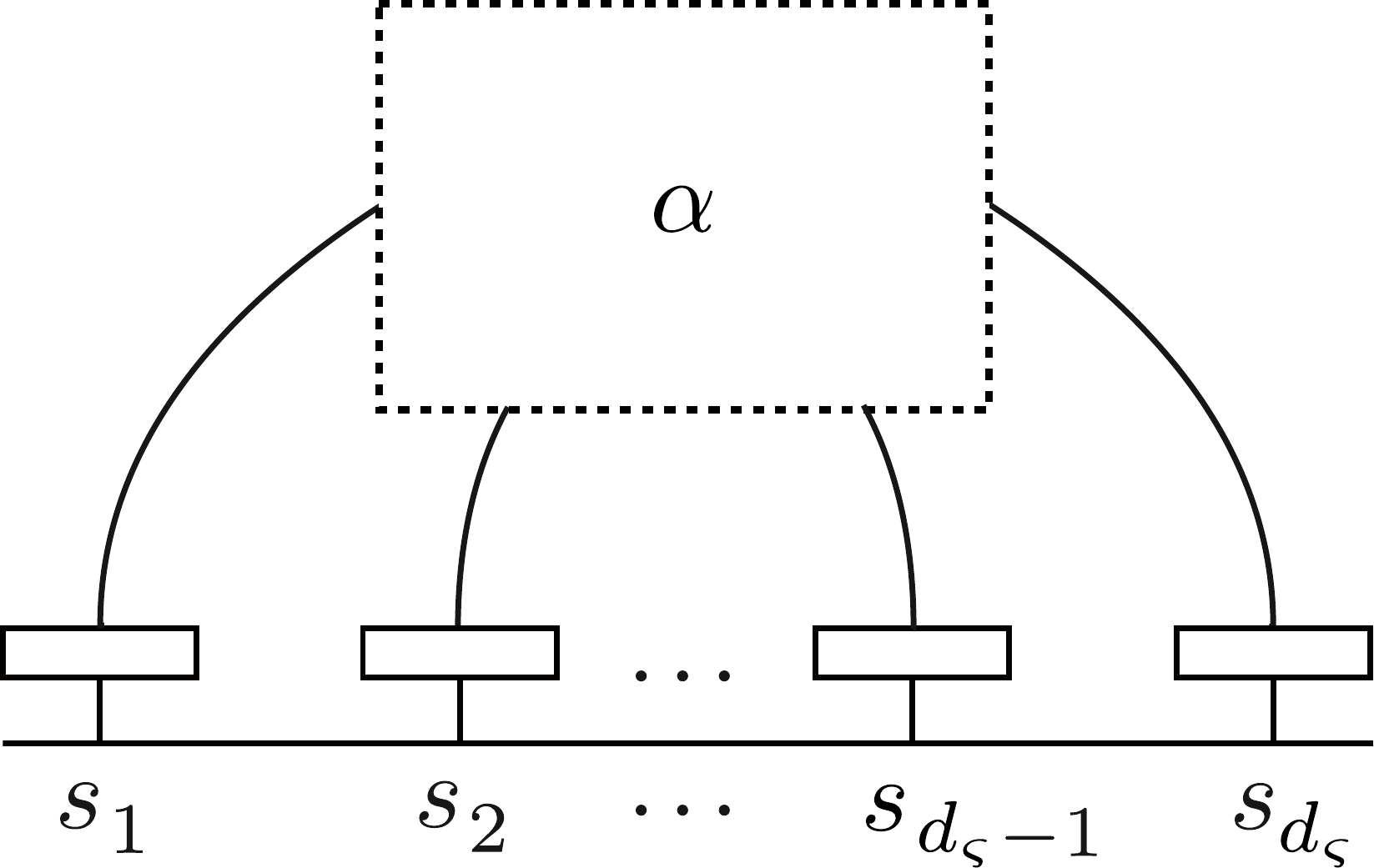} ,}}
\end{align}
for some ordinary link state $\alpha \in \smash{\LS_{\Summed_\multii}\super{s}}$.
If $\alpha$ is a $(\Summed_\multii, s)$-link pattern such that~\eqref{JWLinkState3} does not vanish, 
then we also call~\eqref{JWLinkState3} a \emph{$(\multii,s)$-Jones-Wenzl link pattern}.
Examples of $((3,2,2),3)$-Jones-Wenzl link patterns are
\begin{align}
\vcenter{\hbox{\includegraphics[scale=0.275]{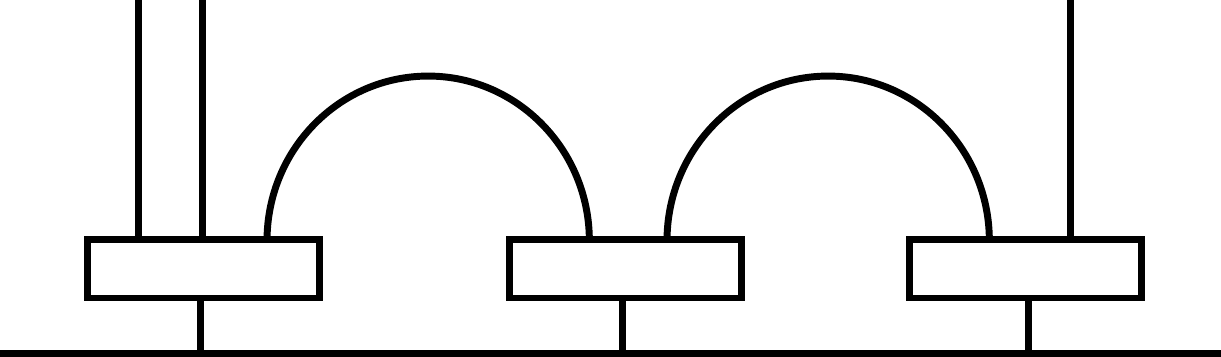}}}
\qquad \qquad \text{and} \qquad \qquad
\vcenter{\hbox{\includegraphics[scale=0.275]{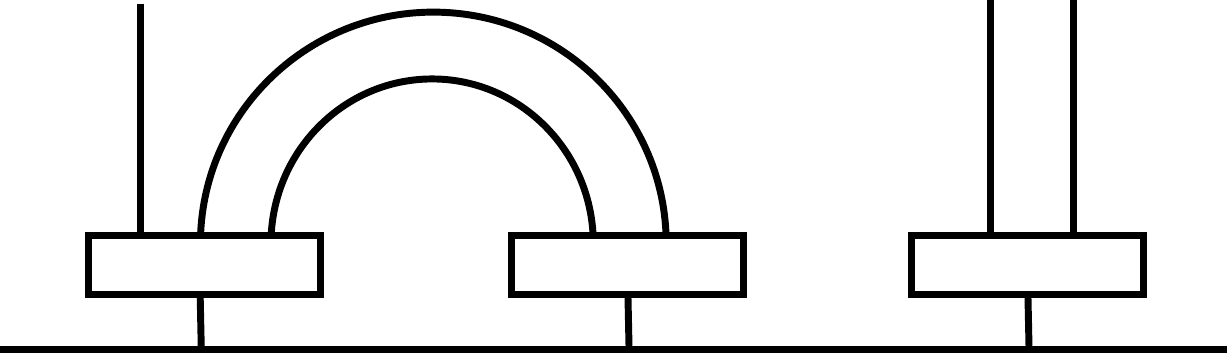}  .}}
\end{align}

%

\begin{center}
\bf Relation of the two algebras $\TL_\multii(\nu)$ and $\WJ_\multii(\nu)$
\end{center}

The valenced Temperley-Lieb algebra $\TL_\multii(\nu)$ is isomorphic to the Jones-Wenzl algebra.
The link state modules of these two algebras are isomorphic too.
We formalize this in lemma~\ref{WJLSBasisLem} and explicate it in corollary~\ref{AnIsoCor}.

With $\alpha$ and $\beta$ respectively denoting an arbitrary $\multii$-valenced link state and $\Summed_\multii$-link state, 
we define the maps
\begin{alignat}{3}
\label{EmbeddingsDef1} 
& \WJEmb_\multii (\,\cdot\,) && \colon \LS_\multii \longrightarrow \LS_{\Summed_\multii}, \qquad 
&& \alpha \mapsto \WJEmb_\multii\alpha, \\
\label{ProjHatDef1} 
& \WJProjHat_\multii (\,\cdot\,) && \colon \LS_{\Summed_\multii} \longrightarrow \LS_\multii, \qquad 
&& \beta \mapsto \WJProjHat_\multii\beta, \\
\label{ProjectionDef1} 
& \WJProj_\multii (\,\cdot\,) && \colon \LS_{\Summed_\multii} \longrightarrow \LS_{\Summed_\multii}, \qquad 
&& \beta \mapsto \WJProj_\multii \beta,
\end{alignat}
where $\WJEmb_\multii$, $\smash{\WJProjHat}_\multii$, and $\WJProj_\multii$ are respectively 
the tangles~\eqref{WJCompEmb},~\eqref{WJProjHatEmb}, and~\eqref{WJCompProj}.
Next, for another multiindex $\multiii \in \smash{\{ \OneVec{0} \} \cup \bZpos^\#}$ such that $\max \multiii < \ppmin(q)$,
with $T$ and $U$ respectively denoting an arbitrary $(\multii, \multiii)$-valenced tangle 
and $(\Summed_\multii,\Summed_\multiii)$-tangle, we define the maps
\begin{alignat}{3}
\label{EmbeddingsDef2} 
& \WJEmb_\multii (\,\cdot\,) \WJProjHat_\multiii 
&& \colon \TL_\multii^\multiii(\nu) \longrightarrow \TL_{\Summed_\multii}^{\Summed_\multiii}(\nu), 
\qquad && T \mapsto \WJEmb_\multii T \WJProjHat_\multiii, \\
\label{ProjHatDef2} 
& \WJProjHat_\multii (\,\cdot\,) \WJEmb_\multiii 
&& \colon \TL_{\Summed_\multii}^{\Summed_\multiii}(\nu) \longrightarrow \TL_\multii^\multiii(\nu), \qquad 
&& U \mapsto \WJProjHat_\multii U \WJEmb_\multiii, \\
\label{ProjectionDef2} & \WJProj_\multii (\,\cdot\,) \WJProj_\multiii 
&& \colon \TL_{\Summed_\multii}^{\Summed_\multiii}(\nu) \longrightarrow \TL_{\Summed_\multii}^{\Summed_\multiii}(\nu) , 
\qquad && U \mapsto \WJProj_\multii U \WJProj_\multiii.
\end{alignat}

In lemma~\ref{WJLSBasisLem}, we give a commuting diagram that relates these maps together and states elementary properties about them, 
including their images and kernels.  
To explicate them, we need some further definitions. We group the the $\Summed_\multii$ 
left nodes and $\Summed_\multiii$ right nodes of 
an $(\Summed_\multii,\Summed_\multiii)$-link diagram into the respective left and right bins of nodes
\begin{alignat}{4} \label{lblocks} 
&\text{left:} \quad &&\{ 1, 2, \ldots, \sIndex_1 \} , \quad && \{ \sIndex_1 + 1, \sIndex_1 + 2, \ldots, \sIndex_1 + \sIndex_2 \} , \quad 
&& \{ \sIndex_1 + \sIndex_2 + 1, \sIndex_1 + \sIndex_2 + 2, \ldots, \sIndex_1 + \sIndex_2 + \sIndex_3 \}, \quad \text{etc.} , \\
&\text{right:} \label{rblocks}  \quad &&\{ 1, 2, \ldots, p_1 \} , \quad && \{ p_1 + 1, p_1 + 2, \ldots, p_1 + p_2 \} , \quad 
&& \{ p_1 + p_2 + 1, p_1 + p_2 + 2, \ldots, p_1 + p_2 + p_3 \}, \quad \text{etc.} 
\end{alignat}
Then, we define a \emph{special link diagram} to be a link diagram in $\smash{\LD_{\Summed_\multii}^{\Summed_\multiii}}$ 
that lacks a turn-back link joining two left nodes or two right nodes in a common bin of (\ref{lblocks},~\ref{rblocks}), and we denote
\begin{align}\label{SpecialDiagrams} 
\SpecialDiagram_\multii^\multiii = \{ \text{special link diagrams in $\smash{\LD_{\Summed_\multii}^{\Summed_\multiii}}$} \}. 
\end{align}
For example, below, the left figure is a special link diagram in 
$\SpecialDiagram_\multii^\multiii$ with $\multii = (2,2,3,2)$ and $\multiii = (2,3,2)$, but
the right figure is not such a link diagram: 
\begin{align} 
\vcenter{\hbox{\includegraphics[scale=0.275]{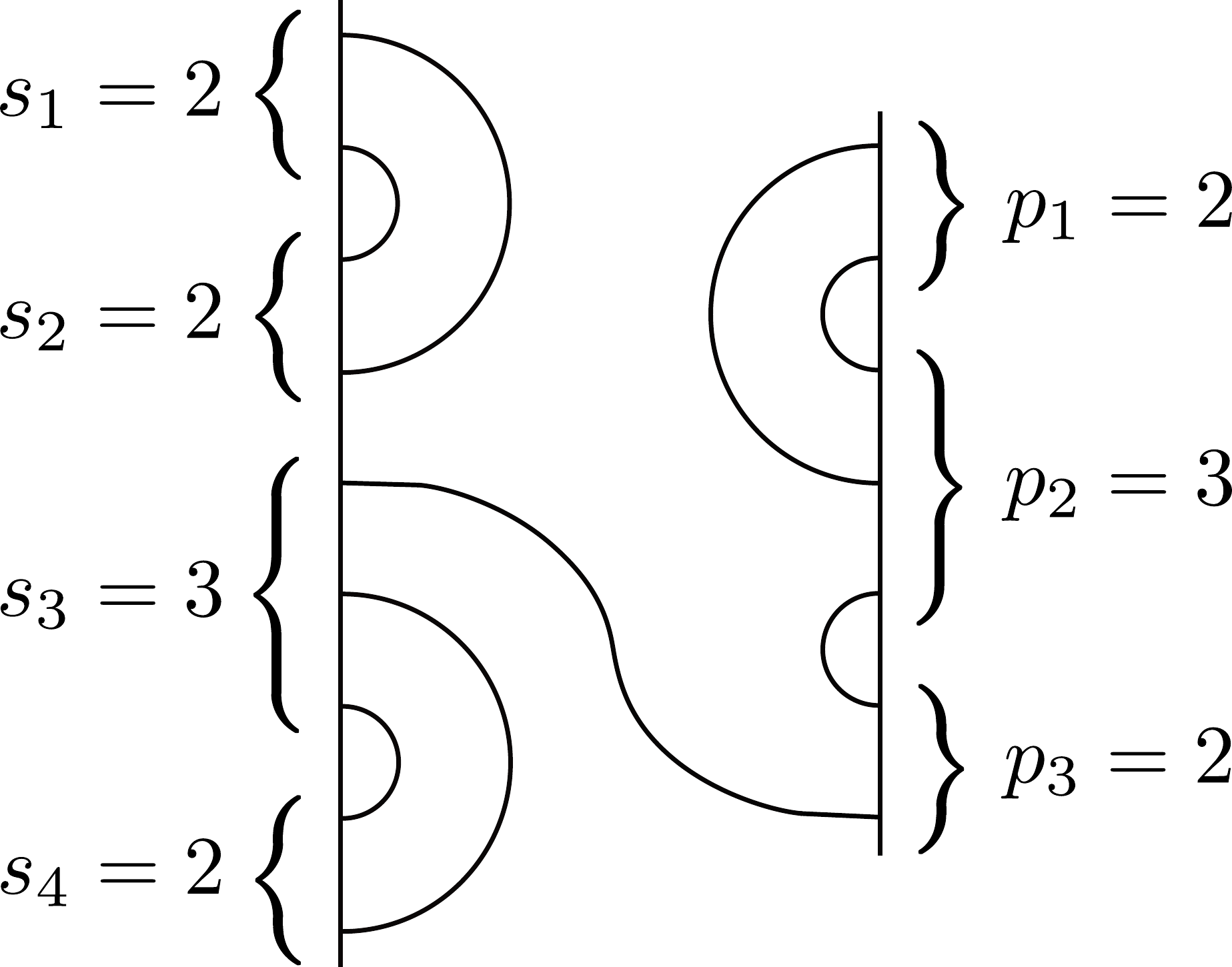}}} 
\qquad \qquad \qquad \qquad
\vcenter{\hbox{\includegraphics[scale=0.275]{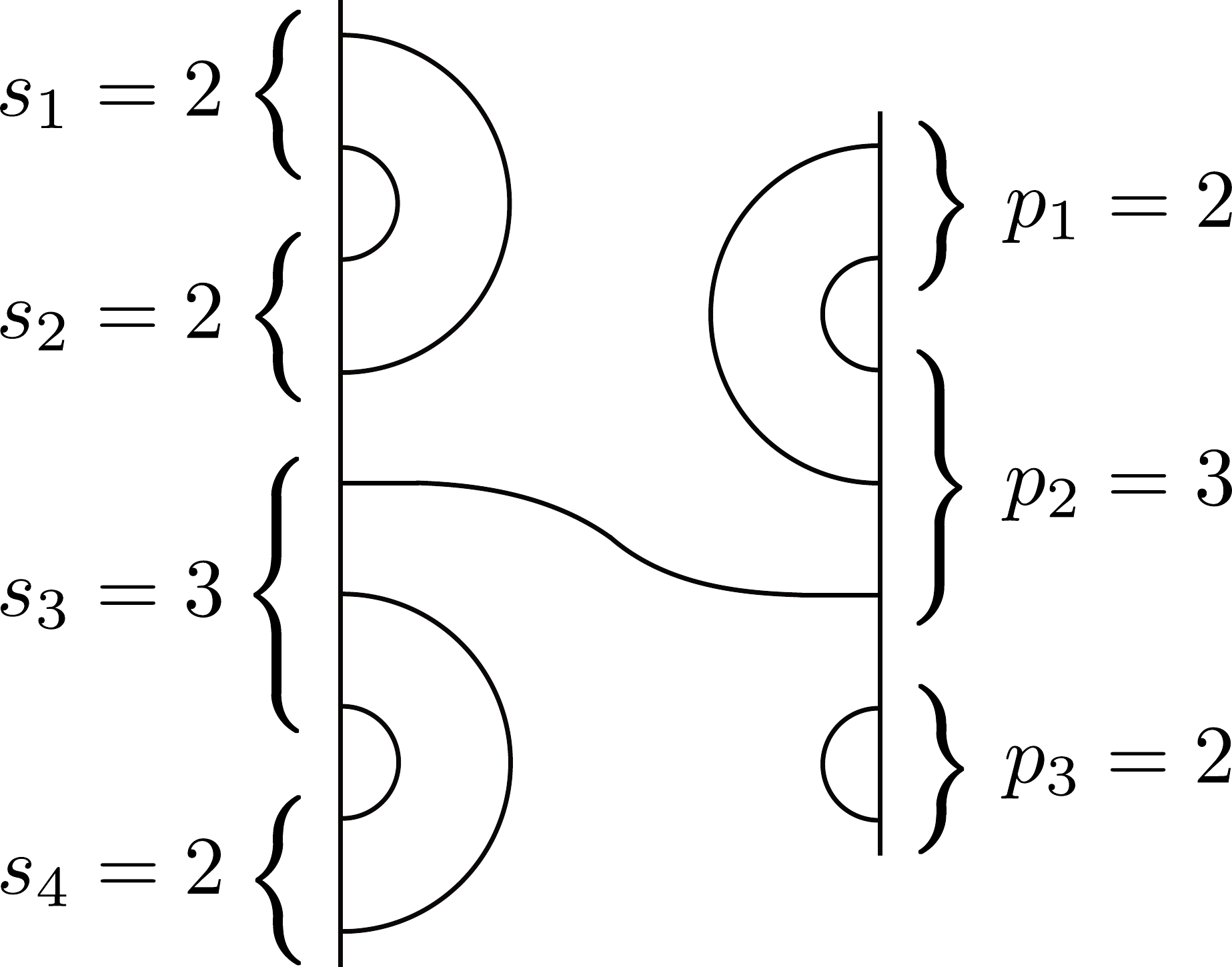} .}} 
\end{align}
We also define a \emph{special link pattern} to be a link pattern in $\LP_{\Summed_\multii}$ that lacks a turn-back link joining two nodes 
in a common bin of~\eqref{lblocks}, and we denote
\begin{align} \label{SpecialPatterns}
\smash{\SpecialPattern_\multii\super{s}} 
:= \big\{ \text{special link patterns in $\LP_{\Summed_\multii}\super{s}$} \big\} , \qquad \qquad
\SpecialPattern_\multii := \bigcup_{s \, \in  \, \DefectSet_{\Summed_\multii}} \SpecialPattern_\multii\super{s}
= \big\{ \text{special link patterns in $\LP_{\Summed_\multii}$} \big\}  .
\end{align}

\begin{lem} \label{WJLSBasisLem} 
Suppose $\max(\multii, \multiii) < \ppmin(q)$.  The following hold:
\begin{enumerate}
\itemcolor{red}
\item \label{wjIt1} \textnormal{Commuting diagrams:} 
\begin{displaymath} 
\begin{tikzcd}[column sep=2cm, row sep=1.5cm]
& \arrow{ld}[swap]{\WJProjHat_\multii (\, \cdot \,)} \arrow{d}{\WJProj_\multii (\, \cdot \,)}
\LS_{\Summed_\multii} \\ 
\LS_\multii 
\arrow{r}{\WJEmb_\multii (\, \cdot \,)}
& \LS_{\Summed_\multii}
\end{tikzcd} 
\qquad \qquad \qquad
\begin{tikzcd}[column sep=2cm, row sep=1.5cm]
& \arrow{ld}[swap]{\WJProjHat_\multii (\, \cdot \,) \WJEmb_\multiii} \arrow{d}{\WJProj_\multii (\, \cdot \,) \WJProj_\multiii}
\TL_{\Summed_\multii}^{\Summed_\multiii}(\nu) \\ 
\TL_\multii^\multiii(\nu) 
\arrow{r}{\WJEmb_\multii (\, \cdot \,) \WJProjHat_\multiii}
& \TL_{\Summed_\multii}^{\Summed_\multiii}(\nu)
\end{tikzcd} 
\end{displaymath}

\item \label{wjIt2} \emph{Basic properties:} 
\begin{enumerate}
\itemcolor{red}
\item[(a):] \label{wjIt2a} The maps $\WJEmb_\multii (\,\cdot\,) \colon \LS_\multii \longrightarrow \LS_{\Summed_\multii}$~\eqref{EmbeddingsDef1} and 
$\WJEmb_\multii (\,\cdot\,) \smash{\WJProjHat}_\multiii \colon \smash{\TL_\multii^\multiii(\nu)} \longrightarrow \smash{\TL_{\Summed_\multii}^{\Summed_\multiii}(\nu)}$~\eqref{EmbeddingsDef2} are linear injections.
\item[(b):] \label{wjIt2b} The maps $\smash{\WJProjHat}_\multii (\,\cdot\,) \colon \LS_{\Summed_\multii} \longrightarrow \LS_\multii$~\eqref{ProjHatDef1} and 
$\smash{\WJProjHat}_\multii (\,\cdot\,) \smash{\WJEmb}_\multiii \colon \smash{\TL_{\Summed_\multii}^{\Summed_\multiii}(\nu)} \longrightarrow \smash{\TL_\multii^\multiii(\nu)}$~\eqref{ProjHatDef2} are linear surjections.
\item[(c):] \label{MapPropIt3} The maps $\WJProj_\multii (\,\cdot\,) \colon \LS_{\Summed_\multii} \longrightarrow \LS_{\Summed_\multii}$~\eqref{ProjectionDef1} and 
$\WJProj_\multii (\,\cdot\,) \WJProj_\multiii \colon \smash{\TL_{\Summed_\multii}^{\Summed_\multiii}(\nu)} \longrightarrow \smash{\TL_{\Summed_\multii}^{\Summed_\multiii}(\nu)}$~\eqref{ProjectionDef2} are linear projections.
\end{enumerate}

\item \label{wjIt3} \emph{Images:}
\begin{enumerate}
\itemcolor{red}
\item[(a):]  
We have 
\begin{align} 
\im \WJEmb_\multii (\,\cdot\,) = \im \WJProj_\multii (\,\cdot\,) \qquad \textnormal{and} \qquad 
\WJEmb_\multii \LP_\multii = \WJProj_\multii \SpecialPattern_\multii, 
\end{align}
and the latter is a basis for the former.
\item[(b):]  
We have 
\begin{align}\im \WJEmb_\multii (\,\cdot\,) \WJProjHat_\multiii  = \im \WJProj_\multii (\,\cdot\,) \WJProj_\multiii \qquad \textnormal{and} \qquad 
\WJEmb_\multii \LD_\multii^\multiii \WJProjHat_\multiii = \WJProj_\multii \SpecialDiagram_\multii^\multiii \WJProj_\multiii, 
\end{align}
and the latter is a basis for the former.
\end{enumerate}

\item \label{wjIt4} \emph{Kernels:}
\begin{enumerate}
\itemcolor{red}
\item[(a):]  
We have 
\begin{align}\ker \WJProjHat_\multii (\,\cdot\,) = \ker \WJProj_\multii (\,\cdot\,) 
\end{align}
and the set $\LP_{\Summed_\multii} \setminus \SpecialPattern_\multii$ is a basis for this kernel.
\item[(b):]  
We have 
\begin{align}\ker \WJProjHat_\multii (\,\cdot\,) \WJEmb_\multiii  = \ker \WJProj_\multii (\,\cdot\,) \WJProj_\multiii
\end{align}
and the set $\smash{\LD_{\Summed_\multii}^{\Summed_\multiii}} \setminus \smash{\SpecialDiagram_\multii^\multiii}$ is a basis for this kernel.
\end{enumerate}

\item \label{wjIt5} \emph{Homomorphism properties:}
\begin{enumerate}
\itemcolor{red}
\item[(a):]  \label{wjIt5a} For all valenced tangles $T \in \TL_\multii^\multiii(\nu)$ and for all valenced link patterns $\alpha \in \LS_\multiii$, we have
\begin{align} \label{HomoProp1} 
\WJEmb_\multii(T\alpha) = (\WJEmb_\multii T \WJProjHat_\multiii) (\WJEmb_\multiii\alpha).
\end{align}
\item[(b):] \label{wjIt5b} For all valenced tangles $T \in \TL_\multii^\varepsilon(\nu)$ and $U \in \TL_\varepsilon^\multiii(\nu)$ with $\max \varepsilon < \ppmin(q)$, we have
\begin{align} \label{HomoProp2} 
\WJEmb_\multii (TU) \WJProjHat_\multiii = (\WJEmb_\multii T \WJProjHat_\varepsilon) (\WJEmb_\varepsilon U \WJProjHat_\multiii).
\end{align}
\end{enumerate}

\item \label{wjIt6} \emph{$s$-grading preservation:}
\begin{enumerate}
\itemcolor{red}
\item[(a):]  \label{wjIt6a} 
The maps~(\ref{EmbeddingsDef1}--\ref{ProjectionDef1}) respect the $s$-grading of their domains: 
\begin{align}\label{sResp} 
\WJEmb_\multii \LS_\multii\super{s} \subset \LS_{\Summed_\multii}\super{s}, \qquad 
\WJProjHat_\multii \LS_{\Summed_\multii}\super{s} = \LS_\multii\super{s}, \qquad 
\WJProj_\multii \LS_{\Summed_\multii}\super{s} \subset \LS_{\Summed_\multii}\super{s}.
\end{align}
\item[(b):]  \label{wjIt6b} 
The maps~(\ref{EmbeddingsDef2}--\ref{ProjectionDef2}) respect the $s$-grading of their domains: 
\begin{align}\label{sResp2} 
\hspace*{-5mm}
\WJEmb_\multii \TL_\multii^{\multiii; \scaleobj{0.85}{(s)}}(\nu) \WJProjHat_\multiii 
\subset \TL_{\Summed_\multii}^{\Summed_\multiii; \scaleobj{0.85}{(s)}}(\nu), \qquad 
\WJProjHat_\multii \TL_{\Summed_\multii}^{\Summed_\multiii; \scaleobj{0.85}{(s)}}(\nu) \WJEmb_\multiii 
= \TL_\multii^{\multiii; \scaleobj{0.85}{(s)}}(\nu), \qquad 
\WJProj_\multii \TL_{\Summed_\multii}^{\Summed_\multiii; \scaleobj{0.85}{(s)}}(\nu) \WJProj_\multiii 
\subset \TL_{\Summed_\multii}^{\Summed_\multiii; \scaleobj{0.85}{(s)}}(\nu).
\end{align}
\end{enumerate}
\end{enumerate}
\end{lem}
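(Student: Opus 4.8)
The proof rests on three ingredients that are already available: the two identities $\WJProjHat_\multii\WJEmb_\multii=\mathbf 1_{\TL_\multii(\nu)}$ and $\WJEmb_\multii\WJProjHat_\multii=\WJProj_\multii$ from~\eqref{IdComp}; the idempotency of the composite projector, $\WJProj_\multii^2=\WJProj_\multii$, inherited from property~\ref{wj1}; and property~\ref{wj2} in diagram form~\eqref{ProjectorID2}, which says that a Jones--Wenzl projector box with a turn-back link attached to it evaluates to zero. The plan is to treat the statements about link states (the maps~\eqref{EmbeddingsDef1}--\eqref{ProjectionDef1}) in detail and then repeat each argument verbatim for the maps~\eqref{EmbeddingsDef2}--\eqref{ProjectionDef2} on tangles, with link patterns replaced by link diagrams throughout (alternatively, the tangle statements may be deduced from the link-state ones via the isomorphisms of lemmas~\ref{WJSandwichLem} and~\ref{LinkDiagPattLem2}).

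First I would dispose of items~\ref{wjIt1},~\ref{wjIt2},~\ref{wjIt5},~\ref{wjIt6}, all of which are formal. Item~\ref{wjIt1} is a restatement of~\eqref{IdComp}: $\WJEmb_\multii(\WJProjHat_\multii\beta)=\WJProj_\multii\beta$, and $\WJEmb_\multii(\WJProjHat_\multii U\WJEmb_\multiii)\WJProjHat_\multiii=(\WJEmb_\multii\WJProjHat_\multii)U(\WJEmb_\multiii\WJProjHat_\multiii)=\WJProj_\multii U\WJProj_\multiii$. For item~\ref{wjIt2}: since $\mathbf 1_{\TL_\multii(\nu)}$ acts as the identity on $\LS_\multii$, the map $\WJProjHat_\multii(\,\cdot\,)$ is a left inverse of $\WJEmb_\multii(\,\cdot\,)$, so the latter is injective and the former surjective; and $\WJProj_\multii(\,\cdot\,)$ is idempotent by property~\ref{wj1}, hence a linear projection; the two-sided tangle maps are handled identically, $\WJProjHat_\multii(\,\cdot\,)\WJEmb_\multiii$ being a left inverse of $\WJEmb_\multii(\,\cdot\,)\WJProjHat_\multiii$ and $\WJProj_\multii(\,\cdot\,)\WJProj_\multiii$ being idempotent. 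Item~\ref{wjIt5} is again cancellation: the right side of~\eqref{HomoProp1} equals $\WJEmb_\multii T\WJProjHat_\multiii\WJEmb_\multiii\alpha=\WJEmb_\multii T\alpha=\WJEmb_\multii(T\alpha)$ by $\WJProjHat_\multiii\WJEmb_\multiii=\mathbf 1$ and associativity of the module action, and similarly for~\eqref{HomoProp2} using $\WJProjHat_\varepsilon\WJEmb_\varepsilon=\mathbf 1$. For item~\ref{wjIt6}, note that $\WJEmb_\multii$, $\WJProjHat_\multii$, $\WJProj_\multii$ are ``through'' tangles (single valenced link diagrams with the maximal number of crossing links): acting with them via $\lambda_\nu\super s$ therefore does not change the defect number, and composing with them via $\mu_\nu$ does not lower the number of crossing links, so the inclusions in~\eqref{sResp} and~\eqref{sResp2} hold; the equalities for $\WJProjHat_\multii$ follow by applying $\WJProjHat_\multii(\,\cdot\,)$ to the inclusion $\WJEmb_\multii\LS_\multii\super s\subset\LS_{\Summed_\multii}\super s$ and using $\WJProjHat_\multii\WJEmb_\multii=\mathbf 1$ (and the same for tangles).

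The content is in items~\ref{wjIt3} and~\ref{wjIt4}. The identities of images and kernels are formal: from $\WJProj_\multii=\WJEmb_\multii\WJProjHat_\multii$ and $\WJEmb_\multii=\WJEmb_\multii\WJProjHat_\multii\WJEmb_\multii=\WJProj_\multii\WJEmb_\multii$ one gets $\im\WJProj_\multii(\,\cdot\,)=\im\WJEmb_\multii(\,\cdot\,)$, and from injectivity of $\WJEmb_\multii(\,\cdot\,)$ one gets $\WJProj_\multii\beta=0\Leftrightarrow\WJProjHat_\multii\beta=0$, i.e.\ $\ker\WJProj_\multii(\,\cdot\,)=\ker\WJProjHat_\multii(\,\cdot\,)$ (and the same for tangles). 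To pin down the asserted bases I would introduce the \emph{canonical planar refinement} $\alpha\mapsto\hat\alpha$: split each valenced node of $\alpha\in\LP_\multii$ into unit nodes, ordering the link-endpoints inside each bin so that no two links cross. Planarity forces this ordering to be unique, and since $\alpha$ has no loop link, $\hat\alpha$ has no turn-back link joining two nodes of a common bin, i.e.\ $\hat\alpha\in\SpecialPattern_\multii$~\eqref{SpecialPatterns}; conversely, contracting the bins of a special link pattern returns a genuine valenced link pattern, so $\alpha\mapsto\hat\alpha$ is a bijection $\LP_\multii\to\SpecialPattern_\multii$ (the combinatorial fact already used in lemma~\ref{ForBSALem}). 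By inspection of~\eqref{WJCompEmb} and~\eqref{WJCompProj}, attaching the composite embedder to $\alpha$ produces exactly the diagram $\WJProj_\multii\hat\alpha$, so $\WJEmb_\multii\LP_\multii=\WJProj_\multii\SpecialPattern_\multii$; this set is a basis of $\im\WJEmb_\multii(\,\cdot\,)=\im\WJProj_\multii(\,\cdot\,)$ because it is the image of the basis $\LP_\multii$ under the injection $\WJEmb_\multii(\,\cdot\,)$, which settles item~\ref{wjIt3}(a), and item~\ref{wjIt3}(b) is the same with link diagrams and the refinement $\LD_\multii^\multiii\to\SpecialDiagram_\multii^\multiii$~\eqref{SpecialDiagrams}. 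For item~\ref{wjIt4}(a): a link pattern in $\LP_{\Summed_\multii}\setminus\SpecialPattern_\multii$ has a turn-back joining two nodes of some bin, which becomes a turn-back link attached to a projector box of $\WJProj_\multii$, hence is killed by~\eqref{ProjectorID2}; thus $\LP_{\Summed_\multii}\setminus\SpecialPattern_\multii\subset\ker\WJProj_\multii(\,\cdot\,)=\ker\WJProjHat_\multii(\,\cdot\,)$; this set is linearly independent as a subset of the basis $\LP_{\Summed_\multii}$ of $\LS_{\Summed_\multii}$, and its cardinality is $\#\LP_{\Summed_\multii}-\#\SpecialPattern_\multii=\dim\LS_{\Summed_\multii}-\dim\LS_\multii$, which equals $\dim\ker\WJProjHat_\multii(\,\cdot\,)$ by the rank--nullity theorem for the surjection $\WJProjHat_\multii(\,\cdot\,)\colon\LS_{\Summed_\multii}\to\LS_\multii$, so it is a basis; item~\ref{wjIt4}(b) is the same with link diagrams, now using that a non-special diagram has a turn-back inside a \emph{left} or a \emph{right} bin, killing it against $\WJProj_\multii$ or $\WJProj_\multiii$ respectively.

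The step requiring the most care is the combinatorial identification behind items~\ref{wjIt3}--\ref{wjIt4}: verifying that the planar refinement $\alpha\mapsto\hat\alpha$ is a well-defined bijection onto $\SpecialPattern_\multii$ (uniqueness of the planar order of the endpoints inside a bin, and the exact match ``no loop link'' $\Leftrightarrow$ ``special''), the analogous statement $\LD_\multii^\multiii\leftrightarrow\SpecialDiagram_\multii^\multiii$, and the bookkeeping that attaching the composite embedder to $\alpha$ literally reproduces $\WJProj_\multii\hat\alpha$. Everything else reduces to the cancellation rules~\eqref{IdComp}, idempotency of $\WJProj_\multii$, and the turn-back annihilation~\eqref{ProjectorID2}.
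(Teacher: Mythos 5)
Your proposal is correct and follows essentially the same route as the paper's proof: everything formal is reduced to the cancellation identities~\eqref{IdComp}, idempotency of $\WJProj_\multii$, and turn-back annihilation~\eqref{ProjectorID2}, while the bases in items~\ref{wjIt3}--\ref{wjIt4} come from the bijection between valenced link patterns (resp.\ diagrams) and special patterns (resp.\ diagrams) together with injectivity of $\WJEmb_\multii(\,\cdot\,)$. You merely spell out a few steps the paper calls ``straightforward'' (the planar-refinement bijection and the kernel dimension count), which is a faithful completion rather than a different argument.
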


\begin{proof} 
We prove items~\ref{wjIt1}--\ref{wjIt6} as follows:
\begin{enumerate}[leftmargin=*]
\itemcolor{red}
\item That the diagrams commute immediately follows from the property $\WJEmb_\multii \smash{\WJProjHat}_\multii = \WJProj_\multii$ 
observed in~\eqref{IdComp}.
\item All of the maps in the assertion are clearly linear.  Furthermore,
\begin{enumerate}
\itemcolor{red}
\item[(a):] with $\smash{\WJProjHat}_\multii \WJEmb_\multii = \mathbf{1}_{\TL_\multii}$ in~\eqref{IdComp}, 
the map $\WJEmb_\multii (\,\cdot\,) \colon \LS_\multii \longrightarrow \LS_{\Summed_\multii}$ is invertible, 

\item[(c):] with $\WJProj_\multii^2 = \WJProj_\multii$ due to~\eqref{WJunit}, the map 
$\WJProj_\multii (\,\cdot\,) \colon \LS_{\Summed_\multii} \longrightarrow \LS_{\Summed_\multii}$ is a projection, and 

\item[(b):] for each valenced link pattern $\alpha \in \LP_\multii$, 
we have $\WJProjHat_\multii \beta = \alpha$, where the link pattern $\beta \in \LP_{\Summed_\multii}$ is
created by separating the $i$:th node of $\alpha$ into $\sIndex_i$ adjacent nodes, for each $i \in \{1,2,\ldots,\np_\multii\}$.  
Hence, $\smash{\WJProjHat}_\multii \colon \LS_{\Summed_\multii} \longrightarrow \LS_\multii$ is a surjection.

\end{enumerate}
The proofs of the asserted properties for 
$\WJEmb_\multii (\,\cdot\,) \smash{\WJProjHat}_\multiii$, $\smash{\WJProjHat}_\multii (\,\cdot\,) \smash{\WJEmb}_\multiii$, 
and $\WJProj_\multii (\,\cdot\,) \WJProj_\multiii$ 
are nearly identical to the above.

\item Because $\smash{\WJProjHat}_\multii$ is a surjection and the left diagram in item~\ref{wjIt1} commutes, 
we immediately have $\im \WJEmb_\multii (\,\cdot\,) = \im \WJProj_\multii (\,\cdot\,)$.  
Moreover, it is straightforward to verify that $\WJEmb_\multii \LP_\multii = \WJProj_\multii \SpecialPattern_\multii$. 
Finally, because $\LP_\multii$ is a basis for $\LS_\multii$ and $\WJEmb_\multii (\,\cdot\,)$ is an injection, 
it follows that $\WJEmb_\multii \LP_\multii$ 
is a basis for $\im \WJEmb_\multii (\,\cdot\,)$.  This proves part~\red{a}. The proof of part~\red{b} is similar.

\item Because $\smash{\WJEmb}_\multii$ is an injection and the left diagram in item~\ref{wjIt1} commutes, we immediately have 
$\ker \smash{\WJProjHat}_\multii (\,\cdot\,) = \ker \WJProj_\multii (\,\cdot\,)$.  
Also, because the set $\LP_{\Summed_\multii} = \SpecialPattern_\multii \cup (\LP_{\Summed_\multii} \setminus \SpecialPattern_\multii)$ is a basis 
for $\LS_{\Summed_\multii}$ and the set $\WJProj_\multii \SpecialPattern_\multii$ is a basis for $\im \WJProj_\multii$ by item~\ref{wjIt3}, 
it follows that the set $\LP_{\Summed_\multii} \setminus \SpecialPattern_\multii$ is a basis for $\ker \WJProj_\multii$.  
This proves part~\red{a}. The proof of part~\red{b} is similar.

\item By idempotent property~\eqref{ProjectorID0} for $\WJProj_\multiii$ 
and by~\eqref{IdComp}, for any valenced link state $\alpha \in \LP_\multiii$ 
and for any valenced tangle $T \in \smash{\TL_\multii^\multiii(\nu)}$, we have 
\begin{align}
T\alpha \overset{\eqref{WJunit}}{=} T \WJProj_\multiii \alpha \qquad \Longrightarrow \qquad 
\WJEmb_\multii (T\alpha) = \WJEmb_\multii (T \WJProj_\multiii \alpha) \overset{\eqref{IdComp}}{=} (\WJEmb_\multii T \WJProjHat_\multiii) (\WJEmb_\multiii \alpha) . 
\end{align}
This proves part~\red{a}, and the proof of part~\red{b} is similar.

\item Item~\ref{wjIt6} is immediate.
\end{enumerate}
This concludes the proof.
\end{proof}

In summary, we may write the commuting diagrams in item~\ref{wjIt1} as
\begin{displaymath} 
\begin{tikzcd}[column sep=2cm, row sep=1.5cm]
& \arrow{ld}[swap]{\WJProjHat_\multii (\, \cdot \,)} \arrow{d}{\WJProj_\multii (\, \cdot \,)}
\LS_{\Summed_\multii}\super{s} \\ 
\LS_\multii\super{s} 
\arrow{r}{\WJEmb_\multii (\, \cdot \,)}
& \PS_\multii\super{s} = \WJProj_\multii  \LS_{\Summed_\multii}\super{s}
\end{tikzcd} 
\qquad \qquad \qquad
\begin{tikzcd}[column sep=2cm, row sep=1.5cm]
& \arrow{ld}[swap]{\WJProjHat_\multii (\, \cdot \,) \WJEmb_\multiii} \arrow{d}{\WJProj_\multii (\, \cdot \,) \WJProj_\multiii}
\TL_{\Summed_\multii}^{\Summed_\multiii; \scaleobj{0.85}{(s)}}(\nu) \\ 
\TL_\multii^{\multiii; \scaleobj{0.85}{(s)}}(\nu)
\arrow{r}{\WJEmb_\multii (\, \cdot \,) \WJProjHat_\multiii}
& \WJ_\multii^{\multiii; \scaleobj{0.85}{(s)}}(\nu) = \WJEmb_\multii \TL_\multii^{\multiii; \scaleobj{0.85}{(s)}} \WJProjHat_\multiii 
\end{tikzcd} 
\end{displaymath}

\begin{cor} \label{AnIsoCor}
Suppose $\max \multii < \ppmin(q)$. The following hold:
\begin{enumerate}
\itemcolor{red}
\item \label{MapIt1} 
The linear map $\WJEmb_\multii (\, \cdot \,) \WJProjHat_\multii$
sending $\TL_\multii(\nu) \longrightarrow \WJ_\multii(\nu)$ via
\begin{align} \label{map1}
\vcenter{\hbox{\includegraphics[scale=0.275]{e-GenericTangle_valenced.pdf}}} 
\qquad \qquad \longmapsto \qquad \qquad 
\vcenter{\hbox{\includegraphics[scale=0.275]{e-GenericTangle_WJ.pdf} ,}}
\end{align}
where $T \in \smash{\TL_{\Summed_\multii}}(\nu)$, is an isomorphism of unital, associative algebras.
\item \label{MapIt2} 
The linear map $\WJEmb_\multii (\, \cdot \,)$
sending $\smash{\LS_\multii\super{s}} \longrightarrow \smash{\PS_\multii\super{s}}$ via
\begin{align}\label{map2} 
\vcenter{\hbox{\includegraphics[scale=0.275]{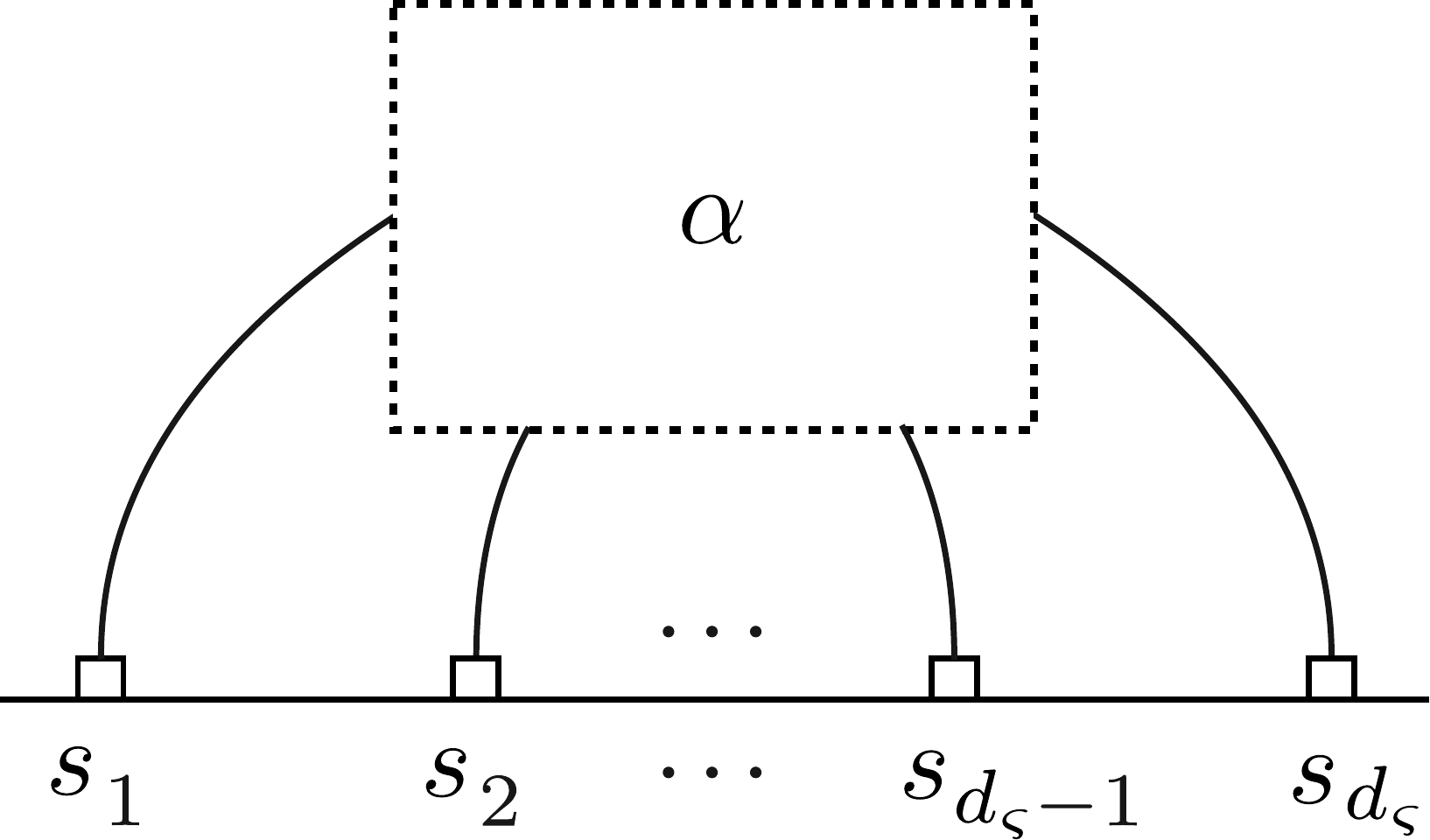}}} 
\qquad \qquad \longmapsto \qquad \qquad 
\vcenter{\hbox{\includegraphics[scale=0.275]{e-GenericLinkState_WJ.pdf} ,}}
\end{align} 
where $\alpha \in \smash{\LS_{\Summed_\multii}\super{s}}$, is an isomorphism of modules 
$($from a $\TL_\multii(\nu)$-module to a $\WJ_\multii(\nu)$ module$)$.
\end{enumerate}
\end{cor}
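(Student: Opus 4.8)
The plan is to derive both statements directly from Lemma~\ref{WJLSBasisLem}, specialized to $\multiii = \multii$, together with the identities~\eqref{IdComp}, the unit relation~\eqref{WJunit}, and the definitions~\eqref{WJAdef} and~\eqref{ProjsSpDefns}. Essentially all of the structural work has already been done; what remains is to assemble the right pieces and to make the word ``module isomorphism'' in item~\ref{MapIt2} precise.

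For item~\ref{MapIt1}, I would set $\Phi := \WJEmb_\multii(\,\cdot\,)\WJProjHat_\multii$. First, $\Phi$ is linear and injective by item~\ref{wjIt2} of Lemma~\ref{WJLSBasisLem} (with $\multiii=\multii$), and its image is $\im \WJProj_\multii(\,\cdot\,)\WJProj_\multii = \WJProj_\multii \TL_{\Summed_\multii}(\nu)\WJProj_\multii$ by item~\ref{wjIt3}(b) of that lemma, which equals $\WJ_\multii(\nu)$ by definition~\eqref{WJAdef}; hence $\Phi$ is a linear bijection from $\TL_\multii(\nu)$ onto $\WJ_\multii(\nu)$, with explicit inverse $\WJProjHat_\multii(\,\cdot\,)\WJEmb_\multii$ (checked via $\WJProjHat_\multii\WJEmb_\multii = \mathbf{1}_{\TL_\multii(\nu)}$ from~\eqref{IdComp}). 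Next, multiplicativity $\Phi(TU) = \Phi(T)\Phi(U)$ for all $T,U\in\TL_\multii(\nu)$ is exactly the homomorphism identity~\eqref{HomoProp2} of item~\ref{wjIt5}, applied with $\multii=\varepsilon=\multiii$. Finally $\Phi(\mathbf{1}_{\TL_\multii}) = \WJEmb_\multii\WJProjHat_\multii = \WJProj_\multii$ by~\eqref{IdComp}, which is the unit of $\WJ_\multii(\nu)$ by~\eqref{WJunit}; so $\Phi$ is an isomorphism of unital associative algebras, and identifying $\mathbf{1}_{\TL_\multii}$ with the diagram~\eqref{ValencedCompProj} shows it is given by~\eqref{map1}.

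For item~\ref{MapIt2}, I would set $\Psi := \WJEmb_\multii(\,\cdot\,)$ restricted to $\LS_\multii\super{s}$; the codomain $\LS_{\Summed_\multii}\super{s}$ is correct by the $s$-grading statement~\eqref{sResp} in item~\ref{wjIt6}. This $\Psi$ is injective by item~\ref{wjIt2}(a), and its image is $\WJEmb_\multii\LS_\multii\super{s} = \WJProj_\multii\LS_{\Summed_\multii}\super{s} = \PS_\multii\super{s}$: the middle equality follows by combining $\im\WJEmb_\multii(\,\cdot\,) = \im\WJProj_\multii(\,\cdot\,)$ from item~\ref{wjIt3}(a) with $\WJProjHat_\multii\LS_{\Summed_\multii}\super{s} = \LS_\multii\super{s}$ from item~\ref{wjIt6}(a) and $\WJEmb_\multii\WJProjHat_\multii = \WJProj_\multii$ from~\eqref{IdComp}, and the last equality is definition~\eqref{ProjsSpDefns}. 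Thus $\Psi$ is a linear bijection onto $\PS_\multii\super{s}$. It remains to check that $\Psi$ intertwines the $\TL_\multii(\nu)$-action on $\LS_\multii\super{s}$ with the $\WJ_\multii(\nu)$-action on $\PS_\multii\super{s}$ inherited from $\TL_{\Summed_\multii}(\nu)$ (which preserves $\PS_\multii\super{s}$ since $\WJProj_\multii S\WJProj_\multii\cdot\WJProj_\multii\beta \in \WJProj_\multii\LS_{\Summed_\multii}\super{s}$). This is the homomorphism property~\eqref{HomoProp1} of item~\ref{wjIt5} with $\multiii=\multii$: for $T\in\TL_\multii(\nu)$ and $\alpha\in\LS_\multii\super{s}$, one has $\Psi(T\alpha) = \WJEmb_\multii(T\alpha) = (\WJEmb_\multii T\WJProjHat_\multii)(\WJEmb_\multii\alpha) = \Phi(T)\Psi(\alpha)$, so $\Psi$ is an isomorphism of modules compatible with $\Phi$, given by~\eqref{map2}.

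I do not expect a genuine obstacle here: after Lemma~\ref{WJLSBasisLem} the argument is pure bookkeeping. The one point I would state most carefully is the identification of the $\WJ_\multii(\nu)$-module structure on $\PS_\multii\super{s}$ as the one inherited from the $\TL_{\Summed_\multii}(\nu)$-action on $\LS_{\Summed_\multii}\super{s}$, transported along $\Phi$, together with the verification that this action indeed lands inside $\PS_\multii\super{s}$; this is what makes the phrase ``isomorphism of modules'' in item~\ref{MapIt2} unambiguous. Everything else — injectivity, surjectivity, the image and kernel computations, the homomorphism identities, and $s$-grading preservation — is already packaged into Lemma~\ref{WJLSBasisLem}.
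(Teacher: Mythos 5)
Your proof is correct and follows essentially the same route as the paper: both derive item~\ref{MapIt1} from lemma~\ref{WJLSBasisLem} (injectivity, image $\WJ_\multii(\nu)$) together with the homomorphism identity~\eqref{HomoProp2}, and item~\ref{MapIt2} from the injectivity and image statements plus~\eqref{HomoProp1} and item~\ref{MapIt1}. Your additional explicit checks (unit goes to unit via~\eqref{IdComp} and~\eqref{WJunit}, and the $\WJ_\multii(\nu)$-action preserving $\PS_\multii\super{s}$) are fine and only make precise what the paper leaves implicit.
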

\begin{proof}
By lemma~\ref{WJLSBasisLem}, 
the map $\WJEmb_\multii (\,\cdot\,) \WJProjHat_\multii \colon \TL_\multii(\nu) \longrightarrow \TL_{\Summed_\multii}(\nu)$ 
is a linear injection with image $\smash{\WJ_\multii(\nu)}$, and by property~\eqref{HomoProp2}, 
this map is also a homomorphism of algebras. This proves item~\ref{MapIt1}. For item~\ref{MapIt2},
the map $\WJEmb_\multii (\, \cdot \,) \colon \smash{\LS_\multii\super{s}} \longrightarrow \smash{\LS_{\Summed_\multii}\super{s}}$
is a linear injection with image $\smash{\PS_\multii\super{s}} := \WJProj_\multii  \LS_{\Summed_\multii}\super{s}$,
and by property~\eqref{HomoProp1} and item~\ref{MapIt1},
this map defines an isomorphism from the $\TL_\multii(\nu)$-module $\smash{\LS_\multii\super{s}}$ 
to the $\WJ_\multii(\nu)$ module $\smash{\PS_\multii\super{s}}$. This proves item~\ref{MapIt2} and finishes the proof.
\end{proof}

The isomorphism from $\WJ_\multii(\nu)$ to $\TL_\multii(\nu)$ is only a cosmetic change. 
As such, the reader may wonder why do we introduce two notations for what are morally identical algebras. 
Here are some partial answers to this question:

\begin{itemize}[leftmargin=*] 

\item 
$\TL_\multii(\nu)$ is well-defined as a set for all values of $\nu \in \bC$, whereas this is not the case for $\WJ_\multii(\nu)$. 
We conjecture that the valenced Temperley-Lieb algebra $\TL_\multii(\nu)$
can be defined as an abstract algebra with generators and relations, for any $\nu \in \bC$.
We pertain to such a definition in~\cite{fp0}.


\item We may think of $\WJ_\multii(\nu)$ and $\TL_\multii(\nu)$ as the collections of all intertwiners 
of two isomorphic but otherwise different $U_q(\mathfrak{sl}_2)$-modules. 
Because the modules are different, we distinguish their algebras of intertwiners.

\item 
Our work in this article is motivated by a problem in conformal field theory, as we discuss in section~\ref{MotivationSec}.
In our application, elements of certain modules of the two algebras $\WJ_\multii(\nu)$ and $\TL_\multii(\nu)$ 
are viewed as different correlation functions, ones of which are certain limits of the other ones. 
We investigate such functions in detail in~\cite{fp2}.


\end{itemize}

\begin{center}
\bf Radical of the link state bilinear form on Jones-Wenzl link states
\end{center}

In this section, we prove results concerning the radicals of $\WJ_\multii(\nu)$-modules needed in section~\ref{rofSect2}.

The $\WJ_\multii(\nu)$-module $\PS_\multii$ 
has a natural bilinear form, given by restricting the bilinear form of its parent module $\LS_{\Summed_\multii}$ to this subspace.  
We define the radical of this bilinear form to be the vector space
\begin{align} 
\rad \PS_\multii := \; & \big\{\alpha \in \PS_\multii \, \big| \, \text{$\BiForm{\alpha}{\beta} = 0$ for all $\beta \in \PS_\multii$} \big\}.
\end{align}
The radical $\rad \PS_\multii$ is a $\WJ_\multii(\nu)$-submodule of $\PS_\multii$, 
it equals a direct sum of the radicals of its submodules,
\begin{align} 
\rad \PS_\multii = \bigoplus_{s \, \in \, \DefectSet_\multii} \rad \PS_\multii\super{s}, \qquad \text{where} \quad
\rad\smash{\PS_\multii\super{s}} := \; & 
\big\{\alpha \in\smash{\PS_\multii\super{s}} \, \big| \, \text{$\BiForm{\alpha}{\beta} = 0$ for all $\beta \in \smash{\PS_\multii\super{s}}$} \big\} ,
\end{align}
and 
$\rad \smash{\PS_\multii\super{s}}$ is a $\WJ_\multii(\nu)$-submodule of $\smash{\PS_\multii\super{s}}$.
The map $\WJEmb_\multii (\, \cdot \,) \colon \LS_\multii \longrightarrow \LS_{\Summed_\multii}$ 
preserves the bilinear form, so 
\begin{align}
\WJEmb_\multii \rad \smash{\LS_\multii\super{s}} = \rad \PS_\multii\super{s} . 
\end{align} 
In particular, corollary~\ref{AnIsoCor} induces an isomorphism of modules between the radicals:
\begin{cor}
Suppose $\max \multii < \ppmin(q)$. 
The linear map $\WJEmb_\multii (\, \cdot \,)$
sending $\smash{\rad \LS_\multii\super{s}} \longrightarrow \smash{\rad \PS_\multii\super{s}}$ via rule~\eqref{map2}
is an isomorphism of modules $($from a $\TL_\multii(\nu)$-module to a $\WJ_\multii(\nu)$ module$)$.
\end{cor}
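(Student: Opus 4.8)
The plan is to obtain this corollary as an essentially formal consequence of corollary~\ref{AnIsoCor} together with the compatibility of $\WJEmb_\multii$ with the link state bilinear form. First I would recall that, by item~\ref{MapIt2} of corollary~\ref{AnIsoCor}, the linear map $\WJEmb_\multii(\,\cdot\,) \colon \smash{\LS_\multii\super{s}} \longrightarrow \smash{\PS_\multii\super{s}}$ is a bijection that intertwines the $\TL_\multii(\nu)$-action on its domain with the $\WJ_\multii(\nu)$-action on its codomain, where the two algebras are identified via the algebra isomorphism $\WJEmb_\multii(\,\cdot\,)\WJProjHat_\multii$ of item~\ref{MapIt1} of the same corollary; the intertwining is precisely the content of~\eqref{HomoProp1} specialized to $\multiii = \multii$. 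In particular $\WJEmb_\multii(\,\cdot\,)$ carries submodules of $\smash{\LS_\multii\super{s}}$ bijectively onto submodules of $\smash{\PS_\multii\super{s}}$.

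Next I would invoke the fact, recorded just above the statement, that $\WJEmb_\multii$ preserves the link state bilinear form — immediate from definition~\eqref{LSBiFormExt} — together with surjectivity of $\WJEmb_\multii(\,\cdot\,) \colon \smash{\LS_\multii\super{s}} \longrightarrow \smash{\PS_\multii\super{s}}$, to conclude that $\WJEmb_\multii \rad \smash{\LS_\multii\super{s}} = \rad \smash{\PS_\multii\super{s}}$. Indeed, $\alpha \in \smash{\LS_\multii\super{s}}$ lies in the radical if and only if $\BiForm{\alpha}{\beta} = 0$ for all $\beta \in \smash{\LS_\multii\super{s}}$, equivalently $\BiForm{\WJEmb_\multii\alpha}{\WJEmb_\multii\beta} = 0$ for all such $\beta$, equivalently (by surjectivity of $\WJEmb_\multii(\,\cdot\,)$ onto $\smash{\PS_\multii\super{s}}$) $\BiForm{\WJEmb_\multii\alpha}{\gamma} = 0$ for all $\gamma \in \smash{\PS_\multii\super{s}}$, i.e. $\WJEmb_\multii\alpha \in \rad \smash{\PS_\multii\super{s}}$. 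Since $\rad \smash{\LS_\multii\super{s}}$ and $\rad \smash{\PS_\multii\super{s}}$ are submodules (respectively of $\smash{\LS_\multii\super{s}}$ over $\TL_\multii(\nu)$ and of $\smash{\PS_\multii\super{s}}$ over $\WJ_\multii(\nu)$), the restriction of $\WJEmb_\multii(\,\cdot\,)$ to $\rad \smash{\LS_\multii\super{s}}$ is a module homomorphism onto $\rad \smash{\PS_\multii\super{s}}$; it is surjective by the displayed equality and injective because it is the restriction of the injective map $\WJEmb_\multii(\,\cdot\,)$. Hence it is a module isomorphism, and it is given by rule~\eqref{map2} by construction.

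I do not expect any genuine obstacle here: the statement is a routine restriction argument once corollary~\ref{AnIsoCor} and the bilinear-form invariance of $\WJEmb_\multii$ are in hand. The only point that deserves a word of care is that the equality $\WJEmb_\multii \rad \smash{\LS_\multii\super{s}} = \rad \smash{\PS_\multii\super{s}}$ genuinely uses surjectivity of $\WJEmb_\multii(\,\cdot\,)$ onto $\smash{\PS_\multii\super{s}}$; without it one would only get the inclusion ``$\subseteq$'', which together with injectivity still suffices to identify $\rad \smash{\LS_\multii\super{s}}$ with its image, so even this subtlety is harmless. Assembling these observations completes the proof.
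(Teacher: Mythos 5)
Your argument is correct and follows exactly the paper's route: the paper also deduces the corollary from item~\ref{MapIt2} of corollary~\ref{AnIsoCor} together with the observation (stated just above the corollary) that $\WJEmb_\multii(\,\cdot\,)$ preserves the bilinear form by definition~\eqref{LSBiFormExt}, whence $\WJEmb_\multii \rad \smash{\LS_\multii\super{s}} = \rad \smash{\PS_\multii\super{s}}$. You have merely spelled out the routine details (in particular the role of surjectivity onto $\smash{\PS_\multii\super{s}}$) that the paper leaves implicit.
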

\begin{proof}
This immediately follows from item~\ref{MapIt2} of corollary~\ref{AnIsoCor} and the fact that $\WJEmb_\multii (\, \cdot \,)$ preserves the bilinear form.
\end{proof}

On the other hand, by definition~\eqref{ProjsSpDefns}, we have 
\begin{align} 
\PS_\multii\super{s} = \WJProj_\multii \LS_{\Summed_\multii}\super{s} . 
\end{align} 
In fact, a similar property holds after we replace $\PS_\multii\super{s}$ and $\smash{\LS_{\Summed_\multii}\super{s}}$ in this equation by their radicals:

\begin{lem}\label{EmbProjLem} 
Suppose $\max \multii < \ppmin(q)$.  We have 
\begin{align} \label{EmbProj} 
\rad \PS_\multii\super{s} = \PS_\multii\super{s} \cap \rad \LS_{\Summed_\multii}\super{s} 
= \WJProj_\multii \rad \LS_{\Summed_\multii}\super{s} . 
\end{align} 
\end{lem}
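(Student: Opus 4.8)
The plan is to establish the two equalities in~\eqref{EmbProj} separately, working with the bilinear form restricted to $\smash{\PS_\multii\super{s}}$ as a subspace of $\smash{\LS_{\Summed_\multii}\super{s}}$. For the first equality, $\rad \smash{\PS_\multii\super{s}} = \smash{\PS_\multii\super{s}} \cap \rad \smash{\LS_{\Summed_\multii}\super{s}}$, the containment $\supseteq$ is immediate: if $\alpha \in \smash{\PS_\multii\super{s}}$ and $\smash{\BiForm{\alpha}{\beta} = 0}$ for all $\beta \in \smash{\LS_{\Summed_\multii}\super{s}}$, then in particular this holds for all $\beta \in \smash{\PS_\multii\super{s}}$. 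For the reverse containment $\subseteq$, the key point is that $\WJProj_\multii$ is self-adjoint with respect to the bilinear form: since $\smash{\WJProj_\multii^\dagger} = \WJProj_\multii$ by the reflection symmetry~\eqref{ReflectionSymmetryOfP} of Jones--Wenzl projectors (applied to each block of the composite projector~\eqref{WJCompProj}), invariance property~\eqref{InvarProp} gives $\smash{\BiForm{\WJProj_\multii \alpha}{\beta} = \BiForm{\alpha}{\WJProj_\multii \beta}}$ for all $\alpha, \beta \in \smash{\LS_{\Summed_\multii}\super{s}}$. So if $\alpha \in \rad \smash{\PS_\multii\super{s}}$, write $\alpha = \WJProj_\multii \alpha$ (valid since $\alpha \in \smash{\PS_\multii\super{s}}$ and $\WJProj_\multii$ is idempotent on this space by~\eqref{WJunit}); then for an arbitrary $\beta \in \smash{\LS_{\Summed_\multii}\super{s}}$ we have $\smash{\BiForm{\alpha}{\beta} = \BiForm{\WJProj_\multii \alpha}{\beta} = \BiForm{\alpha}{\WJProj_\multii \beta} = 0}$ because $\WJProj_\multii \beta \in \smash{\PS_\multii\super{s}}$. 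Hence $\alpha \in \rad \smash{\LS_{\Summed_\multii}\super{s}}$, and being already in $\smash{\PS_\multii\super{s}}$, it lies in the intersection.

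For the second equality, $\smash{\PS_\multii\super{s}} \cap \rad \smash{\LS_{\Summed_\multii}\super{s}} = \WJProj_\multii \rad \smash{\LS_{\Summed_\multii}\super{s}}$, I would again argue by double containment. For $\supseteq$: if $\alpha \in \rad \smash{\LS_{\Summed_\multii}\super{s}}$, then $\WJProj_\multii \alpha \in \smash{\PS_\multii\super{s}}$ by definition~\eqref{ProjsSpDefns}, and $\WJProj_\multii \alpha \in \rad \smash{\LS_{\Summed_\multii}\super{s}}$ because the radical is a $\TL_{\Summed_\multii}(\nu)$-submodule (being invariant under the algebra action by~\eqref{InvarProp}) and $\WJProj_\multii$ is an element of $\TL_{\Summed_\multii}(\nu)$; thus $\WJProj_\multii \alpha$ lies in the intersection. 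For $\subseteq$: if $\alpha \in \smash{\PS_\multii\super{s}} \cap \rad \smash{\LS_{\Summed_\multii}\super{s}}$, then $\alpha = \WJProj_\multii \alpha$ since $\alpha \in \smash{\PS_\multii\super{s}}$, exhibiting $\alpha$ as an element of $\WJProj_\multii \rad \smash{\LS_{\Summed_\multii}\super{s}}$. This completes the chain of equalities.

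I do not anticipate a genuine obstacle here; the statement is essentially a formal consequence of three facts already available in the text: self-adjointness of $\WJProj_\multii$ (from~\eqref{ReflectionSymmetryOfP} and~\eqref{InvarProp}), idempotency of $\WJProj_\multii$ on $\smash{\PS_\multii\super{s}}$ (from~\eqref{WJunit} and definition~\eqref{ProjsSpDefns}), and submodule-invariance of the radical (from~\eqref{InvarProp}). The only point requiring a modicum of care is verifying that the composite projector $\WJProj_\multii$ in~\eqref{WJCompProj}, being a horizontal juxtaposition of the individual blocks $\WJProj_{\sIndex_i}$, inherits the reflection symmetry $\smash{\WJProj_\multii^\dagger} = \WJProj_\multii$; this follows blockwise from~\eqref{ReflectionSymmetryOfP} together with the observation that reflecting a juxtaposition reflects each block and leaves the left-to-right order of blocks unchanged (since the blocks sit side by side along the reflection axis, not across it). Once this is noted, the argument above goes through verbatim.
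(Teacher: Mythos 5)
Your proof is correct and relies on exactly the same ingredients as the paper's: idempotency of $\WJProj_\multii$, the reflection symmetry $\smash{\WJProj_\multii^\dagger} = \WJProj_\multii$ combined with invariance~\eqref{InvarProp}, and the definition~\eqref{ProjsSpDefns} of $\smash{\PS_\multii\super{s}}$; the paper merely organizes the argument as a cycle of three containments, $\rad \smash{\PS_\multii\super{s}} \subset \smash{\PS_\multii\super{s}} \cap \rad \smash{\LS_{\Summed_\multii}\super{s}} \subset \WJProj_\multii \rad \smash{\LS_{\Summed_\multii}\super{s}} \subset \rad \smash{\PS_\multii\super{s}}$, rather than your two double containments. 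The only cosmetic difference is that your $\supseteq$ step for the second equality invokes submodule-invariance of $\rad \smash{\LS_{\Summed_\multii}\super{s}}$, which the paper sidesteps by closing the cycle with the containment $\WJProj_\multii \rad \smash{\LS_{\Summed_\multii}\super{s}} \subset \rad \smash{\PS_\multii\super{s}}$ directly.
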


\begin{proof}  
To begin, we prove the containment 
\begin{align} \label{Conc1} 
\rad \PS_\multii\super{s} \subset \PS_\multii\super{s} \cap \rad \LS_{\Summed_\multii}\super{s}. 
\end{align}  
Indeed, using invariance property~\eqref{InvarProp} of the bilinear form from~\eqref{InvarProp} 
of lemma~\ref{EasyLem2}, the property $\WJProj_\multii^\dagger = \WJProj_\multii$ from~\eqref{ReflectionSymmetryOfP},
and the idempotent property $\WJProj_\multii^2 = \WJProj_\multii$ from~\eqref{ProjectorID0}, we obtain~\eqref{Conc1}: 
\begin{align} 
\alpha \in \rad \smash{\PS_\multii\super{s}} \qquad & 
\overset{\hphantom{\eqref{InvarProp}}}{\Longrightarrow} \qquad \text{$\alpha \in \smash{\PS_\multii\super{s}}$, and 
$\BiForm{\alpha}{\gamma} = 0$, for all $\gamma \in \smash{\PS_\multii\super{s}}$}, \\[.5em]
& \overset{\hphantom{\eqref{InvarProp}}}{\Longrightarrow} \qquad \text{$\alpha = \WJProj_\multii\beta$, for some $\beta \in \smash{\LS_{\Summed_\multii}\super{s}}$, and $\BiForm{\WJProj_\multii\beta}{\WJProj_\multii\delta} = 0$, for all $\delta \in \smash{\LS_{\Summed_\multii}\super{s}}$}, \\
& \overset{\eqref{InvarProp}}{\Longrightarrow} \qquad \text{$\alpha = \WJProj_\multii\beta$, for some $\beta \in \smash{\LS_{\Summed_\multii}\super{s}}$ and 
$\BiForm{\alpha}{\delta} = \BiForm{\WJProj_\multii\beta}{\delta} = 0$, for all $\delta \in \smash{\LS_{\Summed_\multii}\super{s}}$}, \\[.5em]
& \overset{\hphantom{\eqref{InvarProp}}}{\Longrightarrow} \qquad \alpha \in \smash{\PS_\multii\super{s}} \cap \rad \smash{\LS_{\Summed_\multii}\super{s}} .
\end{align}

Next, we prove the containment
\begin{align} \label{Conc2} 
\PS_\multii\super{s} \cap \rad \LS_{\Summed_\multii}\super{s} \subset \WJProj_\multii \rad \LS_{\Summed_\multii}\super{s} . 
\end{align} 
Indeed,~\eqref{Conc2} follows from the idempotent property $\WJProj_\multii^2 = \WJProj_\multii$ from~\eqref{ProjectorID0}:
\begin{align} 
\alpha \in \PS_\multii\super{s} \cap \rad \LS_{\Summed_\multii}\super{s} \qquad 
& \Longrightarrow \qquad 
\text{$\alpha = \WJProj_\multii \beta \in \rad \smash{\LS_{\Summed_\multii}\super{s}}$, for some $\beta \in \smash{\LS_{\Summed_\multii}\super{s}}$}, \\
& \Longrightarrow \qquad \alpha = \WJProj_\multii \beta = \WJProj_\multii^2 \beta = \WJProj_\multii \alpha \in \WJProj_\multii \rad \LS_{\Summed_\multii}\super{s} .
\end{align}

To finish, we prove the containment 
\begin{align} \label{Conc3} 
\WJProj_\multii \rad \LS_{\Summed_\multii}\super{s} \subset \rad \PS_\multii\super{s} . 
\end{align} 
Indeed, we obtain~\eqref{Conc3} by
using invariance property~\eqref{InvarProp} of the bilinear form from 
lemma~\ref{EasyLem2}:
\begin{align} 
\alpha \in \WJProj_\multii \rad \LS_{\Summed_\multii}\super{s} \qquad & \Longrightarrow \qquad \text{$\alpha = \WJProj_\multii \beta$, for some $\beta \in \smash{\rad \LS_{\Summed_\multii}\super{s}}$}, \\
\vspace*{-2mm}
& \Longrightarrow \qquad \text{$\alpha \in \PS_\multii\super{s}$ and $\BiForm{\alpha}{\gamma} = \BiForm{\WJProj_\multii\beta}{\gamma} \overset{\eqref{InvarProp}}{=} \BiForm{\beta}{\WJProj_\multii^\dagger\gamma} = 0$, for all $\gamma \in \PS_\multii\super{s},$} \\[.5em]
& \Longrightarrow \qquad \alpha \in \rad \PS_\multii\super{s} .
\end{align}
Finally, combining (\ref{Conc1},~\ref{Conc2},~\ref{Conc3}) gives the sought equalities~\eqref{EmbProj}.
\end{proof}

\begin{cor} \label{EmbProjCor} 
Suppose $\max \multii < \ppmin(q)$.  We have 
\begin{align} \label{EmbProj22} 
\rad \LS_\multii\super{s} = \WJProjHat_\multii \rad \LS_{\Summed_\multii}\super{s} .
\end{align} 
\end{cor}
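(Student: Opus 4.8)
The plan is to deduce the claim directly from lemma~\ref{EmbProjLem} by applying the surjection $\WJProjHat_\multii$ to both sides of the equality $\rad \PS_\multii\super{s} = \WJProj_\multii \rad \LS_{\Summed_\multii}\super{s}$ established there. The two algebraic ingredients I will use are the identities $\WJProjHat_\multii \WJEmb_\multii = \mathbf{1}_{\TL_\multii(\nu)}$ and $\WJEmb_\multii \WJProjHat_\multii = \WJProj_\multii$ from~\eqref{IdComp}, together with the observation recorded just above corollary~\ref{EmbProjCor} that the map $\WJEmb_\multii(\,\cdot\,)$ preserves the bilinear form, hence $\WJEmb_\multii \rad \LS_\multii\super{s} = \rad \PS_\multii\super{s}$.

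First I would evaluate $\WJProjHat_\multii$ on the left-hand side of the identity from lemma~\ref{EmbProjLem}: since $\rad \PS_\multii\super{s} = \WJEmb_\multii \rad \LS_\multii\super{s}$, applying $\WJProjHat_\multii$ and using $\WJProjHat_\multii \WJEmb_\multii = \mathbf{1}_{\TL_\multii(\nu)}$ gives $\WJProjHat_\multii \rad \PS_\multii\super{s} = \rad \LS_\multii\super{s}$. Next I would evaluate $\WJProjHat_\multii$ on the right-hand side: from $\WJProj_\multii = \WJEmb_\multii \WJProjHat_\multii$ we get $\WJProjHat_\multii \WJProj_\multii = (\WJProjHat_\multii \WJEmb_\multii)\WJProjHat_\multii = \WJProjHat_\multii$, so $\WJProjHat_\multii \WJProj_\multii \rad \LS_{\Summed_\multii}\super{s} = \WJProjHat_\multii \rad \LS_{\Summed_\multii}\super{s}$. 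Equating the two images yields $\rad \LS_\multii\super{s} = \WJProjHat_\multii \rad \LS_{\Summed_\multii}\super{s}$, which is exactly~\eqref{EmbProj22}.

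There is essentially no obstacle here: the statement is a formal consequence of lemma~\ref{EmbProjLem} combined with the elementary properties of the maps collected in lemma~\ref{WJLSBasisLem} and~\eqref{IdComp}. The only point deserving a line of care is the equality $\WJEmb_\multii \rad \LS_\multii\super{s} = \rad \PS_\multii\super{s}$, which I would justify as follows: by item~\ref{wjIt2} of lemma~\ref{WJLSBasisLem} the map $\WJEmb_\multii(\,\cdot\,)$ restricts to a bijection $\LS_\multii\super{s} \to \PS_\multii\super{s}$, and by definition~\eqref{LSBiFormExt} it intertwines the two bilinear forms, $\BiForm{\alpha}{\beta} = \BiForm{\WJEmb_\multii \alpha}{\WJEmb_\multii \beta}$; hence $\alpha \in \rad \LS_\multii\super{s}$ if and only if $\WJEmb_\multii \alpha$ pairs trivially with every element of $\WJEmb_\multii \LS_\multii\super{s} = \PS_\multii\super{s}$, i.e.\ if and only if $\WJEmb_\multii \alpha \in \rad \PS_\multii\super{s}$. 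With this in hand, the two-step computation above completes the proof.
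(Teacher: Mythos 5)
Your proposal is correct and follows essentially the same route as the paper: apply $\WJProjHat_\multii$ to the equalities of lemma~\ref{EmbProjLem}, using that $\WJEmb_\multii$ preserves the bilinear form (so $\WJEmb_\multii \rad \LS_\multii\super{s} = \rad \PS_\multii\super{s}$) and that $\WJProjHat_\multii \WJProj_\multii = \WJProjHat_\multii$. The only cosmetic difference is that the paper deduces $\WJProjHat_\multii \WJProj_\multii = \WJProjHat_\multii$ from the idempotent property~\eqref{ProjectorID0}, while you use the factorization $\WJProj_\multii = \WJEmb_\multii \WJProjHat_\multii$ from~\eqref{IdComp}; these are equivalent.
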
 

\begin{proof} 
By lemma~\ref{WJLSBasisLem}, the map $\WJEmb_\multii (\, \cdot \,)$
is a linear injection from $\LS_\multii$ to $\PS_\multii$ that respects 
the $s$-grading~\eqref{LSDirSum2} of its domain and the bilinear form~\eqref{LSBiFormExt} of $\LS_\multii$. 
Applying its inverse, i.e., $\smash{\WJProjHat}_\multii$, to equalities~\eqref{EmbProj} of lemma~\ref{EmbProjLem} gives
\begin{align} \label{EmbProj3} 
\rad \smash{\LS_\multii\super{s}} = \WJProjHat_\multii \rad \PS_\multii\super{s} = \WJProjHat_\multii \WJProj_\multii \rad \LS_{\Summed_\multii}\super{s} . 
\end{align} 
By the idempotent property~\eqref{ProjectorID0} for Jones-Wenzl projector, 
we evidently have $\smash{\WJProjHat_\multii} \WJProj_\multii = \smash{\WJProjHat_\multii}$.  
After inserting this last simplification into~\eqref{EmbProj3}, we arrive with~\eqref{EmbProj22}.
\end{proof}

\section{Trivalent link states at roots of unity} \label{RadicalAppendix}

We recall from section~\ref{ConformalBlocksSect} that for each valenced link pattern $\alpha \in \LP_\multii$, 
the trivalent link state $\hcancel{\alpha} \in \LS_\multii$ is defined 
by replacing open vertices by closed ones, beginning from the rightmost vertex and proceeding leftwards, until 
encountering one of the situations in definition~\ref{TrivalentLinkStateDef}.
The procedure terminates at a special index $J = J_\alpha(q)$ defined in~\eqref{Jindex2}.
In this appendix, we prove that $\hcancel{\alpha}$ is well-defined when 
$\max \multii < \ppmin(q)$ but $\Summed_\multii \geq \ppmin(q)$, as we state in remark~\ref{TrivLinkStateRem}.

We fix $q \in \bC^\times$ and $\alpha \in \LP_\multii$ throughout.
For $q'  \in \bC^\times$ with $\ppmin(q') = \infty$, we let $\hcancel{\alpha}_{q'}$ 
denote the trivalent link state $\hcancel{\alpha} \in \LS_\multii$ with $q$ perturbed to $q'$
but the special index $J$ fixed as in~\eqref{Jindex2}, so that $J = J_\alpha(q) \neq J_\alpha(q') = - \infty$.
We note that because $\ppmin(q') = \infty$, projector boxes of all sizes exist,
so $\hcancel{\alpha}_{q'}$ is well-defined.
Our goal is to show that we may define $\hcancel{\alpha}$ by analytic continuation,
to be the limit of $\hcancel{\alpha}_{q'}$ as $q' \to q$ along a sequence not containing roots of unity. 
For this purpose, we endow $\LS_\multii$ with the normed topology induced by the sup norm
\begin{align}  \label{supnorm}
\alpha = \sum_{\beta \, \in \, \LP_\multii} c_\beta \, \beta \in \LS_\multii , \quad \text{for some $c_\beta \in \bC$} 
\qquad \qquad \Longrightarrow \qquad \qquad 
\| \alpha \| := \max_{\beta \, \in \, \LP_\multii} c_\beta . 
\end{align} 
Thus, given a sequence $(\alpha_{q'})$ of $\multii$-valenced link states, we have 
$\smash{\underset{q' \to q}{\lim}} \alpha_{q'} = \alpha$ if and only if
$\smash{\underset{q' \to q}{\lim}} \| \alpha_{q'} - \alpha \| = 0$ 
in $\bC$.  
We also note that, by definition~\eqref{supnorm} of the sup norm, if 
\begin{align} \label{alphaCoeff} 
\alpha_{q'} = \sum_{\beta \, \in \, \LP_\multii} c_\beta(q') \beta \quad \qquad \text{and} \qquad \quad
\alpha = \sum_{\beta \, \in \, \LP_\multii} c_\beta \beta, 
\end{align} 
for some constants $c_\beta(q'), c_\beta \in \bC$, then we have 
\begin{align} \label{EquivLim} 
\lim_{q' \to q} \alpha_{q'} = \alpha 
\qquad \overset{\eqref{supnorm}}{\Longleftrightarrow} \qquad 
\lim_{q' \to q} c_\beta(q') = c_\beta , \quad \text{for all $\beta \in \LP_\multii$.} 
\end{align}

\begin{lem} \label{LimitLem} 
The limit $\underset{{q_k' \to q}}{\lim} \hcancel{\alpha}_{q_k'}$ exists, for any sequence $(q_k')_{k\in\bN}$ tending to $q$
such that $\ppmin(q_k') = \infty$, for all $k\in\bN$.  
\end{lem}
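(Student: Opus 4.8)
\textbf{Proof proposal for lemma~\ref{LimitLem}.} The plan is to show that for each fixed $\beta \in \LP_\multii$, the coefficient of $\beta$ in the expansion of $\hcancel{\alpha}_{q'}$ is a rational function of $q'$ (more precisely, a Laurent polynomial in $q'$ divided by a product of quantum integers $[k]_{q'}$), and that at $q' = q$ all the potentially vanishing denominators are in fact cancelled by numerator factors, so the coefficient extends analytically to $q'=q$. Given this, the limit $\underset{q_k' \to q}{\lim} c_\beta(q_k')$ exists and is independent of the approaching sequence, whence by~\eqref{EquivLim} the link-state limit $\underset{q_k' \to q}{\lim} \hcancel{\alpha}_{q_k'}$ exists.

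First I would recall from section~\ref{ConformalBlocksSect} the explicit recipe for $\hcancel{\alpha}_{q'}$: starting from the walk representation of $\alpha$ and the lowest walk $\varrho_\alpha^{\,\downarrow}$, one inserts projector boxes whose sizes are the heights appearing strictly beyond the special index $J = J_\alpha(q)$ (held fixed as $q'$ varies). Each Jones-Wenzl projector $\WJProj_k$ with $k < \ppmin(q')$ decomposes, via~\eqref{ProjDecomp} and Morrison's formula~\cite{sm} (or equivalently recursion~\eqref{wjrecursion}), into a linear combination of link diagrams whose coefficients are ratios of products of quantum integers $[\ell]_{q'}$ with $1 \le \ell \le k$; since $\ppmin(q_k') = \infty$, all these are well-defined and nonzero for each $k\in\bN$. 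Composing the boxes and applying the loop-erasure and extraction rules of appendix~\ref{TLRecouplingSect} (lemmas~\ref{ExtractLem} and~\ref{LoopErasureLem}, with Theta evaluations from lemma~\ref{ThetaLem}) expresses $c_\beta(q')$ as a finite sum of terms, each a ratio of products of quantum factorials $[m]_{q'}!$ and of factors $[m]_{q'}$. Because, by construction of $\hcancel{\alpha}$ via definition~\ref{TrivalentLinkStateDef}, every projector box appearing lies at a step beyond $J$ and has size $r_j$ with $r_j < \ppmin(q)$ is \emph{not} guaranteed --- rather, the relevant control is that the heights $h_{\min,j}(\varrho)$ and $h_{\max,j}(\varrho)$ for $j > J$ lie strictly between consecutive multiples $\Delta_{k_s}$ and $\Delta_{k_s+1}$ of $\pmin(q)$ (this is exactly the content of the estimates~(\ref{condi2General0}--\ref{condi2General1}) used in the proof of lemma~\ref{ThetaInFiniteAndNonzeroLem}) --- the numerator zeros of $q' \mapsto [m]_{q'}$ at $q' = q$ cancel the denominator zeros.

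Concretely, the key step is to invoke lemma~\ref{ThetaInFiniteAndNonzeroLem}: it already shows that, for $j \in \{\bar J + 1, \ldots, \np_\multii - 1\}$, the relevant Theta-over-quantum-integer ratios are finite and nonzero at $q' = q$, which is precisely the statement that after simplification every factor contributing to $c_\beta(q')$ is an analytic function of $q'$ in a neighbourhood of $q$. Since there are finitely many terms and finitely many $\beta \in \LP_\multii$, each $c_\beta(q')$ is analytic at $q'=q$, so $\underset{q_k'\to q}{\lim} c_\beta(q_k')$ exists and equals the analytic value $c_\beta(q)$, regardless of the chosen sequence. By~\eqref{EquivLim}, this gives the asserted limit.

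The main obstacle will be bookkeeping: one must carefully match the denominator quantum integers produced by decomposing each projector box against the numerator quantum factorials produced by the loop erasures, and verify that at $q'=q$ the orders of zero in the denominator never exceed those in the numerator. The cleanest way to organize this is to argue box-by-box, reducing at each step to a single Theta-network ratio of the form $\ThetaNet(r_j, r_{j+1}, \sIndex_{j+1}) / ((-1)^{r_{j+1}}[r_{j+1}+1])$ with $j > J$, and then cite lemma~\ref{ThetaInFiniteAndNonzeroLem} verbatim for the finiteness and non-vanishing of each such ratio at $q'=q$. The only genuinely new content beyond that lemma is the observation that the coefficient extraction is a finite algebraic combination of such ratios, hence analytic, hence convergent along \emph{any} sequence --- a routine consequence of the structure of the recipe in definition~\ref{TrivalentLinkStateDef}.
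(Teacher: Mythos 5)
Your overall skeleton — expand $\hcancel{\alpha}_{q'}$ in the basis $\LP_\multii$, show each coefficient $c_\beta(q')$ is a rational function of $q'$ that stays finite at $q'=q$, and conclude via~\eqref{EquivLim} — is the same as the paper's. The gap is in how you justify finiteness. The coefficients $c_\beta(q')$ arise from decomposing the projector boxes of $\hcancel{\alpha}_{q'}$ over their internal link diagrams; they are products of Jones--Wenzl (Morrison) coefficients, not Theta-network ratios. Lemmas~\ref{ExtractLem},~\ref{LoopErasureLem} and~\ref{ThetaInFiniteAndNonzeroLem} concern evaluations of \emph{closed} sub-networks, which appear when computing bilinear forms such as $\BiForm{\hcancel{\alpha}}{\hcancel{\beta}}$; an open link state contains no bubbles to erase, so your step ``reducing at each step to a single Theta-network ratio $\ThetaNet(r_j,r_{j+1},\sIndex_{j+1})/((-1)^{r_{j+1}}[r_{j+1}+1])$ and then cite lemma~\ref{ThetaInFiniteAndNonzeroLem} verbatim'' does not apply to the coefficient extraction at all. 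Moreover, since the boxes being decomposed may have size $\geq \ppmin(q)$, a naive all-at-once decomposition produces denominators (e.g.\ factors of $[r_j]!$-type) that genuinely vanish at $q$, and the ``numerator zeros cancel denominator zeros'' claim you lean on is exactly the hard part that you never verify and that the cited lemma does not furnish.

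The paper avoids any cancellation argument by organizing the decomposition so that dangerous denominators never appear: it passes to the unit-step walk representation (via $\varrho^{\,\downarrow}_\alpha$ and the replacement~\eqref{EquivPaths}) and decomposes the boxes one at a time starting at $j=J$. At an up-step ($r_{j+1}=r_j+1$) the box is absorbed by~\eqref{ProjectorID1} and contributes no coefficients; at a down-step ($r_{j+1}=r_j-1$) the decomposition~\eqref{ProjDecomp} contributes only coefficients of the form $[i]_{q'}/[r_{j+1}+1]_{q'}$, and by the definition of the stopping index $J$ one has $\ppmin(q)\nmid(r_{j+1}+1)$ for the relevant $j$, so every denominator is nonzero at $q$ and each coefficient converges (its numerator may vanish, which is harmless — nonvanishing of the limit is neither needed nor true in general). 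If you replace your Theta-ratio reduction with this box-by-box bookkeeping, your analyticity argument goes through; as written, the key finiteness step is unsupported.
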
 

\begin{proof} 
For each $j \in \{J,J+1,\ldots,\Summed_\multii - 1\}$, starting with $j = J$, 
we decompose the projector box of size $r_j$ in the $j$:th closed vertex
of the valenced link state $\hcancel{\alpha}_{q_k'}$ over its internal link diagrams.
The $j$:th closed vertex is one of two types:
\begin{align} 
\label{Case1} 
r_{j+1} = r_j + 1: \qquad \qquad  
\quad  & \vcenter{\hbox{\includegraphics[scale=0.275]{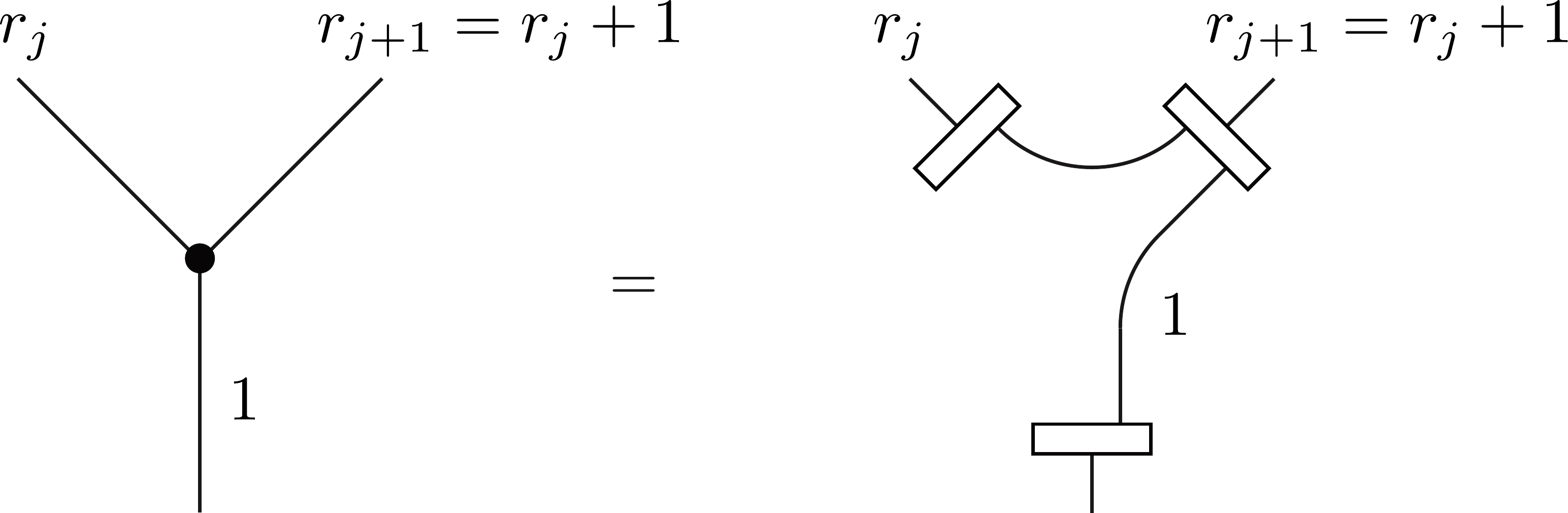} ,}}  \\[1.5em]
\label{Case2} 
r_{j+1} = r_j - 1: \qquad \qquad
\quad  & \vcenter{\hbox{\includegraphics[scale=0.275]{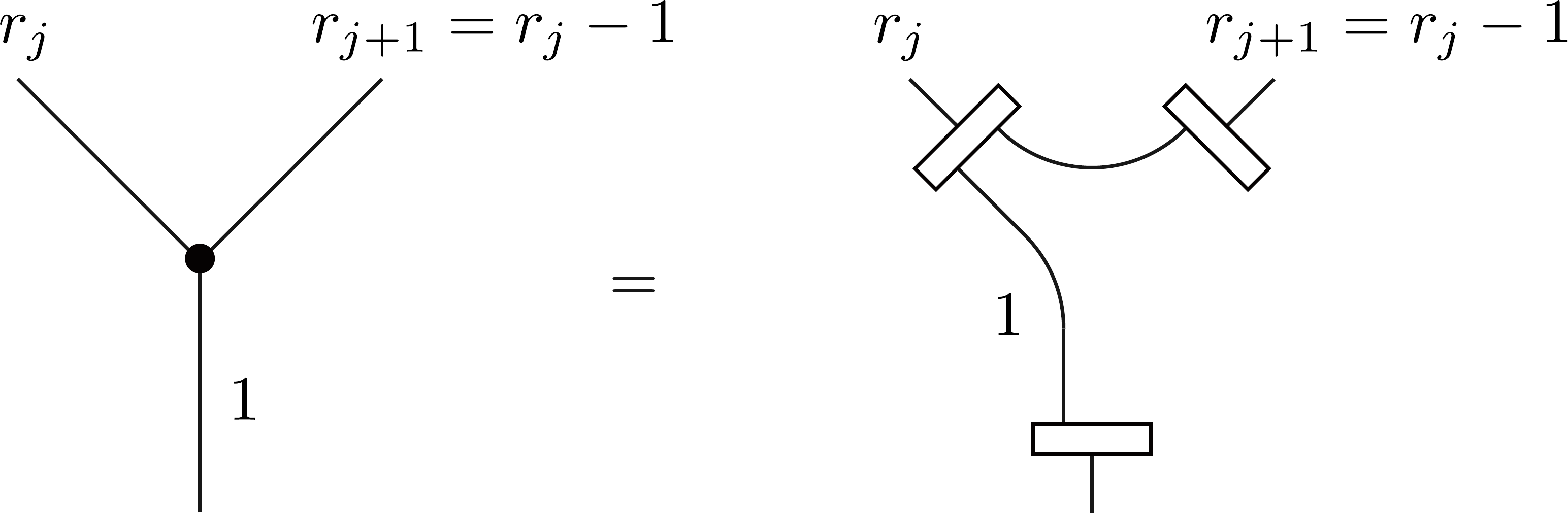} .}} 
\end{align} 
By properties~(\ref{ProjectorID1},~\ref{ProjectorID2}) of the Jones-Wezl projector, we have
\begin{align} 
\label{Expans2-1} 
\eqref{Case1}
\quad  & \overset{\eqref{ProjectorID1}}{=} \quad \vcenter{\hbox{\includegraphics[scale=0.275]{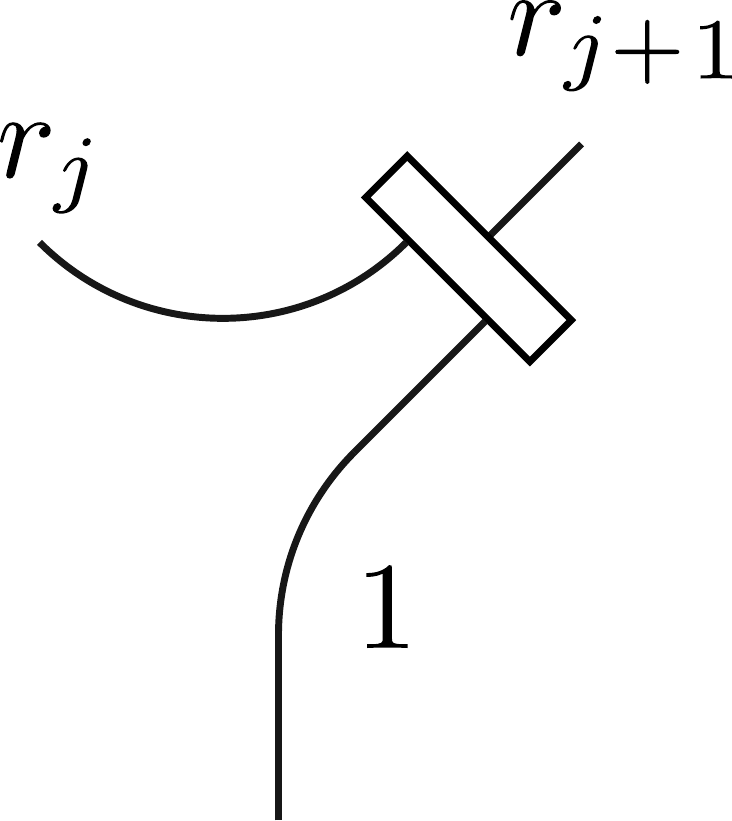} ,}}  \\[1.5em]
\label{Expans2-2} 
\eqref{Case2}
\quad  & \overset{\eqref{ProjectorID1}}{=} \quad \vcenter{\hbox{\includegraphics[scale=0.275]{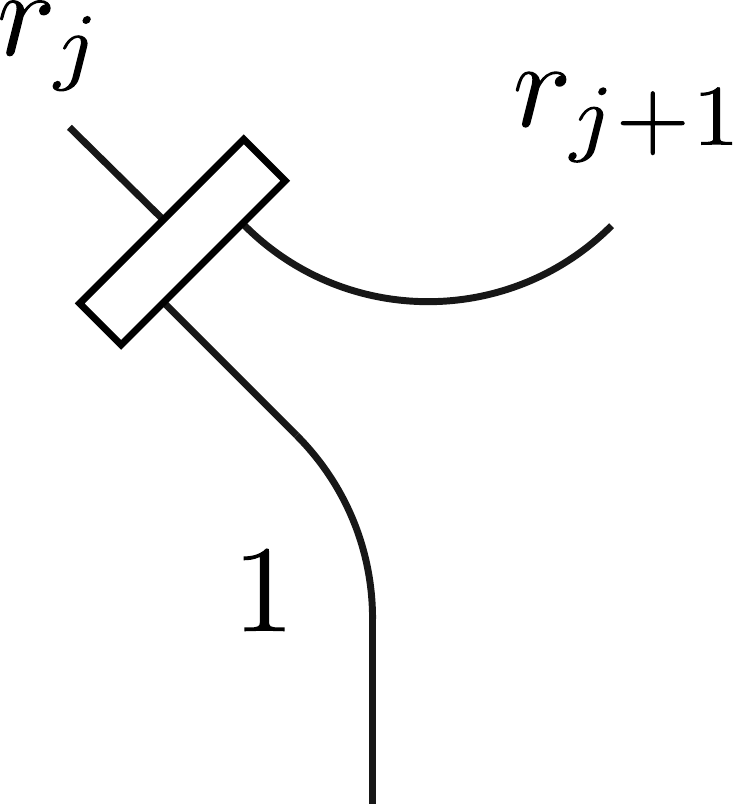}}}  
\quad \overset{\eqref{ProjDecomp}}{=} \quad 
\vcenter{\hbox{\includegraphics[scale=0.275]{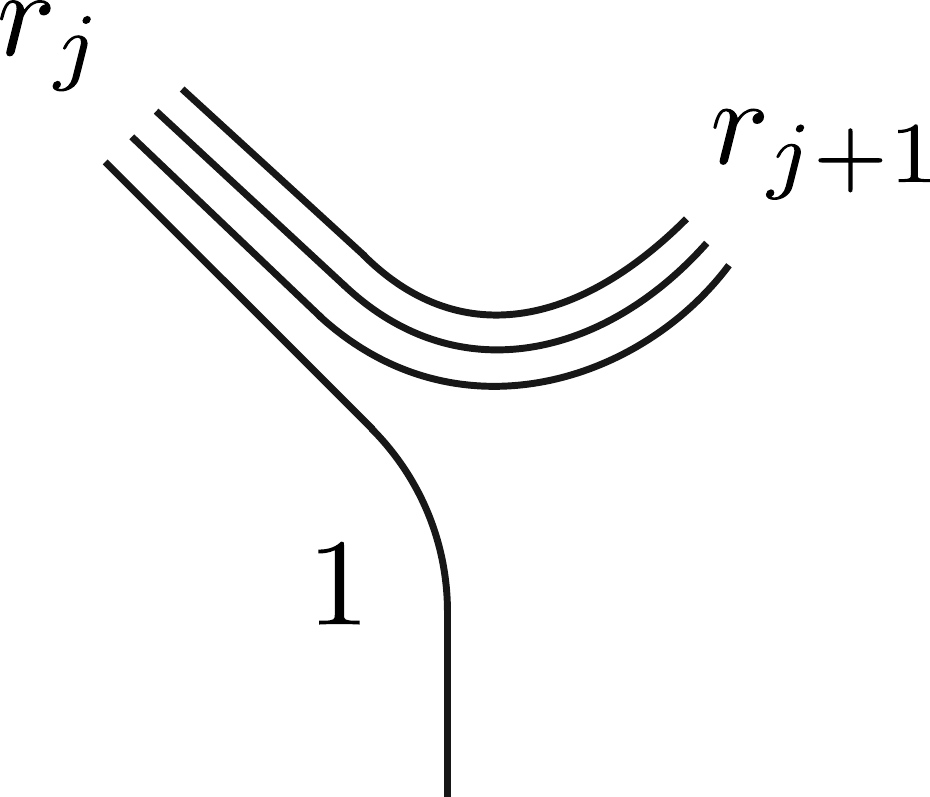}}}  
\quad + \quad \sum_{\substack{T \, \in \, \LD_{r_j}, \\  T \, \neq \, \mathbf{1}_{\TL_{r_j}}}} (\text{coef}_T) 
\,\, \times \,\, \vcenter{\hbox{\includegraphics[scale=0.275]{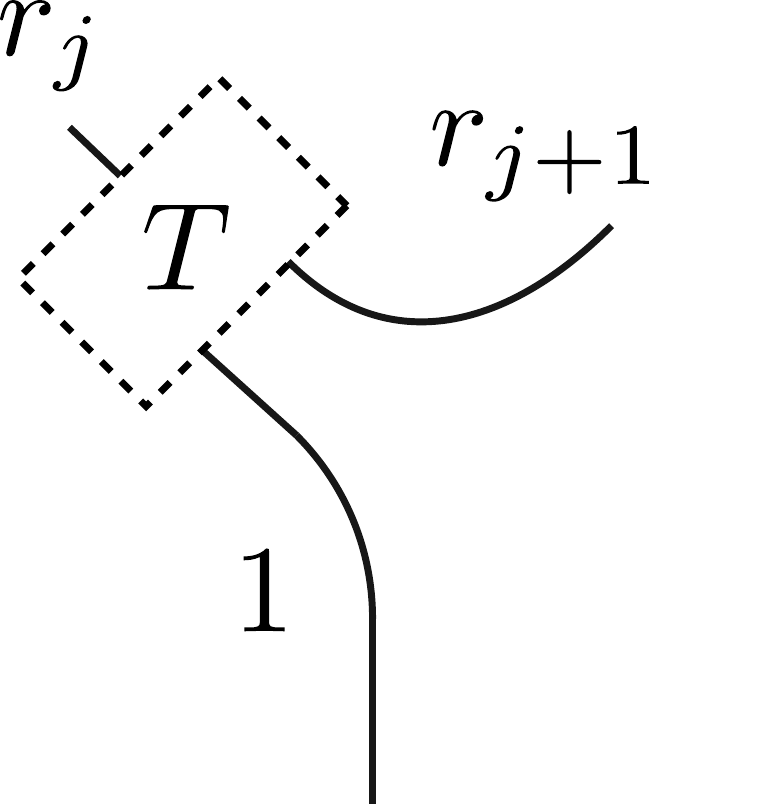} ,}}  
\end{align}  
where the $r_j$-link diagrams $T$ in~\eqref{Expans2-2} have exactly one turn-back link.
Using the formula from~\cite[proposition~\red{A.9}]{fp0}, we find the coefficients in~\eqref{Expans2-2}:
\begin{align} \label{Tcoefs} 
T & \;  =  \; T_i \quad = \quad \vcenter{\hbox{\includegraphics[scale=0.275]{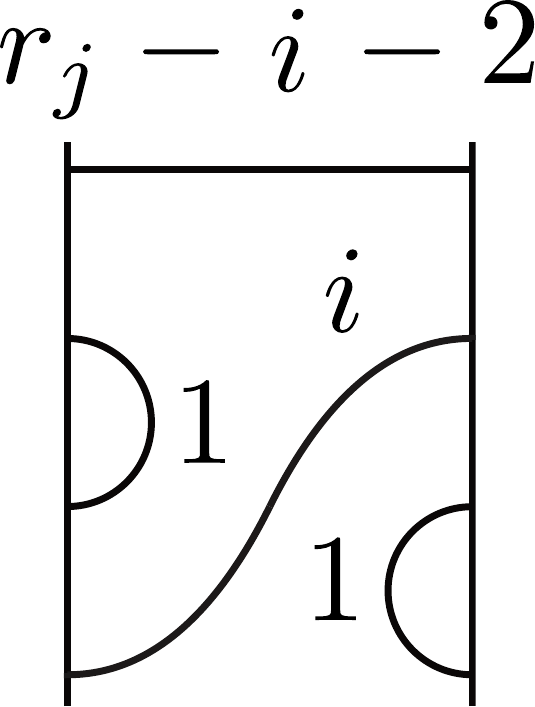} ,}} \qquad \text{with $i \in \{0,1,\ldots,r_j-2\}$} \\[1em]
\Longrightarrow \qquad 
\text{coef}_T & = \text{coef}_{T_i} 
= \frac{[i]_{q_k'}}{[r_j]_{q_k'}} \overset{\eqref{Case2}}{=} \frac{[i]_{q_k'}}{[r_{j+1} + 1]_{q_k'}}. 
\end{align} 
Now, for all $j \in \{J+1, J+2, \ldots, \Summed_\multii - 1\}$, we have $[r_{j+1} + 1]_q \neq 0$ because $\ppmin(q) \nmid ( r_{j+1} + 1 )$.  
Hence, the limit as $q_k' \to q$ of each coefficient in~\eqref{Tcoefs} exists.  
From this fact with~\eqref{EquivLim}, it follows that the limit 
$\smash{\underset{q_k' \to q}{\lim} \hcancel{\alpha}_{q_k'}}$ exists.
\end{proof} 

\section{Temperley-Lieb category} \label{CategorySect}

In this appendix, we discuss a subcategory $\smash{\TL^1(\nu)}$
of the valenced tangle category $\TL(\nu)$~(\ref{cateob},~\ref{catehom}), known as the \emph{Temperley-Lieb category}.
Its object class comprises the special multiindices with all entries equal to one:
\begin{align} 
\smash{\text{Ob} \, \TL^1(\nu) = \big\{ \OneVec{n} \,\big| \, n \in \bZnn \big\}, \qquad \text{where } \qquad
\OneVec{0} := (0) \qquad \text{and} \qquad \OneVec{n} := (\underbrace{1,1,\ldots,1}_{\text{$n$ times}}) \quad \text{for $n \in \bZpos$} ,}
\end{align}
and its morphisms are the $(n,m)$-tangles,
\begin{align} 
\Hom  \TL^1(\nu) = \big\{ \TL_n^m(\nu) \,\big|\, \text{$n, m \in \bZnn$ with $n + m = 0 \Mod 2$} \big\}.
\end{align}
The composition of two morphisms $T, U \in \Hom \smash{\TL^1(\nu)}$ is given by diagram concatenation,
which depends on the fugacity parameter $\nu \in \bC$.
The identity morphism associated with the object $\OneVec{n}$ is the unit~\eqref{Units} of 
the corresponding Temperley-Lieb algebra $\TL_n(\nu) = \TL_n^n(\nu)$.

For later use in~\cite{fp3}, we determine a minimal collection of generators for the morphism class $\Hom \smash{\TL^1(\nu)}$.  
Together with the unit objects~\eqref{Units}, these constitute the \emph{left and right generators}, defined as
\begin{align} \label{LgenForm} 
\Lgen_i \quad := \quad \vcenter{\hbox{\includegraphics[scale=0.275]{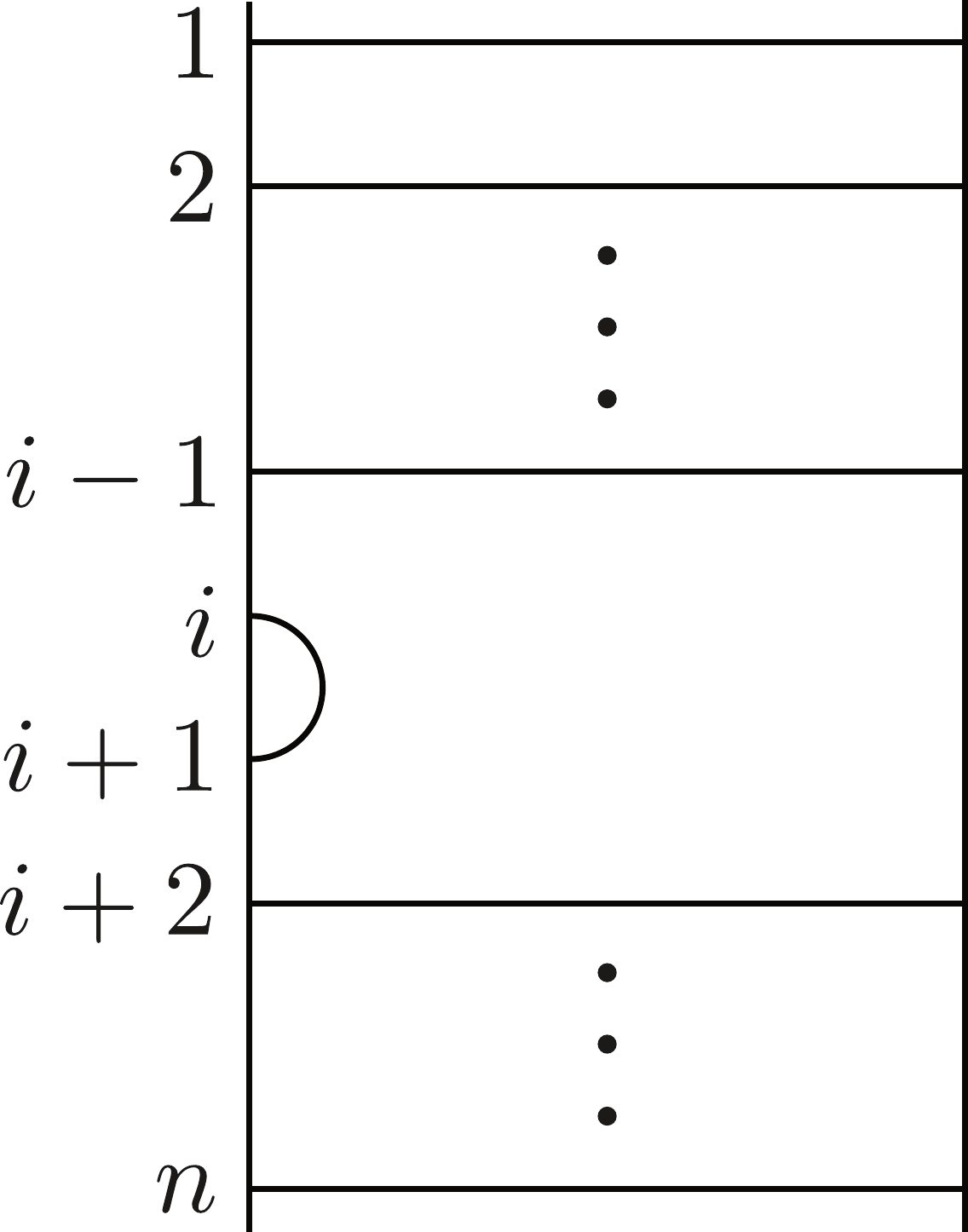}}} \;\; \in \TL_n^{n-2}(\nu) ,
\qquad \qquad  
\Rgen_j\quad := \quad \vcenter{\hbox{\includegraphics[scale=0.275]{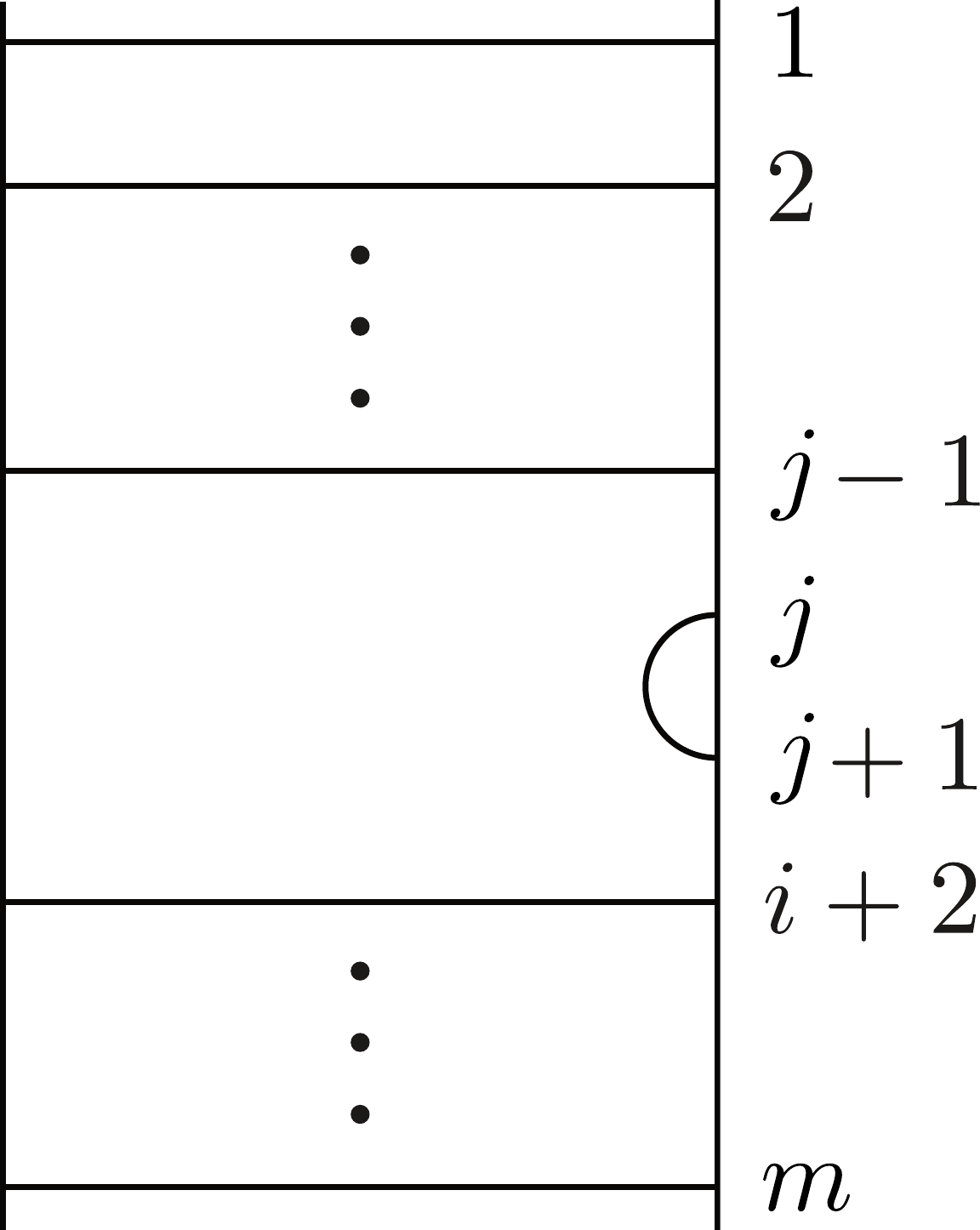}}} \;\; \in \TL_{m-2}^{m}(\nu) ,
\end{align} 
for each $i \in \{ 1, 2, \ldots, n-1 \}$ and $j \in \{ 1, 2, \ldots, m-1 \}$.
In the literature, these are also known as the ``evaluation'' and ``coevaluation'' maps.

Let $T$ be an arbitrary $(n,m)$-link diagram with $s$ crossing links. Then, we can construct $T$
by an insertion of all $(n-s)/2$ left links of $T$ into the unit diagram $\mathbf{1}_{\TL_s}$
by repeated application of the left generators $\Lgen_i$, followed by an insertion of all $(m-s)/2$ right links of $T$
by repeated application of the right generators $\Rgen_j$, that is,
\begin{align} \label{Tword} 
T = \Lgen_{i_{(n-s)/2}} \Lgen_{i_{(n-s)/2 - 1}} \dotsm \Lgen_{i_2} \Lgen_{i_1} \mathbf{1}_{\TL_s} \Rgen_{j_1} \Rgen_{j_2} 
\dotsm \Rgen_{j_{(m-s)/2 - 1}} \Rgen_{j_{(m-s)/2}}. 
\end{align} 
For example, 
\begin{align}
\vcenter{\hbox{\includegraphics[scale=0.275]{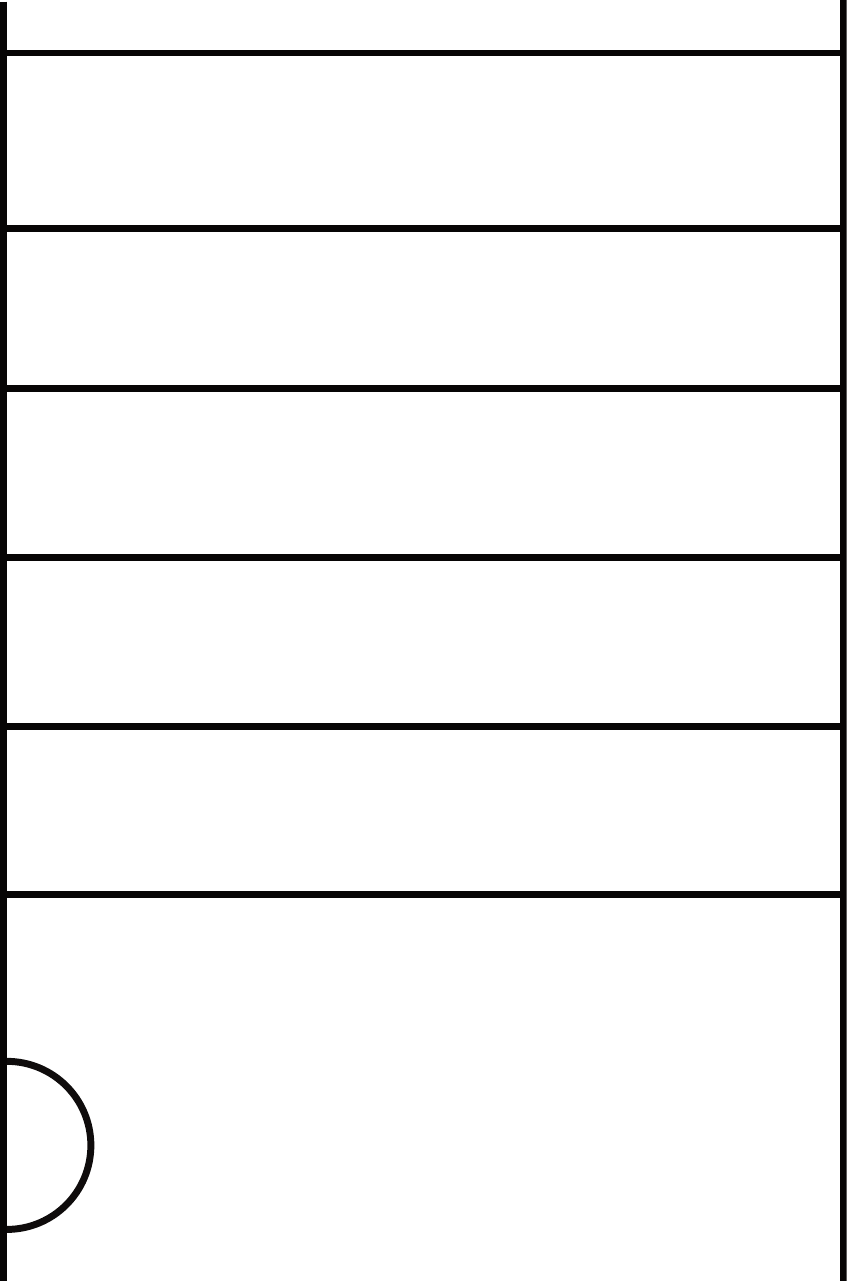}}} \;
\vcenter{\hbox{\includegraphics[scale=0.275]{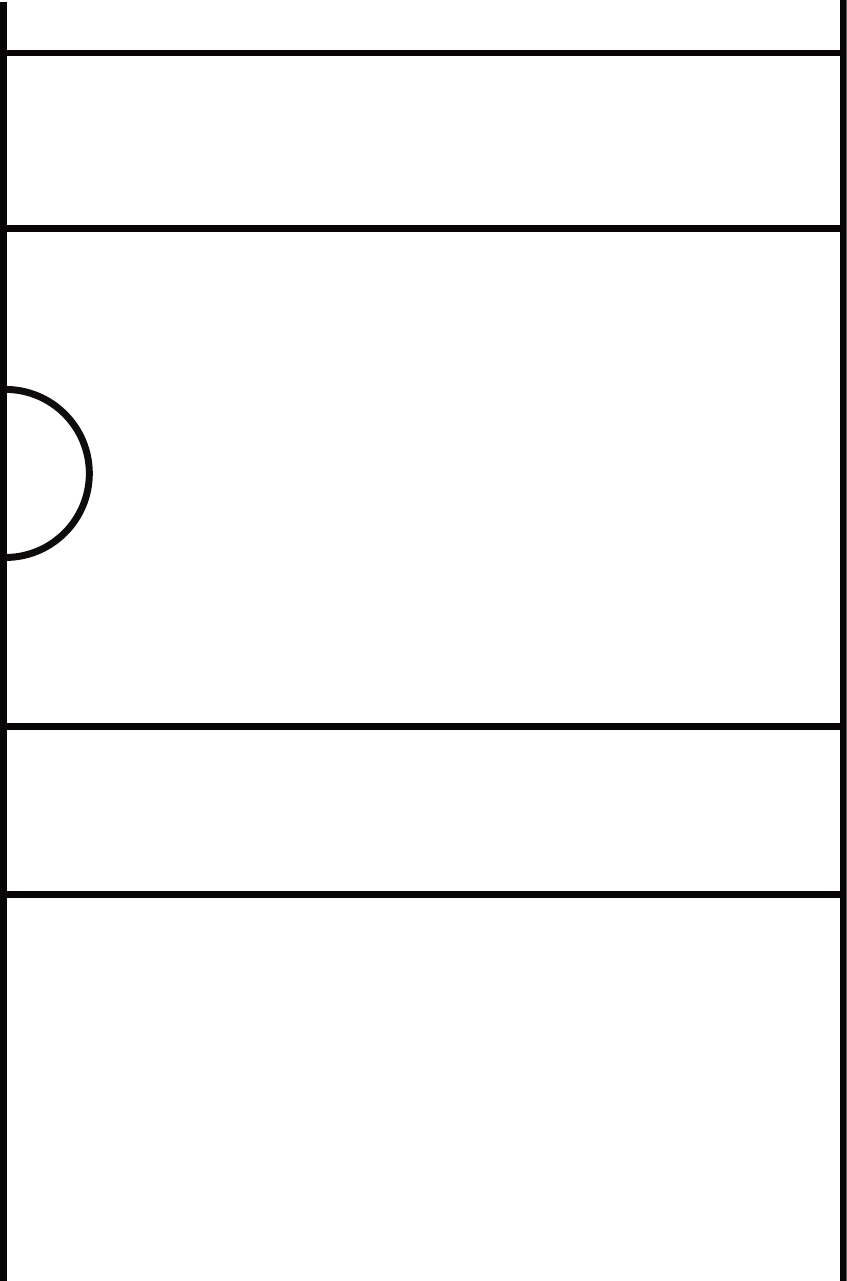}}} \;
\vcenter{\hbox{\includegraphics[scale=0.275]{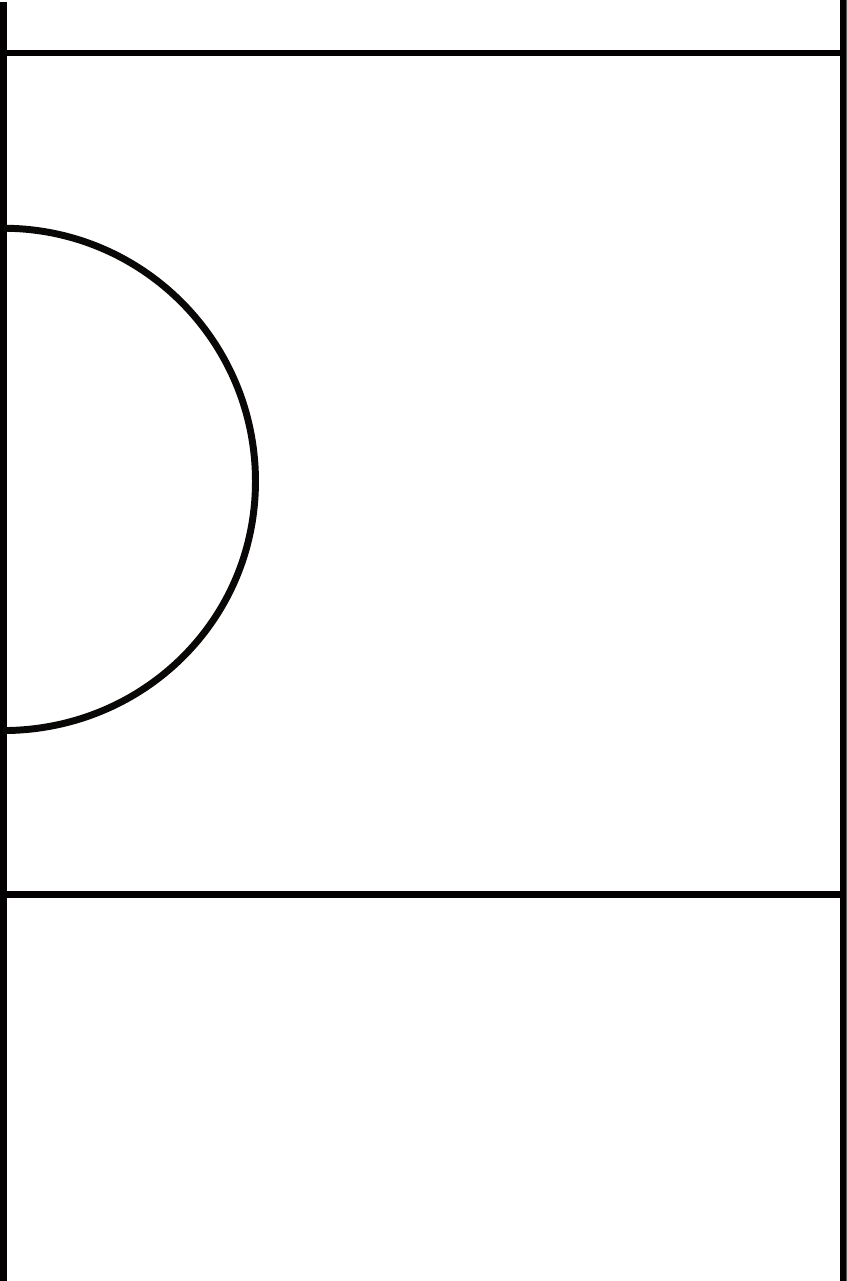}}} \;
\vcenter{\hbox{\includegraphics[scale=0.275]{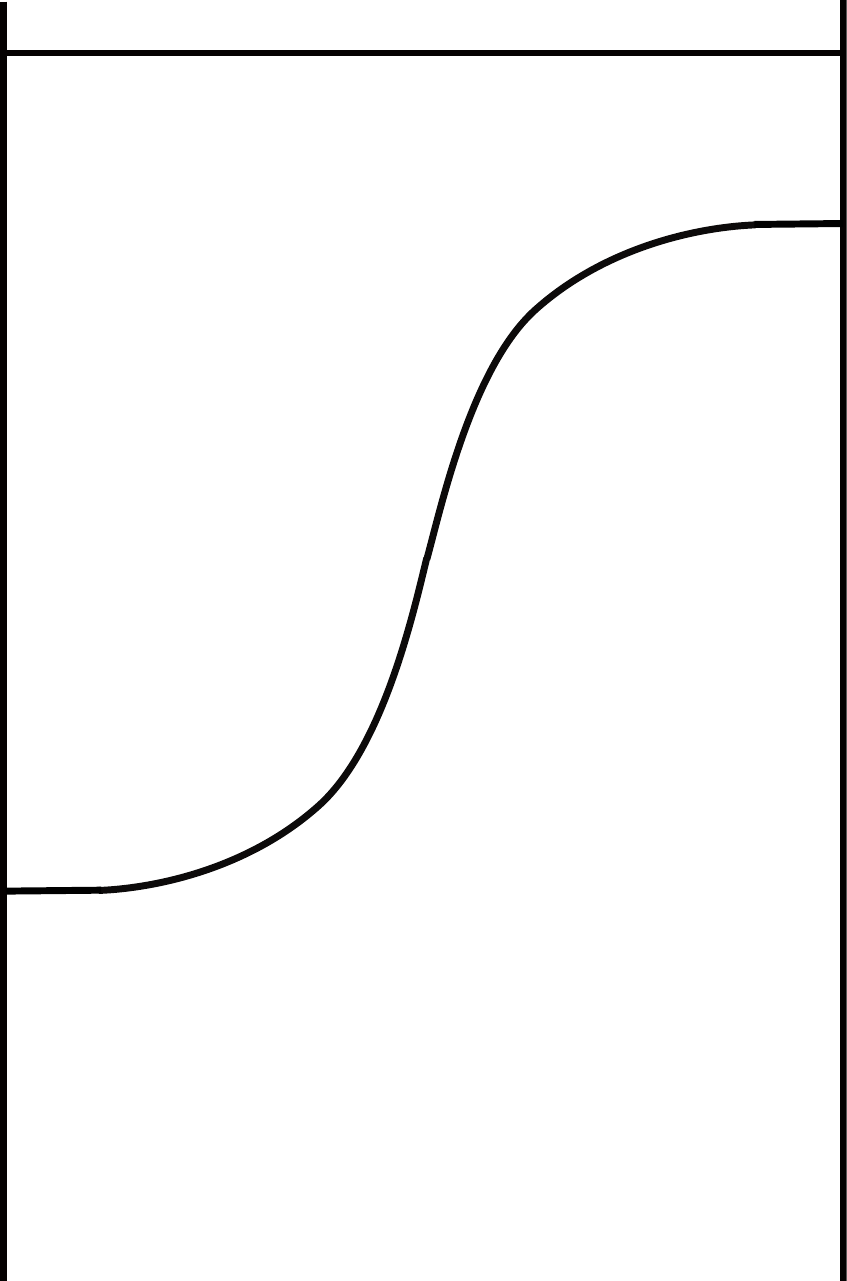}}} \;
\vcenter{\hbox{\includegraphics[scale=0.275]{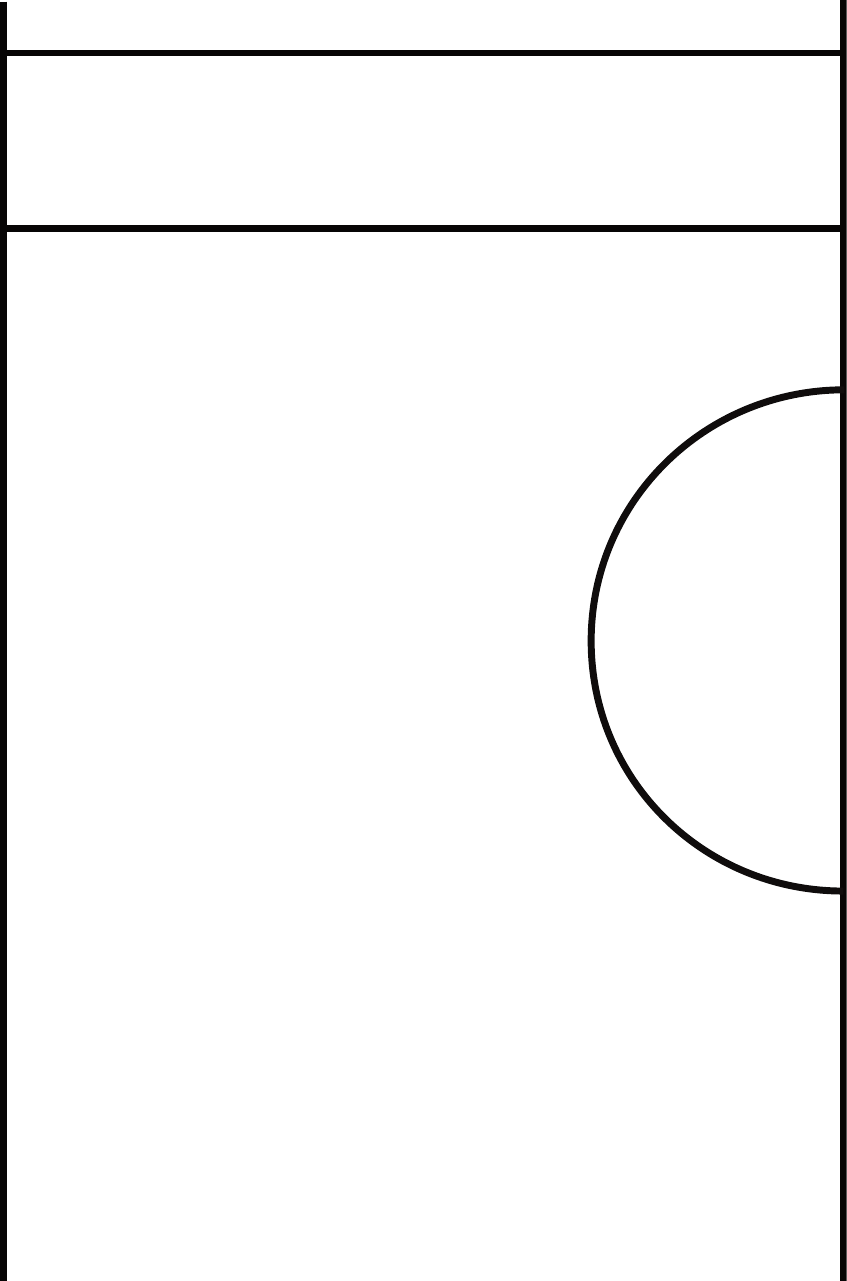}}} \;
\vcenter{\hbox{\includegraphics[scale=0.275]{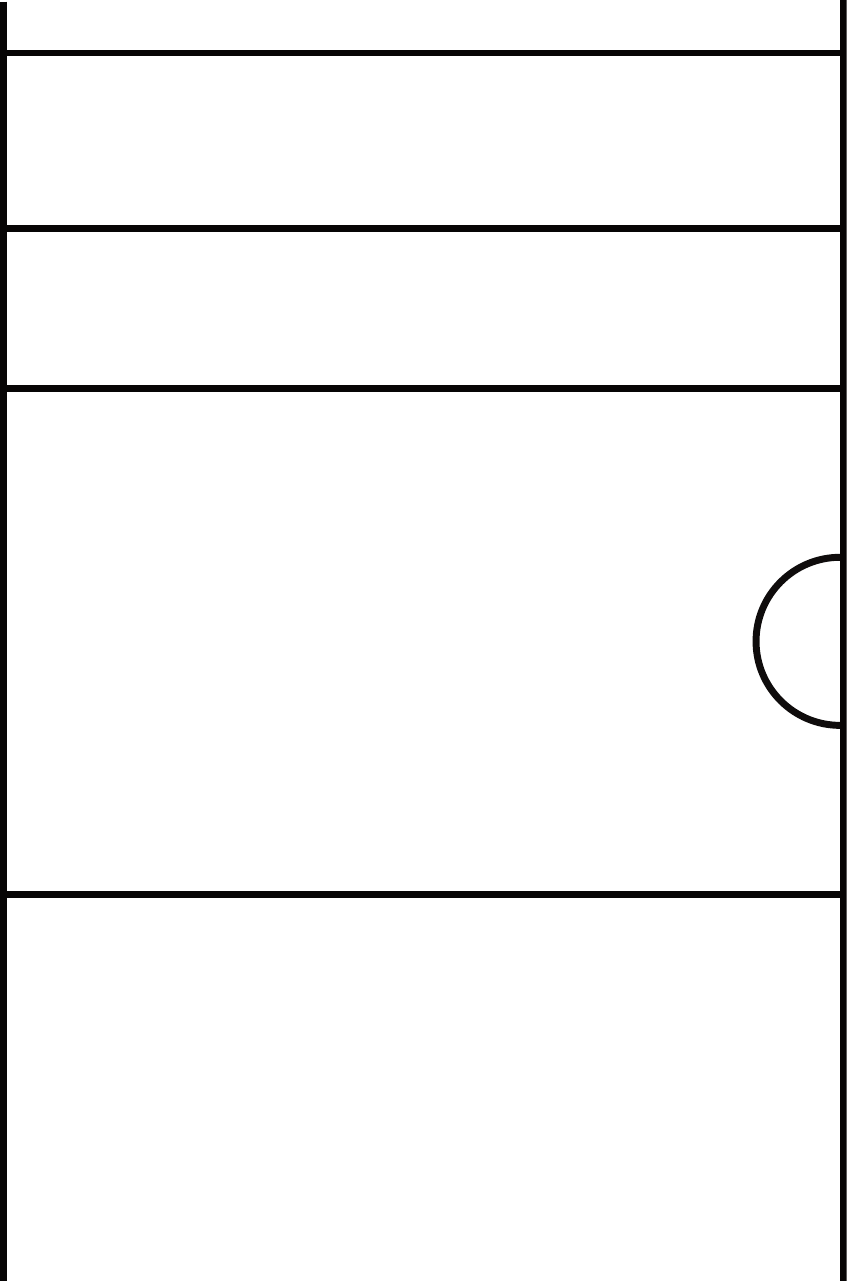}}} 
\end{align} 
gives the tangle 
\begin{align}
\vcenter{\hbox{\includegraphics[scale=0.275]{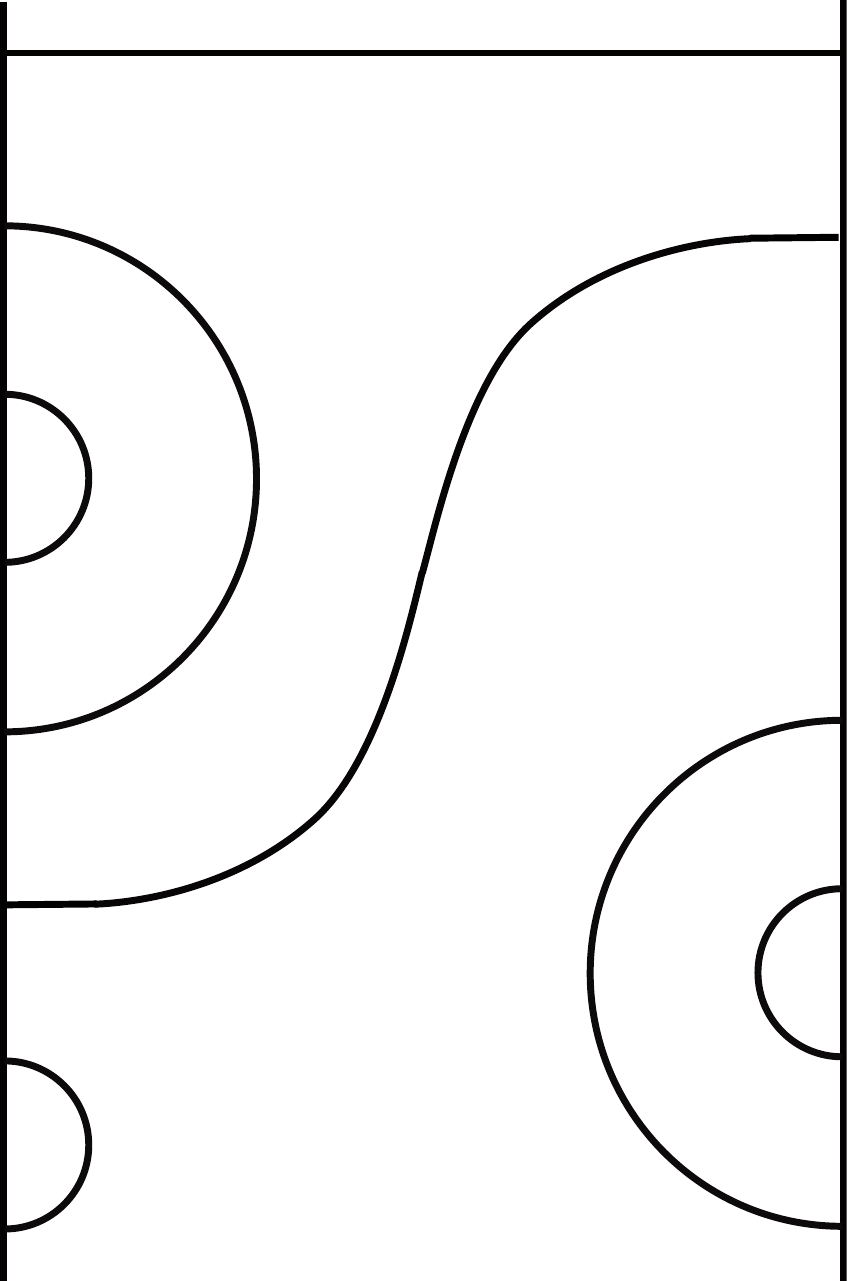}}}  \;\; \in \TL_{8}^{6} .
\end{align} 
As shown, we include the unit in the middle of the product to emphasize that $\Lgen_{i_1}$ is an $(s+2,s)$-link diagram and 
$\Rgen_{j_1}$ is an $(s,s+2)$-link diagram, in spite of the obvious relations
\begin{align}
\Lgen_i \mathbf{1}_{\TL_s} = \Lgen_i \qquad \text{and} \qquad \mathbf{1}_{\TL_s} \Rgen_j = \Rgen_j. 
\end{align} 
In~\eqref{Tword}, we order the left generators $\Lgen_i$ such that if the upper endpoint of one left link of $T$ is above the the upper endpoint of another left link, then 
the former is inserted before the latter, and similarly for the $\Rgen_i$.  This implies that
\begin{align} \label{ordering} 
i_1 < i_2 < \ldots < i_{(n-s)/2}, \qquad \text{and} \qquad j_1 < j_2 < \ldots < j_{(m-s)/2}. 
\end{align} 
We say that any product of left and right generators of the form in (\ref{Tword},~\ref{ordering}) is in \emph{standard form}.

\begin{lem} \label{StdLem} 
Each $(n,m)$-link diagram equals a unique product of left and right generators and the unit diagram in standard form (\ref{Tword},~\ref{ordering}),
 and every such product equals a unique $(n,m)$-link diagram.
\end{lem}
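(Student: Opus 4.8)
The plan is to prove the claimed bijection between $(n,m)$-link diagrams and standard-form products of left/right generators by constructing the map in both directions and checking that the two constructions are mutually inverse. First I would establish the forward direction: given an $(n,m)$-link diagram $T$ with $s$ crossing links, I would describe an explicit algorithm that reads off the word $\Lgen_{i_{(n-s)/2}} \dotsm \Lgen_{i_1} \mathbf{1}_{\TL_s} \Rgen_{j_1} \dotsm \Rgen_{j_{(m-s)/2}}$. The $s$ crossing links of $T$, traced back to their endpoints on the left and right lines, carve $T$ into a ``core'' of $s$ through-strands plus $(n-s)/2$ nested turn-back links on the left and $(m-s)/2$ nested turn-back links on the right (no left turn-back link can cross a crossing link or another left turn-back link, by planarity, so these turn-back links are linearly ordered by the height of their upper endpoint). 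Reading the left turn-back links from the one nearest the core outward determines the indices $i_1 < i_2 < \dotsm$, and dually for the right; this gives a well-defined word in standard form.

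Next I would verify that evaluating this word as a composition of morphisms reproduces $T$: each application of $\Lgen_{i_k}$ inserts one more nested turn-back link on the left in exactly the position prescribed by the height data, and the strict increase of the indices $i_1 < i_2 < \dotsm$ guarantees the links nest rather than interfere; similarly on the right. Since a planar $(n,m)$-link diagram is determined up to homotopy by its connectivity data, and the word has been built precisely to recover that connectivity, the evaluated word equals $T$. This shows the forward map followed by evaluation is the identity, and in particular that \emph{every} $(n,m)$-link diagram arises from some standard-form word — i.e.\ the ``every such product equals a unique $(n,m)$-link diagram'' half is half-done, modulo well-definedness and injectivity of evaluation.

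Then I would argue the other composite is the identity: starting from a standard-form word, evaluate it to get some link diagram $T'$, and check that the algorithm of the first paragraph applied to $T'$ returns the original word. This is essentially a bookkeeping argument — the ordering condition \eqref{ordering} is exactly what makes the insertion order unambiguous, so the recovered indices must match. Combined with the counting check that the number of standard-form words with a given $s$ (namely $\binom{\lfloor\cdot\rfloor}{\cdot}$-type products, matching $\Dim_n\super{s}\Dim_m\super{s}$ summed appropriately via \eqref{Dim56}) equals $\dim\TL_n^m(\nu) = \#\LD_n^m$, this forces both maps to be bijections. Uniqueness then follows: two distinct standard-form words cannot evaluate to the same diagram, else the recovery algorithm would be multivalued.

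The main obstacle I expect is the planarity/nesting argument in the forward direction: making rigorous the claim that the turn-back links on each side are linearly ordered and that inserting them in order of increasing index reproduces the correct nested configuration, without hand-waving. The cleanest way to handle this is probably induction on $(n-s)/2 + (m-s)/2$: peel off the innermost left turn-back link (the one whose upper endpoint is lowest among left turn-back links, equivalently the one adjacent to the core), observe that deleting it yields an $(n-2,m)$-link diagram with the same $s$, apply the inductive hypothesis, and then check that re-inserting via $\Lgen_{i_1}$ with the smallest index is consistent with the ordering \eqref{ordering}. The base case $n = m = s$ is just the unit diagram. I would also need to note at the outset, as the excerpt does for related spaces, that $\LD_n^m = \preLD_n^m$ when all valences are one, so there are no loop-link subtleties to worry about here.
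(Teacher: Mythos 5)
Your overall route is the same as the paper's: read off a standard-form word from the top-to-bottom positions of the left and right turn-back links, observe that evaluating such a word gives back a unique diagram, and get uniqueness of the word because the ordering rule~\eqref{ordering} is exactly the nesting/ordering data of those links. The paper states this almost without argument, so fleshing it out by induction on the number of turn-back links is a legitimate way to make it precise, and your well-definedness/injectivity framework is sound. (The final counting check via~\eqref{Dim56} is superfluous once the two maps are shown to be mutually inverse, and proving that count independently would essentially re-prove the lemma anyway.)

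There is, however, a concrete orientation slip in your inductive step. Under the convention~(\ref{Tword},~\ref{ordering}), the left link with the \emph{highest} upper endpoint is inserted first and carries the smallest index $i_1$ (the rightmost $\Lgen$-factor), while the link with the \emph{lowest} upper endpoint --- the innermost one, whose two endpoints are adjacent nodes and which is therefore the one you can cleanly delete --- is inserted last and is the leftmost factor $\Lgen_{i_{(n-s)/2}}$ with the largest index. So after peeling off that innermost link and applying the induction hypothesis, you must re-attach it as a new leftmost factor with the largest index, not ``via $\Lgen_{i_1}$ with the smallest index'' as you wrote (this also contradicts your own first-paragraph reading ``from the one nearest the core outward''). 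As written, the step produces non-standard words: for the $(6,2)$-diagram with crossing links at nodes $1,2$ and turn-backs at $3$--$4$ and $5$--$6$, your insertion order yields $\Lgen_3\Lgen_3\mathbf{1}_{\TL_2}$, which equals the correct standard form $\Lgen_5\Lgen_3\mathbf{1}_{\TL_2}$ only after applying the relation $\Lgen_j\Lgen_i=\Lgen_{i+2}\Lgen_j$ from~\eqref{MaxRelations}, and violates $i_1<i_2$. Relatedly, the parenthetical ``equivalently the one adjacent to the core'' is inaccurate (there may be no crossing links at all, and the lowest turn-back need not neighbor one); the property you actually need, and which does hold, is that the left turn-back with the lowest upper endpoint has no other endpoint between its two endpoints, so deleting it gives an $(n-2,m)$-link diagram whose standard word is obtained by stripping the leftmost $\Lgen$-factor. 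With these corrections the induction goes through and matches the paper's argument.
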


\begin{proof} 
It is evident that every product of the form~\eqref{Tword} equals a unique $(n,m)$-link diagram.  Also, by the above discussion, 
every $(n,m)$-link diagram $T$ equals a product  of the form~\eqref{Tword}, and ordering rule~\eqref{ordering} uniquely encodes the top-to-bottom 
ordering and nesting of the left and right links of $T$.  
\end{proof}

\begin{lem} 
The following is a complete list of independent relations satisfied by the left and right generators:
\begin{align} \label{MaxRelations} 
\Rgen_j \Lgen_i = 
\begin{cases} 
\mathbf{1}_{\TL_s}, & i = j \pm 1, \\ \nu \mathbf{1}_{\TL_s}, & i = j, \\ 
\Lgen_i \Rgen_{j-2}, &  i \leq j-2, \\ \Lgen_{i-2} \Rgen_j, &  j \leq i-2, 
\end{cases} 
\qquad \qquad 
\begin{aligned} 
& \Lgen_j \Lgen_i = \Lgen_{i+2} \Lgen_j, && j \leq i, \\ 
& \Rgen_j \Rgen_{i-2} = \Rgen_i \Rgen_j, && j \leq i, 
\end{aligned} 
\end{align} 
where $s$ is the number of crossing links in $\Lgen_i$ and $\Rgen_j$.
\end{lem}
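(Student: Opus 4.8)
The plan is to prove the three assertions packaged in the statement: (a) each listed identity holds among the generators; (b) every identity among the generators follows from the listed ones (completeness); (c) no listed identity follows from the others (independence). Part~(a) is a direct inspection of the corresponding networks: $\Rgen_j\Lgen_{j\pm1}=\mathbf{1}_{\TL_s}$ are the zig-zag isotopies that straighten an $S$-shaped through-strand; $\Rgen_j\Lgen_j=\nu\,\mathbf{1}_{\TL_s}$ holds because composing the cap and cup at the same site closes off one loop, which carries the weight $\nu$ by rule~\eqref{LoopWeight}; and the four remaining identities are planar isotopies sliding non-overlapping cups and caps past one another, the index shifts $i\mapsto i\pm 2$ merely recording the change of strand count. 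So~(a) is routine.

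For~(b) I would orient the six relations left-to-right into reduction rules and argue termination to a normal form. Given an arbitrary word in the generators $\Lgen_i,\Rgen_j$ (and unit diagrams, which may be deleted as identity morphisms), first repeatedly rewrite every adjacent pattern ``an $\Rgen$ immediately followed by an $\Lgen$'' by the appropriate one of the four $\Rgen_j\Lgen_i$ rules (these cover all index pairs): such a step either deletes the pair, producing a scalar $1$ or $\nu$, or transposes it to ``$\Lgen$ then $\Rgen$''. A careful count shows the number of ordered pairs in which an $\Rgen$ stands to the left of an $\Lgen$ strictly drops under each transposition and never rises under deletions, while deletions shorten the word; so after finitely many steps the word equals $\nu^{k}$ times a product of $\Lgen$'s followed by a product of $\Rgen$'s. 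Next apply $\Lgen_j\Lgen_i\to\Lgen_{i+2}\Lgen_j$ $(j\le i)$ to the $\Lgen$-block and $\Rgen_j\Rgen_{i-2}\to\Rgen_i\Rgen_j$ $(j\le i)$ to the $\Rgen$-block until their index sequences are ordered as in~\eqref{ordering}; these sorts terminate by a suitably weighted monovariant on the indices. The result is $\nu^{k}$ times a word in the standard form of lemma~\ref{StdLem}. Since every rule used holds in $\TL^1(\nu)$, the original word equals this normal form there; and by lemma~\ref{StdLem} -- treating $\nu$ as a formal parameter, so that the morphisms $\nu^{k}T$ attached to distinct link diagrams $T$ and exponents $k$ are linearly independent -- the pair $(k,T)$ is already determined by the image of the word in $\TL^1(\nu)$. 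Hence two words with equal image have the same normal form and are thereby connected by the listed relations (confluence of the rewriting system falls out as a by-product), which is exactly completeness.

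For~(c) I would delete each relation in turn and show the resulting identity fails in the category $\mathcal{C}$ presented by the remaining five. The relation $\Rgen_j\Lgen_j=\nu\,\mathbf{1}_{\TL_s}$ is the clean case: the other five relations are loop-free planar isotopies, so they hold in the auxiliary $\bC[\nu]$-linear category $\widetilde{\TL}$ whose morphism spaces are free over $\bC[\nu]$ on planar Temperley--Lieb diagrams decorated with a finite unordered family of closed loops, composition concatenating diagrams without erasing loops; the induced functor $\mathcal{C}\to\widetilde{\TL}$ sends the two sides of the deleted relation to (identity diagram with one extra loop) and to $\nu\cdot(\text{identity diagram})$, which are independent, so the identity fails in $\mathcal{C}$. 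For the commutation rule $\Rgen_j\Lgen_i=\Lgen_i\Rgen_{j-2}$ and the two sorting rules, the point is uniqueness of the standard form in lemma~\ref{StdLem}: deleting one of them leaves a terminating system with strictly more irreducibles (for instance $\Lgen_1\Lgen_1$ stays irreducible without its sorting rule, although it already equals an irreducible normal form in $\TL^1(\nu)$), so some pair of $\TL^1(\nu)$-equal words is no longer identified -- provided the reduced system remains confluent. The two zig-zag relations $\Rgen_j\Lgen_{j\pm1}=\mathbf{1}_{\TL_s}$ are the subtle case, since they are the ``pivotality'' axioms and both sides are literally the same diagram; here independence must be witnessed by an auxiliary category in which a straightened zig-zag carries a nontrivial decoration. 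Constructing that category, together with the critical-pair bookkeeping needed to confirm confluence of each reduced system (complicated by the strand-count shifts $i\mapsto i\pm 2$), is the main obstacle; the isotopy content of everything else is elementary.
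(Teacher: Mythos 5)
Your parts (a) and (b) are, in substance, the paper's own proof: the paper verifies each identity in \eqref{MaxRelations} with a diagram, asserts (without your termination bookkeeping) that these relations let one rewrite any word in the $\Lgen_i,\Rgen_j$ into the standard form of lemma~\ref{StdLem}, and then disposes of any residual relation by observing that standard-form words are distinct link diagrams and hence linearly independent (this is its relation \eqref{Xtra} with all terms already in standard form). Your inversion-count and monovariant arguments are a legitimate filling-in of the reduction step the paper merely asserts, and your appeal to lemma~\ref{StdLem} is exactly the paper's. One small caution: there is no need to ``treat $\nu$ as a formal parameter'' --- the category is $\bC$-linear with $\nu\in\bC$ fixed, so a word reduces to an honest complex scalar times a standard-form word, and the linear-independence argument applies as stated; promoting $\nu$ to an indeterminate changes the ambient category and is not what the lemma asserts.

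Concerning your part (c): the paper does not prove, nor attempt to prove, that each of the six relation families is underivable from the remaining five. Its reading of ``complete list of independent relations'' is discharged entirely by the completeness argument above (no nontrivial linear relation survives among standard-form words), so the obstacle you flag --- witnessing independence of the zig-zag relations $\Rgen_j\Lgen_{j\pm1}=\mathbf{1}_{\TL_s}$ by an auxiliary decorated category, together with confluence checks for each truncated rewriting system --- is a gap only in the stronger minimality statement you set for yourself, not in a reproduction of the paper's argument. Relative to the paper's proof, your proposal is complete once (c) is dropped or restated as the completeness claim the paper actually establishes; if you do want genuine minimality, your loop-counting category handles $\Rgen_j\Lgen_j=\nu\,\mathbf{1}_{\TL_s}$, but the zig-zag cases would indeed require the extra construction you describe, which neither you nor the paper supplies.
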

\begin{proof} 
Each relation~\eqref{MaxRelations} is easy to verify with a diagram. Also,
relations~\eqref{MaxRelations} allow to write any word formed from the right and left generators in standard form.
Now, to see that~\eqref{MaxRelations} are all of the independent relations, we let 
\begin{align} \label{Xtra} 
\sum_{k_1, k_2, \ldots, k_l} c_{k_1, k_2, \ldots, k_l} A_{k_1} A_{k_2} \dotsm A_{k_l} = 0 , 
\qquad \text{with $c_{k_1, k_2, \ldots, k_l} \in \bC$ and $A_{k_p} \in \{ \Lgen_i, \Rgen_j \,|\, i,j \in \bZpos\}$, $k_p \in \bZpos$,} 
\end{align} 
be a relation where all terms $A_{k_1} A_{k_2} \dotsm A_{k_l}$ are in standard form.  
Then by lemma~\ref{StdLem}, each term in~\eqref{Xtra} is multiple of a link diagram particular to that term.  
Because the link diagrams are linearly independent, all of the coefficients 
$c_{k_1, k_2, \ldots, k_l}$ must vanish, so relation~\eqref{Xtra} is trivial.  This proves the assertion. 
\end{proof}

\end{appendices}

\endgroup



\newcommand{\etalchar}[1]{$^{#1}$}

\renewcommand{\bibnumfmt}[1]{\makebox[5.3em][l]{[#1]}}


\end{document}